\begin{document}

\title{Algebraic and Logical methods in Quantum Computation}

\author{Neil J. Ross}

\phd \degree{Doctor of Philosophy}
\degreeinitial{Ph.D.}

\faculty{Science} \dept{Mathematics and Statistics}

\supervisor{Peter Selinger}

\reader{Dorette Pronk} \reader{Richard J. Wood}

\nolistoftables

\defenceday{15}\defencemonth{August}\defenceyear{2015}
\convocation{October}{2015}

\dedicate{\`A mes ma\^itres.}

\frontmatter

% =====================================================================
\begin{abstract}

This thesis contains contributions to the theory of quantum
computation.

We first define a new method to efficiently approximate special
unitary operators. Specifically, given a special unitary $U$ and a
precision $\epsilon>0$, we show how to efficiently find a sequence of
Clifford+$V$ or Clifford+$T$ operators whose product approximates $U$
up to $\epsilon$ in the operator norm. In the general case, the length
of the approximating sequence is asymptotically optimal. If the
unitary to approximate is diagonal then our method is optimal: it
yields the shortest sequence approximating $U$ up to $\epsilon$.

Next, we introduce a mathematical formalization of a fragment of the
Quipper quantum programming language. We define a typed lambda
calculus called Proto-Quipper which formalizes a restricted but
expressive fragment of Quipper. The type system of Proto-Quipper is
based on intuitionistic linear logic and prohibits the duplication of
quantum data, in accordance with the no-cloning property of quantum
computation. We prove that Proto-Quipper is type-safe in the sense
that it enjoys the subject reduction and progress properties.

\end{abstract}

% =====================================================================
\prefacesection{List of Abbreviations and Symbols Used}

\begin{longtable}{ll}
  $\N$ & The natural numbers. \\
  $\Z$ & The integers. \\
  $\Qu$ & The rational numbers. \\
  $\R$ & The real numbers. \\
  $\Comp$ & The complex numbers. \\
  $R^\times$ & The group of units of the ring $R$. \\
  $\norm{.}$ & The norm of a scalar, vector, or matrix. \\
  $\x$ & The tensor product. \\
  $\Matrices_{m,n}(R)$ & The set of $m$ by $n$ matrices over the ring
  $R$. \\
  $\uset(n)$ & The unitary group of order $n$. \\
  $\suset(n)$ & The special unitary group of order $n$. \\
  $U\inv$ & The inverse of the matrix $U$. \\
  $U\da$ & The conjugate transpose of the matrix $U$. \\
  $\det(U)$ & The determinant of the matrix $U$. \\
  $\tr(U)$ & The trace of the matrix $U$. \\
  $K(\alpha)$ & The extension of the field $K$ by $\alpha$. \\
  $R[\alpha]$ & The extension of the ring $R$ by $\alpha$. \\
  $\O_K$ & The ring of integers of the field $K$. \\
  $(.)^\bullet$ & The bullet automorphism of $\Zomega$. \\
  $\equiv$ & Modular congruence. \\
  $\divides$ & Divisibility. \\
  $\Norm_R(.)$ & The norm for the ring of integers $R$. \\
  $a\s{y/x}$ & The renaming of $x$ by $y$ in $a$. \\
  $\FV(a)$ & The free variables of $a$. \\
  $a[b/x]$ & The substitution of $b$ for $x$ in $a$. \\
  $\to,\too$ & The one-step and multi-step $\beta$-reduction. \\
  $=_\alpha$ & The $\alpha$-equivalence. \\
  $\equiv_\beta$ & The $\beta$-equivalence. \\
  $<:$ & The subtyping relation. \\
  $[Q,L,a]$ & A closure of the quantum lambda calculus. \\
  $\up(A)$ & The uprightness of the set $A$. \\
  $\BBox(A)$ & The bounding box of the set $A$. \\
  $\area(A)$ & The area of the set $A$. \\
  $\Grid(A)$ & The grid for the set $A$. \\
  $O(.)$ & The big-$O$ notation. \\
  $\sk(D)$ & The skew of the ellipse $D$. \\
  $\bias\state$ & The bias of the state $\state$. \\
  $\sinl(.)$ & The hyperbolic sine in base $\lambda$. \\
  $\cosl(.)$ & The hyperbolic cosine in base $\lambda$. \\
  $\Repsilon$ & The $\epsilon$-region. \\
  $\Disk$ & The closed unit disk. \\
  $\Realpart(a)$ & The real part of the complex number $a$. \\
  $\pset_f(X)$ & The set of finite subsets of the set $X$. \\
  $\FQ(a)$ & The free quantum variables of a term $a$. \\
  $\finbij(X)$ & The set of finite bijections on the set $X$. \\
  $\spec_X(T)$ & An $X$-specimen for $T$. \\
  $[C,a]$ & A closure of Proto-Quipper. \\
  $\floor{.}$ & The floor function. \\
  $\cupdot$ & The disjoint union. \\
\end{longtable}

% =====================================================================
\begin{acknowledgements}

I would like to express my profound gratitude to the people who, in
one way or another, have contributed to the writing of this thesis.

I want to thank my supervisor Peter Selinger, for his insight, his
guidance, and, most importantly, his contagious passion for
mathematics. Special thanks are due to the members of my supervisory
committee, Dorette Pronk and Richard Wood, for accepting to read a
thesis which, alas, contains so few arrows. I am very grateful to
Prakash Panangaden for accepting to be my external examiner.

The staff of the Department of Mathematics and Statistics of Dalhousie
University have played a large part in making my stay in Halifax
enjoyable. I am very appreciative of their hard work.

I am greatly indebted to the many teachers I had the chance to learn
from during the past ten years. In particular, I want to acknowledge
the profound influence of Susana Berestovoy, Julien Dutant, Vincent
Homer, Jean-Baptiste Joinet, and Damiano Mazza.

I want to thank the researchers I had the pleasure to collaborate
with, notably D. Scott Alexander, Henri Chataing, Alexander S. Green,
Peter LeFanu Lumsdaine, Jonathan M. Smith, and Beno\^it Valiron.

My fellow students have always provided inspiration, support, and
laughter. I especially want to thank Abdullah Al-Shaghay, Ali
Alilooee, Chlo\'e Berruyer, Samir Blakaj, M\'even Cadet, Antonio
Chavez, Hoda Chuangpishit, Hugo F\'er\'ee, Florent Franchette, Brett
Giles, Giulio Guerrieri, Fran\c cois Guignot, Zhenyu Victor Guo,
D. Leigh Herman, Ben Hersey, Joey Mingrone, Lucas Mol, Alberto Naibo,
Mattia Petrolo, Francisco Rios, Kira Scheibelhut, Matthew Stephen,
Aur\'elien Tonneau, Antonio Vargas, Kim Whoriskey, Bian Xiaoning,
Amelia Yzaguirre, and Kevin Zatloukal.

Finally, I want to thank my friends and loved ones, far and near. My
brother and my sister. My parents, for their unwavering support. And
Kira, for her kindness.

\end{acknowledgements}

% =====================================================================
\mainmatter

% =====================================================================
\chapter{Introduction}
\label{chap:intro}

\emph{Quantum computation}, introduced in the early 1980s by Feynman
\cite{Feynman82}, is a paradigm for computation based on the laws of
quantum physics. The interest in quantum computation lies in the fact
that \emph{quantum computers} can solve certain problems more
efficiently than their \emph{classical} counterparts. Most famously,
Shor showed in 1994 that integers can be factored in polynomial time
on a quantum computer \cite{Shor}. This is in striking contrast with
the exponential running time of the best known classical
methods. Since then, many algorithms leveraging the power of quantum
computers have been introduced (e.g., \cite{Grover96}, \cite{TF},
\cite{CN}). This promised increase in efficiency has provided great
incentive to solve the theoretical and practical challenges associated
with building quantum computers.

The fundamental unit of quantum computation is the \emph{quantum bit}
or \emph{qubit}. The \emph{state} of a qubit is described by a unit
vector in the two-dimensional Hilbert space $\Comp^2$. A system of $n$
qubits is similarly described by a unit vector in the Hilbert space
\[
\Comp^{2^n}=\underbrace{\Comp^2 \x \ldots \x \Comp^2}_{n}.
\]
A computation is performed by acting on the state of a system of
qubits. This can be done in two ways. One can either apply a
\emph{unitary transformation} to the state or one can \emph{measure}
some of the qubits making up the system. A quantum algorithm describes
a sequence of unitary operations to be performed on the state of a
system of qubits, usually followed by a single final measurement, or
in some cases, measurements throughout the computation. It is
customary to represent a sequence of unitary operations in the form of
a \emph{quantum circuit}, which is built up from a basic set of
unitaries, called \emph{gates}, using composition and tensor
product. An important peculiarity of quantum computation is that the
state of a quantum system cannot in general be \emph{duplicated}. This
is the so-called \emph{no-cloning} property. This contrasts with the
situation in classical computation, where the state of a bit can be
freely copied.

Quantum algorithms are run on a physical device. Just as in classical
computing, there is a chain of successive translations, starting from
a mathematical description of an algorithm in a research paper. The
abstract algorithm is first implemented in a programming language,
which is then turned into a quantum circuit. This first circuit is
then rewritten using a finite set of basic unitaries available on the
hardware. Finally, this second circuit is rewritten according to some
error correcting scheme, which redundantly encodes the circuit to make
it robust to some of the errors that are bound to occur on the
physical machine. At this point, the abstract description of the
algorithm can be mapped to a physical system for the computation to be
effectively performed.

All of the above-mentioned steps in the execution of a quantum
algorithm raise interesting mathematical questions. For many years the
problems at the top of this translation chain, those closer to the
abstract description of the algorithm, were either overlooked or
considered adequately solved. This was justified by the fact that the
challenges of building a reliable physical quantum computer were so
far from being met that the higher-level problems seemed somewhat
irrelevant. However, as the prospect of usable quantum computers draws
nearer, the need to develop tools to define proper solutions to these
problems has become more pressing.

In this thesis, we contribute to two of the mathematical problems that
arise in the higher level of this execution phase. We first develop
methods to decompose unitaries into certain basic sets of
unitaries. Secondly, we introduce a lambda calculus which serves as a
foundation for the \emph{Quipper} quantum programming language. We now
briefly outline each of these contributions.

% ====================================================================
\section{Approximate synthesis}
\label{sect:intro-synthesis}

The \emph{unitary group of order 2}, denoted $\uset(2)$, is the group
of $2\by 2$ complex unitary matrices. The \emph{special unitary group
  of order 2}, denoted $\suset(2)$, is the subgroup of $\uset(2)$
consisting of unitary matrices of determinant 1.

Let $S\seq\uset(2)$ be a set of unitaries and $\p{S}$ be the set of
words over $S$. In the context of quantum computing, the elements of
$S$ are called \emph{single-qubit gates} and the elements of $\p{S}$
are called \emph{single-qubit circuits over $S$}. Matrix
multiplication defines a map $\mu: \p{S} \to \uset(2)$. However, we
often abuse notation and simply write $W$ for $\mu(W)$.

We think of $S\seq\uset(2)$ as the set of unitary operations that can
be performed natively on a quantum computer. Because a quantum
computer is a physical device, $S$ is finite and the set $\p{S}$ of
circuits expressible on our quantum computer is countable. In
contrast, both $\uset(2)$ and $\suset(2)$ are uncountable. Hence,
regardless of which $S$ is chosen, there will be $U\in\suset(2)$ such
that $U\notin \p{S}$. This is a fortiori true of $\uset(2)$ also. This
might be problematic, as it is often desirable to have all unitaries
at our disposal when writing quantum algorithms. This tension is
alleviated by the fact that unitaries can be \emph{approximated}.

\begin{definition}
  \label{def:distance}
  The \emph{distance} between two operators $U,W \in \uset(2)$ is
  defined as
  \[
  \norm{U-W} = \mbox{sup}\s{||Uv-Wv|| \such v\in\Comp^2 \mbox{ and }
    ||v||=1}.
  \]
\end{definition}

The notion of distance introduced in Definition~\ref{def:distance},
based on the operator norm, is adopted because the physically
observable difference between two unitaries $U$ and $W$ is a function
of $\norm{U-W}$. As a result, if $\epsilon$ is small enough, the
action of the unitaries $U$ and $W$ are observably almost
indistinguishable.  The finiteness of the gate set $S$ can therefore
be remedied if $\p{S}$ is dense in $\suset(2)$.

\begin{definition}
  \label{def:universal-U}
  A set $S\seq\uset(2)$ is \emph{universal} if for any $U\in\suset(2)$
  and any $\epsilon>0$, there exists $W\in\p{S}$ such that $\norm{U-
    W} < \epsilon$.
\end{definition}

Note that a set $S$ of gates is universal if it is dense in
$\suset(2)$, rather than in $\uset(2)$. This is due to the fact that
since a global phase has no observable effect in quantum mechanics, we
can without loss of generality focus on special unitary matrices.

If a gate set $S$ is universal, then any special unitary can be
approximated up to an arbitrarily small precision by a circuit over
$S$. However, the fact that $S$ is universal does not provide an
efficient method which, given a special unitary $U$ and a precision
$\epsilon$, allows us to construct the approximating circuit $W$. An
algorithm that performs such a task is a solution to the
\emph{approximate synthesis problem for $S$}.

\begin{problem}[Approximate synthesis problem for $S$]
  \label{prob:app-synth}
  Given a unitary $U\in\suset(2)$ and a precision $\epsilon\geq 0$,
  construct a circuit $W$ over $S$ such that $\norm{W-U}\leq\epsilon$.
\end{problem}

The approximate synthesis problem is important for quantum computing
because it significantly impacts the resources required to run a
quantum algorithm. Indeed, a quantum circuit, to be executed by a
quantum computer, must first be compiled into some universal gate
set. The complexity of the final physical circuit therefore crucially
depends on the chosen synthesis method. In view of the considerable
resources required for most quantum algorithms on interesting problem
sizes, which can require upwards of 30 trillion gates \cite{pldi}, a
universal gate set can be realistically considered for practical
quantum computing only if it comes equipped with a good synthesis
algorithm.

In this thesis, we are interested in Problem~\ref{prob:app-synth} for
two universal extensions of the \emph{Clifford group}. The Clifford
group is generated by the following gates
\[
\omega = e^{i\pi/4}, \quad H = \frac{1}{\sqrt 2}
\begin{bmatrix}
  1 & 1 \\
  1 & -1
\end{bmatrix}, \quad \mbox{ and } S =
\begin{bmatrix}
  1 & 0 \\
  0 & -i
\end{bmatrix}.
\]
Note that $\omega$ is a complex number, rather than a matrix. By a
slight abuse of notation, we write $\omega$ to denote the unitary
$\omega I$, where $I$ is the identity $2\by 2$ matrix. The Clifford
group is of great interest in quantum computation because Clifford
circuits can be fault-tolerantly implemented at very low cost in most
error-correcting schemes. For this reason, the Clifford group is often
seen as a prime candidate for practical quantum computing. However,
the Clifford group is finite and therefore not universal for quantum
computing. Moreover, Gottesman and Knill showed that Clifford circuits
can be efficiently simulated on a classical computer
\cite{Gottesman98}. It is therefore necessary to consider universal
extensions of the Clifford group.

The first extension we will consider, the \emph{Clifford+$V$} gate
set, arises by adding the following \emph{$V$-gates} to the set of
Clifford generators
\[
V_X = \frac{1}{\sqrt 5}
\begin{bmatrix}
  1 & 2i \\
  2i & 1
\end{bmatrix}, \quad V_Y = \frac{1}{\sqrt 5}
\begin{bmatrix}
  1 & 2 \\
  -2 & 1
\end{bmatrix}, \mbox{ and } V_Z = \frac{1}{\sqrt 5}
\begin{bmatrix}
  1 + 2i & 0 \\
  0 & 1-2i
\end{bmatrix}.
\]
The $V$-gates were introduced in \cite{LPS1986} and \cite{LPS1987} and
later considered in the context of approximate synthesis in
\cite{HRC2002}, \cite{BGS2013}, \cite{vsynth}, and \cite{BBG2014}.

The second extension we will consider, the \emph{Clifford+$T$} gate
set, arises by adding the following $T$-gate to the set of Clifford
generators
\[
T =
\begin{bmatrix}
  1 & 0 \\
  0 & \omega
\end{bmatrix}
\]
where $\omega=e^{i\pi/4}$ as above. The $T$ gate is the most common
extension of the Clifford group considered in the literature. One
reason for this is that the Clifford+$T$ gate set enjoys nice
error-correction properties \cite{Nielsen-Chuang}. This gate set has
often been considered in the approximate synthesis literature (see,
e.g., \cite{KMM-approx}, \cite{KMM-practical},
\cite{Selinger-newsynth}, and \cite{gridsynth}).

We evaluate an approximate synthesis algorithm with respect to its
\emph{time complexity} and its \emph{circuit complexity}. The time
complexity of an algorithm is the number of arithmetic operations it
requires to produce an approximating circuit. The circuit complexity
is the length of the produced circuit, which we identify with the
number of non-Clifford gates that appear in the circuit. This is
motivated by the high cost of error correction for non-Clifford gates.

Until recently, there were two main approaches to the approximate
synthesis problem: the ones based on exhaustive search, like Fowler's
algorithm of \cite{Fowler04}, and the ones based on geometric methods,
like the Solovay-Kitaev algorithm (\cite{Kitaev97b},
\cite{Dawson-Nielsen}, \cite{KSV2002}). The methods based on
exhaustive search achieve optimal circuit sizes but, due to their
exponential runtimes, are impractical for small $\epsilon$. In
contrast, the well-known Solovay-Kitaev algorithm has polynomial
runtime and achieves circuit sizes of $O(\log^c(1/\epsilon))$, where
$c>3$. However, the information-theoretic lower bound is
$O(\log(1/\epsilon))$, so the Solovay-Kitaev algorithm leaves ample
room for improvement.

In the past few years, number theoretic methods, and in particular
Diophantine equations, have been used to define new synthesis
algorithms. This has rejuvenated the field of quantum circuit
synthesis and significant progress has been made
(\cite{Kliuchnikov-etal}, \cite{BGS2013}, \cite{vsynth},
\cite{BBG2014}, \cite{KMM-approx}, \cite{KMM-practical},
\cite{Selinger-newsynth}, \cite{gridsynth}).

The algorithms we introduce in this thesis belong to this new kind of
number-theoretic algorithms. We give algorithms for Clifford+$V$ and
Clifford+$T$ circuits. Both algorithms are efficient (they run in
probabilistic polynomial time) and achieve near-optimal circuit
length. In certain specific cases, the algorithms are optimal. That
is, the produced circuits are the shortest approximations
possible. This solves a long-standing open problem, as no such
efficient optimal synthesis method was previously known for any gate
set.

% ====================================================================
\section{The mathematical foundations of Quipper}
\label{sect:intro-quipper}

\emph{Quipper} is a programming language for quantum computation (see
\cite{CACM}, \cite{rc}, \cite{pldi}, and \cite{QAPL}). The Quipper
language was developed in 2011--2013 in the context of a research
contract for the U.S. Intelligence Advanced Research Project Activity
(see \cite{BAA}). As part of this project, seven non-trivial
algorithms from the quantum computing literature were implemented in
Quipper (\cite{BWT}, \cite{BF}, \cite{CN}, \cite{GSE}, \cite{LS},
\cite{SV}, and \cite{TF}). I participated in the development of the
Quipper language and in the implementation of the Triangle Finding
Algorithm \cite{TF} and the Unique Shortest Vector Algorithm
\cite{SV}.

An important aspect of the Quipper language is that it acts as a
\emph{circuit description language}. This means that Quipper provides
a syntax in which to express quantum circuits. Quipper moreover
provides the ability to treat circuits as data and to manipulate them
as a whole. For example, Quipper has operators for reversing and
iterating circuits, decomposing them into gate sets, etc. This
circuit-as-data paradigm is remarkably useful for the programmer, as
it is very close to the way in which quantum algorithms are described
in the literature.

Currently, Quipper is implemented as an \emph{embedded} language
\cite{Claessen-2001}. This means that Quipper can be seen as a
collection of functions and data types within some pre-existing
\emph{host} language. Quipper's host language is \emph{Haskell}
\cite{Haskell}, a strongly-typed functional programming language. An
advantage of this embedded language approach is that it allows for
the implementation of a large-scale system without having to first
design and implement a compiler, a parser, etc. The embedded language
approach also has drawbacks, however. In particular, Quipper inherits
the type system of its host and while Haskell's type system provides
many type-safety properties, it is not in general strong enough to
ensure the full type-safety of quantum programs. In the current
Quipper implementation, it is therefore the programmer's
responsibility to ensure that quantum components are plugged together
in physically meaningful ways. This means that certain types of
programming errors will not be prevented by the compiler. In the worst
case, this may lead to ill-formed output or run-time errors.

In this thesis, we introduce a quantum programming language which we
call \emph{Proto-Quipper}. It is defined as a typed lambda calculus
and can be seen as a mathematical formalization of a fragment of
Quipper. Proto-Quipper is meant to provide a foundation for the
development of a stand-alone (i.e., non-embedded) version of
Quipper. Moreover, Proto-Quipper is designed to ``enforce the
physics'', in the sense that it detects, at compile-time,
programming errors that could lead to ill-formed or undefined
circuits. In particular, the no-cloning property of quantum
computation is enforced.

In designing the Proto-Quipper language, our approach was to start
with a limited, but still expressive, fragment of the Quipper language
and make it type safe. This fragment will serve then as a robust basis
for future language extensions. The idea is to eventually close the
gap between Proto-Quipper and Quipper by extending Proto-Quipper with
one feature at a time while retaining type safety.

Our main inspiration for the design of Proto-Quipper is the quantum
lambda calculus (see \cite{valiron04}, \cite{SeVa06}, or
\cite{SeVa09}). The quantum lambda calculus represents an ideal
starting point for the design of Proto-Quipper because it is equipped
with a type system tailored for quantum computation. However, the
quantum lambda calculus only manipulates qubits and all quantum
operations are immediately carried out on a quantum device, not stored
for symbolic manipulation. We therefore extend the quantum lambda
calculus with the minimal set of features that makes it
Quipper-like. The current version of Proto-Quipper is designed to
\begin{itemize}
\item incorporate Quipper's ability to generate and act on quantum
  circuits, and to
\item provide a linear type system to guarantee that the produced
  circuits are physically meaningful (in particular, properties like
  no-cloning are respected).
\end{itemize}
To achieve these goals, we extend the types of the quantum lambda
calculus with a type $Circ(T,U)$ of circuits, and add constant terms
to capture some of Quipper's circuit-level operations, like reversing.
We give a formal operational semantics of Proto-Quipper in terms of a
reduction relation on pairs $[C,t]$ of a term $t$ of the language and
a so-called circuit state $C$. The state $C$ represents the circuit
currently being built. The reduction is defined as a rewrite procedure
on such pairs, with the state being affected when terms involve
quantum constants.

% ====================================================================
\section{Outline}
\label{sect:intro-outline}

The thesis can be divided into three parts, whose contents are
outlined below.
\begin{itemize}
\item The first part of the thesis, which corresponds to
  chapters~\ref{chap:quantum} -- \ref{chap:lambda}, contains
  background material. Chapter~\ref{chap:quantum} is an exposition of
  the basic notions of quantum computation. Chapter~\ref{chap:nb-th}
  introduces concepts and methods from algebraic number
  theory. Chapter~\ref{chap:lambda} presents the lambda calculus as
  well as the quantum lambda calculus.
\item The second part of the thesis, which corresponds to
  chapters~\ref{chap:grid-pb} -- \ref{chap:synth-T}, contains
  algebraic contributions to quantum computation. In
  Chapter~\ref{chap:grid-pb}, we introduce and solve grid problems. In
  Chapter~\ref{chap:synth-V}, we define an algorithm to solve the
  problem of approximate synthesis of special unitaries over the
  Clifford+$V$ gate set. In Chapter~\ref{chap:synth-T} we show how the
  methods of the previous chapter can be adapted to the Clifford+$T$
  gate set.
\item The third and last part of the thesis, which corresponds to
  chapters~\ref{chap:pq} and \ref{chap:pq-safe}, contains logical
  contributions to quantum computation. In Chapter~\ref{chap:pq} we
  introduce the syntax, type system, and operational semantics of
  Proto-Quipper. In Chapter~\ref{chap:pq-safe} we prove that
  Proto-Quipper enjoys the subject reduction and progress properties.
\end{itemize}
In the interest of brevity, the introductory chapters
\ref{chap:quantum} -- \ref{chap:lambda} contain only the material that
is necessary to the subsequent chapters. In particular, most proofs
are omitted. In each of these chapters, we provide references to the
relevant literature.

% ====================================================================
\section{Contributions}
\label{sect:intro-contribs}

My original contributions are contained in chapters~\ref{chap:grid-pb}
-- \ref{chap:pq-safe} of the thesis. They have appeared in several
published papers. The algorithm for the Clifford+$V$ approximate
synthesis of unitaries and its analysis (Section~\ref{sect:grid-pb-Zi}
and Chapter~\ref{chap:synth-V}) appeared in the single-authored paper
\cite{vsynth}. The algorithm for the Clifford+$T$ approximate
synthesis of unitaries and its analysis
(Section~\ref{sect:grid-pb-Zomega} and Chapter~\ref{chap:synth-T})
appeared in the paper \cite{gridsynth}, co-authored with my supervisor
Peter Selinger, following earlier work by Selinger
\cite{Selinger-newsynth}. The results of
Section~\ref{ssect:grid-pb-Zomega-ellipses}, providing a method to
make two ellipses simultaneously upright, are my original
contribution. The other results of \cite{gridsynth} are the product of
an equal collaboration. Finally, the definition of the Proto-Quipper
language, as well as the proof of its type safety (Chapter
\ref{chap:pq} and \ref{chap:pq-safe}) appeared in the report
\cite{PQ}, co-authored with Henri Chataing and Peter Selinger. The
results in these chapters are my original work; Peter Selinger's role
was supervisory, and Henri Chataing was a summer intern whom I helped
supervise.

% =====================================================================
% Background

% ---------------------------------------------------------------------
\chapter{Quantum computation}
\label{chap:quantum}

In this chapter, we provide a brief introduction to quantum
computation. Further details can be found in the
literature. References for this material include \cite{Nielsen-Chuang}
and \cite{KLM2007}. For a concise introduction, we also refer the
reader to \cite{Selinger-TQPL} which we loosely follow here.

% ====================================================================
\section{Linear Algebra}
\label{sect:lin-alg}

We write $\N$\label{numbN} for the semiring of non-negative integers
(including 0) and $\Z$\label{numbZ} for the ring of integers. The
fields of rational numbers, real numbers, and complex numbers are
denoted by $\Qu$\label{numbQ}, $\R$\label{numbR}, and
$\Comp$\label{numbC} respectively. Recall that $\Comp=\s{a+bi \such
  a,b\in\R}$ so that the complex numbers can be identified with the
two-dimensional real plane $\R^2$. Recall moreover that if $\alpha =
a+bi$ is a complex number, then its \emph{conjugate}\label{dag} is
$\alpha\da = a-bi$.

% --------------------------------------------------------------------
\subsection{Finite dimensional Hilbert spaces}
\label{ssect:Hilbert-spaces}

We will be interested in complex vector spaces of the form $\Comp^n$
for some $n\in\N$. The integer $n$ is called the \emph{dimension} of
$\Comp^n$. The vector space $\Comp^n$ has a canonical basis, whose
elements we denote by $e_i$, for $1 \leq i \leq n$. Every element
$\alpha\in\Comp^n$ can be uniquely written as a linear combination
\begin{equation}
  \label{eq:lin-comb}
  \alpha = a_1e_1 + \ldots + a_ne_n
\end{equation}
with $a_1, \ldots a_n\in\Comp$. We will often represent the vector
$\alpha$ of (\ref{eq:lin-comb}) as a \emph{column vector}
\begin{equation}
  \label{eq:col-vect}
  \alpha = \begin{bmatrix}
    a_1 \\
    \vdots \\
    a_n
  \end{bmatrix}.
\end{equation}
We identify $\Comp^1$ with $\Comp$ and frequently refer to the
elements of $\Comp$ as \emph{scalars}. The vector space $\Comp^n$ is
equipped with the usual operations of addition and scalar
multiplication. Given a vector $\alpha$ as in (\ref{eq:col-vect}), its
\emph{dual} is the \emph{row vector}
\[
\alpha\da =
\begin{bmatrix}
  a_1\da & \ldots & a_n\da
\end{bmatrix}.
\]
Note that under the identification of $\Comp^1$ and $\Comp$, no
ambiguity arises from using $(-)\da$ to denote the dual of a vector
and the conjugate of a scalar. The \emph{norm}\label{defnorm} of a
vector $\alpha$ is defined as $||\alpha|| = \sqrt{\alpha\da\alpha}$,
where $\alpha\da\alpha$ is obtained by matrix multiplication. A vector
whose norm is 1 is called a \emph{unit vector}. Equipped with the norm
$||-||$, the vector space $\Comp^n$ has the structure of an
$n$-dimensional \emph{Hilbert space}.

% --------------------------------------------------------------------
\subsection{Operators and matrices}
\label{ssec:matrices}

A linear operator $\Comp^n \to \Comp^m$ can be represented by an $m\by
n$ complex matrix. We write $\Matrices_{m,n}(\Comp)$\label{defmats}
for the set of all $m\by n$ complex matrices and we say that $m$ and
$n$ are the \emph{dimensions} of $U\in\Matrices_{m,n}(\Comp)$. If
$m=n$, then we simply say that $U$ has dimension $n$. We identify
$\Matrices_{n,1}(\Comp)$ and $\Comp^n$.

We will use some well-known operations on matrices. For any $n$, the
identity matrix of dimension $n$ is denoted by $I_n$, or simply by $I$
if the dimension is clear from context. If $U\in
\Matrices_{n,n}(\Comp)$, an \emph{inverse}\label{definv} of $U$,
written $U\inv$, is a matrix such that $UU\inv = U\inv U = I$. If it
exists, the inverse of a matrix is unique. We note that for any $n$,
the set of invertible matrices of dimension $n$ forms a group under
matrix multiplication. We denote this group by
$\Matrices_{n,n}(\Comp)^\times$\label{defgroupunits}. If $U\in
\Matrices_{m,n}(\Comp)$, the \emph{conjugate transpose} of $U$,
written $U\da$, is defined by $U\da_{i,j} = (U_{j,i})\da$. The
identification of $\Matrices_{n,1}(\Comp)$ and $\Comp^n$ ensures that
no ambiguity arises from our use of $(-)\da$ to denote the conjugate
transpose of a matrix, since the conjugate transpose of a column
vector is its dual.

We will also use the well-known notions of \emph{trace} and
\emph{determinant}\label{deftracedet} of a square matrix. Both are
functions that assign a complex number to any complex square
matrix. For $U\in \Matrices_{n,n}(\Comp)$, the trace of $U$ is defined
as
\[
\tr(U) = \sum_{i=1}^n U_{i,i}.
\]
The formula to express the determinant of an arbitrary $n\by n$ matrix
is somewhat cumbersome. Since we will only be considering determinants
of $2\by 2$ matrices, we give an explicit definition of the
determinant in this case only. For $U\in \Matrices_{2,2}(\Comp)$, the
determinant of $U$ is defined as
\[
\det \left(
  \begin{bmatrix}
    \alpha & \beta \\
    \gamma & \delta
  \end{bmatrix} \right) = \alpha\delta-\beta\gamma.
\]
We note that in any dimension the determinant is multiplicative and
$\det(I) = 1$.

Finally, if $U\in \Matrices_{n,m}(\Comp)$, $\alpha\in\Comp^n$,
$\lambda\in\Comp$, and $U\alpha = \lambda \alpha$, then $\alpha$ is an
\emph{eigenvector} of $U$ with \emph{eigenvalue} $\lambda$.

% --------------------------------------------------------------------
\subsection{Unitary, Hermitian, and positive matrices}
\label{ssec:unitaries}

A complex matrix $U$ is \emph{unitary} if $U\inv = U\da$. Unitary
matrices preserve the norm of vectors in $\Comp^n$. That is, if $U$ is
unitary then $||U\alpha||=||\alpha||$ for any vector
$\alpha\in\Comp^n$. The composition of two unitary matrices is again
unitary. Moreover, the identity matrix is unitary. Hence, the set
$\uset(n)$ of all unitary matrices of dimension $n$ forms a group,
called the \emph{unitary group of order $n$}\label{defuset}. Since the
determinant is a multiplicative function, $\uset(n)$ has a subgroup
which consists of those unitary matrices whose determinant is 1. This
group is called the \emph{special unitary group of order
  $n$}\label{defsuset} and is denoted by $\suset(n)$. We thus have the
inclusions
\[
\suset(n) \seq \uset(n) \seq \Matrices_{n,n}(\Comp)^\times.
\]
Examples of useful unitary matrices are provided in
Figure~\ref{fig:Pauli-Hadamard}. The matrix $H$ is known as the
\emph{Hadamard}\label{defHad} matrix and the matrices $X$, $Y$, and
$Z$ are known as the \emph{Pauli}\label{defPauli} matrices.

% ....................................................................
\begin{figure}
  \[
  H = \frac{1}{\sqrt 2}
  \begin{bmatrix}
    1 & 1  \\
    1 & -1
  \end{bmatrix} \quad X =
  \begin{bmatrix}
    0 & 1  \\
    1 & 0
  \end{bmatrix} \quad Y =
  \begin{bmatrix}
    0 & -i  \\
    i & 0
  \end{bmatrix} \quad Z =
  \begin{bmatrix}
    1 & 0  \\
    0 & -1
  \end{bmatrix}
  \]
  \label{fig:Pauli-Hadamard}
  \caption[The Hadamard and Pauli matrices.]{The Hadamard matrix $H$
    and the Pauli matrices $X$, $Y$, and $Z$.}
  \rule{\textwidth}{0.1mm}
\end{figure}
% ....................................................................

A complex matrix $U$ is \emph{Hermitian} if $U=U\da$. If $U$ is
hermitian, then $u\da U u$ is always real. Note that all the matrices
in Figure~\ref{fig:Pauli-Hadamard} are Hermitian.

A matrix $U$ is \emph{positive semidefinite} (resp. \emph{positive
  definite}) if $U$ is hermitian and $\alpha\da U\alpha \geq 0$
(resp. $\alpha\da U\alpha > 0$) for all $\alpha\in\Comp^n$.

% --------------------------------------------------------------------
\subsection{Tensor products}
\label{ssect:tensors}

The \emph{tensor product}\label{deftensor} of vectors spaces and
matrices is defined as usual and denoted by $\x$. We note that
$\Comp^n \x \Comp^m = \Comp^{nm}$ and that the tensor product acts on
square matrices as
\[
\x : \Matrices_{n,n}(\Comp) \times \Matrices_{m,m}(\Comp) \to
\Matrices_{nm,nm}(\Comp).
\]
Given two vectors $\alpha \in \Comp^n$, $\beta\in \Comp^m$, their
tensor product $\gamma=\alpha \x \beta\in\Comp^{nm}$ is defined by
$\gamma_{(i,j)} = \alpha_i\beta_j$, with the pairs $(i,j)$ ordered
lexicographically. One obtains a basis for the tensor product
$\mathcal{H}\x \mathcal{H}'$ of two vector spaces $\mathcal{H}$ and
$\mathcal{H}'$ by tensoring the elements of the bases for
$\mathcal{H}$ and $\mathcal{H}'$. For example, if
$\s{\alpha_1,\alpha_2}$ and $\s{\beta_1, \beta_2}$ are two bases for
$\Comp^2$, then
\[
\s{ \alpha_1\x \beta_1, \alpha_1\x \beta_2, \alpha_2\x \beta_1,
  \alpha_2\x \beta_2}
\]
forms a basis for $\Comp^2 \x \Comp^2 = \Comp^4$. We note, however,
that not all elements of $\Comp^4$ are of the form $\alpha \x \beta$
with $\alpha, \beta\in \Comp^2$.

% ====================================================================
\section{Quantum Computation}
\label{sect:quantum-computation}

% --------------------------------------------------------------------
\subsection{A single quantum bit}
\label{ssect:single-qubit}

Recall that the fundamental unit of classical computation is the
\emph{bit}. By analogy, the fundamental unit of quantum computation is
called the \emph{quantum bit}, or \emph{qubit}.

\begin{definition}
  \label{def:qubit}
  The \emph{state of a qubit} is a unit vector in $\Comp^2$ considered
  \emph{up to a phase}, that is, up to multiplication by a unit vector
  of $\Comp$.
\end{definition}

As is customary, we use the so-called \emph{ket}\label{defket}
notation to denote the state of a qubit, which is written
$\ket{\phi}$. Moreover, we write $\ket{0}$ and $\ket{1}$ for the
elements $e_1$ and $e_2$ of the standard basis for $\Comp^2$. In the
context of quantum computation, the basis $\s{\ket{0}, \ket{1}}$ is
referred to as the \emph{computational basis}\label{defcompbasis}.
Since the state of a (classical) bit is an element of the set
$\s{0,1}$, we sometimes refer to the states $\ket{0} = 1\ket{0} +
0\ket{1}$ and $\ket{1} = 0\ket{0} + 1\ket{1}$ as the \emph{classical
  states}.

By Definition~\ref{def:qubit}, the state of a qubit can be one of the
classical states but also any linear combination $\alpha\ket{0} +
\beta\ket{1}$ such that $||\alpha||^2 + ||\beta||^2 = 1$. The
coefficients $\alpha$ and $\beta$ in such a linear combination are
called the \emph{amplitudes} of the state. When both amplitudes of the
state of a qubit are non-zero, then the qubit is said to be in a
\emph{superposition} of $\ket{0}$ and $\ket{1}$.

The fact that the vector in Definition~\ref{def:qubit} is considered
only up to a phase gives rise to a nice geometric representation of
the state of a qubit. Let $\alpha\ket{0} + \beta\ket{1}$ be a unit
vector. We can rewrite this linear combination as
\[
\alpha\ket{0} + \beta\ket{1} =
e^{i\gamma}(\cos(\frac{\theta}{2})\ket{0} +
e^{i\phi}\sin(\frac{\theta}{2})\ket{1})
\]
for some real numbers $\theta\in [0,\pi]$ and $\gamma,\phi\in
[0,2\pi]$. Since the state of qubit is defined up to a phase, the same
state is described by
\[
\cos(\frac{\theta}{2})\ket{0} + e^{i\phi}\sin(\frac{\theta}{2})\ket{1}
\]
with $\theta, \phi \in \R$. The pair $(\theta, \phi)$ defines a point
on the 2-sphere known as the \emph{Bloch sphere} representation of the
state, as pictured in Figure~\ref{fig:Bloch}. The Cartesian
coordinates of the point $(\theta,\phi)$ on the Bloch sphere are given
by $(\sin\theta\cos\phi, \sin\theta\sin\phi,\cos\theta)$.

% ....................................................................
\begin{figure}
  \[
  \mp{0.8}{\begin{tikzpicture}
  % ------------------------------------------------------------------
  % The circle
  \draw (0,0) circle (2);

  % ------------------------------------------------------------------
  % The shading
  \shade[ball color=blue!10!white,opacity=0.20] (0,0) circle (2cm);

  % ------------------------------------------------------------------
  % The z-axis
  \draw[name path = zaxis, ->] (0,-2.3) -- (0,2.3) node[above] {$z$};

  % ------------------------------------------------------------------
  % The y-axis
  \draw[name path = yaxis, ->] (-2.3,0) -- (2.3,0) node[right] {$y$};

  % ------------------------------------------------------------------
  % The x-axis
  \draw[name path = xaxis, ->] ({sqrt(((2.3)*(2.3))*(4/5))},
  {(sqrt(((2.3)*(2.3))*(4/5)))/2}) -- (-{sqrt(((2.3)*(2.3))*(4/5))},
  -{(sqrt(((2.3)*(2.3))*(4/5)))/2}) node[below left] {$x$};

  % ------------------------------------------------------------------
  % Latitude (front)
  \draw[name path = latitude, dashed] (-2,0) arc (180:360:2 and
  0.75);

  % ------------------------------------------------------------------
  % Latitude (back)
  \draw[dashed] (-2,0) arc (180:0:2 and 0.75);

  % ------------------------------------------------------------------
  % Longitude
  \draw[name path = longitude, dashed] (0,-2) arc (-90:90:0.75 and 2);

  % ------------------------------------------------------------------
  % The intersection of the longitude and latitude
  \path[name intersections={of= latitude and longitude}];

  % A line segment from the origin to intersection-1
  \draw[-] (0,0) -- (intersection-1);

  % ------------------------------------------------------------------
  % The red region (hack)
  \begin{scope}
    \clip (0,-2) -- (0,2) -- (2,2) -- (2,-2) -- cycle;

    \clip (0,0) -- (30:0.75 and 2) -- (2,2) -- (0,2) -- cycle;

    \fill [fill=red!50, opacity=0.5] (0,0) ellipse (0.75 and 2);
  \end{scope}

  % ------------------------------------------------------------------
  % The green region (hack)
  \begin{scope}
    \clip (-{sqrt(((2.3)*(2.3))*(4/5))},
    -{(sqrt(((2.3)*(2.3))*(4/5)))/2}) -- ({sqrt(((2.3)*(2.3))*(4/5))},
    {(sqrt(((2.3)*(2.3))*(4/5)))/2}) -- ({sqrt(((2.3)*(2.3))*(4/5))},
    -{(sqrt(((2.3)*(2.3))*(4/5)))/2}) -- cycle ;

    \clip (0,0) -- (intersection-1) -- (0,-2) --
    (-{sqrt(((2.3)*(2.3))*(4/5))}, -{(sqrt(((2.3)*(2.3))*(4/5)))/2})
    -- cycle ;

    \fill [fill=green!50, opacity=0.5] (0,0) ellipse (2 and 0.75);
  \end{scope}

  % ------------------------------------------------------------------
  % The state
  \draw[name path = state, -] (0,0) -- (30:0.75 and 2);

  \fill (30:0.75 and 2) circle (.1);
\end{tikzpicture}}
  \]
  \label{fig:Bloch}
  \caption[The Bloch sphere representation of the state of a
  qubit.]{The Bloch sphere representation of the state of a qubit. The
    state $(\theta,\phi)$ is represented by the black dot, with
    $\theta$ corresponding to the angle pictured in red and $\phi$ to
    the angle pictured in green.}
  \rule{\textwidth}{0.1mm}
\end{figure}
% ....................................................................

% --------------------------------------------------------------------
\subsection{Multiple quantum bits}
\label{ssect:multi-qubits}

In classical computation, the state of a system of $n$ bits is
represented by an element of the set $\s{0,1}^n$. The set of states of
a complex system therefore arises as a \emph{Cartesian product}. In
contrast, the set of states of a system composed of multiple qubits is
obtained using the \emph{tensor product}.

\begin{definition}
  \label{def:qubits}
  The \emph{state of a system of $n$ qubits} is a unit vector in
  $\Comp^{2^n}$ considered up to a phase.
\end{definition}

The basis $\s{\ket{0},\ket{1}}$ for $\Comp^2$ can be used to construct
a basis for $\Comp^{2^n}$, which we also call the computational
basis. As is customary, we denote the basis element
$\ket{x_1}\x\ldots\x\ket{x_n}$ by $\ket{x_1\ldots x_n}$, for any
$x_1,\ldots,x_n\in\s{0,1}$. For example, the state of a system of two
qubits is described, up to a phase, by a linear combination
\[
\alpha_0 \ket{00} + \alpha_1 \ket{01}+ \alpha_2 \ket{10} + \alpha_3
\ket{11}
\]
for some complex numbers $\alpha_i$ such that $\sum ||\alpha_i||^2 =
1$. As mentioned in Section~\ref{ssect:tensors}, not all elements of
$\Comp^4$ arise as the tensor of two elements of $\Comp^2$. If the
state of a two-qubit system can be written as
\[
(\alpha\ket{0} + \beta\ket{1})\x(\alpha'\ket{0} + \beta'\ket{1})
\]
then the two qubits are said to be \emph{separable}. Otherwise, they
are said to be \emph{entangled}.

% --------------------------------------------------------------------
\subsection{Evolution of a quantum system}
\label{ssect:evolution}

A computation is performed by acting on the state of a system of
qubits. This can be done in two ways: via \emph{unitary
  transformation} or via \emph{measurement}.

% --------------------------------------------------------------------
\subsubsection{Unitary evolution}
\label{sssect:unitary-evolution}

The state of a system of $n$ qubits can evolve under the action of a
unitary operator. If $\ket{\phi}$ is such a state (viewed as a column
vector in $\Comp^{2^n}$) and $U$ is a unitary matrix, the evolution of
$\ket{\phi}$ under $U$ is given by $\ket{\phi} \mapsto
U\ket{\phi}$. For this reason, we sometimes refer to a unitary matrix
of dimension $2^n$ as an \emph{$n$-qubit unitary}.

Recall from Section~\ref{ssect:single-qubit} that the state of a qubit
can be interpreted as a point on the Bloch sphere. This interpretation
extends to single-qubit unitary matrices, which can be seen as
\emph{rotations}\label{defrot} of the Bloch sphere. Let $v=(x,y,z)$ be
a unit vector in $\R^3$ and $\theta\in\R$ and define the matrix
\[
R_v(\theta) = \cos(\frac{\theta}{2})I - \sin(\frac{\theta}{2})(xX + yY
+ zZ),
\]
where $X$, $Y$, and $Z$ are the Pauli matrices. The matrix
$R_v(\theta)$ defines a rotation of the Bloch sphere by $\theta$
radians about the $v$-axis. For example, the Pauli matrices $X$, $Y$,
and $Z$ correspond to rotations about the $x$-, $y$-, and $z$-axes by
$\pi$ radians. The following theorem states that, up to a phase, every
unitary can be seen as a rotation of the Bloch sphere.

\begin{theorem}
  \label{thm:unitaries-as-rots}
  If $U\in\uset(2)$, then there exist real numbers $\alpha$ and
  $\theta$, and a unit vector $w$ in $\R^3$ such that $U =
  e^{i\alpha}R_w(\theta)$.
\end{theorem}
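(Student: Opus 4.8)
The statement is the classical fact that $\mathrm{SU}(2)$ is the double cover of $\mathrm{SO}(3)$, phrased at the level of a single element: every $2\times 2$ unitary is, up to a global phase, a Bloch-sphere rotation $R_w(\theta)$. I would proceed by first reducing to the special unitary case, then diagonalizing, then matching the diagonal form against the definition of $R_v(\theta)$ given in the text, and finally conjugating back.

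\textbf{Step 1: Reduce to $\mathrm{SU}(2)$.} Given $U\in\uset(2)$, let $d=\det(U)$, a complex number of modulus $1$, so $d=e^{2i\alpha}$ for some real $\alpha$. Then $U'=e^{-i\alpha}U$ satisfies $\det(U')=e^{-2i\alpha}d=1$, so $U'\in\suset(2)$. If I can show $U'=R_w(\theta)$ for suitable $w,\theta$, then $U=e^{i\alpha}R_w(\theta)$ and we are done. So it suffices to treat the case $U\in\suset(2)$.

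\textbf{Step 2: Diagonalize and read off the axis and angle.} Any $U\in\suset(2)$ is unitary hence normal, so it is unitarily diagonalizable: $U = P D P\da$ with $P$ unitary and $D=\mathrm{diag}(\lambda_1,\lambda_2)$, where $|\lambda_1|=|\lambda_2|=1$ and $\lambda_1\lambda_2=\det(U)=1$. Thus $\lambda_1=e^{-i\theta/2}$ and $\lambda_2=e^{i\theta/2}$ for some real $\theta$, i.e. $D=\cos(\tfrac{\theta}{2})I - i\sin(\tfrac{\theta}{2})Z$. Comparing with the definition $R_v(\theta)=\cos(\tfrac{\theta}{2})I-\sin(\tfrac{\theta}{2})(xX+yY+zZ)$, we see $D = R_{(0,0,1)}(\theta)$, using $iZ = $ the relevant Pauli term. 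Now conjugate: $U = P R_{(0,0,1)}(\theta) P\da$. The map $A\mapsto PAP\da$ sends $I$ to $I$ and sends the Pauli vector $(X,Y,Z)$ to another orthonormal triple of traceless Hermitian matrices, which (since $P\in\uset(2)$, and after absorbing a phase into $P$ to make $P\in\suset(2)$, which does not change $PAP\da$) corresponds to an orthogonal rotation of $\R^3$; hence $P Z P\da = xX+yY+zZ$ for some unit vector $w=(x,y,z)\in\R^3$. Therefore $U = \cos(\tfrac{\theta}{2})I - \sin(\tfrac{\theta}{2})(xX+yY+zZ) = R_w(\theta)$.

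\textbf{Main obstacle.} The only real content beyond routine linear algebra is Step 2's claim that conjugation by $P$ carries $Z$ (a traceless Hermitian involution of unit "length") to another element of the same form, i.e. to $xX+yY+zZ$ with $w=(x,y,z)$ a unit vector in $\R^3$. The cleanest way is to note that $PZP\da$ is Hermitian (conjugation preserves $\da$), traceless (trace is conjugation-invariant and $\tr Z = 0$), and squares to $I$ (since $Z^2=I$); any such $2\times 2$ matrix lies in the real span of $X,Y,Z$ and, being traceless Hermitian with square $I$, must equal $xX+yY+zZ$ with $x^2+y^2+z^2=1$ — this is a short direct computation from the fact that $\{I,X,Y,Z\}$ is a basis of the $2\times2$ matrices with $X,Y,Z$ the traceless part. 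I would present this as the key lemma-step and verify it by the $\{I,X,Y,Z\}$-expansion, which makes the whole argument self-contained given only the linear algebra recalled earlier in the chapter.
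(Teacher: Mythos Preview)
The paper does not actually prove this theorem: it appears in the background chapter on quantum computation, where the author explicitly states that most proofs are omitted. So there is no paper proof to compare against.

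Your argument is the standard one and is correct. Step~1 (strip off the phase to land in $\suset(2)$) and Step~2 (diagonalize, recognize the diagonal as a $z$-axis rotation, then conjugate back) are exactly the right moves, and your justification of the key point---that $PZP\da$ is traceless, Hermitian, and squares to $I$, hence equals $xX+yY+zZ$ for a unit vector $(x,y,z)$---is clean and self-contained. The parenthetical about absorbing a phase into $P$ to force $P\in\suset(2)$ is unnecessary, since your direct verification works for any unitary $P$, but it does no harm.

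One point worth making explicit rather than hiding in the phrase ``using $iZ=$ the relevant Pauli term'': the paper's displayed formula for $R_v(\theta)$ is missing a factor of $i$ in front of $\sin(\theta/2)$. With the formula as literally printed, $R_v(\theta)$ is Hermitian rather than unitary for generic $\theta$, and the theorem would be false. Your computation $D=\cos(\theta/2)I - i\sin(\theta/2)Z$ is correct, and it matches the \emph{intended} (standard) definition $R_v(\theta)=\cos(\theta/2)I - i\sin(\theta/2)(xX+yY+zZ)$; you should say so plainly rather than leave the reader to decode the discrepancy.
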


% --------------------------------------------------------------------
\subsubsection{Measurement}
\label{sssect:measurement}

The second way in which one can act on the state of a system of qubits
is by measurement. Unlike unitary evolutions, measurements are
probabilistic processes.

We first describe the effect of a measurement on a single
qubit. Assume a qubit is in the state $\alpha\ket{0} +
\beta\ket{1}$. If the qubit is measured, then the result of the
measurement is either 0 or 1 and the state of the qubit
post-measurement is the corresponding classical state. Moreover, the
measurement result 0 occurs with probability $||\alpha||^2$ and the
measurement result 1 occurs with probability $||\beta||^2$. Since the
state of a qubit was described by a unit vector, these probabilities
sum to 1.

We now discuss the case of a complex system. For simplicity, we only
consider a two-qubit system. Assume the state of our system is given
by the following vector
\[
\alpha_0 \ket{00} + \alpha_1 \ket{01}+ \alpha_2 \ket{10} + \alpha_3
\ket{11}.
\]
If the first qubit is measured, then with probability $||\alpha_0||^2
+ ||\alpha_1||^2$ the measurement result is 0 and the
post-measurement state is
\[
\frac{\alpha_0}{\sqrt{||\alpha_0||^2 + ||\alpha_1||^2}} \ket{00} +
\frac{\alpha_1}{\sqrt{||\alpha_0||^2 + ||\alpha_1||^2}} \ket{01},
\]
while with probability $||\alpha_2||^2 + ||\alpha_3||^2$ the
measurement result is 1 and the post-measurement state is
\[
\frac{\alpha_2}{\sqrt{||\alpha_2||^2 + ||\alpha_3||^2}} \ket{10} +
\frac{\alpha_3}{\sqrt{||\alpha_2||^2 + ||\alpha_3||^2}} \ket{11}.
\]
Note that the linear combinations have been renormalized to ensure
that the resulting states are described by unit vectors.

% --------------------------------------------------------------------
\subsubsection{No-cloning}
\label{sssect:no-cloning}

The well-known \emph{no-cloning theorem} is a property of quantum
computation which will be of importance in chapters~\ref{chap:lambda},
\ref{chap:pq}, and \ref{chap:pq-safe}. The theorem states that there
is no physical device whose action is described by the following
mapping
\[
\ket{\psi}\x\ket{0} \mapsto \ket{\psi}\x \ket{\psi}.
\]
In other words, it is impossible to ``clone'' quantum states.

% --------------------------------------------------------------------
\subsection{The QRAM model and quantum circuits}
\label{ssect:circuit-model}

In Section~\ref{sect:quantum-computation}, we described the
mathematical formalism of quantum computation, but did not spend any
time explaining what a quantum computer might look like, nor how one
would describe quantum programs and protocols. Various models of
quantum computation have been devised in the literature (see, e.g.,
\cite{Deutsch85}, \cite{Raussendorf2002}). Here, we discuss two
complementary approaches: the \emph{QRAM model} introduced by Knill in
\cite{Knill96} and the \emph{circuit model} described by Deutsch in
\cite{Deutsch89}. We also refer the reader to \cite{CACM} where this
model was described in more detail.

% --------------------------------------------------------------------
\subsubsection{The QRAM model of quantum computation}
\label{sssect:qram}

In the QRAM model of a quantum computation, a quantum computer is
thought of as consisting of two devices, a \emph{classical device} and
a \emph{quantum device}, sharing computational tasks. The classical
device performs operations such as compilation and correctness
checking. The quantum device only performs the specifically quantum
operations. In particular, it is assumed to hold an array of qubits
and to be able to
\begin{itemize}
\item initialize qubits to a specified state,
\item perform unitary operations on qubits, and
\item measure qubits.
\end{itemize}
The execution of a program in this model proceeds as follows. The
source code of the program resides on the classical device where it is
compiled into an executable. When the program is executed on the
classical device, it can communicate with the quantum device if
required. Through this communication, it can instruct the quantum
device to perform one of the above described quantum
operations. Measurement results, if any, are returned to the classical
device for post-processing or further computation. This system is
schematically represented in Figure~\ref{fig:QRAM}.

% ....................................................................
\begin{figure}
  \[
  \mp{0.8}{\begin{tikzpicture}[scale=0.7]
\draw [yellow, ultra thick, rounded corners, fill=yellow!17] 
  (0,0) rectangle (3,3);
\draw [blue, ultra thick, rounded corners, fill=blue!17] 
  (6,0) rectangle (9,3);
\draw [->, ultra thick, rounded corners] 
  (7.5,-0.2) to (7.5, -0.8) to (1.5, -0.8) to (1.5,-0.2);
\draw [->, ultra thick, rounded corners] 
  (1.5,3.2) to (1.5, 3.8) to (7.5, 3.8) to (7.5,3.2);

\node at (1.5,1.5) {Classical};
\node at (7.5,1.5) {Quantum};
\node at (4.5,-1.2) {Measurement results};
\node at (4.5,4.2) {Instructions};
\end{tikzpicture}}
  \]
  \label{fig:QRAM}
  \caption[The QRAM model of quantum computation.]{The QRAM model of
    quantum computation.}
  \rule{\textwidth}{0.1mm}
\end{figure}
% ....................................................................

% --------------------------------------------------------------------
\subsubsection{Quantum circuits}
\label{sssect:qcircs}

While the QRAM model of quantum computation describes the overall
architecture of a quantum computer, quantum circuits are a language to
express (parts of) quantum algorithms. In particular, one can think of
a quantum circuit as the description of a sequence of operations that,
in the QRAM model, the classical device would send to its quantum
counterpart.

In the quantum circuit model, a quantum computation is thought of as a
sequence of unitary gates applied to an array of qubits, followed by a
measurement of some or all of the qubits. The sequence of unitary
operations are arranged in the form of a circuit, akin to the
classical boolean circuits.

The identity matrix on $n$ qubits, i.e., the identity matrix of
dimension $2^n$, is represented by $n$ distinct \emph{wires}. For
example, the identity on 3 qubits is represented by
\[
\mbox{\Qcircuit @C=1em @R=1.3em {
    &\qw &\qw &\qw &\qw \\
    &\qw &\qw &\qw &\qw \\
    &\qw &\qw &\qw &\rstick{.}\qw }}
\]
A non-identity unitary $U$ acting on $n$ qubits is depicted as a box,
labelled $U$, with $n$ \emph{input wires} and $n$ \emph{output
  wires}. For example, the Hadamard matrix $H$ is represented as
\[
\mbox{ \Qcircuit @R=.2em @C=.5em @!R { &\qw &\gate{H} &\qw
    &\rstick{.}\qw }}
\]
Because of the similarities between quantum circuits and classical
boolean circuits, we sometimes refer to a unitary $U$ as a
\emph{quantum gate}.

The graphical representation of the composition $UV$, of two unitary
matrices $U$ and $V$ acting on $n$ qubits, is obtained by horizontally
concatenating the gates for $U$ and $V$. That is, by connecting the
output wires of the gate for $V$ to the input wires of the gate for
$U$, as illustrated below in the case of matrices on 4 qubits.
\[
\mbox{ \Qcircuit @R=.2em @C=.5em @!R {
    &\qw &\multigate{3}{V} &\qw &\qw &\multigate{3}{U} &\qw &\qw \\
    &\qw &\ghost{V} &\qw &\qw &\ghost{U} &\qw &\qw\\
    &\qw &\ghost{V} &\qw &\qw &\ghost{U} &\qw &\qw \\
    &\qw &\ghost{V} &\qw &\qw &\ghost{U} &\qw &\rstick{.}\qw }}
\]

Finally, the graphical representation of the tensor product $U\x V$ of
two unitary matrices $U$ and $V$ is obtained by vertically
concatenating the gates for $U$ and $V$, as illustrated below in the
case of unitary matrices $U$ and $V$ on 3 and 2 qubits respectively.
\[
\mbox{ \Qcircuit @R=.2em @C=.5em @!R {
    & \qw & \multigate{2}{U} & \qw & \qw \\
    & \qw & \ghost{U} & \qw & \qw &  \\
    & \qw & \ghost{U} & \qw & \qw & \\
    & \qw & \multigate{1}{V} & \qw & \qw \\
    & \qw & \ghost{V} & \qw & \rstick{.}\qw }}
\]

Now let $S$ be a set of unitary matrices and write $S\da$ for the set
$\s{U\da \such U\in S}$. A \emph{circuit over $S$} is constructed
using the gates of $S\cup S\da$, as well as arbitrary identity gates,
using the graphical representations for composition and tensor
product. For example, a circuit over $S=\s{U, V, W}$, where $U$ acts
on 2 qubits and $V$ and $W$ both act on 3 qubits, is represented below
\[
\mbox{ \Qcircuit @R=.2em @C=.5em @!R { &\qw &\multigate{1}{U} &\qw
    &\qw &\multigate{2}{V} &\qw &\qw &\multigate{2}{W} &\qw &\qw
    &\multigate{2}{V\da} &\qw &\qw
    &\multigate{1}{U\da} &\qw &\qw  \\
    &\qw &\ghost{U} &\qw &\qw &\ghost{V} &\qw &\qw &\ghost{W} &\qw
    &\qw &\ghost{V\da} &\qw &\qw &\ghost{U\da} &\qw &\qw \\
    &\qw &\qw &\qw &\qw &\ghost{V} &\qw &\qw &\ghost{W} &\qw &\qw
    &\ghost{V\da} &\qw &\qw &\qw &\qw &\rstick{.}\qw }}
\]
We write $\p{S}$ for the set of all circuits over $S$.

One can recover the matrix represented by a circuit by interpreting
the operations of horizontal and vertical concatenation as composition
and tensor product respectively. For example, let $S$ consist of the
Hadamard gate $H$ and the Pauli gate $X$, and let $C$ be the following
circuit
\[
\mbox{ \Qcircuit @R=.2em @C=.5em @!R {
    &\qw &\gate{H} &\qw &\qw &\qw &\qw &\qw\\
    &\qw &\gate{X} & \qw &\qw &\gate{H} &\qw &\rstick{.}\qw }}
\]
Then $C$ represents the unitary matrix $U$ given by
\[
U = (I_2 \x H) \circ (H \x X) = \frac{1}{2}\begin{bmatrix}
  1 & -1 & 1 & -1 \\
  1 & 1 & 1 & 1 \\
  1 & -1 & -1 & 1 \\
  1 & 1 & -1 & -1
\end{bmatrix}.
\]
Note that $C$ could have alternatively been interpreted as the unitary
$(I_2\circ H) \x (H\circ X)$. However, both interpretations coincide
since $(I_2\circ H) \x (H\circ X) = (I_2 \x H) \circ (H \x X)$. This
is due to the so-called \emph{bifunctoriality} of $\x$, which
guarantees in particular that
\[
(U\circ U')\x (V\circ V') = (U \x V) \circ (U' \x V')
\]
for any $U,U',V,V'$.  By a slight abuse of notation, we often write
$C=U$ if the circuit $C$ represents the matrix $U$.

The language of quantum circuits can be extended with measurement
gates, which are depicted as
\[
\mbox{ \Qcircuit @R=.2em @C=.5em @!R { &\qw & \qw &\meter & \cw &
    \rstick{.}\cw }}
\]

% ---------------------------------------------------------------------
\chapter{Algebraic number theory}
\label{chap:nb-th}

In this chapter, we introduce basic concepts of algebraic number
theory. In particular, we describe ring extensions of $\Z$ and
computational methods to solve certain Diophantine equations, known as
\emph{relative norm equations}, over these rings. References for this
material include \cite{Wla90} and \cite{Cohen2000}.

% ====================================================================
\section{Rings of integers}
\label{sec:int-rings}

If $K$ is a field, $S$ a subfield of $K$, and $\alpha$ an element of
$K$, then the \emph{field extension}\label{deffieldext} $S(\alpha)$ is
the smallest subfield of $K$ which contains $S$ and $\alpha$.

An element $\alpha\in\Comp$ is an \emph{algebraic number} if it is the
root of some polynomial over $\Qu$. A field extension of $\Qu$ of the
form $\Qu(\alpha)$ for some algebraic number $\alpha$ is called an
\emph{algebraic number field}.

An element $\beta\in\Comp$ is an \emph{algebraic integer} if it is the
root of some monic polynomial over $\Z$, i.e., of some polynomial over
$\Z$ whose leading coefficient is 1. The set of algebraic integers of
a number field $\Qu(\alpha)$ forms a ring, called \emph{the ring of
  integers}\label{defringints} of $\Qu(\alpha)$ and denoted by
$\O_{\Qu(\alpha)}$. For any algebraic number field $\Qu(\alpha)$, the
ring $\O_{\Qu(\alpha)}$ is an integral domain whose field of fractions
is $\Qu(\alpha)$.

% --------------------------------------------------------------------
\subsection{Extensions of \texorpdfstring{$\Z$}{Z}}
\label{ssec:Z-exts}

If $R$ is a ring, $R'$ a subring of $R$, and $\alpha$ an element of
$R$, then the \emph{ring extension}\label{defringext} $R'[\alpha]$ is
the smallest subring of $R$ which contains $R'$ and $\alpha$.

\begin{definition}[Extensions of $\Z$]
  \label{def:ext-Z}
  We are interested in the following four ring extensions of $\Z$.
  \begin{itemize}
  \item The ring $\Z$ of \emph{integers}.
  \item The ring $\Zi$ of \emph{Gaussian integers}.
  \item The ring $\Zrt$ of \emph{quadratic integers with radicand 2}.
  \item The ring $\Zomega$ of \emph{cyclotomic integers of degree 8},
    where $\omega= e^{i\pi/4}=(1+i)/\sqrt{2}$.
  \end{itemize}
\end{definition}

We note that since $i=\omega^2$ and $\sqrt{2} = \omega - \omega^3$, we
have the inclusions $\Z \seq \Zi \seq \Zomega$ and $\Z \seq \Zrt \seq
\Zomega$.  Moreover, one can show that $\Zrt$ and $\Zomega$ are dense
in $\R$ and $\Comp$ respectively. The rings introduced in
Definition~\ref{def:ext-Z} are rings of algebraic integers. Indeed we
have
\begin{center}
  $\Z = \O_{\Qu}$, $\Zi = \O_{\Qu(i)}$, $\Zrt = \O_{\Qu(\sqrt 2)}$,
  and $\Zomega = \O_{\Qu(\omega)}$.
\end{center}
Explicit expressions for the elements of $\Zi$, $\Zrt$, and $\Zomega$
are given below.
\begin{itemize}
\item $\Zi = \s{a_0 + a_1i \such a_j \in \Z}$.
\item $\Zrt = \s{a_0 + a_1\sqrt{2} \such a_j \in \Z}$.
\item $\Zomega = \s{a_0 + a_1\omega + a_2\omega^2 + a_3\omega^3 \such
    a_j \in \Z}$.
\end{itemize}
Note that there is a bijection between $\Zi$ and $\Z^2$. As the
following proposition shows, there is also a bijection between
$\Zomega$ and and two disjoint copies of $\Zrt\times\Zrt$.

\begin{proposition}
  \label{prop:zomega}
  A complex number $\alpha$ is in $\Zomega$ if and only if it can be
  written of the form $\alpha=a_0+a_1 i$ or of the form
  $\alpha=a_0+a_1 i+\omega$, where $a_0, a_1\in\Zrt$.
\end{proposition}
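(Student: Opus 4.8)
The plan is to prove both implications by direct computation, translating between the $\Z$-basis $\{1,\omega,\omega^2,\omega^3\}$ of $\Zomega$ and the description in the statement. The only algebraic facts needed are the relations $i=\omega^2$ and $\sqrt 2=\omega-\omega^3$ (equivalently $\omega^3=\omega-\sqrt 2$), already recorded before the proposition, together with the derived identities $\sqrt 2\,\omega=1+i$, hence $2\omega=\sqrt 2+\sqrt 2\,i$, and $\sqrt 2\,i=\omega+\omega^3$.

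For the ``if'' direction I would argue that both stated forms obviously produce elements of $\Zomega$: since $\Zomega$ is a ring with $\Zrt\seq\Zomega$, $i=\omega^2\in\Zomega$, and $\omega\in\Zomega$, every $\Z$-linear combination of $1,i,\sqrt 2,\sqrt 2\,i,\omega$ lies in $\Zomega$, and expanding $a_0,a_1\in\Zrt$ in the basis $\{1,\sqrt 2\}$ exhibits $a_0+a_1 i$ and $a_0+a_1 i+\omega$ as exactly such combinations.

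For the ``only if'' direction, start from $\alpha=a_0+a_1\omega+a_2\omega^2+a_3\omega^3$ with $a_j\in\Z$. Substituting $\omega^2=i$ and $\omega^3=\omega-\sqrt 2$ rewrites this as $\alpha=(a_0-a_3\sqrt 2)+a_2 i+(a_1+a_3)\omega$, that is, $\alpha=A_0+A_1 i+k\omega$ with $A_0:=a_0-a_3\sqrt 2\in\Zrt$, $A_1:=a_2\in\Z\seq\Zrt$, and $k:=a_1+a_3\in\Z$. Then I would split on the parity of $k$, using $2\omega=\sqrt 2+\sqrt 2\,i$: if $k=2m$ we obtain $\alpha=(A_0+m\sqrt 2)+(A_1+m\sqrt 2)i$, the first form; if $k=2m+1$ we obtain $\alpha=(A_0+m\sqrt 2)+(A_1+m\sqrt 2)i+\omega$, the second form. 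In either case the coefficients lie in $\Zrt$.

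None of this is genuinely difficult; the only thing to get right is the bookkeeping when moving between the two bases, and in particular the observation that the obstruction to absorbing the $\omega$-term into $\Zrt$-coefficients of $1$ and $i$ is precisely the parity of the total $\omega$-coefficient $a_1+a_3$ (this is also why $\omega\notin\Zi$, which makes the two forms mutually exclusive and matches the remark that $\Zomega$ corresponds to two disjoint copies of $\Zrt\times\Zrt$). I expect the ``only if'' parity split to be the one step worth spelling out carefully; everything else is routine.
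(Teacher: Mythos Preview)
Your proof is correct and follows essentially the same approach as the paper: rewrite $\alpha$ in terms of $1$ and $i$ (the paper does this via $\omega=(1+i)/\sqrt2$, you via $\omega^2=i$ and $\omega^3=\omega-\sqrt2$), then split on the parity of $a_1+a_3$ to decide whether an extra $\omega$ remains. Your intermediate form $A_0+A_1 i+k\omega$ is a slightly cleaner way to organize the same computation, but the key parity observation is identical.
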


\begin{proof}
  The right-to-left implication is trivial. For the left-to-right
  implication, let $\alpha=a+b\omega+c\omega^2+d\omega^3$, where
  $a,b,c,d\in\Z$. Noting that $\omega=\frac{1+i}{\sqrt 2}$, we have
  \[
  \alpha = (a + \frac{b-d}{2} \sqrt2) + (c + \frac{b+d}{2}{\sqrt2})i.
  \]
  If $b-d$ (and therefore $b+d$) is even, then $\alpha$ is of the
  first form, with $a_0=a + \frac{b-d}{2} \sqrt2$ and $a_1=c +
  \frac{b+d}{2}{\sqrt2}$. If $b-d$ (and therefore $b+d$) is odd, then
  $\alpha$ is of the second form, with $a_0=a + \frac{b-d-1}{2}
  \sqrt2$ and $a_1 = c + \frac{b+d-1}{2}{\sqrt2}$.
\end{proof}

We close this subsection with the definition of two algebraic integers
which will be useful in chapters \ref{chap:grid-pb},
\ref{chap:synth-V}, and \ref{chap:synth-T}.

\begin{definition}
  \label{def:lambda-delta}
  The algebraic integers $\lambda\in\Zrt$ and $\delta\in\Zomega$ are
  defined as follows
  \begin{itemize}
  \item $\lambda=1+\sqrt{2}$ and
  \item $\delta = 1+\omega$.
  \end{itemize}
\end{definition}

% --------------------------------------------------------------------
\subsection{Automorphisms}
\label{ssec:autos}

Recall that an automorphism of a ring $R$ is an isomorphism $R \to
R$. The ring $\Zomega$ has four automorphisms. One of these
automorphisms is \emph{complex conjugation}, which we denote $(-)\da$
as in Chapter~\ref{chap:quantum}. Explicitly, $(-)\da$ acts on an
arbitrary element of $\Zomega$ as follows
\[
(a_0 + a_1\omega + a_2\omega^2 +a_3\omega^3)^\dagger = a_0 -a_3\omega
- a_2\omega^2 -a_1\omega^3.
\]
A second automorphism of $\Zomega$ is
\emph{$\sqrt{2}$-conjugation}\label{defbul}, denoted $(-)\bul$, which
acts on an arbitrary element of $\Zomega$ as follows
\[
(a_0 + a_1\omega +a_2\omega^2 +a_3\omega^3)\bul = a_0 -a_1\omega
+a_2\omega^2 -a_3\omega^3.
\]
The remaining two automorphisms of $\Zomega$ are the identity as well
as ${(-)\bul}\da={(-)\da}\bul$.

The rings $\Zi$ and $\Zrt$ both have two automorphisms, while $\Z$ has
exactly one. All of these are obtained by restricting the
automorphisms of $\Zomega$. Because $(-)\da$ acts trivially on $\Zrt$,
the only non-identity automorphism of $\Zrt$ is
$(-)\bul$. Explicitly, the action of $(-)\bul$ on an element of
$\Zrt$ is given by $(a+b\sqrt{2})\bul = a-b\sqrt{2}$. Similarly, the
only non-identity automorphism of $\Zi$ is $(-)\da$, whose action is
explicitly given by $(a+bi)\da = a-bi$. The ring $\Z$ has no
non-trivial automorphism.

We note that for $t\in\Zomega$, we have $t\in\Zrt$ iff $t=t\da$,
$t\in\Zi$ iff $t=t\bul$, and $t\in\Z$ iff $t=t\da$ and $t=t\bul$.

% --------------------------------------------------------------------
\subsection{Norms}
\label{ssec:norms}

Let $R$ be one of the rings $\Z$, $\Zi$, $\Zrt$, or $\Zomega$. We
define the \emph{norm}\label{nbnorm} $\Norm_R(\alpha)$ of an element
$\alpha\in R$ to be
\[
\Norm_R(\alpha) = \prod_\sigma \sigma(\alpha),
\]
where the product is taken over all automorphisms $\sigma : R \to
R$. We provide explicit formulas for each norm in the definition
below.

\begin{definition}[Norms]~
  \label{def:norms}
  \begin{itemize}
  \item If $\alpha\in\Z$, then $\Norm_\Z(\alpha) = \alpha$.
  \item If $\alpha = a_0+a_1i\in\Zi$, then $\Norm_{\Zi}(\alpha) =
    \alpha\da \alpha = a_0^2 + a_1^2$\label{defnormZi}.
  \item If $\alpha=a_0+a_1\sqrt{2}\in\Zrt$, then $\Norm_{\Zrt}(\alpha)
    = \alpha\bul\alpha = a_0^2 - 2a_1^2$\label{defnormZrt}.
  \item If $\alpha=a_0+a_1\omega+a_2\omega^2+a_3\omega^3\in\Zomega$,
    then
    \[
    \Norm_{\Zomega}(\alpha) = {\alpha\da}\bul\alpha\da\alpha\bul\alpha
    = (a_0^2 + a_1^2 + a_2^2 + a_3^2)^2 -
    2(a_3a_2+a_2a_1+a_1a_0-a_3a_0)^2\label{defnormZomega}.
    \]
  \end{itemize}
\end{definition}

All the norms introduced in Definition~\ref{def:norms} are
multiplicative and integer valued. This means that
$\Norm_R(\alpha\beta)=\Norm_R(\alpha)\Norm_R(\beta)$ and
$\Norm_R(\alpha)\in\Z$. The norms $\Norm_{\Zomega}$ and $\Norm_{\Zi}$
are moreover valued in the non-negative integers. Finally, we have
$\Norm_R(\alpha)=0$ iff $\alpha=0$ and $\Norm_R(\alpha)$ is a unit if
and only if $\alpha$ is a unit, that is, an invertible element.

\begin{remark}
  \label{rem:bound-bullet}
  If $\alpha$ and $\beta$ are two distinct elements of $\Zrt$, then
  the following inequality holds:
  \begin{equation}
    \label{eqn:discrete}
    |\alpha-\beta|\cdot |\alpha\bul-\beta\bul| \geq 1,
  \end{equation}
  This follows from the fact that $|\alpha-\beta|\cdot
  |\alpha\bul-\beta\bul| = |\Norm_{\Zrt}(\alpha - \beta)|$. The same
  inequality holds for $\alpha,\beta\in\Zomega$.
\end{remark}

% ====================================================================
\section{Diophantine equations}
\label{sect:diophantine}

% --------------------------------------------------------------------
\subsection{Euclidean domains}
\label{ssect:euclidian}

The rings $\Z$, $\Zi$, $\Zrt$, and $\Zomega$ are integral domains,
whose fields of fractions are $\Qu$, $\Qu(i)$, $\Qu(\sqrt 2)$, and
$\Qu(\omega)$. An important property of these rings is that they are
\emph{Euclidean domains}.

\begin{definition}
  \label{def:euclidian-domain}
  A \emph{Euclidean domain} is an integral domain $R$ equipped with a
  function $f:R\setminus \s{0} \to \N$ such that for every $a\in R$
  and $b\in R\setminus \s{0}$, there exist $q,r\in R$ such that $a =
  bq + r$ and $r=0$ or $f(r)<f(b)$.
\end{definition}

\begin{proposition}
  \label{prop:ext-Euc-dom}
  Let $R$ be one of $\Z$, $\Zi$, $\Zrt$, or $\Zomega$. Then the
  function $|\Norm_R(-)|$ makes $R$ into a Euclidean domain.
\end{proposition}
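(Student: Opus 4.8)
The plan is to verify the Euclidean-function property for each of the four rings simultaneously, exploiting the key fact that $|\Norm_R(-)|$ is multiplicative and integer-valued (stated earlier). The general strategy is the classical one for imaginary-quadratic-type rings: given $a \in R$ and $b \in R \setminus \{0\}$, form the quotient $a/b$ in the field of fractions $F$ (which is $\Qu$, $\Qu(i)$, $\Qu(\sqrt 2)$, or $\Qu(\omega)$), then find a ring element $q \in R$ that is ``close enough'' to $a/b$, and set $r = a - bq$. The point is that $\Norm$ extends multiplicatively from $R$ to $F$, so $|\Norm_R(r)| = |\Norm_F(a/b - q)| \cdot |\Norm_R(b)|$, and it suffices to show that for every $\xi \in F$ there is $q \in R$ with $|\Norm_F(\xi - q)| < 1$. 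Because the norm is integer-valued on $R$, this gives $|\Norm_R(r)| < |\Norm_R(b)|$, which is exactly the required inequality $f(r) < f(b)$ with $f = |\Norm_R(-)|$.

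First I would handle $\Z$ and $\Zi$, which are standard. For $\Z$, round $\xi \in \Qu$ to the nearest integer $q$; then $|\xi - q| \leq 1/2 < 1$. For $\Zi = \Z[i]$, write $\xi = x + yi$ with $x, y \in \Qu$, round each coordinate to get $q = m + ni \in \Zi$; then $|\Norm_{\Zi}(\xi - q)| = (x-m)^2 + (y-n)^2 \leq 1/4 + 1/4 = 1/2 < 1$. For $\Zrt = \Z[\sqrt 2]$, write $\xi = x + y\sqrt 2$ with $x, y \in \Qu$, round to $q = m + n\sqrt 2$; then $|\Norm_{\Zrt}(\xi - q)| = |(x-m)^2 - 2(y-n)^2| \leq \max\{(x-m)^2,\ 2(y-n)^2\} \leq \max\{1/4,\ 1/2\} = 1/2 < 1$. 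In each case one chooses $q \in R$ and defines $r = a - bq \in R$; the norm computation above, combined with multiplicativity, finishes it.

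The remaining case, $\Zomega$, is the main obstacle, since $\Norm_{\Zomega}$ is a degree-8 (as a polynomial in the four integer coordinates, effectively degree-2 in two $\Zrt$-coordinates) expression and the naive coordinate-rounding bound is less transparent. Here I would use Proposition~\ref{prop:zomega}: every $\xi \in \Qu(\omega)$ can be written as $\xi = \xi_0 + \xi_1 i$ with $\xi_0, \xi_1 \in \Qu(\sqrt 2)$. Using the $\Zrt$-rounding just established, pick $a_0, a_1 \in \Zrt$ with $|\Norm_{\Zrt}(\xi_j - a_j)| \leq 1/2$; better, track the ``quadrant'' information so that both $|\xi_j - a_j|$ and $|(\xi_j - a_j)\bul|$ are controlled. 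Setting $q = a_0 + a_1 i \in \Zomega$ and $\eta = \xi - q = \eta_0 + \eta_1 i$, the identity $\Norm_{\Zomega}(\eta) = \Norm_{\Zi}(\eta)\bul \cdot \Norm_{\Zi}(\eta) = (\eta_0^2 + \eta_1^2)\bul (\eta_0^2 + \eta_1^2)$, where $\eta_0^2 + \eta_1^2 \in \Qu(\sqrt 2)$, lets one bound $|\Norm_{\Zomega}(\eta)|$ as a product of an expression in $\eta_0, \eta_1$ and its $\bul$-conjugate. The delicate point is ensuring this product stays strictly below $1$: a crude bound using only $|\eta_j|^2 \leq$ (something) on the unconjugated factor and $|\eta_j\bul|^2 \leq$ (something) on the conjugated factor may not suffice on its own, so one must either optimize the choice of $a_j$ over the finitely many nearby lattice points of $\Zrt$, or invoke the known fact that $\Zomega = \O_{\Qu(\omega)}$ is norm-Euclidean (this is the cyclotomic field of conductor $8$, a classical norm-Euclidean field). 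I would present the coordinate argument and, if the constant does not come out below $1$ on the nose, note that a slightly more careful case analysis on the signs in the $\Zrt$-rounding (choosing $m, n$ to make $(x-m)^2$ and $2(y-n)^2$ as balanced as possible) brings it under control; alternatively one cites the classical norm-Euclideanity of $\Qu(\omega)$ from the references \cite{Wla90}, \cite{Cohen2000}.
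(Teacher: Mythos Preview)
The paper does not actually prove this proposition: it is stated as standard background in the introductory chapter on algebraic number theory, with proofs deferred to the references \cite{Wla90} and \cite{Cohen2000}. Your proposal lays out exactly the classical rounding argument that those references contain, and it is correct for $\Z$, $\Zi$, and $\Zrt$ as written.

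For $\Zomega$, your sketch via the decomposition $\xi = \xi_0 + \xi_1 i$ with $\xi_j \in \Qu(\sqrt 2)$ and $\Zrt$-rounding is workable but, as you note, the naive bound does not drop below $1$ without extra case analysis. A cleaner route is to round directly in the $\Z$-basis $\{1,\omega,\omega^2,\omega^3\}$: write $\xi = \sum_j x_j \omega^j$ with $x_j \in \Qu$, take $q = \sum_j m_j \omega^j$ with $|x_j - m_j| \leq 1/2$, and apply the explicit norm formula from Definition~\ref{def:norms},
\[
\Norm_{\Zomega}(\eta) = (e_0^2+e_1^2+e_2^2+e_3^2)^2 - 2(e_3e_2+e_2e_1+e_1e_0-e_3e_0)^2,
\]
where $e_j = x_j - m_j$. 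Here the first square is at most $1$, and a short check over the sign patterns shows the whole expression has absolute value strictly below $1$ (in fact at most $1/2$ when all $|e_j| = 1/2$). This avoids the balancing act you describe. Either way, citing the norm-Euclideanity of $\Qu(\zeta_8)$ is also a perfectly acceptable resolution, and is in the spirit of what the paper does.
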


The notion of divisibility, as well as many essential properties of
the divisibility of integers can be defined in an arbitrary Euclidean
domain. In particular, we write $x\divides y$ if $x$ is a divisor of
$y$, and $x\sim y$ if $x\divides y$ and $y\divides x$. An element $x$
is {\em prime} if $x$ is not a unit and $x=ab$ implies that either $a$
or $b$ is a unit. The notion of greatest common divisor, as well as
Euclid's algorithm, can be defined in any Euclidean domain. Finally,
every Euclidean domain is also a \emph{unique factorization domain},
which means that every non-zero non-unit element of the ring can be
factored into primes in an essentially unique way.

% --------------------------------------------------------------------
\subsection{Relative norm equations}
\label{ssect:norm-eqs}

In chapters~\ref{chap:synth-V} and \ref{chap:synth-T}, we will be
interested in solving certain equations known as \emph{relative norm
  equations}. Specifically, we will be concerned with the following
two problems.

\begin{problem}[Relative norm equation over $\Zi$]
  \label{pb:diophantine-Zi}
  Given $\beta\in\Z$, find $\alpha\in\Zi$ such that $\alpha\da\alpha =
  \beta$.
\end{problem}

\begin{problem}[Relative norm equation over $\Zomega$]
  \label{pb:diophantine-Zomega}
  Given $\beta\in\Zrt$, find $\alpha\in\Zomega$ such that
  $\alpha\da\alpha = \beta$.
\end{problem}

Solving Problem~\ref{pb:diophantine-Zi} amounts to finding
$\alpha\in\Zi$ such that $\Norm_{\Zi}(\alpha)=\beta$. The equation to
be solved is therefore a norm equation. In
Problem~\ref{pb:diophantine-Zomega}, however, $\alpha\da\alpha \neq
\Norm_{\Zomega}(\alpha)$. In this case, we do not consider all the
automorphic images of $\alpha$. Instead, we consider the automorphic
images of $\alpha$ under the automorphisms of $\Zomega$ that fix
$\Zrt$. For this reason the equation in
Problem~\ref{pb:diophantine-Zomega} is a \emph{relative} norm
equation. For uniformity, we refer to both equations as relative norm
equations.

A \emph{Diophantine equation} is a polynomial equation in integer
variables. By writing $\alpha= a+bi$ with $a,b\in\Z$,
Problem~\ref{pb:diophantine-Zi} becomes equivalent to the Diophantine
equation
\[
a^2 + b^2 = \beta.
\]
Similarly, by writing $\alpha = a+ b\omega +c\omega^2 +d\omega^3$ and
$\beta = a' + b'\sqrt{2}$ with $a,a',b,b',c,d\in\Z$,
Problem~\ref{pb:diophantine-Zomega} becomes equivalent to the system
of Diophantine equations
\begin{align*}
  a^2 + b^2 +c^2 +d^2 &= a'\\
  ab-ad+cb+cd &= b'.
\end{align*}
In light of these equivalences, we sometimes abuse terminology and
refer to the equations of problems \ref{pb:diophantine-Zi} and
\ref{pb:diophantine-Zomega} as Diophantine equations.

\begin{remark}
  \label{rem:arithm-op}
  Problems \ref{pb:diophantine-Zi} and \ref{pb:diophantine-Zomega} are
  computational problems. This means that a solution to either of
  these problems is an algorithm which decides whether the given
  equation has a solution and produces a solution if one exists. The
  algorithms solving problems \ref{pb:diophantine-Zi} and
  \ref{pb:diophantine-Zomega} that we consider here are
  \emph{probabilistic} in the sense that they make certain choices at
  random. We evaluate the time-complexity of an algorithm by
  estimating the number of \emph{arithmetic operations} it requires to
  solve one the above problems. By arithmetic operations, we mean
  addition, subtraction, multiplication, division, exponentiation, and
  logarithm. When we say that an algorithm runs in probabilistic
  polynomial time, we mean that the algorithm is probabilistic,
  requires a polynomial number of expected arithmetic operations to
  solve the given problem, and produces a correct solution with
  probability greater than $1/2$.
\end{remark}

Since $\alpha\da\alpha\geq 0$, a necessary condition for
Problem~\ref{pb:diophantine-Zi} to have a solution is $\beta\geq
0$. Similarly, necessary conditions for
Problem~\ref{pb:diophantine-Zomega} to have a solution are $\beta\geq
0$ and $\beta\bul\geq 0$. This follows from the fact that
$\alpha\da\alpha=\beta$ implies
$(\alpha\bul)\da(\alpha\bul)=\beta\bul$. There are also sufficient
conditions for the above relative norm equations to have solutions.

\begin{proposition}
  \label{prop:prime-1mod4-Zi}
  Let $\beta\in\Z$ be such that $\beta\geq 0$. If $\beta$ is prime and
  $\beta\equiv 1\mmod{4}$, then the equation $\alpha\da \alpha=\beta$
  has a solution.
\end{proposition}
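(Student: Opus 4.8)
The statement is the classical fact that a prime $\beta \equiv 1 \pmod 4$ is a sum of two squares, recast in the language of $\Zi$. The plan is to exploit the structure of $\Zi$ as a Euclidean domain, hence a unique factorization domain, which was established in Proposition~\ref{prop:ext-Euc-dom}. The key observation is that $\beta$, although prime in $\Z$, fails to be prime in $\Zi$: it factors there, and any factorization of $\beta$ into two non-unit conjugate factors $\alpha\da\alpha$ immediately yields the desired solution.

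\smallskip

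First I would show that $\beta$ is \emph{not} prime in $\Zi$. Since $\beta \equiv 1 \pmod 4$, the multiplicative group $(\Z/\beta\Z)^\times$ has order $\beta - 1$ divisible by $4$, so there exists an integer $x$ with $x^2 \equiv -1 \pmod\beta$; equivalently $\beta \divides x^2 + 1 = (x+i)(x-i)$ in $\Zi$. If $\beta$ were prime in $\Zi$, it would divide one of the factors $x \pm i$, which is impossible since $(x \pm i)/\beta = (x/\beta) \pm (1/\beta)i \notin \Zi$ (the imaginary part is not an integer). Hence $\beta$ is not prime in $\Zi$.

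\smallskip

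Next, since $\Zi$ is a unique factorization domain and $\beta$ is not a unit (its norm is $\beta^2 \neq 1$) and not prime, write $\beta = \alpha\gamma$ with $\alpha, \gamma$ both non-units. Applying the multiplicative norm $\Norm_{\Zi}$ from Definition~\ref{def:norms}, we get $\beta^2 = \Norm_{\Zi}(\alpha)\Norm_{\Zi}(\gamma)$ in $\Z$, with both factors being positive integers $> 1$ (a non-unit has norm $> 1$ since norm $1$ would force invertibility). Because $\beta$ is prime in $\Z$, the only such factorization is $\Norm_{\Zi}(\alpha) = \Norm_{\Zi}(\gamma) = \beta$. Thus $\alpha\da\alpha = \Norm_{\Zi}(\alpha) = \beta$, as required.

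\smallskip

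The main obstacle is the very first step: producing the square root of $-1$ modulo $\beta$. This is the only place where the hypothesis $\beta \equiv 1 \pmod 4$ is genuinely used, and it rests on a small piece of elementary number theory (either a counting argument on solutions of $t^2 = 1$ in the field $\Z/\beta\Z$, Wilson's theorem, or the cyclicity of $(\Z/\beta\Z)^\times$). Everything after that is a routine application of unique factorization and multiplicativity of the norm. One should also double-check the easy facts that a non-unit of $\Zi$ has norm strictly greater than $1$ and that $\beta$ itself is a non-unit, both immediate from the final remarks following Definition~\ref{def:norms}.
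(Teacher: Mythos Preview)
Your proof is correct and is the standard argument via unique factorization in $\Zi$. Note, however, that the paper does not actually prove this proposition: it appears in the background chapter on algebraic number theory, where (as announced in the outline) most proofs are omitted and the reader is referred to the literature. So there is no ``paper's own proof'' to compare against; your argument simply fills in what the paper leaves as a standard fact.
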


\begin{proposition}
  \label{prop:prime-1mod8-Zomega}
  Let $\beta \in \Zrt$ be such that $\beta\geq 0$ and $\beta\bul \geq
  0$ and let $n=\beta\bul\beta\in\Z$. If $n$ is prime and $n\equiv
  1\mmod{8}$, then the equation $\alpha\da \alpha=\beta$ has a
  solution.
\end{proposition}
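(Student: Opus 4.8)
The plan is to pass to the ring $\Zomega$, use unique factorization there, and then descend the resulting factorization of $\beta$ back to $\Zrt$, disposing of a stray unit with the help of the two positivity hypotheses. \emph{First} I would show that the rational prime $n$ splits completely in $\Zomega=\O_{\Qu(\omega)}$. Since $n$ is odd it is unramified, and since $n\equiv 1\mmod{8}$ both $2$ and $-2$ are squares modulo $n$; fixing $b\in\Z$ with $b^{2}\equiv 2\mmod{n}$, the minimal polynomial $x^{4}+1$ of $\omega$ factors modulo $n$ as $(x^{2}+bx+1)(x^{2}-bx+1)$, and each quadratic factor splits into two distinct linear factors because its discriminant $-2$ is a square modulo $n$. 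By Dedekind's factorization theorem (applicable because $\Zomega$ is the full ring of integers of $\Qu(\omega)$) this gives $n\Zomega=\mathfrak p_{1}\mathfrak p_{2}\mathfrak p_{3}\mathfrak p_{4}$, a product of four distinct primes each with residue field $\Z/n$. The order-four Galois group of $\Qu(\omega)/\Qu$ permutes these primes transitively, hence simply transitively; in particular the nontrivial automorphism $(-)\da$ fixes none of them and breaks $\{\mathfrak p_{1},\dots,\mathfrak p_{4}\}$ into two orbits of size two.

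\emph{Next} I would locate $(\beta)$. Since $n=\beta\bul\beta$, certainly $\beta\divides n$ in $\Zomega$, so $(\beta)\Zomega\divides n\Zomega$; and $\Norm_{\Zomega}(\beta)=(\beta\bul\beta)^{2}=n^{2}$, so $(\beta)\Zomega$ is the product of exactly two of the $\mathfrak p_{i}$. It is stable under $(-)\da$ because $\beta\da=\beta$, and since $(-)\da$ fixes no $\mathfrak p_{i}$ it must be a whole $(-)\da$-orbit, say $(\beta)\Zomega=\mathfrak p\,\mathfrak p\da$ with $\mathfrak p\neq\mathfrak p\da$. By Proposition~\ref{prop:ext-Euc-dom}, $\Zomega$ is Euclidean and hence a principal ideal domain, so $\mathfrak p=(\alpha)$ for some nonzero $\alpha\in\Zomega$, and then $(\beta)\Zomega=(\alpha)(\alpha\da)=(\alpha\da\alpha)$.

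\emph{Finally} I would clear the unit. From $(\beta)\Zomega=(\alpha\da\alpha)$ we get $\beta=u\,\alpha\da\alpha$ with $u\in\Zomega^{\times}$. Both $\beta$ and $\alpha\da\alpha$ are fixed by $(-)\da$, hence lie in $\Zrt$; so $u=\beta/(\alpha\da\alpha)$ lies in $\Qu(\sqrt2)$, and being an algebraic integer it lies in $\O_{\Qu(\sqrt2)}=\Zrt$; the same reasoning for $u\inv$ shows $u\in\Zrt^{\times}=\{\pm\lambda^{k}\mid k\in\Z\}$. Now $\beta>0$ and $\beta\bul>0$ by hypothesis, while $\alpha\da\alpha>0$ and $(\alpha\da\alpha)\bul=(\alpha\bul)\da\alpha\bul>0$ since $\alpha\neq 0$; therefore $u>0$ and $u\bul>0$. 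As $\lambda=1+\sqrt2>0>1-\sqrt2=\lambda\bul$, the only units $\pm\lambda^{k}$ that are positive together with their $(-)\bul$-image are the even powers $\lambda^{2m}$, so $u=\lambda^{2m}$ for some $m\in\Z$. Since $\lambda^{m}\in\Zrt$ is fixed by $(-)\da$, the element $\alpha_{0}=\lambda^{m}\alpha\in\Zomega$ satisfies $\alpha_{0}\da\alpha_{0}=\lambda^{2m}\,\alpha\da\alpha=u\,\alpha\da\alpha=\beta$, as required.

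The step I expect to be the real obstacle is the first one: establishing that $n$ splits completely in $\Zomega$ (equivalently, that $\Zomega$ contains an element of norm $n$) is what unlocks the factorization, whereas the other two steps are bookkeeping with the Galois action and with $\Zrt^{\times}$. If one prefers to avoid the cyclotomic computation, an alternative is to apply Proposition~\ref{prop:prime-1mod4-Zi} to obtain $\gamma\in\Zi$ with $\gamma\da\gamma=n$ and then to split the prime $\gamma$ in the quadratic extension $\Qu(\omega)/\Qu(i)$ — possible precisely because $2$ is a square modulo $n$ — which again produces a norm-$n$ element of $\Zomega$, after which the last two steps run unchanged.
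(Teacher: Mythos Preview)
Your proof is correct. The paper itself does not supply a proof of this proposition: it appears in the background chapter on algebraic number theory, where the author explicitly notes that ``most proofs are omitted'' and simply states the result as a known fact, alongside the companion Proposition~\ref{prop:prime-1mod4-Zi} and the algorithmic Propositions~\ref{prop:diophantineZi} and~\ref{prop:diophantineZomega}. So there is no ``paper's own proof'' to compare against.

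For what it is worth, your argument is the standard one and is cleanly executed: the complete splitting of $n$ in $\Zomega$ via the factorization of $x^{4}+1$ modulo $n$, the identification of $(\beta)$ as a $(-)\da$-orbit using simple transitivity of the Galois action, and the unit adjustment using $\Zrt^{\times}=\{\pm\lambda^{k}\}$ together with the two positivity constraints are all correct. The only cosmetic point is that the hypotheses give $\beta\geq 0$ and $\beta\bul\geq 0$ rather than strict inequalities, but primality of $n=\beta\bul\beta$ immediately forces $\beta\neq 0$ and $\beta\bul\neq 0$, so nothing is lost.
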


We close this chapter by stating that there are solutions to problems
\ref{pb:diophantine-Zi} and \ref{pb:diophantine-Zomega}.

\begin{proposition}
  \label{prop:diophantineZi}
  Let $\beta\in\Z$. Given the prime factorization of $\beta$, there
  exists an algorithm that determines, in probabilistic polynomial
  time, whether the equation $\alpha\da \alpha = \beta$ has a solution
  $\alpha\in\Zi$ or not, and finds a solution if there is one.
\end{proposition}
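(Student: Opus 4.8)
The plan is to exploit unique factorization in $\Zi$ (Proposition~\ref{prop:ext-Euc-dom}) together with the fact that the map $\alpha\mapsto\alpha\da\alpha=\Norm_{\Zi}(\alpha)$ is multiplicative and non-negative. Write the given factorization of $\beta$ as $\beta = 2^{k_0}\prod_{p}p^{k_p}\prod_{q}q^{k_q}$, where $p$ ranges over the prime divisors of $\beta$ with $p\equiv 1\mmod{4}$ and $q$ over those with $q\equiv 3\mmod{4}$. Since $\alpha\mapsto\alpha\da\alpha$ is multiplicative, it suffices to solve $\alpha\da\alpha=\ell$ for each prime power $\ell$ appearing in this factorization and then multiply the solutions; the resulting $\alpha$ will satisfy $\alpha\da\alpha=\beta$ exactly. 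First I would dispose of the trivial obstructions: since $\alpha\da\alpha\geq 0$, there is no solution when $\beta<0$, and $\alpha=0$ settles $\beta=0$.

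For the prime powers there are three cases. The factor $2^{k_0}$ is handled by $(1+i)\da(1+i)=2$, so $(1+i)^{k_0}$ has the right norm. For a prime $q\equiv 3\mmod{4}$, $q$ remains prime in $\Zi$; factoring an arbitrary $\alpha$ into Gaussian primes and using $q\da=q$ shows that the exponent of $q$ in $\alpha\da\alpha$ is always even, so $\alpha\da\alpha=q^{k_q}$ is solvable if and only if $k_q$ is even, in which case $q^{k_q/2}$ works. For a prime $p\equiv 1\mmod{4}$, Proposition~\ref{prop:prime-1mod4-Zi} guarantees a solution $\mu_p$ of $\mu_p\da\mu_p=p$; to \emph{compute} it, I would first find, by the standard probabilistic method, an integer $s$ with $s^2\equiv -1\mmod{p}$ (since $4\mid p-1$, one may take $s=a^{(p-1)/4}\bmod p$ for a random $a$; the result squares to $-1$ precisely when $a$ is a quadratic non-residue, which occurs with probability $1/2$ and is cheap to verify), and then set $\mu_p=\gcd(p,s+i)$, computed with the Euclidean algorithm in $\Zi$. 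Because $p\mid (s+i)(s-i)$ but $p$ divides neither factor in $\Zi$, this gcd is a proper non-unit divisor of $p$, so $\Norm_{\Zi}(\mu_p)$ divides $\Norm_{\Zi}(p)=p^2$ and is neither $1$ nor $p^2$; hence $\mu_p\da\mu_p=\Norm_{\Zi}(\mu_p)=p$.

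Assembling, the algorithm reports ``no solution'' when $\beta<0$ or when some $q\equiv 3\mmod{4}$ occurs with odd exponent $k_q$, and otherwise outputs
\[
\alpha=(1+i)^{k_0}\cdot\prod_{p}\mu_p^{k_p}\cdot\prod_{q}q^{k_q/2},
\]
which satisfies $\alpha\da\alpha=\beta$ by multiplicativity. Soundness in the negative direction is precisely the two observations already used (non-negativity of $\alpha\da\alpha$, and the evenness of the exponent of any $q\equiv 3\mmod{4}$ in $\alpha\da\alpha$), so the procedure never produces a false negative.

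It remains to bound the running time. The number of distinct prime divisors of $\beta$ and each exponent are $O(\log\beta)$; modular exponentiation modulo $p$, the Euclidean algorithm in $\Zi$, and the repeated squarings needed to form the product all take a number of arithmetic operations polynomial in $\log\beta$; and the only randomness lies in the extraction of the square roots of $-1$, each attempt succeeding independently with probability $1/2$, so a modest number of repetitions per prime pushes the overall success probability above $1/2$, as required by Remark~\ref{rem:arithm-op}. The main obstacle I anticipate is not any individual step but the verification work: confirming that the Euclidean gcd in $\Zi$ genuinely has norm exactly $p$ (the associate bookkeeping), and carefully tracking the runtime and the compounded success probability over all the $p\equiv 1\mmod{4}$ factors.
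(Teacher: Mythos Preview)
The paper does not actually prove this proposition: it is stated without proof at the end of Chapter~\ref{chap:nb-th} as a known result from algebraic number theory (the chapter's introduction explicitly says that in chapters~\ref{chap:quantum}--\ref{chap:lambda} ``most proofs are omitted'', with references to \cite{Wla90} and \cite{Cohen2000}). So there is no ``paper's own proof'' to compare against.

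Your argument is correct and is essentially the standard algorithm. The decomposition by prime powers via multiplicativity of $\Norm_{\Zi}$, the treatment of $2$ via $1+i$, the inertness of primes $q\equiv 3\mmod 4$ forcing even exponents, and the splitting of primes $p\equiv 1\mmod 4$ via a square root of $-1$ modulo $p$ followed by a Euclidean $\gcd$ in $\Zi$ are exactly the classical ingredients. The probabilistic step (random $a\mapsto a^{(p-1)/4}$) and the runtime bookkeeping are also fine; note that since each trial is verifiable, you can simply iterate until success and obtain a Las Vegas algorithm with expected polynomial running time, which matches the convention in Remark~\ref{rem:arithm-op}. The ``associate bookkeeping'' you worry about is harmless: any unit multiple of $\mu_p$ still has norm $p$, so no normalization is needed.
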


\begin{proposition}
  \label{prop:diophantineZomega}
  Let $\beta\in\Zrt$, and let $n=\beta\bul\beta$. Given the prime
  factorization of $n$, there exists an algorithm that determines, in
  probabilistic polynomial time, whether the equation $\alpha\da
  \alpha = \beta$ has a solution $\alpha\in\Zomega$ or not, and finds
  a solution if there is one.
\end{proposition}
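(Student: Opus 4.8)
The plan is to reduce the relative norm equation over $\Zomega$ to the absolute norm equation over $\Zi$, which is handled by Proposition~\ref{prop:diophantineZi}, together with a factorization step into primes of $\Zrt$. First I would observe that $\alpha\da\alpha = \beta$ with $\alpha\in\Zomega$ forces, by applying the automorphism $(-)\bul$, the companion equation $(\alpha\bul)\da(\alpha\bul) = \beta\bul$; multiplying the two gives $\Norm_{\Zomega}(\alpha) = \beta\bul\beta = n$. Hence a necessary condition is $n \geq 0$ (already $n\in\Z$ and $n\geq 0$ whenever $\beta,\beta\bul\geq 0$), and more importantly the problem over $\Zomega$ is constrained by an ordinary norm equation over $\Z$. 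The strategy is: factor $\beta$ into primes of $\Zrt$ (using the prime factorization of $n=\beta\bul\beta$ in $\Z$, which is given), solve the relative norm equation for each prime factor separately, and multiply the local solutions together, using multiplicativity of $(-)\da$ and of the norm.

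The key steps, in order, are as follows. (1) Reduce to the case where $\beta$ is (up to units) a prime of $\Zrt$: since $\Zrt$ is a unique factorization domain (Proposition~\ref{prop:ext-Euc-dom}) and its primes lie over rational primes, the factorization of $\beta$ in $\Zrt$ is recoverable in probabilistic polynomial time from the factorization of $n$ in $\Z$. Handle the unit factor and the factor $\sqrt{2} = -\delta\da\delta/\,(\text{unit})$ — more precisely $\delta\da\delta \sim \sqrt 2$ in $\Zomega$ — by exhibiting an explicit solution, and likewise handle the trivial cases $\beta=0$ and $\beta$ a unit. (2) For $p\in\Zrt$ prime with $p,p\bul\geq 0$, split on the splitting behavior of the rational prime $p\bul p$: one checks the residue of $p\bul p$ modulo $8$. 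If $p\bul p \equiv 1\mmod 8$, Proposition~\ref{prop:prime-1mod8-Zomega} guarantees a solution, and one produces it by the Fact~\ref{prop:diophantineZi} algorithm applied to an appropriate element of $\Z$ (obtained by the split of $p$ in $\Zomega$ relative to $\Zi$), then lifting. If $p\bul p\equiv 7\mmod 8$ or the other residues, argue the equation is unsolvable by a congruence obstruction on $\alpha\da\alpha$ in $\Zomega$ modulo a suitable prime. (3) Reassemble: if $\alpha_j\da\alpha_j = p_j$ for each prime $p_j$ in the factorization $\beta = u\prod_j p_j$, and $\mu\da\mu = u$ for the unit, then $\alpha = \mu\prod_j \alpha_j$ satisfies $\alpha\da\alpha = \beta$. (4) Bound the number of arithmetic operations at each stage by a polynomial and note the only randomness enters through the call to the algorithm of Proposition~\ref{prop:diophantineZi}, so the overall algorithm is probabilistic polynomial time with success probability $> 1/2$ (amplifying by repetition if needed).

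The main obstacle I expect is step (2): correctly characterizing, for a prime $p$ of $\Zrt$, exactly when $\alpha\da\alpha = p$ is solvable in $\Zomega$, and turning the ``yes'' case into an actual computation. Proposition~\ref{prop:prime-1mod8-Zomega} gives only a sufficient condition, so the converse direction — ruling out solutions when $p\bul p$ is inert or ramified in the relevant extension, and dealing with the prime above $2$ — needs a genuine argument about how primes of $\Zrt$ factor in $\Zomega$ and which of the resulting primes are fixed by $(-)\bul$. Concretely, one must understand the tower $\Zrt \subseteq \Zi[\sqrt2] = \Zomega$ (note $\Zomega = \Zi[\sqrt 2]$ since $\sqrt 2 = \omega-\omega^3$ and $i=\omega^2$) well enough to reduce solving $\alpha\da\alpha = p$ to solving an absolute norm equation $x\da x = m$ over $\Zi$ for a computable $m\in\Z$ derived from $p$; once that reduction is in hand, Proposition~\ref{prop:diophantineZi} closes the gap. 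The bookkeeping of units (the units of $\Zrt$ form $\langle -1, \lambda\rangle$ with $\lambda=1+\sqrt2$, and one must check $\lambda = \delta\da\delta$ or similar to get a $(-)\da$-square up to a unit of $\Zi$) is routine but must be done carefully to keep the final $\alpha$ in $\Zomega$ rather than in a larger ring.
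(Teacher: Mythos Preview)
The paper does not actually prove this proposition: it appears in the background Chapter~\ref{chap:nb-th}, where the author explicitly says ``most proofs are omitted'' and refers to standard references in algebraic number theory. So there is no in-paper argument to compare against; your proposal is being measured against the standard approach rather than anything bespoke.

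Your outline is essentially the standard method and is sound in its architecture: factor $\beta$ in $\Zrt$ using the given factorization of $n=\beta\bul\beta$, solve $\alpha\da\alpha=\pi$ for each prime power $\pi$ of $\Zrt$, and multiply. You correctly identify step~(2) as the crux, and your plan to classify primes of $\Zrt$ by the residue of their absolute norm modulo~$8$ is the right one. Two points deserve care. First, your parenthetical claim that $\Zomega=\Zi[\sqrt2]$ is false: $\omega=(1+i)/\sqrt2\notin\Zi[\sqrt2]$, and in fact $\Zi[\sqrt2]$ has index~$2$ in $\Zomega$. The correct description is $\Zomega=\Zi[\omega]$ with $\omega^2=i$, or equivalently $\Zomega=\Zrt[\omega]$; either way the reduction to a quadratic extension still works, but the arithmetic must be done in the right ring. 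Second, your treatment of units needs tightening: $\lambda=1+\sqrt2$ is \emph{not} a relative norm (since $\lambda\bul<0$), and $\delta\da\delta=2+\sqrt2=\sqrt2\,\lambda$, not $\lambda$. The saving observation is that if $\beta\geq0$ and $\beta\bul\geq0$ then the unit part of $\beta$ must be $\lambda^{2m}=(\lambda^m)\da(\lambda^m)$, which \emph{is} a norm; you should make this explicit rather than leave it as ``routine bookkeeping.''
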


% ---------------------------------------------------------------------
\chapter{The lambda calculus}
\label{chap:lambda}

In this chapter, we introduce the untyped lambda calculus as well as
various typed lambda calculi, including the quantum lambda
calculus. The standard reference for the untyped lambda calculus is
\cite{Baren}. For typed lambda calculi, see
\cite{proofs-and-types}. For the quantum lambda calculus, see
\cite{valiron04}, \cite{SeVa06}, or \cite{SeVa09}.

% ====================================================================
\section{The untyped lambda calculus}
\label{sect:lambda}

% --------------------------------------------------------------------
\subsection{Concrete terms}
\label{ssect:c-terms}

We start by defining the \emph{syntax} of the untyped lambda calculus.

\begin{definition}
  \label{def:l-conc-terms}
  The \emph{concrete terms} of the untyped lambda calculus are defined
  by
  \[
  a, b \quad \bnf \quad x \bor (\lambda x. a) \bor (a b)
  \]
  where $x$ comes from a countable set $\vset$ of \emph{variables}.
\end{definition}

In Definition~\ref{def:l-conc-terms}, concrete terms are given in the
so-called \emph{Backus-Naur Form}. This notation should be interpreted
as defining the collection $L$\label{defconcterms} of all concrete
terms as the smallest set of words on the alphabet $\vset \cup
\s{(,),\lambda,.}$ such that
\begin{itemize}
\item $\vset \seq L$,
\item if $a\in L$ and $x\in \vset$, then $(\lambda x.a) \in L$, and
\item if $a,b \in L$, then $(ab) \in L$.
\end{itemize}

A concrete term of the form $(\lambda x.a)$ is called a \emph{lambda
  abstraction} and we say that $a$ is the \emph{body} of the
abstraction. A concrete term of the form $(ab)$ is called an
\emph{application}. The operations of lambda abstraction and
application are called \emph{term forming operations}.

To increase the readability of concrete terms, we adopt the following
notational conventions
\begin{itemize}
\item outermost parentheses are omitted,
\item applications associate to the left,
\item the body of a lambda abstraction extends as far to the right as
  possible, and
\item multiple lambda abstractions are contracted.
\end{itemize}
This implies, for example, that the term $(\lambda x. (\lambda
y. ((xy)x)))$ will be written $\lambda xy. xyx$.

The intended interpretation of concrete terms is as follows. The
lambda abstraction $\lambda x.a$ represents the function defined by
the rule $x \mapsto a$. The application $ab$ represents the
application of the function $a$ to the argument $b$.

As a first example, consider the identity function. Since it acts as
$x \mapsto x$, its representation as a concrete term should be
$\lambda x.x$. The application of the identity function to some input
$a$ is written as $(\lambda x.x)a$. As a second example, consider the
function that acts as $x \mapsto (y \mapsto x)$. This function inputs
$x$ and outputs the constant function to $x$. Its representation as a
concrete term is $\lambda xy.x$.

% --------------------------------------------------------------------
\subsection{Reduction}
\label{ssect:red}

We now define the \emph{operational semantics} of the terms of the
untyped lambda calculus. That is, we attribute meaning to the terms by
specifying their behavior.

For the concrete term $\lambda x.x$ to be an acceptable representative
for the identity function, it should ``behave'' accordingly, i.e., the
concrete term $(\lambda x.x)a$ should reduce to $a$. One way to
achieve this is to define the reduction relation by
\begin{equation}
  \label{eq:naive-subst}
  (\lambda x.b)a \to b[a/x],
\end{equation}
where $b[a/x]$ stands for the substitution of $a$ for every occurrence
of $x$ in $b$. Under this definition of the reduction relation we have
$(\lambda x.x)a \to a$, as intended.

Now consider the concrete term $\lambda xy.x$. Under the
interpretation of terms given above, it represents the function $x
\mapsto (y \mapsto x)$. If we apply this concrete term to $x'y'$, then
(\ref{eq:naive-subst}) yields the expected result
\[
(\lambda xy.x)(x'y') \to (\lambda y.x)[(x'y')/x]=\lambda y.x'y'.
\]
However, if we apply $\lambda xy.x$ to the variable $y$ and reduce
according to (\ref{eq:naive-subst}) again, we get
\[
(\lambda xy.x)y \to (\lambda y.x)[y/x]=\lambda y.y.
\]
This is unsatisfactory because the concrete term $\lambda y.y$
represents the identity function, not the constant function to
$y$. The way around this problem is to start by renaming $\lambda
xy.x$ to, say, $\lambda xz.x$. Under this renaming, the reduction of
(\ref{eq:naive-subst}) would produce the concrete term $\lambda z.y$,
representing a constant function to $y$. We therefore need to define a
substitution method that appropriately renames variables.

\begin{definition}
  \label{def:renaming}
  Let $x$ and $y$ be two variables and $a$ be a concrete term. The
  \emph{renaming of $x$ by $y$ in $a$}, written $a\s{y/x}$, is defined
  as
  \begin{itemize}
  \item $x\s{y/x} = y$,
  \item $z\s{y/x} = z$ if $x\neq z$,
  \item $ab\s{y/x} = (a\s{y/x})(b\s{y/x})$,
  \item $\lambda x.a\s{y/x} = \lambda y. (a\s{y/x})$, and
  \item $\lambda z.a\s{y/x} = \lambda z. (a\s{y/x})$ if $x \neq z$.
  \end{itemize}
\end{definition}

\begin{definition}
  \label{def:free-bound-var}
  The set of \emph{free variables} of a concrete term $a$, written
  $\FV(a)$, is defined as
  \begin{itemize}
  \item $\FV(x)=\s{x}$,
  \item $\FV(ab)=\FV(a)\cup \FV(b)$, and
  \item $\FV(\lambda x.a)=\FV(a)\setminus\s{x}$.
  \end{itemize}
  Any variable that appears in $a$ but does not belong to $\FV(a)$ is
  said to be \emph{bound}. For this reason, we call $\lambda$ a
  \emph{binder}.
\end{definition}

\begin{definition}
  \label{def:closed-term}
  If $a$ is a concrete term such that $\FV(a)=\emptyset$, then $a$ is
  said to be \emph{closed}.
\end{definition}

In the terminology of Definition~\ref{def:free-bound-var}, the problem
with the relation defined by (\ref{eq:naive-subst}) is that the
variable $y$ is free on the left of $\to$ but bound on the right. The
variable $y$ is said to have been \emph{captured} in the course of the
reduction. Hence, we introduce a \emph{capture-avoiding} notion of
substitution.

\begin{definition}
  \label{def:substitution}
  Let $x$ be a variable, and $a$ and $b$ be two concrete terms. The
  \emph{substitution of $b$ for $x$ in $a$}, written $a[b/x]$, is
  defined as
  \begin{itemize}
  \item if $a = x$, then $a[b/x] = b$,
  \item if $a = y$ and $y \neq x$, then $a[b/x] = a$,
  \item if $a = cc'$, then $a[b/x] = c[b/x]c'[b/x]$,
  \item if $a = \lambda x.c$, then $a[b/x] = a$, and
  \item if $a = \lambda y.c$, then $a[b/x] = \lambda
    z. (c\s{z/y})[b/x]$ where $z \notin \FV(b) \cup \FV(c)$.
  \end{itemize}
\end{definition}

\begin{remark}
  \label{rem:fresh}
  The last clause of Definition~\ref{def:substitution} is slightly
  ambiguous. Indeed, the variable $z$ is not specified, but only
  required to belong to $\vset \setminus (\FV(b)\cup \FV(c))$. We say
  of such a $z$ that it is \emph{fresh with respect to $b$ and
    $c$}. To be more rigorous, we should require the set $\vset$ to be
  well-ordered and choose $z$ to be the least element of $\vset
  \setminus (\FV(b)\cup \FV(c))$.  Since this ambiguity will be lifted
  below when we move from concrete terms to \emph{abstract} ones, we
  leave the definition unchanged.
\end{remark}

To formally define the reduction relation, we start by identifying the
strings, within a concrete term, that will give rise to a
reduction. As one might expect, a reduction will take place whenever a
function is applied to an argument.

\begin{definition}
  \label{def:l-redex}
  A \emph{redex} is a concrete term of the form $(\lambda x.a)b$. By
  extension, a \emph{redex of a concrete term} $c$ is a redex that
  appears in $c$.
\end{definition}

\begin{definition}
  \label{def:l-beta}
  The \emph{one-step $\beta$-reduction}, written $\to$, is defined on
  concrete terms by the rules
  \[
  \infer[]{(\lambda x.a)b \to a[b/x]}{} \quad \infer[]{ab \to a'b}{ a
    \to a' } \quad \infer[]{ab \to ab'}{b \to b'} \quad
  \infer[.]{\lambda x.a \to \lambda x.a'}{a \to a'}
  \]
  The \emph{$\beta$-reduction}, written $\too$, is the reflexive and
  transitive closure of $\to$.
\end{definition}

\begin{remark}
  \label{rem:ref-trans}
  In Definition~\ref{def:l-beta}, $\too$ is defined as the
  \emph{reflexive and transitive closure} of $\to$. This defines
  $\too$ as the smallest relation containing $\to$ that is reflexive
  and transitive.
\end{remark}

With the reduction of concrete terms now defined, we can confirm that
the concrete terms $\lambda x.x$ and $\lambda xy.x$ behave as expected
since for any concrete term $a$ we have
\[
(\lambda x.x)a \to a \quad \mbox{ and } \quad (\lambda xy.x)a \to
\lambda z.a
\]
where $z\notin \FV(a)\cup \s{x}$.

% --------------------------------------------------------------------
\subsection{Abstract terms}
\label{ssect:alpha-eq}

As noted in Remark~\ref{rem:fresh}, the reduction $(\lambda xy.x)a \to
\lambda z.a$ depends on the choice of an ordering of the set $\vset$,
since $z$ is the least element of $\vset \setminus (\FV(a)\cup
\s{x})$. Thus, two different orderings of $\vset$ will yield two
different concrete terms $\lambda z.a$ and $\lambda z'.a$. This
difference, however, is inessential since both terms define the same
function $L\to L$ because $z,z'\notin\FV(a)$. The same inessential
difference occurs between the concrete terms $\lambda z.z$ and
$\lambda z'.z'$. To remove this distinction, we define an equivalence
relation $=_\alpha$ on $L$ that equates the concrete terms differing
only in the name of their bound variables. The idea behind this
equivalence is that in the concrete term $\lambda x.a$, the
occurrences of the variable $x$ in $a$ are place holders, rather than
variables possessing an intrinsic identity. They can therefore be
renamed without affecting the overall meaning of the concrete term.

\begin{definition}
  \label{def:alpha-eq}
  The \emph{$\alpha$-equivalence}, written $=_\alpha$, is defined on
  concrete terms as the smallest equivalence relation satisfying the
  rules
  \[
  \infer[]{\lambda x.a =_\alpha \lambda y. (a\s{y/x})}{y \notin a}
  \quad \infer[]{ab =_\alpha a'b}{a =_\alpha a'} \quad \infer[]{ab
    =_\alpha ab'}{b =_\alpha b'} \quad \infer[.]{\lambda x.a =_\alpha
    \lambda x.a'}{a =_\alpha a'}
  \]
  where $y\notin a$ means that $y$ does not appear in $a$.
\end{definition}

\begin{definition}
  \label{def:l-abs-terms}
  The \emph{abstract terms} of the untyped lambda calculus are the
  elements of $L/{=_{\alpha}}$. We write $\Lambda$ for the set of all
  abstract terms.
\end{definition}

If two concrete terms $a$ and $b$ are $\alpha$-equivalent, then they
have the same structure (i.e., $a$ is an application if and only if
$b$ is an application, and so on). For this reason, we keep writing
$x$, $ab$, and $\lambda x.a$ for abstract terms even though we are
dealing with equivalence classes of concrete terms.

When defining a function or a relation on abstract terms using the
underlying concrete terms, we should make sure that this function or
relation is well-defined. For example, if we define a function $F$ on
abstract terms in this way, then we should verify that $a =_\alpha b$
implies $F(a)=F(b)$. One can check that the notions of free variable,
capture-avoiding substitution and $\beta$-reduction are well-defined
on abstract terms. In these cases, and in what follows, we generally
overlook this obligation and omit the proofs of well-definedness.

Because from now on we will always be manipulating abstract terms, we
refer to these as \emph{terms} for brevity.

% --------------------------------------------------------------------
\subsection{Properties of the untyped lambda calculus}
\label{ssect:props-untyped}

An important property of the untyped lambda calculus is that it forms
a complete model of computation in the sense of the Church-Turing
thesis.

\begin{proposition}
  \label{prop:l-T-complete}
  The untyped lambda calculus is Turing-complete, i.e., every Turing
  machine can be simulated by a lambda term.
\end{proposition}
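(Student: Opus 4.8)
The plan is to reduce the claim to the classical fact that the Turing-computable functions are exactly the partial recursive ones, by exhibiting directly, for each Turing machine $M$, a closed lambda term that simulates $M$. First I would fix Church-style encodings of the data that will be needed: the numeral $\lceil n\rceil = \lambda f x.\, f^n x$ for $n\in\N$, the booleans $\lambda xy.x$ and $\lambda xy.y$ (with an if-then-else that is just application), pairs $\pair{a}{b} := \lambda f.\, f a b$ together with the evident projections, and finite sequences — used to represent tape contents — built out of pairs and a nil marker. With these in place I would exhibit the usual arithmetic combinators, in particular successor, a test-for-zero, and the (notoriously fiddly) predecessor, as closed terms, and verify by $\beta$-reduction that they act correctly on numerals.

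The one ingredient beyond first-order data manipulation is general recursion, supplied by a fixed-point combinator: I would use Curry's $Y = \lambda f.(\lambda x.\, f(x x))(\lambda x.\, f(x x))$, for which $Y F \too F(Y F)$ for every term $F$. This turns any ``do one more step'' term into a term that iterates it. Confluence of $\to$ (the Church–Rosser theorem) is what guarantees that the resulting terms are well-behaved, and in particular that a term possessing a normal form has a unique one; I would invoke it at the end rather than commit to a reduction strategy.

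Next I would encode a configuration of $M$ as a triple: the current state as a numeral, the portion of the tape to the left of the head as a sequence read outward from the head, and the portion from the head rightward as a sequence, with blank symbols produced on demand when the head moves past the written region. Since $M$ has finitely many states and tape symbols, its transition table is a finite object, and I would translate it into a closed term $\mathrm{step}_M$ performing a finite nested case analysis on the current state and scanned symbol and returning either the successor configuration — writing a symbol, moving the head, updating the state — or a configuration flagged ``halted''. Applying $Y$ to the term ``if halted then read off the output else recurse via $\mathrm{step}_M$'' yields $\mathrm{run}_M$, and $\lambda n.\, \mathrm{extract}(\mathrm{run}_M(\mathrm{init}_M\, n))$ is the desired term: when $M$ halts on input $n$ it $\beta$-reduces to $\lceil f(n)\rceil$, and when $M$ diverges the reduction sequence is infinite, so the simulation is faithful for partial functions as well.

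The main obstacle is not any single deep idea but bookkeeping: getting the predecessor and case-analysis combinators exactly right, handling the boundary of the written tape (padding with blanks) entirely inside a pure term, and, more conceptually, being precise about the sense in which a divergent computation is ``simulated'' — which is where one appeals to confluence rather than to a particular evaluation order. Because the complete verification is lengthy and entirely standard, in a thesis of this scope it is reasonable to state the proposition and refer to \cite{Baren} for the details.
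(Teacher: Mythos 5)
The paper does not prove this proposition: it appears in the background chapter on the lambda calculus, where the thesis explicitly states that most proofs are omitted and the reader is referred to the literature (here \cite{Baren}). So there is no proof in the paper to compare yours against; your sketch is the standard textbook argument, and it is essentially correct. Your closing remark — that in a thesis of this scope one states the result and cites Barendregt — is in fact exactly what the author does.

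Two small technical points are worth tightening if you ever write the argument out. First, Curry's combinator $Y=\lambda f.(\lambda x. f(xx))(\lambda x. f(xx))$ satisfies only $YF \equiv_\beta F(YF)$: the terms $YF$ and $F(YF)$ have a common reduct, but $YF$ does not $\beta$-reduce to $F(YF)$. If you want the reduction $\Theta F \too F(\Theta F)$ literally, use Turing's fixed-point combinator $\Theta = (\lambda xf. f(xxf))(\lambda xf. f(xxf))$. Second, faithfulness in the divergent case needs more than confluence. When $M$ diverges you exhibit an infinite reduction sequence, but that by itself does not show the term has \emph{no} normal form — the paper's own example $(\lambda z.y)(\Omega\Omega)$ is weakly but not strongly normalizing. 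To conclude that the simulating term is genuinely undefined when $M$ diverges one appeals to the standardization theorem (or argues via head reduction), not merely to Church–Rosser. Neither point undermines your approach; both are routine repairs.
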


A term $a$ may have any number of redexes. If $a$ has no redexes, then
the computation of $a$ is finished.

\begin{definition}
  \label{def:l-normal-form}
  A term that does not contain any redexes is \emph{reduced} or
  \emph{in normal form}. If $b$ is reduced and $a\too b$ we say that
  $b$ is a normal form for $a$.
\end{definition}

For example, a variable $x$ is reduced. Similarly, the terms $\lambda
x.x$ and $\lambda xy.x$ are both reduced. The following term, on the
other hand, is not reduced, since it contains two redexes
\[
(\lambda x.x)((\lambda y.z)x').
\]
When reducing such a term, nothing in the definition of the
$\beta$-reduction tells us which redex to reduce first. We say that
the $\beta$-reduction is \emph{non-deterministic}.

\begin{proposition}
  \label{prop:l-confluence}
  The untyped lambda calculus is confluent, i.e., if $a$, $b$, and
  $b'$ are terms such that $a\too b$ and $a\too b'$, then there exists
  a term $c$ such that $b\too c$ and $b'\too c$.
\end{proposition}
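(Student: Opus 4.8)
The plan is to prove the Church–Rosser property by the Tait–Martin-L\"of method of \emph{parallel reduction}. The naive idea of proving a diamond property for the one-step relation $\to$ of Definition~\ref{def:l-beta} and then lifting it along $\too$ fails: a single $\beta$-step can duplicate a redex (as in $(\lambda y.yy)((\lambda x.x)z)$), so closing a span $b\leftarrow a\to b'$ may require several steps on one side, and no naive induction controls their number. The remedy is to interpose a relation, written $\Rightarrow$, that contracts an arbitrary set of (possibly nested) redexes in one step.

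First I would define $\Rightarrow$ on terms inductively by: $x\Rightarrow x$; if $a\Rightarrow a'$ then $\lambda x.a\Rightarrow\lambda x.a'$; if $a\Rightarrow a'$ and $b\Rightarrow b'$ then $ab\Rightarrow a'b'$; and if $a\Rightarrow a'$ and $b\Rightarrow b'$ then $(\lambda x.a)b\Rightarrow a'[b'/x]$. Since we work with abstract terms (Definition~\ref{def:l-abs-terms}) we may always choose bound variables fresh, so well-definedness modulo $=_\alpha$ is routine, as is reflexivity of $\Rightarrow$. I would then check the sandwich ${\to}\,\seq\,{\Rightarrow}\,\seq\,{\too}$, so that the reflexive–transitive closure of $\Rightarrow$ is exactly $\too$; hence it suffices to prove the diamond property for $\Rightarrow$ and then tile.

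Second — and this is the technical heart — I would prove the \emph{substitution lemma} for parallel reduction: if $a\Rightarrow a'$ and $b\Rightarrow b'$ then $a[b/x]\Rightarrow a'[b'/x]$, by induction on the derivation of $a\Rightarrow a'$, using the capture-avoiding substitution of Definition~\ref{def:substitution} and the freshness conventions of Remark~\ref{rem:fresh}; the $\lambda$- and $\beta$-cases are exactly where the $\alpha$-renaming clause must be handled with care. With this in hand I would establish the triangle lemma (Takahashi): for each term $a$ let $a^\ast$ be its \emph{complete development}, defined by $x^\ast=x$, $(\lambda x.a)^\ast=\lambda x.a^\ast$, $(ab)^\ast=a^\ast b^\ast$ when $a$ is not an abstraction, and $((\lambda x.a)b)^\ast=a^\ast[b^\ast/x]$; then one shows by induction that $a\Rightarrow b$ implies $b\Rightarrow a^\ast$. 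The diamond property for $\Rightarrow$ is then immediate: if $a\Rightarrow b$ and $a\Rightarrow b'$ then both $b$ and $b'$ parallel-reduce to the common term $a^\ast$.

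Finally I would lift the diamond property of $\Rightarrow$ to confluence of $\too$ by the standard two-stage tiling argument: a strip lemma (if $a\Rightarrow b$ and $a\too b'$ then some $c$ satisfies $b\too c$ and $b'\Rightarrow c$), proved by induction on the number of parallel steps in $a\too b'$, followed by an outer induction on the number of parallel steps in $a\too b$ to fill the whole square. Since the reflexive–transitive closure of $\Rightarrow$ is $\too$, this yields precisely the statement of the proposition. I expect the main obstacle to be the substitution lemma together with the bookkeeping around $\alpha$-renaming in the binder case; once that is secured, the triangle lemma and the tiling are a mechanical diagram chase.
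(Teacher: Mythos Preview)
Your proposal is correct and follows the standard Tait--Martin-L\"of/Takahashi approach via parallel reduction and complete developments; the argument is well structured and the places you flag as delicate (the substitution lemma for $\Rightarrow$ and the $\alpha$-renaming bookkeeping) are indeed the only real work.

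However, there is nothing to compare against: the paper does not prove this proposition. It is stated as background material in an introductory chapter, with the remark that it ``was first established by Church and Rosser'' and a citation to \cite{Church-Rosser}; the introduction explicitly notes that in these chapters ``most proofs are omitted.'' So your proof is not the paper's proof by a different route---it is simply a proof where the paper gives none. If anything, your method (parallel reduction with the triangle lemma) is the modern textbook approach and is considerably cleaner than the original Church--Rosser argument the paper cites.
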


Proposition~\ref{prop:l-confluence} was first established by Church
and Rosser in \cite{Church-Rosser} and is therefore known as the
\emph{Church-Rosser property}. It is also referred to as the
\emph{confluence} property of the $\beta$-reduction. Confluence
guarantees that, despite the non-determinism of the reduction, normal
forms are unique.

\begin{corollary}
  \label{cor:l-uniqueness-normal-form}
  Let $a$ be a term. If $a$ has a normal form, then it is unique.
\end{corollary}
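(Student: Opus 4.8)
The plan is to derive uniqueness of normal forms directly from the confluence of $\too$ (Proposition~\ref{prop:l-confluence}). Suppose $a$ has two normal forms $b$ and $b'$, so that $a \too b$ and $a \too b'$ with both $b$ and $b'$ reduced. Applying confluence to the pair $a \too b$ and $a \too b'$ yields a term $c$ with $b \too c$ and $b' \too c$. It then suffices to establish the auxiliary claim that a reduction sequence starting from a reduced term is trivial, i.e., if $b$ is in normal form and $b \too c$ then $b = c$; this immediately gives $b = c = b'$.

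For the auxiliary claim, first I would observe that a term in normal form admits no one-step $\beta$-reduction: by Definition~\ref{def:l-redex} and Definition~\ref{def:l-beta}, every rule defining $\to$ either rewrites a redex at the root or propagates a root reduction of an immediate subterm, so a straightforward induction on the derivation of a step $b \to c$ shows that such a step forces $b$ to contain a redex, contradicting that $b$ is reduced. Since $\too$ is the reflexive and transitive closure of $\to$ (Remark~\ref{rem:ref-trans}), any reduction $b \too c$ decomposes as a finite chain $b = b_0 \to b_1 \to \cdots \to b_n = c$; if $n \geq 1$ then the first step $b_0 \to b_1$ is impossible, so $n = 0$ and $b = c$. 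Applying this to both $b \too c$ and $b' \too c$ yields $b = c$ and $b' = c$, hence $b = b'$. Recall that equality of terms here means $\alpha$-equivalence (Definition~\ref{def:l-abs-terms}), so this is precisely uniqueness of the normal form as an abstract term.

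The only mildly delicate point is the auxiliary claim that no one-step reduction issues from a normal form, but this is essentially immediate from the shape of the four inference rules for $\to$, so I expect no real obstacle; everything else is routine bookkeeping about the reflexive and transitive closure.
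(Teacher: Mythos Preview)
Your argument is correct and is exactly the standard derivation of uniqueness of normal forms from confluence. The paper itself does not supply a proof of this corollary (it is stated as background in Chapter~\ref{chap:lambda}, where proofs are explicitly omitted), so there is nothing to compare against; your proof is the expected one.
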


The uniqueness of normal forms implies that the lambda calculus is
\emph{consistent} in the sense that not all terms are equated by the
$\beta$-reduction.

\begin{corollary}
  \label{cor:l-consistency}
  The untyped lambda calculus is consistent, i.e., there exists two
  terms $a$ and $b$ such that $a \not\equiv_\beta b$, where
  $\equiv_\beta$ denotes the reflexive, symmetric, and transitive
  closure of $\to$.
\end{corollary}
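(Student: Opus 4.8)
The plan is to exhibit two terms in normal form that cannot be related by $\equiv_\beta$, using confluence to rule out any zig-zag chain of one-step $\beta$-reductions connecting them. The simplest witnesses are a pair of distinct variables: fix $x,y\in\vset$ with $x\neq y$. Neither $x$ nor $y$ contains a redex, so both are in normal form in the sense of Definition~\ref{def:l-normal-form}. (The terms $\lambda x.x$ and $\lambda xy.x$ would serve equally well.) The crux of the argument is to upgrade the confluence statement of Proposition~\ref{prop:l-confluence}, which is about two forward reductions from a common source, into a statement about the full equivalence relation: if $a\equiv_\beta b$, then there is a term $c$ with $a\too c$ and $b\too c$.

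First I would prove this intermediate claim. Since $\equiv_\beta$ is the reflexive, symmetric, and transitive closure of $\to$, any instance $a\equiv_\beta b$ is witnessed by a finite chain $a=a_0,a_1,\ldots,a_n=b$ in which each consecutive pair is related by $\to$ in one direction or the other. I would argue by induction on $n$. The base case $n=0$ is immediate with $c=a$. For the inductive step, the induction hypothesis yields a common reduct $c'$ of $a_0$ and $a_{n-1}$, and there are two cases for the last link. If $a_n\to a_{n-1}$, then $a_n\too c'$ already and we take $c=c'$. If instead $a_{n-1}\to a_n$, then $a_{n-1}\too c'$ and $a_{n-1}\too a_n$, so Proposition~\ref{prop:l-confluence} provides a term $c$ with $c'\too c$ and $a_n\too c$; transitivity of $\too$ then gives $a_0\too c$, completing the step.

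With this claim established, suppose toward a contradiction that $x\equiv_\beta y$. Then there is a term $c$ with $x\too c$ and $y\too c$. But no rule of Definition~\ref{def:l-beta} has a bare variable on the left-hand side of $\to$, so a variable one-step reduces to nothing, and hence $x\too c$ forces $c$ to be $x$ and $y\too c$ forces $c$ to be $y$. Thus $x=y$, contradicting our choice, so $x\not\equiv_\beta y$ and the lambda calculus is consistent. The only substantive step is the induction lifting confluence to $\equiv_\beta$; everything else is a direct application of confluence together with the observation that variables are normal forms. One minor subtlety to keep in mind is that we are reasoning about abstract terms, i.e.\ elements of $\Lambda=L/{=_\alpha}$, so the equalities "$c=x$" and "$c=y$" above are equalities of $\alpha$-equivalence classes, which causes no difficulty since a variable is already its own unique reduct.
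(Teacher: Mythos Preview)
Your proof is correct and follows the standard Church--Rosser route. The paper does not give an explicit proof of this corollary (it is in a background chapter where proofs are deliberately omitted); the only hint is the sentence preceding the statement, ``The uniqueness of normal forms implies that the lambda calculus is consistent,'' pointing to Corollary~\ref{cor:l-uniqueness-normal-form}. Your argument unfolds exactly what that hint requires: you prove the common-reduct property for $\equiv_\beta$ from confluence and then observe that two distinct normal forms cannot share a reduct, which is precisely the content of uniqueness of normal forms applied across a $\equiv_\beta$-chain. So your approach and the paper's intended one coincide; you have simply made explicit the induction on the zig-zag chain that the paper leaves to the reader.
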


% ====================================================================
\section{The simply typed lambda calculus}
\label{sect:st}

The term $xx$ is a well-formed term of the untyped lambda calculus. If
we think of this term in light of the interpretation discussed in
Section~\ref{sect:lambda} we are led to interpret the variable $x$ as
both function and argument in $xx$. This unusual construction is
admitted in the untyped lambda calculus because there are no notions
of domain and codomain for terms. \emph{Types} can be seen as a method
to endow the lambda calculus with these notions.

% --------------------------------------------------------------------
\subsection{Terms}
\label{ssect:st-terms}

The language of the simply typed lambda calculus is an extension of
the language of the untyped lambda calculus.

\begin{definition}
  \label{def:st-terms}
  The \emph{terms} of the simply typed lambda calculus are defined by
  \[
  a,b \quad \bnf \quad x \bor \lambda x.a \bor ab \bor * \bor \p{a,b}
  \bor \letin{*}{a}{b} \bor \letin{\p{x,y}}{a}{b}
  \]
  where $x$ and $y$ are variables of the untyped lambda calculus.
\end{definition}

Definition~\ref{def:st-terms} extends the language of the untyped
lambda calculus by adding the \emph{constant} $*$ as well as two new
term forming operations. The intended meaning of these new terms is as
follows.
\begin{itemize}
\item $\p{a,b}$ is the pair of $a$ and $b$.
\item $*$ is the empty pair, i.e., the 0-ary version of $\p{a,b}$.
\item $\letin{\p{x,y}}{a}{b}$ is a term that will reduce $a$ and in
  case $a\too \p{b_1, b_2}$ will assign $b_1$ to $x$ and $b_2$ to $y$.
\item $\letin{*}{a}{b}$ is the nullary version of
  $\letin{\p{x,y}}{a}{b}$.
\end{itemize}

We extend the notion of free-variables of a term to account for the
new term forming operations.

\begin{definition}
  \label{def:st-free-var}
  The set of \emph{free variables} of a term $a$ of the simply typed
  lambda calculus, written $\FV(a)$, is defined as
  \begin{itemize}
  \item $\FV(x)=\s{x}$,
  \item $\FV(ab)=\FV(a)\cup \FV(b)$,
  \item $\FV(\lambda x.a)=\FV(a)\setminus\s{x}$,
  \item $\FV(*)=\emptyset$,
  \item $\FV(\p{a,b})=\FV(a)\cup \FV(b)$,
  \item $\FV(\letin{*}{a}{b})=\FV(a)\cup \FV(b)$, and
  \item $\FV(\letin{\p{x,y}}{a}{b})=\FV(a)\cup (\FV(b)\setminus
    \s{x,y})$.
  \end{itemize}
\end{definition}

The notions of $\alpha$-equivalence and capture-avoiding substitution
can be extended to the setting of the simply typed lambda
calculus. Note that in $\letin{\p{x,y}}{a}{b}$ the variables $x$ and
$y$ are bound in $b$ (but not in $a$) so that $\mathtt{let}$ is a
binder. As in the untyped case, we say of a term $a$ such that
$\FV(a)=\emptyset$ that it is closed.

% --------------------------------------------------------------------
\subsection{Operational semantics}
\label{ssect:st-red}

To account for the new term forming operations of our extended
language, we introduce additional reduction rules.

\begin{definition}
  \label{def:st-beta}
  The \emph{one-step $\beta$-reduction}, written $\to$, is defined on
  the terms of the simply typed lambda calculus by the rules of
  Definition~\ref{def:l-beta} as well as those given in
  Figure~\ref{fig:st-beta-red}. The \emph{$\beta$-reduction}, written
  $\too$, is the reflexive and transitive closure of $\to$.
\end{definition}

% ....................................................................
\begin{figure}
  \[
  \infer[]{\letin{*}{*}{a} \to a}{} \quad
  \infer[]{\letin{\p{x,y}}{\p{b,c}}{a} \to a[b/x,c/y]}{}
  \]
  \[
  \infer[]{\p{a,b} \to \p{a,b'}}{ b \to b'} \quad \infer[]{\p{a,b} \to
    \p{a',b}}{a \to a'}
  \]
  \[
  \infer[]{\letin{*}{a}{b} \to \letin{*}{a'}{b}}{a \to a'}
  \]
  \[
  \infer[]{\letin{\p{x,y}}{a}{b} \to \letin{\p{x,y}}{a'}{b}}{ a \to
    a'}
  \]
  \[
  \infer[]{\letin{*}{a}{b} \to \letin{*}{a}{b'}}{b \to b'}
  \]
  \[
  \infer[]{\letin{\p{x,y}}{a}{b} \to \letin{\p{x,y}}{a}{b'}}{ b \to
    b'}
  \]
  \label{fig:st-beta-red}
  \caption[Reduction rules for the simply typed lambda
  calculus.]{Additional reduction rules for the simply typed lambda
    calculus.}
  \rule{\textwidth}{0.1mm}
\end{figure}
% ....................................................................

% --------------------------------------------------------------------
\subsection{Types}
\label{ssect:st-types}

\begin{definition}
  \label{def:st-types}
  The \emph{types} of the simply typed lambda calculus are defined by
  \[
  A,B \quad \bnf \quad X \bor (A \times B) \bor 1 \bor (A \to B)
  \]
  where $X$ comes from a set $\tset$ of \emph{basic types}.
\end{definition}

It can be useful to think of types as sets of terms. Under this
interpretation, we have
\begin{itemize}
\item $(A \times B)$ is the set of pairs,
\item $1$ is the set containing the unique empty tuple, and
\item $(A \to B)$ is the set of functions from $A$ to $B$.
\end{itemize}

We now explain how to use types to restrict the formation of terms.

\begin{definition}
  \label{def:st-typing-context}
  A \emph{typing context} is a finite set $\s{x_1:A_1,\ldots,x_n:A_n}$
  of pairs of a variable and a type, such that no variable occurs more
  than once. The expressions of the form $x:A$ in a typing context are
  called \emph{type declarations}.
\end{definition}

\begin{definition}
  \label{def:st-typing-judgment}
  A \emph{typing judgment} is an expression of the form
  \[
  \Gamma \entails a:A
  \]
  where $\Gamma$ is a typing context, $a$ is a term, and $A$ is a
  type.
\end{definition}

\begin{definition}
  \label{def:st-valid}
  A typing judgment is \emph{valid} if it can be inferred from the
  rules given in Figure~\ref{fig:st-trules}.
\end{definition}

% ....................................................................
\begin{figure}
  \[
  \infer[\rul{ax}]{\Gamma, x:A \entails x:A}{}
  \]
  \[
  \infer[\rul{\lambda}]{\Gamma \entails \lambda x.b:A\to B}{
    \Gamma,x:A \entails b:B} \quad \infer[\rul{app}]{\Gamma \entails
    ca:B}{ \Gamma \entails c:A\to B & \Gamma\entails a:A }
  \]
  \[
  \infer[\rul{*_i}]{\Gamma \entails *: 1}{ } \quad
  \infer[\rul{*_e}]{\Gamma \entails \letin{*}{b}{a}:A}{ \Gamma
    \entails b:1 & \Gamma \entails a:A }
  \]
  \[
  \infer[\rul{\times_i}]{\Gamma \entails \p{a,b} : A \times B}{\Gamma
    \entails a:A & \Gamma \entails b:B} \quad
  \infer[\rul{\times_e}]{\Gamma \entails \letin{\p{x,y}}{b}{a}:A}{
    \Gamma \entails b: (B_1\times B_2) & \Gamma, x:B_1, y:B_2 \entails
    a:A }
  \]
  \label{fig:st-trules}
  \caption{Typing rules for the simply typed lambda calculus.}
  \rule{\textwidth}{0.1mm}
\end{figure}
% ....................................................................

If $a$ is a term, one shows that $\Gamma \entails a:A$ is valid by
exhibiting a \emph{typing derivation}. If such a derivation exists, we
say that $a$ is \emph{well-typed of type $A$}, or sometimes simply
\emph{well-typed}. For example, below are two typing derivations,
establishing that both $\lambda x.x$ and $\lambda xy.x$ are
well-typed.
\[
\infer[]{\entails \lambda x.x : X \to X}{ \infer[]{x:X\entails x:X}{}}
\qquad \infer[]{\entails \lambda xy.x : X \to (Y\to X)}{
  \infer[]{x:X\entails \lambda y.x:Y\to X}{ \infer[]{x:X, y:Y
      \entails x:X}{}}}
\]

The term $xx$, however, is not well-typed. Indeed, suppose a typing
derivation $\pi$ of $\Gamma \entails xx : B$ exists. Then the last
rule of $\pi$ must be the $\rul{app}$ rule, since this rule is the
only one allowing the construction of an application. Moreover, the
only rule permitting the introduction of a variable is the $\rul{ax}$
rule. The typing derivation $\pi$ must therefore be the following.
\[
\infer[.]{\Gamma \entails xx:B}{ \infer[]{\Gamma, x:A\to B \entails
    x:A\to B}{} & \infer[]{\Gamma, x:A\entails x:A}{}}
\]
But there are no types $A$ and $B$ such that $A=A\to B$. Hence there
is no typing derivation of $\Gamma \entails xx : B$.

% --------------------------------------------------------------------
\subsection{Properties of the type system}
\label{ssect:st-strong-norm}

The type system of the simply typed lambda calculus restricts the
construction of terms in order to syntactically rule out
``ill-behaved'' terms. To verify that the type system achieves this
intended goal, we need to prove that all well-typed terms ``behave
well''. This intuitive idea is captured by establishing the \emph{type
  safety} of the language. Following \cite{WrFe94}, we consider that a
language is type safe if it enjoys the \emph{subject reduction} and
\emph{progress} properties. The latter property relies on a notion of
\emph{value}. These values are a particular set of distinguished
normal forms. The definition of value varies from language to
language. For now, we can take the set of values to consist of all
normal forms. Later, when we consider specific languages, this
definition will be adjusted and explicitly stated.
\begin{description}
\item[Subject reduction:] This property guarantees that the type of a
  term is stable under reduction.  As a corollary, it also shows that
  if a term is well-typed, then it never reduces to an ill-typed term.
\item[Progress:] This property shows that a well-typed closed term is
  either a value or admits further reductions.
\end{description}

The simply typed lambda calculus is type safe as it enjoys both of the
above properties.

\begin{proposition}[Subject reduction]
  \label{prop:st-subj-red}
  If $\Gamma \entails a:A$ and $a\to a'$, then $\Gamma \entails a':A$.
\end{proposition}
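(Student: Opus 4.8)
The plan is to prove the statement by induction on the derivation of the reduction $a \to a'$, after first establishing two auxiliary facts: a \emph{substitution lemma} and a family of \emph{generation} (inversion) lemmas for the typing rules of Figure~\ref{fig:st-trules}.

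First I would prove the substitution lemma: if $\Gamma, x:A \entails b:B$ and $\Gamma \entails c:A$, then $\Gamma \entails b[c/x]:B$. This goes by induction on the derivation of $\Gamma, x:A \entails b:B$, with one case per typing rule. In the axiom case one needs the (easy) inversion fact that $\Gamma, x:A \entails x:B$ forces $B=A$. The only delicate cases are the binding rules $\rul{\lambda}$, $\rul{*_e}$, and $\rul{\times_e}$, where the capture-avoiding clause of Definition~\ref{def:substitution} renames the bound variable and the context grows; here one also needs a \emph{weakening} lemma, namely that $\Gamma \entails c:A$ implies $\Gamma, y:C \entails c:A$ whenever $y \notin \dom\Gamma$, which follows by a routine induction on derivations (and is essentially forced by the shape of rule $\rul{ax}$). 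For the $\rul{\times_e}$ redex I would additionally record the two-variable version: if $\Gamma, x:B_1, y:B_2 \entails a:A$, $\Gamma\entails b:B_1$, and $\Gamma\entails c:B_2$, then $\Gamma \entails a[b/x,c/y]:A$; this follows by two applications of the single-variable lemma together with weakening.

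Next I would state the generation lemmas: if $\Gamma \entails \lambda x.b : C$ then $C = B_1\to B_2$ with $\Gamma, x:B_1 \entails b:B_2$; if $\Gamma \entails ca:B$ then $\Gamma\entails c:A\to B$ and $\Gamma\entails a:A$ for some $A$; if $\Gamma \entails \p{a,b}:C$ then $C = B_1\times B_2$ with $\Gamma\entails a:B_1$ and $\Gamma\entails b:B_2$; if $\Gamma\entails \letin{*}{b}{a}:A$ then $\Gamma\entails b:1$ and $\Gamma\entails a:A$; and if $\Gamma\entails \letin{\p{x,y}}{b}{a}:A$ then $\Gamma\entails b:B_1\times B_2$ and $\Gamma,x:B_1,y:B_2\entails a:A$ for some $B_1,B_2$. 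Each holds because, exactly as already observed in the excerpt for application, the outermost term former uniquely determines the last rule of the derivation.

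With these in hand, the main induction on $a \to a'$ treats the rules of Definition~\ref{def:l-beta} and Figure~\ref{fig:st-beta-red} one at a time. The genuine contractions are the base cases: for $(\lambda x.b)c \to b[c/x]$, apply generation for application and then for $\lambda$ to get a type $D$ with $\Gamma,x:D\entails b:B$ and $\Gamma\entails c:D$, then invoke the substitution lemma; for $\letin{*}{*}{a}\to a$, generation yields $\Gamma\entails a:A$ directly; for $\letin{\p{x,y}}{\p{b,c}}{a}\to a[b/x,c/y]$, combine the two relevant generation lemmas with the two-variable substitution lemma. The remaining cases are the congruence rules ($ab\to a'b$, $ab\to ab'$, $\lambda x.a\to\lambda x.a'$, $\p{a,b}\to\p{a',b}$ and its symmetric variant, and the four $\mathtt{let}$ congruences): in each, apply the appropriate generation lemma, feed the reducing subterm to the induction hypothesis, and reassemble the derivation with the same final typing rule and the same context $\Gamma$. (The analogous statement for $\too$ then follows immediately by induction on the length of the reduction sequence, since $\Gamma$ never changes.) The main obstacle I anticipate is the substitution lemma, specifically the binder cases: getting the renaming of Definition~\ref{def:substitution} and the weakening bookkeeping right while staying consistent with the fact that we are really manipulating $\alpha$-equivalence classes. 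Everything else is routine rule-chasing.
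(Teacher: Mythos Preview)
The paper does not actually prove this proposition: it appears in the background chapter on the lambda calculus, where the author explicitly states that ``most proofs are omitted'' and refers the reader to standard references such as \cite{proofs-and-types}. Your proposal is the standard, correct argument---induction on the reduction derivation, with a substitution lemma and inversion (generation) lemmas as the key ingredients---and is exactly what one would expect to find in those references. It is worth noting that the paper \emph{does} prove subject reduction in full for Proto-Quipper (Theorem~\ref{thm-subject-red}), and the architecture there is the same as yours: a substitution lemma (Lemma~\ref{substitution}) established first, then an induction on the reduction derivation with congruence cases handled by the induction hypothesis and contraction cases handled by substitution.
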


\begin{proposition}[Progress]
  \label{prop:st-progress}
  If $\entails a:A$, then either $a$ is a value or there exists $a'$
  such that $a\to a'$.
\end{proposition}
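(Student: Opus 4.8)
The plan is to prove Progress by structural induction on the typing derivation of $\entails a:A$, following the standard recipe for type safety proofs. The statement concerns \emph{closed} terms, so the typing context is empty throughout, which immediately rules out the axiom rule $\rul{ax}$ as the last rule of the derivation (it would require a non-empty context). For each remaining rule, I would either observe that the conclusion term is already a value (when the set of values consists of all normal forms, an abstraction $\lambda x.b$, the constant $*$, and a pair $\p{a,b}$ of values are values) or exhibit a reduction step.

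The key steps, organized by the last rule of the derivation, are as follows. First, for $\rul{\lambda}$ and $\rul{*_i}$: the term is $\lambda x.b$ or $*$, both of which are values, so there is nothing to show. Second, for $\rul{app}$: the term is $ca$ with $\entails c:A\to B$ and $\entails a:A$. By the induction hypothesis applied to $c$, either $c$ reduces, in which case $ca$ reduces by the congruence rule $\infer{ab\to a'b}{a\to a'}$, or $c$ is a value; a value of arrow type that is well-typed in the empty context must be an abstraction $\lambda x.b$ (this requires a \emph{canonical forms} observation), and then $ca = (\lambda x.b)a \to b[a/x]$. Third, for $\rul{*_e}$: the term is $\letin{*}{b}{a}$ with $\entails b:1$; by induction either $b$ reduces (apply the congruence rule for $\mathtt{let}$ in the first argument) or $b$ is a value of type $1$, which must be $*$, so the term reduces via $\letin{*}{*}{a}\to a$. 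Fourth, for $\rul{\times_i}$: the term is $\p{a,b}$; by induction on $a$ and $b$, if either reduces we use the congruence rules for pairs, and otherwise both are values, so $\p{a,b}$ is a value. Fifth, for $\rul{\times_e}$: the term is $\letin{\p{x,y}}{b}{a}$ with $\entails b:B_1\times B_2$; by induction either $b$ reduces (congruence rule) or $b$ is a value of product type, hence of the form $\p{b_1,b_2}$, and the term reduces via $\letin{\p{x,y}}{\p{b_1,b_2}}{a}\to a[b_1/x,b_2/y]$.

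The main obstacle is the \emph{canonical forms} step: I need to know that a closed value of type $A\to B$ is an abstraction, a closed value of type $1$ is $*$, and a closed value of type $B_1\times B_2$ is a pair. This should be established as a separate lemma by inspecting which typing rules can conclude a judgment $\entails v:A$ when $v$ is a value and $A$ has the relevant shape; since the type system is syntax-directed in the appropriate sense, the shape of the type forces the shape of the value. With the present minimal notion of value (all normal forms), one should double-check that normal forms of the relevant types really do have these shapes — e.g., there is no normal closed term of arrow type that is stuck in some other form — but because the typing rules are tight this goes through routinely. Once the canonical forms lemma is in hand, each inductive case is a short argument combining the induction hypothesis with either a congruence rule or a principal $\beta$-reduction rule, and the proof is complete.
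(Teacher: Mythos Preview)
The paper does not give a proof of this proposition: it sits in the background chapter on the lambda calculus, where the author explicitly says that most proofs are omitted and refers the reader to the literature. So there is no ``paper's own proof'' to compare against here.

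Your argument by induction on the typing derivation with a canonical-forms lemma is correct and is the standard textbook proof. However, note that for \emph{this particular} presentation it is more work than necessary. The paper stipulates, just before stating subject reduction and progress, that ``for now, we can take the set of values to consist of all normal forms''. With that choice, progress is a tautology and does not even use the typing hypothesis: by definition a term is a normal form (hence a value) exactly when no reduction rule applies to it, and the congruence rules in Definitions~\ref{def:l-beta} and~\ref{def:st-beta} allow reduction in every subterm position, so any term either is a value or has some $a'$ with $a\to a'$. Your canonical-forms machinery is the right proof once one moves to a genuine call-by-value notion of value (as the paper does later for the quantum lambda calculus and Proto-Quipper), but it is not needed for the statement as formulated here.
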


The simply typed lambda calculus also enjoys a property known as
\emph{strong normalization}.

\begin{definition}
  \label{def:l-normalization}
  A term $a$ is \emph{weakly normalizing} if there exists a finite
  sequence of reductions $a \to \ldots \to b$ where $b$ is in normal
  form, and \emph{strongly normalizing} if every sequence of
  reductions starting from $a$ is finite.
\end{definition}

Note that any strongly normalizing term is also weakly
normalizing. Variables are examples of strongly normalizing terms. As
an example of a term that is neither strongly nor weakly normalizing,
consider $\Omega\Omega$, where $\Omega$\label{defomegalambda} is the
term $\lambda x.xx$. $\Omega\Omega$ is neither weakly nor strongly
normalizing since we have
\[
\Omega\Omega = (\lambda x.xx)\lambda x.xx \to xx[\lambda x.xx / x] =
(\lambda x.xx)\lambda x.xx = \Omega\Omega.
\]
As an example of a term that is weakly but not strongly normalizing,
consider $(\lambda z.y)(\Omega\Omega)$.

Since $\Omega$ contains $xx$, we know that $\Omega\Omega$ is not
well-typed. In contrast, the well-typed terms we have encountered so
far, $\lambda x.x$ and $\lambda xy.x$, are both strongly
normalizing. In fact, the simply typed lambda calculus has the
property that \emph{all} well-typed terms are strongly normalizable.

\begin{proposition}[Strong normalization]
  \label{prop:st-strong-norm}
  If $\entails a:A$ is a valid typing judgment, then $a$ is strongly
  normalizing.
\end{proposition}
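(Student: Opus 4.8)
The plan is to establish strong normalization by Tait's reducibility (logical-relations) method, in the reducibility-candidate formulation, adapted to the pattern-matching eliminators $\letin{*}{-}{-}$ and $\letin{\p{x,y}}{-}{-}$. Write $\mathrm{SN}$ for the set of strongly normalizing terms, and call a term \emph{neutral} if it is a variable, an application, or a $\texttt{let}$-term --- that is, anything that is neither a lambda abstraction, a pair, nor $*$. For each type $A$ I would define a set $\mathrm{Red}_A$ of \emph{reducible terms of type $A$}, by induction on $A$: $\mathrm{Red}_X = \mathrm{Red}_1 = \mathrm{SN}$ when $X$ is a basic type; $\mathrm{Red}_{A\to B} = \s{a \such a\in\mathrm{SN}\text{, and } ab\in\mathrm{Red}_B \text{ for every } b\in\mathrm{Red}_A}$; and $\mathrm{Red}_{A\times B} = \s{c \such c\in\mathrm{SN}\text{, and whenever } c\too\p{c_1, c_2} \text{ then } c_1\in\mathrm{Red}_A \text{ and } c_2\in\mathrm{Red}_B}$. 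The first step is to verify the standard reducibility-candidate conditions: (CR1) $\mathrm{Red}_A\seq\mathrm{SN}$; (CR2) if $a\in\mathrm{Red}_A$ and $a\to a'$ then $a'\in\mathrm{Red}_A$; (CR3) if $a$ is neutral and every one-step reduct of $a$ lies in $\mathrm{Red}_A$, then $a\in\mathrm{Red}_A$. These go through by a routine induction on $A$; (CR3) in particular yields that every variable belongs to every $\mathrm{Red}_A$.

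Next I would prove one ``introduction lemma'' per term constructor, each by an auxiliary induction invoking (CR3). For abstraction: if $b[a/x]\in\mathrm{Red}_B$ for every $a\in\mathrm{Red}_A$, then $\lambda x.b\in\mathrm{Red}_{A\to B}$; the content is that $(\lambda x.b)a\in\mathrm{Red}_B$ for each $a\in\mathrm{Red}_A$, which --- as $(\lambda x.b)a$ is neutral --- reduces via (CR3) to checking the one-step reducts: $b[a/x]$ is handled by hypothesis, and the reducts that act inside $b$ or $a$ are handled by induction on the sum of the maximal reduction lengths of $b$ and $a$, using (CR2). (Note $b\in\mathrm{SN}$ since $b = b[x/x]\in\mathrm{Red}_B$, so these lengths are finite.) For pairs: if $a_1\in\mathrm{Red}_A$ and $a_2\in\mathrm{Red}_B$, then $\p{a_1, a_2}\in\mathrm{Red}_{A\times B}$, since a pair reduces only in its components; and $*\in\mathrm{Red}_1$ trivially. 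For the eliminators, the key lemma is: if $c\in\mathrm{Red}_{A\times B}$ and $a[u/x,v/y]\in\mathrm{Red}_C$ for all $u\in\mathrm{Red}_A$ and $v\in\mathrm{Red}_B$, then $\letin{\p{x,y}}{c}{a}\in\mathrm{Red}_C$ (and analogously for $\letin{*}{-}{-}$). This again goes through (CR3) --- the $\texttt{let}$-term is neutral --- by induction on the sum of the maximal reduction lengths of $c$ and $a$ (finite since $c\in\mathrm{SN}$ by (CR1) and $a\in\mathrm{Red}_C\seq\mathrm{SN}$, taking $u=x$, $v=y$): reducts acting inside $c$ or inside $a$ are handled by (CR2) and the induction hypothesis, and the only remaining one-step reduct occurs when $c$ is literally a pair $\p{c_1, c_2}$, giving $a[c_1/x,c_2/y]$ --- but then $c_1\in\mathrm{Red}_A$ and $c_2\in\mathrm{Red}_B$ by the definition of $\mathrm{Red}_{A\times B}$ (using $c\too c$), so this reduct is in $\mathrm{Red}_C$ by hypothesis.

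I would then prove the main lemma: if $x_1{:}A_1,\ldots,x_n{:}A_n\entails a:A$ is valid and $b_i\in\mathrm{Red}_{A_i}$ for each $i$, then $a[b_1/x_1,\ldots,b_n/x_n]\in\mathrm{Red}_A$. This is an induction on the typing derivation: the $\rul{ax}$ case is immediate; $\rul{\lambda}$, $\rul{\times_i}$, $\rul{*_i}$ apply the corresponding introduction lemmas to the induction hypotheses; $\rul{app}$ uses the definition of $\mathrm{Red}_{A\to B}$; and $\rul{\times_e}$, $\rul{*_e}$ apply the eliminator lemma, using the induction hypothesis on the scrutinee and, for the branch, the induction hypothesis with the context extended by $x{:}B_1, y{:}B_2$ and an arbitrary choice of reducible values for these variables. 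Finally, from $\entails a:A$ the empty substitution gives $a\in\mathrm{Red}_A$, hence $a\in\mathrm{SN}$ by (CR1); for a valid open judgment, substituting each $x_i$ by itself (reducible by (CR3)) gives the same conclusion.

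The delicate point --- and the only one that is not bookkeeping --- is the treatment of the two $\texttt{let}$-eliminators: one has to choose the definition of $\mathrm{Red}_{A\times B}$ so that it simultaneously satisfies (CR1)--(CR3) and is strong enough to drive the eliminator lemma, and one has to pick a termination measure (the combined maximal reduction lengths of the scrutinee and the branch) that makes the (CR3) argument for $\letin{\p{x,y}}{c}{a}$ well-founded. Everything else --- the candidate conditions, the abstraction and pairing lemmas, and the induction over typing derivations --- is routine once the $\texttt{let}$ case is in place, and the handling of capture-avoiding substitution and $\alpha$-equivalence is exactly as in the untyped calculus.
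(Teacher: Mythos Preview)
The paper does not actually give a proof of this proposition: it appears in the background chapter on the lambda calculus, where proofs are explicitly omitted and the reader is referred to the literature (specifically Girard--Lafont--Taylor, \emph{Proofs and Types}). So there is no ``paper's own proof'' to compare against.

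Your argument is correct and is precisely the standard Tait reducibility (logical-relations) proof one finds in that reference, adapted to the product and unit types with their pattern-matching eliminators. The definition of $\mathrm{Red}_{A\times B}$ you chose (strongly normalizing, and every pair it reduces to has reducible components) is the right one for handling the $\texttt{let}$-eliminator, and the termination measure for the eliminator lemma is the standard choice. Nothing is missing.
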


% ====================================================================
\section{Linearity}
\label{sect:linlog}

We now sketch a version of Girard's \emph{intuitionistic linear logic}
(\cite{Gir87}). As we shall see in the next section the use of linear
logic in the context of quantum computation is motivated by the
no-cloning property of quantum information.

% --------------------------------------------------------------------
\subsection{Contraction, weakening, and strict linearity}
\label{ssect:strict-lin}

Informally, a variable is used \emph{linearly} if it is used exactly
once. In the simply typed lambda calculus, variables can be used
\emph{non-linearly}. As a first example, consider the following typing
derivation.
\[
\infer[]{x:A \entails \p{x,x}:A\times A}{\infer{x:A \entails x:A }{} &
  \infer{x:A \entails x:A}{}}
\]
In the above derivation the variable $x$ is used non-linearly, in the
sense that only a single occurrence of $x$ in the context is required
to construct the pair $\p{x,x}$ in which $x$ occurs twice. This is
possible because the two occurrences of the declaration $x:A$ in the
leaves of the typing derivation were implicitly \emph{contracted} by
the application of the $(\times_i)$ rule. As another example, note
that the typing judgement $x:A, y:B \entails y:B$ is valid by the
$(ax)$ rule. In this case, the variable $x$ is handled non-linearly
because it appears in the context but is not used at all. This second
kind of non-linearity is due to the implicit \emph{weakening} of the
context in the $(ax)$ rule.

It is possible to modify the typing rules of the simply typed lambda
calculus to force variables to be used in a strictly linear
fashion. To obtain such a system, we can replace the $\rul{ax}$ and
$\rul{*_i}$ rules with
\[
\infer{x:A \entails x:A}{} \quad \mbox{ and } \quad \infer[.]{\entails
  *:1}{}
\]
In the above rules, the typing contexts are minimal, which guarantees
that no implicit weakening can occur. To forbid implicit contractions
we need to ensure that contexts are not merged but juxtaposed in
binary rules. This can be achieved in the case of the $\rul{\times_i}$
rule as follows.
\[
\infer[]{\Gamma_1,\Gamma_2 \entails \p{a,b}:A\times
  B}{\deduce{\Gamma_1\entails a:A}{} & \deduce{\Gamma_2\entails
    b:B}{}}
\]
The above rule carries the side condition that the contexts $\Gamma_1$
and $\Gamma_2$ are distinct, so that the notation $\Gamma_1,\Gamma_2$
denotes the disjoint union of the two contexts.

If contexts are removed from all nullary rules and juxtaposed rather
than merged in all binary rules, then we obtain a strictly linear
system. In this system, a variable occurs in the context of a valid
typing judgement if and only if it appears exactly once in the term
being typed.

% --------------------------------------------------------------------
\subsection{Reintroducing non-linearity}
\label{ssect:exponentials}

The restrictions imposed by the strictly linear type system sketched
above are very strong. Our interest in Girard's linear logic is the
fact that it allows us to reintroduce a controlled form of
non-linearity. The idea is to use a modality called \emph{bang} and
denoted $!$ to identify the variables that can be used
non-linearly. To this end, we extend the grammar of types as follows.
\[
A,B \quad \bnf \quad X \bor 1 \bor (A \times B) \bor (A \to B) \bor
!A \label{defbang}
\]
The new type $!A$ consists of all the elements of the type $A$ that
can be used non-linearly. One can think of the elements of type $!A$
as those elements of $A$ that have the property of being
\emph{reusable} or \emph{duplicable}.

We can modify the typing rules to account for this new modality. For
example, the $\rul{\times_i}$ rule becomes
\[
\infer[.]{!\Delta,\Gamma_1,\Gamma_2 \entails \p{a,b}:A\times
  B}{\deduce{!\Delta,\Gamma_1\entails a:A}{} &
  \deduce{!\Delta,\Gamma_2\entails b:B}{}}
\]
where the context $!\Delta$ denotes a set of declarations of the form
$x_1:{!A_1},\ldots x_n:{!A_n}$. In this new rule, the contracted part
of the context consists exclusively of declarations of the form
$x:{!A}$. This ensures that the only variables that are used
non-linearly are the ones of a non-linear type.

It should be possible to use a duplicable variable only once. In other
words, if a variable $x$ is declared of type $!A$, it should also have
type $A$. One way to achieve this is to equip the type system with a
\emph{subtyping relation}\label{defsubtyping}, denoted $<:$,
satisfying $!A<:A$ for every type $A$.

% ====================================================================
\section{The quantum lambda calculus}
\label{sect:qlc}

Various lambda calculi for quantum computation have appeared in the
literature (e.g., \cite{Tonder}, \cite{Arrighi-Dowek}). Here, we focus
on the \emph{quantum lambda calculus} (see \cite{valiron04},
\cite{SeVa06}, or \cite{SeVa09}) as it is the main inspiration for the
Proto-Quipper language defined and studied in chapters~\ref{chap:pq}
and \ref{chap:pq-safe}.

The quantum lambda calculus is based on the QRAM model of quantum
computation described in Section~\ref{sssect:qram}. To embody the QRAM
model, the reduction relation of the quantum lambda calculus is
defined on \emph{closures}. These closures are triples $[Q,L,a]$ where
$Q$ is a unit vector in $\bigotimes_{i=1}^{n}\Comp^2$ for some integer
$n$, $L$ is a list of $n$ distinct term variables, and $a$ is a term
of the quantum lambda calculus.

The vector $Q$ represents the state of a system of $n$ qubits held in
some hypothetical quantum device. In a well-formed closure, the free
variables of $a$ are required to form a subset of $L$ and the list $L$
is interpreted as a link between the variables of $a$ and the qubits
of $Q$. This way, the qubits whose state is described by $Q$ become
accessible to the operations of the quantum lambda calculus.

% --------------------------------------------------------------------
\subsection{Terms}
\label{sect:qlc-terms}

\begin{definition}
  \label{def:ql-terms}
  The \emph{terms} of the quantum lambda calculus are defined by
  \begin{center}
    \begin{tabular}{rl}
      $a,b,c \quad \bnf$ & $x \bor u \bor \lambda x.a \bor ab \bor 
      \p{a,b} \bor * \bor$ \\[0.05in]
      & $\letin{\p{x,y}}{a}{b} \bor \letrecin{x~y}{b}{c} 
      \bor$ \\[0.05in]
      & $\injl(a) \bor \injr(a) \bor \matchwith{a}{b}{c}$
    \end{tabular}
  \end{center}
  where $u$ comes from a set $U$ of \emph{quantum constants} and $x,y$
  come from a countable set $\vset$ of \emph{variables}.
\end{definition}

The meaning of most terms is intended to be the standard one, as
described in the previous sections. The term $\letrecin{x~y}{b}{c}$ is
a recursion operator. The terms $\injl(a)$ and $\injr(a)$ denote the
left and right inclusion in a disjoint union respectively. The term
$\matchwith{a}{b}{c}$ denotes a case distinction depending on $a$. The
terms $\injl(*)$ and $\injr(*)$ form a two element set on which one
can perform a case distinction. The classical \emph{bits} are defined
as the elements of this set, with $0=\injr(*)$ and
$1=\injl(*)$\label{defbitsqlc}.

The set $U$ contains syntactical representatives of certain operations
that can be executed by the quantum device. In particular, $U$ is
assumed to contain the constants $\unew$ and $\umeas$, whose intended
interpretation is as follows.
\begin{itemize}\label{defnewmeas}
\item The term $\unew$ represents an initialization function. It
  inputs a bit (i.e., one of $0$ or $1$ as defined above) and produces
  a qubit in the corresponding classical state (i.e., $\ket{1}$ or
  $\ket{0}$ respectively).
\item The term $\umeas$ represents a measurement function. It inputs a
  qubit and measures it in the computational basis, returning the
  corresponding bit.
\end{itemize}

Assume that $U$ contains a constant $H$ representing the Hadamard gate
and consider the term $\coin$\label{defcoin} defined as
\[
\coin = \lambda * . \umeas (H (\unew 0)).
\]
This term represents a ``fair coin''. When applied to any argument, it
will prepare a qubit in the state $\ket{0}$ and apply a Hadamard gate
to it. This results in the superposition
\[
\frac{\ket{0}+\ket{1}}{\sqrt 2}.
\]
The qubit is then measured, which results in $0$ or $1$ with equal
probability.

% --------------------------------------------------------------------
\subsection{Operational semantics}
\label{sect:qlc-red}

In the quantum lambda calculus, one must choose a \emph{reduction
  strategy}. To see why this is the case, assume that
$\bplus$\label{defbplus} is a term from the quantum lambda calculus
representing addition modulo 2 and consider the following term
\[
a=(\lambda x. \bplus \p{x,x}) (\coin *).
\]
The term $a$ has two redexes. If the outer redex is reduced first, we
obtain the term $\bplus \p{(\coin *),(\coin *)}$, which will reduce to
$0$ or $1$ with equal probability. However, if we evaluate $(\coin *)$
first, then $a$ reduces to
\[
(\lambda x. \bplus \p{x,x})0 \quad \mbox{ or } \quad (\lambda
x. \bplus \p{x,x})1
\]
with equal probability. Either way, the final result of computation
will be $0$. This example shows that confluence fails in the quantum
lambda calculus, forcing us to choose an order of evaluation. In the
quantum lambda calculus, evaluation of terms follows a
\emph{call-by-value} reduction strategy. In particular, this means
that when evaluating an application we reduce the argument before
applying the function. To determine when a term is reduced, we define
a notion of \emph{value}.

\begin{definition}
  \label{def:ql-values}
  The \emph{values} of the quantum lambda calculus are defined by
  \[
  v,w \quad \bnf \quad x \bor u \bor * \bor \p{v,w} \bor \lambda x.a
  \bor \injr(v) \bor \injl(v).
  \]
\end{definition}

\begin{definition}
  \label{def:ql-closure}
  A \emph{closure} is a triple $[Q,L,a]$ where
  \begin{itemize}
  \item $Q$ is a normalized vector of $\Comp^{2^n}$ for some $n\geq 0$
    called a \emph{quantum array},
  \item $L$ is a list of $n$ distinct term variables, and
  \item $a$ is a term whose free variables appear in $L$.
  \end{itemize}
  The closure $[Q,L,a]$ is a \emph{value} if $a$ is a value.
\end{definition}

\begin{definition}
  \label{def:ql-beta}
  The \emph{one-step $\beta$-reduction}, written $\to_p$, is defined
  on closures by the rules given in Figure~\ref{fig:ql-red-rules}. The
  notation $[Q,L,a]\to_p [Q',L',a']$ means that the reduction takes
  place with probability $p$.
\end{definition}

% ....................................................................
\begin{figure}
  \[
  \infer[]{[Q,L,av] \to_p [Q',L',a'v]}{[Q,L,a] \to_p [Q',L',a']} \quad
  \infer[]{[Q,L,ab] \to_p [Q',L',ab']}{[Q,L,b] \to_p [Q',L',b']}
  \]
  \[
  \infer[]{[Q,L,\p{a,b}] \to_p [Q', L',\p{a,b'}]}{[Q,L,b] \to_p
    [Q',L',b']} \quad \infer[]{[Q,L,\p{a,v}] \to_p
    [Q',L',\p{a',v}]}{[Q,L,a] \to_p [Q',L',a']}
  \]
  \[
  \infer[]{[Q,L,\injl(a)] \to_p [Q', L',\injl(a')]}{[Q,L,a] \to_p
    [Q',L',a']} \quad \infer[]{[Q,L,\injr(a)] \to_p [Q',
    L',\injr(a')]}{[Q,L,a] \to_p [Q',L',a']}
  \]
  \[
  \infer[]{[Q,L,\letin{\p{x,y}}{a}{b}] \to_p
    [Q',L',\letin{\p{x,y}}{a'}{b}]}{[Q,L,a] \to_p [Q',L',a']}
  \]
  \[
  \infer[]{[Q,L,\matchwith{a}{b}{c}] \to_p
    [Q',L',\matchwith{a'}{b}{c}]}{[Q,L,a] \to_p [Q',L',a']}
  \]
  ~
  \[
  \deduce[]{[Q,L,(\lambda x.a)v] \to_1 [Q,L, a[v/x]]}{}
  \]
  \[
  \deduce[]{[Q,L,\letin{*}{*}{a}] \to_1 [Q,L,a]}{}
  \]
  \[
  \deduce[]{[Q,L,\letin{\p{x,y}}{\p{v,w}}{a}] \to_1
    [Q,L,a[v/x,w/y]]}{}
  \]
  \[
  \deduce[]{[Q,L,\matchwith{\injl(v)}{b}{c}] \to_1 [Q,L,b[v/x]]}{}
  \]
  \[
  \deduce[]{[Q,L,\matchwith{\injr(v)}{b}{c}] \to_1 [Q,L,c[v/y]]}{}
  \]
  \[
  \deduce[]{[Q,L,\letrecin{x~y}{b}{c}] \to_1 [Q,L,c[(\lambda
    y.\letrecin{x~y}{b}{b})/x]]}{}
  \]
  ~
  \[
  \deduce[]{[Q, L, u\p{x_{j_1},\ldots,x_{j_n}}] \to_1 [Q', L,
    \p{x_{j_1},\ldots,x_{j_n}}]}{}
  \]
  \[
  \deduce[]{[\alpha\ket{Q_0}+\beta\ket{Q_1}, L, \umeas(x_i)]
    \to_{|\alpha|^2} [\ket{Q_0}, L, 0]}{}
  \]
  \[
  \deduce[]{[\alpha\ket{Q_0}+\beta\ket{Q_1},L, \umeas(x_i)]
    \to_{|\beta|^2} [\ket{Q_1}, L, 1]}{}
  \]
  \[
  \deduce[]{[Q, \ket{x_1\ldots x_n},\unew(0)] \to_1 [Q\X\ket{0},
    \ket{x_1\ldots x_{n+1}},x_{n+1}]}{}
  \]
  \[
  \deduce[]{[Q, \ket{x_1\ldots x_n},\unew(1)] \to_1 [Q\X\ket{1},
    \ket{x_1\ldots x_{n+1}},x_{n+1}]}{}
  \]
  \label{fig:ql-red-rules}
  \caption{Reduction rules for the quantum lambda calculus.}
  % \rule{\textwidth}{0.1mm}
\end{figure}
% ....................................................................

The rules are separated in three groups. The first group contains the
\emph{congruence rules}. In particular, the rules for the reduction of
an application can be seen to define a call-by-value reduction
strategy. The second group of rules contains the \emph{classical
  rules}. These rules define the reduction of redexes that do not
involve any of the constants from the set $U$. The last group of rules
contains the \emph{quantum rules}. These rules define the interaction
between the classical device and the quantum device.  In the first
quantum rule, we have $Q'=u(Q)$. This rule corresponds to the
application of the unitary $u$ to the relevant qubits. Note that the
only probabilistic reduction step is the one corresponding to
measurement.

The chosen reduction strategy guarantees that, at every step of a
reduction, only one rule applies. Hence, unlike the untyped lambda
calculus, the quantum lambda calculus is \emph{deterministic}.

\begin{proposition}
  \label{ql-determinicity}
  If $[Q,L,a]$ is a closure, then at most one reduction rule applies
  to it.
\end{proposition}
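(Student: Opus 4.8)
The plan is to argue by structural induction on the term $a$ underlying the closure $[Q,L,a]$, reading the two measurement lines of Figure~\ref{fig:ql-red-rules} as a single probabilistic rule (this is consistent with measurement being the only probabilistic step, so that ``at most one rule applies'' amounts to: if $[Q,L,a]$ reduces at all, then the reduct $[Q',L',a']$ and the probability $p$ are uniquely determined). Two features make the argument go through. First, the value-in-argument-position side conditions of the congruence rules (with value as in Definition~\ref{def:ql-values}) prevent any two congruence rules from being simultaneously enabled. Second, the outermost shape of $a$, together with the shape of the subterm being evaluated once it is a value, pins down at most one classical or quantum rule. Along the way I would also use the easy auxiliary fact, proved by a parallel induction on the structure of values, that a closure whose term is a value admits no reduction; this is what lets the congruence cases be mutually exclusive.

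Concretely, I would first dispose of the base cases: if $a$ is a variable, a bare constant $u$, the unit $*$, or an abstraction $\lambda x.b$, then $a$ is a value and no left-hand side in Figure~\ref{fig:ql-red-rules} matches, so zero rules apply. For the compound terms I would case-split on whether the subterm that is evaluated next is a value. Take $a=bc$ as the representative case. If $c$ is not a value, the only applicable rule is the congruence rule $[Q,L,bc]\to_p[Q',L',bc']$; that it applies in exactly one way is the induction hypothesis for $c$, and no other rule for an application can fire since every other one requires the argument to be a value. If $c=v$ is a value, split on $b$: if $b$ is not a value, the congruence rule reducing $b$ is the unique applicable one, again by the induction hypothesis (and the other congruence rule needs $c$ non-value); if $b=\lambda x.d$ only the $\beta$-rule applies; if $b$ is a constant $u$, the shape of $v$ decides whether $u$ is a unitary gate and $v$ a tuple of variables (the generic unitary rule), or $u=\umeas$ and $v$ a single variable (the measurement rule), or $u=\unew$ and $v$ one of the bit literals $0=\injr(*)$, $1=\injl(*)$ (one of the $\unew$ rules); for any other value $b$ the term is stuck. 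The remaining term formers — $\p{b,c}$ (evaluate $c$, then $b$), $\letin{\p{x,y}}{b}{c}$ and $\matchwith{b}{c}{d}$ (evaluate $b$), the injections, and $\letrecin{x~y}{b}{c}$ (the degenerate case with no scrutinee, where the single unfolding rule always applies) — are treated in exactly the same style, the induction hypothesis handling the congruence branch and the syntactic form of the value selecting at most one classical rule otherwise.

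The one place that needs genuine care — and the only thing that could go wrong — is the bookkeeping in the constant-headed application case: one must check that the left-hand sides $u\p{x_{j_1},\ldots,x_{j_n}}$, $\umeas(x_i)$, $\unew(0)$, and $\unew(1)$ are pairwise non-unifiable. This rests on the conventions that the set of quantum constants is partitioned into unitary gates, $\umeas$, and $\unew$ (so that the generic unitary rule never applies to a $\umeas$- or $\unew$-redex), and that a bit literal, being an injection applied to $*$, is not a tuple of variables (so that $\unew(0)$ and $\unew(1)$ are not instances of the generic unitary rule). With these observations in hand, the symmetric check that a congruence rule is never enabled at the same time as a classical or quantum rule is precisely the content of the value-in-argument-position side conditions, and everything else is a routine inspection of Figure~\ref{fig:ql-red-rules}.
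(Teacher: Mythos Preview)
Your argument is correct and is exactly the natural one: a case distinction on the shape of $a$, using the fact that values do not reduce to separate the congruence rules from one another and from the classical/quantum rules. The paper does not actually supply a proof of Proposition~\ref{ql-determinicity} --- it lies in the background chapter where proofs are omitted --- but the analogous result for Proto-Quipper (Proposition~\ref{determinicity}) is proved with the one line ``By case distinction on $a$'', which is precisely your approach spelled out in detail.
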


% --------------------------------------------------------------------
\subsection{Types}
\label{sect:qlc-types}

\begin{definition}
  \label{def:ql-types}
  The \emph{types} of the quantum lambda calculus are defined by
  \[
  A,B \quad \bnf \quad \qubit \bor 1 \bor \bang A \bor A\X B \bor A
  \oplus B \bor A\loli B .
  \]
\end{definition}

The type system of the quantum lambda calculus is based on
intuitionistic linear logic as sketched in
Section~\ref{sect:linlog}. The notation is adopted from linear logic,
with $A\x B$ for the type of pairs, $A\+ B$ for the type of sums, and
$A\loli B$ for the type of functions. The type $\qubit$ represents the
set of all 1-qubit states. As in Section~\ref{sect:linlog}, the type
${!}A$ can be understood as the subset of $A$ consisting of values
that have the additional property of being \emph{duplicable} or
\emph{reusable}. We will sometimes write ${!}^nA$, with $n\in\N$, to
mean
\[
\underbrace{{!}\ldots {!}}_{n} A.
\]
Similarly, we sometimes write $A^{\x n}$ to mean
\[
\underbrace{A\x \ldots \x A}_{n}.
\]
We also write $\bit$ for the type $1\+ 1$.

The fact that a term is reusable should not prevent us from using it
exactly once. Intuitively, this should imply that if ${!}A$ is a valid
type for a given term, then $A$ should also be a valid type for it.
To capture this idea, we use a subtyping relation on types.

\begin{definition}
  \label{def:ql-subtyping}
  The \emph{subtyping relation} $<:$ is the smallest relation on types
  satisfying the rules given in Figure~\ref{fig:ql-subtyping}.
\end{definition}

% ....................................................................
\begin{figure}
  \[
  \infer[]{\qubit <: \qubit}{} \quad \infer[]{1 <: 1}{}
  \]
  \[
  \infer[]{(A_1\X A_2) <: (B_1 \X B_2)}{A_1<:B_1 & A_2<:B_2} \quad
  \infer[]{(A_1\+ A_2) <: (B_1 \+ B_2)}{A_1<:B_1 & A_2<:B_2} \quad
  \infer[]{(A_1\loli B_1) <: (A_2\loli B_2)}{A_2<:A_1 & B_1<:B_2}
  \]
  \[
  \infer[]{ \bang^nA <: \bang^mB}{ A<:B & (n=0 \imp m=0) }
  \]
  \label{fig:ql-subtyping}
  \caption{Subtyping rules for the quantum lambda calculus.}
  \rule{\textwidth}{0.1mm}
\end{figure}
% ....................................................................

\begin{proposition}
  The subtyping relation is reflexive and transitive.
\end{proposition}

To define the type system of the quantum lambda calculus, we first
introduce axioms for the elements of the set $U$.

\begin{definition}
  \label{def:constants-type}
  We introduce a type for the constants of $U$. For $\unew$ and
  $\umeas$ we set
  \[
  A_{\unew}= \bit \loli \qubit, \quad A_{\umeas} = \qubit \loli
  {!}\bit,
  \]
  and for the remaining elements $v\in U$ we set
  \[
  A_v= \qubit^{\x n}\loli \qubit^{\x n}.
  \]
\end{definition}

\begin{definition}
  \label{def:ql-valid}
  A typing judgment of the quantum lambda calculus is \emph{valid} if
  it can be inferred from the rules given in
  Figure~\ref{fig:ql-trules}.
\end{definition}

% ....................................................................
\begin{figure}[!ht]
  \[
  \infer[\rul{ax_1}]{\Gamma, x:A \entails x:B}{A<:B} \quad
  \infer[\rul{ax_2}]{\Gamma \entails u:B}{\bang A_u <:B}
  \]
  \[
  \infer[\rul{\+_{i_1}}]{\Gamma \entails \injl(a):\bang^n(A \+
    B)}{\Gamma \entails a:\bang^nA} \quad
  \infer[\rul{\+_{i_2}}]{\Gamma \entails \injr(b):\bang^n(A \+
    B)}{\Gamma \entails b:\bang^nB}
  \]
  \[
  \infer[\rul{\+_e}]{\Gamma_1, \Gamma_2, \bang \Delta \entails
    \matchwith{a}{b}{c}:C}{\Gamma_1, \bang \Delta \entails a :
    \bang^n(A \+ B) & \deduce[]{\Gamma_2, \bang \Delta, x:\bang^n A
      \entails b:C}{\Gamma_2, \bang \Delta, y:\bang^n A \entails c:C}}
  \]
  \[
  \infer[\rul{\lambda_1}]{\Gamma \entails \lambda x.b:A\loli B}{
    \Gamma,x:A \entails b:B } \quad \infer[\rul{\lambda_2}]{\Gamma,
    \bang \Delta \entails \lambda x.b:\bang^{n+1}(A\loli B)}{\Gamma,
    \bang \Delta, x:A\entails b:B & \FV(b)\cap |\Gamma| = \emptyset}
  \]
  \[
  \infer[\rul{app}]{\Gamma_1,\Gamma_2, \bang \Delta \entails
    ca:B}{\Gamma_1, \bang \Delta\entails c:A\loli B & \Gamma_2, \bang
    \Delta \entails a:A }
  \]
  \[
  \infer[\rul{*_i}]{\Gamma \entails *:\bang^n 1}{}
  \]
  \[
  \infer[\rul{\X_i}]{\Gamma_1,\Gamma_2, \bang \Delta \entails
    \p{a,b}:\bang^n(A\X B)}{\Gamma_1, \bang \Delta;Q_1\entails
    a:\bang^nA & \Gamma_2, \bang \Delta ;Q_2\entails b:\bang^nB }
  \]
  \[
  \infer[\rul{\X_e}]{\Gamma_1,\Gamma_2, \bang \Delta \entails
    \letin{\p{x,y}}{b}{a}:A}{\Gamma_1, \bang \Delta \entails
    b:\bang^n(B_1\X B_2) & \Gamma_2, \bang \Delta, x:\bang^nB_1,
    y:\bang^nB_2 \entails a:A }
  \]
  \[
  \infer[\rul{rec}]{\Gamma, \bang \Delta \entails
    \letrecin{x~y}{b}{c}:C}{\bang \Delta, x: \bang (A \loli B), y:A
    \entails b:B & \Gamma, \bang \Delta, x: \bang (A\loli B) \entails
    c:C}
  \]
  \label{fig:ql-trules}
  \caption{Typing rules for the quantum lambda calculus.}
  \rule{\textwidth}{0.1mm}
\end{figure}
% ....................................................................

There are two rules for the construction of a lambda abstraction. The
rule $\rul{\lambda_1}$ is similar to the $\rul{\lambda}$ rule of the
simply typed lambda calculus. However, the produced function is not
duplicable. In contrast, the rule $\rul{\lambda_2}$ produces a
duplicable function. The main difference is that in the
$\rul{\lambda_2}$ rule, the free variables that appear in $b$ must all
be of a duplicable type. This prevents $b$ from having any embedded
quantum data, which could not be cloned. Note that the type system
prevents us from assigning the type $\qubit \loli \qubit \X \qubit$ to
the term $\lambda x. \p{x,x}$.

\begin{definition}
  \label{def:ql-typed closure}
  A \emph{typed closure} is an expression of the form
  \[
  [Q,L,a]:A,
  \]
  where $[Q,L,a]$ is a closure and $A$ is a type. It is \emph{valid}
  if
  \[
  x_1:\qubit, \ldots, x_n:\qubit \entails a:A
  \]
  is a valid typing judgement, with $L=\ket{x_1,\ldots,x_n}$.
\end{definition}

The quantum lambda calculus is a type safe language, in the sense that
it enjoys the subject reduction and progress properties.

\begin{proposition}
  \label{prop:ql-subject-red}
  If $[Q,L,a]:A$ is a valid typed closure and 
  \[
  [Q,L,a] \to_p [Q',L',a'],
  \]
  then $[Q',L',a']:A$ is a valid typed closure.
\end{proposition}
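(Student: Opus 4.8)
The plan is to prove Proposition~\ref{prop:ql-subject-red} by induction on the derivation of the reduction step $[Q,L,a]\to_p[Q',L',a']$, with a case analysis on the last rule of Figure~\ref{fig:ql-red-rules} applied; the probability $p$ plays no role. Before that I would establish the standard supporting lemmas. The first is a \emph{generation} (inversion) lemma for each syntactic form: from the validity of $\Gamma\entails\lambda x.b:C$, of $\Gamma\entails ca:C$, of $\Gamma\entails\p{a,b}:C$, of $\Gamma\entails\injl(a):C$ and $\Gamma\entails\injr(a):C$, of $\Gamma\entails\matchwith{a}{b}{c}:C$, of $\Gamma\entails\letin{\p{x,y}}{b}{a}:C$, of $\Gamma\entails\letrecin{x~y}{b}{c}:C$, and of $\Gamma\entails u:C$, one reads off the premises of the corresponding rule of Figure~\ref{fig:ql-trules}, up to a bounded amount of subtyping. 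I would also prove a \emph{subsumption} lemma ($\Gamma\entails a:A$ and $A<:B$ imply $\Gamma\entails a:B$), a \emph{weakening} lemma for $\bang$-declarations ($\bang$-typed hypotheses may be freely added or, if unused, removed), and the key lemma that a \emph{value of a $\bang$-type has only $\bang$-typed free variables}, which is what makes duplication type-safe. The crux is then the \emph{substitution lemma}: if $\Gamma_1,\bang\Delta,x:A\entails b:B$ and $\Gamma_2,\bang\Delta\entails v:A$ with $v$ a value, then $\Gamma_1,\Gamma_2,\bang\Delta\entails b[v/x]:B$, together with the variant needed when the hypothesis on $x$ is of a $\bang$-type and is used any number of times; both are proved by induction on the typing derivation of $b$, the $\bang$-type variant invoking the previous lemma to relocate $v$ into a $\bang$-context.

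With these in hand I run the main induction. For the \emph{congruence rules} (first group of Figure~\ref{fig:ql-red-rules}), generation pins down the last typing rule, the reducing subterm is retyped by the induction hypothesis (its reduction carries the state from $Q$ to $Q'$ and leaves $L$ unchanged), and the derivation is reassembled with the same context splitting, so $[Q',L',a']:A$ is still valid. For the \emph{classical rules} I apply generation to extract the typings of the constituents and conclude by the substitution lemma: $(\lambda x.a)v\to_1 a[v/x]$ and the two $\mathtt{let}$-rules are immediate; the two $\mathtt{match}$-rules use the $\rul{\+_e}$ premises; and $\letrecin{x~y}{b}{c}$ unfolds to $c[(\lambda y.\letrecin{x~y}{b}{b})/x]$, whose typing follows by substituting into the $\rul{rec}$ premise for $c$ the $\bang$-typed term $\lambda y.\letrecin{x~y}{b}{b}$, itself typable by $\rul{\lambda_2}$ and $\rul{rec}$ from the $\rul{rec}$ premise for $b$. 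For the \emph{quantum rules}: applying a unitary $u$ replaces $u\p{x_{j_1},\ldots,x_{j_n}}$ by $\p{x_{j_1},\ldots,x_{j_n}}$, which already had type $\qubit^{\x n}$ inside the derivation of the former (via $\rul{ax_2}$ for $u$ and $\rul{\X_i}$), so it has type $A$ by subsumption, while $L$, $n$, and hence the context are untouched and $Q'=u(Q)$ is still normalized; $\umeas(x_i)$ reduces to $0$ or $1$, i.e.\ $\injr(*)$ or $\injl(*)$, each typable at $\bang\bit$, and generation applied to $\umeas(x_i)$ forces the closure type to be a supertype of $\bang\bit$, the codomain of $A_{\umeas}=\qubit\loli\bang\bit$; and $\unew(b)$ with $b\in\s{0,1}$ yields the fresh variable $x_{n+1}$, so I extend $L$ to $\ket{x_1,\ldots,x_{n+1}}$ and the context by $x_{n+1}:\qubit$, type $x_{n+1}$ by $\rul{ax_1}$, and check that the enlarged closure is well-formed per Definition~\ref{def:ql-closure} ($Q\X\ket{b}$ normalized, $n+1$ variables, free variables still listed) and valid per Definition~\ref{def:ql-typed closure}.

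The main obstacle I anticipate is the substitution lemma under the $\bang$ modality: keeping the context bookkeeping exact when the substituted value is duplicable, and correctly threading the invariant that values of $\bang$-type have no linear free variables so that $v$ can be moved into $\bang\Delta$ and contracted. A lesser subtlety, peculiar to this quantum calculus, is the $\unew$ case, where the closure's type is unchanged but both the typing context and the variable list $L$ grow by one; here I must lean on weakening and on the freshness of $x_{n+1}$ to re-establish closure well-formedness. Everything else is routine bookkeeping once the lemmas are proved.
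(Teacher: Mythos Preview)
Your proposal is correct and follows the standard approach. Note, however, that the paper does not actually prove this proposition: Chapter~\ref{chap:lambda} is background material on the quantum lambda calculus, and Proposition~\ref{prop:ql-subject-red} is stated without proof, with references to \cite{valiron04,SeVa06,SeVa09}. That said, your outline is essentially the same strategy the paper later carries out in full detail for Proto-Quipper (Theorem~\ref{thm-subject-red}): induction on the reduction derivation, supported by a substitution lemma (Lemma~\ref{substitution}), the fact that values of $\bang$-type live in a $\bang$-context (Lemma~\ref{context_value}), and weakening/subsumption (Lemma~\ref{prop:type_syst}). So your plan matches both the literature proof and the paper's own methodology for the analogous result.
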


\begin{proposition}
  \label{prop:ql-progress}
  Let $[Q,L,a]$ be a valid typed closure of type $A$. Then either
  $[Q,L,a]$ is a value, or there is a valid typed closure $[Q',L',a']$
  such that $[Q,L,a] \to_p [Q',L',a']$. Moreover, the total
  probability of all possible one-step reductions from $[Q,L,a]$ is 1.
\end{proposition}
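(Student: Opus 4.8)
The plan is to prove the statement by induction on the typing derivation of the term $a$, using the subject reduction property (Proposition~\ref{prop:ql-subject-red}) to guarantee that every reduct is again a valid typed closure of type $A$, together with a family of \emph{canonical forms} lemmas that pin down the possible syntactic shapes of values at each type.

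Since $[Q,L,a]$ is a valid typed closure of type $A$, we have $x_1:\qubit,\ldots,x_n:\qubit\entails a:A$ with $L=\ket{x_1,\ldots,x_n}$. First I would establish the canonical forms facts by inversion on the typing rules of Figure~\ref{fig:ql-trules}, carefully tracking the subtyping relation $<:$ through the $\rul{ax_1}$, $\rul{ax_2}$, and the introduction rules: in a context in which every free variable has type $\qubit$, a value of type $\bang^m 1$ must be $*$; a value of type $\qubit$ must be one of the $x_i$; a value of type $\bang^m(B_1\x B_2)$ must be a pair $\p{v_1,v_2}$; a value of type $\bang^m(B_1\+ B_2)$ must be $\injl(v)$ or $\injr(v)$; and a value of type $\bang^m(B_1\loli B_2)$ must be either a lambda abstraction $\lambda x.b$ or a constant $u\in U$. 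Unwinding these yields the refinements we actually need: a value of type $\qubit^{\x k}$ must be a tuple $\p{x_{j_1},\ldots,x_{j_k}}$ of variables, and a value of type $\bit$ must be $0=\injr(*)$ or $1=\injl(*)$.

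Next I would case on the outermost constructor of $a$ (equivalently, on the last rule of its typing derivation). If $a$ is a variable, a constant, $*$, or a lambda abstraction, then $a$ is already a value by Definition~\ref{def:ql-values}, so the first disjunct holds. If $a=\injl(b)$, $a=\injr(b)$, or $a=\p{b,c}$, then either all of the relevant immediate subterms are values, in which case $a$ is a value, or, applying the induction hypothesis to an appropriate non-value subterm according to the evaluation order fixed by the congruence rules of Figure~\ref{fig:ql-red-rules}, that subterm reduces and the matching congruence rule propagates the reduction, inheriting the total probability from the induction hypothesis. If $a=\letrecin{x~y}{b}{c}$, the dedicated $\to_1$ rule always applies. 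Finally, if $a$ is an application $bc$, $a=\letin{\p{x,y}}{b}{c}$, or $a=\matchwith{b}{c}{d}$, then either an immediate subterm that the call-by-value order asks us to evaluate is not a value --- so the corresponding congruence rule of Figure~\ref{fig:ql-red-rules} applies by the induction hypothesis --- or every such subterm is a value, and the canonical forms lemma forces the relevant one into exactly the shape of a classical or quantum redex: $(\lambda x.b)v$ fires the classical $\beta$-rule; $\letin{\p{x,y}}{\p{v_1,v_2}}{c}$, $\matchwith{\injl(v)}{c}{d}$, and $\matchwith{\injr(v)}{c}{d}$ fire the matching classical rules; and $uc$ with $u\in U$ fires a quantum rule, where the canonical form for $\qubit^{\x k}$ forces $c$ to be a tuple of variables for a general $u$, the one for $\bit$ forces $c$ to be $0$ or $1$ for $u=\unew$, and the one for $\qubit$ forces $c$ to be some $x_i$ for $u=\umeas$. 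In every non-measurement case the unique applicable rule has probability $1$; in the measurement case, decomposing $Q=\alpha\ket{Q_0}+\beta\ket{Q_1}$ according to the measured qubit, the two applicable rules carry probabilities $|\alpha|^2$ and $|\beta|^2$, which sum to $1$ because $Q$ is a unit vector; and in the congruence cases the total probability carries over from the induction hypothesis. In each case Proposition~\ref{prop:ql-subject-red} certifies that the reduct $[Q',L',a']$ is a valid typed closure of type $A$.

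I expect the canonical forms analysis to be the main obstacle, and in particular the quantum-constant subcase: one must show that the argument of $u\in U$ is \emph{literally} of the form $\p{x_{j_1},\ldots,x_{j_k}}$ demanded by the quantum reduction rules, which takes some care because of the interaction between the $\bang$ modality, the subtyping relation, and the fact that the only values of type $\qubit$ available in a qubit-only context are bare variables. The remaining cases are routine structural bookkeeping.
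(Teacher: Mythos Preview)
The paper does not actually prove this proposition. It appears in Chapter~\ref{chap:lambda}, which is explicitly introductory material where ``most proofs are omitted'' and the quantum lambda calculus is presented as background with references to \cite{valiron04}, \cite{SeVa06}, \cite{SeVa09}. So there is no proof in the paper to compare against.

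That said, your argument is the standard one and is correct in outline. It also mirrors closely the proof the paper \emph{does} give for the analogous Progress theorem for Proto-Quipper (Chapter~\ref{chap:pq-safe}): induction on the typing derivation, a canonical-forms lemma (the paper's Lemma~\ref{form_values}) pinning down the shape of values at each type, and a case split on the outermost constructor, with congruence rules handling non-value subterms and the canonical forms forcing the head into redex position otherwise. Your treatment of the quantum-constant cases --- using the canonical form for $\qubit^{\x k}$ to force tuples of variables, for $\bit$ to force $0$ or $1$, and for $\qubit$ to force a bare $x_i$ --- is exactly the delicate point, and you identify it correctly. The probability bookkeeping (one deterministic reduct with probability $1$, or two measurement outcomes with $|\alpha|^2+|\beta|^2=1$ since $Q$ is a unit vector, inherited through congruence) is also right.
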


% =====================================================================
% Algebraic Methods in Quantum Computation

% ---------------------------------------------------------------------
\chapter{Grid problems}
\label{chap:grid-pb}

In this chapter, we present an efficient method to solve a type of
lattice point enumeration problem which we call a \emph{grid
  problem}. As a first approximation, a grid problem can be thought of
as follows: given a discrete subset $L\seq\R^2$ (such as a lattice),
which we call the \emph{grid}, and a bounded convex subset $A\seq\R^2$
with non-empty interior, enumerate all the points $u \in A \cap
L$. Specifically, we will be interested in grid problems for which $L$
is a subset of $\Zi$ or of $\Zomega$, as defined in
Chapter~\ref{chap:nb-th}. We refer to the first kind of problem as a
\emph{grid problem over $\Zi$} and to the second kind of problem as a
\emph{grid problem over $\Zomega$}. As we shall see in chapters
\ref{chap:synth-V} and \ref{chap:synth-T}, these problems have
applications in quantum computation. However, they are treated here
independently of any quantum considerations. The results contained in
this chapter first appeared in \cite{vsynth} and \cite{gridsynth}.

We note that the method presented here is not the only method for
solving grid problems. Alternatively, grid problems over $\Zi$ or
$\Zomega$ can be reduced to so-called \emph{integer programming
  problems} in some fixed dimension which can be efficiently solved
using the techniques pioneered by Lenstra in
\cite{Lenstra1983}. Nevertheless, we believe our method is novel and
interesting.

Even though grid problems over $\Zi$ are significantly simpler than
grid problems over $\Zomega$, the overall method remains the same. For
this reason, the case of $\Zi$, which is treated first, is used as an
introduction to the case of $\Zomega$, which is treated second.

Let $R$ be one of $\Zi$ or $\Zomega$. As in Chapter~\ref{chap:nb-th},
we quantify the complexity of our methods by estimating the number of
arithmetic operations required to produce an element $u \in A \cap L$,
for $L\seq R$. Our algorithms will input bounded convex subsets
$A\seq\R^2$ (as well as closed intervals $[x_0,x_1]\seq \R)$. If we
were to give a rigorous complexity-theoretic account, we should
indicate what it means for a subset $A$ of $\R^2$ to be ``given'' as
the input to an algorithm. The details of this do not matter much. For
our purposes, it will suffice to assume that a convex set is given
along with the following information.
\begin{itemize}
\item A convex polygon enclosing $A$, say with rational vertices, and
  such that the area of the polygon exceeds that of $A$ by at most a
  fixed constant factor;
\item a method to decide, for any given point of $R$, whether it is in
  $A$ or not; and
\item a method to compute the intersection of $A$ with any straight
  line in $R$. More precisely, given any straight line parameterized
  as $L(t)=p+tq$, with $p,q\in R$, we can effectively determine the
  interval $\s{t\mid L(t)\in A}$ in the sense of the above.
\end{itemize}

% ====================================================================
\section{Grid problems over \texorpdfstring{$\Zi$}{Z[i]}}
\label{sect:grid-pb-Zi}

Recall from Chapter~\ref{chap:nb-th} that the elements of $\Zi$ are of
the form $a+bi$, with $a$ and $b$ in $\Z$. We can therefore identify
$\Zi$ with the set $\Z^2\subseteq \R^2$. When viewed in this way, we
refer to $\Zi$ as the \emph{grid} and to elements $u\in\Zi$ as
\emph{grid points}.

\begin{problem}[Grid problem over $\Zi$]
  \label{pb:grid-Zi}
  Given a bounded convex subset $A$ of $\R^2$ with non-empty interior,
  enumerate all the points $u \in A \cap \Zi$.
\end{problem}

A point $u \in A\cap \Zi$ is called a \emph{solution} to the grid
problem over $\Zi$ for $A$. Figure~\ref{fig:grid-pb-Zi}~(a)
illustrates a grid problem for which $A$ is a disk centered at the
origin. The grid is shown as black dots and the set $A$ is shown in
red.

% ......................................................................
\begin{figure}
  \[
  (a)~ \mp{0.8}{\scalebox{0.8}{\begin{tikzpicture}[scale=0.9]

% The convex set A:
\fill[color=red!50] (0,0) circle (1.6);
\draw (0.5,0.5) node {A};

% The labelled x-axis and y-axis:
\draw[->] (-4.5,0) -- (4.5, 0);
\draw (-4,0) -- (-4,-0.1) node[below] {\small $-4$};
\draw (-3,0) -- (-3,-0.1) node[below] {\small $-3$};
\draw (-2,0) -- (-2,-0.1) node[below] {\small $-2$};
\draw (-1,0) -- (-1,-0.1) node[below] {\small $-1$};
\draw (0,0) -- (0,-0.1) node[below] {\small $0$};
\draw (1,0) -- (1,-0.1) node[below] {\small $1$};
\draw (2,0) -- (2,-0.1) node[below] {\small $2$};
\draw (3,0) -- (3,-0.1) node[below] {\small $3$};
\draw (4,0) -- (4,-0.1) node[below] {\small $4$};
\draw[->] (0,-4.5) -- (0,4.5);
\draw (0,-4) -- (-0.1,-4) node[left] {\small $-4$};
\draw (0,-3) -- (-0.1,-3) node[left] {\small $-3$};
\draw (0,-2) -- (-0.1,-2) node[left] {\small $-2$};
\draw (0,-1) -- (-0.1,-1) node[left] {\small $-1$};
\draw (0,0) -- (-0.1,0) node[left] {\small $0$};
\draw (0,1) -- (-0.1,1) node[left] {\small $1$};
\draw (0,2) -- (-0.1,2) node[left] {\small $2$};
\draw (0,3) -- (-0.1,3) node[left] {\small $3$};
\draw (0,4) -- (-0.1,4) node[left] {\small $4$};

% The points of the integer grid:
\foreach \x in {-4,...,4}
  \foreach \y in {-4,...,4}
    {\fill (\x,\y) circle (.1);}

\end{tikzpicture}}} \quad (b)~
  \mp{0.8}{\scalebox{0.8}{\begin{tikzpicture}[scale=0.9]

% The convex set A:
\fill[color=red!50] 
  (1.5,1.5) -- (1.5,3.5) -- (3.5,3.5) -- (3.5,1.5) -- cycle;
\draw (2.5,2.5) node {A};

% The projection of $B$ on the x- and y- axes:
\fill[thick,color=red!50] 
  (1.5,0.2) -- (3.5,0.2) -- (3.5,-0.2) -- (1.5,-0.2);
\fill[thick,color=red!50] 
  (0.2,1.5) -- (0.2,3.5) -- (-0.2,3.5) -- (-0.2,1.5);
\draw[dotted] (1.5,0.2) -- (1.5,1.5);
\draw[dotted] (0.2,1.5) -- (1.5,1.5);
\draw[dotted] (3.5,0.2) -- (3.5,1.5);
\draw[dotted] (0.2,3.5) -- (1.5,3.5);

% The labelled x-axis and y-axis:
\draw[->] (-4.5,0) -- (4.5, 0);
\draw (-4,0) -- (-4,-0.1) node[below] {\small $-4$};
\draw (-3,0) -- (-3,-0.1) node[below] {\small $-3$};
\draw (-2,0) -- (-2,-0.1) node[below] {\small $-2$};
\draw (-1,0) -- (-1,-0.1) node[below] {\small $-1$};
\draw (0,0) -- (0,-0.1) node[below] {\small $0$};
\draw (1,0) -- (1,-0.1) node[below] {\small $1$};
\draw (2,0) -- (2,-0.1) node[below] {\small $2$};
\draw (3,0) -- (3,-0.1) node[below] {\small $3$};
\draw (4,0) -- (4,-0.1) node[below] {\small $4$};
\draw[->] (0,-4.5) -- (0,4.5);
\draw (0,-4) -- (-0.1,-4) node[left] {\small $-4$};
\draw (0,-3) -- (-0.1,-3) node[left] {\small $-3$};
\draw (0,-2) -- (-0.1,-2) node[left] {\small $-2$};
\draw (0,-1) -- (-0.1,-1) node[left] {\small $-1$};
\draw (0,0) -- (-0.1,0) node[left] {\small $0$};
\draw (0,1) -- (-0.1,1) node[left] {\small $1$};
\draw (0,2) -- (-0.1,2) node[left] {\small $2$};
\draw (0,3) -- (-0.1,3) node[left] {\small $3$};
\draw (0,4) -- (-0.1,4) node[left] {\small $4$};

% The points of the integer grid:
\foreach \x in {-4,...,4}
  \foreach \y in {-4,...,4}
    {\fill (\x,\y) circle (.1);}

\end{tikzpicture}}}
  \]
  \caption[Grid problems over $\Zi$.]{(a) A grid problem over $\Zi$
    for which $A$ is a disk of radius $1.5$ centered at the
    origin. (b) A grid problem over $\Zi$ for which $A$ is an upright
    rectangle, together with the projections of $A$ along the $x$- and
    $y$-axes.}
  \label{fig:grid-pb-Zi}
  \rule{\textwidth}{0.1mm}
\end{figure}
% ......................................................................

% --------------------------------------------------------------------
\subsection{Upright rectangles}
\label{ssect:grid-pb-Zi-up-rect}

We first consider grid problems over $\Zi$ for which $A$ is an
\emph{upright rectangle}, i.e., of the form $[x_1,x_2]\times
[y_1,y_2]$. These instances are easily solved, as they can be reduced
to a problem in a lower dimension. Indeed, it suffices to
independently solve the grid problem in the $x$-axis (i.e., by
enumerating the integers in $[x_1, x_2]$) and in the $y$-axis (i.e. by
enumerating the integers in interval $[y_1,y_2]$), as illustrated in
Figure~\ref{fig:grid-pb-Zi}~(b).

\begin{proposition}
  \label{prop:up-rect-Zi}
  Let $A$ be an upright rectangle. Then there is an algorithm which
  enumerates all the solutions to the grid problem over $\Zi$ for
  $A$. Moreover, the algorithm requires only a constant number of
  arithmetic operations per solution produced.
\end{proposition}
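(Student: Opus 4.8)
The plan is to reduce the two-dimensional problem to two independent one-dimensional enumerations and to observe that the whole computation uses only a bounded number of arithmetic operations per point produced. First I would invoke the identification $\Zi \cong \Z^2$ used at the start of this section: writing $A = [x_1,x_2]\by[y_1,y_2]$, a grid point $a+bi$ lies in $A$ if and only if $a\in\Z\cap[x_1,x_2]$ and $b\in\Z\cap[y_1,y_2]$. Hence the solution set is exactly the Cartesian product $(\Z\cap[x_1,x_2])\by(\Z\cap[y_1,y_2])$, and it suffices to enumerate each factor and then iterate over all pairs. This reduction is exact precisely because $A$ is a genuine rectangle whose sides are parallel to the coordinate axes.

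Next I would enumerate $\Z\cap[x_1,x_2]$ as follows: compute $m=\lceil x_1\rceil$ and $M=\floor{x_2}$; if $m>M$ the intersection is empty, and otherwise the integers in the interval are exactly $m,m+1,\ldots,M$, generated by starting at $m$ and repeatedly adding $1$. Each of these operations — the two rounding operations, the comparison, and each increment — counts as a single arithmetic operation. The same is done for $\Z\cap[y_1,y_2]$. Here I appeal to the assumptions made at the start of the chapter on how a convex set (and in particular the intervals arising as its intersections with straight lines, such as the coordinate axes) is presented to an algorithm, so that the endpoints $x_1,x_2,y_1,y_2$ are effectively available.

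Finally, to produce the output, for each $a$ in the first list and each $b$ in the second I would output $a+bi$; forming each such pair costs $O(1)$ arithmetic operations. The only operations not charged to a particular solution are the constant-size set-up (four rounding operations and two comparisons). Assuming at least one solution exists, this set-up amortizes to $O(1)$ per solution; and if there is no solution, nothing is produced. Therefore the algorithm requires only a constant number of arithmetic operations per solution, as claimed.

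I do not expect a genuine obstacle here — the statement is essentially a bookkeeping exercise. The only two points deserving a word of care are making the reduction to the product literally correct (immediate, since $A$ is axis-aligned) and the amortization of the fixed set-up cost, which is exactly why the bound is stated as ``per solution produced'' rather than as a worst-case bound.
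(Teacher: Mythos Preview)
Your proposal is correct and matches the paper's approach exactly: the paper explains (in the text immediately preceding the proposition, with no separate proof environment) that the problem reduces to independently enumerating the integers in $[x_1,x_2]$ and in $[y_1,y_2]$, just as you do. Your additional care about amortizing the constant set-up cost is a reasonable elaboration of what the paper leaves implicit.
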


% --------------------------------------------------------------------
\subsection{Upright sets}
\label{ssect:grid-pb-Zi-up-set}

We now generalize the method of the previous subsection to convex sets
that are \emph{close} to upright rectangles in a suitable sense.

\begin{definition}[Uprightness]
  \label{def:upright}
  Let $A$ be a bounded convex subset of $\R^2$ with non-empty
  interior. The bounding box of $A$, denoted $\BBox(A)$, is the
  smallest set of the form $[x_1,x_2]\by [y_1,y_2]$ that contains
  $A$. The \emph{uprightness of $A$}, denoted $\up(A)$, is defined to
  be the ratio of the area of A to the area of its bounding box:
  \[
  \up(A) =\frac{\area(A)}{\area(\BBox(A))}.
  \]
  We say that $A$ is $M$-upright if $\up(A) \geq M$.
\end{definition}

\begin{proposition}
  \label{prop:up-set-Zi}
  Let $A$ be an $M$-upright set. Then there exists an algorithm which
  enumerates all the solutions to the grid problem over $\Zi$ for
  $A$. Moreover, the algorithm requires $O(1/M)$\label{defbigO}
  arithmetic operations per solution produced. In particular, when
  $M>0$ is fixed, it requires only a constant number of operations per
  solution.
\end{proposition}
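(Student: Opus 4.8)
The plan is to generalize the proof of Proposition~\ref{prop:up-rect-Zi}: enumerate the solutions one \emph{column} at a time, and use the uprightness of $A$ to control how many columns must be examined.

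First I would compute the bounding box $\BBox(A)=[x_1,x_2]\times[y_1,y_2]$ (available from the input conventions described in the preamble to Section~\ref{sect:grid-pb-Zi}), and, after possibly swapping the two coordinate axes — legitimate, since reflection across the diagonal is a symmetry of $\Zi$ — arrange that $x_2-x_1\leq y_2-y_1$. The algorithm then iterates over the integers $x_0$ with $\lceil x_1\rceil\leq x_0\leq\lfloor x_2\rfloor$; for each such $x_0$ it calls the line-intersection oracle on the vertical line $x=x_0$ to obtain the interval $I_{x_0}=\s{y\in\R\such (x_0,y)\in A}$, and then outputs every $(x_0,b)$ with $b\in\Z\cap I_{x_0}$ — a one-dimensional grid problem on an interval, solved just as in Proposition~\ref{prop:up-rect-Zi}, in $O(1)$ operations per integer. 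Correctness is immediate: since $A\seq\BBox(A)$, every solution $(a,b)\in A\cap\Zi$ has $a$ among the enumerated integers and $b\in I_a$, while conversely everything output lies in $A\cap\Zi$ by definition of $I_{x_0}$.

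For the complexity, write $\Delta x=x_2-x_1$, $\Delta y=y_2-y_1$, let $m$ be the number of columns examined and $N$ the number of solutions. Each column costs $O(1)$ for the oracle call plus $O(1)$ per solution it contributes, so the running time is $O(m+N)$, and since $m\leq\Delta x+1$ it suffices to show $m=O((N+1)/M)$; this yields the claimed $O(1/M)$ operations per solution (and a constant number when $M$ is fixed, since $\area(A)/\area(\BBox(A))\leq 1$ forces $M\leq 1$). Here I would combine two ingredients: uprightness, which gives $\area(A)\geq M\,\area(\BBox(A))=M\,\Delta x\,\Delta y$; and the standard lattice-point estimate that a planar convex set $K$ contains at least $\area(K)-c\cdot(\text{perimeter of }K)$ points of $\Z^2$ for an absolute constant $c$, so that, the perimeter of $A$ being at most that of $\BBox(A)$, we get $N\geq\area(A)-2c(\Delta x+\Delta y)$. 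Then split into cases. If $M\,\Delta x$ is below a suitable absolute threshold ($8c$, say), then $m\leq\Delta x+1=O(1/M)=O((N+1)/M)$ directly. Otherwise $\Delta x>1$ and $\Delta y\geq\Delta x$, whence $N\geq M\,\Delta x\,\Delta y-4c\,\Delta y=\Delta y(M\,\Delta x-4c)\geq\tfrac12 M\,\Delta x\,\Delta y\geq\tfrac14 M\,m$, so $m\leq 4N/M$; either way $m=O((N+1)/M)$.

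The main obstacle is exactly this last point: a priori the algorithm could waste time on columns whose slice $I_{x_0}$ contains no integer, and for a general convex set many such empty interior columns can occur (a long, thin, slanted strip can weave between lattice points). Uprightness is the hypothesis that rules this out — it prevents $A$ from being a thin sliver relative to $\BBox(A)$ — and the lattice-point estimate is what turns the resulting area lower bound into a lower bound on $N$ proportional (up to the factor $M$) to the number of columns. Pinning down that estimate with explicit constants, carrying out the case split cleanly, and checking that computing $\BBox(A)$ and each slice $I_{x_0}$ genuinely costs $O(1)$ under the stated input conventions are the remaining details.
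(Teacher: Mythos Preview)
Your argument is correct and essentially in the same spirit as the paper's, but both the algorithm and the analysis differ in detail. The paper simply enumerates all lattice points of $\BBox(A)$ using Proposition~\ref{prop:up-rect-Zi} and tests membership in $A$ for each; it then asserts, rather informally, that ``with the exception of trivial cases, when the number of rows or columns is very small, $M$-uprightness and convexity ensure that the proportion of candidates $u$ that are solutions for $A$ is approximately $M:1$,'' which immediately gives the $O(1/M)$ bound per solution. Your version instead iterates over columns and uses the line-intersection oracle to avoid ever touching points of $\BBox(A)\setminus A$ within a column, so the only waste is on empty columns; you then control that waste via an explicit lattice-point estimate of Nosarzewska type together with a case split on the size of $M\,\Delta x$. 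Your analysis is more rigorous than the paper's sketch---you actually pin down what the ``trivial cases'' are and why the area-versus-perimeter estimate handles them---at the cost of a bit more machinery. Both arrive at the same $O(1/M)$ conclusion.
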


\begin{proof}
  By Proposition~\ref{prop:up-rect-Zi}, we can efficiently enumerate
  the solutions of the grid problem for $\BBox(A)$. For each such
  candidate solution $u$, we only need to check whether $u$ is also a
  solution for $A$.  To establish the efficiency of the algorithm, we
  need to ensure that the total number of solutions is not too small
  in relation to the total number of candidates produced. To see this,
  note that, with the exception of trivial cases, when the number of
  rows or columns is very small, $M$-uprightness and convexity ensure
  that the proportion of candidates $u$ that are solutions for $A$ is
  approximately $M : 1$. Therefore, the runtime per solution differs
  from that of Proposition~\ref{prop:up-rect-Zi} by at most a factor
  of $O(1/M)$.
\end{proof}

Figure~\ref{fig:grid-pb-Zi-Bbox} shows three different examples of
grid problems over $\Zi$. The sets $A_i$ are again shown in red, for
$i=1,2,3$, and their bounding boxes are shown in outline. The typical
case of an upright set is $A_1$. Here, a fixed proportion of grid
points from the bounding box of $A_1$ are elements of $A_1$. The
exceptional case of an upright set is $A_2$. Its bounding box spans
only two columns of the grid. Therefore, although the bounding box
contains many grid points, $A_2$ does not. However, this case is
easily dealt with by solving the problem in a lower dimension for
each of the grid columns separately. Finally, the set $A_3$ is not
upright. In this case, Lemma~\ref{prop:up-set-Zi} is not helpful, and
a priori, it could be a difficult problem to find grid points in
$A_3$.

% ......................................................................
\begin{figure}
  \[
  \mp{0.8}{\scalebox{0.8}{\begin{tikzpicture}[scale=0.9]

% Ellipses:
% Debug: 0.2 0.5 0.30000000000000004 (-0.2,-0.0) (-0.0,0.19999999999999996) (-1.414213562373095,-0.0) (-0.0,1.8257418583505536)
\fill[color=red!50] 
(-4.414213562373095,2.3) .. controls 
  (-4.414213562373095,3.3083293854947433) and 
  (-3.78104858350254,4.125741858350553) .. 
(-3.0,4.125741858350553) .. controls 
  (-2.21895141649746,4.125741858350553) and 
  (-1.585786437626905,3.3083293854947433) .. 
(-1.585786437626905,2.3) .. controls 
  (-1.585786437626905,1.2916706145052563) and 
  (-2.21895141649746,0.47425814164944624) .. 
(-3.0,0.47425814164944624) .. controls 
  (-3.78104858350254,0.47425814164944624) and 
  (-4.414213562373095,1.2916706145052563) .. 
(-4.414213562373095,2.3) -- cycle;
% Debug: 6.92 7.0 8.000000000000007e-2 (-6.92,-0.0) (-0.0,6.92) (-0.3779644730092272,-0.0) (-0.0,3.535533905932736)

\draw[fill=red!50, color=red!50] (3.5,0.7) ellipse (0.5 and 3.5);

% Debug: 16.255674739806633 16.321949925903315 6.627518609668215e-2 (-5.7,-13.921949925903315) (13.921949925903316,-5.7) (-9.378563018651263e-2,-0.22906646442559916) (3.594777835731891,-1.4717933746872234)
\fill[color=red!50] 
(0.5890570822881064,-2.2190159580867657) .. controls 
  (2.574398059992575,-3.031864993828518) and 
  (4.225824174901229,-3.588252783348306) .. 
(4.27762054820651,-3.4617428683483897) .. controls 
  (4.3294169215117915,-3.3352329533484735) and 
  (2.7619693203656004,-2.57373206497732) .. 
(0.7766283426611317,-1.7608830292355675) .. controls 
  (-1.2087126350433368,-0.9480339934938151) and 
  (-2.8601387499519904,-0.39164620397402683) .. 
  (-2.9119351232572717,-0.5181561189739432) .. controls 
  (-2.963731496562553,-0.6446660339738596) and 
  (-1.3962838954163623,-1.4061669223450133) .. 
(0.5890570822881064,-2.2190159580867657) -- cycle;

% Bounding boxes:
\draw[color=red] 
(-4.414213562373095,0.474258141649446) -- 
(-1.5857864376269049,0.474258141649446) -- 
(-1.5857864376269049,4.125741858350554) -- 
(-4.414213562373095,4.125741858350554) -- cycle;

\draw[color=red] 
(3,4.2) --
(4,4.2) --
(4,-2.8) --
(3,-2.8) -- cycle;
%(3.3898024799571416,-2.7855339059327373) -- 
%(4.145731425975596,-2.7855339059327373) -- 
%(4.145731425975596,4.285533905932738) -- 
%(3.3898024799571416,4.285533905932738) -- cycle;

\draw[color=red] 
(-2.913158320992765,-3.4794618913596915) -- 
(4.278843745942003,-3.4794618913596915) -- 
(4.278843745942003,-0.5004370959626416) -- 
(-2.913158320992765,-0.5004370959626416) -- cycle;

% Labels:
\draw (-3,2.5) node {$A_1$};
\draw (3.5,1.5) node {$A_2$};
\draw (1.7,-2.4) node {$A_3$};

% The labelled x-axis and y-axis:
\draw[->] (-4.8,0) -- (4.8, 0);
\draw (-4,0) -- (-4,-0.1) node[below] {\small $-4$};
\draw (-3,0) -- (-3,-0.1) node[below] {\small $-3$};
\draw (-2,0) -- (-2,-0.1) node[below] {\small $-2$};
\draw (-1,0) -- (-1,-0.1) node[below] {\small $-1$};
\draw (0,0) -- (0,-0.1) node[below] {\small $0$};
\draw (1,0) -- (1,-0.1) node[below] {\small $1$};
\draw (2,0) -- (2,-0.1) node[below] {\small $2$};
\draw (3,0) -- (3,-0.1) node[below] {\small $3$};
\draw (4,0) -- (4,-0.1) node[below] {\small $4$};
\draw[->] (0,-4.5) -- (0,4.5);
\draw (0,-4) -- (-0.1,-4) node[left] {\small $-4$};
\draw (0,-3) -- (-0.1,-3) node[left] {\small $-3$};
\draw (0,-2) -- (-0.1,-2) node[left] {\small $-2$};
\draw (0,-1) -- (-0.1,-1) node[left] {\small $-1$};
\draw (0,0) -- (-0.1,0) node[left] {\small $0$};
\draw (0,1) -- (-0.1,1) node[left] {\small $1$};
\draw (0,2) -- (-0.1,2) node[left] {\small $2$};
\draw (0,3) -- (-0.1,3) node[left] {\small $3$};
\draw (0,4) -- (-0.1,4) node[left] {\small $4$};

% The points of the integer grid:
\foreach \x in {-4,...,4}
  \foreach \y in {-4,...,4}
    {\fill (\x,\y) circle (.1);}
    
\end{tikzpicture}}}
  \]
  \caption{Grid problems over $\Zi$ for upright and non-upright sets.}
  \label{fig:grid-pb-Zi-Bbox}
  \rule{\textwidth}{0.1mm}
\end{figure}
% ......................................................................

% --------------------------------------------------------------------
\subsection{Grid operators}
\label{ssect:grid-op-Zi}

The method of the previous subsection can be further generalized by
using certain linear transformations to turn non-upright sets into
upright sets. The linear transformations that are useful for this
purpose are \emph{special grid operators}.

\begin{definition}[Grid operator]
  \label{def:grid-operator}
  A \emph{grid operator} is an integer matrix, or equivalently, a
  linear operator, that maps $\Z^2$ to itself. A grid operator $\Gop$
  is called \emph{special} if it has determinant $\pm 1$, in which
  case $\Gop\inv$ is also a grid operator.
\end{definition}

\begin{remark}
  \label{rem:grid-of-ell}
  If $A$ is a subset of $\R^2$ and $\Gop$ is a grid operator, then
  $\Gop(A)$, the direct image of $A$, is defined as usual by
  $\Gop(A)=\s{\Gop(v)\such v\in A}$.
\end{remark}

\begin{remark}
  \label{rem:sol-grid-op}
  The interest in special grid operators lies in the fact that $u$ is
  a solution to the grid problem over $\Zi$ for $A$ if and only if
  $\Gop(u)$ is a solution to the grid problem over $\Zi$ for
  $\Gop(A)$.
\end{remark}

% --------------------------------------------------------------------
\subsection{Ellipses}
\label{ssect:grid-pb-Zi-ellipses}

Combining the results of the previous two subsections, we know that
the grid problem over $\Zi$ for $A$ can be solved efficiently provided
that we can find a operator $\Gop$ such that $\Gop(A)$ is sufficiently
upright. In this subsection, we show that if $A$ is an ellipse, then
this can always be done.

\begin{definition}[Ellipse]
  \label{def:ellipse}
  Let $D$ be a positive definite real $2\by 2$-matrix with non-zero
  determinant, and let $p\in\R^2$ be a point. The \emph{ellipse
    defined by $D$ and centered at $p$} is the set
  \[
  E = \s{u \in \R^2 \such (u-p)\da D(u -p)\leq 1}.
  \]
\end{definition}

\begin{remark}
  \label{rem:grid-op-ellipse}
  If $\Gop$ is a grid operator and $E$ is an ellipse centered at the
  origin and defined by $D$, then $\Gop(E)$ is an ellipse centered at
  the origin and defined by $(\Gop\inv)\da D\Gop\inv$.
\end{remark}

The notion of uprightness introduced above was defined for an
arbitrary bounded convex subset of $\R^2$. If the set in question is
an ellipse, we can expand the definition of uprightness into an
explicit expression.

\begin{proposition}
  \label{prop:uprightness-ellipse}
  Let $E$ be the ellipse defined by $D$ and centered at $p$, with
  \[
  D = \left[
    \begin{array}{cc}
      a & b  \\
      b & d
    \end{array} 
  \right].
  \]
  Then $\up(E) = \frac{\pi}{4} \sqrt{\frac{\det(D)}{ad}}$.
\end{proposition}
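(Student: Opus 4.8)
The plan is to compute the two areas appearing in $\up(E) = \area(E)/\area(\BBox(E))$ separately and then divide. Since both the area of a set and its bounding box are invariant under translation, I would first assume without loss of generality that $p=0$, so that $E = \s{u\in\R^2 \such u\da D u \leq 1}$.

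For the numerator I would invoke the standard fact that the area of the ellipse defined by $D$ is $\pi/\sqrt{\det(D)}$. This follows by diagonalizing the positive definite matrix $D$ via an orthogonal change of coordinates, which preserves area: in the new coordinates $E$ becomes an axis-aligned ellipse with semi-axes $1/\sqrt{\lambda_1}$ and $1/\sqrt{\lambda_2}$, where $\lambda_1,\lambda_2>0$ are the eigenvalues of $D$, and hence has area $\pi/\sqrt{\lambda_1\lambda_2} = \pi/\sqrt{\det(D)}$.

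For the denominator I would compute $\BBox(E) = [x_1,x_2]\by[y_1,y_2]$ explicitly. Because $E$ is symmetric about the origin, $x_1=-x_2$ and $y_1=-y_2$, so $\area(\BBox(E)) = 4x_2y_2$. To find $x_2 = \max\s{u_1 \such (u_1,u_2)\in E}$, I would note that the vertical line $u_1 = x$ meets $E$ precisely when the quadratic $d\,u_2^2 + 2bx\,u_2 + (ax^2-1)$ in $u_2$ has a real root, i.e.\ when its discriminant $4b^2x^2 - 4d(ax^2-1)$ is nonnegative; this rearranges to $(ad-b^2)x^2 \leq d$, giving $x_2 = \sqrt{d/\det(D)}$. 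The symmetric argument (exchanging the two coordinates) yields $y_2 = \sqrt{a/\det(D)}$, hence $\area(\BBox(E)) = 4\sqrt{ad}/\det(D)$.

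Dividing the two expressions gives
\[
\up(E) = \frac{\pi/\sqrt{\det(D)}}{4\sqrt{ad}/\det(D)} = \frac{\pi}{4}\sqrt{\frac{\det(D)}{ad}},
\]
which is the claim. No step here is a genuine obstacle; the only point requiring a little care is the bounding-box computation, where one uses that $D$ positive definite forces $a>0$, $d>0$, and $\det(D)=ad-b^2>0$, so that the expressions under the square roots are positive and $x_2,y_2$ are well defined.
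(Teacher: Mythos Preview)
Your proof is correct and follows essentially the same approach as the paper: compute $\area(E)=\pi/\sqrt{\det(D)}$ and $\area(\BBox(E))=4\sqrt{ad}/\det(D)$ and divide. The paper simply states these two area formulas without justification, whereas you supply the standard derivations (diagonalization for the ellipse area, the discriminant argument for the bounding box), so your argument is a fleshed-out version of the same proof.
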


\begin{proof}
  We can compute the area of $E$ and the area of its bounding box
  using $D$. Indeed, we have $\area(E)= \pi/\sqrt{\det(D)}$ and $\area
  (\BBox(E))=4\sqrt{ad} /\det(D)$. Substituting these in
  Definition~\ref{def:upright}, yields the desired expression for
  uprightness
  \[
  \up(E) = \frac{\area(E)} {\area(\BBox(E))} = \frac{\pi}{4} \sqrt{
    \frac{\det(D)}{ad} }.  {\hspace{1.5cm}
    \begin{tikzpicture}[baseline=0, xscale=1.6, yscale=1]
      \draw[fill=yellow!20] (-1,-1) -- (1,-1) -- (1,1) -- (-1,1) --
      cycle; \draw[fill=red!50, rotate=45, yscale=0.4] (0,0) circle
      (1.3131); \draw (0,0) node {\small $E$}; \draw (-1,.75)
      node[right] {\small $\BBox(E)$};
    \end{tikzpicture}}
  \]
\end{proof}

\begin{remark}
  \label{rem:uprightness-invariant}
  The uprightness of an ellipse is invariant under translation and
  scalar multiplication.
\end{remark}

\begin{definition}[Skew]
  \label{def:skew}
  The \emph{skew} of a matrix is the product of its anti-diagonal
  entries. The \emph{skew} of an ellipse defined by $D$ is the skew of
  $D$.
\end{definition}

By Proposition~\ref{prop:uprightness-ellipse}, the skew of an ellipse
is small if and only if its uprightness is large. Our strategy for
increasing the uprightness will therefore be to reduce the skew.

\begin{proposition}
  \label{prop:to-upright-Zi}
  Let $E$ be an ellipse. There exists a grid operator $\Gop$ such that
  $\Gop(E)$ is $1/2$-upright. Moreover, if $E$ is $M$-upright, then
  $\Gop$ can be efficiently computed in $O(\log(1/M))$ arithmetic
  operations.
\end{proposition}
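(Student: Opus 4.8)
The plan is to normalize $E$ and then apply a Gauss-style reduction built from two families of special grid operators: the shifts $\shift^{n}=\left[\begin{smallmatrix}1&n\\0&1\end{smallmatrix}\right]$ ($n\in\Z$) and the quarter-turn $X=\left[\begin{smallmatrix}0&-1\\1&0\end{smallmatrix}\right]$, both of determinant $1$. First I would use Remark~\ref{rem:uprightness-invariant}: uprightness is unchanged by translation and scaling, and if $\Gop(cE+p)$ is $1/2$-upright then so is $\Gop(E)$, since $\Gop(cE+p)=c\,\Gop(E)+\Gop(p)$. So we may assume $E$ is centred at the origin and defined by a positive definite $D=\left[\begin{smallmatrix}a&b\\b&d\end{smallmatrix}\right]$ with $\det D=1$; then $a,d>0$, $\sk(D)=b^{2}$, and $ad=\det D+b^{2}=1+b^{2}$, so by Proposition~\ref{prop:uprightness-ellipse} $\up(E)=\frac{\pi}{4}\,\frac{1}{\sqrt{ad}}=\frac{\pi}{4}\,\frac{1}{\sqrt{1+b^{2}}}$. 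Making $E$ upright thus amounts to making its skew small.

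Using Remark~\ref{rem:grid-op-ellipse} (that $\Gop(E)$ is defined by $(\Gop\inv)\da D\,\Gop\inv$), one checks that $\shift^{n}$ sends $(a,b,d)$ to $(a,\ b-na,\ an^{2}-2bn+d)$ — fixing $a$, shifting $b$ by a multiple of $a$ — and that $X$ sends $(a,b,d)$ to $(d,\ -b,\ a)$, swapping $a$ and $d$; both preserve $\det D$. The reduction loop, recording the composite of the operators applied in $\Gop$, is: while $a>d$, apply $X$; otherwise, while $|2b|>a$, apply $\shift^{n}$ for $n$ the integer nearest $b/a$, so that $|b-na|\le a/2$; halt once $a\le d$ and $|2b|\le a$. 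Each pass costs $O(1)$ arithmetic operations, and the loop is precisely Gauss's reduction of the positive definite binary form $ax^{2}+2bxy+dy^{2}$, so it terminates.

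At termination $a\le d$, $b^{2}\le a^{2}/4$, and $ad-b^{2}=\det D=1$, hence $a^{2}\le ad=1+b^{2}\le 1+a^{2}/4$, so $a^{2}\le 4/3$, then $b^{2}\le a^{2}/4\le 1/3$ and $ad=1+b^{2}\le 4/3$. Therefore $\up(\Gop(E))=\frac{\pi}{4}\,\frac{1}{\sqrt{ad}}\ge\frac{\pi}{4}\cdot\frac{\sqrt 3}{2}=\frac{\pi\sqrt 3}{8}>\tfrac12$, and $\Gop$, a product of $\shift^{n}$'s and copies of $X$, is a special grid operator. This settles the first claim.

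For the complexity, suppose $E$ is $M$-upright. The loop behaves like the Euclidean algorithm: a shift step reduces $|b|$ modulo $a$, and between two consecutive swaps the coefficient $a$ drops by at least a factor of $2$ as long as it exceeds $2\sqrt{\det D}$ (since then $\det D<a^{2}/4$ forces $d=(\det D+b^{2})/a<a/2$ after the intervening shift). Hence the loop runs for $O\!\big(\log(|b_{0}|/\sqrt{\det D}+2)\big)$ iterations, where $b_{0}$ is the initial off-diagonal entry; after the normalization $\det D=1$, $M$-uprightness gives $ad\le\pi^{2}/(16M^{2})$, so $b_{0}^{2}=ad-1=O(1/M^{2})$ and the loop uses $O(\log(1/M))$ iterations, i.e.\ $O(\log(1/M))$ arithmetic operations in all. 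The step I expect to need the most care is exactly this last estimate — checking that Gauss reduction on real (rather than integer) coefficients still terminates in $O(\log(1/M))$ steps; the uprightness bound itself is the short computation above.
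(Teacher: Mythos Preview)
Your proof is correct and follows essentially the same Gauss-reduction strategy as the paper, with cosmetic differences: the paper uses the two shears $A=\left[\begin{smallmatrix}1&1\\0&1\end{smallmatrix}\right]$ and $B=\left[\begin{smallmatrix}1&0\\1&1\end{smallmatrix}\right]$ (applying $A^n$ when $a\le d$ and $B^n$ when $d<a$) rather than your shift-plus-rotation pair, and it tracks the skew $b^2$ directly rather than the diagonal entry $a$. The paper's bookkeeping is slightly cleaner on exactly the point you flagged: instead of invoking Gauss reduction and worrying about termination over the reals, it simply shows that whenever $\sk(E)=b^2\ge 1$, one shear step yields $\sk(G_1(E))\le (1+b^2)/4\le b^2/2$, so the skew halves each iteration and drops below $1$ in $O(\log b_0^2)=O(\log(1/M))$ steps; at that point $ad=1+b^2\le 2$ and $\up\ge\pi/(4\sqrt{2})>1/2$. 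Your own halving argument for $a$ (valid once $a>2$) achieves the same bound, and in fact you need not run to full Gauss-reduced form: as soon as $a\le 2$ and one shift has been applied, $|b|\le 1$ already suffices for $1/2$-uprightness, so the real-coefficient termination issue never arises.
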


\begin{proof}
  Let $A$ and $B$ be the following special grid operators
  \[
  A = \begin{bmatrix}
    1 & 1 \\
    0 & 1
  \end{bmatrix}
  , \quad B = \begin{bmatrix}
    1 & 0 \\
    1 & 1
  \end{bmatrix},
  \]
  and consider an ellipse $E$ defined by $D$ and centered at
  $p$. Since uprightness is invariant under translation and scaling,
  we may without loss of generality assume that $E$ is centered at the
  origin and that $D$ has determinant 1. Suppose moreover that the
  entries of $D$ are as follows
  \[
  \begin{bmatrix}
    a & b \\
    b & d
  \end{bmatrix}.
  \]
  Note that $D$ can be written in this form because it is
  symmetric. We first show that there exists a grid operator $\Gop$
  such that $\sk(G(E))\leq 1$. Indeed, assume that $\sk(E) = b^2 \geq
  1$. In case $a\leq d$, choose $n$ such that $|na+b|\leq a/2$. Then
  we have:
  \[
  {A^n}\da D A^n = \begin{bmatrix}
    \cdots & na+b \\
    na+b & \cdots
  \end{bmatrix}.
  \]
  Therefore, using Remark~\ref{rem:grid-op-ellipse} with
  $\Gop_1=(A^n)\inv$, we have:
  \begin{align*}
    \sk(\Gop_1(E)) = (na+b)^2 \leq \frac{a^2}{4} \leq \frac{ad}{4} & =
    \frac{1+b^2}{4}  \\
    & =  \frac{1+\sk (E)}{4} \\
    & \leq \frac{2~\sk (E)}{4} = \frac{1}{2}\sk(E).
  \end{align*}
  Similarly, in case $d<a$, then choose $n$ such that $|nd+b|\leq
  d/2$.  A similar calculation shows that in this case, with
  $\Gop_1=(B^n)\inv$, we get $\sk(\Gop_1(E)) \leq
  \frac{1}{2}\sk(E)$. In both cases, the skew of $E$ is reduced by a
  factor of 2 or more. Applying this process repeatedly yields a
  sequence of operators $\Gop_1,\ldots,\Gop_m$ and letting
  $\Gop=\Gop_m\cdot\ldots\cdot\Gop_1$ we find that $\sk(\Gop(E)) \leq
  1$.
  
  Now let $D'$ be the matrix defining $\Gop(E)$, with entries as
  follows:
  \[
  D' =\begin{bmatrix}
    \alpha & \beta \\
    \beta & \delta
  \end{bmatrix}.
  \]
  Then $\sk(\Gop(E)) \leq 1$ implies that $\beta^2\leq 1$. Moreover,
  since $A$ and $B$ are special grid operators we have
  $\det(D')=\alpha\delta -\beta^2=1$. Using the expression from
  Proposition~\ref{prop:uprightness-ellipse} for the uprightness of
  $\Gop(E)$ we get the desired result:
  \[
  \up(\Gop(E)) = \frac{\pi}{4} \sqrt{\frac{\det(D')}{\alpha\delta}} =
  \frac{\pi}{4\sqrt{\alpha\delta}} = \frac{\pi}{4\sqrt{\beta^2+1}}
  \geq \frac{\pi}{4\sqrt{2}} \geq \frac{1}{2}.
  \]

  Finally, to bound the number of arithmetic operations, note that
  each application of $\Gop_j$ reduces the skew by at least a factor
  of 2. Therefore, the number $n$ of grid operators required satisfies
  $n\leq \log_2(\sk(E))$. Now note that since $D$ has determinant 1,
  we have
  \[
  M \leq \up(E) = \frac{\pi}{4}\frac{1}{\sqrt{ad}} =
  \frac{\pi}{4\sqrt{b^2+1}}.
  \]
  Therefore $\sk(E)=b^2\leq (\pi^2/16M^2)-1$, so that the computation
  of $\Gop$ requires $O(\log(1/M))$ arithmetic operations.
\end{proof}

% --------------------------------------------------------------------
\subsection{The enclosing ellipse of a bounded convex set}
\label{ssect:enclosing}

The final step in our solution of grid problems over $\Z[i]$ is to
generalize Proposition~\ref{prop:to-upright-Zi} from ellipses to
arbitrary bounded convex sets with non-empty interior. This can be
done because every such set $A$ can be inscribed in an ellipse whose
area is not much greater than that of $A$, as stated in the following
proposition, which was proved in \cite{gridsynth}.

\begin{proposition}
  \label{prop:enclosing-ellipse}
  Let $A$ be a bounded convex subset of $\R^2$ with non-empty
  interior. Then there exists an ellipse $E$ such that $A\seq E$, and
  such that
  \begin{equation}
    \label{eqn:area-EA}
    \area(E) \leq \frac{4\pi}{3\sqrt 3}\area(A). 
  \end{equation}
\end{proposition}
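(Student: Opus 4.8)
The plan is to take $E$ to be the \emph{L\"owner ellipse} of $A$, the (unique) ellipse of minimal area containing $A$; existence follows by compactness and uniqueness from the strict concavity of $D\mapsto\log\det D$ on positive-definite matrices. Since an invertible affine map multiplies every area by the modulus of its determinant, it preserves the ratio $\area(E)/\area(A)$ and carries the L\"owner ellipse of $A$ to that of the image. So I would first apply such a map to reduce to the case $E=\Disk$, the closed unit disk, and then the claimed inequality (\ref{eqn:area-EA}) becomes exactly: $\area(A)\ge \frac{3\sqrt 3}{4}$ whenever $A\seq\Disk$ is convex with non-empty interior and $\Disk$ is the L\"owner ellipse of $A$. (The bound is tight: for $A$ an equilateral triangle inscribed in $\Disk$, three-fold symmetry forces the L\"owner ellipse to be a circle, hence the circumscribed one, and $\area(A)=\frac{3\sqrt3}{4}$.)

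Next I would use the first-order optimality of the L\"owner ellipse, i.e.\ John's theorem in its circumscribed form: minimality of $\Disk$ among enclosing ellipses forces the contact set $A\cap\partial\Disk$ to be isotropic, so there are finitely many contact points $u_1,\dots,u_k\in A\cap\partial\Disk$ (by a Carath\'eodory argument one may take $k\le 5$) and positive weights $c_1,\dots,c_k$ with $\sum_i c_i u_i=0$ and $\sum_i c_i\,u_i u_i\da = I_2$; taking the trace gives $\sum_i c_i=2$. One may either cite John's theorem or re-derive this by differentiating the determinant of a one-parameter family of enclosing ellipses at $\Disk$. Since each $u_i$ lies in the convex set $A$, we have $\operatorname{conv}\{u_1,\dots,u_k\}\seq A$, so it remains to bound $\area(\operatorname{conv}\{u_1,\dots,u_k\})$ from below by $\frac{3\sqrt3}{4}$.

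This is the heart of the matter, and the plan is to prove the purely geometric statement: any unit vectors $u_1,\dots,u_k$ admitting positive weights with $\sum c_i u_i=0$ and $\sum c_i u_i u_i\da = I_2$ satisfy $\area(\operatorname{conv}\{u_i\})\ge \frac{3\sqrt3}{4}$, with equality only for a rotation of $\{(\cos\tfrac{2\pi j}{3},\sin\tfrac{2\pi j}{3})\}_{j=0,1,2}$. I would argue by compactness: the space of admissible configurations with $k\le 5$ is compact, so an area-minimizing configuration exists; if it has more than three active points, a perturbation that slides points along the circle while re-solving the linear system for the weights strictly decreases the hull area while staying admissible, so the minimizer is a triangle; and for $k=3$ one computes directly — writing $u_j=(\cos\alpha_j,\sin\alpha_j)$, the isotropy condition unwinds to $\sum_j c_j(\cos 2\alpha_j,\sin 2\alpha_j)=0$, which together with $\sum_j c_j u_j=0$ and $c_j>0$ pins the angles to a rotation of $\{0,\tfrac{2\pi}{3},\tfrac{4\pi}{3}\}$, giving a triangle of area $\frac{3\sqrt3}{4}$. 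Chaining the inequalities, $\area(A)\ge \area(\operatorname{conv}\{u_i\})\ge\frac{3\sqrt3}{4}=\frac{3\sqrt3}{4\pi}\area(\Disk)$, which is (\ref{eqn:area-EA}) after undoing the affine normalization.

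I expect the extremality step in the last paragraph — rigorously ruling out non-triangular area-minimizers, where one has to track the vertices and the accompanying weights simultaneously — to be the main obstacle. A cleaner route that bypasses it is to invoke the sharp planar form of John's theorem directly (the L\"owner ellipse of a planar convex body has area at most $\frac{4\pi}{3\sqrt3}$ times the body's, with equality iff the body is a triangle), after which only the affine normalization of the first paragraph is needed.
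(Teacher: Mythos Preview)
The paper does not actually prove this proposition; it merely states that it ``was proved in \cite{gridsynth}'' and then remarks that the bound is sharp for the equilateral triangle. So there is no in-paper argument to compare you against.

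Your L\"owner--John approach is the standard one and is sound in outline. The affine normalisation is correct, the contact-point conditions are stated correctly, and your $k=3$ computation does go through: after setting $\alpha_1=0$, the centroid and off-diagonal isotropy conditions force $\alpha_3=-\alpha_2$ and $c_2=c_3$, and then equating the two resulting expressions for $c_1$ gives $\cos 2\alpha_2=\cos\alpha_2$, whence $\alpha_2=2\pi/3$.

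The genuinely incomplete step is the reduction from $k>3$ to $k=3$. The sentence ``a perturbation that slides points along the circle while re-solving the linear system for the weights strictly decreases the hull area while staying admissible'' is a hope, not an argument. For $k=4$ the four homogeneous constraints determine the weights uniquely (up to scale) from the point positions, so you cannot choose them independently, and nothing you have said rules out an interior local minimum of the hull area on the admissible set. One honest route is to work on the closure of the admissible configurations for each $k$ and show the minimum is attained on the boundary (some $c_i=0$), cascading down to $k=3$; but checking that there are no interior minimisers is precisely the content of the sharp bound, not a step one can wave through. Your own fallback---invoking the sharp planar form of John's theorem directly---is the clean way to finish, and is what most treatments do.
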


Note that $\frac{4\pi}{3\sqrt 3} \approx 2.4184$.  We remark that the
bound in Proposition~\ref{prop:enclosing-ellipse} is sharp; the bound
is attained in case $A$ is an equilateral triangle. In this case, the
enclosing ellipse is a circle, and the ratio of the areas is exactly
$\frac{4\pi}{3\sqrt 3}$.
\[
\m{\begin{tikzpicture}[scale=0.7] \draw[fill=yellow!20] (0,0) circle
    (1); \draw[fill=blue!10] (1,0) -- (-0.5,.8660254037) --
    (-0.5,-.8660254037) -- cycle;
  \end{tikzpicture}}
\]

% --------------------------------------------------------------------
\subsection{General solution to grid problems over
  \texorpdfstring{$\Zi$}{Z[i]}}

We can now describe our algorithm to solve Problem~\ref{pb:grid-Zi}.

\begin{proposition}
  \label{prop:algo-grid-pb-Zi}
  There is an algorithm which, given a bounded convex subset $A$ of
  $\R^2$ with non-empty interior, enumerates all solutions of the grid
  problem over $\Zi$ for $A$. Moreover, if $A$ is $M$-upright, then
  the algorithm requires $O(\log(1/M))$ arithmetic operations overall,
  plus a constant number of arithmetic operations per solution
  produced.
\end{proposition}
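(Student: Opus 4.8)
The plan is to assemble the results of the preceding subsections into a single procedure: enclose the input set in an ellipse of comparable area, straighten that ellipse with a special grid operator, enumerate the grid points of the straightened ellipse, and map them back, discarding those that do not lie in $A$.

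In detail, given $A$ I would first apply Proposition~\ref{prop:enclosing-ellipse} to obtain an ellipse $E$ with $A\seq E$ and $\area(E)\leq\frac{4\pi}{3\sqrt3}\area(A)$, then apply Proposition~\ref{prop:to-upright-Zi} to $E$ to obtain a special grid operator $\Gop$ such that $\Gop(E)$ is $1/2$-upright. Since $\Gop(E)$ is an ellipse, hence a bounded convex set with non-empty interior, Proposition~\ref{prop:up-set-Zi} lets me enumerate all grid points of $\Gop(E)$ with a constant number of arithmetic operations per point; because $\Gop$ is special, these are exactly the points $\Gop(u)$ with $u\in E\cap\Zi$ (Remark~\ref{rem:sol-grid-op}, Definition~\ref{def:grid-operator}). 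For each such point I would compute $u=\Gop\inv(\Gop(u))$ and test, using the assumed procedure for deciding membership in $A$, whether $u\in A$, outputting $u$ if so. As $A\seq E$, every solution of the grid problem for $A$ is a grid point of $E$, so this procedure enumerates exactly $A\cap\Zi$.

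For the complexity there are two points to control. First, computing $\Gop$ via Proposition~\ref{prop:to-upright-Zi} costs $O(\log(1/\up(E)))$, so I need $\up(E)=\Omega(\up(A))$; this holds because the enclosing ellipse of Proposition~\ref{prop:enclosing-ellipse} can be chosen so that $\area(\BBox(E))$ is within an absolute constant factor of $\area(\BBox(A))$, whence $\up(E)=\area(E)/\area(\BBox(E))\geq\area(A)/(O(1)\,\area(\BBox(A)))=\Omega(\up(A))$, and since $\up(A)\geq M$ the cost is $O(\log(1/M))$. Second, I must check that the grid points of $\Gop(E)$ examined by the enumeration are not too numerous relative to the solutions produced. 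Here the point is that $\Gop(A)\seq\Gop(E)$, that $\Gop$ preserves areas and the lattice, and that $\Gop(E)$ is $1/2$-upright; combined with $\area(\Gop(E))=\area(E)\leq\frac{4\pi}{3\sqrt3}\area(\Gop(A))$ this forces $\Gop(A)$ itself to be $\Omega(1)$-upright with bounding box of area comparable to that of $\BBox(\Gop(E))$, so that, up to the usual trivial cases in which a bounding box spans only a bounded number of grid rows or columns (which contribute an additive constant), the number of grid points of $\Gop(E)$ is $O(1)$ times the number of grid points of $\Gop(A)$, i.e.\ the number of solutions. Combining, the enclosing-ellipse and straightening steps cost $O(\log(1/M))$, and the enumeration and filtering cost $O(1)$ per solution produced.

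The main obstacle is the second complexity point: making the lattice-point count comparing $|\Gop(E)\cap\Zi|$ with $|\Gop(A)\cap\Zi|$ precise, in particular isolating and separately bounding the degenerate configurations with very few grid rows or columns, exactly as in the proof of Proposition~\ref{prop:up-set-Zi}. The first point, that $\up(E)=\Omega(\up(A))$, is routine once one records the reverse inclusion implicit in the construction of the enclosing ellipse, and the remaining work is a direct chaining of the cited results together with Remark~\ref{rem:sol-grid-op}.
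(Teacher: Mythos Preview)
Your proposal is correct and follows exactly the paper's approach: enclose $A$ in an ellipse via Proposition~\ref{prop:enclosing-ellipse}, straighten it with a special grid operator from Proposition~\ref{prop:to-upright-Zi}, enumerate the grid points of the straightened ellipse via Proposition~\ref{prop:up-set-Zi}, and filter by membership in $A$ (equivalently in $\Gop(A)$, which is what the paper checks). The paper's own proof is in fact terser than yours and does not spell out the two complexity issues you raise; your analysis of them is more careful than the paper's, and for the second point note that $\Gop(A)\subseteq\Gop(E)$, $\up(\Gop(E))\geq 1/2$, and $\area(\Gop(E))\leq c\,\area(\Gop(A))$ already give $\up(\Gop(A))\geq 1/(2c)$, which is the crux.
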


\begin{proof}
  Given $A$, with an enclosing ellipse $A'$ whose area only exceeds
  that of $A$ by a fixed constant factor $N$, use
  Proposition~\ref{prop:to-upright-Zi} to find a grid operator $\Gop$
  such that $G(A')$ is $1/2$-upright. Then, use
  Proposition~\ref{prop:up-set-Zi} to enumerate the grid points of
  $G(A')$. For each grid point $u$ found, check whether it belongs to
  $G(A)$. This is the case if and only if $G\inv(u)$ is a solution to
  the grid problem over $\Zi$ for $A$.
\end{proof}

\begin{remark}
  Note that the complexity of $O(\log(1/M))$ overall operations in
  Proposition~\ref{prop:algo-grid-pb-Zi} is exponentially better than
  the complexity of $O(1/M)$ per candidate we obtained in
  Proposition~\ref{prop:up-set-Zi}. This improvement is entirely due
  to the use of grid operators in
  Proposition~\ref{prop:to-upright-Zi}.
\end{remark}

% --------------------------------------------------------------------
\subsection{Scaled grid problems over \texorpdfstring{$\Zi$}{Z[i]}}
\label{ssect:scaled-grid-pb-Zi}

If $A$ is a convex bounded subset of $\R^2$ with non-empty interior,
then so is the set $rA$, for any non-zero real number $r$. Hence, by
the results of the previous subsection, we can solve grid problems
over $\Zi$ for $rA$ where $r$ is a non-zero scalar and $A$ is a
bounded convex subset of $\R^2$ with non empty interior. We call such
a problem a \emph{scaled grid problem over $\Zi$ for $A$ and $r$}. In
Chapter~\ref{chap:synth-V}, we will be interested in solving a
sequence of such scaled grid problems for specific values of $r$. The
reasons behind our particular choice of a sequence will be detailed in
Chapter~\ref{chap:synth-V}.

Consider the set of real numbers of the form $\rt{k}\rf{\ell}$, with
$k,\ell\in\N$ and $0\leq k\leq 2$. Note that the set
$\s{\rt{k}\rf{\ell}}_{k,\ell}$ is ordered as a subset of $\R$. In
particular, if $\rt{k}\rf{\ell} \leq \rt{k'}\rf{\ell'}$, then
$\ell\leq \ell'$. When we say that an algorithm ``enumerates all
solutions of the scaled grid problem over $\Zi$ for $A$ and
$\rt{k}\rf{\ell}$ in order of increasing $\ell$'', we mean that the
algorithm first outputs all solutions for $\ell=0$, then for $\ell=1$,
etc.

\begin{proposition}
  \label{prop:algo-grid-pb-scaled-Zi}
  There is an algorithm which, given a bounded convex subset $A$ of
  $\R^2$ with non-empty interior, enumerates (the infinite sequence
  of) all solutions of the scaled grid problem over $\Zi$ for $A$ and
  $\rt{k}\rf{\ell}$ in order of increasing $\ell$. Moreover, if $A$ is
  $M$-upright, then the algorithm requires $O(\log(1/M))$ arithmetic
  operations overall, plus a constant number of arithmetic operations
  per solution produced.
\end{proposition}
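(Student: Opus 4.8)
The plan is to reduce the scaled grid problem to the ordinary grid problem of Proposition~\ref{prop:algo-grid-pb-Zi}, while arranging that the single expensive step --- finding a good grid operator --- is performed only once and then reused for every scale. Recall that a solution of the scaled grid problem over $\Zi$ for $A$ and $r$ is exactly a grid point lying in $rA$; that is, the scaled problem for $A$ and $r$ is the ordinary problem for $rA$. First I would, once and for all, compute an enclosing ellipse $A'$ of $A$ using Proposition~\ref{prop:enclosing-ellipse}; since $A\seq A'$ and $\area(A')\le\frac{4\pi}{3\sqrt3}\area(A)$, the ellipse $A'$ is $M'$-upright with $M'$ proportional to $M$, so Proposition~\ref{prop:to-upright-Zi} produces, in $O(\log(1/M))$ arithmetic operations, a special grid operator $\Gop$ with $\Gop(A')$ being $1/2$-upright. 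The crux is that, because uprightness is invariant under scalar multiplication (Remark~\ref{rem:uprightness-invariant}) and $\Gop(rA')=r\,\Gop(A')$, the set $\Gop(rA')$ is $1/2$-upright for \emph{every} nonzero scalar $r$; hence $\Gop$ need never be recomputed.

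The algorithm then iterates over $\ell=0,1,2,\ldots$ and, for each $\ell$, over the (at most three) values $k=0,1,2$; for $r=\rt{k}\rf{\ell}$ it invokes Proposition~\ref{prop:up-set-Zi} on the $1/2$-upright set $\Gop(rA')$ to enumerate its grid points, and for each grid point $u$ it outputs $\Gop\inv(u)$ whenever $u\in\Gop(rA)$ --- equivalently whenever $\Gop\inv(u)\in rA$ --- which by Remark~\ref{rem:sol-grid-op} and the fact that $\Gop\inv$ is a grid operator makes $\Gop\inv(u)$ a solution of the scaled grid problem over $\Zi$ for $A$ and $r$. Correctness is immediate in both directions: any solution $v$ for $A$ and $r=\rt{k}\rf{\ell}$ satisfies $v\in rA\seq rA'$, so $\Gop(v)$ is a grid point of $\Gop(rA')$ and is produced; conversely every output lies in $rA$ by construction. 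The required output order follows because $\rt{k}\rf{\ell}<\rt{k'}\rf{\ell'}$ forces $\ell\le\ell'$ (indeed $\rt{2}\rf{\ell}<\rf{\ell+1}$, since $2<\sqrt5$), so processing $\ell$ in increasing order yields all solutions in order of increasing $\ell$.

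For the complexity, the only one-time cost is the $O(\log(1/M))$ operations spent on $\Gop$; thereafter all work is driven by Proposition~\ref{prop:up-set-Zi} on $1/2$-upright sets at $O(1)$ per candidate. Since $\Gop$ has determinant $\pm1$, the set $\Gop(rA)$ has area $r^2\area(A)$, a fixed fraction of $\area(\BBox(\Gop(rA')))$, so convexity guarantees that a constant proportion of enumerated candidates are genuine solutions, giving $O(1)$ amortized per solution. The main obstacle I expect is the bookkeeping for \emph{small} $\ell$: when $r$ is tiny, $\Gop(rA')$ spans only a handful of grid rows or columns, so Proposition~\ref{prop:up-set-Zi} still spends $O(1)$ on it even when it contains no solution. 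There are only finitely many such $\ell$ (once $r$ is large enough, $rA$ always contains a grid point, and from then on $\#(\Gop(rA)\cap\Z^2)$ grows like $5^{\ell}\area(A)$), so this extra overhead is a constant depending only on $A$; making the amortization precise --- checking that the value of $\ell$ at which the $N$-th solution appears is $O(\log N)$ plus a constant, so that the total overhead is absorbed into the stated bounds --- is the one place requiring a little care.
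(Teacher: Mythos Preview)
Your approach is correct and is essentially what the paper's one-line proof (``This follows from Proposition~\ref{prop:algo-grid-pb-Zi}'') leaves implicit. In particular, you have identified the point that actually makes the complexity claim work: the special grid operator $\Gop$ found via Proposition~\ref{prop:to-upright-Zi} depends only on $A$ (through its enclosing ellipse), and since uprightness is scale-invariant (Remark~\ref{rem:uprightness-invariant}), the same $\Gop$ makes $\Gop(rA')$ $1/2$-upright for every $r=\rt{k}\rf{\ell}$, so the $O(\log(1/M))$ cost is paid once rather than once per scale. Your discussion of the amortization for small $\ell$ is also more careful than the paper, which does not address it.
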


\begin{proof}
  This follows from Proposition~\ref{prop:algo-grid-pb-Zi}.
\end{proof}

We finish this subsection with some lower bounds on the number of
solutions to scaled grid problems.

\begin{remark}
  \label{rem:size-for-3-sols}
  If a bounded convex subset $A\seq\R^2$ contains a circle of radius
  $1/\rf{k}$, then the grid problem over $\Zi$ for $A$ and
  $\rt{k}\rf{\ell}$ has at least three solutions.
\end{remark}

\begin{proposition}
  \label{prop:evolution-grid-Zi}
  Let $A$ be a bounded convex subset of $\R^2$ with non-empty interior
  and assume that the scaled grid problem over $\Zi$ for $A$ and
  $\rt{k}\rf{\ell}$ has at least two distinct solutions. Then for all
  $j\geq 0$, the scaled grid problem over $\Z[i]$ for $A$ and
  $\rt{k}\rf{\ell+2j}$ has at least $5^j+1$ solutions.
\end{proposition}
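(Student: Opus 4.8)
The plan is to produce the $5^j+1$ solutions explicitly, the key point being that $\rf{2j}=(\sqrt 5)^{2j}=5^j$ is an ordinary integer. Multiplication by $5^j$ therefore maps $\Zi$ into itself, and it sends the scaled grid problem over $\Zi$ for $A$ and $r:=\rt{k}\rf{\ell}$ to the scaled grid problem over $\Zi$ for $A$ and $s:=\rt{k}\rf{\ell+2j}$, since $s=5^j r$. This is exactly why the statement jumps from $\rf{\ell}$ to $\rf{\ell+2j}$ rather than $\rf{\ell+j}$: only even powers of $\sqrt 5$ are rational.

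Concretely, first I would take two distinct solutions $u_1\neq u_2$ of the scaled grid problem over $\Zi$ for $A$ and $r$, which exist by hypothesis; thus $u_1,u_2\in\Zi$ and $u_1/r,u_2/r\in A$. Since $A$ is convex, the closed segment joining $u_1/r$ and $u_2/r$ is contained in $A$. Applying the linear map $x\mapsto sx$ (which sends $A$ into $sA$ and $u_i/r$ to $5^j u_i$), the closed segment $[5^j u_2,\,5^j u_1]$ is contained in $sA$, and its endpoints $5^j u_1\neq 5^j u_2$ lie in $\Zi$.

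Next I would count the Gaussian integers on that segment. Write $u_1-u_2=a+bi$ with $a,b\in\Z$ not both zero, put $d=\gcd(|a|,|b|)\geq 1$, so that $(a/d,\,b/d)$ is a primitive integer vector. Then
\[
w_m:=5^j u_2+m\bigl(\tfrac{a}{d}+\tfrac{b}{d}\,i\bigr),\qquad m=0,1,\dots,5^j d,
\]
are $5^j d+1$ distinct elements of $\Zi$ (distinct because $\tfrac{a}{d}+\tfrac{b}{d}i\neq 0$; and $w_{5^j d}=5^j u_2+5^j(a+bi)=5^j u_1$). Each $w_m$ equals the convex combination $5^j u_2+\tfrac{m}{5^j d}(5^j u_1-5^j u_2)$ of the endpoints, hence lies on $[5^j u_2,\,5^j u_1]\seq sA$. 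So each $w_m$ is a solution of the scaled grid problem over $\Zi$ for $A$ and $s=\rt{k}\rf{\ell+2j}$, and since $d\geq 1$ there are at least $5^j d+1\geq 5^j+1$ of them, as claimed.

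I do not expect a genuine obstacle: the argument uses only convexity of $A$, the elementary count that the segment between distinct lattice points $P$ and $Q$ contains $\gcd(|P_1-Q_1|,|P_2-Q_2|)+1$ lattice points, and the numerical identity $\rf{2j}=5^j$. An inductive alternative is available --- passing from scale $\rt{k}\rf{\ell+2j}$ to $\rt{k}\rf{\ell+2(j+1)}$ replaces a run of $N$ collinear solutions by one of at least $5N-4$ solutions, and $N\mapsto 5N-4$ sends $5^j+1$ to $5^{j+1}+1$ --- but the one-shot version above is shorter.
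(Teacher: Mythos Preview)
Your proof is correct and follows essentially the same approach as the paper: take two distinct solutions $u_1,u_2$ at scale $\rt{k}\rf{\ell}$, use convexity to place the segment between them in the scaled region, and observe that multiplying by $5^j$ lands the evenly spaced points $n u_1 + (5^j-n)u_2$ (for $n=0,\dots,5^j$) in $\Zi\cap \rt{k}\rf{\ell+2j}A$. Your gcd refinement yielding $5^j d+1$ points is a harmless sharpening the paper does not bother with, but the core argument is the same.
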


\begin{proof}
  Let $u\neq v$ be solutions of the scaled grid problem over $\Zi$ for
  $A$ and $\rt{k}\rf{\ell}$. That is, $u,v \in (\rt{k}\rf{\ell}A)
  \cap\Zi$. For each $n=0,1,\ldots,5^j$, let $\phi=\frac{n}{5^j}$, and
  consider $u_j = \phi u + (1-\phi) v$. Then $u_j$ is a convex
  combination of $u$ and $v$. Since $\rt{k}\rf{\ell}A$ is convex, it
  follows that $u_j\in \rt{k}\rf{\ell}A$, so that $5^ju_j$ is a
  solution of the scaled grid problem over $\Zi$ for $A$ and
  $\rt{k}\rf{\ell+2j}$, yielding $5^j+1$ distinct such solutions.
\end{proof}

% --------------------------------------------------------------------
\section{Grid problems over \texorpdfstring{$\Zomega$}{Z[omega]}}
\label{sect:grid-pb-Zomega}

% ......................................................................
\begin{figure}[!t]
  \[ (a)~ \mp{0.95}{\scalebox{0.8}{\begin{tikzpicture}[scale=0.9]
\fill[color=green!50] (1.001,1.001) -- (-1.001,1.001) -- (-1.001,-1.001) -- (1.001,-1.001) -- cycle;
\draw[->] (-4.5,0) -- (4.5, 0);
\draw (-4,0) -- (-4,-0.1) node[below] {\small $-4$};
\draw (-3,0) -- (-3,-0.1) node[below] {\small $-3$};
\draw (-2,0) -- (-2,-0.1) node[below] {\small $-2$};
\draw (-1,0) -- (-1,-0.1) node[below] {\small $-1$};
\draw (0,0) -- (0,-0.1) node[below] {\small $0$};
\draw (1,0) -- (1,-0.1) node[below] {\small $1$};
\draw (2,0) -- (2,-0.1) node[below] {\small $2$};
\draw (3,0) -- (3,-0.1) node[below] {\small $3$};
\draw (4,0) -- (4,-0.1) node[below] {\small $4$};
\draw[->] (0,-4.5) -- (0,4.5);
\draw (0,-4) -- (-0.1,-4) node[left] {\small $-4$};
\draw (0,-3) -- (-0.1,-3) node[left] {\small $-3$};
\draw (0,-2) -- (-0.1,-2) node[left] {\small $-2$};
\draw (0,-1) -- (-0.1,-1) node[left] {\small $-1$};
\draw (0,0) -- (-0.1,0) node[left] {\small $0$};
\draw (0,1) -- (-0.1,1) node[left] {\small $1$};
\draw (0,2) -- (-0.1,2) node[left] {\small $2$};
\draw (0,3) -- (-0.1,3) node[left] {\small $3$};
\draw (0,4) -- (-0.1,4) node[left] {\small $4$};
\fill (-3.414213562373095,-3.414213562373096) circle (.1);
\fill (-3.414213562373095,-2.4142135623730954) circle (.1);
\fill (-3.4142135623730954,-1.0) circle (.1);
\fill (-3.4142135623730954,0.00000000000000011102230246251565) circle (.1);
\fill (-3.4142135623730954,1.0000000000000004) circle (.1);
\fill (-3.4142135623730954,2.414213562373096) circle (.1);
\fill (-3.4142135623730954,3.414213562373096) circle (.1);
\fill (-2.414213562373095,-3.414213562373096) circle (.1);
\fill (-2.414213562373095,-2.4142135623730954) circle (.1);
\fill (-2.4142135623730954,-1.0) circle (.1);
\fill (-2.4142135623730954,0.00000000000000011102230246251565) circle (.1);
\fill (-2.4142135623730954,1.0000000000000004) circle (.1);
\fill (-2.4142135623730954,2.414213562373096) circle (.1);
\fill (-2.4142135623730954,3.414213562373096) circle (.1);
\fill (-0.9999999999999999,-3.4142135623730954) circle (.1);
\fill (-0.9999999999999999,-2.4142135623730954) circle (.1);
\fill (-1.0,-1.0000000000000002) circle (.1);
\fill (-1.0,0.0) circle (.1);
\fill (-1.0,1.0000000000000002) circle (.1);
\fill (-1.0,2.4142135623730954) circle (.1);
\fill (-1.0,3.4142135623730954) circle (.1);
\fill (0.00000000000000011102230246251565,-3.4142135623730954) circle (.1);
\fill (0.00000000000000011102230246251565,-2.4142135623730954) circle (.1);
\fill (0.0,-1.0000000000000002) circle (.1);
\fill (0.0,0.0) circle (.1);
\fill (0.0,1.0000000000000002) circle (.1);
\fill (-0.00000000000000011102230246251565,2.4142135623730954) circle (.1);
\fill (-0.00000000000000011102230246251565,3.4142135623730954) circle (.1);
\fill (1.0,-3.4142135623730954) circle (.1);
\fill (1.0,-2.4142135623730954) circle (.1);
\fill (1.0,-1.0000000000000002) circle (.1);
\fill (1.0,0.0) circle (.1);
\fill (1.0,1.0000000000000002) circle (.1);
\fill (0.9999999999999999,2.4142135623730954) circle (.1);
\fill (0.9999999999999999,3.4142135623730954) circle (.1);
\fill (2.4142135623730954,-3.414213562373096) circle (.1);
\fill (2.4142135623730954,-2.414213562373096) circle (.1);
\fill (2.4142135623730954,-1.0000000000000004) circle (.1);
\fill (2.4142135623730954,-0.00000000000000011102230246251565) circle (.1);
\fill (2.4142135623730954,1.0) circle (.1);
\fill (2.414213562373095,2.4142135623730954) circle (.1);
\fill (2.414213562373095,3.414213562373096) circle (.1);
\fill (3.4142135623730954,-3.414213562373096) circle (.1);
\fill (3.4142135623730954,-2.414213562373096) circle (.1);
\fill (3.4142135623730954,-1.0000000000000004) circle (.1);
\fill (3.4142135623730954,-0.00000000000000011102230246251565) circle (.1);
\fill (3.4142135623730954,1.0) circle (.1);
\fill (3.414213562373095,2.4142135623730954) circle (.1);
\fill (3.414213562373095,3.414213562373096) circle (.1);
\fill (-4.121320343559643,-4.121320343559644) circle (.1);
\fill (-4.121320343559643,-1.7071067811865477) circle (.1);
\fill (-4.121320343559643,-0.7071067811865475) circle (.1);
\fill (-4.121320343559643,0.7071067811865478) circle (.1);
\fill (-4.121320343559643,1.7071067811865483) circle (.1);
\fill (-4.121320343559643,4.121320343559644) circle (.1);
\fill (-1.7071067811865475,-4.121320343559643) circle (.1);
\fill (-1.7071067811865475,-1.707106781186548) circle (.1);
\fill (-1.7071067811865475,-0.7071067811865476) circle (.1);
\fill (-1.7071067811865477,0.7071067811865477) circle (.1);
\fill (-1.7071067811865477,1.707106781186548) circle (.1);
\fill (-1.707106781186548,4.121320343559644) circle (.1);
\fill (-0.7071067811865475,-4.121320343559643) circle (.1);
\fill (-0.7071067811865476,-1.707106781186548) circle (.1);
\fill (-0.7071067811865476,-0.7071067811865476) circle (.1);
\fill (-0.7071067811865477,0.7071067811865477) circle (.1);
\fill (-0.7071067811865477,1.707106781186548) circle (.1);
\fill (-0.7071067811865478,4.121320343559644) circle (.1);
\fill (0.7071067811865478,-4.121320343559644) circle (.1);
\fill (0.7071067811865477,-1.707106781186548) circle (.1);
\fill (0.7071067811865477,-0.7071067811865477) circle (.1);
\fill (0.7071067811865476,0.7071067811865476) circle (.1);
\fill (0.7071067811865476,1.707106781186548) circle (.1);
\fill (0.7071067811865475,4.121320343559643) circle (.1);
\fill (1.707106781186548,-4.121320343559644) circle (.1);
\fill (1.7071067811865477,-1.707106781186548) circle (.1);
\fill (1.7071067811865477,-0.7071067811865477) circle (.1);
\fill (1.7071067811865475,0.7071067811865476) circle (.1);
\fill (1.7071067811865475,1.707106781186548) circle (.1);
\fill (1.7071067811865475,4.121320343559643) circle (.1);
\fill (4.121320343559643,-4.121320343559644) circle (.1);
\fill (4.121320343559643,-1.7071067811865483) circle (.1);
\fill (4.121320343559643,-0.7071067811865478) circle (.1);
\fill (4.121320343559643,0.7071067811865475) circle (.1);
\fill (4.121320343559643,1.7071067811865477) circle (.1);
\fill (4.121320343559643,4.121320343559644) circle (.1);
\draw (0.4,0.4) node {B};
\end{tikzpicture}}} \quad (b)~
  \mp{0.95}{\scalebox{0.8}{\begin{tikzpicture}[scale=0.9]
\fill[color=green!50] (0.0,0.0) circle (1.415213562373095);
\draw[->] (-4.5,0) -- (4.5, 0);
\draw (-4,0) -- (-4,-0.1) node[below] {\small $-4$};
\draw (-3,0) -- (-3,-0.1) node[below] {\small $-3$};
\draw (-2,0) -- (-2,-0.1) node[below] {\small $-2$};
\draw (-1,0) -- (-1,-0.1) node[below] {\small $-1$};
\draw (0,0) -- (0,-0.1) node[below] {\small $0$};
\draw (1,0) -- (1,-0.1) node[below] {\small $1$};
\draw (2,0) -- (2,-0.1) node[below] {\small $2$};
\draw (3,0) -- (3,-0.1) node[below] {\small $3$};
\draw (4,0) -- (4,-0.1) node[below] {\small $4$};
\draw[->] (0,-4.5) -- (0,4.5);
\draw (0,-4) -- (-0.1,-4) node[left] {\small $-4$};
\draw (0,-3) -- (-0.1,-3) node[left] {\small $-3$};
\draw (0,-2) -- (-0.1,-2) node[left] {\small $-2$};
\draw (0,-1) -- (-0.1,-1) node[left] {\small $-1$};
\draw (0,0) -- (-0.1,0) node[left] {\small $0$};
\draw (0,1) -- (-0.1,1) node[left] {\small $1$};
\draw (0,2) -- (-0.1,2) node[left] {\small $2$};
\draw (0,3) -- (-0.1,3) node[left] {\small $3$};
\draw (0,4) -- (-0.1,4) node[left] {\small $4$};
\fill (-3.414213562373095,-3.414213562373096) circle (.1);
\fill (-3.414213562373095,-2.4142135623730954) circle (.1);
\fill (-3.4142135623730954,-1.0) circle (.1);
\fill (-3.4142135623730954,0.00000000000000011102230246251565) circle (.1);
\fill (-3.4142135623730954,1.0000000000000004) circle (.1);
\fill (-3.4142135623730954,2.414213562373096) circle (.1);
\fill (-3.4142135623730954,3.414213562373096) circle (.1);
\fill (-2.414213562373095,-3.414213562373096) circle (.1);
\fill (-2.414213562373095,-2.4142135623730954) circle (.1);
\fill (-2.4142135623730954,-1.0) circle (.1);
\fill (-2.4142135623730954,0.00000000000000011102230246251565) circle (.1);
\fill (-2.4142135623730954,1.0000000000000004) circle (.1);
\fill (-2.4142135623730954,2.414213562373096) circle (.1);
\fill (-2.4142135623730954,3.414213562373096) circle (.1);
\fill (-1.4142135623730954,0.00000000000000011102230246251565) circle (.1);
\fill (-0.9999999999999999,-3.4142135623730954) circle (.1);
\fill (-0.9999999999999999,-2.4142135623730954) circle (.1);
\fill (-1.0,-1.0000000000000002) circle (.1);
\fill (-1.0,0.0) circle (.1);
\fill (-1.0,1.0000000000000002) circle (.1);
\fill (-1.0,2.4142135623730954) circle (.1);
\fill (-1.0,3.4142135623730954) circle (.1);
\fill (0.00000000000000011102230246251565,-3.4142135623730954) circle (.1);
\fill (0.00000000000000011102230246251565,-2.4142135623730954) circle (.1);
\fill (0.00000000000000011102230246251565,-1.4142135623730954) circle (.1);
\fill (0.0,-1.0000000000000002) circle (.1);
\fill (0.0,0.0) circle (.1);
\fill (0.0,1.0000000000000002) circle (.1);
\fill (-0.00000000000000011102230246251565,1.4142135623730954) circle (.1);
\fill (-0.00000000000000011102230246251565,2.4142135623730954) circle (.1);
\fill (-0.00000000000000011102230246251565,3.4142135623730954) circle (.1);
\fill (1.0,-3.4142135623730954) circle (.1);
\fill (1.0,-2.4142135623730954) circle (.1);
\fill (1.0,-1.0000000000000002) circle (.1);
\fill (1.0,0.0) circle (.1);
\fill (1.0,1.0000000000000002) circle (.1);
\fill (0.9999999999999999,2.4142135623730954) circle (.1);
\fill (0.9999999999999999,3.4142135623730954) circle (.1);
\fill (1.4142135623730954,-0.00000000000000011102230246251565) circle (.1);
\fill (2.4142135623730954,-3.414213562373096) circle (.1);
\fill (2.4142135623730954,-2.414213562373096) circle (.1);
\fill (2.4142135623730954,-1.0000000000000004) circle (.1);
\fill (2.4142135623730954,-0.00000000000000011102230246251565) circle (.1);
\fill (2.4142135623730954,1.0) circle (.1);
\fill (2.414213562373095,2.4142135623730954) circle (.1);
\fill (2.414213562373095,3.414213562373096) circle (.1);
\fill (3.4142135623730954,-3.414213562373096) circle (.1);
\fill (3.4142135623730954,-2.414213562373096) circle (.1);
\fill (3.4142135623730954,-1.0000000000000004) circle (.1);
\fill (3.4142135623730954,-0.00000000000000011102230246251565) circle (.1);
\fill (3.4142135623730954,1.0) circle (.1);
\fill (3.414213562373095,2.4142135623730954) circle (.1);
\fill (3.414213562373095,3.414213562373096) circle (.1);
\fill (-4.121320343559643,-4.121320343559644) circle (.1);
\fill (-4.121320343559643,-3.121320343559643) circle (.1);
\fill (-4.121320343559643,-2.707106781186548) circle (.1);
\fill (-4.121320343559643,-1.7071067811865477) circle (.1);
\fill (-4.121320343559643,-0.7071067811865475) circle (.1);
\fill (-4.121320343559643,0.7071067811865478) circle (.1);
\fill (-4.121320343559643,1.7071067811865483) circle (.1);
\fill (-4.121320343559643,2.7071067811865483) circle (.1);
\fill (-4.121320343559643,3.121320343559643) circle (.1);
\fill (-4.121320343559643,4.121320343559644) circle (.1);
\fill (-3.121320343559643,-4.121320343559644) circle (.1);
\fill (-3.121320343559643,-1.7071067811865477) circle (.1);
\fill (-3.121320343559643,-0.7071067811865475) circle (.1);
\fill (-3.121320343559643,0.7071067811865478) circle (.1);
\fill (-3.121320343559643,1.7071067811865483) circle (.1);
\fill (-3.121320343559643,4.121320343559644) circle (.1);
\fill (-2.7071067811865475,-4.121320343559643) circle (.1);
\fill (-2.7071067811865475,-1.707106781186548) circle (.1);
\fill (-2.707106781186548,1.707106781186548) circle (.1);
\fill (-2.707106781186548,4.121320343559644) circle (.1);
\fill (-1.7071067811865475,-4.121320343559643) circle (.1);
\fill (-1.7071067811865475,-3.121320343559643) circle (.1);
\fill (-1.7071067811865475,-2.707106781186548) circle (.1);
\fill (-1.7071067811865475,-1.707106781186548) circle (.1);
\fill (-1.7071067811865475,-0.7071067811865476) circle (.1);
\fill (-1.7071067811865477,0.7071067811865477) circle (.1);
\fill (-1.7071067811865477,1.707106781186548) circle (.1);
\fill (-1.7071067811865477,2.707106781186548) circle (.1);
\fill (-1.707106781186548,3.1213203435596433) circle (.1);
\fill (-1.707106781186548,4.121320343559644) circle (.1);
\fill (-0.7071067811865475,-4.121320343559643) circle (.1);
\fill (-0.7071067811865475,-3.121320343559643) circle (.1);
\fill (-0.7071067811865476,-1.707106781186548) circle (.1);
\fill (-0.7071067811865476,-0.7071067811865476) circle (.1);
\fill (-0.7071067811865477,0.7071067811865477) circle (.1);
\fill (-0.7071067811865477,1.707106781186548) circle (.1);
\fill (-0.7071067811865478,3.1213203435596433) circle (.1);
\fill (-0.7071067811865478,4.121320343559644) circle (.1);
\fill (0.7071067811865478,-4.121320343559644) circle (.1);
\fill (0.7071067811865478,-3.1213203435596433) circle (.1);
\fill (0.7071067811865477,-1.707106781186548) circle (.1);
\fill (0.7071067811865477,-0.7071067811865477) circle (.1);
\fill (0.7071067811865476,0.7071067811865476) circle (.1);
\fill (0.7071067811865476,1.707106781186548) circle (.1);
\fill (0.7071067811865475,3.121320343559643) circle (.1);
\fill (0.7071067811865475,4.121320343559643) circle (.1);
\fill (1.707106781186548,-4.121320343559644) circle (.1);
\fill (1.707106781186548,-3.1213203435596433) circle (.1);
\fill (1.7071067811865477,-2.707106781186548) circle (.1);
\fill (1.7071067811865477,-1.707106781186548) circle (.1);
\fill (1.7071067811865477,-0.7071067811865477) circle (.1);
\fill (1.7071067811865475,0.7071067811865476) circle (.1);
\fill (1.7071067811865475,1.707106781186548) circle (.1);
\fill (1.7071067811865475,2.707106781186548) circle (.1);
\fill (1.7071067811865475,3.121320343559643) circle (.1);
\fill (1.7071067811865475,4.121320343559643) circle (.1);
\fill (2.707106781186548,-4.121320343559644) circle (.1);
\fill (2.707106781186548,-1.707106781186548) circle (.1);
\fill (2.7071067811865475,1.707106781186548) circle (.1);
\fill (2.7071067811865475,4.121320343559643) circle (.1);
\fill (3.121320343559643,-4.121320343559644) circle (.1);
\fill (3.121320343559643,-1.7071067811865483) circle (.1);
\fill (3.121320343559643,-0.7071067811865478) circle (.1);
\fill (3.121320343559643,0.7071067811865475) circle (.1);
\fill (3.121320343559643,1.7071067811865477) circle (.1);
\fill (3.121320343559643,4.121320343559644) circle (.1);
\fill (4.121320343559643,-4.121320343559644) circle (.1);
\fill (4.121320343559643,-3.121320343559643) circle (.1);
\fill (4.121320343559643,-2.7071067811865483) circle (.1);
\fill (4.121320343559643,-1.7071067811865483) circle (.1);
\fill (4.121320343559643,-0.7071067811865478) circle (.1);
\fill (4.121320343559643,0.7071067811865475) circle (.1);
\fill (4.121320343559643,1.7071067811865477) circle (.1);
\fill (4.121320343559643,2.707106781186548) circle (.1);
\fill (4.121320343559643,3.121320343559643) circle (.1);
\fill (4.121320343559643,4.121320343559644) circle (.1);
\draw (0.4,0.4) node {B};
\end{tikzpicture}}}
  \]
  \[ (c)~ \mp{0.95}{\scalebox{0.8}{\begin{tikzpicture}[scale=0.9]
% Debug: 8.381527307120106 8.440663653560053 5.9136346439947474e-2 (-4.0,-5.440763653560053) (5.440763653560053,-4.0) (-0.20388260857871685,-0.2773192715870234) (3.3131500365557374,-2.435797801573568)
\fill[color=green!50] (-0.20388260857871685,-0.2773192715870234)
.. controls (1.6259196305123527,-1.622573251067478) and (3.200548781081991,-2.588957006105265) .. (3.3131500365557374,-2.435797801573568)
.. controls (3.4257512920294837,-2.282638597041871) and (2.0336848476697864,-1.0679347078934311) .. (0.20388260857871685,0.2773192715870234)
.. controls (-1.6259196305123527,1.622573251067478) and (-3.200548781081991,2.588957006105265) .. (-3.3131500365557374,2.435797801573568)
.. controls (-3.4257512920294837,2.282638597041871) and (-2.0336848476697864,1.0679347078934311) .. (-0.20388260857871685,-0.2773192715870234)
-- cycle;
\draw[->] (-4.5,0) -- (4.5, 0);
\draw (-4,0) -- (-4,-0.1) node[below] {\small $-4$};
\draw (-3,0) -- (-3,-0.1) node[below] {\small $-3$};
\draw (-2,0) -- (-2,-0.1) node[below] {\small $-2$};
\draw (-1,0) -- (-1,-0.1) node[below] {\small $-1$};
\draw (0,0) -- (0,-0.1) node[below] {\small $0$};
\draw (1,0) -- (1,-0.1) node[below] {\small $1$};
\draw (2,0) -- (2,-0.1) node[below] {\small $2$};
\draw (3,0) -- (3,-0.1) node[below] {\small $3$};
\draw (4,0) -- (4,-0.1) node[below] {\small $4$};
\draw[->] (0,-4.5) -- (0,4.5);
\draw (0,-4) -- (-0.1,-4) node[left] {\small $-4$};
\draw (0,-3) -- (-0.1,-3) node[left] {\small $-3$};
\draw (0,-2) -- (-0.1,-2) node[left] {\small $-2$};
\draw (0,-1) -- (-0.1,-1) node[left] {\small $-1$};
\draw (0,0) -- (-0.1,0) node[left] {\small $0$};
\draw (0,1) -- (-0.1,1) node[left] {\small $1$};
\draw (0,2) -- (-0.1,2) node[left] {\small $2$};
\draw (0,3) -- (-0.1,3) node[left] {\small $3$};
\draw (0,4) -- (-0.1,4) node[left] {\small $4$};
\fill (-4.414213562373095,-1.4142135623730951) circle (.1);
\fill (-4.414213562373096,1.0000000000000004) circle (.1);
\fill (-3.8284271247461907,-4.414213562373096) circle (.1);
\fill (-3.8284271247461907,-1.0) circle (.1);
\fill (-3.8284271247461903,1.4142135623730954) circle (.1);
\fill (-3.414213562373095,-2.4142135623730954) circle (.1);
\fill (-3.4142135623730954,3.414213562373096) circle (.1);
\fill (-3.0,2.0000000000000004) circle (.1);
\fill (-2.8284271247461907,-2.0) circle (.1);
\fill (-2.414213562373095,-3.414213562373096) circle (.1);
\fill (-2.4142135623730954,0.00000000000000011102230246251565) circle (.1);
\fill (-2.4142135623730954,2.414213562373096) circle (.1);
\fill (-2.0,-1.4142135623730954) circle (.1);
\fill (-2.0,4.414213562373097) circle (.1);
\fill (-1.4142135623730954,-1.0) circle (.1);
\fill (-1.4142135623730954,1.4142135623730954) circle (.1);
\fill (-0.9999999999999999,-2.4142135623730954) circle (.1);
\fill (-1.0,1.0000000000000002) circle (.1);
\fill (-1.0,3.4142135623730954) circle (.1);
\fill (-0.41421356237309515,-4.414213562373096) circle (.1);
\fill (-0.41421356237309537,-2.0000000000000004) circle (.1);
\fill (-0.41421356237309537,3.8284271247461903) circle (.1);
\fill (0.00000000000000011102230246251565,-2.4142135623730954) circle (.1);
\fill (0.0,0.0) circle (.1);
\fill (-0.00000000000000011102230246251565,2.4142135623730954) circle (.1);
\fill (0.41421356237309537,-3.8284271247461903) circle (.1);
\fill (0.41421356237309537,2.0000000000000004) circle (.1);
\fill (0.41421356237309515,4.414213562373096) circle (.1);
\fill (1.0,-3.4142135623730954) circle (.1);
\fill (1.0,-1.0000000000000002) circle (.1);
\fill (0.9999999999999999,2.4142135623730954) circle (.1);
\fill (1.4142135623730954,-1.4142135623730954) circle (.1);
\fill (1.4142135623730954,1.0) circle (.1);
\fill (2.0,-4.414213562373097) circle (.1);
\fill (2.0,1.4142135623730954) circle (.1);
\fill (2.4142135623730954,-2.414213562373096) circle (.1);
\fill (2.4142135623730954,-0.00000000000000011102230246251565) circle (.1);
\fill (2.414213562373095,3.414213562373096) circle (.1);
\fill (2.8284271247461907,2.0) circle (.1);
\fill (3.0,-2.0000000000000004) circle (.1);
\fill (3.4142135623730954,-3.414213562373096) circle (.1);
\fill (3.414213562373095,2.4142135623730954) circle (.1);
\fill (3.8284271247461903,-1.4142135623730954) circle (.1);
\fill (3.8284271247461907,1.0) circle (.1);
\fill (3.8284271247461907,4.414213562373096) circle (.1);
\fill (4.414213562373096,-1.0000000000000004) circle (.1);
\fill (4.414213562373095,1.4142135623730951) circle (.1);
\fill (-4.121320343559643,-4.121320343559644) circle (.1);
\fill (-4.121320343559643,-1.7071067811865477) circle (.1);
\fill (-4.121320343559643,1.7071067811865483) circle (.1);
\fill (-4.121320343559643,4.121320343559644) circle (.1);
\fill (-3.7071067811865475,0.29289321881345265) circle (.1);
\fill (-3.121320343559643,0.7071067811865478) circle (.1);
\fill (-3.121320343559643,3.121320343559643) circle (.1);
\fill (-2.7071067811865475,-3.121320343559643) circle (.1);
\fill (-2.7071067811865475,-0.7071067811865476) circle (.1);
\fill (-2.707106781186548,2.707106781186548) circle (.1);
\fill (-2.121320343559643,-2.707106781186548) circle (.1);
\fill (-2.121320343559643,-0.2928932188134524) circle (.1);
\fill (-1.7071067811865475,-4.121320343559643) circle (.1);
\fill (-1.7071067811865477,1.707106781186548) circle (.1);
\fill (-1.707106781186548,4.121320343559644) circle (.1);
\fill (-1.292893218813452,-2.121320343559643) circle (.1);
\fill (-1.1213203435596428,-3.707106781186548) circle (.1);
\fill (-0.7071067811865476,-1.707106781186548) circle (.1);
\fill (-0.7071067811865477,0.7071067811865477) circle (.1);
\fill (-0.7071067811865478,4.121320343559644) circle (.1);
\fill (-0.2928932188134522,-3.1213203435596433) circle (.1);
\fill (-0.2928932188134524,2.707106781186548) circle (.1);
\fill (0.2928932188134524,-2.707106781186548) circle (.1);
\fill (0.2928932188134522,3.1213203435596433) circle (.1);
\fill (0.7071067811865478,-4.121320343559644) circle (.1);
\fill (0.7071067811865477,-0.7071067811865477) circle (.1);
\fill (0.7071067811865476,1.707106781186548) circle (.1);
\fill (1.1213203435596428,3.707106781186548) circle (.1);
\fill (1.292893218813452,2.121320343559643) circle (.1);
\fill (1.707106781186548,-4.121320343559644) circle (.1);
\fill (1.7071067811865477,-1.707106781186548) circle (.1);
\fill (1.7071067811865475,4.121320343559643) circle (.1);
\fill (2.121320343559643,0.2928932188134524) circle (.1);
\fill (2.121320343559643,2.707106781186548) circle (.1);
\fill (2.707106781186548,-2.707106781186548) circle (.1);
\fill (2.7071067811865475,0.7071067811865476) circle (.1);
\fill (2.7071067811865475,3.121320343559643) circle (.1);
\fill (3.121320343559643,-3.121320343559643) circle (.1);
\fill (3.121320343559643,-0.7071067811865478) circle (.1);
\fill (3.7071067811865475,-0.29289321881345265) circle (.1);
\fill (4.121320343559643,-4.121320343559644) circle (.1);
\fill (4.121320343559643,-1.7071067811865483) circle (.1);
\fill (4.121320343559643,1.7071067811865477) circle (.1);
\fill (4.121320343559643,4.121320343559644) circle (.1);
\draw (-2.3,1.7) node {B};
\end{tikzpicture}}}
  \]
  \caption[Complex grids.]{The complex grid for three different convex
    sets $B$. In each case, the set $B$ is shown in green and grid
    points are shown as black dots. (a) $B=[-1,1]^2$. (b)
    $B=\s{(x,y)\mid x^2+y^2\leq 2}$. (c) $B=\s{(x,y)\mid 6x^2 + 16xy +
      11y^2 \leq 2}$.}
  \label{fig-grid-2d}
  \rule{\textwidth}{0.1mm}
\end{figure}
% ......................................................................

We now turn to grid problems where the lattice $L$ is a subset of
$\Zomega$. Viewing $\Comp$ as the real plane, we can consider the
elements of $\Zomega$ as points in $\R^2$. However, we know from
Chapter~\ref{chap:nb-th} that $\Zomega$ is dense in $\Comp$ and
therefore does not form a lattice in the plane. To circumvent this
issue, we consider subsets of $\Zomega$ that arise as the image, under
the automorphism $(-)\bul$, of the intersection of $\Zomega$ and a
bounded convex set $B\in\R^2$ with non-empty interior. We use this
notion of grid to formulate grid problems over $\Zomega$.

\begin{definition}
  \label{def:grid-Zomega}
  Let $B$ be a subset of $\R^2$. The {\em (complex) grid} for $B$ is
  the set
  \begin{equation}
    \Grid(B) = \s{ u\in\Zomega \mid u\bul \in B}.
  \end{equation}
\end{definition}

\begin{remark}
  \label{rem:discrete-inf-2d}
  We will only be interested in the case where $B$ is a bounded convex
  set with non-empty interior. In this case, the grid is discrete and
  infinite. It is infinite by the density of $\Zomega$: there are
  infinitely many points $u\in B\cap \Zomega$, and for each of them,
  $u\bul$ is a grid point. To see that it is discrete, recall from
  Remark~\ref{rem:bound-bullet} of Chapter~\ref{chap:nb-th} that for
  $u,v\in\Zomega$ we have
  \[
  |u-v|\cdot |u\bul -v\bul| \geq 1.
  \]
  Since the distance between points in $B$ is bounded above, the
  distance between their bullets is bounded below.
\end{remark}

Figure~\ref{fig-grid-2d} illustrates the complex grids for several
different convex sets $B$. Note that the grid has a $90$-degree
symmetry in (a), a $45$-degree symmetry in (b), and a $180$-degree
symmetry in (c).

\begin{problem}[Grid problem over $\Zomega$]
  \label{pb:grid-Zomega}
  Given two bounded convex subsets $A$ and $B$ of $\R^2$ with
  non-empty interior, enumerate all the points $u \in A \cap
  \Grid(B)$.
\end{problem}

Alternatively, grid problems over $\Z[\omega]$ can be understood as
looking for points in $\Z[\omega]$ such that $u\in A$ and $u\bul \in
B$. We also refer to the conditions $u\in A$ and $u\bul\in B$ as
\emph{grid constraints}. As before, an element $u\in A\cap \Grid(B)$
is called a \emph{solution} to the grid problem over $\Zomega$ for $A$
and $B$.

We solve grid problems over $\Zomega$ by reasoning as in the previous
section. That is, we first deal with the case of upright rectangles
and then generalize our methods to arbitrary convex sets using
ellipses.

% ----------------------------------------------------------------------
\subsection{One-dimensional grid problems}
\label{ssect:1-d-grid-pb-Zomega}

A grid problem over $\Zi$ for an upright rectangle $A$ is solved by
considering the $x$ and $y$ coordinates of the problem
independently. To extend this method to grid problems over $\Zomega$,
we define a one-dimensional analogue of Problem~\ref{pb:grid-Zomega}.

\begin{definition}
  \label{def:grid-Zrt}
  Let $B$ be a subset of $\R$. The {\em (real) grid} for $B$ is the
  set
  \begin{equation}
    \Grid(B) = \s{ u\in\Zrt \mid u\bul\in B}.
  \end{equation}
\end{definition}

\begin{remark}
  \label{rem:discrete-inf-1d}
  In the following, we will only be interested in the case where $B$
  is a closed interval $[y_0,y_1]$ with $y_0<y_1$. In this case also
  the grid is discrete and infinite.
\end{remark}

Figure~\ref{fig:grid-1d} illustrates the grids for the intervals
$[-1,1]$ and $[-3,3]$, respectively. For example, the first few
non-negative points in $\Grid([-1,1])$ are $0$, $1$, $1+\sqrt2$,
$2+\sqrt2$, $2+2\sqrt2$, $3+2\sqrt 2$, and $4+3\sqrt 2$. As one would
expect, the grid for $[-3,3]$ is about three times denser than that
for $[-1,1]$. We also note that $B\seq B'$ implies
$\Grid(B)\seq\Grid(B')$.

% ......................................................................
\begin{figure}[t]
  \[
  (a)~ \mp{0.8}{\scalebox{0.8}{\begin{tikzpicture}[scale=0.9]
\fill[thick,color=green!50] (-1.0,0.2) -- (1.0,0.2) -- (1.0,-0.2) -- (-1.0,-0.2);
\draw[->] (-8.8,0) -- (8.8, 0);
\draw (-8,0) -- (-8,-0.3) node[below] {$-8$};
\draw (-7,0) -- (-7,-0.3) node[below] {$-7$};
\draw (-6,0) -- (-6,-0.3) node[below] {$-6$};
\draw (-5,0) -- (-5,-0.3) node[below] {$-5$};
\draw (-4,0) -- (-4,-0.3) node[below] {$-4$};
\draw (-3,0) -- (-3,-0.3) node[below] {$-3$};
\draw (-2,0) -- (-2,-0.3) node[below] {$-2$};
\draw (-1,0) -- (-1,-0.3) node[below] {$-1$};
\draw (0,0) -- (0,-0.3) node[below] {$0$};
\draw (1,0) -- (1,-0.3) node[below] {$1$};
\draw (2,0) -- (2,-0.3) node[below] {$2$};
\draw (3,0) -- (3,-0.3) node[below] {$3$};
\draw (4,0) -- (4,-0.3) node[below] {$4$};
\draw (5,0) -- (5,-0.3) node[below] {$5$};
\draw (6,0) -- (6,-0.3) node[below] {$6$};
\draw (7,0) -- (7,-0.3) node[below] {$7$};
\draw (8,0) -- (8,-0.3) node[below] {$8$};
\fill (-8.242640687119286,0) circle (.1);
\fill (-5.82842712474619,0) circle (.1);
\fill (-4.82842712474619,0) circle (.1);
\fill (-3.414213562373095,0) circle (.1);
\fill (-2.414213562373095,0) circle (.1);
\fill (-1.0,0) circle (.1);
\fill (0.0,0) circle (.1);
\fill (1.0,0) circle (.1);
\fill (2.414213562373095,0) circle (.1);
\fill (3.414213562373095,0) circle (.1);
\fill (4.82842712474619,0) circle (.1);
\fill (5.82842712474619,0) circle (.1);
\fill (8.242640687119286,0) circle (.1);
\draw (-0.5,0) node[above] {$B$};\end{tikzpicture}}}
  \]
  \[
  (b)~ \mp{0.8}{\scalebox{0.8}{\begin{tikzpicture}[scale=0.9]
\fill[thick,color=green!50] (-3.0,0.2) -- (3.0,0.2) -- (3.0,-0.2) -- (-3.0,-0.2);
\draw[->] (-8.8,0) -- (8.8, 0);
\draw (-8,0) -- (-8,-0.3) node[below] {$-8$};
\draw (-7,0) -- (-7,-0.3) node[below] {$-7$};
\draw (-6,0) -- (-6,-0.3) node[below] {$-6$};
\draw (-5,0) -- (-5,-0.3) node[below] {$-5$};
\draw (-4,0) -- (-4,-0.3) node[below] {$-4$};
\draw (-3,0) -- (-3,-0.3) node[below] {$-3$};
\draw (-2,0) -- (-2,-0.3) node[below] {$-2$};
\draw (-1,0) -- (-1,-0.3) node[below] {$-1$};
\draw (0,0) -- (0,-0.3) node[below] {$0$};
\draw (1,0) -- (1,-0.3) node[below] {$1$};
\draw (2,0) -- (2,-0.3) node[below] {$2$};
\draw (3,0) -- (3,-0.3) node[below] {$3$};
\draw (4,0) -- (4,-0.3) node[below] {$4$};
\draw (5,0) -- (5,-0.3) node[below] {$5$};
\draw (6,0) -- (6,-0.3) node[below] {$6$};
\draw (7,0) -- (7,-0.3) node[below] {$7$};
\draw (8,0) -- (8,-0.3) node[below] {$8$};
\fill (-8.65685424949238,0) circle (.1);
\fill (-8.242640687119286,0) circle (.1);
\fill (-7.82842712474619,0) circle (.1);
\fill (-7.242640687119286,0) circle (.1);
\fill (-6.82842712474619,0) circle (.1);
\fill (-6.242640687119286,0) circle (.1);
\fill (-5.82842712474619,0) circle (.1);
\fill (-5.414213562373095,0) circle (.1);
\fill (-4.82842712474619,0) circle (.1);
\fill (-4.414213562373095,0) circle (.1);
\fill (-3.8284271247461903,0) circle (.1);
\fill (-3.414213562373095,0) circle (.1);
\fill (-3.0,0) circle (.1);
\fill (-2.8284271247461903,0) circle (.1);
\fill (-2.414213562373095,0) circle (.1);
\fill (-2.0,0) circle (.1);
\fill (-1.4142135623730951,0) circle (.1);
\fill (-1.0,0) circle (.1);
\fill (-0.41421356237309515,0) circle (.1);
\fill (0.0,0) circle (.1);
\fill (0.41421356237309515,0) circle (.1);
\fill (1.0,0) circle (.1);
\fill (1.4142135623730951,0) circle (.1);
\fill (2.0,0) circle (.1);
\fill (2.414213562373095,0) circle (.1);
\fill (2.8284271247461903,0) circle (.1);
\fill (3.0,0) circle (.1);
\fill (3.414213562373095,0) circle (.1);
\fill (3.8284271247461903,0) circle (.1);
\fill (4.414213562373095,0) circle (.1);
\fill (4.82842712474619,0) circle (.1);
\fill (5.414213562373095,0) circle (.1);
\fill (5.82842712474619,0) circle (.1);
\fill (6.242640687119286,0) circle (.1);
\fill (6.82842712474619,0) circle (.1);
\fill (7.242640687119286,0) circle (.1);
\fill (7.82842712474619,0) circle (.1);
\fill (8.242640687119286,0) circle (.1);
\fill (8.65685424949238,0) circle (.1);
\draw (-0.5,0) node[above] {$B$};\end{tikzpicture}}}
  \]
  \caption[Real grids.]{The real grid for two different intervals
    $B$. In both cases, the interval $B$ is shown in green, and grid
    points are shown as black dots.}
  \label{fig:grid-1d}
  \rule{\textwidth}{0.1mm}
\end{figure}
% ......................................................................

\begin{problem}[One-dimensional grid problem]
  \label{pb:grid-1-d-Zomega}
  Given two sets of real numbers $A$ and $B$ of $\R$, enumerate all
  the points $u \in A \cap \Grid(B)$.
\end{problem}

As in the two-dimensional case, Problem~\ref{pb:grid-1-d-Zomega} can
be equivalently expressed by the grid constraints $u\in A$ and
$u\bul\in B$ for $u\in\Zrt$. In the case where $A$ and $B$ are finite
intervals, the grid problem is guaranteed to have a finite number of
solutions. We recall the following facts from
{\cite{Selinger-newsynth}}.

\begin{lemma}
  \label{lem:grid-bounds}
  Let $A=[x_0,x_1]$ and $B=[y_0,y_1]$ be closed real intervals, such
  that $x_1-x_0=\delta$ and $y_1-y_0=\Delta$. If $\delta\Delta < 1$,
  then the one-dimensional grid problem for $A$ and $B$ has at most
  one solution. If $\delta\Delta \geq (1+\sqrt2)^2$, then the
  one-dimensional grid problem for $A$ and $B$ has at least one
  solution.
\end{lemma}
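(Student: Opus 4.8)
The plan is to handle the two halves separately. The uniqueness half is immediate from the discreteness estimate of Remark~\ref{rem:bound-bullet}: if $u\neq v$ were both solutions, then $u,v\in A$ gives $|u-v|\leq\delta$ and $u\bul,v\bul\in B$ gives $|u\bul-v\bul|\leq\Delta$, so $|u-v|\cdot|u\bul-v\bul|\leq\delta\Delta<1$; but $u-v$ is a nonzero element of $\Zrt$, so Remark~\ref{rem:bound-bullet} forces $|u-v|\cdot|u\bul-v\bul|\geq1$, a contradiction.

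For the existence half, write $\lambda=1+\sqrt2$, so that $\lambda,\lambda\inv\in\Zrt$ and $\lambda\bul=1-\sqrt2=-\lambda\inv$. First I would renormalize by scaling. If $u$ solves the grid problem for $(A,B)$ then $\lambda^k u$ solves it for $(\lambda^k A,(\lambda^k)\bul B)$, which is again a one-dimensional grid problem, now with interval lengths $\lambda^k\delta$ and $\lambda^{-k}\Delta$; in particular the product $\delta\Delta$ is unchanged (so the hypothesis is preserved), while the ratio $\delta/\Delta$ gets multiplied by $\lambda^{2k}$. Choosing the unique $k\in\Z$ with $\lambda^{2k}\,\delta/\Delta\in[1,\lambda^2)$, we may thus assume $1\leq\delta/\Delta<\lambda^2$; this yields $\Delta^2>\delta\Delta/\lambda^2\geq1$, hence $\delta\geq\Delta>1$. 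It then suffices to produce integers $a,b$ with $x_0\leq a+b\sqrt2\leq x_1$ and $y_0\leq a-b\sqrt2\leq y_1$, since dividing $a+b\sqrt2$ by $\lambda^k$ recovers a solution to the original problem.

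Next I would choose $b$, then $a$. Since $\delta+\Delta\geq2\sqrt{\delta\Delta}\geq2\lambda=2+2\sqrt2$, the closed interval $I=[\tfrac{x_0-y_1+1}{2},\tfrac{x_1-y_0-1}{2}]$ has length $\tfrac{\delta+\Delta-2}{2}\geq\sqrt2$, and hence contains $b\sqrt2$ for some $b\in\Z$. For this $b$, I would compare $[x_0-b\sqrt2,x_1-b\sqrt2]$ with $[y_0+b\sqrt2,y_1+b\sqrt2]$: the four inequalities expressing that their intersection has length at least $1$ are $\delta\geq1$, $\Delta\geq1$, $b\sqrt2\leq\tfrac{x_1-y_0-1}{2}$, and $b\sqrt2\geq\tfrac{x_0-y_1+1}{2}$, all of which hold. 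Hence the intersection contains an integer $a$, and then $u=a+b\sqrt2\in\Zrt$ lies in $A$ and satisfies $u\bul=a-b\sqrt2\in B$, which completes this direction.

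The main obstacle is the existence direction, and within it the renormalization step. One has to notice that scaling by a single power of $\lambda$ is legitimate — because $\lambda\in\Zrt$ and $\lambda\bul=-\lambda\inv$, multiplication by $\lambda^k$ carries one one-dimensional grid problem to another with the two interval lengths rescaled reciprocally — and that this lets one push $\delta$ and $\Delta$ into a balanced range where both exceed $1$. After that, the key point is that the ``budget'' $\delta+\Delta\geq2\lambda$ forced by $\delta\Delta\geq\lambda^2$ is exactly enough to make the window $I$ as long as the spacing $\sqrt2$ of $\sqrt2\,\Z$, so that a suitable $b$ exists; the verification that the fiber over $b$ then contains an integer $a$ is routine endpoint bookkeeping, needing only $\delta,\Delta\geq1$ and $b\sqrt2\in I$.
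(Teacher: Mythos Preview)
Your argument is correct. The uniqueness half is exactly the intended one-line application of Remark~\ref{rem:bound-bullet}. For existence, your renormalization by powers of $\lambda$ (using $\lambda\bul=-\lambda\inv$) to force $1\leq\delta/\Delta<\lambda^2$ and hence $\delta\geq\Delta>1$ is clean, and the subsequent choice of $b$ from the length-$\sqrt2$ window $I$ followed by the endpoint check for $a$ goes through as written.

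The paper does not actually prove this lemma: it cites Lemmas~16 and~17 of \cite{Selinger-newsynth}. So there is nothing to compare against in the paper proper. Your proof is a self-contained version of what those lemmas establish; the rescaling trick you use is the same device the paper invokes immediately afterward in the proof of Proposition~\ref{prop:1-d-Zomega} (``the grid problem for the sets $A$ and $B$ is equivalent to the grid problem for $\lambda\inv A$ and $-\lambda B$''), so your argument fits naturally with the surrounding material.
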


\begin{proof}
  Lemmas~16 and 17 of {\cite{Selinger-newsynth}}.
\end{proof}

\begin{proposition}
  \label{prop:1-d-Zomega}
  Let $A=[x_0,x_1]$ and $B=[y_0,y_1]$ be closed real intervals.  There
  is an algorithm which enumerates all solutions to the
  one-dimensional grid problem for $A$ and $B$. Moreover, the
  algorithm only requires a constant number of arithmetic operations
  per solution produced.
\end{proposition}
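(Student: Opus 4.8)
The plan is to reduce the problem, by rescaling with powers of the fundamental unit $\lambda = 1+\sqrt2$ of $\Zrt$ (Definition~\ref{def:lambda-delta}), to a \emph{balanced} instance in which the two intervals have comparable length, and then to solve the balanced instance by direct enumeration.

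First I would set up the rescaling. Since $\lambda\bul = 1-\sqrt2 = -\lambda\inv$, multiplication by $\lambda$ is a bijection $\Zrt\to\Zrt$ with $(\lambda u)\bul = \lambda\bul u\bul = -\lambda\inv u\bul$. Consequently $u$ solves the one-dimensional grid problem for $(A,B)$ exactly when $\lambda u$ solves it for $(\lambda A,\lambda\bul B)$, and $\lambda\bul B$ is again a closed interval (merely reflected, since $\lambda\bul<0$). With $\delta = x_1-x_0$ and $\Delta = y_1-y_0$, this transformation sends $(\delta,\Delta)$ to $(\lambda\delta,\lambda\inv\Delta)$: the product $\delta\Delta$ is invariant, while the ratio $\delta/\Delta$ is multiplied by $\lambda^2$. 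Hence there is a unique integer $k$ — obtained from a single logarithm — such that, after replacing $(A,B)$ by $(\lambda^k A,(\lambda^k)\bul B)$, one has $\Delta\le\delta<\lambda^2\Delta$. The solution sets of the two instances are in explicit bijection, a solution of one being recovered from a solution of the other by one multiplication, so it suffices to treat the balanced instance.

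Next I would enumerate the solutions of the balanced instance. For $u = a+b\sqrt2$ with $a,b\in\Z$, the constraints $u\in A$ and $u\bul\in B$ say exactly that $a$ lies in $[x_0-b\sqrt2,\,x_1-b\sqrt2]\cap[y_0+b\sqrt2,\,y_1+b\sqrt2]=:I_b$, and $I_b$ is nonempty precisely when $2b\sqrt2\in[x_0-y_1,\,x_1-y_0]$, an interval of length $\delta+\Delta$. The algorithm iterates over the $O(1+\delta)$ integers $b$ with $2b\sqrt2$ in that range (using $\Delta\le\delta$), and for each such $b$ outputs $a+b\sqrt2$ for every integer $a\in I_b$ — at most $1+\min(\delta,\Delta)$ of them, located by two floor operations. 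By construction this produces precisely the solutions.

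The hard part will be the complexity bound: a priori the loop costs $O(1+\delta)+N$, where $N$ is the number of solutions, and the $O(1+\delta)$ term must be absorbed into $N$. When $\delta<\lambda^3$, both $\delta$ and $\Delta$ are bounded, so the total is $O(1+N)$ trivially. When $\delta\ge\lambda^3$, balancedness forces $\Delta>\lambda$, so both intervals have length at least $\lambda$; partitioning $A$ and $B$ into $\floor{\delta/\lambda}$ and $\floor{\Delta/\lambda}$ essentially equal subintervals of length at least $\lambda$ — so that each of the resulting pairs of subintervals has a solution by Lemma~\ref{lem:grid-bounds}, since $(1+\sqrt2)^2=\lambda^2$ — and noting that solutions arising from distinct pairs are distinct, yields $N\ge\floor{\delta/\lambda}\floor{\Delta/\lambda}=\Omega(\delta)$. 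Either way the cost is $O(1+N)$, i.e.\ a constant number of arithmetic operations per solution produced, with floor, logarithm, and exponentiation all counted as arithmetic operations as in Remark~\ref{rem:arithm-op} (so that the rescaling of the first step, and the inverse rescaling of the output, likewise cost $O(1+N)$). It is precisely the rescaling that makes this work: without it one could have $\delta$ enormous while $N=1$ (when $\delta\Delta$ lies just below the threshold of Lemma~\ref{lem:grid-bounds}), and the naive enumeration would be far too slow.
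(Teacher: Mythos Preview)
Your proposal is correct and follows essentially the same approach as the paper: rescale by a power of $\lambda$ to balance the two intervals, enumerate over the integer $b$ in a range of length $O(\delta+\Delta)$, for each $b$ find the admissible $a$'s directly, and use Lemma~\ref{lem:grid-bounds} to show the work is proportional to the number of solutions. The only difference is the choice of normalization: the paper rescales so that $\lambda\inv\le\delta<1$ (making $\delta=\Theta(1)$, hence at most one $a$ per $b$ and the loop length $O(\Delta)=O(\delta\Delta)$), whereas you rescale so that $\Delta\le\delta<\lambda^2\Delta$; both work, though the paper's choice makes the inner step and the final counting slightly more direct.
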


\begin{proof}
  It was already noted in {\cite[Lemma~17]{Selinger-newsynth}} that
  there is an efficient algorithm for computing one solution. To see
  that we can efficiently enumerate all solutions, let
  $\delta=x_1-x_0$ and $\Delta=y_1-y_0$ as before. Recall from
  Definition~\ref{def:lambda-delta} of Chapter~\ref{chap:nb-th} that
  $\lambda=\sqrt{2} +1$ and that $\lambda\inv=-\lambda\bul$. The grid
  problem for the sets $A$ and $B$ is equivalent to the grid problem
  for $\lambda\inv A$ and $-\lambda B$, because $u\in A$ and $u\bul\in
  B$ hold if and only if $\lambda\inv u\in\lambda\inv A$ and
  $(\lambda\inv u)\bul\in-\lambda B$. Using such rescaling, we may
  without loss of generality assume that $\lambda\inv\leq \delta<1$.

  Now consider any solution $u = a+b\sqrt 2\in\Z[\sqrt2]$. From
  $u\in[x_0,x_1]$, we know that $x_0-b\sqrt 2 \leq a \leq x_1-b\sqrt
  2$. But since $x_1-x_0<1$, it follows that for any $b\in\Z$, there
  is at most one $a\in\Z$ yielding a solution. Moreover, we note that
  $b=(u-u\bul)/\rt{3}$, so that any solution satisfies
  $(x_0-y_1)/\rt{3} \leq b \leq (x_1-y_0)/\rt{3}$. The algorithm then
  proceeds by enumerating all the integers $b$ in the interval
  $[(x_0-y_1)/\rt{3}, (x_1-y_0)/\rt{3}]$. For each such $b$, find the
  unique integer $a$ (if any) in the interval $[x_0-b\sqrt 2,
  x_1-b\sqrt 2]$. Finally, check if $a+b\sqrt2$ is a solution. The
  runtime is governed by the number of $b\in\Z$ that need to be
  checked, of which there are at most $O(y_1-y_0) =
  O(\delta\Delta)$. As a consequence of Lemma~\ref{lem:grid-bounds},
  the total number of solutions is at least $\Omega(\delta\Delta)$,
  and so the algorithm is efficient.
\end{proof}

% ----------------------------------------------------------------------
\subsection{Upright rectangles and upright sets}
\label{ssect:grid-pb-Zomega-up-rect}

Recall, from Proposition~\ref{prop:zomega} of
Chapter~\ref{chap:nb-th}, that $\Zomega$ can be seen as two disjoint
copies of $\Zrt\times\Zrt$. We use this fact to reduce two dimensional
grid problems over $\Zomega$ for upright rectangles $A$ and $B$ to
independent one-dimensional grid problems.

\begin{proposition}
  \label{prop:up-rect-Zomega}
  Let $A$ and $B$ be upright rectangles. Then there is an algorithm
  which enumerates all the solutions to the grid problem over
  $\Zomega$ for $A$ and $B$. Moreover, the algorithm requires only a
  constant number of arithmetic operations per solution produced.
\end{proposition}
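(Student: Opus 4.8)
The plan is to reduce the two–dimensional grid problem over $\Zomega$ for upright rectangles to one–dimensional grid problems in the sense of Problem~\ref{pb:grid-1-d-Zomega}, mirroring the way the upright–rectangle case over $\Zi$ was reduced to independent problems in the $x$– and $y$–coordinates. The structural input is Proposition~\ref{prop:zomega}, which exhibits $\Zomega$ as the disjoint union of the two cosets $\Zrt + i\Zrt$ and $\omega + (\Zrt + i\Zrt)$: every $u\in\Zomega$ has exactly one of the forms $u = a_0 + a_1 i$ or $u = a_0 + a_1 i + \omega$ with $a_0,a_1\in\Zrt$. Writing $A = [x_1,x_2]\times[y_1,y_2]$ and $B = [x_1',x_2']\times[y_1',y_2']$ under the identification of $\R^2$ with $\Comp$ (first coordinate $=$ real part, second $=$ imaginary part), the algorithm enumerates the solutions in each of the two cosets in turn and outputs their union.

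For the first coset, suppose $u = a_0 + a_1 i$. Since $a_0,a_1$ are real and $(-)\bul$ is a ring automorphism of $\Zomega$ fixing $i$ (recall $i\in\Zi$ and $t\in\Zi$ iff $t=t\bul$), the real parts of $u$ and $u\bul$ are $a_0$ and $a_0\bul$, and the imaginary parts are $a_1$ and $a_1\bul$. Hence the grid constraints $u\in A$ and $u\bul\in B$ are equivalent to the conjunction of ($a_0\in[x_1,x_2]$ and $a_0\bul\in[x_1',x_2']$) with ($a_1\in[y_1,y_2]$ and $a_1\bul\in[y_1',y_2']$), i.e.\ to a pair of \emph{independent} one–dimensional grid problems in the variables $a_0$ and $a_1$. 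For the second coset, suppose $u = a_0 + a_1 i + \omega$. Using $\omega=(1+i)/\sqrt2$ and $\omega\bul=-\omega$, the real parts of $u$ and $u\bul$ are $a_0+\tfrac{1}{\sqrt2}$ and $a_0\bul-\tfrac{1}{\sqrt2}$, and similarly for the imaginary parts with $a_1$ in place of $a_0$; so the constraints again decouple into two independent one–dimensional grid problems, now for the translated closed intervals $[x_1-\tfrac{1}{\sqrt2},x_2-\tfrac{1}{\sqrt2}]$ and $[x_1'+\tfrac{1}{\sqrt2},x_2'+\tfrac{1}{\sqrt2}]$ in the variable $a_0$ (and the analogous $y$–intervals for $a_1$). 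In each coset, apply Proposition~\ref{prop:1-d-Zomega} to the two one–dimensional problems, obtaining finite solution sets $S$ and $T$; the solutions of the original problem in that coset are then exactly $\{\, s + t\,i : s\in S,\ t\in T \,\}$, respectively the same set translated by $\omega$, which is emitted by iterating over $S\times T$.

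It remains to verify the complexity claim. Proposition~\ref{prop:1-d-Zomega} solves each one–dimensional problem with only a constant number of arithmetic operations per solution produced, so computing $S$ and $T$ costs $O(|S|+|T|)$ operations (plus $O(1)$ overhead), and emitting the product $S\times T$ costs $O(|S|\cdot|T|)$, which equals the number of solutions contributed by that coset. The one point needing care is the degenerate situation where one of $S,T$ is empty: then the coset contributes nothing, yet a naive run might waste $O(|S|+|T|)$ work enumerating a large nonempty factor. This is avoided by first asking for a single solution of each one–dimensional problem — obtainable in constant time, or reported absent — and proceeding to full enumeration only when both factors are known nonempty, in which case $|S|+|T|\le 2\,|S|\cdot|T|$ and the bound $O(|S|\cdot|T|)$ absorbs everything; summing over the two cosets gives a constant number of arithmetic operations per solution. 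I expect this bookkeeping around empty factors, together with correctly tracking the $\pm\tfrac{1}{\sqrt2}$ shifts and the sign $\omega\bul=-\omega$ in the second coset, to be the only subtle points; the rest is a direct application of Propositions~\ref{prop:zomega} and \ref{prop:1-d-Zomega}.
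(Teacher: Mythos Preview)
Your proposal is correct and follows essentially the same approach as the paper: split $\Zomega$ into the two cosets $\Zrt+i\Zrt$ and $\omega+(\Zrt+i\Zrt)$ via Proposition~\ref{prop:zomega}, observe that the constraints decouple into independent one-dimensional grid problems (with a $\pm\tfrac{1}{\sqrt2}$ shift in the second coset coming from $\omega\bul=-\omega$), and invoke Proposition~\ref{prop:1-d-Zomega}. Your explicit handling of the empty-factor degeneracy in the complexity argument is a nice addition that the paper leaves implicit.
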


\begin{proof}
  By assumption, $A=A_x\times A_y$ and $B=B_x\times B_y$, where $A_x$,
  $A_y$, $B_x$, and $B_y$ are closed intervals. By
  Proposition~\ref{prop:zomega}, any potential solution is of the form
  $u=a+b i$ or $u=a+b i+\omega$, where $a,b\in\Z[\sqrt 2]$. When
  $u=a+b i$, then $u\bul = a\bul + b\bul i$. Therefore, the
  two-dimensional grid constraints $u\in A$ and $u\bul\in B$ are
  equivalent to the one-dimensional constraints $a\in A_x$, $a\bul\in
  B_x$ and $b\in A_y$, $b\bul\in B_y$.  On the other hand, when $u=a+b
  i+\omega$, let $v = u-\omega = a+b i$. Then $v\bul = u\bul +
  \omega$, and the constraints $u\in A$ and $u\bul\in B$ are
  equivalent to $v\in A-\omega$ and $v\bul \in B+\omega$, which
  reduces to the one-dimensional constraints $a\in
  A_x-\frac{1}{\sqrt2}$, $a\bul\in B_x+\frac{1}{\sqrt2}$ and $b\in
  A_y-\frac{1}{\sqrt2}$, $b\bul\in B_y+\frac{1}{\sqrt2}$.  In both
  cases, the solutions to the one-dimensional constraints can be
  efficiently enumerated by Proposition~\ref{prop:1-d-Zomega}.
\end{proof}

Now that we are able to solve grid problem over $\Zomega$ for upright
rectangles $A$ and $B$, we can reason as in
Subsection~\ref{ssect:grid-pb-Zi-up-set} to establish the following
proposition.

\begin{proposition}
  \label{prop:up-set-Zomega}
  Let $A,B$ be a pair of $M$-upright sets. Then there exists an
  algorithm which enumerates all the solutions to the grid problem
  over $\Zomega$ for $A$ and $B$. Moreover, the algorithm requires
  $O(1/M^2)$ arithmetic operations per solution produced. In
  particular, when $M>0$ is fixed, it requires only a constant number
  of operations per solution.
\end{proposition}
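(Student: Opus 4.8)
The plan is to mimic the proof of Proposition~\ref{prop:up-set-Zi}, now carrying along the two grid constraints $u\in A$ and $u\bul\in B$ simultaneously. First I would pass to bounding boxes: since $A$ and $B$ are bounded convex sets, $A':=\BBox(A)$ and $B':=\BBox(B)$ are upright rectangles (of area exceeding that of $A$, $B$ by at most a constant factor, which is all that is used), so Proposition~\ref{prop:up-rect-Zomega} gives an algorithm that enumerates all solutions of the grid problem over $\Zomega$ for $A'$ and $B'$, using a constant number of arithmetic operations per candidate produced. For each such candidate $u$ — i.e.\ $u\in\Zomega$ with $u\in A'$ and $u\bul\in B'$ — the algorithm then performs the two membership tests $u\in A$ and $u\bul\in B$, each a constant number of operations by the assumed membership oracle, and outputs $u$ exactly when both succeed. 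Correctness is immediate, since $A\seq A'$ and $B\seq B'$ imply that the solutions for $A,B$ are precisely the candidates passing both tests.

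The content is the density estimate bounding the number of candidates against the number of genuine solutions. Writing a candidate, via Proposition~\ref{prop:zomega}, as $u=a+bi$ or $u=a+bi+\omega$ with $a,b\in\Zrt$, the candidates correspond (in either of the two copies of $\Zrt\times\Zrt$) to pairs $(a,b)$ where $a$ ranges over a one-dimensional grid cut out by the $x$-projections of $A'$ and $B'$ and $b$ over the grid cut out by the $y$-projections, exactly as in the proof of Proposition~\ref{prop:up-rect-Zomega}. By Lemma~\ref{lem:grid-bounds} (and the enumeration argument of Proposition~\ref{prop:1-d-Zomega}), the number of admissible $a$ is $\Theta(|A'_x|\,|B'_x|)$ and the number of admissible $b$ is $\Theta(|A'_y|\,|B'_y|)$ away from the trivial regimes where one of these grids has only $O(1)$ points, so the total number of candidates is $\Theta(\area(A')\area(B'))$ in that range. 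A genuine solution corresponds to a pair $(a,b)$ with $(a,b)\in A$ and $(a\bul,b\bul)\in B$. Fixing $a$, convexity makes $\{b:(a,b)\in A\}$ and $\{b':(a\bul,b')\in B\}$ intervals; the argument of Proposition~\ref{prop:up-set-Zi} — concavity of the section length as a function of $a$ together with $M$-uprightness — shows that for a constant fraction of the candidate values of $a$ the first interval has length $\Omega(M)$ times the $y$-extent of $A'$, and likewise for the second interval relative to $B'$. Applying Lemma~\ref{lem:grid-bounds} to these section grids then yields, for those $a$, a number of solution values of $b$ that is $\Omega(M^2)$ times the number of candidate values of $b$. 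Hence the proportion of candidates that are solutions is $\Omega(M^2)$, and the algorithm performs $O(1/M^2)$ arithmetic operations per solution produced.

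The main obstacle is this density estimate: one must treat, as in Proposition~\ref{prop:up-set-Zi}, the trivial cases in which a bounding box spans only a bounded number of grid rows or columns — there one solves the lower-dimensional problem directly for each row/column and the $M$-dependence disappears into constants — and one must make precise the passage from ``the average section length is at least $M$ times the extent'' to ``the section length is $\Omega(M)$ times the extent on a constant fraction of sections'', which is exactly where convexity of both $A$ and $B$ is needed (and where the factor $M^2$, rather than $M$, arises, one factor from each of the two grid constraints). Everything else — computing the bounding boxes, running the membership oracles, and tallying arithmetic operations — is routine given Propositions~\ref{prop:up-rect-Zomega}, \ref{prop:1-d-Zomega}, and~\ref{prop:up-set-Zi} and Lemma~\ref{lem:grid-bounds}.
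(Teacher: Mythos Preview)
Your proposal is correct and matches the paper's approach: the paper does not give a separate proof but simply says to reason as in Proposition~\ref{prop:up-set-Zi}, i.e., enumerate candidates in $\BBox(A)$, $\BBox(B)$ via Proposition~\ref{prop:up-rect-Zomega} and test membership, with $M$-uprightness of each set contributing a factor of $M$ to the candidate-to-solution ratio, hence $O(1/M^2)$ per solution. You have in fact supplied considerably more detail than the paper does.
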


% ----------------------------------------------------------------------
\subsection{Grid operators}
\label{ssect:grid-op-Zomega}

We now adapt the notion of grid operator from
Subsection~\ref{ssect:grid-op-Zi} to the setting of grid problems over
$\Zomega$.

\begin{definition}
  \label{def:grid-op-Zomega}
  We regard $\Z[\omega]$ as a subset of $\R^2$. A real linear operator
  $\Gop:\R^2 \to \R^2$ is called a {\em grid operator} if
  $\Gop(\Z[\omega]) \seq \Z[\omega]$. Moreover, a grid operator $\Gop$
  is called {\em special} if it has determinant $\pm 1$.
\end{definition}

Grid operators are characterized by the following lemma.

\begin{lemma}
  \label{lem-gridoperators}
  Let $\Gop:\R^2\to\R^2$ be a linear operator, which we can identify
  with a real $2\times 2$-matrix. Then $\Gop$ is a grid operator if
  and only if it is of the form
  \begin{equation}\label{eqn:gridoperator}
    \Gop =
    \left[
      \begin{array}{cc}
        a+\frac{a'}{\sqrt{2}} & b+\frac{b'}{\sqrt{2}}\\
        c+\frac{c'}{\sqrt{2}} & d+\frac{d'}{\sqrt{2}}
      \end{array}
    \right],
  \end{equation}
  where $a,b,c,d,a',b',c',d'$ are integers satisfying $a+b+c+d \equiv
  0\mmod{2}$ and $a'\equiv b'\equiv c' \equiv d'\mmod{2}$.
\end{lemma}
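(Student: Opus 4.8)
The plan is to reduce the claim to a finite computation on the $\Z$-basis $1,\omega,\omega^2,\omega^3$ of $\Zomega$, followed by elementary parity bookkeeping. Throughout I identify $\Comp$ with $\R^2$ as in Chapter~\ref{chap:quantum}, so that a real linear operator $\Gop$ is a real $2\times2$-matrix with entries $A,B$ in the top row and $C,D$ in the bottom row, acting by $x+yi\leftrightarrow(x,y)$. Since $\Zomega=\Z\,1+\Z\,\omega+\Z\,\omega^2+\Z\,\omega^3$ and $\Gop$ is $\R$-linear, hence additive and compatible with integer scalars, $\Gop(\Zomega)\seq\Zomega$ holds if and only if $\Gop(\omega^k)\in\Zomega$ for $k=0,1,2,3$. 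Here $\omega^0=1=(1,0)$ and $\omega^2=i=(0,1)$, so $\Gop(1)=(A,C)$ and $\Gop(i)=(B,D)$ are just the columns of $\Gop$; and since $\omega=\frac{1+i}{\sqrt2}$ and $\omega^3=\frac{-1+i}{\sqrt2}$, linearity gives $\Gop(\omega)=\frac1{\sqrt2}(A+B,\,C+D)$ and $\Gop(\omega^3)=\frac1{\sqrt2}(-A+B,\,-C+D)$.

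Next I record the membership criterion for $\Zomega$. From Proposition~\ref{prop:zomega} — or directly, since the real and imaginary parts of $a_0+a_1\omega+a_2\omega^2+a_3\omega^3$ are $a_0+\frac{a_1-a_3}{\sqrt2}$ and $a_2+\frac{a_1+a_3}{\sqrt2}$ — a point $(x,y)\in\R^2$ lies in $\Zomega$ if and only if $x=m+p/\sqrt2$ and $y=n+q/\sqrt2$ for integers $m,n,p,q$ with $p\equiv q\mmod{2}$. The crucial auxiliary fact, used repeatedly, is that $1$ and $1/\sqrt2$ are linearly independent over $\Qu$, so a real number admits at most one such decomposition; in particular an expression $r+s/\sqrt2$ with $r,s\in\Qu$ lies in $\Z+\Z\frac1{\sqrt2}$ precisely when $r$ and $s$ are integers.

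Now the computation. Assume $\Gop(\Zomega)\seq\Zomega$. Applying the criterion to $\Gop(1)=(A,C)$ and $\Gop(i)=(B,D)$ forces $A,B,C,D\in\Z+\Z\frac1{\sqrt2}$, so we may write $A=a+a'/\sqrt2$, $B=b+b'/\sqrt2$, $C=c+c'/\sqrt2$, $D=d+d'/\sqrt2$ with uniquely determined integers; this is the form \eqref{eqn:gridoperator}, and the parity clause of the criterion gives $a'\equiv c'\mmod{2}$ and $b'\equiv d'\mmod{2}$. Expanding, $\Gop(\omega)=\bigl(\frac{a'+b'}{2}+\frac{a+b}{\sqrt2},\ \frac{c'+d'}{2}+\frac{c+d}{\sqrt2}\bigr)$; by the uniqueness fact this lies in $\Zomega$ exactly when $a'+b'$ and $c'+d'$ are even (equivalently $a'\equiv b'\mmod{2}$, whence also $c'\equiv d'$) and the $1/\sqrt2$-parts satisfy $a+b\equiv c+d\mmod{2}$. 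A parallel expansion of $\Gop(\omega^3)$ reproduces only the conditions $a'\equiv b'\mmod{2}$ and $a+b\equiv c+d\mmod{2}$, hence adds nothing, and $\Gop(\omega^2)=\Gop(i)$ was already handled. Collecting: $\Gop(\Zomega)\seq\Zomega$ if and only if $\Gop$ has the form \eqref{eqn:gridoperator} with $a'\equiv b'\equiv c'\equiv d'\mmod{2}$ and $a+b\equiv c+d\mmod{2}$; and $a+b\equiv c+d\mmod{2}$ is the same as $a+b+c+d\equiv 0\mmod{2}$, which is the asserted characterization. The converse is exactly this computation read in reverse: the two congruence conditions force each of $\Gop(1),\Gop(i),\Gop(\omega),\Gop(\omega^3)$ into $\Zomega$, hence so is $\Gop$ of any $\Z$-combination.

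I expect no conceptual difficulty; the work is purely in the parity bookkeeping. The one point requiring care is that dividing by $\sqrt2$ interchanges the ``integer part'' and the ``$1/\sqrt2$-part'' of an expression (up to a factor of $2$), so that membership of $\Gop(\omega)$ in $\Zomega$ is governed by the parities of $a'+b'$ and $c'+d'$ rather than of $a+b$ and $c+d$; and one must notice that $\Gop(\omega^3)\in\Zomega$ is automatic once $\Gop(1),\Gop(i),\Gop(\omega)\in\Zomega$, so the four generators collapse to three nontrivial checks.
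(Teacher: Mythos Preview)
Your proposal is correct and follows essentially the same approach as the paper: apply $\Gop$ to a $\Z$-generating set of $\Zomega$ (the paper uses the three points $1$, $i$, and $\omega$; you use all four $\omega^k$ and observe that $\omega^3$ is redundant), invoke the membership criterion from Proposition~\ref{prop:zomega}, and collect the resulting parity conditions. Your write-up is slightly more explicit about the $\Qu$-linear independence of $1$ and $1/\sqrt2$ underpinning the uniqueness of the decomposition, but the argument is otherwise the same.
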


\begin{proof}
  By Proposition~\ref{prop:zomega} from Chapter~\ref{chap:nb-th}, we
  know that a vector $u\in\R^2$ is in $\Z[\omega]$ if and only if it
  can be written of the form
  \begin{equation}\label{eqn:zomega-point}
    u = \left[\begin{matrix} 
        x_1+\frac{x_2}{\sqrt{2}} \\ 
        y_1+\frac{y_2}{\sqrt{2}}
      \end{matrix}\right],
  \end{equation}
  where $x_1,x_2,y_1,y_2$ are integers and $x_2\equiv y_2\mmod{2}$. It
  can then be shown by computation that every operator of the form
  {\eqref{eqn:gridoperator}} is a grid operator. For the converse,
  consider an arbitrary grid operator $\Gop$. We prove the claim by
  applying $\Gop$ to the three points
  $\begin{bsmallmatrix}1\\0\end{bsmallmatrix}$,
  $\begin{bsmallmatrix}0\\1\end{bsmallmatrix}$, and
  $\frac{1}{\sqrt{2}}\begin{bsmallmatrix}1\\1\end{bsmallmatrix} \in
  \Z[\omega]$.  From
  $\Gop\begin{bsmallmatrix}1\\0\end{bsmallmatrix}\in\Z[\omega]$ and
  $\Gop\begin{bsmallmatrix}0\\1\end{bsmallmatrix}\in\Z[\omega]$, it
  follows that the columns of $\Gop$ are of the form
  {\eqref{eqn:zomega-point}}, so that $\Gop$ is of the form
  {\eqref{eqn:gridoperator}}, with integers $a,b,c,d,a',b',c',d'$
  satisfying $a'\equiv c'\mmod{2}$ and $b'\equiv
  d'\mmod{2}$. Moreover, we have
  \[ \Gop\begin{bsmallmatrix}1/\sqrt 2\\1/\sqrt 2\end{bsmallmatrix} =
  \begin{bmatrix}
    \frac{a'+b'}{2}+\frac{a+b}{\sqrt2}\\
    \frac{c'+d'}{2}+\frac{c+d}{\sqrt2}
  \end{bmatrix}\in\Z[\omega],
  \]
  which implies $a+b\equiv c+d\mmod{2}$ and $a'+b'\equiv c'+d'\equiv
  0\mmod{2}$. Together, these conditions imply $a+b+c+d \equiv
  0\mmod{2}$ and $a'\equiv b'\equiv c' \equiv d'\mmod{2}$, as claimed.
\end{proof}

\begin{remark}
  \label{gridoperatorcomposition}
  The composition of two (special) grid operators is again a (special)
  grid operator. If $\Gop$ is a special grid operator, then $\Gop$ is
  invertible and $\Gop\inv$ is a special grid operator. If $\Gop$ is a
  (special) grid operator, then $\Gop\bul$ is a (special) grid
  operator, defined by applying $(-)\bul$ separately to each matrix
  entry, and satisfying $\Gop\bul u\bul = (\Gop u)\bul$.
\end{remark}

The interest of special grid operators lies in the following fact.

\begin{proposition}
  \label{prop:operator-on-constraint}
  Let $\Gop$ be a special grid operator, and let $A$ and $B$ be
  subsets of $\R^2$. Define
  \[ \begin{array}{r@{~}c@{~}l}
    \Gop(A) &=& \s{\Gop u \mid u\in A}, \\
    \Gop\bul(B) &=& \s{\Gop\bul u \mid u\in B}.
  \end{array}
  \]
  Then $u\in\Z[\omega]$ is a solution to the grid problem over
  $\Zomega$ for $A$ and $B$ if and only if $\Gop u$ is a solution to
  the grid problem over $\Zomega$ for $\Gop(A)$ and $\Gop\bul(B)$. In
  particular, the grid problem over $\Zomega$ for $A$ and $B$ is
  computationally equivalent to that for $\Gop(A)$ and $\Gop\bul(B)$.
\end{proposition}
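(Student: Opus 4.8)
The plan is to prove this exactly as the $\Zomega$-analogue of Remark~\ref{rem:sol-grid-op}, the only genuinely new ingredient being the bookkeeping for the $(-)\bul$ automorphism. First I would unpack the definition: by Definition~\ref{def:grid-Zomega} and Problem~\ref{pb:grid-Zomega}, an element $u$ is a solution to the grid problem over $\Zomega$ for $A$ and $B$ exactly when the three conditions $u\in\Zomega$, $u\in A$, and $u\bul\in B$ hold simultaneously. Hence it suffices to show that this conjunction is equivalent to the conjunction $\Gop u\in\Zomega$, $\Gop u\in\Gop(A)$, and $(\Gop u)\bul\in\Gop\bul(B)$.

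Each of these three equivalences follows from Remark~\ref{gridoperatorcomposition}. Since $\Gop$ is special, it is invertible and $\Gop\inv$ is again a (special) grid operator; as both $\Gop$ and $\Gop\inv$ send $\Zomega$ into $\Zomega$, the map $\Gop$ restricts to a bijection of $\Zomega$, giving $u\in\Zomega$ precisely when $\Gop u\in\Zomega$. For the $A$-part, I would use that $\Gop$, having determinant $\pm1$, is an injective $\R$-linear map: the implication $u\in A\Rightarrow\Gop u\in\Gop(A)$ is immediate from the definition of the direct image, and the converse uses injectivity (if $\Gop u=\Gop v$ with $v\in A$ then $u=v\in A$). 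Importantly, this argument needs only injectivity, so $A$ may be an arbitrary subset of $\R^2$. For the $B$-part, Remark~\ref{gridoperatorcomposition} gives $(\Gop u)\bul=\Gop\bul u\bul$ and tells us that $\Gop\bul$ is itself a special grid operator, hence injective, so the same reasoning as for $A$ yields $u\bul\in B$ iff $\Gop\bul u\bul\in\Gop\bul(B)$ iff $(\Gop u)\bul\in\Gop\bul(B)$. Conjoining the three equivalences proves the biconditional.

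For the ``in particular'' clause I would argue that the reduction is effective in both directions: $\Gop$, $\Gop\inv$, $\Gop\bul$, and $(\Gop\bul)\inv$ are applied with only a bounded number of arithmetic operations, and the auxiliary data by which a convex set is ``given'' as input (an enclosing rational polygon of comparable area, a membership test, and a line-intersection routine, as described at the start of Chapter~\ref{chap:grid-pb}) transports along a linear bijection without penalty: $\Gop$ maps polygons to polygons and lines to lines, preserves area up to the factor $|\det\Gop|=1$, and $v\in\Gop(A)$ iff $\Gop\inv v\in A$. Thus an algorithm enumerating the solutions of either grid problem, post-composed with $\Gop$ or $\Gop\inv$ (and with $\Gop\bul$ on the $B$-side), enumerates the solutions of the other at the same asymptotic cost.

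I do not anticipate a serious obstacle here; the content is essentially a careful unwinding on top of Remark~\ref{gridoperatorcomposition} and Lemma~\ref{lem-gridoperators}. The two points that deserve attention are making sure the $A$- and $B$-equivalences are phrased so that they rely only on \emph{injectivity} of $\Gop$ and $\Gop\bul$ (so that no convexity or boundedness of $A,B$ enters the set-theoretic part), and checking that ``being given'' a convex set survives the linear change of coordinates, which is routine.
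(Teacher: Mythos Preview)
Your proposal is correct and follows essentially the same approach as the paper: unpack the grid-problem conditions, use Remark~\ref{gridoperatorcomposition} for $(\Gop u)\bul=\Gop\bul u\bul$ and for invertibility of $\Gop$ and $\Gop\bul$, and chain the equivalences. The paper's proof is a one-line version of yours (it omits the explicit $u\in\Zomega\iff\Gop u\in\Zomega$ step and the computational-equivalence discussion), but the substance is identical.
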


\begin{proof}
  Let $u\in\Z[\omega]$. Then $u$ is a solution to the grid problem for
  $A$ and $B$ if and only if $u\in A$ and $u\bul\in B$, if and only if
  $\Gop u\in \Gop(A)$ and $\Gop\bul u\bul = (\Gop u)\bul \in
  \Gop\bul(B)$, if and only if $\Gop u$ is a solution to the grid
  problem for $\Gop(A)$ and $\Gop\bul(B)$.
\end{proof}

% ......................................................................
\begin{figure}
  \[ (a)~ \mp{0.95}{\scalebox{0.8}{\input{grid-4-a.tex}}} \quad (b)~
  \mp{0.95}{\scalebox{0.8}{\input{grid-4-b.tex}}}
  \]
  \caption[The action of a grid operator on a grid problem over
  $\Zomega$.]{(a) The grid problem over $\Zomega$ for two sets $A$ and
    $B$. (b) The grid problem over $\Zomega$ for $\Gop(A)$ and
    $\Gop\bul(B)$. Note that the solutions of (a), which are the grid
    points in the set $A$, are in one-to-one correspondence with the
    solutions of (b), which are the grid points in the set $\Gop(A)$.}
  \label{fig:grid-4}
  \rule{\textwidth}{0.1mm}
\end{figure}
% ......................................................................

Figure~\ref{fig:grid-4}(a) illustrates the grid problem for a pair of
sets $A$ and $B$. As before, the set $B$ is shown in green, and
$\Grid(B)$ is shown as black dots. The set $A$ is shown in red, and
the solutions to the grid problem are the seven grid points that lie
in $A$. Figure~\ref{fig:grid-4}(b) shows the grid problem for the sets
$\Gop(A)$ and $\Gop\bul(B)$, where $\Gop$ is the special grid operator
\[
\Gop = \left[\begin{matrix} 1 & \sqrt2 \\ 0 & 1\end{matrix}\right].
\]
Note that, as predicted by
Proposition~\ref{prop:operator-on-constraint}, the solutions of the
transformed grid problem are in one-to-one correspondence with those
of the original problem; namely, in each case, there are seven
solutions.

% ----------------------------------------------------------------------
\subsection{Ellipses}
\label{ssect:grid-pb-Zomega-ellipses}

Proceeding as in Subsection~\ref{ssect:grid-pb-Zi-ellipses} we now
prove that if $A$ and $B$ are two ellipses, then the grid problem for
$A$ and $B$ over $\Zomega$ can be solved efficiently. For this, we
show that one can find a grid operator $\Gop$ such that $\Gop(A)$ and
$\Gop\bul(B)$ are \emph{both} sufficiently upright. Indeed, we know
from Proposition~\ref{prop:up-set-Zomega} that if both $A$ and $B$ are
upright sets, then the grid problem over $\Zomega$ for $A$ and $B$ can
be solved efficiently. The fact that two ellipses have to be made
simultaneously upright make the problem of finding an appropriate grid
operator significantly more complicated.

We start by reformulating the problem in more convenient terms. Recall
from Definition~\ref{def:lambda-delta} of Chapter~\ref{chap:nb-th}
that $\lambda=\sqrt{2} +1$. The matrix $D$ corresponding to an ellipse
$E$ therefore has determinant 1 if and only if it can be written in
the form
\begin{equation}
  D =
  \left[ 
    \begin{array} {cc} 
      e{\lambda^{-z}} & b \\
      b & e{\lambda^{z}}
    \end{array} 
  \right]
\end{equation}
for some $b,e,z\in\R$ with $e>0$ and $e^2=b^2+1$. As established in
Proposition~\ref{prop:uprightness-ellipse}, the definition of
uprightness simplifies in this case to
\begin{equation}
  \label{eqn:upe}
  \up(E) = \frac{\pi}{4e^2} = \frac{\pi}{4\sqrt{b^2+1}}.
\end{equation}
Equivalently, if $\up(E)=M$, then
\begin{equation}
  \label{eqn:b2}
  b^2 = \frac{\pi^2}{16M^2} - 1.
\end{equation}

Since we now have to deal with pairs of ellipses, it is convenient to
introduce the following terminology for discussing pairs of matrices.

\begin{definition}
  \label{def:state}
  A \emph{state} is a pair of real symmetric positive definite
  matrices of determinant 1. Given a state $\state$ with
  \begin{equation}
    \label{eqn:state}
    D =
    \left[ 
      \begin{array}{cc} 
        e{\lambda^{-z}} & b \\
        b & e{\lambda^{z}}
      \end{array} 
    \right]
    \quad
    \Delta =
    \left[ 
      \begin{array}{cc} 
        \varepsilon{\lambda^{-\zeta}} & \beta \\
        \beta & \varepsilon{\lambda^{\zeta}}
      \end{array} 
    \right]
  \end{equation}
  we define its \emph{skew} as $\sk\state=b^2+\beta^2$ and its
  \emph{bias} as $\bias \state =\zeta-z$.
\end{definition}

Note that the skew of a state is small if and only if both $b^2$ and
$\beta^2$ are small, which happens, by {\eqref{eqn:upe}}, if and only
if the ellipses corresponding to $D$ and $\Delta$ both have large
uprightness. So our strategy for increasing the uprightness will be to
reduce the skew, as in Subsection~\ref{ssect:grid-pb-Zi-ellipses}. In
what follows, we use $\state$ to denote an arbitrary state and always
assume that the entries of $D$ and $\Delta$ are given as in
\eqref{eqn:state}. For future reference, we record here another useful
property of states.

\begin{remark}
  \label{rem-be}
  If $\state$ is a state with $b\geq 0$, then $-be\leq -b^2$. Indeed:
  \[
  e^2=b^2+1 ~\imp~ e^2\geq b^2 ~\imp~ e\geq b ~\imp~ -be\leq -b^2.
  \]
  Similarly, if $b\leq 0$, then $be\leq -b^2$. Analogous inequalities
  also hold for $\beta$ and $\varepsilon$.
\end{remark}

The action of a grid operator on an ellipse can be adapted to states
in a natural way, provided that the operator is special.

\begin{definition}
  \label{def:action}
  The action of special grid operators on states is defined as
  follows. Here, $\Gop\da$ denotes the transpose of $\Gop$, and
  $\Gop\bul$ is defined by applying $(-)\bul$ separately to each
  matrix entry, as in Remark~\ref{gridoperatorcomposition}.
  \[
  \state\cdot \Gop = (\Gop^\dagger D\Gop,\Gop^{\bullet\dagger} \Delta
  \Gop^\bullet).
  \]
\end{definition}

\begin{lemma}
  \label{lem:ellipse-action}
  Let $(D,\Delta)$ be a state, and let $A$ and $B$ be the ellipses
  centered at the origin that are defined by $D$ and $\Delta$,
  respectively.  Then the ellipses $\Gop(A)$ and $\Gop\bul(B)$ are
  defined by the matrices $D'$ and $\Delta'$, where
  \[
  (D',\Delta') = (D,\Delta) \cdot \Gop\inv
  \]
\end{lemma}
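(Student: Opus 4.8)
The plan is to prove the lemma purely by unwinding Definition~\ref{def:action} and combining it with the elementary description of how an invertible linear map acts on an origin-centred ellipse. Spelling out the action of $\Gop\inv$ on the state, we have $\state\cdot\Gop\inv = \bigl((\Gop\inv)\da D\,\Gop\inv,\ (\Gop\inv)^{\bullet\dagger}\Delta\,(\Gop\inv)\bul\bigr)$, so by the very definitions of $D'$ and $\Delta'$ it suffices to show that $\Gop(A)$ is the ellipse defined by $(\Gop\inv)\da D\,\Gop\inv$ and that $\Gop\bul(B)$ is the ellipse defined by $(\Gop\inv)^{\bullet\dagger}\Delta\,(\Gop\inv)\bul$.

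For the first claim I would invoke the elementary fact (a special case of which is recorded in Remark~\ref{rem:grid-op-ellipse}): for any invertible real $2\by 2$ matrix $H$, the image under $H$ of the origin-centred ellipse $\s{u\such u\da D u\leq 1}$ defined by $D$ is $\s{v\such v\da (H\inv)\da D\,H\inv v\leq 1}$, i.e.\ the origin-centred ellipse defined by $(H\inv)\da D\,H\inv$. Since $\Gop$ is a special grid operator it is invertible, so taking $H=\Gop$ gives that $\Gop(A)$ is defined by $(\Gop\inv)\da D\,\Gop\inv = D'$. For the second claim, note that $\Gop\bul$ is again a special (hence invertible) grid operator by Remark~\ref{gridoperatorcomposition}, so the same fact applied to $H=\Gop\bul$ and to the ellipse $B$ defined by $\Delta$ shows that $\Gop\bul(B)$ is the origin-centred ellipse defined by $((\Gop\bul)\inv)\da\,\Delta\,(\Gop\bul)\inv$. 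It then only remains to identify $(\Gop\bul)\inv$ with $(\Gop\inv)\bul$: because $(-)\bul$ is a ring automorphism applied entrywise, it commutes with matrix multiplication and transposition, and — being a bijection fixing the identity — with inversion, since $\Gop\bul\,(\Gop\inv)\bul = (\Gop\,\Gop\inv)\bul = I$ forces $(\Gop\bul)\inv = (\Gop\inv)\bul$, whence $((\Gop\bul)\inv)\da = ((\Gop\inv)\bul)\da = (\Gop\inv)^{\bullet\dagger}$. Substituting gives that $\Gop\bul(B)$ is defined by $(\Gop\inv)^{\bullet\dagger}\Delta\,(\Gop\inv)\bul = \Delta'$, as required.

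I do not expect any genuine obstacle; the entire argument is bookkeeping. The only point that calls for care is keeping straight the order in which the three operations $\bul$, $\da$, and $\inv$ are applied in Definition~\ref{def:action} as compared with the ellipse-image formula, and that is precisely what the commutation identities above are there to reconcile. If one wishes, it is worth remarking — though it is not needed for the statement as phrased — that $(D',\Delta')$ is again a state: congruence by the invertible matrices $\Gop\inv$ and $(\Gop\inv)\bul$ preserves symmetry and positive-definiteness, and $\det(\Gop\inv)=\pm 1$ keeps both determinants equal to $1$.
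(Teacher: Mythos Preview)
Your proposal is correct and follows essentially the same approach as the paper: the paper simply writes out the chain of equalities $\Gop(A)=\s{\Gop(u)\mid u\da D u\leq 1}=\s{v\mid (\Gop\inv v)\da D(\Gop\inv v)\leq 1}=\s{v\mid v\da(\Gop\inv)\da D\Gop\inv v\leq 1}$ and then says the $\Gop\bul(B)$ case is similar. Your version is slightly more explicit about the identity $(\Gop\bul)\inv=(\Gop\inv)\bul$, which the paper leaves implicit, but otherwise the arguments coincide.
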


\begin{proof}
  We have
  \[
  \begin{array}{rcl}
    \Gop(A) &=& \s{\Gop(u)\in\R^2 \mid u\da D u \leq 1} \\
    &=& \s{v\in\R^2 \mid (\Gop\inv v)\da D(\Gop\inv v) \leq 1} \\
    &=& \s{v\in\R^2 \mid v\da (\Gop\inv)\da D \Gop\inv v \leq 1},
  \end{array}
  \]
  so the ellipse $\Gop(A)$ is defined by the positive operator
  $D'=(\Gop\inv)\da D \Gop\inv$. The proof for $\Gop\bul(B)$ is
  similar.
\end{proof}

The main ingredient in our proof that states can be made upright is
the following \emph{Step Lemma}.

\begin{lemma}[Step Lemma]
  \label{lem:step}
  For any state $\state$, if $\sk\state \geq \formula{\P}$, then there
  exists a special grid operator $\Gop$ such that $\sk (\state \cdot
  \Gop)\leq \formula{\Q} ~\sk\state$.  Moreover, $\Gop$ can be
  computed using a constant number of arithmetic operations.
\end{lemma}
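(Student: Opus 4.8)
The plan is to prove the lemma by an explicit case analysis based on a small, fixed repertoire of special grid operators, after first using symmetries to cut the number of cases down to finitely many. Throughout I would keep the notation of \eqref{eqn:state}, so that $e^2=b^2+1$, $\varepsilon^2=\beta^2+1$ and $\sk\state=b^2+\beta^2$, and I would use repeatedly that, by Definition~\ref{def:action} and Lemma~\ref{lem:ellipse-action}, applying a special grid operator $\Gop$ turns $\state$ into $(\Gop^\dagger D\Gop,\ \Gop^{\bullet\dagger}\Delta\,\Gop^\bullet)$; these matrices are again real, symmetric, positive definite and of determinant $1$, hence can again be brought into the form of \eqref{eqn:state}, with new off-diagonal entries $b'$ and $\beta'$, so that $\sk(\state\cdot\Gop)=(b')^2+(\beta')^2$.

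First, the \emph{normalization}. The shift operator $\shift=\begin{bmatrix}\lambda&0\\ 0&\lambda\inv\end{bmatrix}$ has the form prescribed by Lemma~\ref{lem-gridoperators} (as $\sqrt2=2/\sqrt2$) and determinant $1$, so it is a special grid operator; since $\lambda\bul=-\lambda\inv$, a direct computation shows that $\shift$ fixes $b$ and $\beta$ (hence the skew) while sending $\bias\state$ to $\bias\state+4$. So I would read $\bias\state$ off from the diagonal entries of $D$ and $\Delta$ by two logarithms, choose the appropriate power $\shift^{k}$ — a constant number of arithmetic operations, logarithms included, as in Remark~\ref{rem:arithm-op} — and thereby bring $\bias\state$ into a fixed bounded interval. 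Composing further with the sign-change grid operators $\begin{bmatrix}0&1\\ 1&0\end{bmatrix}$ and $\begin{bmatrix}1&0\\ 0&-1\end{bmatrix}$, I would also arrange $b\geq0$ and a prescribed sign of $\beta$. After this only finitely many configurations (approximate bias, signs of $b$ and $\beta$) remain.

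Then, the \emph{reduction}. For each of these configurations I would exhibit an explicit special grid operator $\Gop$, taken from a small fixed list — shears such as $\begin{bmatrix}1&\sqrt2\\ 0&1\end{bmatrix}$ and $\begin{bmatrix}1&0\\ \sqrt2&1\end{bmatrix}$ and their $\bullet$-conjugates (the same matrices with $\sqrt2$ negated), the $45^{\circ}$-rotation $R=\tfrac1{\sqrt2}\begin{bmatrix}1&-1\\ 1&1\end{bmatrix}$, and a handful of products — and verify $\sk(\state\cdot\Gop)\leq\formula{\Q}\,\sk\state$. Each verification is an elementary estimate: $b'$ and $\beta'$ are affine in $b$ and $\beta$ with coefficients built from $e\lambda^{\pm z}$ and $\varepsilon\lambda^{\pm\zeta}$; one bounds $\lambda^{\pm z}$ against $\lambda^{\pm\zeta}$ using the normalized bias, uses $e^2=b^2+1$ and $\varepsilon^2=\beta^2+1$ together with the sign data and the observation of Remark~\ref{rem-be} that $be\geq b^2$ when $b\geq0$ (and similarly for $\beta$), and uses the hypothesis $\sk\state\geq\formula{\P}$ to force $b$ or $\beta$ to be large enough for the quadratic terms to dominate; summing the bounds on $(b')^2$ and $(\beta')^2$ then gives $(b')^2+(\beta')^2\leq\formula{\Q}(b^2+\beta^2)$. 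Composing this $\Gop$ with the normalizing operator from the previous step — still a special grid operator, by Remark~\ref{gridoperatorcomposition} — yields the operator required by the lemma, obtained from $\state$ in a constant number of arithmetic operations.

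\textbf{Where the difficulty lies.} The per-case estimate is routine; the real content is the design of the case analysis. One must pick the finite list of operators so that \emph{some} operator in it reduces the skew of \emph{every} state with $\sk\state\geq\formula{\P}$, and check that a \emph{single} pair of constants $(\formula{\P},\formula{\Q})$ makes all of the case estimates succeed simultaneously. This is a balancing act between the skew — which bounds the sizes of $b,\beta,e,\varepsilon$ — and the bias — which bounds the orientation factors $\lambda^{\pm z},\lambda^{\pm\zeta}$ — weighed against the fixed shear and rotation coefficients; producing an operator list together with a threshold $\formula{\P}$ under which no configuration is left uncovered is the crux of the argument.
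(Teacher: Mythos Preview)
Your overall strategy is the paper's: normalize the bias by a shift that preserves the skew, use $X$ and $Z$ to fix signs, then do a case analysis on the position of $(z,\zeta)$ in the strip of bounded bias and apply, in each region, a specific operator ($R$, a shear, or the paper's special operator $K$) for which one checks the inequality by hand.

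There is, however, a genuine gap in your reduction step. You claim the operator can be ``taken from a small fixed list'' of shears, $R$, and finitely many products. This cannot work. After normalizing the bias you have only bounded $\zeta-z$; the \emph{magnitude} $z$ (and hence $\zeta$) is still unbounded. Take $b=\beta=10$, $z=\zeta=N$. For any fixed grid operator $\Gop=\begin{bmatrix}p&q\\r&s\end{bmatrix}$, the new off-diagonal entry of $D$ is $b'=pq\,e\lambda^{-N}+(ps+qr)b+rs\,e\lambda^{N}$. If $rs\neq 0$ this blows up as $N\to\infty$; if $rs=0$ then $|ps+qr|=|\det\Gop|=1$ and $b'\to \pm b$, so the skew is not reduced at all. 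The same applies to $\beta'$ via $\Gop\bul$. Hence no finite list covers all $N$. The paper's fix is precisely what you are missing: in the regions where $z,\zeta$ are large it applies a shear \emph{power} $A^n$ or $B^n$ with $n$ chosen from the state, namely $n=\max\{1,\lfloor\lambda^{c}/2\rfloor\}$ respectively $n=\max\{1,\lfloor\lambda^{c}/\sqrt2\rfloor\}$ where $c=\min\{z,\zeta\}$ (Lemmas~\ref{lemmaA} and \ref{lemmaB}). This is still a constant number of arithmetic operations (compute $c$, then $n$), but the operator itself is not drawn from a finite repertoire. The intermediate regime (one of $z,\zeta$ small, the other of moderate size) is handled by a separate operator $K$ that you do not mention (Lemma~\ref{lemmaK}).

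A smaller point about your normalization: your diagonal grid operator is correct and does shift the bias by $4$, so you can only force the bias into an interval of length $4$. The paper gets it into $[-1,1]$ by a slicker device: its shifts $\sigma,\tau$ are \emph{not} grid operators, but for any grid operator $\Gop$ the conjugate $\sigma^{k}\Gop\sigma^{k}$ is one (Lemma~\ref{conjugationbysigma1}). So one shifts, applies the appropriate $\Gop$ in the narrow strip, and then un-shifts; the composite $\sigma^{k}\Gop\sigma^{k}$ is the special grid operator one actually outputs (Lemma~\ref{shift_lemma}). Your coarser shift would work in principle, but would force you to redo the $R$, $K$, $A^n$, $B^n$ coverings over a wider strip with different constants.
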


Before proving the Step Lemma, we show how it can be used to derive
the following proposition.

\begin{proposition}
  \label{prop:ellipse-Zomega}
  Let $A$ and $B$ be ellipses. Then there exists a grid operator
  $\Gop$ such that $\Gop(A)$ and $\Gop\bul (B)$ are
  $1/\formula{\oneoverM}$-upright. Moreover, if $A$ and $B$ are
  $M$-upright, then $\Gop$ can be efficiently computed in
  $O(\log(1/M))$ arithmetic operations.
\end{proposition}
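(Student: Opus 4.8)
The plan is to follow the strategy of the proof of Proposition~\ref{prop:to-upright-Zi}, with the Step Lemma (Lemma~\ref{lem:step}) playing the role of the elementary one-operator skew reduction used there for $\Zi$. Write $P$ and $q$ for the two constants of Lemma~\ref{lem:step}, so that $q<1$. First, by the same reduction as at the start of the proof of Proposition~\ref{prop:to-upright-Zi} — uprightness is invariant under translation and scaling (Remark~\ref{rem:uprightness-invariant}), and a linear operator carries translates to translates and scalings to scalings — I would assume without loss of generality that $A$ and $B$ are centered at the origin and that their defining matrices $D$ and $\Delta$ both have determinant $1$, i.e., that $\state$ is a state in the sense of Definition~\ref{def:state}.

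Next I would iterate the Step Lemma. If $\sk\state<P$, take $\Gop=I$ and we are done. Otherwise, repeatedly apply Lemma~\ref{lem:step}: starting from $\state$, as long as the current state has skew $\geq P$, produce a special grid operator reducing its skew by a factor of $q$. Since the skew is multiplied by $q<1$ at each step and is bounded below by $0$, this terminates after finitely many steps, say producing operators $\Gop_1,\dots,\Gop_m$ with $\sk(\state\cdot\Gop_1\cdots\Gop_m)<P$; here I use that the action of special grid operators on states composes, $(\state\cdot\Gop_i)\cdot\Gop_{i+1}=\state\cdot(\Gop_i\Gop_{i+1})$, which is immediate from Definition~\ref{def:action}. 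Set $\Gop=(\Gop_1\cdots\Gop_m)\inv=\Gop_m\inv\cdots\Gop_1\inv$, which is again a special grid operator by Remark~\ref{gridoperatorcomposition}. By Lemma~\ref{lem:ellipse-action}, the ellipses $\Gop(A)$ and $\Gop\bul(B)$ are defined by the two matrices of the state $\state\cdot\Gop\inv=\state\cdot(\Gop_1\cdots\Gop_m)$, which again has determinant $1$ and skew $<P$. Writing $b'$ and $\beta'$ for the off-diagonal entries of these two matrices, we have $b'^2<P$ and $\beta'^2<P$, so by \eqref{eqn:upe},
\[
\up(\Gop(A))=\frac{\pi}{4\sqrt{b'^2+1}}>\frac{\pi}{4\sqrt{P+1}}\geq\frac{1}{\bigl\lceil 4\sqrt{P+1}/\pi\bigr\rceil}=\frac{1}{\formula{\oneoverM}},
\]
and similarly $\up(\Gop\bul(B))>1/\formula{\oneoverM}$, so both are $1/\formula{\oneoverM}$-upright.

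For the complexity bound, suppose $A$ and $B$ are $M$-upright. Then by \eqref{eqn:b2} the off-diagonal entries of $D$ and $\Delta$ satisfy $b^2=\beta^2=\pi^2/(16M^2)-1$, hence $\sk\state=b^2+\beta^2=\pi^2/(8M^2)-2=O(1/M^2)$. The number $m$ of iterations therefore satisfies $m=O(\log\sk\state)=O(\log(1/M^2))=O(\log(1/M))$; each iteration uses a constant number of arithmetic operations by Lemma~\ref{lem:step}, and forming the product $\Gop_1\cdots\Gop_m$ and its inverse costs a further $O(m)=O(\log(1/M))$ operations, so $\Gop$ is produced in $O(\log(1/M))$ arithmetic operations overall. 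The essential work is all in the Step Lemma, not in this proposition: because $A$ and $B$ are coupled through the automorphism $(-)\bul$, they cannot in general be made upright one at a time as in Proposition~\ref{prop:to-upright-Zi}, and Lemma~\ref{lem:step} is precisely the statement that a single special grid operator can reduce the combined skew whenever it is large. That lemma is proved separately below; granting it, the present proposition is just the iteration above.
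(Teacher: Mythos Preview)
Your proof is correct and follows essentially the same route as the paper's: reduce to determinant-$1$ centered ellipses, iterate the Step Lemma to drive the skew below $P$, set $\Gop=(\Gop_1\cdots\Gop_m)\inv$, and read off the uprightness bound from \eqref{eqn:upe}. One tiny slip: from $M$-uprightness you only get $b^2\leq \pi^2/(16M^2)-1$ and $\beta^2\leq \pi^2/(16M^2)-1$ (inequalities, and not necessarily equal to each other), but your conclusion $\sk\state=O(1/M^2)$ and hence $m=O(\log(1/M))$ is unaffected.
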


\begin{proof}
  Let $D$ and $\Delta$ be the matrices defining $A$ and $B$
  respectively, in the sense of Definition~\ref{def:ellipse}. Since
  uprightness is invariant under translations and scaling, we may
  without loss of generality assume that both ellipses are centered at
  the origin, and that $\det D = \det \Delta = 1$.

  The pair $\state$ is a state. By applying Lemma~\ref{lem:step}
  repeatedly, we get grid operators $\Gop_1,\ldots,\Gop_n$ such that:
  \begin{equation}
    \label{eqn-skew-reduction}
    \sk(\state\cdot \Gop_1\ldots \Gop_n) \leq \formula{\P}.
  \end{equation}
  Now let $(D',\Delta')=\state\cdot \Gop_1\ldots \Gop_n$ and set
  $\Gop=(\Gop_1\cdots \Gop_n)^{-1}$. By
  Lemma~\ref{lem:ellipse-action}, the ellipses $\Gop(A)$ and
  $\Gop^{\bullet}(B)$ are defined by the matrices $D'$ and $\Delta'$,
  respectively.  Let $b$ and $\beta$ be the anti-diagonal entries of
  the matrices $D'$ and $\Delta'$, respectively.  We have:
  \[
  b^2+\beta^2 = \sk (D',\Delta')= \sk(\state\cdot \Gop\inv) = \sk
  (\state\cdot \Gop_1\ldots \Gop_n)\leq \formula{\P},
  \]
  hence $b^2\leq \formula{\P}$ and $\beta^2\leq\formula{\P}$.  Using
  {\eqref{eqn:upe}}, we get
  \[
  \up(\Gop(A)) = \frac{\pi}{4\sqrt{b^2+1}} \geq
  \frac{\pi}{4\sqrt{\formula{\P+1}}} \geq 1/\formula{\oneoverM}
  \]
  and similarly $\up(\Gop\bul(B)) \geq 1/\formula{\oneoverM}$, as
  desired.
  % Check!!!
  \assert{\P <= (pi^2 / (16 * \M^2))-1}%

  To bound the number of operations, note that each application of
  $\Gop_j$ reduces the skew by at least
  $\formula{\roundone{100-100*\Q}}$ percent.  Therefore, the number
  $n$ in {\eqref{eqn-skew-reduction}} satisfies $n\leq
  \log_{\formula{\Q}}({\formula{\P}}/{\sk\state}) =
  O(\log(\sk\state))$. Using {\eqref{eqn:b2}}, we have
  \[
  \log (\sk\state) = \log (b^2+\beta^2) \leq \log
  ((\frac{\pi^2}{16M^2} - 1) + (\frac{\pi^2}{16M^2} - 1)) =
  O(\log(1/M)).
  \]
  It follows that the computation of $\Gop$ requires $O(\log(1/M))$
  applications of the Step Lemma, each of which requires a constant
  number of arithmetic operations, proving the final claim of the
  proposition.
\end{proof}

The remainder of this subsection is devoted to proving the Step
Lemma. To each state, we associate the pair $(z, \zeta)$. The proof of
the Step Lemma is essentially a case distinction on the location of
the pair $(z, \zeta)$ in the plane. We find coverings of the plane
with the property that if the point $(z, \zeta)$ belongs to some
region $\mathcal{O}$ of our covering, then it is easy to compute a
special grid operator $\Gop$ such that $\sk(\state\cdot \Gop)\leq
\formula{\Q}~\sk\state$. The relevant grid operators are given in
Figure~\ref{list_operators}.

% ......................................................................
\begin{figure}
  \[
  R= \frac{1}{\sqrt{2}}\left[ \begin{array} {cc}
      1 & -1  \\
      1 & 1
    \end{array} \right]
  ~~
  A = \left[ \begin{array} {cc} 
      1 & -2  \\
      0 & 1
    \end{array} \right]
  ~~ 
  B = \left[ \begin{array} {cc} 
      1 & \sqrt{2}  \\
      0 & 1
    \end{array} \right]
  \]
  \[
  K = \frac{1}{\sqrt{2}}\left[ \begin{array} {cc}
      -\lambda^{-1} & -1  \\
      \lambda & 1
    \end{array} \right]
  ~~
  X= \left[ \begin{array} {cc} 
      0 & 1  \\
      1 & 0
    \end{array} \right]
  ~~ 
  Z= \left[ \begin{array} {cc} 
      1 & 0  \\
      0 & -1
    \end{array} \right]
  \]
  \caption[The grid operators $R$, $A$, $B$, $X$, $K$, and $Z$.]{The
    grid operators $R$, $A$, $B$, $X$, $K$, and $Z$.}
  \label{list_operators}
  \rule{\textwidth}{0.1mm}
\end{figure}
% ......................................................................

Each one of the next five subsections is dedicated to a particular
region of the plane. 

% --------------------------------------------------------------------
\subsubsection{The Shift Lemma}
\label{sssect-shift}

In this section, we consider states $\state$ such that
$|\bias\state|>1$. Any such state can be ``shifted" to a state $(D',
\Delta')$ of equal skew but with $|\bias(D', \Delta')|\leq 1$.

\begin{definition}
  \label{def:shift-ops}
  The \emph{shift operators} $\sigma$ and $\tau$ are defined by:
  \[
  \sigma = \sqrt{\lambda^{-1}}\left[\begin{array}{cc}
      \lambda & 0 \\
      0 & 1
    \end{array} \right],
  \tau =
  \sqrt{\lambda^{-1}}
  \left[\begin{array}{cc}
      1 & 0 \\
      0 & -\lambda
    \end{array} \right]
  \]
\end{definition}

Even though $\sigma$ and $\tau$ are not grid operators, we can use
them to define an operation on states called a \emph{shift by $k$}.
By abuse of notation, we write this operation as an action.

\begin{definition}
  \label{def:k-shift}
  Given a state $\state$ and $k\in\Z$, the \emph{k-shift of $\state$}
  is defined as:
  \[
  \state \cdot \shift^k = (\sigma^k D \sigma^k, \tau^k \Delta \tau^k).
  \]
\end{definition}

The notation $\state \cdot \shift^k$ is justified by the following
lemma.

\begin{lemma}
  \label{propshift}
  The shift of a state is a state and moreover:
  \[
  \sk(\state\cdot\shift^k)= \sk\state ~\mbox{ and }~
  \bias(\state\cdot\shift^k)=\bias\state +2k
  \]
\end{lemma}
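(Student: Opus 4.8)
The plan is to verify the claim by direct computation on the explicit matrix forms. Recall that a state $\state$ has
\[
D = \left[\begin{array}{cc} e\lambda^{-z} & b \\ b & e\lambda^{z} \end{array}\right], \qquad
\Delta = \left[\begin{array}{cc} \varepsilon\lambda^{-\zeta} & \beta \\ \beta & \varepsilon\lambda^{\zeta} \end{array}\right],
\]
with $e,\varepsilon>0$, $e^2=b^2+1$, $\varepsilon^2=\beta^2+1$. The $k$-shift is $\state\cdot\shift^k = (\sigma^k D \sigma^k, \tau^k \Delta \tau^k)$, and since $\sigma$ and $\tau$ are diagonal it suffices to work out $\sigma^k$ and $\tau^k$ explicitly: because $\sigma = \sqrt{\lambda^{-1}}\,\mathrm{diag}(\lambda,1)$ we get $\sigma^k = \lambda^{-k/2}\,\mathrm{diag}(\lambda^k,1) = \mathrm{diag}(\lambda^{k/2},\lambda^{-k/2})$, and similarly $\tau = \sqrt{\lambda^{-1}}\,\mathrm{diag}(1,-\lambda)$ gives $\tau^k = \mathrm{diag}(\lambda^{-k/2},(-1)^k\lambda^{k/2})$.

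First I would compute $D' = \sigma^k D \sigma^k$. Conjugating $D$ by the diagonal matrix $\mathrm{diag}(\lambda^{k/2},\lambda^{-k/2})$ scales the $(1,1)$ entry by $\lambda^{k}$, the $(2,2)$ entry by $\lambda^{-k}$, and leaves the anti-diagonal entry $b$ fixed. Hence
\[
D' = \left[\begin{array}{cc} e\lambda^{k-z} & b \\ b & e\lambda^{z-k} \end{array}\right],
\]
which is again of the required form with the same $e$ and $b$ but with $z$ replaced by $z-k$. In particular $\det D' = e^2\lambda^{k-z}\lambda^{z-k} - b^2 = e^2 - b^2 = 1$, and $D'$ is still symmetric positive definite, so it defines an ellipse with the same $e$ (hence same constraint $e^2 = b^2+1$). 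Next I would do the analogous computation for $\Delta' = \tau^k \Delta \tau^k$: conjugation by $\mathrm{diag}(\lambda^{-k/2},(-1)^k\lambda^{k/2})$ scales the $(1,1)$ entry by $\lambda^{-k}$, the $(2,2)$ entry by $\lambda^{k}$, and the anti-diagonal entry $\beta$ by $\lambda^{-k/2}\cdot(-1)^k\lambda^{k/2} = (-1)^k$. So
\[
\Delta' = \left[\begin{array}{cc} \varepsilon\lambda^{-\zeta-k} & (-1)^k\beta \\ (-1)^k\beta & \varepsilon\lambda^{\zeta+k} \end{array}\right],
\]
which is of the required form with $\beta$ replaced by $(-1)^k\beta$ and $\zeta$ replaced by $\zeta+k$. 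Since $((-1)^k\beta)^2 = \beta^2$, the constraint $\varepsilon^2 = \beta^2 + 1$ still holds and $\det\Delta' = 1$; it is clearly still symmetric positive definite. Therefore $(D',\Delta')$ is a state.

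Finally I would read off the skew and bias from these explicit forms. The new anti-diagonal entries are $b$ and $(-1)^k\beta$, so $\sk(\state\cdot\shift^k) = b^2 + ((-1)^k\beta)^2 = b^2+\beta^2 = \sk\state$. For the bias: writing $D'$ in the standard form identifies its ``$z$-parameter'' as $z-k$ (from $e\lambda^{-(z-k)} = e\lambda^{k-z}$ in the top-left slot — so the new $z$ is $z-k$), and writing $\Delta'$ in standard form identifies its ``$\zeta$-parameter'' as $\zeta+k$. Hence $\bias(\state\cdot\shift^k) = (\zeta+k) - (z-k) = (\zeta - z) + 2k = \bias\state + 2k$. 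This completes the proof.

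There is essentially no serious obstacle here — the whole argument is a bookkeeping computation with $2\times 2$ diagonal conjugations. The one place to be careful is matching signs and the bookkeeping of which diagonal entry gets $\lambda^{+k}$ versus $\lambda^{-k}$ in the two cases (the roles are swapped between $\sigma$ and $\tau$, and $\tau$ additionally carries the $(-1)^k$), and making sure the resulting matrices are genuinely rewritten into the canonical parametrized form so that the $z$- and $\zeta$-parameters can be extracted; the sign factor $(-1)^k$ is harmless because only $b^2$ and $\beta^2$ enter the skew and the positive-definiteness constraints.
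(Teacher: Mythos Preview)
Your proposal is correct and follows essentially the same approach as the paper's proof: a direct computation of $\sigma^k D\sigma^k$ and $\tau^k\Delta\tau^k$, yielding the same explicit matrices, followed by reading off the skew and bias. The only cosmetic difference is that you first compute $\sigma^k$ and $\tau^k$ explicitly as diagonal matrices before conjugating, whereas the paper writes out the conjugation in one step.
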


\begin{proof}
  Compute $\state\cdot\shift^k$:
  \[
  \begin{array}{rcl}
    \state\cdot\shift^k & = & 
    (\sigma^k D \sigma^k, \tau^k \Delta \tau^k) \\
    & = & (\sigma^k\left[ \begin{array} {cc} 
        e{\lambda^{-z}} & b \\
        b & e{\lambda^{z}}
      \end{array} \right] \sigma^k,
    \tau^k\left[ \begin{array} {cc} 
        \varepsilon{\lambda^{-\zeta}} & \beta \\
        \beta & \varepsilon{\lambda^{\zeta}}
      \end{array} \right]\tau^k) \\
    & = & (\left[ \begin{array} {cc} 
        e\lambda^{-z+k} & b \\
        b & e\lambda^{z-k}
      \end{array} \right],
    \left[ \begin{array} {cc} 
        \varepsilon\lambda^{-\zeta-k} & (-1)^k\beta \\
        (-1)^k\beta & \varepsilon\lambda^{\zeta+k}
      \end{array} \right])
  \end{array}
  \]
  The resulting matrices are clearly symmetric and positive
  definite. Moreover, since $\sigma^k$ and $\tau^k$ have determinant
  $\pm 1$, both $\sigma^k D\sigma^k$ and $\tau^k \Delta\tau^k$ have
  determinant 1. Finally:
  \begin{itemize}
  \item $\sk(\state\cdot\shift^k)= b^2+((-1)^k\beta)^2=b^2+\beta^2=
    \sk\state$ and
  \item $\bias(\state\cdot\shift^k)=(\zeta+k)-(z-k)=\bias\state +2k$,
  \end{itemize}
  which completes the proof.
\end{proof}

For every special grid operator $\Gop$, there is a special grid
operator $\Gop'$ whose action on a state corresponds to shifting the
state by $k$, applying $\Gop$ and then shifting the state by $-k$.

\begin{lemma}
  \label{conjugationbysigma1}
  If $\Gop$ is a special grid operator and $k\in\Z$, then
  $\Gop'=\sigma^k \Gop \sigma^k$ is a special grid operator and
  moreover $\Gop'^\bullet = (-\tau)^k \Gop^\bullet \tau^k$.
\end{lemma}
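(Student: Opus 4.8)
The plan is to compute $\Gop'=\sigma^k\Gop\sigma^k$ explicitly and then read off both claims. The first observation is that the shift operator $\sigma=\sqrt{\lambda^{-1}}\begin{bsmallmatrix}\lambda&0\\0&1\end{bsmallmatrix}$ is simply the diagonal matrix with entries $\lambda^{1/2}$ and $\lambda^{-1/2}$, so that $\sigma^k=\begin{bsmallmatrix}\lambda^{k/2}&0\\0&\lambda^{-k/2}\end{bsmallmatrix}$. Consequently $\Gop'$ is obtained from $\Gop$ by multiplying the $(1,1)$-entry by $\lambda^k$, multiplying the $(2,2)$-entry by $\lambda^{-k}$, and leaving the two anti-diagonal entries unchanged. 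Since $\det\sigma=\lambda^{1/2}\lambda^{-1/2}=1$, we get $\det\Gop'=\det\Gop=\pm1$ for free, so it will only remain to check that $\Gop'$ is a grid operator in order to conclude that it is special.

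To show $\Gop'$ is a grid operator I would appeal to the characterization in Lemma~\ref{lem-gridoperators}. By writing $\sigma^{k+1}\Gop\sigma^{k+1}=\sigma(\sigma^k\Gop\sigma^k)\sigma$ and using Remark~\ref{gridoperatorcomposition}, it suffices to treat $k=1$ and $k=-1$ and then induct on $|k|$. For $k=1$, I would write $\Gop$ in the form \eqref{eqn:gridoperator} with integers $a,b,c,d,a',b',c',d'$ satisfying $a+b+c+d\equiv 0\mmod 2$ and $a'\equiv b'\equiv c'\equiv d'\mmod 2$, and compute $\lambda\bigl(a+\tfrac{a'}{\sqrt2}\bigr)=(a+a')+\tfrac{2a+a'}{\sqrt2}$ and $\lambda^{-1}\bigl(d+\tfrac{d'}{\sqrt2}\bigr)=(d'-d)+\tfrac{2d-d'}{\sqrt2}$. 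Thus $\Gop'$ again has the shape \eqref{eqn:gridoperator}, with integer parts $(a+a',b,c,d'-d)$ and $\sqrt2$-parts $(2a+a',b',c',2d-d')$. The new integer parts sum to $(a+b+c+d)+(a'+d')-2d$, which is even because $a+b+c+d$ is even and $a'\equiv d'$; and the new $\sqrt2$-parts are congruent to $(a',b',c',d')$ modulo $2$, hence still all congruent to one another. The case $k=-1$ is identical after interchanging the roles of the two rows and columns.

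For the $(-)\bul$ identity, the key algebraic fact is $\lambda\bul=1-\sqrt2=-\lambda^{-1}$, whence $(\lambda^{k})\bul=(-1)^k\lambda^{-k}$ and $(\lambda^{-k})\bul=(-1)^k\lambda^{k}$. Applying $(-)\bul$ entrywise to $\Gop'$ therefore multiplies the $(1,1)$-entry of $\Gop\bul$ by $(-1)^k\lambda^{-k}$, the $(2,2)$-entry by $(-1)^k\lambda^{k}$, and fixes the anti-diagonal entries of $\Gop\bul$. On the other hand $\tau=\sqrt{\lambda^{-1}}\begin{bsmallmatrix}1&0\\0&-\lambda\end{bsmallmatrix}$ is diagonal with entries $\lambda^{-1/2}$ and $-\lambda^{1/2}$, so $\tau^k=\begin{bsmallmatrix}\lambda^{-k/2}&0\\0&(-1)^k\lambda^{k/2}\end{bsmallmatrix}$ and $(-\tau)^k=\begin{bsmallmatrix}(-1)^k\lambda^{-k/2}&0\\0&\lambda^{k/2}\end{bsmallmatrix}$; multiplying out $(-\tau)^k\Gop\bul\tau^k$ scales the $(1,1)$-entry of $\Gop\bul$ by $(-1)^k\lambda^{-k}$, the $(2,2)$-entry by $(-1)^k\lambda^{k}$, and the two anti-diagonal entries by $1$. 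Comparing entry by entry gives $\Gop'\bul=(-\tau)^k\Gop\bul\tau^k$.

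I do not anticipate a genuine difficulty: the whole argument rests on diagonalizing $\sigma$ and $\tau$ and on the single identity $\lambda\bul=-\lambda^{-1}$. The one place that demands real care is the verification that the two congruence conditions of Lemma~\ref{lem-gridoperators} survive multiplication of the diagonal entries by $\lambda^{\pm1}$; this is where both hypotheses on $\Gop$ are used, and it is the step I would carry out in full detail.
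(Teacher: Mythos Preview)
Your proof is correct and follows the same overall strategy as the paper: reduce to $|k|=1$ by induction, compute $\Gop'$ as a diagonal rescaling of $\Gop$, and use $\lambda\bul=-\lambda^{-1}$ for the bullet identity. The one place where you diverge is in verifying that $\Gop'$ is a grid operator: you check the congruence conditions of Lemma~\ref{lem-gridoperators} directly, whereas the paper avoids this computation by factoring
\[
\sigma\Gop\sigma=\begin{bmatrix}\lambda^{-1}&0\\0&\lambda^{-1}\end{bmatrix}\begin{bmatrix}\lambda&0\\0&1\end{bmatrix}\Gop\begin{bmatrix}\lambda&0\\0&1\end{bmatrix}
\]
and observing that each factor is itself a grid operator (since $\lambda,\lambda^{-1}\in\Zrt$), so the product is one by Remark~\ref{gridoperatorcomposition}. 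This buys a shorter argument; your explicit check buys transparency and, incidentally, handles $k=-1$ on equal footing rather than leaving it implicit.
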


\begin{proof}
  It suffices to show this for $k=1$. Suppose
  $\Gop=\left[ \begin{array} {cc}
      w & x \\
      y & z
    \end{array} \right]$ is a special grid operator and note that:
  \[
  \Gop'= \sigma \Gop \sigma= \left[ \begin{array} {cc}
      \lambda w & x \\
      y & \lambda^{-1}z
    \end{array} \right]
  =
  \left[ \begin{array} {cc} 
      \lambda^{-1} & 0 \\
      0 & \lambda^{-1}
    \end{array} \right]
  \left[ \begin{array} {cc} 
      \lambda & 0 \\
      0 & 1
    \end{array} \right]
  \Gop
  \left[ \begin{array} {cc} 
      \lambda  & 0 \\
      0 & 1
    \end{array} \right].
  \]
  Since all the factors in the above product are grid operators, the
  result is also a grid operator. Moreover
  $\det(\sigma\Gop\sigma)=\det(\Gop)=1$ so that $\sigma\Gop\sigma$ is
  special. Finally:
  \[
  \Gop'^\bullet=(\sigma \Gop \sigma)^\bullet= \left[ \begin{array}
      {cc}
      \lambda^\bullet w^\bullet & x^\bullet \\
      y^\bullet & (\lambda^{-1})^\bullet z^\bullet
    \end{array} \right]
  =
  \left[ \begin{array} {cc} 
      -\lambda^{-1} w^\bullet & x^\bullet \\
      y^\bullet & -\lambda z^\bullet
    \end{array} \right]
  =
  -\tau \Gop^\bullet \tau.
  \]
\end{proof}

\begin{lemma}
  \label{conjugationbysigma2}
  If $\Gop$ is a grid operator, then:
  \[
  ((\state\cdot \shift^k) \cdot \Gop )\cdot \shift^k = \state \cdot
  (\sigma^k \Gop\sigma^k).
  \]
\end{lemma}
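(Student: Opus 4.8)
The plan is to unfold both sides of the claimed identity using only the definitions of the shift action (Definition~\ref{def:k-shift}) and of the grid-operator action on states (Definition~\ref{def:action}), and then to compare the two resulting pairs of matrices component by component. The single structural fact needed beyond Lemma~\ref{conjugationbysigma1} is that $\sigma$ and $\tau$ are diagonal, hence symmetric, so $(\sigma^k)^\dagger=\sigma^k$ and $(\tau^k)^\dagger=\tau^k$.

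\textbf{Unfolding the left-hand side.} Writing $\state=(D,\Delta)$, Definition~\ref{def:k-shift} gives $\state\cdot\shift^k=(\sigma^kD\sigma^k,\ \tau^k\Delta\tau^k)$. Applying $\Gop$ via Definition~\ref{def:action} yields $\bigl(\Gop^\dagger\sigma^kD\sigma^k\Gop,\ (\Gop^\bullet)^\dagger\tau^k\Delta\tau^k\Gop^\bullet\bigr)$, and shifting by $k$ once more produces
\[
\bigl(\sigma^k\Gop^\dagger\sigma^kD\sigma^k\Gop\sigma^k,\ \ \tau^k(\Gop^\bullet)^\dagger\tau^k\Delta\tau^k\Gop^\bullet\tau^k\bigr).
\]
Using the symmetry of $\sigma^k$ we have $\sigma^k\Gop^\dagger\sigma^k=(\sigma^k\Gop\sigma^k)^\dagger$, so the first component is $(\sigma^k\Gop\sigma^k)^\dagger D(\sigma^k\Gop\sigma^k)$; likewise, using the symmetry of $\tau^k$, the second component is $(\tau^k\Gop^\bullet\tau^k)^\dagger\Delta(\tau^k\Gop^\bullet\tau^k)$.

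\textbf{Matching with the right-hand side.} Set $\Gop'=\sigma^k\Gop\sigma^k$, which is a special grid operator by Lemma~\ref{conjugationbysigma1}, with $(\Gop')^\bullet=(-\tau)^k\Gop^\bullet\tau^k$. By Definition~\ref{def:action}, $\state\cdot\Gop'=\bigl((\Gop')^\dagger D\Gop',\ ((\Gop')^\bullet)^\dagger\Delta(\Gop')^\bullet\bigr)$. The first component is literally $(\sigma^k\Gop\sigma^k)^\dagger D(\sigma^k\Gop\sigma^k)$, which agrees with the left-hand side. For the second, $(-\tau)^k=(-1)^k\tau^k$, so $(\Gop')^\bullet=(-1)^k\tau^k\Gop^\bullet\tau^k$; the scalar $(-1)^k$ then occurs on both sides of the quadratic form $((\Gop')^\bullet)^\dagger\Delta(\Gop')^\bullet$ and cancels, leaving $(\tau^k\Gop^\bullet\tau^k)^\dagger\Delta(\tau^k\Gop^\bullet\tau^k)$, which again agrees. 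Hence $((\state\cdot\shift^k)\cdot\Gop)\cdot\shift^k=\state\cdot\Gop'$, as claimed.

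\textbf{Main obstacle.} There is no genuine difficulty; the one point that needs care is that $\sigma$ and $\tau$ are not themselves grid operators, so $(-)^\bullet$ cannot be applied to them entrywise. This is precisely why the $(-)^\bullet$ computation is routed through the grid operator $\Gop'=\sigma^k\Gop\sigma^k$ (whose entries lie in $\Zomega$) and handled by invoking Lemma~\ref{conjugationbysigma1}, rather than by manipulating $\sigma^\bullet$ or $\tau^\bullet$ directly. As in the proof of Lemma~\ref{conjugationbysigma1}, one could instead reduce to $k=1$ and iterate, but the general-$k$ computation above is no harder.
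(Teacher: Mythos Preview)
Your proof is correct and follows essentially the same route as the paper: both unfold the left-hand side via Definitions~\ref{def:k-shift} and~\ref{def:action}, set $\Gop'=\sigma^k\Gop\sigma^k$, and match components using Lemma~\ref{conjugationbysigma1} together with the cancellation of the sign $(-1)^k$ in the second component. Your explicit mention of the symmetry of $\sigma^k$ and $\tau^k$ makes one step the paper leaves implicit, but the argument is otherwise identical.
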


\begin{proof}
  Write $\Gop'=\sigma^k \Gop\sigma^k$. Simple computation then yields
  the result:
  \[
  \begin{array}{rcl}
    ((\state\cdot \shift^k) \cdot \Gop )\cdot \shift^k 
    & = & ((\sigma^k D \sigma^k,\, \tau^k \Delta \tau^k)
    \cdot \Gop )\cdot \shift^k \\
    & = & (\Gop^\dagger\sigma^k D \sigma^k \Gop,\, 
    \Gop^{\bullet\dagger} \tau^k \Delta \tau^k \Gop^\bullet )
    \cdot \shift^k \\
    & = & (\sigma^k\Gop^\dagger\sigma^k D \sigma^k 
    \Gop\sigma^k,\, \tau^k \Gop^{\bullet\dagger} 
    \tau^k \Delta \tau^k \Gop^\bullet \tau^k ) \\
    & = & (\sigma^k\Gop^\dagger\sigma^k D \sigma^k 
    \Gop\sigma^k,\, ((-\tau)^k\Gop^{\bullet\dagger} 
    \tau^k )\Delta ((-\tau)^k \Gop^\bullet \tau^k) ) \\
    & = & (\Gop'^\dagger D \Gop',\, 
    \Gop'^{\bullet\dagger} \Delta {\Gop'}^\bullet ) \\
    & = & \state \cdot \Gop' \\
    & = & \state \cdot (\sigma^k \Gop\sigma^k).
  \end{array}
  \]
\end{proof}

Shifts allow us to consider only states $\state$ with
$\bias\state\in[-1,1]$ in the proof of the Step Lemma.

\begin{lemma}
  \label{shift_lemma}
  If the Step Lemma holds for all states $\state$ with $\bias\state\in
  [-1,1]$, then it holds for all states.
\end{lemma}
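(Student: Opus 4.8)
The plan is to push an arbitrary state into the region of small bias using the shift operation of Definition~\ref{def:k-shift}, apply the hypothesized restricted Step Lemma there, and then undo the shift. Let $\state$ be a state with $\sk\state\geq\formula{\P}$, and write $\bias\state=\zeta-z$. The admissible values of the bias after a $k$-shift are $\bias\state+2k$ for $k\in\Z$, and these form an arithmetic progression of common difference $2$, so there is an integer $k$ with $\bias\state+2k\in[-1,1]$; one such $k$ is obtained from $\bias\state$ by a single floor operation. Set $\state'=\state\cdot\shift^k$. By Lemma~\ref{propshift}, $\state'$ is again a state, with $\sk\state'=\sk\state\geq\formula{\P}$ and $\bias\state'\in[-1,1]$.

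Now I would apply the hypothesis: since $\bias\state'\in[-1,1]$, the Step Lemma holds for $\state'$, so there is a special grid operator $\Gop''$, computable in $O(1)$ arithmetic operations, with $\sk(\state'\cdot\Gop'')\leq\formula{\Q}~\sk\state'=\formula{\Q}~\sk\state$. To transfer this back to $\state$, set $\Gop=\sigma^k\Gop''\sigma^k$. By Lemma~\ref{conjugationbysigma1}, $\Gop$ is a special grid operator (with $\Gop^\bullet=(-\tau)^k(\Gop'')^\bullet\tau^k$, which is the identity that makes the $\Delta$-component bookkeeping work out). By Lemma~\ref{conjugationbysigma2},
\[
\state\cdot\Gop = \state\cdot(\sigma^k\Gop''\sigma^k) = \bigl((\state\cdot\shift^k)\cdot\Gop''\bigr)\cdot\shift^k,
\]
and since shifting preserves skew (Lemma~\ref{propshift} again),
\[
\sk(\state\cdot\Gop) = \sk\bigl((\state\cdot\shift^k)\cdot\Gop''\bigr) = \sk(\state'\cdot\Gop'') \leq \formula{\Q}~\sk\state .
\]
Finally, $k$ costs a single arithmetic operation, $\Gop''$ costs $O(1)$ by the hypothesis, and forming $\sigma^k\Gop''\sigma^k$ costs a bounded number of multiplications together with the exponentiations $\lambda^{\pm k}$ hidden in $\sigma^k$, so $\Gop$ is produced in a constant number of arithmetic operations, completing the argument.

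In this lemma the genuine content has already been isolated into Lemmas~\ref{propshift}, \ref{conjugationbysigma1}, and \ref{conjugationbysigma2}, so I do not expect a substantive obstacle; the only delicate point is the interaction with the automorphism $(-)^\bullet$. The shift operators $\sigma$ and $\tau$ are \emph{not} grid operators, so neither $\state\cdot\shift^k$ nor the individual conjugations are operations one is a priori allowed to perform, and one must verify that the specific composite ``shift by $k$, then act by $\Gop''$, then shift by $k$'' collapses exactly onto the action of the bona fide special grid operator $\sigma^k\Gop''\sigma^k$, sign twist in $(\Gop'')^\bullet$ included. Once Lemmas~\ref{conjugationbysigma1} and \ref{conjugationbysigma2} are invoked in the correct order this is immediate, and the remaining worry — that the constant-arithmetic-operations bound could be spoiled when $|\bias\state|$ is large — is dispelled by observing that computing $k$ and forming $\sigma^{\pm k},\tau^{\pm k}$ each count as $O(1)$ operations in the sense of Remark~\ref{rem:arithm-op}.
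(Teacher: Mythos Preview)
Your proof is correct and follows essentially the same approach as the paper: shift the state by $k=\lfloor(1-\bias\state)/2\rfloor$ into the region $[-1,1]$, apply the restricted Step Lemma to obtain $\Gop''$, and then take $\Gop=\sigma^k\Gop''\sigma^k$, invoking Lemmas~\ref{propshift}, \ref{conjugationbysigma1}, and \ref{conjugationbysigma2} exactly as you do. Your additional remarks on the arithmetic-operations bound and the role of the $(-)^\bullet$ twist are accurate elaborations but do not alter the underlying argument.
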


\begin{proof}
  Let $\state$ be some state with $\sk\state\geq\formula{\P}$.  Let
  $x=\bias\state$ and set $k=\floor{\frac{1-x}{2}}$. Then by
  Lemma~\ref{propshift}, we have $\sk(\state\cdot\shift^k)= \sk\state$
  and $\bias(\state\cdot\shift^k)\in[-1,1]$. Then by assumption, there
  exists a special grid operator $\Gop$ such that
  $\sk((\state\cdot\shift^k)\cdot \Gop)\leq \formula{\Q}~\sk(\state
  \cdot\shift^k)$. Now by Lemma~\ref{conjugationbysigma1} we know that
  $\Gop'=\sigma^k ~\Gop~ \sigma^k$ is a special grid
  operator. Moreover, by Lemma~\ref{conjugationbysigma2} and
  \ref{propshift}, we have:
  \[
  \begin{array}{rcl}
    \sk(\state\cdot \Gop') & = & 
    \sk(((\state\cdot\shift^k)\cdot \Gop)\cdot \shift^k) \\
    & = & \sk((\state \cdot\shift^k)\cdot \Gop) \\
    & \leq & \formula{\Q}~\sk(\state \cdot\shift^k) \\
    & = & \formula{\Q}~\sk\state ,
  \end{array}
  \]
  which completes the proof.
\end{proof}

% --------------------------------------------------------------------
\subsubsection[The R Lemma]{The $R$ Lemma}
\label{sssect-R}

\begin{definition}
  \label{def:sin-l}
  The \emph{hyperbolic sine in base $\lambda$} is defined as:
  \[
  \sinl(x) = \frac{\lambda^x-\lambda^{-x}}{2}.
  \]
\end{definition}

\begin{lemma}
  \label{lemmaR}
  Recall the operator $R$ from Figure~\ref{list_operators}.  If
  $\state$ is a state such that $\sk\state\geq\formula{\P}$, and such
  that $\formula{-\r}\leq z\leq\formula{\r}$ and $\formula{-\r}\leq
  \zeta\leq\formula{\r}$, then:
  \[
  \sk(\state\cdot R) \leq \formula{\Q}~\sk\state.
  \]
\end{lemma}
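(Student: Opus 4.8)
The plan is to reduce the claimed inequality $\sk(\state\cdot R)\leq\formula{\Q}\,\sk\state$ to an explicit numerical estimate by computing the action of $R$ directly. First I would check that $R$ is a special grid operator: it satisfies the criterion of Lemma~\ref{lem-gridoperators} with all of $a,b,c,d$ equal to $0$ (so $a+b+c+d\equiv 0\mmod 2$) and all of $a',b',c',d'$ odd, and $\det R=1$; hence Definition~\ref{def:action} applies. Multiplying out $R^\dagger D R$ using the entries of $D$ from \eqref{eqn:state} and recognising $\tfrac12(\lambda^{z}\pm\lambda^{-z})$, I would obtain
\[
R^\dagger D R=\left[\begin{array}{cc} e\cosl(z)+b & e\sinl(z)\\ e\sinl(z) & e\cosl(z)-b\end{array}\right],
\]
where $\cosl(x)=\tfrac12(\lambda^{x}+\lambda^{-x})$; note that the $b$-contributions cancel off the anti-diagonal. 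Since every entry of $R$ is a rational multiple of $1/\sqrt2$, applying $(-)\bul$ negates $R$, i.e.\ $R\bul=-R$, so $R^{\bullet\dagger}\Delta R^\bullet=R^\dagger\Delta R$, which by the same computation has anti-diagonal entry $\varepsilon\sinl(\zeta)$. By Definition~\ref{def:skew} the skew of each of these symmetric matrices is the square of its off-diagonal entry, so
\[
\sk(\state\cdot R)=e^{2}\sinl^{2}(z)+\varepsilon^{2}\sinl^{2}(\zeta).
\]

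Next I would bound the right-hand side using the hypotheses. The function $\sinl$ is odd and strictly increasing, hence $\sinl^{2}$ is increasing on $[0,\infty)$; from $|z|\leq\formula{\r}$ and $|\zeta|\leq\formula{\r}$ we get $\sinl^{2}(z)\leq\sinl^{2}(\formula{\r})$ and likewise for $\zeta$. Combining this with the determinant-$1$ identities $e^{2}=b^{2}+1$ and $\varepsilon^{2}=\beta^{2}+1$ yields
\[
\sk(\state\cdot R)\leq (b^{2}+\beta^{2}+2)\,\sinl^{2}(\formula{\r})=(\sk\state+2)\,\sinl^{2}(\formula{\r}).
\]

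Finally I would close the estimate numerically. Writing $c=\sinl^{2}(\formula{\r})$, a direct evaluation with $\lambda=1+\sqrt2$ gives $c\approx 0.585<\formula{\Q}$. The map $s\mapsto\formula{\Q}\,s-(s+2)c=(\formula{\Q}-c)\,s-2c$ is increasing in $s$ (since $\formula{\Q}-c>0$), so to conclude $(\sk\state+2)c\leq\formula{\Q}\,\sk\state$ for every state with $\sk\state\geq\formula{\P}$ it suffices to verify the single inequality $(\formula{\P}+2)c\leq\formula{\Q}\cdot\formula{\P}$, i.e.\ $c\leq\formula{\Q*\P/(\P+2)}$, which holds with room to spare. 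Since $R$ is a fixed operator, it is produced using no arithmetic operations at all, so the final clause of the lemma is immediate.

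The computation is routine; the only points that need care are (i) observing $R\bul=-R$, which is what forces the two ellipses $A$ and $B$ to transform under the \emph{same} congruence $X\mapsto R^\dagger X R$, so the bias $\zeta-z$ never enters the estimate, and (ii) evaluating $\sinl(\formula{\r})$ accurately enough to beat $\formula{\Q}$ — the margin between $\approx 0.585$ and $\approx 0.794$ is comfortable, and the hypothesis $\sk\state\geq\formula{\P}$ is exactly what absorbs the additive $+2$. I do not anticipate any structural obstacle.
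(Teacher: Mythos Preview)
Your proof is correct and follows essentially the same approach as the paper: both compute the anti-diagonal entries of $R^\dagger D R$ and $R^{\bullet\dagger}\Delta R^\bullet$ to be $e\sinl(z)$ and $\varepsilon\sinl(\zeta)$, then bound $\sk(\state\cdot R)\leq(\sk\state+2)\sinl^2(\formula{\r})$ and close with the same numerical check (the paper writes it as $(1+\tfrac{2}{\formula{\P}})\sinl^2(\formula{\r})\leq\formula{\Q}$, which is your inequality rearranged). Your explicit observation that $R\bul=-R$ is a clean shortcut the paper leaves implicit; the remark about a ``final clause'' is extraneous, since this lemma itself has no complexity claim.
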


\begin{proof}
  Compute the action of $R$ on $\state$:
  \[
  \begin{array}{lcl}
    R^\dagger DR 
    & = & \displaystyle\frac{1}{2}
    \left[ \begin{array} {cc} 
        1 & 1  \\
        -1 & 1
      \end{array} \right]
    \left[ \begin{array} {cc} 
        e{\lambda^{-z}} & b \\
        b & e{\lambda^{z}}
      \end{array} \right]
    \left[ \begin{array} {cc} 
        1 & -1  \\
        1 & 1
      \end{array} \right] \\\\[-1.5ex]
    & = &
    \left[ \begin{array} {cc} 
        \ldots & \frac{e({\lambda^{z}}-{\lambda^{-z}})}{2}\\
        \frac{e({\lambda^{z}}-{\lambda^{-z}})}{2} & \ldots
      \end{array} \right] 
    ~ = ~ 
    \left[ \begin{array} {cc} 
        \ldots & e\sinl(z)\\
        e\sinl(z) & \ldots
      \end{array} \right], 
    \\\\[-1.5ex]
    R^{\bullet\dagger} \Delta R^\bullet 
    & = & \displaystyle\frac{1}{2}
    \left[ \begin{array} {cc} 
        -1 & -1  \\
        1 & -1
      \end{array} \right]
    \left[ \begin{array} {cc} 
        \varepsilon{\lambda^{-\zeta}} & \beta \\
        \beta & \varepsilon{\lambda^{\zeta}}
      \end{array} \right]
    \left[ \begin{array} {cc} 
        -1 & 1  \\
        -1 & -1
      \end{array} \right] \\\\[-1.5ex]
    & = & \left[ \begin{array} {cc} 
        \ldots & 
        \frac{\varepsilon({\lambda^{\zeta}}-{\lambda^{-\zeta}})}{2}\\
        \frac{\varepsilon({\lambda^{\zeta}}-{\lambda^{-\zeta}})}{2} & 
        \ldots
      \end{array} \right] 
    ~ = ~ \left[ \begin{array} {cc} 
        \ldots & \varepsilon\sinl(\zeta)\\
        \varepsilon\sinl(\zeta) & \ldots
      \end{array} \right].
  \end{array}
  \]
  Therefore $\sk(\state\cdot R)= e^2\sinl^2(z)+
  \varepsilon^2\sinl^2(\zeta)$. But recall that $e^2=b^2+1$ and
  $\varepsilon^2=\beta^2+1$, so that in fact:
  \[
  \sk(\state\cdot R) = (b^2+1)\sinl^2(z)+ (\beta^2+1)\sinl^2(\zeta).
  \]
  We assumed $\formula{-\r}\leq z,\zeta\leq\formula{\r}$ and this
  implies that $\sinl^2(\zeta), \sinl^2(z) \leq
  \sinl^2(\formula{\r})$. Writing $y=\sinl^2(\formula{\r})$ for
  brevity, and using the assumption that $\sk\state\geq\formula{\P}$,
  we get:
  \[
  \begin{array}{rcl}
    \sk(\state\cdot R) & = &
    (b^2+1)\sinl^2(z)+
    (\beta^2+1)\sinl^2(\zeta)\\
    & \leq & 
    (b^2+1)y+(\beta^2+1)y\\
    & = & (b^2+\beta^2+2)y \\
    & \leq & \sk\state (1+\frac{2}{\formula{\P}})y.
  \end{array}
  \]
  This completes the proof, since $(1+\frac{2} {\formula{\P}})y =
  (1+\frac{2} {\formula{\P}})\sinl^2(\formula{\r}) \approx
  \formula{(1+(2/\P))*(\sinhl \r)^2} \leq \formula{\Q}$.
  %% Check!!!
  \assert{(1+(2/\P))*(\sinhl \r)^2 <= \Q}%
\end{proof}

% --------------------------------------------------------------------
\subsubsection[The K Lemma]{The $K$ Lemma}
\label{sssect-K}

\begin{definition}
  \label{def:cos-l}
  The \emph{hyperbolic cosine in base $\lambda$} is defined as:
  \[
  \cosl (x) = \frac{\lambda^x+\lambda^{-x}}{2}.
  \]
\end{definition}

\begin{lemma}
  \label{lemmaK}
  Recall the operator $K$ from Figure~\ref{list_operators}.  If
  $\state$ is a state such that $\bias\state\in [-1,1]$,
  $\sk\state\geq\formula{\P}$, and such that $b,\beta\geq 0$,
  $z\leq\formula{-\a}$, and $\formula{\r}\leq\zeta$, then:
  \[
  \sk(\state\cdot K)\leq \formula{\Q}~\sk\state.
  \]
\end{lemma}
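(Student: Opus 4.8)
The plan is to follow the template of the $R$ Lemma (Lemma~\ref{lemmaR}): compute the action of $K$ on the state, read off a closed formula for $\sk(\state\cdot K)$, bound it using the hypotheses together with Remark~\ref{rem-be}, and finish with a numerical estimate. First I would record that $K$, as given in Figure~\ref{list_operators}, is a special grid operator: writing its entries in the form $a+a'/\sqrt2$ one checks the congruences of Lemma~\ref{lem-gridoperators}, and $\det K=\tfrac12(\lambda-\lambda^{-1})=1$. Then I would compute $K^\bullet$ by applying $(-)^\bullet$ entrywise; the point to be careful about is that $(\sqrt2)^\bullet=-\sqrt2$ and $\lambda^\bullet=-\lambda^{-1}$, so that
\[
K^\bullet=\frac{1}{\sqrt2}\left[\begin{array}{cc}-\lambda & 1\\ \lambda^{-1} & -1\end{array}\right],
\]
the overall sign being harmless since it cancels in the quadratic form $K^{\bullet\dagger}\Delta K^\bullet$.

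Next I would carry out the matrix products $K^\dagger D K$ and $K^{\bullet\dagger}\Delta K^\bullet$ and read off their anti-diagonal entries, using $\lambda^t+\lambda^{-t}=2\cosl(t)$ and $\cosl(1)=\tfrac12(\lambda+\lambda^{-1})=\sqrt2$. The anti-diagonal entry of $K^\dagger D K$ comes out to $e\cosl(z+1)-\sqrt2\,b$, and that of $K^{\bullet\dagger}\Delta K^\bullet$ to $-\varepsilon\cosl(\zeta-1)+\sqrt2\,\beta$ — the asymmetry between $z+1$ and $\zeta-1$ reflecting the interchange of $\lambda$ and $\lambda^{-1}$ induced by $(-)^\bullet$ — so
\[
\sk(\state\cdot K)=\bigl(e\cosl(z+1)-\sqrt2\,b\bigr)^2+\bigl(\varepsilon\cosl(\zeta-1)-\sqrt2\,\beta\bigr)^2 .
\]
Expanding each square and using $b,\beta\geq0$ together with Remark~\ref{rem-be} (so $be\geq b^2$, $\beta\varepsilon\geq\beta^2$) and $e^2=b^2+1$, $\varepsilon^2=\beta^2+1$, this yields
\[
\sk(\state\cdot K)\leq b^2\bigl(\cosl(z+1)-\sqrt2\bigr)^2+\cosl^2(z+1)+\beta^2\bigl(\cosl(\zeta-1)-\sqrt2\bigr)^2+\cosl^2(\zeta-1).
\]

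Finally I would translate the hypotheses into bounds on the arguments: from $\bias\state=\zeta-z\in[-1,1]$, $z\leq \formula{-\a}$, and $\zeta\geq\formula{\r}$ one gets $z\in[-0.2,0.3]$ and $\zeta\in[0.8,1.3]$, hence $z+1\in[0.8,1.3]$ and $\zeta-1\in[-0.2,0.3]$. On these intervals $\cosl$ (which is increasing away from $0$) is bounded, so $(\cosl(z+1)-\sqrt2)^2$ and $(\cosl(\zeta-1)-\sqrt2)^2$ are each bounded by a constant $c_1$ strictly below $1/5$, while $\cosl^2(z+1)+\cosl^2(\zeta-1)$ is bounded by a constant $c_2$ near $4.1$. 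Thus $\sk(\state\cdot K)\leq c_1\,\sk\state+c_2$; since $\sk\state\geq\formula{\P}$ we have $c_2\leq (c_2/15)\,\sk\state$, so $\sk(\state\cdot K)\leq (c_1+c_2/15)\,\sk\state$, and a final \texttt{\textbackslash assert} confirms $c_1+c_2/15\leq\formula{\Q}$. The operator $\Gop$ is obtained from a single such product, i.e.\ with a constant number of arithmetic operations.

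The main obstacle is simply bookkeeping: getting the signs in $K^\bullet$ right and, relatedly, making sure the two squared terms carry $z+1$ and $\zeta-1$ rather than $z-1$ and $\zeta+1$. Once the formula is correct the inequality has ample slack — the true contraction factor works out to roughly $0.44$, well under the required $\formula{\Q}$ — so the numerical verification is routine.
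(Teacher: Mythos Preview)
Your proposal is correct and follows essentially the same route as the paper: compute the anti-diagonal entries of $K^\dagger D K$ and $K^{\bullet\dagger}\Delta K^\bullet$, obtain the closed form $\sk(\state\cdot K)=(\sqrt2\,b-e\cosl(z+1))^2+(\sqrt2\,\beta-\varepsilon\cosl(\zeta-1))^2$, expand via Remark~\ref{rem-be} and $e^2=b^2+1$, then bound the resulting functions of $z+1$ and $\zeta-1$ on the interval forced by the hypotheses. The only cosmetic difference is that the paper places both $z+1$ and $\zeta-1$ in the single interval $[-0.2,1.3]$ and maximizes there, whereas you track the two tighter sub-intervals separately; this buys you a slightly sharper constant ($\approx 0.44$ versus the paper's $\approx 0.57$), but both are comfortably below $0.9$.
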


\begin{proof}
  Compute the action of $K$ on $\state$:
  \begin{eqnarray}
    \lefteqn{K^\dagger DK}\nonumber\\
    & = & \frac{1}{2}
    \left[ \begin{array}{cc} 
        -\lambda^{-1} & \lambda  \\
        -1 & 1
      \end{array} \right]
    \left[ \begin{array}{cc} 
        e{\lambda^{-z}} & b \\
        b & e{\lambda^{z}}
      \end{array} \right]
    \left[ \begin{array}{cc} 
        -\lambda^{-1} & -1  \\
        \lambda & 1
      \end{array} \right] \nonumber \\[9pt]
    & = & \frac{1}{2} 
    \left[ \begin{array}{cc} 
        \ldots & e(\lambda^{z+1}+\lambda^{-z-1})
        -2\sqrt{2}b \\
        e(\lambda^{z+1}+\lambda^{-z-1})-2\sqrt{2}b & \ldots
      \end{array} \right]  \nonumber \\[9pt]
    & = & 
    \left[ \begin{array}{cc} 
        \ldots & e\cosl(z+1) -\sqrt{2}b \\
        e\cosl(z+1)  -\sqrt{2}b & \ldots
      \end{array} \right],\nonumber
  \end{eqnarray}
  \begin{eqnarray}
    \lefteqn{K^{\bullet\dagger} \Delta K^\bullet} 
    \nonumber\\
    & = & \frac{1}{2}
    \left[ \begin{array}{cc} 
        \lambda & -\lambda^{-1}  \\
        -1 & 1
      \end{array} \right]
    \left[ \begin{array}{cc} 
        \varepsilon{\lambda^{-\zeta}} & \beta \\
        \beta & \varepsilon{\lambda^{\zeta}}
      \end{array} \right]
    \left[ \begin{array}{cc} 
        \lambda & -1  \\
        -\lambda^{-1} & 1
      \end{array} \right] \nonumber \\[9pt]
    & = & \frac{1}{2} 
    \left[ \begin{array}{cc} 
        \ldots & -\varepsilon(
        \lambda^{\zeta-1}+\lambda^{-\zeta+1})+2\sqrt{2}\beta \\
        -\varepsilon(\lambda^{\zeta-1}+\lambda^{-\zeta+1})
        +2\sqrt{2}\beta & \ldots
      \end{array} \right]  \nonumber \\[9pt]
    & = &
    \left[ \begin{array}{cc} 
        \ldots & \sqrt{2}\beta - \varepsilon \cosl(\zeta-1) \\
        \sqrt{2}\beta - \varepsilon\cosl(\zeta-1)
        & \ldots
      \end{array} \right].\nonumber
  \end{eqnarray}

  \noindent
  Therefore:
  \begin{equation}\label{k0}
    \sk(\state \cdot K)=(\sqrt{2}b -e\cosl(z+1))^2 
    +(\sqrt{2}\beta-\varepsilon\cosl(\zeta-1))^2.
  \end{equation}

  \noindent
  But recall that $e^2=b^2+1$, and from Remark~\ref{rem-be} that
  $b\geq 0$ implies $-be\leq -b^2$, so:
  \begin{eqnarray}
    \lefteqn{(\sqrt{2}b -e\cosl(z+1))^2} 
    \nonumber\\
    & = & 2b^2-2\sqrt{2}\,be\cosl(z+1)+ 
    e^2\cosl^2(z+1) \nonumber\\
    & \leq & 2b^2-2\sqrt{2}\,b^2\cosl(z+1)+ 
    (b^2+1)\cosl^2(z+1) \nonumber\\
    & = & b^2(2-2\sqrt{2}\cosl(z+1)+ 
    \cosl^2(z+1))+\cosl^2(z+1) 
    \nonumber\\
    & = & b^2(\sqrt{2}-\cosl(z+1))^2+
    \cosl^2(z+1).\label{k1}
  \end{eqnarray}
  Reasoning analogously, we also have
  \begin{equation}
    (\sqrt{2}\beta-\varepsilon\cosl(\zeta-1))^2
    \leq 
    \beta^2(\sqrt{2}-\cosl(\zeta-1))^2+\cosl^2(\zeta-1).
  \end{equation}

  \noindent
  By assumption, $\bias\state\in [-1,1]$, thus $\zeta\leq z+1$. This,
  together with the assumptions $\formula{\r}\leq\zeta$ and
  $z\leq\formula{-\a}$, implies that both $z+1$ and $\zeta-1$ are in
  the interval $[\formula{-\rr},\formula{-\aa}]$. On this interval,
  the function $\cosl^2(x)$ assumes its maximum at $x=\pgfmathparse{
    ifthenelse((\coshl {-\rr}) <= (\coshl {-\aa}), -\aa,
    -\rr)}\pgfmathresult$, and the function $f(x) = (\sqrt 2 -
  \cosl(x))^2$ assumes its maximum at $x=\assert{-\rr <= 0 && 0 <=
    -\aa}%
  \pgfmathparse{ ifthenelse((\kone{-\rr}) <= (\kone{0}) &&
    (\kone{-\aa}) <= (\kone{0}), 0, ifthenelse ((\kone {-\rr}) <=
    (\kone {-\aa}), -\aa, -\rr)}\pgfmathresult$.  Therefore,
  \begin{equation}
    b^2(\sqrt{2}-\cosl(z+1))^2  + \cosl^2(z+1)
    \leq 
    \pgfmathparse{
      ifthenelse((\kone{-\rr}) <= (\kone{0}) 
      && (\kone{-\aa}) <= (\kone{0}), 0,
      ifthenelse ((\kone {-\rr}) <= (\kone {-\aa}), -\aa, -\rr)
    }
    b^2(\sqrt{2} -\cosl(\pgfmathresult))^2 
    + 
    \pgfmathparse{
      ifthenelse((\coshl {-\rr}) <= (\coshl {-\aa}), -\aa, -\rr)
    }
    \cosl^2(\pgfmathresult)
  \end{equation}
  and
  \begin{equation}
    \beta^2(\sqrt{2}-\cosl(\zeta-1))^2+\cosl^2(\zeta-1)
    \leq  
    \beta^2\pgfmathparse{
      ifthenelse((\kone{-\rr}) <= (\kone{0}) 
      && (\kone{-\aa}) <= (\kone{0}), 0,
      ifthenelse ((\kone {-\rr}) <= (\kone {-\aa}), -\aa, -\rr)
    }
    (\sqrt{2} -\cosl(\pgfmathresult))^2
    +
    \pgfmathparse{
      ifthenelse((\coshl {-\rr}) <= (\coshl {-\aa}), -\aa, -\rr)
    }
    \cosl^2(\pgfmathresult).\label{k3}
  \end{equation}

  \noindent
  Combining (\ref{k0})--(\ref{k3}), together with the assumption that
  $\sk\state\geq\formula{\P}$, yields:
  \begin{eqnarray}
    \sk(\state \cdot K) & = & 
    (\sqrt{2}b -e\cosl(z+1))^2 
    +(\sqrt{2}\beta-\varepsilon\cosl(\zeta-1))^2 \nonumber\\
    & \leq & 
    (b^2+\beta^2)\pgfmathparse{
      ifthenelse((\kone{-\rr}) <= (\kone{0}) 
      && (\kone{-\aa}) <= (\kone{0}), 0,
      ifthenelse ((\kone {-\rr}) <= (\kone {-\aa}), -\aa, -\rr)
    }
    (\sqrt{2} -\cosl(\pgfmathresult))^2
    +
    2\pgfmathparse{
      ifthenelse((\coshl {-\rr}) <= (\coshl {-\aa}), -\aa, -\rr)
    }
    \cosl^2(\pgfmathresult) \nonumber\\
    & = & 
    \sk\state\pgfmathparse{
      ifthenelse((\kone{-\rr}) <= (\kone{0}) 
      && (\kone{-\aa}) <= (\kone{0}), 0,
      ifthenelse ((\kone {-\rr}) <= (\kone {-\aa}), -\aa, -\rr)
    }
    (\sqrt{2} -\cosl(\pgfmathresult))^2
    +
    2\pgfmathparse{
      ifthenelse((\coshl {-\rr}) <= (\coshl {-\aa}), -\aa, -\rr)
    }
    \cosl^2(\pgfmathresult) \nonumber\\
    & \leq & 
    \sk\state(\pgfmathparse{
      ifthenelse((\kone{-\rr}) <= (\kone{0}) 
      && (\kone{-\aa}) <= (\kone{0}), 0,
      ifthenelse ((\kone {-\rr}) <= (\kone {-\aa}), -\aa, -\rr)
    }
    (\sqrt{2} -\cosl(\pgfmathresult))^2
    +
    \frac{2}{\formula{\P}}\pgfmathparse{
      ifthenelse((\coshl {-\rr}) <= (\coshl {-\aa}), -\aa, -\rr)
    }
    \cosl^2(\pgfmathresult)) \nonumber
  \end{eqnarray}
  This completes the proof since $\pgfmathparse{
    ifthenelse((\kone{-\rr}) <= (\kone{0}) && (\kone{-\aa}) <=
    (\kone{0}), 0, ifthenelse ((\kone {-\rr}) <= (\kone {-\aa}), -\aa,
    -\rr) } (\sqrt{2} -\cosl(\pgfmathresult))^2 +
  \frac{2}{\formula{\P}}\pgfmathparse{ ifthenelse((\coshl {-\rr}) <=
    (\coshl {-\aa}), -\aa, -\rr) } \cosl^2(\pgfmathresult)
  \approx\formula{max((\kone {-\rr}),(\kone {-\aa}),(\kone{0})) +
    (max((\coshl {-\rr})^2, (\coshl {-\aa})^2))*(2/\P)} \leq
  \formula{\Q}$.
  %% Check!!!
  \assert{max((\kone {-\rr}),(\kone {-\aa}),(\kone{0})) + (max((\coshl
    {-\rr})^2, (\coshl {-\aa})^2))*(2/\P) <= \Q}%
\end{proof}

% --------------------------------------------------------------------
\subsubsection[The A Lemma]{The $A$ Lemma}
\label{sssect-A}

\begin{definition}
  Let $g(x) = (1-2x)^2$.
\end{definition}

\begin{lemma}
  \label{lemmaA}
  Recall the operator $A$ from Figure~\ref{list_operators}.  If
  $\state $ is a state such that $\bias\state \in[-1,1]$,
  $\sk\state\geq\formula{\P}$, and such that $b,\beta\geq 0$ and
  $\formula{-\a}\leq{z},\zeta$, then there exists $n\in \Z$ such that:
  \[
  \sk(\state \cdot A^n) \leq \formula{\Q}~\sk\state .
  \]
\end{lemma}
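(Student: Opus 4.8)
The plan is to compute the action of $A^{n}=\bigl[\begin{smallmatrix}1&-2n\\0&1\end{smallmatrix}\bigr]$ on $\state$ explicitly and then choose $n$ so as to (nearly) minimize the resulting skew. Since $A$ has integer entries, $A^{\bullet}=A$, so $\state\cdot A^{n}=\bigl((A^{n})^{\dagger}DA^{n},\,(A^{n})^{\dagger}\Delta A^{n}\bigr)$, and a direct matrix computation—of the same kind as in the $R$ and $K$ Lemmas—shows that the anti-diagonal entry of $(A^{n})^{\dagger}DA^{n}$ is $b-2ne\lambda^{-z}$ and that of $(A^{n})^{\dagger}\Delta A^{n}$ is $\beta-2n\varepsilon\lambda^{-\zeta}$. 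Hence
\[
\sk(\state\cdot A^{n})=\bigl(b-2ne\lambda^{-z}\bigr)^{2}+\bigl(\beta-2n\varepsilon\lambda^{-\zeta}\bigr)^{2},
\]
which is the squared distance from $(b,\beta)\in\R^{2}$ to the lattice point $n\cdot(2e\lambda^{-z},\,2\varepsilon\lambda^{-\zeta})$. First I would take $n$ to be the integer closest to the real minimizer $n^{\ast}=\dfrac{be\lambda^{-z}+\beta\varepsilon\lambda^{-\zeta}}{2(e^{2}\lambda^{-2z}+\varepsilon^{2}\lambda^{-2\zeta})}$, so that $|n-n^{\ast}|\le\tfrac12$; the elementary estimate for a one-variable quadratic then gives
\[
\sk(\state\cdot A^{n})\ \le\ \sk\state\ -\ \frac{\bigl(be\lambda^{-z}+\beta\varepsilon\lambda^{-\zeta}\bigr)^{2}}{e^{2}\lambda^{-2z}+\varepsilon^{2}\lambda^{-2\zeta}}\ +\ \bigl(e^{2}\lambda^{-2z}+\varepsilon^{2}\lambda^{-2\zeta}\bigr).
\]
The integer $n$ may be large, but it still takes only a constant number of arithmetic operations to compute it and to write down $A^{n}$, as the Step Lemma requires.

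It then remains to show the correction terms are negative enough, i.e. that
\[
\frac{\bigl(be\lambda^{-z}+\beta\varepsilon\lambda^{-\zeta}\bigr)^{2}}{e^{2}\lambda^{-2z}+\varepsilon^{2}\lambda^{-2\zeta}}\ -\ \bigl(e^{2}\lambda^{-2z}+\varepsilon^{2}\lambda^{-2\zeta}\bigr)\ \ge\ \tfrac{1}{10}\,\sk\state,
\]
and this is where all four hypotheses enter. Since $b,\beta\ge 0$, Remark~\ref{rem-be} gives $be\ge b^{2}$ and $\beta\varepsilon\ge\beta^{2}$, so the numerator of the first term is at least $(b^{2}\lambda^{-z}+\beta^{2}\lambda^{-\zeta})^{2}$; using also $e^{2}=b^{2}+1$ and $\varepsilon^{2}=\beta^{2}+1$, it suffices to prove
\[
\frac{\bigl(b^{2}\lambda^{-z}+\beta^{2}\lambda^{-\zeta}\bigr)^{2}}{(b^{2}+1)\lambda^{-2z}+(\beta^{2}+1)\lambda^{-2\zeta}}\ -\ \bigl((b^{2}+1)\lambda^{-2z}+(\beta^{2}+1)\lambda^{-2\zeta}\bigr)\ \ge\ \tfrac{1}{10}(b^{2}+\beta^{2}).
\]
This is an inequality in the nonnegative reals $b^{2},\beta^{2},\lambda^{-z},\lambda^{-\zeta}$. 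The hypotheses $z,\zeta\ge -\a$ bound $\lambda^{-z}$ and $\lambda^{-\zeta}$ above by a fixed constant $<1$, $\bias\state=\zeta-z\in[-1,1]$ forces $\lambda^{-z}/\lambda^{-\zeta}\in[\lambda^{-1},\lambda]$, and $\sk\state=b^{2}+\beta^{2}\ge 15$. Over this compact parameter region I would finish by localizing the extremal configuration—which I expect to be $\lambda^{-z}=\lambda^{-\zeta}$ at its maximal value with $b^{2}+\beta^{2}=15$, which reduces the claim to a short one-variable estimate—and then checking the resulting numerical bound in the machine-verified style (\texttt{\textbackslash assert}) already used for the $R$ and $K$ Lemmas. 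Incidentally, this inequality also guarantees $n^{\ast}>\tfrac12$, so the chosen $n$ is nonzero and the skew genuinely decreases.

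The main obstacle is precisely this last numerical optimization. Naive bounds—Cauchy--Schwarz on the numerator, or replacing $\lambda^{-z}$ and $\lambda^{-\zeta}$ by their extreme values independently—are too lossy to close the gap; one has to keep the coupling between the two coordinates, in particular that $b,\beta,\lambda^{-z},\lambda^{-\zeta}\ge 0$ (so there is no cancellation in $be\lambda^{-z}+\beta\varepsilon\lambda^{-\zeta}$) and that $\lambda^{-z}/\lambda^{-\zeta}$ is pinned near $1$ by the bias hypothesis. An essentially equivalent alternative is to choose instead the integer $n$ closest to $b/(2e\lambda^{-z})$ (or to $\beta/(2\varepsilon\lambda^{-\zeta})$, whichever corresponds to the larger of $b,\beta$): then one coordinate is forced down to $\le e^{2}\lambda^{-2z}$, which the hypothesis on $z$ makes a fixed fraction of $b^{2}+1$, and it remains to check that the other coordinate does not grow, which uses $|\zeta-z|\le 1$ and $\sk\state\ge 15$ in the same way.
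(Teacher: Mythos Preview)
Your computation of the anti-diagonals is correct, and your overall plan---minimize the quadratic in $n$ and round---is sound in principle, but it is \emph{not} the paper's route and, as you yourself flag, the final estimate is left undone. The paper chooses $n$ very differently: it sets $c=\min\{z,\zeta\}$ and $n=\max\{1,\lfloor\lambda^{c}/2\rfloor\}$, a choice that depends \emph{only} on $z,\zeta$ and not on $b,\beta,e,\varepsilon$ at all. The key algebraic move you are missing is then to expand each square and use $-be\le -b^{2}$ together with $e^{2}=b^{2}+1$ \emph{before} bounding anything, obtaining
\[
(b-2ne\lambda^{-z})^{2}\ \le\ b^{2}\,g(n\lambda^{-z})\ +\ 4n^{2}\lambda^{-2z},\qquad g(x)=(1-2x)^{2},
\]
and similarly for $\beta$. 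This decouples the skew into $b^{2}g(n\lambda^{-z})+\beta^{2}g(n\lambda^{-\zeta})$ plus an absolute error $4n^{2}(\lambda^{-2z}+\lambda^{-2\zeta})\le 8n^{2}\lambda^{-2c}$. With $n\approx\lambda^{c}/2$ the error is at most~$2$, and since $|z-\zeta|\le 1$ both $n\lambda^{-z}$ and $n\lambda^{-\zeta}$ lie in a fixed bounded interval on which $g$ is bounded by a constant strictly below~$1$. A short two-case split (whether $\lfloor\lambda^{c}/2\rfloor\ge 1$ or $n=1$) then gives $\sk(\state\cdot A^{n})\le 0.9\,\sk\state$ by a direct numerical check. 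No genuine multivariable optimization is ever needed.

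The concrete gap in your proposal is the sentence ``over this compact parameter region'': the region is \emph{not} compact, since $z,\zeta$ are only bounded below and $b^{2}+\beta^{2}$ only from below, so localizing the extremum needs an extra argument (e.g.\ a homogeneity/monotonicity reduction) that you do not supply. Your closing alternative---take $n$ nearest to $b/(2e\lambda^{-z})$---is in fact much closer to the paper: since $b/e\to 1$ this is essentially $n\approx\lambda^{z}/2$, i.e.\ the paper's choice up to replacing $z$ by $c=\min\{z,\zeta\}$. If you pursue that line with the $g$-decoupling above, the proof closes cleanly.
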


\begin{proof}
  Let $c=\min\s{{z},{\zeta}}$ and $n=\max\s{1,
    \floor{\frac{\lambda^{c}}{2}}}$.  Compute the action of $A^{n}$ on
  $\state$:
  \[
  \begin{array}{lcl}
    {A^{n}}^\dagger DA^{n} & = & 
    \left[\begin{array} {cc} 
        1 & 0  \\
        -2n & 1
      \end{array}\right]
    \left[ \begin{array} {cc} 
        e{\lambda^{-z}} & b \\
        b & e{\lambda^{z}}
      \end{array}\right]
    \left[ \begin{array} {cc} 
        1 & -2n  \\
        0 & 1
      \end{array}\right] \\[9pt]
    & = & \left[\begin{array} {cc} 
        \ldots & b-2ne{\lambda^{-z}}  \\
        b-2ne{\lambda^{-z}} & \ldots
      \end{array} \right], \\[18pt]
    {A^{n}}^{\bullet\dagger}
    \Delta{A^{n}}^\bullet 
    & = & {A^{n}}^\dagger \Delta A^{n} \\
    & = & \left[ \begin{array} {cc} 
        \ldots & \beta-2n\varepsilon{\lambda^{-\zeta}}  \\
        \beta-2n\varepsilon{\lambda^{-\zeta}} & \ldots
      \end{array} \right].
  \end{array}
  \]
  Therefore:
  \[
  \sk(\state \cdot A^{n}) =
  (b-2ne{\lambda^{-z}})^2+(\beta-2n\varepsilon{\lambda^{-\zeta}})^2
  \]
  But recall that $e^2=b^2+1$ and $\varepsilon^2=\beta^2+1$, and from
  Remark~\ref{rem-be} that $b,\beta\geq 0$ implies $-be\leq -b^2$ and
  $-\varepsilon\beta\leq -\beta^2$. Using these facts, we can expand
  the above formula as follows:
  \begin{eqnarray}
    \lefteqn{\sk(\state\cdot A^{n})}\nonumber \\
    & = & (b-2ne{\lambda^{-z}})^2+
    (\beta-2n\varepsilon{\lambda^{-\zeta}})^2 
    \nonumber\\
    & = & b^2
    -4nbe{\lambda^{-z}}
    +4n^2e^2{\lambda^{-2z}}
    +\beta^2
    -4n\beta\varepsilon{\lambda^{-\zeta}}
    +4n^2\varepsilon^2{\lambda^{-2\zeta}}
    \nonumber\\
    & \leq & b^2
    -4nb^2{\lambda^{-z}}
    +4n^2(b^2+1){\lambda^{-2z}}
    +\beta^2
    -4n\beta^2{\lambda^{-\zeta}}
    +4n^2(\beta^2+1){\lambda^{-2\zeta}}
    \nonumber\\
    & = & 
    b^2(
    1
    -4n{\lambda^{-z}}
    +4n^2{\lambda^{-2z}}
    )+\beta^2(
    1
    -4n{\lambda^{-\zeta}}
    +4n^2{\lambda^{-2\zeta}}
    )+4n^2({\lambda^{-2z}}+{\lambda^{-2\zeta}})
    \nonumber \\
    & = & 
    b^2(1-2n\lambda^{-z})^2
    +\beta^2(1-2n\lambda^{-\zeta})^2
    +4n^2({\lambda^{-2z}}+{\lambda^{-2\zeta}})
    \nonumber \\
    & = & b^2g(n{\lambda^{-z}})+\beta^2g(n{\lambda^{-\zeta}})+ 
    4n^2({\lambda^{-2z}}+{\lambda^{-2\zeta}}).\nonumber
  \end{eqnarray}

  \noindent
  Writing $y=\max\s{g(n{\lambda^{-z}}), g(n{\lambda^{-\zeta}})}$ for
  brevity, and using the assumption that $\sk\state\geq\formula{\P}$
  together with the fact that $c\leq z,\zeta$, we get:
  \begin{eqnarray}
    \sk(\state \cdot A^{n}) & \leq & 
    b^2y+\beta^2y+ 8n^2\lambda^{-2c} \nonumber\\
    & = & \sk\state y+ 8n^2\lambda^{-2c} \nonumber\\
    & \leq & \sk\state (y+ 
    \frac{8}{\formula{\P}}n^2\lambda^{-2c}). \nonumber
  \end{eqnarray}
  To finish the proof, it remains to show that $y+
  \frac{8}{\formula{\P}}n^2\lambda^{-2c}\leq \formula{\Q}$.  There are
  two cases:
  \begin{itemize}
  \item If $\floor{\frac{\lambda^{c}}{2}}\geq 1$, then
    $\frac{\lambda^{c}}{4}\leq n\leq \frac{\lambda^{c}}{2}$. From
    $n\leq \frac{\lambda^{c}}{2}$, we have $2n\lambda^{-c}\leq 1$, and
    so $\frac{8}{\formula{\P}}n^2\lambda^{-2c} \leq
    \frac{2}{\formula{\P}}$. Moreover, because $\bias\state\in[-1,1]$,
    we have $c \leq z,\zeta \leq c+1$. Hence $\frac{1}{4\lambda} =
    \frac{\lambda^{c}}{4} \lambda^{-c-1} \leq n\lambda^{-c-1} \leq
    n{\lambda^{-z}}, n{\lambda^{-\zeta}} \leq n\lambda^{-c} \leq
    \frac{1}{2}$. On the interval $[\frac{1}{4\lambda}, \frac{1}{2}]$,
    the function $g(x)$ assumes its maximum at
    $x=\frac{1}{4\lambda}$. This implies that $y\leq
    g(\frac{1}{4\lambda})$. This completes the present case since we
    get:
    \[
    y+ \frac{8}{\formula{\P}}n^2\lambda^{-2c}\leq
    g(\frac{1}{4\lambda}) +\frac{2}{\formula{\P}}
    \approx\formula{(1/(2*\l)-1)^2+2/\P} \leq \formula{\Q}.
    \]
    \assert{(1/(2*\l)-1)^2+2/\P <= \Q}%
  \item If $\floor{\frac{\lambda^{c}}{2}}< 1$, then $n=1$ and
    $\lambda^c< 2$. From $\formula{-\a}\leq c$, we have
    %% Check!!!
    \assert{(ln 0.5)/(ln ((sqrt 2)+1)) <= \a}%
    $\frac{8}{\formula{\P}}n^2\lambda^{-2c}\leq
    \frac{8}{\formula{\P}}\lambda^{\formula{2*\a}}$. Moreover, because
    $\bias\state\in[-1,1]$, we have ${\formula{-\a}}\leq c \leq
    z,\zeta\leq c+1$. With $\lambda^c\leq 2$, this implies that
    $\frac{1}{2\lambda}\leq \lambda^{-c-1}\leq
    \lambda^{-z},\lambda^{-\zeta}
    \leq\lambda^{\formula{\a}}$. Therefore both $\lambda^{-z}$ and
    $\lambda^{-\zeta}$ are in the interval $[\frac{1}{2\lambda},
    \lambda^{\formula{\a}}]$. On this interval, the function $g(x)$
    assumes its maximum at \assert{(1-2*(\l^(\a)))^2 <
      (1-2*(1/(2*\l)))^2}%
    $x=\frac{1}{2\lambda}$, and therefore $y\leq
    g(\frac{1}{2\lambda})$. This completes the proof since:
    \[
    y+\frac{8}{\formula{\P}}n^2\lambda^{-2c}\leq g(\frac{1}{2\lambda})
    + \frac{8}{\formula{\P}}\lambda^{\formula{2*\a}}
    \approx\formula{(max(((2*(pow(\l,\a))-1)^2),
      (((pow(\l,(-1)))-1)^2)) + ((8*(pow (\l, 2*\a))) / \P))} \leq
    \formula{\Q}.
    \]
    %% Check!!!
    \assert{(max(((2*(pow(\l,\a))-1)^2), (((pow(\l,(-1)))-1)^2)) +
      ((8*(pow (\l, 2*\a))) / \P)) <= \Q}%
  \end{itemize}
\end{proof}

% --------------------------------------------------------------------
\subsubsection[The B Lemma]{The $B$ Lemma}
\label{sssect-B}

\begin{definition}
  Let $h(x)=(1-\sqrt{2}x)^2$.
\end{definition}

\begin{lemma}
  \label{lemmaB}
  Recall the operator $B$ from Figure~\ref{list_operators}.  If
  $\state $ is a state such that $\bias\state\in [-1,1]$,
  $\sk\state\geq\formula{\P}$, and such that $b\leq 0\leq\beta$ and
  $\formula{-\b}\leq z,\zeta$, then there exists $n\in\Z$ such that:
  \[
  \sk(\state\cdot B^n) \leq \formula{\Q}~\sk\state.
  \]
\end{lemma}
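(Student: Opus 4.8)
The plan is to follow the template of the $A$ Lemma, with the operator $B$ in place of $A$, the function $h(x)=(1-\sqrt 2\,x)^2$ in place of $g$, the constant $\sqrt 2$ in place of $2$, and the mixed sign hypothesis $b\leq 0\leq\beta$ in place of $b,\beta\geq 0$. I would set $c=\min\s{z,\zeta}$ and $n=\max\s{1,\floor{\lambda^c/\sqrt 2}}$, a positive integer chosen so that $\sqrt 2\,n\lambda^{-z}$ sits near $1$, which is where $h$ vanishes. Since $B^n$ is the shear with off-diagonal entry $n\sqrt 2$ and $B^\bullet$ is obtained from $B$ by replacing $\sqrt 2$ with $-\sqrt 2$, a direct computation of $(B^n)^\dagger D\,(B^n)$ and $(B^n)^{\bullet\dagger}\Delta\,(B^n)^\bullet$, just as in the $R$, $K$ and $A$ Lemmas, gives
\[
\sk(\state\cdot B^n)=\bigl(b+\sqrt 2\,n\,e\,\lambda^{-z}\bigr)^2+\bigl(\beta-\sqrt 2\,n\,\varepsilon\,\lambda^{-\zeta}\bigr)^2 .
\]
The opposite signs of the two correction terms are precisely what the hypothesis $b\leq 0\leq\beta$ provides: both $-b$ and $\beta$ are nonnegative, so a single positive $n$ can shrink both terms, which is why $B$ rather than $A$ is the right operator for this region.

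Next I would expand the two squares, using $e^2=b^2+1$, $\varepsilon^2=\beta^2+1$, and the inequalities $be\leq -b^2$ (valid because $b\leq 0$) and $-\varepsilon\beta\leq -\beta^2$ (valid because $\beta\geq 0$) from Remark~\ref{rem-be}, to obtain
\[
\sk(\state\cdot B^n)\leq b^2\,h(n\lambda^{-z})+\beta^2\,h(n\lambda^{-\zeta})+2n^2\lambda^{-2z}+2n^2\lambda^{-2\zeta},
\]
the exact analogue of the corresponding estimate in the $A$ Lemma (with coefficient $2$, not $4$, on the leftover terms since $(\sqrt 2\,n)^2=2n^2$). Writing $y=\max\s{h(n\lambda^{-z}),h(n\lambda^{-\zeta})}$ and using $c\leq z,\zeta$ together with $\sk\state\geq\formula{\P}$, this reduces to $\sk(\state\cdot B^n)\leq\sk\state\bigl(y+\tfrac{4}{\formula{\P}}n^2\lambda^{-2c}\bigr)$, so what remains is to place $n\lambda^{-z}$ and $n\lambda^{-\zeta}$ in a favourable interval, which is where $\bias\state\in[-1,1]$ (so that $c\leq z,\zeta\leq c+1$) is used.

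As in the $A$ Lemma this is a two-case argument on the size of $\lambda^c$. If $\floor{\lambda^c/\sqrt 2}\geq 1$, then $\lambda^c/(2\sqrt 2)\leq n\leq\lambda^c/\sqrt 2$, whence $4n^2\lambda^{-2c}\leq 2$ and $n\lambda^{-z},n\lambda^{-\zeta}\in\bigl[\tfrac{1}{2\sqrt 2\lambda},\tfrac{1}{\sqrt 2}\bigr]$; on this interval $h$ is decreasing, so $y\leq h\bigl(\tfrac{1}{2\sqrt 2\lambda}\bigr)=\bigl(1-\tfrac{1}{2\lambda}\bigr)^2$, and one checks $\bigl(1-\tfrac{1}{2\lambda}\bigr)^2+\tfrac{2}{\formula{\P}}\leq\formula{\Q}$. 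If $\floor{\lambda^c/\sqrt 2}<1$, then $n=1$ and $\lambda^c<\sqrt 2$; combined with the hypotheses $\formula{-\b}\leq z,\zeta$ and $z,\zeta\leq c+1$, this confines $\lambda^{-z},\lambda^{-\zeta}$ to $\bigl(\tfrac{1}{\sqrt 2\lambda},\lambda^{\formula{\b}}\bigr]$, an interval on which $h$ attains its maximum at $\lambda^{\formula{\b}}$, while also $\lambda^{-2c}\leq\lambda^{\formula{2*\b}}$; the proof then closes with the numerical check $\bigl(1-\sqrt 2\,\lambda^{\formula{\b}}\bigr)^2+\tfrac{4}{\formula{\P}}\lambda^{\formula{2*\b}}\leq\formula{\Q}$. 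Finally, $n$ is read off from $z,\zeta$ using a constant number of arithmetic operations.

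I expect the main obstacle to be, as in the $A$ and $K$ Lemmas, the second case $n=1$: when $\lambda^c$ is small the leftover terms $2n^2\lambda^{-2z}$ cannot be absorbed by the choice of $n$, so one must use the sharp lower bound $z,\zeta\geq\formula{-\b}$ to control $h(\lambda^{-z})$ and $\lambda^{-2z}$ simultaneously, and the resulting scalar inequality is tight — it is essentially what the constants $\b$, $\P$, $\Q$ have been tuned to satisfy. By contrast the matrix arithmetic, the determinant-$1$ bookkeeping that makes $B^n$ a special grid operator, and all the monotonicity and interval estimates are routine and of the same flavour as in the preceding lemmas.
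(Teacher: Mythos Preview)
Your proposal is correct and follows essentially the same approach as the paper's proof: the same choice $c=\min\{z,\zeta\}$ and $n=\max\{1,\lfloor\lambda^c/\sqrt 2\rfloor\}$, the same computation of the anti-diagonal entries of $\state\cdot B^n$, the same use of Remark~\ref{rem-be} to bound the skew by $b^2h(n\lambda^{-z})+\beta^2h(n\lambda^{-\zeta})+2n^2(\lambda^{-2z}+\lambda^{-2\zeta})$, and the same two-case split on $\lfloor\lambda^c/\sqrt 2\rfloor$ with identical intervals, maximum locations, and numerical checks.
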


\begin{proof}
  Let $c=\min\s{z, \zeta}$, $n=\max\s{1,
    \floor{\frac{\lambda^{c}}{\sqrt{2}}}}$ and compute the action of
  $B^n$ on $\state$:
  \[
  \begin{array}{lcl} {B^n}^\dagger DB^n & = & \left[ \begin{array}
        {cc}
        1 & 0  \\
        \sqrt{2}n & 1
      \end{array} \right]
    \left[ \begin{array} {cc} 
        e{\lambda^{-z}} & b \\
        b & e{\lambda^{z}}
      \end{array} \right]
    \left[ \begin{array} {cc} 
        1 & \sqrt{2}n  \\
        0 & 1
      \end{array} \right] \\[9pt]
    & = & \left[ \begin{array} {cc} 
        \ldots & b+\sqrt{2}\,ne{\lambda^{-z}}  \\
        b+\sqrt{2}\,ne{\lambda^{-z}} & \ldots
      \end{array} \right], \\[18pt]
    {B^n}^{\bullet\dagger} D{B^n}^\bullet & = & \left[ 
      \begin{array} {cc} 
        1 & 0  \\
        -\sqrt{2}n & 1
      \end{array} \right]
    \left[ \begin{array} {cc} 
        \varepsilon{\lambda^{-\zeta}} & \beta \\
        \beta & \varepsilon{\lambda^{\zeta}}
      \end{array} \right]
    \left[ \begin{array} {cc} 
        1 & -\sqrt{2}n  \\
        0 & 1
      \end{array} \right] \\[9pt]
    & = & \left[ \begin{array} {cc} 
        \ldots & \beta-\sqrt{2}\,n\varepsilon{\lambda^{-\zeta}}  \\
        \beta-\sqrt{2}\,n\varepsilon{\lambda^{-\zeta}} & \ldots
      \end{array} \right].
  \end{array}
  \]
  Therefore:
  \[
  \sk(\state\cdot B^n) = (b+\sqrt{2}\,ne{\lambda^{-z}})^2+
  (\beta-\sqrt{2}\,n\varepsilon{\lambda^{-\zeta}})^2.
  \]
  But recall that $e^2=b^2+1$, that $\varepsilon^2=\beta^2+1$, and
  from Remark~\ref{rem-be} that $b\leq 0\leq \beta$ implies $be\leq
  -b^2$ and $-\beta\varepsilon\leq -\beta^2$. Using these facts, we
  can expand the above formula as follows:
  \begin{eqnarray}
    \lefteqn{\sk(\state\cdot B^n)}\nonumber \\
    & = & (b+\sqrt{2}\,ne{\lambda^{-z}})^2+
    (\beta-\sqrt{2}\,n\varepsilon{\lambda^{-\zeta}})^2\nonumber\\
    & = & 
    b^2
    +2\sqrt{2}\,nbe{\lambda^{-z}}
    +2n^2e^2{\lambda^{-2z}}
    +\beta^2
    -2\sqrt{2}\,n\beta\varepsilon{\lambda^{-\zeta}}
    +2n^2\varepsilon^2{\lambda^{-2\zeta}}
    \nonumber\\
    & \leq & 
    b^2
    -2\sqrt{2}\,nb^2{\lambda^{-z}}
    +2n^2(b^2+1){\lambda^{-2z}}
    +\beta^2
    -2\sqrt{2}\,n\beta^2{\lambda^{-\zeta}}
    +2n^2(\beta^2+1){\lambda^{-2\zeta}}
    \nonumber\\
    & = & b^2(
    1
    -2\sqrt{2}\,n{\lambda^{-z}}
    +2n^2{\lambda^{-2z}}
    )+\beta^2(
    1
    -2\sqrt{2}\,n{\lambda^{-\zeta}}
    +2n^2{\lambda^{-2\zeta}}
    )+2n^2({\lambda^{-2z}}+{\lambda^{-2\zeta}})\nonumber \\
    & = & 
    b^2(1-\sqrt{2}\,n\lambda^{-z})^2
    +\beta^2(1-\sqrt{2}\,n{\lambda^{-\zeta}})^2
    + 2n^2({\lambda^{-2z}}+{\lambda^{-2\zeta}}).\nonumber \\
    & = & b^2h(n{\lambda^{-z}})+\beta^2h(n{\lambda^{-\zeta}})+ 
    2n^2({\lambda^{-2z}}+{\lambda^{-2\zeta}}).\nonumber
  \end{eqnarray}

  \noindent
  Writing $y=\max\s{h(n{\lambda^{-z}}), h(n{\lambda^{-\zeta}})}$ for
  brevity, and using the assumption that $\sk\state\geq\formula{\P}$,
  together with the fact that $c\leq z,\zeta$, we get:
  \begin{eqnarray}
    \sk(\state \cdot B^n) & \leq & 
    b^2y+\beta^2y+ 4n^2\lambda^{-2c} \nonumber\\
    & = & \sk\state y +4n^2\lambda^{-2c}\nonumber\\ 
    &\leq & \sk\state (y +\frac{4}{\formula{\P}}n^2\lambda^{-2c}) 
    \nonumber.
  \end{eqnarray}

  \noindent
  To finish the proof, it remains to show that
  $y+\frac{4}{\formula{\P}}n^2\lambda^{-2c}\leq \formula{\Q}$.  There
  are two cases:
  \begin{itemize}
  \item If $\floor{\frac{\lambda^{c}}{\sqrt{2}}}\geq 1$, then
    $\frac{\lambda^{c}}{2\sqrt{2}}\leq n\leq
    \frac{\lambda^{c}}{\sqrt{2}}$. From $n\leq
    \frac{\lambda^{c}}{\sqrt{2}}$, we have $2n^2\lambda^{-2c}\leq 1$,
    and so $\frac{4n^2\lambda^{-2c}}{\formula{\P}}\leq \frac{2}
    {\formula{\P}}$. Moreover, because $\bias\state\in[-1,1]$, we have
    $c\leq z,\zeta\leq c+1$. Hence $\frac{1}{2\sqrt{2}\,\lambda} =
    \frac{\lambda^{c}}{2\sqrt{2}}\lambda^{-c-1}\leq
    n\lambda^{-c-1}\leq n{\lambda^{-z}}, n{\lambda^{-\zeta}} \leq
    n\lambda^{-c}\leq \frac{1}{\sqrt{2}}$. On the interval
    $[\frac{1}{2\sqrt{2}\,\lambda}, \frac{1}{\sqrt{2}}]$, the function
    $h(x)$ assumes its maximum at
    $x=\frac{1}{2\sqrt{2}\,\lambda}$. This implies that $y\leq
    h(\frac{1}{2\sqrt{2}\,\lambda})$. This completes the present case
    since we get:
    \[
    y+ \frac{4}{\formula{\P}}n^2\lambda^{-2c}\leq
    h(\frac{1}{2\sqrt{2}\,\lambda}) +\frac{2}{\P} \approx \formula{(1
      - sqrt(2)*(1/(2*sqrt(2)*\l)))^2 + (2/\P)} \leq \formula{\Q}.
    \]
    % !!! Check
    \assert{(1 - sqrt(2)*(1/(2*sqrt(2)*\l)))^2 + (2/\P) <= (\Q)}%
  \item If $\floor{\frac{\lambda^{c}}{\sqrt{2}}}< 1$, then $n=1$ and
    $\lambda^c<\sqrt{2}$. From $\formula{-\b}\leq c$, we have
    $\frac{4}{\formula{\P}}n^2\lambda^{-2c} \leq
    \frac{4}{\formula{\P}}\lambda^{\formula{2*\b}}$. Moreover, because
    %% Check!!!
    % \assert{((ln 1/(sqrt 2)) / (ln \l)) <= \b}%
    %
    $\bias\state\in[-1,1]$, we have $\formula{-\b}\leq c\leq
    z,\zeta\leq c+1$. With $\lambda^c\leq\sqrt{2}$, this implies that
    $\frac{1}{\sqrt{2}\,\lambda} \leq \lambda^{-c-1}\leq
    {\lambda^{-z}}, {\lambda^{-\zeta}} \leq
    \lambda^{\formula{\b}}$. Therefore both $\lambda^{-z}$ and
    $\lambda^{-\zeta}$ are in the interval
    $[\frac{1}{\sqrt{2}\,\lambda},\lambda^{\formula{\b}}]$.  On this
    interval, the function $h(x)$ assumes its maximum at \assert{(1-
      sqrt(2)*(1/(sqrt(2)*\l)))^2 <= (1-sqrt(2)*((\l)^(\b)))^2}%
    $x=\lambda^{\formula{\b}}$, and therefore $y\leq
    h(\lambda^{\formula{\b}})$. This completes the proof since:
    \[
    y+\frac{4}{\formula{\P}}n^2\lambda^{-2c}\leq
    h(\lambda^{\formula{\b}}) +
    \frac{4}{\formula{\P}}\lambda^{\formula{2*\b}}
    \approx\formula{((max((1 - sqrt(2)*(1/(sqrt(2)*\l)))^2,
      (1-sqrt(2)*((\l)^(\b)))^2) + (4/\P * (\l^(2*\b))))} \leq
    \formula{\Q}.
    \]\qedhere
    % Check !!!
    \assert{((max((1 - sqrt(2)*(1/(sqrt(2)*\l)))^2,
      (1-sqrt(2)*((\l)^(\b)))^2) + (4/\P * (\l^(2*\b)))) <= (\Q)}%
  \end{itemize}
\end{proof}

% --------------------------------------------------------------------
\subsubsection{Proof of the Step Lemma}
\label{sssect-step}

The proof of the Step Lemma is now basically a case distinction, using
the cases enumerated in lemmas~\ref{shift_lemma}--\ref{lemmaB}, as
well as some additional symmetric cases. In particular, the following
remark will allow us to use the grid operators $X$ and $Z$ to reduce
the number of cases to consider.

\begin{remark}
  \label{rem-XZ}
  The grid operator $Z$ negates the anti-diagonal entries of a state
  while the operator $X$ swaps the diagonal entries of a state. This
  follows by simple computation since
  \[
  \state\cdot Z = \left( \left[
      \begin{array}{cc}
        e{\lambda^{-z}} & -b \\
        -b & e{\lambda^{z}}
      \end{array} 
    \right] , \left[
      \begin{array}{cc}
        \varepsilon{\lambda^{-\zeta}} & -\beta \\
        -\beta & \varepsilon{\lambda^{\zeta}}
      \end{array} 
    \right] \right)
  \]
  and
  \[
  \state\cdot X = \left( \left[
      \begin{array}{cc}
        e{\lambda^{z}} & b \\
        b & e{\lambda^{-z}}
      \end{array} 
    \right] , \left[
      \begin{array}{cc}
        \varepsilon{\lambda^{\zeta}} & \beta \\
        \beta & \varepsilon{\lambda^{-\zeta}}
      \end{array} 
    \right] \right).
  \]
  Moreover, $\bias (\state\cdot Z) = \bias \state$ and $\bias
  (\state\cdot X) = -\bias \state$.
\end{remark}

\begin{un-lemma}[Step Lemma]
  For any state $\state$, if $\sk\state \geq \formula{\P}$, then there
  exists a special grid operator $\Gop$ such that $\sk (\state\cdot
  \Gop)\leq \formula{\Q} ~\sk\state$.  Moreover, $\Gop$ can be
  computed using a constant number of arithmetic operations.
\end{un-lemma}

\begin{proof}
  Let $\state$ be a state such that $\sk\state \geq \formula{\P}$. By
  Lemma~\ref{shift_lemma} we can assume w.l.o.g. that
  $\bias\state\in[-1,1]$. Moreover, by Remark~\ref{rem-XZ}, we can
  also assume that $\beta \geq 0$ and $z+\zeta\geq 0$. Note that the
  application of the grid operators $X$ and/or $Z$ in
  Remark~\ref{rem-XZ} preserves the fact that
  $\bias\state\in[-1,1]$. We now treat in turn the cases $b\geq 0$ and
  $b\leq 0$.
  \begin{description}
  \item[Case 1] \label{b_beta_plus} $b \geq 0$. A covering of the
    strip defined by $z-\zeta\in[-1,1]$ and $z+\zeta\geq 0$ is
    depicted in Figure~\ref{fig-covers}(a).  The $R$ region (in green)
    and the $A$ region (in red) are defined as the intersection of
    this space with $\s{(z,\zeta)~|~\formula{-\r}\leq z,\zeta
      \leq\formula{\r}}$ and $\s{(z,\zeta)~|~z\leq\formula{-\a}\mbox{
        and }\formula{\r}\leq \zeta}$ respectively. The $K$ and
    $K^\bullet$ regions (both in blue) fill the remaining space.
    % This covering requires -r <= a.
    %% Check!!!
    \assert{-\r <= \a}%
    %      

    % ................................................................
    \begin{figure}
      \[ (a)~ \mp{0.9}{\scalebox{0.8}{\begin{tikzpicture}[scale=1.5]
          % Regions K: [-0.3, \infty] x [-\infty, -0.8]
          \fill[fill=blue!25] (-0.3,-0.8) -- (-0.3,-2.4) .. controls
          (0.3,-2.4) and (0.5,-1.5) .. (0.5,-0.8) -- cycle; \draw
          (0.25,-1.2) node {$XK$};
          \begin{scope}[scale=-1]
            \fill[fill=blue!25] (-0.3,-0.8) -- (-0.3,-2.4) .. controls
            (0.3,-2.4) and (0.5,-1.5) .. (0.5,-0.8) -- cycle; \draw
            (0.15,-1.3) node {$K$};
          \end{scope}
          \begin{scope}[cm={0,1,1,0,(0,0)}]
            \fill[fill=blue!25] (-0.3,-0.8) -- (-0.3,-2.4) .. controls
            (0.3,-2.4) and (0.5,-1.5) .. (0.5,-0.8) -- cycle; \draw
            (0.15,-1.3) node {$XK\bul$};
          \end{scope}
          \begin{scope}[cm={0,-1,-1,0,(0,0)}]
            \fill[fill=blue!25] (-0.3,-0.8) -- (-0.3,-2.4) .. controls
            (0.3,-2.4) and (0.5,-1.5) .. (0.5,-0.8) -- cycle; \draw
            (0.2,-1.3) node {$K\bul$};
          \end{scope}
          % 
          % Region A: [-\infty, -0.3] x [-\infty, -0.3]
          \fill[fill=red!25] (-2.4,-0.3) -- (-0.3,-0.3) -- (-0.3,-2.4)
          .. controls (-2,-2.4) and (-2.4,-2) ..  (-2.4,-0.3) --
          cycle; \draw (-2.5,-0.3) -- (-0.3,-0.3) -- (-0.3,-2.5);
          \draw (-1.2, -1.2) node {\small $XA^n$};
          % 
          % Region XA: [-\infty, -0.3] x [-\infty, -0.3]
          \begin{scope}[scale=-1]
            \fill[fill=red!25] (-2.4,-0.3) -- (-0.3,-0.3) --
            (-0.3,-2.4) .. controls (-2,-2.4) and (-2.4,-2) ..
            (-2.4,-0.3) -- cycle; \draw (-2.5,-0.3) -- (-0.3,-0.3) --
            (-0.3,-2.5); \draw (-1.2, -1.2) node {\small $A^n$};
          \end{scope}
          % Region R: [-0.8, 0.8] x [-0.8, 0.8]
          \filldraw[fill=green!25] (0.8, 0.8) -- (-0.8, 0.8) -- (-0.8,
          -0.8) -- (0.8, -0.8) -- cycle; \draw (0.4, 0.4) node {\small
            $R$};
          \begin{scope}
            \path[clip] (-1.5,-2.5) -- (2.5,1.5) -- (1.5,2.5) --
            (-2.5,-1.5) -- cycle;
            \begin{scope}[yscale=-1]
              \fill[fill=blue!50] (-0.3,-2.4) -- (-0.3,-0.8) --
              (0.3,-0.8) -- (0.3,-2.4) -- cycle; \draw ((-0.3,-2.5) --
              (-0.3,-0.8) -- (0.3,-0.8) -- (0.3,-2.5);
            \end{scope}
            \begin{scope}[cm={0,-1,-1,0,(0,0)}]
              \fill[fill=blue!50] (-0.3,-2.4) -- (-0.3,-0.8) --
              (0.3,-0.8) -- (0.3,-2.4) -- cycle; \draw ((-0.3,-2.5) --
              (-0.3,-0.8) -- (0.3,-0.8) -- (0.3,-2.5);
            \end{scope}
            % 
            % Region XA: [-\infty, -0.3] x [-\infty, -0.3]
            \begin{scope}[scale=-1]
              \fill[fill=red!50] (-2.4,-0.3) -- (-0.3,-0.3) --
              (-0.3,-2.4) .. controls (-2,-2.4) and (-2.4,-2) ..
              (-2.4,-0.3) -- cycle; \draw (-2.5,-0.3) -- (-0.3,-0.3)
              -- (-0.3,-2.5); \draw (-1.2, -1.2) node {\small $A^n$};
            \end{scope}
            % Region R: [-0.8, 0.8] x [-0.8, 0.8]
            % \filldraw[fill=green!50] (0.8, 0.8) -- (-0.8, 0.8) --
            % (-0.8, -0.8) -- (0.8, -0.8) -- cycle;
            \filldraw[fill=green!50] (0.8, 0.8) -- (-0.8, 0.8) --
            (0.8, -0.8) -- cycle; \draw (0.4, 0.4) node {\small $R$};
          \end{scope}
          % 
          % Coordinate system
          \draw[->] (-2.5,0) -- (2.5, 0) node[right] {$z$}; \draw[->]
          (0,-2.5) -- (0, 2.5) node[above] {$\zeta$}; \foreach \x in
          {-2, -1, 1, 2} { \draw (\x,0) -- (\x,-0.1); } \foreach \y in
          {-2, -1, 1, 2} { \draw (0,\y) -- (-0.1,\y); } \draw (1,-0.1)
          node[below] {\small $1$}; \draw (-0.1,1) node[left] {\small
            $1$};
          % 
          % Lines for shift lemma
          \draw (-1.5,-2.5) -- (2.5,1.5); \draw (-2.5,-1.5) --
          (1.5,2.5);
          % Line for z+zeta >= 0
          \draw [dashed] (-2, 2) -- (-0.5,0.5);
          \begin{scope}[scale=-1]
            \draw [dashed] (-2, 2) -- (-0.5,0.5);
          \end{scope}
        \end{tikzpicture}}} \quad
      (b)~ \mp{0.9}{\scalebox{0.8}{\begin{tikzpicture}[scale=1.5]
          % 
          % Region XB: [-\infty, 0.2] x [-\infty, 0.2]
          \fill[fill=red!25] (-2.4,0.2) -- (0.2,0.2) -- (0.2,-2.4)
          .. controls (-2,-2.4) and (-2.4,-2) ..  (-2.4,0.2) -- cycle;
          \draw (-2.5,0.2) -- (0.2,0.2) -- (0.2,-2.5); \draw (-1.2,
          -1.2) node {\small $XB^n$};
          % 
          % Region B: [-\infty, 0.2] x [-\infty, 0.2]
          \begin{scope}[scale=-1]
            \fill[fill=red!25] (-2.4,0.2) -- (0.2,0.2) -- (0.2,-2.4)
            .. controls (-2,-2.4) and (-2.4,-2) ..  (-2.4,0.2) --
            cycle; \draw (-2.5,0.2) -- (0.2,0.2) -- (0.2,-2.5); \draw
            (-1.2, -1.2) node {\small $B^n$};
          \end{scope}
          % Region R: [-0.8, 0.8] x [-0.8, 0.8]
          \filldraw[fill=green!25] (0.8, 0.8) -- (-0.8, 0.8) -- (-0.8,
          -0.8) -- (0.8, -0.8) -- cycle; \draw (0.4, 0.4) node {\small
            $R$};
          \begin{scope}
            \path[clip] (-1.5,-2.5) -- (2.5,1.5) -- (1.5,2.5) --
            (-2.5,-1.5) -- cycle;
            % 
            % Region XB: [-\infty, 0.2] x [-\infty, 0.2]
            % 
            % Region B: [-\infty, 0.2] x [-\infty, 0.2]
            \begin{scope}[scale=-1]
              \fill[fill=red!50] (-2.4,0.2) -- (-0.8,0.2) -- (0.2,
              -0.8) -- (0.2,-2.4) .. controls (-2,-2.4) and (-2.4,-2)
              ..  (-2.4,0.2) -- cycle; \draw (-2.5,0.2) -- (-0.8,0.2)
              -- (0.2, -0.8) -- (0.2,-2.5); \draw (-1.2, -1.2) node
              {\small $B^n$};
            \end{scope}
            % Region R: [-0.8, 0.8] x [-0.8, 0.8]
            \filldraw[fill=green!50] (0.8, 0.8) -- (-0.8, 0.8) --
            (0.8, -0.8) -- cycle; \draw (0.4, 0.4) node {\small $R$};
          \end{scope}
          % 
          % Coordinate system
          \draw[->] (-2.5,0) -- (2.5, 0) node[right] {$z$}; \draw[->]
          (0,-2.5) -- (0, 2.5) node[above] {$\zeta$}; \foreach \x in
          {-2, -1, 1, 2} { \draw (\x,0) -- (\x,-0.1); } \foreach \y in
          {-2, -1, 1, 2} { \draw (0,\y) -- (-0.1,\y); } \draw (1,-0.1)
          node[below=1.2ex] {\small $1$}; \draw (-0.1,1)
          node[left=1.2ex] {\small $1$};
          % 
          % Lines for shift lemma
          \draw (-1.5,-2.5) -- (2.5,1.5); \draw (-2.5,-1.5) --
          (1.5,2.5);
          % Line for z+zeta >= 0
          \draw [dashed] (-2, 2) -- (-0.5,0.5);
          \begin{scope}[scale=-1]
            \draw [dashed] (-2, 2) -- (-0.5,0.5);
          \end{scope}
        \end{tikzpicture}}}
      \]
      \caption[Two coverings of the plane]{(a) A covering of the
        region $z-\zeta\in [-1,1]$ and $z+\zeta\geq 0$ for the case
        $b\geq 0$. (b) A covering of the region $z-\zeta\in [-1,1]$
        and $z+\zeta\geq 0$ for the case $b\leq 0$.}
      \label{fig-covers}
      \rule{\textwidth}{0.1mm}
    \end{figure}
    % ................................................................

    We now consider in turn the possible locations of the pair
    $(z,\zeta)$ in this covering.
    \begin{enumerate}
    \item If $\formula{-\r}\leq z,\zeta \leq\formula{\r}$, then by
      Lemma~\ref{lemmaR} we have $\sk(\state\cdot R) \leq
      \formula{\Q}~\sk\state$.
    \item \label{partk} If $z\leq\formula{-\a}$ and $\formula{\r}\leq
      \zeta$, then by Lemma~\ref{lemmaK} we have $\sk(\state\cdot
      K)\leq \formula{\Q}~\sk\state$.
    \item If $\formula{-\a}\leq z,\zeta$, then there exists $n\in\Z$
      such that $\sk(\state\cdot A^n)\leq \formula{\Q}~\sk\state$ by
      Lemma~\ref{lemmaA}.
    \item \label{partsk} If $\formula{\r}\leq z$ and
      $\zeta\leq\formula{-\a}$, then note that $\state\cdot K^\bullet
      = (\Delta, D)\cdot K$, and therefore $\sk(\state\cdot K)\leq
      \formula{\Q}~\sk\state$ by Lemma~\ref{lemmaK}:
      \[
      \sk(\state\cdot K^\bullet) = \sk ((\Delta, D)\cdot K) \leq
      \formula{\Q}~ \sk (\Delta, D)= \formula{\Q}~ \sk\state.
      \]
    \end{enumerate}
  \item[Case 2] $b\leq 0$. As above, we use a covering of the strip
    defined by $z-\zeta\in[-1,1]$ and $z+\zeta\geq 0$ and consider the
    possible locations of $(z, \zeta)$ in this space. The relevant
    covering is depicted in Figure~\ref{fig-covers}(b), where the $R$
    region (in green) is defined as above and the $B$ region (in red)
    is defined as the intersection of the strip with
    $\s{(z,\zeta)~|~z,\zeta\geq \formula{-\b}}$.
    % This covering requires -b <= r-1.
    %% Check!!!
    \assert{-\b <= \r -1}%
    \begin{enumerate}
    \item If $\formula{-\r}\leq z,\zeta\leq\formula{\r}$, then by
      Lemma~\ref{lemmaR} we have $\sk(\state\cdot R)\leq
      \formula{\Q}~\sk\state$.
    \item If $z,\zeta\geq \formula{-\b}$ then there exists $n\in\Z$
      such that $\sk(\state\cdot B^n)\leq \formula{\Q}~\sk\state$ by
      Lemma~\ref{lemmaB}.
    \end{enumerate}
  \end{description}
  Finally, note that only a constant number of calculations are
  required to decide which of the above cases applies. Moreover, each
  case only requires a constant number of operations. Specifically,
  the computation of $k$ and $\sigma^k$ in Lemma~\ref{shift_lemma}, of
  $n$ and $A^n$ in Lemma~\ref{lemmaA}, and of $n$ and $B^n$ in
  Lemma~\ref{lemmaB} each require just a fixed number of operations,
  and each of the remaining cases produces a fixed grid operator.
\end{proof}

% ----------------------------------------------------------------------
\subsection{General solution to grid problems over
  \texorpdfstring{$\Zomega$}{Z[omega]}}
\label{ssec-proof-prop-algorithm-2d}

We are finally in a position to solve Problem~\ref{pb:grid-Zomega}.

\begin{proposition}
  \label{prop:algo-grid-pb-Zomega}
  There is an algorithm which, given two bounded convex subset $A$ and
  $B$ of $\R^2$ with non-empty interior, enumerates all solutions of
  the grid problem over $\Zomega$ for $A$ and $B$. Moreover, if $A$
  and $B$ are $M$-upright, then the algorithm requires $O(\log(1/M))$
  arithmetic operations overall, plus a constant number of arithmetic
  operations per solution produced.
\end{proposition}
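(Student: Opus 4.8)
The plan is to follow the proof of Proposition~\ref{prop:algo-grid-pb-Zi} almost verbatim, substituting each ingredient by its $\Zomega$-counterpart. First I would reduce to the case of two ellipses. Given $A$ and $B$, apply Proposition~\ref{prop:enclosing-ellipse} twice to obtain ellipses $A'\supseteq A$ and $B'\supseteq B$ whose areas exceed those of $A$ and $B$ by at most the fixed factor $N=\frac{4\pi}{3\sqrt3}$. Because such an enclosing ellipse also contains a copy of the original body scaled by a fixed ratio about a common center, $M$-uprightness of $A$ and $B$ implies that $A'$ and $B'$ are $M'$-upright for some $M'$ that is at least a fixed multiple of $M$; in particular $\log(1/\up(A'))$ and $\log(1/\up(B'))$ are $O(\log(1/M))$. (If one prefers to avoid this remark, one can simply feed $A'$ and $B'$ to the next step and absorb the constant into the $O$.)

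Next I would apply Proposition~\ref{prop:ellipse-Zomega} to the pair $(A',B')$ to obtain a special grid operator $\Gop$ such that $\Gop(A')$ and $\Gop\bul(B')$ are both $1/\formula{\oneoverM}$-upright; by that proposition, $\Gop$ is computed using $O(\log(1/M))$ arithmetic operations when $A'$ and $B'$ are $M$-upright. Since $1/\formula{\oneoverM}$ is a fixed constant and $\Gop(A')$, $\Gop\bul(B')$ are bounded convex sets with non-empty interior, Proposition~\ref{prop:up-set-Zomega} lets us enumerate all solutions of the grid problem over $\Zomega$ for $\Gop(A')$ and $\Gop\bul(B')$ at a constant number of arithmetic operations per solution produced. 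For each solution $v$ so obtained, Proposition~\ref{prop:operator-on-constraint} shows that $u=\Gop\inv v$ is a solution of the grid problem over $\Zomega$ for $A'$ and $B'$ (here $\Gop\inv$ is again a special grid operator by Remark~\ref{gridoperatorcomposition}); I then test the two original constraints $u\in A$ and $u\bul\in B$ and output $u$ precisely when both hold. Correctness follows because $A\subseteq A'$ and $B\subseteq B'$, so every solution for $(A,B)$ arises this way, and Proposition~\ref{prop:operator-on-constraint} is a bijection, so none is produced twice.

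The one point requiring genuine care — and the main obstacle — is the efficiency accounting: I must verify that the number of candidates $v$ enumerated in the previous step is within a constant factor of the number of true solutions for $(A,B)$, so that the runtime is not dominated by rejected candidates. This is the exact analogue of the counting argument in the proof of Proposition~\ref{prop:up-set-Zi}, now carried out on both grid constraints at once: since $\Gop(A')$ is an upright ellipse, apart from the degenerate low-dimensional cases (which are handled directly, as in Proposition~\ref{prop:up-rect-Zomega}), the number of points of $\Gop(A')\cap\Grid(\Gop\bul(B'))$ is proportional to $\area(\Gop(A'))\cdot\area(\Gop\bul(B'))$; convexity of $\Gop(A)$ and $\Gop\bul(B)$ together with the fact that $\Gop$ and $\Gop\bul$ have determinant $\pm1$ (so the area ratios $\area(A')/\area(A)$ and $\area(B')/\area(B)$ are preserved and bounded by $N$) then forces a fixed proportion of the candidates to satisfy $u\in A$ and $u\bul\in B$. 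Granting this, the total cost is $O(\log(1/M))$ for computing $\Gop$ plus $O(1)$ per solution, as claimed.
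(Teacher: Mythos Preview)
Your proposal is correct and follows essentially the same approach as the paper, which simply says the proof is ``analogous to the proof of Proposition~\ref{prop:algo-grid-pb-Zi}, using Proposition~\ref{prop:ellipse-Zomega} to find the appropriate grid operator.'' You have just unpacked that analogy in more detail, including the efficiency accounting that the paper leaves implicit.
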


\begin{proof}
  Analogous to the proof of Proposition~\ref{prop:algo-grid-pb-Zi},
  using Proposition~\ref{prop:ellipse-Zomega} to find the appropriate
  grid operator.
\end{proof}

% ----------------------------------------------------------------------
\subsection{Scaled grid problems over
  \texorpdfstring{$\Zomega$}{Z[omega]}}
\label{ssect:scaled-grid-pb-Zomega}

We close this chapter by considering scaled versions of
Problem~\ref{pb:grid-Zomega}, as in
Subsection~\ref{ssect:scaled-grid-pb-Zi}. More specifically, given
$k\in\N$ and two bounded convex subsets $A$ and $B$ or $\R^2$ with
non-empty interior, we are interested in solving grid problems over
$\Zomega$ for $\rt{k}A$ and $(-\sqrt{2})^kB$. We call such problems
\emph{scaled grid problems over $\Zomega$ for $A$, $B$, and
  $k$}. These scaled grid problems will also be useful in
Chapter~\ref{chap:nb-th}. Using
Proposition~\ref{prop:algo-grid-pb-Zomega}, we can establish the
following proposition.

\begin{proposition}
  \label{prop:algo-grid-pb-scaled-Zomega}
  There is an algorithm which, given two bounded convex subset $A$ and
  $B$ of $\R^2$ with non-empty interior, enumerates (the
  infinite sequence of) all solutions of the scaled grid problem over
  $\Zomega$ for $A$, $B$, and $k$ in order of increasing
  $k$. Moreover, if $A$ and $B$ are $M$-upright, then the algorithm
  requires $O(\log(1/M))$ arithmetic operations overall, plus a
  constant number of arithmetic operations per solution produced.
\end{proposition}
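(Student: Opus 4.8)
The plan is to follow the proof of Proposition~\ref{prop:algo-grid-pb-scaled-Zi} almost verbatim, reducing everything to Proposition~\ref{prop:algo-grid-pb-Zomega}; the one point that needs care is that the grid operator used to straighten $A$ and $B$ should be computed \emph{once} and then reused for every value of $k$, so that the cost does not accumulate over the infinitely many subproblems. What makes this possible is that the uprightness of a bounded convex set is unchanged under multiplication by a nonzero scalar (Remark~\ref{rem:uprightness-invariant}) and under the reflection $u\mapsto-u$, so a single grid operator that makes $A$ and $B$ simultaneously upright also makes each scaled pair $\rt{k}A$, $(-\sqrt2)^kB$ simultaneously upright.

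First I would build the straightening operator. Using Proposition~\ref{prop:enclosing-ellipse}, inscribe $A$ in an ellipse $E_A$ and $B$ in an ellipse $E_B$ whose areas exceed those of $A$ and $B$ by at most the fixed constant factor $\tfrac{4\pi}{3\sqrt3}$; as in the proof of Proposition~\ref{prop:algo-grid-pb-Zi}, these enclosing ellipses can be chosen so that $\up(E_A)$ and $\up(E_B)$ are bounded below by a fixed constant multiple of $\up(A)$ and $\up(B)$, respectively. Now apply Proposition~\ref{prop:ellipse-Zomega} to $E_A$ and $E_B$ to obtain a special grid operator $\Gop$ such that $\Gop(E_A)$ and $\Gop\bul(E_B)$ are both $1/c$-upright, where $c$ is the fixed constant of that proposition; by the lower bound on $\up(E_A),\up(E_B)$ this computation uses $O(\log(1/M))$ arithmetic operations. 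Since $\Gop$ and $\Gop\bul$ are special (Remark~\ref{gridoperatorcomposition}) they preserve area ratios, and $\Gop(A)\seq\Gop(E_A)$ with $\BBox(\Gop(A))\seq\BBox(\Gop(E_A))$; hence $\up(\Gop(A))\geq\tfrac{3\sqrt3}{4\pi c}$, and similarly for $\Gop\bul(B)$. Consequently, for every $k\in\N$ the sets $\Gop(\rt{k}A)=\rt{k}\Gop(A)$ and $\Gop\bul((-\sqrt2)^kB)=(-\sqrt2)^k\Gop\bul(B)$ are $c''$-upright for the same fixed constant $c''>0$, since scaling and the reflection $u\mapsto-u$ preserve uprightness.

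The algorithm itself then sweeps through $k=0,1,2,\ldots$ in turn. For each $k$ it invokes Proposition~\ref{prop:up-set-Zomega} on the $c''$-upright pair $\rt{k}\Gop(A)$ and $(-\sqrt2)^k\Gop\bul(B)$ to enumerate all solutions of the corresponding grid problem over $\Zomega$, at a constant number of arithmetic operations per solution. By Proposition~\ref{prop:operator-on-constraint} applied with $\Gop$ to the sets $\rt{k}A$ and $(-\sqrt2)^kB$, a point $v$ solves this transformed problem if and only if $\Gop\inv v$ solves the grid problem over $\Zomega$ for $\rt{k}A$ and $(-\sqrt2)^kB$, i.e.\ if and only if $\Gop\inv v$ is a solution of the scaled grid problem over $\Zomega$ for $A$, $B$, and $k$; since $\Gop\inv$ is again a grid operator, $\Gop\inv v\in\Zomega$. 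So the algorithm outputs $\Gop\inv v$ for each $v$ it finds and then passes to $k+1$. For each fixed $k$ the two sets are bounded, so the subproblem has finitely many solutions, and the algorithm thereby lists all solutions of the scaled problem in order of increasing $k$, as required.

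For the complexity claim, the only step whose cost depends on $M$ is the one-time computation of $\Gop$ and of $\Gop\inv$, which is $O(\log(1/M))$; the sweep over $k$, the rescalings, each call to Proposition~\ref{prop:up-set-Zomega}, and the applications of $\Gop\inv$ together contribute only a constant number of operations per solution produced. I expect the only genuinely delicate point to be the one already met in the $\Zi$ setting, namely verifying that the enclosing ellipses $E_A$ and $E_B$ retain $M$-uprightness up to a fixed constant factor (so that the invocation of Proposition~\ref{prop:ellipse-Zomega} really costs $O(\log(1/M))$ and not more); this rests on the fact that the minimal-area enclosing ellipse furnished by Proposition~\ref{prop:enclosing-ellipse} satisfies $\tfrac12E_A\seq A$, forcing $\BBox(E_A)$ into a translate of $2\,\BBox(A)$. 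Once that is granted, the argument is a routine transcription of the proof of Proposition~\ref{prop:algo-grid-pb-scaled-Zi}.
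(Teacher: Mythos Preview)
Your proposal is correct and follows essentially the same approach as the paper, which gives no explicit proof at all and merely says the result follows from Proposition~\ref{prop:algo-grid-pb-Zomega} (just as the $\Zi$ analogue, Proposition~\ref{prop:algo-grid-pb-scaled-Zi}, is proved by the one line ``This follows from Proposition~\ref{prop:algo-grid-pb-Zi}''). Your write-up is a careful unpacking of that one-liner: you correctly isolate the point that the straightening operator $\Gop$ must be computed once and reused for every $k$ (otherwise the $O(\log(1/M))$ cost would recur per level rather than being a one-time overhead), and you justify this via the scale- and reflection-invariance of uprightness together with the linearity of $\Gop$ and $\Gop\bul$. The caveat you flag at the end---that the enclosing ellipse must inherit $\Omega(M)$-uprightness from $A$---is a genuine detail the paper glosses over in both the $\Zi$ and $\Zomega$ cases, and your John-ellipsoid justification is the right way to close that gap.
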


Finally, we give some lower bounds on the number of solutions to
scaled grid problems over $\Zomega$.

\begin{lemma}
  \label{lem-2-solutions}
  Let $A$ and $B$ be convex subsets of $\R^2$, and let $k\geq 0$.
  Assume $A$ contains a circle of radius $r$ and $B$ contains a circle
  of radius $R$, such that $rR\geq \frac{1}{2^k}(1+\sqrt2)^2$. Then
  the scaled grid problem over $\Zomega$ for $A$, $B$, and $k$ has at
  least $2$ solutions.
\end{lemma}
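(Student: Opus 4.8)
The plan is to shrink the problem to one about inscribed upright rectangles, reduce it to two one-dimensional grid problems via the reasoning in the proof of Proposition~\ref{prop:up-rect-Zomega}, and then apply Lemma~\ref{lem:grid-bounds} twice: once to obtain a solution in the $x$-direction, and once---after cutting the relevant interval roughly in half---to obtain two solutions in the $y$-direction.

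Since $A$ is convex and contains a circle of radius $r$, it also contains the closed disk $D_A$ of radius $r$ bounded by that circle. Inscribe in $D_A$ the axis-aligned rectangle $A'$ with horizontal side length $r$ and vertical side length $r\sqrt3$, whose four corners lie on the circle; then $A'\seq D_A\seq A$. Similarly obtain $B'\seq B$, an axis-aligned rectangle with side lengths $R$ and $R\sqrt3$. Because $A'\seq A$ and $B'\seq B$, we have $\rt{k}A'\seq\rt{k}A$ and $(-\sqrt2)^kB'\seq(-\sqrt2)^kB$, so any solution of the scaled grid problem over $\Zomega$ for $A'$, $B'$, and $k$ is also a solution for $A$, $B$, and $k$; it therefore suffices to exhibit two distinct solutions of the former. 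The sets $\rt{k}A'$ and $(-\sqrt2)^kB'$ are upright rectangles with $x$-side lengths $\rt{k}r$ and $\rt{k}R$ and $y$-side lengths $\sqrt3\,\rt{k}r$ and $\sqrt3\,\rt{k}R$. As noted in the proof of Proposition~\ref{prop:up-rect-Zomega}, a number $u=a+bi$ with $a,b\in\Zrt$ is a solution of the grid problem over $\Zomega$ for $\rt{k}A'$ and $(-\sqrt2)^kB'$ exactly when $a$ solves the one-dimensional grid problem for the two $x$-intervals and $b$ solves the one-dimensional grid problem for the two $y$-intervals. The $x$-intervals have length product $\rt{k}r\cdot\rt{k}R=2^krR\geq(1+\sqrt2)^2$ by hypothesis, so Lemma~\ref{lem:grid-bounds} provides a solution $a_0$. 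The $y$-intervals have length product $3\cdot 2^krR\geq 3(1+\sqrt2)^2$; split the $y$-interval of $\rt{k}A'$ into two disjoint closed subintervals, each of length slightly below half, so that each still has length product strictly above $(1+\sqrt2)^2$ against the $y$-interval of $(-\sqrt2)^kB'$---here is where the factor $3$, rather than $2$, matters. By Lemma~\ref{lem:grid-bounds} the first subinterval yields a one-dimensional solution $b_1$ and the second a one-dimensional solution $b_2$, necessarily distinct. Then $u_1=a_0+b_1i$ and $u_2=a_0+b_2i$ lie in $\Zomega$ by Proposition~\ref{prop:zomega}, are two distinct solutions of the grid problem over $\Zomega$ for $\rt{k}A'$ and $(-\sqrt2)^kB'$, and hence are two distinct solutions of the scaled grid problem over $\Zomega$ for $A$, $B$, and $k$.

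The one point requiring care is the distinctness of $u_1$ and $u_2$: halving the $y$-interval into two pieces meeting at a single point could in principle produce coinciding one-dimensional solutions, so the split must be into genuinely disjoint subintervals while keeping both length products above the threshold of Lemma~\ref{lem:grid-bounds}. The slack between $3(1+\sqrt2)^2$ and $2(1+\sqrt2)^2$ in the $y$-direction is exactly what makes this possible, so no difficulty arises. Everything else is routine bookkeeping with the one-dimensional reduction already carried out for Proposition~\ref{prop:up-rect-Zomega}.
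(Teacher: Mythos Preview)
Your proof is correct and follows essentially the same strategy as the paper: reduce to one-dimensional grid problems via Proposition~\ref{prop:zomega} and invoke Lemma~\ref{lem:grid-bounds} to produce a common solution in one coordinate and two distinct solutions in the other via disjoint subintervals. The paper inscribes two disjoint $\delta\times\delta$ squares in $\rt{k}A$ (with $\delta=\rt{k}r/\sqrt2$) and a single $\Delta\times\Delta$ square in $(-\sqrt2)^kB$ (with $\Delta=\rt{k}R\sqrt2$), varying the real part rather than the imaginary part, but this is only a cosmetic difference.
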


\begin{proof}
  By assumption, $(\rt{k})A$ contains a circle of radius $r'=\rt{k}r$
  and $(-\sqrt{2})^kB$ contains a circle of radius $R'=\rt{k}R$, with
  $rR\geq (1+\sqrt2)^2$. Let $\delta={r'}/{\sqrt2}$ and
  $\Delta=R'\sqrt2$, and inscribe two squares of size
  $\delta\times\delta$ in the first circle, and one square of size
  $\Delta\times\Delta$ in the second circle, as shown here:
  \[
  \begin{tikzpicture}[scale=1.5] \draw[fill=red!25] (0,0) circle (1);
    \draw[shift={(0.48,0)}, scale=0.5] (0,.707) node[above] {\small
      $\delta$}; \draw[shift={(0.48,0)}, scale=0.5, fill=yellow!20]
    (-.707,-.707) -- (.707,-.707) -- (.707,.707) -- (-.707,.707) --
    cycle; \draw[shift={(-0.48,0)}, scale=0.5] (0,.707) node[above]
    {\small $\delta$}; \draw[shift={(-0.48,0)}, scale=0.5,
    fill=yellow!20] (-.707,-.707) -- (.707,-.707) -- (.707,.707) --
    (-.707,.707) -- cycle; \draw[shift={(-0.48,0)}, scale=0.5, dotted]
    (-.707,-2.7) node[above, anchor=base] {\small $x_0$} +(0, 0.5) --
    (-.707,-.707); \draw[shift={(-0.48,0)}, scale=0.5, dotted]
    (.707,-2.7) node[above, anchor=base] {\small $x_1$} +(0, 0.5) --
    (.707,-.707); \draw[shift={(0.48,0)}, scale=0.5, dotted]
    (-.707,-2.7) node[above, anchor=base] {\small $x_2$} +(0, 0.5) --
    (-.707,-.707); \draw[shift={(0.48,0)}, scale=0.5, dotted]
    (.707,-2.7) node[above, anchor=base] {\small $x_3$} +(0, 0.5) --
    (.707,-.707); \draw[dotted] (-0.6, -1.35) node[above, anchor=base]
    {\small $a$} +(0, 0.25) -- (-0.6, -.1); \draw[dotted] (0.5, -1.35)
    node[above, anchor=base] {\small $a'$} +(0, 0.25) -- (0.5, -.1);
    \draw[shift={(-0.45,0)}, scale=0.5, dotted] (-1.3,-.707)
    node[left] {\small $y_0$} -- (-.707,-.707);
    \draw[shift={(-0.45,0)}, scale=0.5, dotted] (-1.3,.707) node[left]
    {\small $y_1$} -- (-.707,.707); \draw[dotted] (-1.1, -.1)
    node[left] {\small $b$} -- (0.5, -.1); \def\dot#1{\draw (#1) node
      {$\bullet$};} \dot{-0.6, -.1} \dot{0.5, -.1}

    \begin{scope}[shift={(2.5,0)}, scale=0.8]
      \draw[fill=green!25] (0,0) circle (1); \draw[fill=yellow!20,
      line join=round] (-.707,-.707) -- (.707,-.707) -- (.707,.707) --
      (-.707,.707) -- cycle; \draw (0,.707) +(0,-0.05) node[above]
      {\small $\Delta$}; \draw[dotted] (-.707,-1.6875) node[above,
      anchor=base] {\small $z_0$} +(0, 0.3) -- (-.707,-.707);
      \draw[dotted] (.707,-1.6875) node[above, anchor=base] {\small
        $z_1$} +(0, 0.3) -- (.707,-.707); \draw[dotted] (1.1,.707)
      node[right] {\small $w_0$} -- (.707,.707); \draw[dotted]
      (1.1,-.707) node[right] {\small $w_1$} -- (.707,-.707);
      \def\dot#1{\draw (#1) node {$\bullet$};} \dot{0.1, 0.2}
      \draw[dotted] (0.1, -1.6875) node[above, anchor=base] {\small
        $a\bul$} +(0, 0.3) -- (0.1, 0.2); \dot{-0.3, 0.2}
      \draw[dotted] (-0.3, -1.6875) node[above, anchor=base] {\small
        $a'^{\bullet}$} +(0, 0.3) -- (-0.3, 0.2); \draw[dotted] (1.1,
      0.2) node[right] {\small $b\bul$} -- (-0.3, 0.2);
    \end{scope}
  \end{tikzpicture}
  \]
  Since $\delta\Delta=r'R'\geq (1+\sqrt2)^2$, by
  Lemma~\ref{lem:grid-bounds}, we can find $a,a',b\in\Z[\sqrt2]$ such
  that $a\in[x_0,x_1]$, $a\bul\in[z_0,z_1]$, $a'\in[x_2,x_3]$,
  $a'^{\bullet}\in[z_0,z_1]$, $b\in[y_0,y_1]$, and
  $b\bul\in[w_0,w_1]$.  Then $u=a+ib$ and $v=a'+ib$ are two different
  solutions to the scaled grid problem over $\Zomega$ for $A$, $B$,
  and $k$ as claimed.
\end{proof}

\begin{lemma}
  \label{lem-exponential-grid}
  Let $A$ and $B$ be convex subsets of $\R^2$, and assume that the
  two-dimensional scaled grid problem for $k$ has at least two
  distinct solutions. Then for all $\ell\geq 0$, the scaled grid
  problem for $k+2\ell$ has at least $2^\ell+1$ solutions.
\end{lemma}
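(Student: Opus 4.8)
The plan is to adapt the proof of Proposition~\ref{prop:evolution-grid-Zi} to the setting of $\Zomega$, the only changes being that the multiplicative factor $5$ (which reflected the scaling sequence over $\Zi$) is replaced by $2$, and that one must now respect simultaneously the second grid constraint, the one involving $(-)\bul$.

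First I would unfold the definitions: a solution of the scaled grid problem over $\Zomega$ for $A$, $B$, and $k$ is an element $u\in\Zomega$ with $u\in\rt{k}A$ and $u\bul\in(-\sqrt2)^kB$. Fix two distinct such solutions $u\neq v$. For each integer $n$ with $0\leq n\leq 2^\ell$, set $w_n = n\,u + (2^\ell-n)\,v$. Since $u,v\in\Zomega$ and the coefficients are integers, $w_n\in\Zomega$. Next I would note that $\tfrac{1}{2^\ell}w_n = \tfrac{n}{2^\ell}u + \bigl(1-\tfrac{n}{2^\ell}\bigr)v$ is a convex combination of $u$ and $v$; as $A$ is convex so is $\rt{k}A$, hence $\tfrac{1}{2^\ell}w_n\in\rt{k}A$, whence $w_n\in 2^\ell\rt{k}A = \rt{k+2\ell}A$. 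Applying $(-)\bul$, which fixes the integers $n$ and $2^\ell-n$, gives $w_n\bul = n\,u\bul + (2^\ell-n)\,v\bul$; the same convexity argument, now for the convex set $(-\sqrt2)^kB$ and the convex combination of $u\bul$ and $v\bul$, shows $w_n\bul\in 2^\ell(-\sqrt2)^kB = (-\sqrt2)^{k+2\ell}B$, using $(-\sqrt2)^{2\ell}=2^\ell$. Hence each $w_n$ is a solution of the scaled grid problem over $\Zomega$ for $A$, $B$, and $k+2\ell$.

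Finally I would check distinctness: for $n\neq n'$ we have $w_n - w_{n'} = (n-n')(u-v)\neq 0$ since $u\neq v$. This exhibits $2^\ell+1$ distinct solutions, as claimed.

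I do not expect any genuine obstacle here; the statement is a direct two-constraint analogue of Proposition~\ref{prop:evolution-grid-Zi}. The only points deserving care are bookkeeping ones: that $(-\sqrt2)^{k+2\ell} = (-\sqrt2)^k\cdot 2^\ell$, so that the rescaling factor actually relevant is the positive integer $2^\ell$, under which both convexity and the direct-image description of the grid constraints behave transparently; and the observation that $(-)\bul$ is $\Z$-linear, hence commutes with the integer combination defining $w_n$.
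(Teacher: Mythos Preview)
Your proof is correct and is precisely the argument the paper intends: the paper's own proof reads simply ``Analogous to the proof of Proposition~\ref{prop:evolution-grid-Zi},'' and your write-up carries out that analogy in full, correctly handling the second constraint via $\Z$-linearity of $(-)\bul$ and the identity $(-\sqrt2)^{2\ell}=2^\ell$.
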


\begin{proof}
  Analogous to the proof of Proposition~\ref{prop:evolution-grid-Zi}.
\end{proof}

% ---------------------------------------------------------------------
\chapter{Clifford+\texorpdfstring{$V$}{V} approximate synthesis}
\label{chap:synth-V}

In this chapter, we introduce an efficient algorithm to solve the
problem of approximate synthesis of special unitaries over the
Clifford+$V$ gate set. Recall from Chapter~\ref{chap:intro} that the
Clifford group is generated by
\[
\omega = e^{i\pi/4}, \quad H= \frac{1}{\sqrt 2}
\begin{bmatrix}
  1 & 1 \\
  1 & -1
\end{bmatrix}, \quad \mbox{and} \quad S=
\begin{bmatrix}
  1 & 0 \\
  0 & i
\end{bmatrix}.
\]
Recall moreover that the Clifford+$V$ group is obtained by adding the
following $V$-gates\label{def:V-gates} to the generators of the
Clifford group
\[
V_X = \frac{1}{\sqrt 5}
\begin{bmatrix}
  1 & 2i \\
  2i & 1
\end{bmatrix}, \quad V_Y = \frac{1}{\sqrt 5}
\begin{bmatrix}
  1 & 2 \\
  -2 & 1
\end{bmatrix}, \quad \mbox{and} \quad V_Z = \frac{1}{\sqrt 5}
\begin{bmatrix}
  1 + 2i & 0 \\
  0 & 1-2i
\end{bmatrix}.
\]

The problem of approximate synthesis of special unitaries over the
Clifford+$V$ gate set is the following. Given a special unitary $U\in
\suset(2)$ and a precision $\epsilon >0$, construct a Clifford+$V$
circuit $W$ whose $V$\!-count is as small as possible and such that
$\norm{W-U}\leq \epsilon$.

We solve the problem in three steps. We first characterize the
unitaries which can be expressed \emph{exactly} as Clifford+$V$
circuits. We then use this characterization to define an algorithm
solving the problem of approximate synthesis of \emph{$z$-rotations}
over the Clifford+$V$ gate set. Finally, we show how to this method
can be used to approximately synthesize arbitrary special unitaries.

% =====================================================================
\section{Exact synthesis of Clifford+\texorpdfstring{$V$}{V}
  operators}
\label{sect:exact-synth-CliffordV}

We start by solving the problem of exact synthesis of Clifford+$V$
operators.

\begin{problem}[Exact synthesis of Clifford+$V$ operators]
  \label{pb:exact-synth-CliffordV}
  Given a unitary $U\in\uset(2)$, determine whether there exists a
  Clifford+$V$ circuit $W$ such that $U=W$ and, in case such a circuit
  exists, construct one whose $V$\!-count is minimal.
\end{problem}

The problem of exact synthesis of \emph{Pauli+$V$} operators was first
solved in \cite{BGS2013} using the arithmetic of quaternions. Here, we
extend this result to the Clifford+$V$ group.

To characterize Clifford+$V$ operators, we consider the following set
of unitaries.

\begin{definition}
  \label{def:subgroup-RV}
  The set $\vset$ consists of unitary matrices of the form
  \begin{equation}
    \label{eq:matrix}
    U =  \frac{1}{\rt{k}\rf{\ell}} \begin{bmatrix} 
      \alpha & \gamma \\
      \beta & \delta
    \end{bmatrix}
  \end{equation}
  where $k,\ell\in\N$ with $0\leq k\leq 2$,
  $\alpha,\beta,\gamma,\delta\in\Z[i]$, and such that $\det(U)$ is a
  power of $i$.
\end{definition}

We will show that a unitary $U$ is a Clifford+$V$ operator if and only
if $U\in\vset$. As a corollary, this will establish that $\vset$ is a
group, which might not be obvious, due to the seemingly arbitrary
restriction on the exponent $k$. We note that $\vset$ does not
coincide with the subgroup of $\uset(2)$ whose entries are in the ring
$\Z[1/\sqrt{2}, 1/\sqrt{5},i]$\label{def:RV} and whose determinant is
a power of $i$. Indeed, the matrix
\[
\frac{1}{5^3} \begin{bmatrix}
  2i + \sqrt{5} & - 80 + 96i \\
  80 + 96i & -2i + \sqrt{5}
\end{bmatrix}
\]
has entries in $\Z[1/\sqrt{2}, 1/\sqrt{5},i]$ and has determinant 1
but is not an element of $\vset$.

\begin{definition}
  \label{def:5-denomexp}
  Let $U\in\vset$ be as in (\ref{eq:matrix}). The integers $k$ and
  $\ell$ are called the \emph{$\sqrt{2}$-denominator exponent} and
  \emph{$\sqrt{5}$-denominator exponent} of $U$ respectively. The
  least $k$ (resp. $\ell$) such that $U$ can be written as in
  (\ref{eq:matrix}) is the \emph{least $\sqrt{2}$-denominator
    exponent} (resp. \emph{least $\sqrt{5}$-denominator exponent}) of
  $U$. These notions extend naturally to vectors and scalars of the
  form
  \begin{equation}
    \label{eq:vector}
    \frac{1}{\rt{k}\rf{\ell}} \begin{bmatrix} 
      \alpha \\
      \beta 
    \end{bmatrix}
    \quad                           
    \mbox{ and }
    \quad
    \frac{1}{\rt{k}\rf{\ell}}~ \alpha,                            
  \end{equation}
  where $k,\ell\in\N$, with $0\leq k \leq 2$, and $\alpha,
  \beta\in\Z[i]$.
\end{definition}

In what follows, we refer to the pair $(k,\ell)$ as the
\emph{denominator exponent} of a matrix, vector, or scalar. It is then
understood that the first component of the pair $(k,\ell)$ is the
$\sqrt 2$-exponent, while the second is the $\sqrt 5$-exponent. Note
that the least denominator exponent of a matrix, vector, or scalar is
the pair $(k,\ell)$, where $k$ and $\ell$ are the least $\sqrt 2$- and
$\sqrt 5$-exponents respectively.

\begin{remark}
  \label{rem:denomexp-parity}
  Since $\sqrt{5}\notin\Z[i]$, if $\ell$ and $\ell'$ are two
  $\sqrt{5}$-denominator exponents of a matrix $U\in\V$, then $\ell
  \equiv \ell' ~(\mymod 2)$. A similar property holds for
  $\sqrt{2}$-denominator exponents.
\end{remark}

We first show that if $U$ is a Clifford+$V$ operator, then $U\in \V$.

\begin{lemma}
  \label{lem:standard-form}
  If $U$ is a Clifford+$V$ operator, then $U=ABC$ where $A$ is a
  product of $V$\!-gates, $B$ is a Pauli+$S$ operator, and $C$ is one
  of $I$, $H$, $HS$, $\omega$, $H\omega$, and $HS\omega$.
\end{lemma}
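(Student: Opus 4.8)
The plan is to argue by structural induction on the word representing the Clifford$+V$ operator $U$. A Clifford$+V$ operator is, by definition, a product of generators drawn from $\{\omega, H, S, V_X, V_Y, V_Z\}$ (together with their inverses, but since this is a group generated by finitely many elements of finite or controllable order, inverses can be rewritten as positive words — in particular $V_X^{-1}=V_X^{\dagger}$ is again a scalar multiple of a $\Z[i]$-matrix over $\sqrt5$, and $\omega,H,S$ generate a finite group). So it suffices to show that the set $\mathcal{G}$ of operators expressible in the claimed normal form $ABC$ — with $A$ a product of $V$-gates, $B$ a Pauli$+S$ operator, and $C\in\{I,H,HS,\omega,H\omega,HS\omega\}$ — contains the identity and is closed under left multiplication by each generator.

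First I would pin down the bookkeeping for the ``tail'' $C$ and the Pauli$+S$ part $B$. The key observation is that the Clifford group, modulo the subgroup generated by the Pauli matrices, $S$, and the scalar $\omega$, is small: the six cosets represented by $I, H, HS, \omega, H\omega, HS\omega$ should exhaust the relevant quotient (this is essentially the statement that $H$ and $S$ generate the single-qubit Clifford group and that the normal closure of $\{X,Z,S,\omega\}$ has index $6$ — I would verify this by a direct finite computation, e.g. checking the action on the Pauli group under conjugation). Thus for any Clifford generator $g\in\{\omega,H,S\}$ and any tail $C$ in our list, the product $gC$ can be rewritten as $B'C'$ where $B'$ is Pauli$+S$ and $C'$ is again in the list; and since the Pauli$+S$ operators normalize the group generated by the $V$-gates only up to conjugation, I would also need: for a Pauli$+S$ operator $B'$ and a product of $V$-gates $A$, the product $B'A$ equals $A'B''$ for some product of $V$-gates $A'$ and Pauli$+S$ operator $B''$. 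This last fact follows because conjugating a $V$-gate by a Pauli or by $S$ permutes (up to sign/phase) the three $V$-gates $V_X,V_Y,V_Z$ among themselves — a finite check on the nine conjugates.

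The remaining closure case is left multiplication by a $V$-gate: given $U=ABC\in\mathcal{G}$ and $V\in\{V_X,V_Y,V_Z\}$, we want $VABC\in\mathcal{G}$. We have $VA$ already a product of $V$-gates, so this reduces to showing $VA\cdot B = A''\cdot B'$ — but that is exactly the commutation fact from the previous paragraph applied in the form $B^{-1}V^{-1}$, i.e. it follows from the same finite conjugation table. Putting it together: every generator, applied on the left to an element of $\mathcal{G}$, lands back in $\mathcal{G}$, and $I = I\cdot I\cdot I\in\mathcal{G}$, so by induction every Clifford$+V$ operator lies in $\mathcal{G}$, which is the claim.

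The main obstacle, I expect, is not any single step but the sheer amount of finite case-checking and the care needed with global phases and signs: the commutation relations between the $V$-gates and the Pauli$+S$ operators hold only up to phases in $\langle i\rangle$ (or $\langle\omega\rangle$), and one must make sure these stray phases can always be absorbed into the Pauli$+S$ factor $B$ (which is fine, since $\omega^2 = i$ and powers of $i$ times a Pauli are still ``Pauli$+S$'') or into the tail $C$ (which is why $\omega, H\omega, HS\omega$ appear in the list alongside $I,H,HS$). Organizing these conjugation tables cleanly — one table for $g C$ with $g\in\{\omega,H,S\}$, one for $B V B^{-1}$ — and checking they close is the real work; everything else is the routine induction skeleton above.
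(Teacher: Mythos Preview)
Your two ingredients --- a coset decomposition of the Clifford group and a commutation relation pushing $V$-gates past Clifford elements --- are exactly the ones the paper uses, but your execution has a gap. You set up closure of $\mathcal{G}$ under left multiplication by each generator. For a $V$-gate this is immediate: $VABC=(VA)BC$ is already in normal form (your ``reduces to $VA\cdot B=A''B'$'' is superfluous). For $g=\omega$ or $g=S$ it also works, since $\omega$ is central and $S$ lies in Pauli+$S$. But for $g=H$ you are stuck: to rewrite $HABC$ you must move $H$ past the $V$-product $A$, and the only commutation you state is for Pauli and $S$. Your fact ``$gC=B'C'$'' does not help, since in $HABC$ the $H$ is nowhere near $C$. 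What is missing is that $H$ \emph{also} conjugates $V$-gates to $V$-gates (e.g.\ $HV_XH=V_Z$); the commutation holds for \emph{all} Clifford gates, not just Pauli+$S$. (A smaller slip: you put $\omega$ in the index-$6$ subgroup, but then $I$ and $\omega$ would share a coset and the index would drop to $3$; the correct subgroup is Pauli+$S$ without $\omega$.)

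The paper's proof states the full commutation --- for every Clifford gate $C$ and $V$-gate $V$ there exist $C',V'$ with $CV=V'C'$ --- and then skips the induction: one sweep pushes all $V$-gates to the left of an arbitrary Clifford+$V$ word, leaving $(\text{$V$-product})\cdot(\text{Clifford})$, and the index-$6$ coset decomposition finishes. Add $H$ to your conjugation table and your induction works too, but the direct argument is cleaner.
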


\begin{proof}
  Clifford gates and $V$\!-gates can be commuted in the sense that for
  every pair $C,V$ of a Clifford gate and a $V$\!-gate, there exists a
  pair $C',V'$ such that $CV=V'C'$. This implies that a Clifford+$V$
  operator $U$ can always be written as $U=AA'$, where $A$ is a
  product of $V$\!-gates and $A'$ is a Clifford operator. Furthermore,
  the Pauli+$S$ group has index 6 as a subgroup of the Clifford group
  and its cosets are: Pauli+$S$, Pauli+$S\cdot H$, Pauli+$S\cdot HS$,
  Pauli+$S\cdot\omega$, Pauli+$S\cdot H\omega$, and Pauli+$S\cdot
  HS\omega$. It thus follows that a Clifford operator $A'$ can always
  be written as $A'=BC$ with $B$ a Pauli+$S$ operator and $C$ one of
  $I$, $H$, $HS$, $\omega$, $H\omega$, and $HS\omega$.
\end{proof}

\begin{corollary}
  \label{cor:CliffordV-in-vset}
  If $U$ is a Clifford+$V$ operator, then $U\in \V$.
\end{corollary}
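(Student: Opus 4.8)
The plan is to apply Lemma~\ref{lem:standard-form}, which gives $U = ABC$ with $A$ a product of $V$\!-gates, $B$ a Pauli+$S$ operator, and $C$ one of $I, H, HS, \omega, H\omega, HS\omega$, and then to read off from this decomposition that $U$ has exactly the shape required by Definition~\ref{def:subgroup-RV}. First I would record that each elementary piece is (or is of the form described by an element of) $\V$: the $V$\!-gates $V_X, V_Y, V_Z$ have entries in $\Z[i]$, a single factor of $\sqrt5$ in the denominator, and determinant $1$; the Pauli matrices $X, Y, Z$ and the gate $S$ have entries in $\Z[i]$ and determinant a power of $i$ (namely $\det X = \det Y = \det Z = -1 = i^2$ and $\det S = \pm i$); and each of the six possible values of $C$ is, after clearing denominators, of the form $\frac{1}{\sqrt2^{k}}M$ with $M$ a matrix over $\Z[i]$, $k \in \{0,1,2\}$, and determinant a power of $i$.

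Next I would assemble the product. Writing $A = \frac{1}{\sqrt5^{\ell}}M_A$, $B = M_B$, and $C = \frac{1}{\sqrt2^{k}}M_C$, where $M_A, M_B, M_C$ are matrices over $\Z[i]$ and $0 \le k \le 2$, one obtains
\[
U = ABC = \frac{1}{\sqrt2^{k}\sqrt5^{\ell}}\, M_A M_B M_C,
\]
and $M_A M_B M_C$ again has entries in $\Z[i]$ since $\Z[i]$ is a ring. Multiplicativity of the determinant gives $\det U = \det A \cdot \det B \cdot \det C$, a product of powers of $i$ and hence itself a power of $i$. Therefore $U$ matches Definition~\ref{def:subgroup-RV}, so $U \in \V$.

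The one point that requires care is the bound $0 \le k \le 2$ on the $\sqrt2$-denominator exponent: multiplying arbitrary unitaries with $\sqrt2$ in their denominators could in principle push this exponent past $2$. This is precisely what Lemma~\ref{lem:standard-form} is designed to avoid — it confines the entire $\sqrt2$-contribution to the single factor $C$, whose six possible values all have $\sqrt2$-exponent at most $2$, while $A$ and $B$ contribute none. So I expect no genuine obstacle here; the entire argument is the bookkeeping of the three factors.
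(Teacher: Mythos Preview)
Your proposal is correct and is exactly the argument the paper intends: the corollary is stated immediately after Lemma~\ref{lem:standard-form} with no separate proof, so the reader is expected to unpack the decomposition $U=ABC$ just as you do. Your observation that Lemma~\ref{lem:standard-form} isolates the entire $\sqrt2$-contribution in the single factor $C$, thereby keeping $k\le 2$, is precisely the point of that lemma.
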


To show, conversely, that every element of $\V$ can be represented by
a Clifford+$V$ circuit, we proceed as follows. First, we show that
every unit vector of the form (\ref{eq:vector}) can be reduced to
$e_1=\left[ \begin{smallmatrix} 1 \\ 0\end{smallmatrix}\right]$ by
applying a sequence of carefully chosen Clifford+$V$ gates. Then, we
show how applying this method to the first column of a unitary matrix
$U$ of the form (\ref{eq:matrix}) yields a Clifford+$V$ circuit for
$U$.

\begin{lemma}
  \label{lem-l-invariant}
  If $u$ is a unit vector of the form (\ref{eq:vector}) with least
  $\sqrt 5$-denominator exponent $\ell$ and $W$ is a Clifford circuit,
  then $Wu$ has least $\sqrt 5$-denominator exponent $\ell$.
\end{lemma}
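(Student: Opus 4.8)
The plan is to show that Clifford generators preserve the least $\sqrt{5}$-denominator exponent of a vector of the form \eqref{eq:vector}, and then conclude for an arbitrary Clifford circuit by composition. Since every Clifford circuit is a product of the generators $\omega$, $H$, and $S$ (together with their inverses), it suffices to treat those three cases and observe that the argument is symmetric in the sense that it also applies to inverses.

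First I would set up notation: write $u = \frac{1}{\rt{k}\rf{\ell}}\left[\begin{smallmatrix}\alpha\\\beta\end{smallmatrix}\right]$ with $\alpha,\beta\in\Z[i]$ and $\ell$ the \emph{least} $\sqrt{5}$-denominator exponent, so that (by minimality) $\sqrt{5}$ does not divide both $\alpha$ and $\beta$ in the appropriate sense — more precisely, it is not the case that $5 \divides \alpha$ and $5 \divides \beta$ simultaneously (otherwise one could cancel a factor of $5 = \sqrt{5}^2$ and reduce $\ell$ by $2$, using Remark~\ref{rem:denomexp-parity}). Now apply each generator. For $\omega = e^{i\pi/4}I$: note $\omega = \frac{1}{\sqrt2}(1+i)$, so $\omega u = \frac{1}{\rt{k+1}\rf{\ell}}\left[\begin{smallmatrix}(1+i)\alpha\\(1+i)\beta\end{smallmatrix}\right]$; this only changes the $\sqrt2$-exponent and multiplies both entries by the \emph{unit} $1+i$ up to a power of $\sqrt2$, hence does not affect divisibility by $5$ and leaves $\ell$ least. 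For $S = \mathrm{diag}(1,-i)$: we get $Su = \frac{1}{\rt{k}\rf{\ell}}\left[\begin{smallmatrix}\alpha\\-i\beta\end{smallmatrix}\right]$; since $-i$ is a unit in $\Z[i]$, divisibility of the entries by $5$ is unchanged, so $\ell$ remains least. For $H = \frac{1}{\sqrt2}\left[\begin{smallmatrix}1&1\\1&-1\end{smallmatrix}\right]$: we get $Hu = \frac{1}{\rt{k+1}\rf{\ell}}\left[\begin{smallmatrix}\alpha+\beta\\\alpha-\beta\end{smallmatrix}\right]$. Here I must check that $\ell$ is still the least $\sqrt5$-exponent of $Hu$, i.e. that $5$ does not divide both $\alpha+\beta$ and $\alpha-\beta$. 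If it did, then $5\divides (\alpha+\beta)+(\alpha-\beta) = 2\alpha$ and $5\divides(\alpha+\beta)-(\alpha-\beta)=2\beta$; since $\gcd(2,5)=1$ in $\Z[i]$ (indeed $5$ and $2$ generate comaximal ideals — $2 = -i(1+i)^2$ and $5 = (2+i)(2-i)$ are built from distinct Gaussian primes), we would get $5\divides\alpha$ and $5\divides\beta$, contradicting minimality of $\ell$. Hence $Hu$ has least $\sqrt5$-exponent $\ell$ as well.

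For the general statement, I would induct on the length of the Clifford circuit $W$. The base case is $W = I$, trivial. For the inductive step, write $W = W' g$ where $g$ is a single generator (or its inverse); by the inductive hypothesis applied to $gu$ — whose least $\sqrt5$-exponent equals $\ell$ by the single-generator analysis above — we get that $W'(gu) = Wu$ has least $\sqrt5$-exponent $\ell$. One subtlety is that the single-generator analysis must also cover $g^{-1}$ for $g \in \{H, S, \omega\}$; but $H^{-1} = H$, $\omega^{-1} = \omega^7$ (still just multiplication by a unit times a power of $\sqrt2$), and $S^{-1} = S^3 = \mathrm{diag}(1,i)$ (multiplication of the second entry by the unit $i$), so each inverse generator is handled by exactly the same reasoning. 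I expect the main obstacle — really the only nontrivial point — to be the argument for $H$: verifying that passing from $(\alpha,\beta)$ to $(\alpha+\beta,\alpha-\beta)$ cannot introduce a spurious common factor of $5$, which relies on the coprimality of $2$ and $5$ in $\Z[i]$. Everything else is bookkeeping about units and the parity remark.
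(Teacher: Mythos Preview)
Your argument is correct and follows essentially the same route as the paper: verify that each Clifford generator preserves the least $\sqrt 5$-denominator exponent (the nontrivial case being $H$, handled via the sum/difference trick and coprimality of $2$ and $5$), then induct on circuit length. One small slip: $1+i$ is not a unit in $\Z[i]$ but a Gaussian prime of norm $2$; the reason multiplication by $1+i$ preserves divisibility by $5$ is the same coprimality argument you used for $H$, not that $1+i$ is invertible.
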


\begin{proof}
  It suffices to show that the generators of the Clifford group
  preserve the least $\sqrt 5$-denominator exponent of $u$. The
  general result then follows by induction. To this end, write $u$ as
  in (\ref{eq:vector}), with $\alpha=a+ib$ and $\beta=c+id$:
  \[
  u=\frac{1}{\rt{k}\rf{\ell}}\begin{bmatrix}
    a+ib \\
    c+id
  \end{bmatrix}.
  \]
  Now apply $H$, $\omega$, and $S$ to $u$:
  \[
  Hu=\frac{1}{\rt{k+1}\rf{\ell}}\begin{bmatrix}
    (a+c)+i(b+d) \\
    (a-c)+i(b-d)
  \end{bmatrix}, \quad \omega u =
  \frac{1}{\rt{k+1}\rf{\ell}}\begin{bmatrix}
    (a-b)+i(a+b) \\
    (c-d)+i(c+d)
  \end{bmatrix},
  \]
  \[
  Su = \frac{1}{\rt{k}\rf{\ell}}\begin{bmatrix}
    a+ib \\
    -d+ic
  \end{bmatrix}.
  \]
  By minimality of $\ell$, one of $a,b,c,d$ is not divisible by 5. The
  least $\sqrt 5$-denominator of $Su$ is therefore $\ell$. Moreover,
  for any two integers $x$ and $y$, $x+y\equiv x-y \equiv 0 ~(\mymod
  5)$ implies $x\equiv y \equiv 0 ~(\mymod 5)$. Thus the least $\sqrt
  5$-denominator exponent of $Hu$ and $\omega u$ is also $\ell$.
\end{proof}

\begin{lemma}
  \label{lem-column-l}
  If $u$ is a unit vector of the form (\ref{eq:vector}) with least
  denominator exponent $(k,\ell)$, then there exists a Clifford
  circuit $W$ such that $Wu$ has least denominator exponent
  $(0,\ell)$.
\end{lemma}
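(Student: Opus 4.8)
The plan is to show that one can reduce the $\sqrt 2$-denominator exponent of a unit vector from $(k,\ell)$ to $(0,\ell)$ by applying Clifford gates only — in particular without changing the $\sqrt 5$-exponent, which is guaranteed by Lemma~\ref{lem-l-invariant}. So the whole argument is a descent on $k$, and since $k$ ranges only over $\{0,1,2\}$ (Definition~\ref{def:subgroup-RV}/\ref{def:5-denomexp}), it suffices to treat the step $k\mapsto k-1$ once and iterate at most twice.

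First I would set up the descent. Write $u$ as in \eqref{eq:vector} with $\alpha=a+ib$, $\beta=c+id$, $a,b,c,d\in\Z$, and assume $k\geq 1$ is the least $\sqrt 2$-exponent (so not all of $a,b,c,d$ are such that the whole vector can be rescaled — i.e., $\alpha,\beta$ are not both divisible by $\sqrt 2$ inside $\Z[i]$, equivalently not both $a\equiv b$ and $c\equiv d \pmod 2$... the precise parity bookkeeping has to be worked out, but the point is minimality of $k$ forces a parity constraint on $(a,b,c,d)$). The unit-vector condition gives $a^2+b^2+c^2+d^2 = \rt{2k}\rf{2\ell}$, and since $k\geq 1$ the right side is even, so $a^2+b^2+c^2+d^2\equiv 0\pmod 2$, which forces $a+b+c+d$ to be even, hence the number of odd entries among $a,b,c,d$ is $0$, $2$, or $4$. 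Case $0$ is excluded by minimality of $k$. In the remaining cases the idea is exactly as in Lemma~\ref{lem-l-invariant}: applying $H$ turns $u$ into $\frac{1}{\rt{k+1}\rf{\ell}}$ times a vector with entries $(a+c)+i(b+d)$ and $(a-c)+i(b-d)$; when the ``right'' number of entries have the ``right'' parities, all four of $a\pm c$, $b\pm d$ become even, so $Hu$ actually has $\sqrt 2$-exponent $\leq k-1$. If instead the parities are arranged so that $H$ alone does not work, one first conjugates by one of $\omega$ or $S$ (both Clifford) to permute/recombine the real and imaginary parts into a configuration where $H$ does drop the exponent — this is the standard trick for exact synthesis over $\Z[1/\sqrt2,i]$, and the parity analysis of the finitely many residue classes of $(a,b,c,d)$ modulo $2$ makes it go through. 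Each such step leaves $\ell$ untouched by Lemma~\ref{lem-l-invariant}.

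Iterating the step, we obtain a Clifford circuit $W$ (a product of the finitely many gates used at each stage) with $Wu$ of least denominator exponent $(0,\ell)$; the $\sqrt 5$-exponent remains $\ell$ throughout by Lemma~\ref{lem-l-invariant}, and it cannot drop below the least value $\ell$ for $u$ because $Wu$ and $u$ differ by the invertible $W$, so the least $\sqrt 5$-exponent is exactly $\ell$. This is the desired conclusion.

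\textbf{Main obstacle.} The delicate part is the parity case analysis: establishing that minimality of $k$ really does rule out the ``$0$ odd entries'' case (one must be careful that ``divisible by $\sqrt 2$ in $\Z[i]$'' corresponds to the right congruence on the integer coordinates, using $\sqrt2 = \omega-\omega^3$ and $(1+i)^2 = 2i$), and then checking that in each surviving parity pattern there is a fixed short word in $H,S,\omega$ whose application strictly decreases the $\sqrt 2$-exponent. Everything else — the bookkeeping that $\ell$ is preserved, the termination after $\leq 2$ steps, and the fact that $W$ is a genuine Clifford circuit — is routine given Lemma~\ref{lem-l-invariant} and the $0\leq k\leq 2$ bound.
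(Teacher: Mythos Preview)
Your proposal is correct and follows essentially the same approach as the paper: invoke Lemma~\ref{lem-l-invariant} to ignore~$\ell$, use the norm identity $a^2+b^2+c^2+d^2=2^k\cdot 5^\ell$ to get parity constraints on $(a,b,c,d)$, and then apply a short Clifford word in $H,\omega,S$ (and Pauli gates) to kill the $\sqrt2$-exponent. Two small remarks. First, your exclusion of the ``all even'' case is not always by minimality: for $k=1$ it is ruled out arithmetically (the norm $2\cdot 5^\ell$ is not divisible by~$4$), while for $k=2$ it is indeed minimality that forces all entries to be odd. Second, the paper does not set up a generic $k\mapsto k-1$ descent to iterate; instead it treats $k=1$ and $k=2$ separately and in each case reduces directly to $k=0$ in a single shot. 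In particular for $k=2$ it normalizes (via Pauli+$S$) so that $a\equiv b\equiv c\equiv d\equiv 1\pmod 4$ and then observes that $H\omega u$ already has $\sqrt2$-exponent~$0$. Your iterative version works too, but the paper's direct mod~$4$ argument for $k=2$ is slightly cleaner and avoids re-examining the intermediate vector.
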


\begin{proof}
  By Lemma~\ref{lem-l-invariant}, we need not worry about $\ell$ and
  only have to focus on reducing $k$. Write $u$ as in
  (\ref{eq:vector}), with $0\leq k \leq 2$, $\alpha=a+ib$, and
  $\beta=c+id$. Since $u$ has unit norm, we have $a^2+b^2+c^2+d^2 =
  2^k\cdot 5^\ell$. We prove the lemma by case distinction on $k$. If
  $k=0$, there is nothing to prove. The remaining cases are treated as
  follows.
  \begin{itemize}
  \item $k=1$. In this case $a^2 + b^2 +c^2 +d^2 = 2\cdot 5^\ell
    \equiv 0 ~(\mymod 2)$. Therefore only an even number amongst
    $a,b,c,d$ can be odd. Using a Pauli+$S$ operator, we can without
    loss of generality assume that $a\equiv c ~(\mymod 2)$ and
    $b\equiv d ~(\mymod 2)$ or that $a\equiv b ~(\mymod 2)$ and
    $c\equiv d ~(\mymod 2)$. It then follows that either $Hu$ or
    $\omega u$ has denominator exponent $(0,\ell)$ since
    \[
    Hu = \frac{1}{2\rf{\ell}}
    \begin{bmatrix}
      (a+c) + i (b+d) \\
      (a-c) + i(b-d)
    \end{bmatrix} \quad \mbox{ and } \quad \omega u =
    \frac{1}{2\rf{\ell}}
    \begin{bmatrix}
      (a-b) + i (a+b) \\
      (c-d) + i(c+d)
    \end{bmatrix}.
    \]
  \item $k=2$. In this case $a^2 + b^2 +c^2 +d^2 = 4\cdot 5^\ell
    \equiv 0 ~(\mymod 4)$. This implies that $a,b,c$ and $d$ must have
    the same parity and thus, by minimality of $k$, must all be
    odd. Using a Pauli+$S$ operator, we can without loss of generality
    assume that $a\equiv b \equiv c \equiv d \equiv 1 ~(\mymod 4)$. It
    then follows that $H\omega u$ has denominator exponent $(0,\ell)$
    since
    \[
    H\omega u = \frac{1}{4\rf{\ell}}
    \begin{bmatrix}
      (a-b+c-d) + i (a+b+c+d) \\
      (a-b-c+d) + i(a+b-c-d)
    \end{bmatrix}.
    \]\qedhere
  \end{itemize}
\end{proof}

\begin{remark}
  Let $V$ be one of the $V$\!-gates, $u$ be a vector of the form
  (\ref{eq:vector}), and $\ell$ and $\ell'$ be the least $\sqrt
  5$-denominator exponents of $u$ and $Vu$ respectively. Then
  $\ell'\leq \ell+1$. Moreover, if it were the case that
  $\ell'<\ell-1$, then the least $\sqrt 5$-denominator exponent of
  $V\da Vu=u$ would be strictly less $\ell$ which is absurd. Thus
  $\ell-1 \leq \ell' \leq \ell+1$.
\end{remark}

\begin{lemma}
  \label{lem-column-k}
  If $u$ is a unit vector of the form (\ref{eq:vector}) with least
  denominator exponent $(0,\ell)$, then there exists a Pauli+$V$
  circuit $W$ of $V$\!-count $\ell$ such that $Wu = e_1$, the first
  standard basis vector.
\end{lemma}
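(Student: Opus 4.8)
The plan is to prove the lemma by induction on $\ell$, the least $\sqrt{5}$-denominator exponent of $u$. For the base case $\ell = 0$, the numerators of $u$ are Gaussian integers $\alpha,\beta$ with $|\alpha|^2 + |\beta|^2 = 1$, so exactly one of $\alpha,\beta$ is a unit $\epsilon \in \{1,i,-1,-i\}$ of $\Zi$ and the other is $0$. A product of Pauli gates, which has $V$\!-count $0$, then suffices: the gate $X$ swaps the two entries when the nonzero entry of $u$ sits in the second position, and the remaining global phase $\epsilon^{-1} \in \{1,i,-1,-i\}$ is itself a product of $X,Y,Z$ (for instance $(XZ)^2 = -I$ and $XYZ = iI$). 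For the inductive step, assume $\ell \geq 1$; then it suffices to produce a single $V$\!-gate $G \in \{V_X, V_X\da, V_Y, V_Y\da, V_Z, V_Z\da\}$ such that $Gu$ is again of the form $(\ref{eq:vector})$ with least denominator exponent $(0,\ell-1)$. Indeed, the induction hypothesis then yields a Pauli+$V$ circuit $W'$ of $V$\!-count $\ell-1$ with $W'(Gu) = e_1$, so $W = W'G$ has $V$\!-count $\ell$ and satisfies $Wu = e_1$. Here $Gu$ automatically has $\sqrt{2}$-exponent $0$, since the $V$\!-gates contribute only a factor $1/\sqrt{5}$ and Gaussian-integer numerators, and by the remark preceding the lemma its $\sqrt{5}$-exponent is forced to be exactly $\ell-1$ once it drops below $\ell$.

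Thus everything reduces to the following statement: if $u$ is a unit vector of the form $(\ref{eq:vector})$ with $k=0$, numerators $\alpha,\beta \in \Zi$, $\ell \geq 1$, and $5$ not a common divisor of $\alpha$ and $\beta$, then some $V$\!-gate $G$ makes $5$ divide both entries of the numerator of $\rf{\ell+1}Gu$. I would prove this by passing to residues. Factor $5 = \pi\bar\pi$ in $\Zi$ with $\pi = 2+i$; then $\Zi/(5) \cong \mathbb{F}_5 \times \mathbb{F}_5$ by the Chinese remainder theorem, with $i \equiv 3 \pmod{\pi}$ and $i \equiv 2 \pmod{\bar\pi}$. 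Set $v_1 = (\alpha \bmod \pi,\ \beta \bmod \pi)$ and $v_2 = (\alpha \bmod \bar\pi,\ \beta \bmod \bar\pi)$ in $\mathbb{F}_5^2$. Since complex conjugation swaps the two primes above $5$, modulo $\pi$ one has $\bar\alpha \equiv \alpha \bmod \bar\pi$ as elements of $\mathbb{F}_5$, and symmetrically; hence the unit-norm condition $|\alpha|^2 + |\beta|^2 = 5^\ell \equiv 0$ becomes the single relation $\langle v_1, v_2 \rangle = 0$ for the standard symmetric bilinear form on $\mathbb{F}_5^2$, while the minimality hypothesis on $\ell$ says $(v_1,v_2) \neq (0,0)$.

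A direct computation of the action of each $V$\!-gate on $u$ then shows that ``$G$ lowers the $\sqrt{5}$-exponent of $u$'' is equivalent to ``$v_1 \in L_1$ and $v_2 \in L_1^{\perp}$'' for a line $L_1 \subseteq \mathbb{F}_5^2$ determined by $G$. (For $V_X^{\pm1}$ and $V_Y^{\pm1}$ only one of the two entrywise divisibilities needs checking, the other being automatic because the off-diagonal entries have norm a power of $5$; for $V_Z^{\pm1}$ the two conditions $\pi \mid \alpha$, $\bar\pi \mid \beta$, or their swaps, are both needed.) Carrying this out, the six $V$\!-gates yield exactly the six pairs $(L_1, L_1^{\perp})$ as $L_1$ ranges over the six lines of $\mathbb{F}_5^2$: the two isotropic lines $\{y = 2x\}$ and $\{y = 3x\}$ arise from $V_Y$ and $V_Y\da$, and the four non-isotropic lines, which pair up as $\{x = 0\} \leftrightarrow \{y = 0\}$ and $\{y = x\} \leftrightarrow \{y = -x\}$, arise from $V_Z^{\pm1}$ and $V_X^{\pm1}$. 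To finish, one observes that a suitable $L_1$ always exists: if $v_1 \neq 0$, take $L_1 = \langle v_1 \rangle$, so $v_1 \in L_1$ and $v_2 \in v_1^{\perp} = L_1^{\perp}$ by orthogonality; if $v_1 = 0$, then $v_2 \neq 0$ and one takes $L_1 = \langle v_2 \rangle^{\perp}$, so $v_1 \in L_1$ trivially and $v_2 \in L_1^{\perp} = \langle v_2 \rangle$. In either case the matching $V$\!-gate lowers the exponent, completing the induction.

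The main obstacle is the middle computation: correctly identifying, for each of the six $V$\!-gates, the exact divisibility condition on $\alpha,\beta$ under which the $\sqrt{5}$-exponent of $Gu$ decreases, and then verifying that these six conditions really do match the six pairs $(L,L^{\perp})$. The delicate point throughout is keeping the two non-associate primes $\pi$ and $\bar\pi$ above $5$ apart while using the fact that conjugation interchanges them — this is precisely what converts the norm equation into an orthogonality relation between the $\pi$- and $\bar\pi$-residues, which in turn makes the $\mathbb{F}_5$ linear-algebra argument of the preceding paragraph go through. Once those six conditions are in hand, the rest is routine.
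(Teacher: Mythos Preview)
Your proof is correct and takes a genuinely different, more conceptual route than the paper's.

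The paper proceeds by brute force on the integer residues $(a,b,c,d)\bmod 5$, where $\alpha=a+bi$ and $\beta=c+di$. It observes that the only possible residue patterns up to reordering are $(\pm2,\pm1,0,0)$ and $(\pm2,\pm2,\pm1,\pm1)$, then uses Pauli operators to normalize (e.g., force $a\equiv 2$), and finally provides two explicit tables listing, for each remaining residue pattern, which $V$-gate or its adjoint lowers the exponent. Your approach replaces this case analysis by the splitting $5=\pi\bar\pi$ in $\Zi$ and the resulting identification $\Zi/(5)\cong\mathbb{F}_5\times\mathbb{F}_5$. The key insight---that the norm condition $|\alpha|^2+|\beta|^2\equiv 0$ becomes the orthogonality $\langle v_1,v_2\rangle=0$ between the $\pi$- and $\bar\pi$-residues, and that the six $V$-gates realize precisely the six pairs $(L,L^\perp)$ of lines in $\mathbb{F}_5^2$---lets you conclude by a two-line linear-algebra argument rather than two tables. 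A pleasant side effect is that you never need the Pauli normalization in the inductive step: a single $V$-gate already suffices at each stage, with Paulis appearing only in the base case.

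One small slip: your parenthetical justification ``the off-diagonal entries have norm a power of $5$'' is not correct as stated (for $\sqrt{5}\,V_X$ and $\sqrt{5}\,V_Y$ the off-diagonal entries $2i$ and $\pm 2$ have norm $4$). The correct reason that one entrywise divisibility forces the other is that $\det(\sqrt{5}\,V)=5$, so the integer matrix $\sqrt{5}\,V$ has rank~$1$ modulo each of $\pi$ and $\bar\pi$; equivalently, $2i(\alpha+2i\beta)=(2i\alpha+\beta)-5\beta$, etc. This does not affect your argument, since the bijection between $V$-gates and line pairs can be checked directly as you indicate.
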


\begin{proof}
  Write $u$ as in (\ref{eq:vector}) with $k=0$, $\alpha=a+ib$, and
  $\beta=c+id$. Since $u$ has unit norm, we have $a^2+b^2+c^2+d^2 =
  2^0\cdot 5^\ell=5^\ell$. We prove the lemma by induction on $\ell$.
  \begin{itemize}
  \item $\ell=0$. In this case $a^2+b^2+c^2+d^2 = 1$. It follows that
    exactly one of $a,b,c,d$ is $\pm 1$ while all the others are
    0. Then $u$ can be reduced to $e_1$ by acting on it using a Pauli
    operator.
  \item $\ell>0$. In this case $a^2+b^2+c^2+d^2\equiv 0~(\mymod 5)$.
    We will show that there exists a Pauli+$V$ operator $U$ of
    $V$\!-count 1 such that the least denominator exponent of $Uu$ is
    $\ell-1$. It then follows by the induction hypothesis that there
    exists $U'$ of $V$\!-count $\ell-1$ such that $U'Uu=e_1$, which
    then completes the proof.
    
    Consider the residues modulo 5 of $a,b,c,$ and $d$. Since $0,1,$
    and $4$ are the only squares modulo 5, then, up to a reordering of
    the tuple $(a,b,c,d)$, we must have:
    \[
    (a,b,c,d) \equiv \left\{ \begin{array}{l}
        (0,0,0,0) \\
        (\pm2,\pm1,0,0) \\
        (\pm2, \pm2, \pm1, \pm1).
      \end{array} \right. 
    \]
    However, by minimality of $\ell$, we know that $a\equiv b\equiv
    c\equiv d \equiv 0$ is impossible, so the other two cases are the
    only possible ones. We treat them in turn.
    
    First, assume that one of $a,b,c,d$ is congruent to $\pm 2$, one
    is congruent to $\pm 1$, and the remaining two are congruent to
    $0$. By acting on $u$ with a Pauli operator, we can moreover
    assume without loss of generality that $a\equiv 2$. Now if
    $b\equiv 1$, consider $V_Zu$:
    \[
    V_Zu = \frac{1}{\sqrt{5}^{k+1}}
    \begin{bmatrix}
      (a-2b) + i (2a+b) \\
      (c+2d) + i (d-2c)
    \end{bmatrix}.
    \]
    Since $a\equiv 2$, $b\equiv 1$, and $c\equiv d\equiv 0$, we get
    $(a-2b)\equiv (2a+b)\equiv (c+2d)\equiv (d-2c)\equiv 0~(\mymod
    5)$. The least denominator exponent of $V_Zu$ is therefore
    $\ell-1$. If on the other hand $b\equiv -1$ then
    \[
    {V_Z}\da u = \frac{1}{\sqrt{5}^{k+1}}
    \begin{bmatrix}
      (a+2b) + i (b-2a) \\
      (c-2d) + i (d+2c)
    \end{bmatrix}
    \]
    and reasoning analogously shows that the least denominator
    exponent of ${V_Z}\da u$ is $k-1$. A similar argument can be made
    in the remaining cases, i.e., when $c\equiv \pm 1$ or $d\equiv
    \pm1$. For brevity, we list the desired operators in the table
    below. The left column describes the residues of $a,b,c$, and $d$
    modulo 5 and the right column gives the operator $U$ such that
    $Uu$ has least denominator exponent $\ell-1$.
    \begin{center}
      \begin{tabular}{r|l}
        $(a,b,c,d)$   & $U$         \\
        \hline
        \hline \rule{0pt}{3ex} 
        $(2,1,0,0) $  & $V_Z$       \\
        $(2,0,1,0)$   & ${V_Y}\da$  \\
        $(2,0,0,1)$   & $V_X$       \\
        $(2,-1,0,0)$  & $V_Z\da$    \\  
        $(2,0,-1,0)$  & $V_Y$       \\
        $(2,0,0,-1)$  & ${V_X}\da$           
      \end{tabular} 
    \end{center}
    
    Now assume that two of $a,b,c,d$ are congruent to $\pm2$ while the
    remaining two are congruent to $\pm1$. We can use Pauli operators
    to guarantee that $a\equiv 2$ and $c\geq 0$. As above, we list the
    desired operators in a table for conciseness. It can be checked
    that in each case the given operator is such that the least
    denominator exponent of $Uu$ is $\ell-1$.
    \begin{center}
      \begin{tabular}{r|l}
        $(a,b,c,d)$   & $U$         \\
        \hline
        \hline \rule{0pt}{3ex}
        $(2,2,1,1)$   & ${V_Y}\da$  \\    
        $(2,1,2,1)$   & $V_X$       \\
        $(2,1,1,2)$   & $V_Z$       \\
        $(2,1,2,-1)$  & $V_Z$       \\
        $(2,-1,2,1)$  & ${V_Z}\da$  \\
        $(2,2,1,-1)$  & ${V_X}\da$  \\
        $(2,-2,1,1)$  & $V_X$       \\
        $(2,1,1,-2)$  & ${V_Y}\da$  \\
        $(2,-1,1,2)$  & ${V_Y}\da$  \\
        $(2,-1,1,-2)$ & ${V_Z}\da$  \\
        $(2,-1,2,-1)$ & ${V_X}\da$  \\
        $(2,-2,1,-1)$ & ${V_Y}\da$           
      \end{tabular} 
    \end{center}\qedhere
  \end{itemize}
\end{proof}

We can now solve Problem~\ref{pb:exact-synth-CliffordV}.

\begin{proposition}
  \label{prop-exact-cliffordV}
  A unitary operator $U\in U(2)$ is exactly representable by a
  Clifford+$V$ circuit if and only if $U\in\vset$. Moreover, there
  exists an efficient algorithm that computes a Clifford+$V$ circuit
  for $U$ with $V$\!-count equal to the least $\sqrt 5$-denominator
  exponent of $U$, which is minimal.
\end{proposition}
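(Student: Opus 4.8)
The plan is to combine the earlier lemmas into both directions of the equivalence and then extract the algorithm. The forward direction — if $U$ is a Clifford+$V$ circuit then $U\in\V$ — is exactly Corollary~\ref{cor:CliffordV-in-vset}, so nothing remains to be done there. For the converse, I would start from an arbitrary $U\in\V$ written as in \eqref{eq:matrix} with denominator exponent $(k,\ell)$, and focus on its first column $u$, which is a unit vector of the form \eqref{eq:vector}. First I would apply Lemma~\ref{lem-column-l} to obtain a Clifford circuit $W_1$ with $W_1 u$ of least denominator exponent $(0,\ell')$ for some $\ell'$; since $W_1$ is Clifford, Lemma~\ref{lem-l-invariant} guarantees that $\ell'$ is the least $\sqrt5$-denominator exponent of $u$ itself. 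Then I would apply Lemma~\ref{lem-column-k} to get a Pauli+$V$ circuit $W_2$ of $V$-count $\ell'$ with $W_2 W_1 u = e_1$. Hence $W_2 W_1 U$ is a unitary in $\V$ whose first column is $e_1$; because it is unitary, its second column must be $\lambda e_2$ for some unit scalar $\lambda$, and since the determinant of an element of $\V$ is a power of $i$, $\lambda$ is itself a power of $i$, so $W_2 W_1 U = \mathrm{diag}(1,i^m)$ for some $m$. This diagonal matrix is $S^{m}$ (a Pauli+$S$, hence Clifford, operator), so $U = (W_2 W_1)\inv S^{m}$ is a Clifford+$V$ circuit, completing the reverse implication and simultaneously proving that $\V$ is closed under the group operations.

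Next I would address the $V$-count claim. The circuit just constructed, $U = W_1\inv W_2\inv S^{m}$, has $V$-count $\ell'$, the least $\sqrt5$-denominator exponent of the first column $u$ of $U$. I would then argue this equals the least $\sqrt5$-denominator exponent of the whole matrix $U$: one inequality is immediate since the column exponent cannot exceed the matrix exponent, and for the other I would invoke the fact (used implicitly in Lemma~\ref{lem-column-k}'s remark) that each $V$-gate changes a least $\sqrt5$-denominator exponent by at most $1$, together with Remark~\ref{rem:denomexp-parity} fixing the parity, so a circuit of $V$-count $t$ cannot produce a matrix (or column) whose least $\sqrt5$-exponent exceeds $t$ — this gives a lower bound $t \geq \ell'$ for any Clifford+$V$ circuit equal to $U$, establishing both minimality of our circuit's $V$-count and the identification $\ell' = $ least $\sqrt5$-exponent of $U$.

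Finally, for the efficiency of the algorithm: I would trace through the constructive proof. Computing $W_1$ uses Lemma~\ref{lem-column-l}, a single case distinction on $k\in\{0,1,2\}$ requiring a constant number of arithmetic operations; computing $W_2$ uses Lemma~\ref{lem-column-k}, whose proof is an induction on $\ell'$ in which each step reduces $\ell'$ by one via a constant-time table lookup on residues modulo $5$, so $O(\ell')$ operations total; and the final $S^{m}$ factor costs constant time. Thus the total running time is $O(\ell')$ arithmetic operations, which is efficient (linear in the output $V$-count, hence in the size of the produced circuit).

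The main obstacle I anticipate is not any single lemma — those are all available — but the lower-bound half of the minimality argument, i.e.\ verifying cleanly that \emph{every} Clifford+$V$ circuit representing $U$ has $V$-count at least the least $\sqrt5$-denominator exponent of $U$. This requires the observation that each application of a $V$-gate or $V^\dagger$-gate shifts the least $\sqrt5$-exponent of the running vector/matrix by at most $\pm1$ (with Clifford gates leaving it fixed by Lemma~\ref{lem-l-invariant}), so that after $t$ uses of $V$-gates starting from the identity one cannot reach $\sqrt5$-exponent greater than $t$. One must be slightly careful that the relevant quantity is the exponent of the full matrix, not just one column, but the same shift bound applies entrywise and hence to the matrix, so the argument goes through.
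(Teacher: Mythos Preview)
Your proposal is correct and follows essentially the same route as the paper: forward direction via Corollary~\ref{cor:CliffordV-in-vset}, reverse direction by applying Lemmas~\ref{lem-column-l} and \ref{lem-column-k} to the first column, then fixing up the resulting diagonal matrix with a Clifford operator, and minimality via the observation that a circuit of $V$-count $t$ yields a matrix of least $\sqrt5$-denominator exponent at most $t$. The paper is more terse on the column-versus-matrix exponent issue (it silently uses that, by unitarity and the determinant condition, the second column is determined by the first up to a power of $i$, so the two exponents coincide), whereas you recover the equality indirectly from the $V$-count bound; both arguments are valid.
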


\begin{proof}
  The left-to-right implication is given by
  Corollary~\ref{cor:CliffordV-in-vset}. For the right-to-left
  implication, it suffices to show that there exists a Clifford+$V$
  circuit $W$ of $V$\!-count $\ell$ such that $WU = I$, since we then
  have $U=W\da$. To construct $W$, apply Lemma~\ref{lem-column-l} and
  Lemma~\ref{lem-column-k} to the first column $u_1$ of $U$. This
  yields a circuit $W'$ such that the first column of $W'U$ is
  $e_1$. Since $W'U$ is unitary, it follows that its second column
  $u_2$ is a unit vector orthogonal to $e_1$. Therefore $u_2 = \lambda
  e_2$ where $\lambda$ is a unit of the Gaussian integers. Since the
  determinant of $W'$ is $i^m$ for some integer $m$, the determinant
  of $W'U$ is $i^{n+m}$, so that $\lambda=i^{n+m}$. Thus one of the
  following equalities must hold
  \begin{center}
    $W'U =I$, $ZW'U=I$, $SW'U=I$ or $ZSW'U =I$.
  \end{center}
  To prove the second claim, suppose that the least $\sqrt
  5$-denominator exponent of $U$ is $\ell$. Then $W$ can be
  efficiently computed because the algorithm described in the proofs
  of Lemma~\ref{lem-column-l} and Lemma~\ref{lem-column-k} requires
  $O(\ell)$ arithmetic operations.  Moreover, $W$ has $V$\!-count
  $\ell$ by Lemma~\ref{lem-column-k}, which is minimal since any
  Clifford+$V$ circuit of $V$\!-count up to $\ell-1$ has least $\sqrt
  5$-denominator exponent at most $\ell-1$.
\end{proof}

\begin{remark}
  \label{rem:Pauli-V-exact}
  By restricting $k$ to be equal to 0 in (\ref{eq:matrix}) and the
  determinant of $U$ to be $\pm1$ we get a solution to the problem of
  exact synthesis for Pauli+$V$ operators.
\end{remark}

% ====================================================================
\section{Approximate synthesis of \texorpdfstring{$z$}{z}-rotations}
\label{sec:approx-synth-z}

We now turn to the approximate synthesis of \emph{$z$-rotations} over
the Clifford+$V$ gate set. A $z$-rotation is a unitary matrix of the
form
\begin{equation}
  \label{eq:z-rot-def}
  \Rz(\theta) = \begin{bmatrix}
    e^{-i\theta/2} & 0 \\
    0 & e^{i\theta/2}
  \end{bmatrix}
\end{equation}
for some real number $\theta$. Matrices of the form
(\ref{eq:z-rot-def}) are called $z$-rotations because they act as
rotations of the Bloch sphere along the $z$-axis.

\begin{problem}
  \label{pb:approx-synth-CliffordV}
  Given an angle $\theta$ and a precision $\epsilon >0$, construct a
  Clifford+$V$ circuit $U$ whose $V$\!-count is as small as possible
  and such that $\norm{U-\Rz(\theta)}\leq \epsilon$.
\end{problem}

% --------------------------------------------------------------------
\subsection{A reduction of the problem}
\label{ssect:red-V}

\begin{lemma}
  \label{lem:det-sol}
  Let $c_V=|1-e^{i\pi/4}|$. If $\epsilon<c_V$, then all solutions to
  Problem~\ref{pb:approx-synth-CliffordV} have determinant 1. If
  $\epsilon\geq c_V$, then there exists a solution of the form
  $\omega^n$ for some $n\in\N$.
\end{lemma}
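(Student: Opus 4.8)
The plan is to reduce both halves to a single metric fact about $2\times 2$ unitaries, using that the determinant of any Clifford+$V$ operator is a power of $i$ (this is part of the definition of $\V$, so it follows from Corollary~\ref{cor:CliffordV-in-vset}). The metric fact I would isolate is: if $U$ is a $2\times2$ unitary with $\det U\in\{i,-1,-i\}$ and $V\in\suset(2)$, then $\norm{U-V}\ge c_V$.

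To prove this metric fact, set $W=V\da U$, so that $U=VW$, $\det W=\det U$ (since $\det V=1$), and $\norm{U-V}=\norm{V(W-I)}=\norm{W-I}$ because $V$ is unitary; hence it suffices to bound $\norm{W-I}$ below by $c_V$ whenever $\det W\in\{i,-1,-i\}$. Diagonalising the unitary $W$ with eigenvalues $e^{i\alpha},e^{i\beta}$ (choosing $\alpha,\beta\in(-\pi,\pi]$), the operator $W-I$ is normal, so $\norm{W-I}=\max(|e^{i\alpha}-1|,|e^{i\beta}-1|)=2\max(|\sin(\alpha/2)|,|\sin(\beta/2)|)$, using $|1-e^{i\phi}|=2|\sin(\phi/2)|$. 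If this quantity were $<2\sin(\pi/8)=|1-e^{i\pi/4}|=c_V$, then $|\sin(\alpha/2)|,|\sin(\beta/2)|<\sin(\pi/8)$, whence $|\alpha|,|\beta|<\pi/4$ (as $|\sin|$ is increasing on $[0,\pi/2]$) and so $|\alpha+\beta|<\pi/2$. But $e^{i(\alpha+\beta)}=\det W\in\{i,-1,-i\}$ forces $\alpha+\beta\equiv\pm\pi/2$ or $\pi\pmod{2\pi}$, and every such value in the range $\alpha+\beta\in(-2\pi,2\pi]$ has modulus at least $\pi/2$ --- a contradiction. This proves the metric fact.

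The first claim is then immediate: if $U$ is a Clifford+$V$ circuit with $\norm{U-\Rz(\theta)}\le\epsilon<c_V$, then $\Rz(\theta)\in\suset(2)$ and, by Corollary~\ref{cor:CliffordV-in-vset}, $\det U$ is a power of $i$; the metric fact forbids $\det U\in\{i,-1,-i\}$, so $\det U=1$. For the second claim I would invoke the reduction of the rotation angle carried out in this subsection: it suffices to treat $\Rz(\theta)$ with $\theta$ in a fixed interval about $0$ (contained in $[-\pi/2,\pi/2]$), obtained by composing with Clifford operators. For such $\theta$ one has $\norm{\omega^0-\Rz(\theta)}=\norm{I-\Rz(\theta)}=\max(|1-e^{-i\theta/2}|,|1-e^{i\theta/2}|)=2|\sin(\theta/4)|\le 2\sin(\pi/8)=c_V\le\epsilon$, so $\omega^0=I$ meets the precision bound and has $V$\!-count $0$, which is minimal; hence $\omega^0$ is a solution of the stated form (with $n=0$).

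The only nonroutine points are the branch bookkeeping in the metric fact --- checking that every value of $\alpha+\beta$ congruent to an argument of $i$, $-1$, or $-i$ lies outside $(-\pi/2,\pi/2)$ once $\alpha,\beta$ are taken in $(-\pi,\pi]$ --- and, for the second claim, correctly appealing to the prior angle reduction; the eigenvalue computation and the identity $|1-e^{i\phi}|=2|\sin(\phi/2)|$ are standard.
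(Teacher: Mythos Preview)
Your argument for the first claim matches the paper's almost exactly: both pass to an auxiliary unitary $W$ with $\det W = \det U$ and $\norm{U-\Rz(\theta)} = \norm{W-I}$, read off $\norm{W-I}$ as the maximum of $|1-e^{i\phi_j}|$ over the eigenvalues $e^{i\phi_j}$ of $W$, bound each $|\phi_j|<\pi/4$, and conclude $|\phi_1+\phi_2|<\pi/2$, which is incompatible with $e^{i(\phi_1+\phi_2)}=\det U\in\{i,-1,-i\}$. The paper takes $W=U\Rz(\theta)\inv$ while you take $W=\Rz(\theta)\da U$; these have the same spectrum and the arguments are the same.

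For the second claim there is a gap: you invoke a ``reduction of the rotation angle carried out in this subsection'', but no such reduction precedes this lemma --- this \emph{is} the first lemma of the subsection. Even if a Clifford reduction let you assume $|\theta|\le\pi/2$, that would not by itself produce a solution of the stated form $\omega^n$ for the original $\theta$, since undoing the reduction gives a general Clifford operator, not a scalar. The paper instead argues directly, splitting $\theta/2$ into the four arcs $[-\pi/4,\pi/4]$, $[\pi/4,3\pi/4]$, $[3\pi/4,5\pi/4]$, $[5\pi/4,7\pi/4]$ and asserting that $I,\omega^2,-I,-\omega^2$ respectively approximate $\Rz(\theta)$ to within $c_V$. (That case split has its own problem --- at $\theta=\pi$ one has $\Rz(\pi)=\mathrm{diag}(-i,i)$ and every scalar $\omega^n I$ is at distance at least $\sqrt2>c_V$ from it --- so the literal conclusion ``of the form $\omega^n$'' is too strong; the lemma is only ever used to obtain a $V$\!-count-$0$ solution, and the analogous Lemma~\ref{lem:det-solT} is phrased that way.) In any case you need to supply an actual argument here rather than defer to a reduction that is not in the paper.
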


\begin{proof}
  Every complex $2\by 2$ unitary operator $U$ can be written as
  \[
  U =
  \begin{bmatrix}
    a & -b\da e^{i\phi}\\
    b & a\da e^{i\phi}
  \end{bmatrix},
  \]
  for $a,b\in\Comp$ and $\phi\in[-\pi, \pi]$. This, together with the
  characterization of Clifford+$V$ operators given by
  Proposition~\ref{prop-exact-cliffordV}, implies that a complex $2\by
  2$ unitary operator $U$ can be exactly synthesized over the
  Clifford+$V$ gate set if and only if
  \[
  U = \frac{1}{\rt{k}\rf{\ell}}
  \begin{bmatrix}
    \alpha    & -\beta\da i^n \\
    \beta & \alpha\da i^n
  \end{bmatrix},
  \]
  with $k,\ell,n\in\N$, $\alpha,\beta\in\Z[i]$, and $0\leq\ell\leq
  2$. Now assume that $\epsilon<|1-e^{i\pi/4}|$ and
  $\norm{U-\Rz(\theta)}\leq \epsilon$. Let $e^{i\phi_1}$ and
  $e^{i\phi_2}$ be the eigenvalues of $U\Rz(\theta)\inv$, with
  $\phi_1,\phi_2\in[-\pi,\pi]$. Then $|1-e^{i\pi/4}|>\epsilon\geq
  \norm{U-\Rz(\theta)} =
  \norm{I-U\Rz(\theta)\inv}=\max\s{|1-e^{i\phi_1}|,|1-e^{i\phi_2}|}$,
  so that $|1-e^{i\phi_j}|<|1-e^{i\pi/4}|$. Therefore
  $-\pi/4<\phi_j<\pi/4$, for $j\in\s{1,2}$, which implies that
  $-\pi/2<\phi_1+\phi_2<\pi/2$. Hence
  $|1-e^{i(\phi_1+\phi_2)}|<|1-e^{i\pi/2}|=\sqrt{2}$. But
  $e^{i(\phi_1+\phi_2)}=\det(U\Rz(\theta)\inv)=i^n$. Thus
  $|1-i^n|<\sqrt{2}$ which proves that $i^n=1$.
  
  For the second statement, note that if $\theta/2\in[-\pi/4,\pi/4]$,
  then $\norm{I-\Rz(\theta)}= |1-e^{i\theta/2}|\leq
  |1-e^{i\pi/4}|$. Similarly, if $\theta/2$ belongs to one of
  $[\pi/4,3\pi/4]$, $[3\pi/4,5\pi/4]$, or $[5\pi/4,7\pi/4]$, then one
  of $\norm{\omega^2-\Rz(\theta)}$, $\norm{-I-\Rz(\theta)}$, or
  $\norm{-\omega^2-\Rz(\theta)}$ is less than $|1-e^{i\pi/4}|$. In
  each case, $\Rz(\theta)$ is approximated to within $\epsilon$ by a
  Clifford operator.
\end{proof}

\begin{definition}
  \label{def:repsilon}
  Let $\theta$ be an angle and $\epsilon>0$ a precision. The
  \emph{$\epsilon$-region for $\theta$} is the subset of the plane
  defined by
  \[
  \Repsilon = \s{u \in \Disk \such u \cdot z \geq
    1-\frac{\epsilon^2}{2}}
  \]
  where $z=e^{-i\theta/2}$ and $\Disk$ is the unit disk.
\end{definition}

The $\epsilon$-region is illustrated in Figure~\ref{fig:Repsilon}. We
now show how to reduce Problem~\ref{pb:approx-synth-CliffordV} to
three distinct problems.

% ......................................................................
\begin{figure}
  \[
  \mp{0.8}{\begin{tikzpicture}[scale=1.8, baseline=0]

  \fill[fill=blue!10, yscale=-1, rotate=35] (cos 50, sin 50) -- (cos 50, -sin 50) arc (-50:50:1) -- cycle;
  \path[color=gray] (-.3,.4) node {$\Disk$};
  \draw[color=gray,->] (-1.2,0) -- (1.4,0);
  \draw[color=gray,->] (0,-1.1) -- (0,1.2);
  \path[color=gray] (0,1) node[above left] {$i$};
  \path[color=gray] (1,0) node[above=6pt,right=-2pt] {$1$};
  \draw[color=gray] (0,0) circle (1);
  \draw[yscale=-1, rotate=35] (cos 50, -sin 50) arc (-50:50:1);
  \draw[->, yscale=-1, rotate=35] (0,0) -- (1,0) node[right] {$\vec z$};
  \draw[yscale=-1, rotate=35] (cos 50, sin 50) -- (cos 50,-1);
  \draw[yscale=-1, rotate=35] (1,sin 50) -- (1,-1);
  \draw[yscale=-1, rotate=35] (cos 50, sin 50) -- (1, sin 50);
  \draw[yscale=-1, rotate=35] (cos 50, -sin 50) -- (1, -sin 50);
  \draw[<->, yscale=-1, rotate=35] (cos 50,-0.9) -- (1,-0.9);
  \draw[yscale=-1, rotate=35] (0.9, -0.9) node[above] {$\frac{\epsilon^2}{2}$};
  \path[yscale=-1, rotate=35] (0.8,-0.2) node {$\Repsilon$};
  \draw (0.5,0) arc (0:-35:0.5);
  \path (-17.5:0.4) node {\small $\frac{\theta}{2}$};

\end{tikzpicture}}
  \]
  \caption{The $\epsilon$-region..}
  \label{fig:Repsilon}
  \rule{\textwidth}{0.1mm}
\end{figure}
% ......................................................................

\begin{proposition}
  \label{prop:red-synth-zrot-CliffordV}
  Problem~\ref{pb:approx-synth-CliffordV} reduces to a grid problem, a
  Diophantine equation, and an exact synthesis problem, namely:
  \begin{enumerate}
  \item[1.] find $k,\ell\in\N$ with $0\leq k\leq 2$ and $\alpha\in\Zi$
    such that $\alpha \in \rt{k}\rf{\ell}\Repsilon$,
  \item[2.] find $\beta\in\Zi$ such that $\beta\da\beta = 2^k 5^\ell -
    \alpha\da\alpha$, and
  \item[3.] find a Clifford+$V$ circuit for the unitary matrix
    \[
    U= \frac{1}{\rt{k}\rf{\ell}}\begin{bmatrix}
      \alpha & -\beta\da \\
      \beta & \alpha\da
    \end{bmatrix}.
    \]
  \end{enumerate}
  Moreover, the least $\rt{k}\rf{\ell}$ for which the above three
  problems can be solved yields an optimal solution to
  Problem~\ref{pb:approx-synth-CliffordV}.
\end{proposition}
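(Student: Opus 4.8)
The plan is to establish the reduction in both directions: first, that any solution to the three subproblems yields a Clifford+$V$ circuit approximating $\Rz(\theta)$ within $\epsilon$; and second, that every Clifford+$V$ circuit approximating $\Rz(\theta)$ within $\epsilon$ arises this way, so that minimizing over $\rt{k}\rf{\ell}$ gives an optimal solution. The starting point, for the case $\epsilon < c_V$, is Lemma~\ref{lem:det-sol}, which tells us we may restrict attention to unitaries of determinant $1$; combined with the characterization of Clifford+$V$ operators from Proposition~\ref{prop-exact-cliffordV}, every such unitary that is exactly synthesizable has the special form displayed in item~3, with $\alpha,\beta\in\Zi$ and $k,\ell\in\N$, $0\leq k\leq 2$ (the case $\epsilon \geq c_V$ is handled separately and trivially via the Clifford approximants exhibited in the proof of Lemma~\ref{lem:det-sol}).

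First I would prove the ``if'' direction. Suppose $k,\ell,\alpha,\beta$ solve problems 1--3. Unitarity of the matrix $U$ in item~3 is exactly the condition $\alpha\da\alpha + \beta\da\beta = 2^k 5^\ell$, which is item~2; and Proposition~\ref{prop-exact-cliffordV} (together with Lemma~\ref{lem:det-sol}) guarantees $U$ is a genuine Clifford+$V$ circuit, so item~3 is always solvable once items~1 and~2 are. It then remains to compute $\norm{U - \Rz(\theta)}$. Since $U\Rz(\theta)\inv$ is a determinant-$1$ unitary, its eigenvalues are $e^{\pm i\psi}$ for some $\psi$, and $\norm{U - \Rz(\theta)} = \norm{I - U\Rz(\theta)\inv} = |1 - e^{i\psi}| = \sqrt{2 - 2\cos\psi}$. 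A short computation shows $\Realpart(\tr(U\Rz(\theta)\inv)) = 2\cos\psi$ and also equals $\frac{2}{\rt{k}\rf{\ell}}\Realpart(\alpha e^{i\theta/2}) = \frac{2}{\rt{k}\rf{\ell}}(\alpha \cdot z)$, using $z = e^{-i\theta/2}$ and the real inner product on $\R^2\cong\Comp$. Hence $\norm{U-\Rz(\theta)}^2 = 2 - \frac{2}{\rt{k}\rf{\ell}}(\alpha\cdot z)$, and the defining inequality $\alpha \in \rt{k}\rf{\ell}\Repsilon$, i.e. $\frac{\alpha}{\rt{k}\rf{\ell}}\cdot z \geq 1 - \frac{\epsilon^2}{2}$ together with $\frac{\alpha}{\rt{k}\rf{\ell}} \in \Disk$ (which follows from $\alpha\da\alpha \leq 2^k 5^\ell$), gives exactly $\norm{U - \Rz(\theta)}^2 \leq \epsilon^2$.

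Next I would prove the ``only if'' / optimality direction. Suppose $W$ is any Clifford+$V$ circuit with $\norm{W - \Rz(\theta)}\leq\epsilon < c_V$. By Lemma~\ref{lem:det-sol} and Proposition~\ref{prop-exact-cliffordV}, $W$ has the form of item~3 for some $k,\ell,\alpha,\beta$; moreover one may take $(k,\ell)$ to be the least denominator exponent, and the $V$-count of the optimal circuit for $W$ equals the least $\sqrt5$-denominator exponent $\ell$ by Proposition~\ref{prop-exact-cliffordV}. Running the computation of the previous paragraph in reverse, the hypothesis $\norm{W - \Rz(\theta)}\leq\epsilon$ forces $\frac{\alpha}{\rt{k}\rf{\ell}}\cdot z \geq 1 - \frac{\epsilon^2}{2}$, and unitarity forces $\frac{\alpha}{\rt{k}\rf{\ell}}\in\Disk$; hence $\alpha\in\rt{k}\rf{\ell}\Repsilon$ and $(k,\ell,\alpha,\beta)$ solve problems 1--3 with this value of $\ell$. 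Consequently the minimal $\ell$ over all solutions of 1--3 is exactly the minimal $V$-count over all $\epsilon$-approximants, so picking the least $\rt{k}\rf{\ell}$ and the associated circuit yields an optimal solution. One must be slightly careful here about the ordering: as noted in Subsection~\ref{ssect:scaled-grid-pb-Zi}, smaller $\rt{k}\rf{\ell}$ forces smaller $\ell$, so ``least $\rt{k}\rf{\ell}$ for which 1--3 are solvable'' does minimize the $\sqrt5$-exponent; I would state this correspondence explicitly.

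The main obstacle is the bookkeeping around the two exponents $k$ and $\ell$ and the precise sense of ``optimal'': the $V$-count depends only on $\ell$, not $k$, yet the grid problem is naturally phrased in terms of the scalar $\rt{k}\rf{\ell}$, so I need to argue that minimizing that scalar is compatible with minimizing the $V$-count, and that for each candidate $\ell$ one should search over the finitely many admissible $k\in\{0,1,2\}$. A secondary subtlety is verifying that solvability of item~3 is automatic given items~1--2 — this is where Proposition~\ref{prop-exact-cliffordV} does the real work, and I would invoke it rather than re-derive anything. The trigonometric identity relating $\norm{U - \Rz(\theta)}$ to $\alpha \cdot z$ is routine and I would present it compactly without belaboring it.
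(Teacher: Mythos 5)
Your proposal is correct and follows essentially the same route as the paper: the paper likewise invokes Lemma~\ref{lem:det-sol} to restrict to determinant-one operators of the displayed form, computes $\norm{U-\Rz(\theta)}^2 = 2 - 2\Realpart(z\da\alpha')$ (by direct expansion rather than your trace/eigenvalue phrasing, but it is the same identity), identifies this with the condition $\alpha\in\rt{k}\rf{\ell}\Repsilon$, and deduces optimality from the fact that the minimal $V$-count equals the least $\sqrt5$-denominator exponent (Proposition~\ref{prop-exact-cliffordV}). Your extra care about the compatibility of ordering by $\rt{k}\rf{\ell}$ with ordering by $\ell$ is a point the paper disposes of by the convention in Subsection~\ref{ssect:scaled-grid-pb-Zi}, so it is a welcome but not essential addition.
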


\begin{proof}
  Let $U$ be the matrix
  \begin{equation}
    \label{eqn:form-sols}
    U = \frac{1}{\rt{k}\rf{\ell}}
    \begin{bmatrix}
      \alpha    & -\beta\da \\
      \beta & \alpha\da
    \end{bmatrix}
  \end{equation}
  with $\alpha, \beta \in \Zi$ and $k,\ell\in\N$ satisfying $0\leq k
  \leq 2$ and $\alpha\da\alpha = 2^k5^\ell$. Given $\epsilon$ and
  $\theta$, we can express the requirement $\norm{U-\Rz(\theta)}\leq
  \epsilon$ as a constraint on the top left entry
  $\alpha/(\rf{k}\rt{\ell})$ of $U$. Indeed, let $z=e^{-i\theta/2}$,
  $\alpha'=\alpha/(\rf{k}\rt{\ell})$, and
  $\beta'=\beta/(\rf{k}\rt{\ell})$. Since ${\alpha'}\da
  \alpha'+{\beta'}\da \beta'=1$ and $z\da z=1$, we have
  \begin{align*}
    \norm{U - \Rz(\theta) }^2
    & = |\alpha'-z|^2 + |\beta'|^2 \\
    & = (\alpha'-z)\da(\alpha'-z) + {\beta'}\da\beta' \\
    & = {\alpha'}\da\alpha'+{\beta'}\da\beta' -
    z\da\alpha'-{\alpha'}\da z + z\da z  \\
    & = 2 - 2\Realpart(z\da \alpha').
  \end{align*}
  Thus $\norm{\Rz(\theta)-U}\leq \epsilon$ if and only if
  $2-2\Realpart(z\da \alpha')\leq\epsilon^2$, or equivalently,
  $\Realpart(z\da \alpha') \geq 1-\frac{\epsilon^2}{2}$. If we
  identify the complex numbers $z=x+yi$ and $\alpha'=a+bi$ with
  2-dimensional real vectors $\vec z = (x,y)^T$ and $\vec \alpha' =
  (a,b)^T$, then $\Realpart(z\da \alpha')$ is just their inner product
  $\vec z\cdot \vec \alpha'$, and therefore $\norm{U -
    \Rz(\theta)}\leq \epsilon$ is equivalent to $\vec z\cdot \vec
  \alpha' \geq 1 - \epsilon^2/2$. Hence
  \[
  \norm{U-\Rz(\theta)}\leq \epsilon \iff
  \alpha\in\rt{k}\rf{\ell}\Repsilon.
  \]
  The fact that, by Lemma~\ref{lem:det-sol}, all the solutions to
  Problem~\ref{pb:approx-synth-CliffordV} are of the form
  (\ref{eqn:form-sols}) completes the reduction. Since by
  Proposition~\ref{prop-exact-cliffordV} the minimal $V$\!-count of an
  element $U\in\V$ is its least $\sqrt{5}$-denominator exponent, then
  the least $\rt{k}\rf{\ell}$ for which problems 1-3 can be solved is
  an optimal solution.
\end{proof}

% --------------------------------------------------------------------
\subsection{The algorithm}
\label{ssect:main-algo-V}

The reduction of Proposition~\ref{prop:red-synth-zrot-CliffordV}
describes an algorithm which we explicitly state below.

\begin{algorithm}
  \label{algo:CliffordV}
  Let $\theta$ and $\epsilon>0$ be given.
  \begin{enumerate}
  \item[1.] Use the algorithm from
    Proposition~\ref{prop:algo-grid-pb-scaled-Zi} of
    Chapter~\ref{chap:grid-pb} to enumerate the infinite sequence of
    solutions to the scaled grid problem over $\Zi$ for $\Repsilon$
    and $\rt{k}\rf{\ell}$ in order of increasing $\ell$.
  \item[2.] For each solution $\alpha$:
    \begin{enumerate}
    \item[(a)] Let $n = 2^k5^\ell - \alpha\da\alpha$.
    \item[(b)] Attempt to find a prime factorization of $n$. If $n\neq
      0$ but no prime factorization is found, skip step 2(c) and
      continue with the next $\alpha$.
    \item[(c)] Use the algorithm from
      Proposition~\ref{prop:diophantineZi} of Chapter~\ref{chap:nb-th}
      to solve the equation $\beta\da \beta = n$. If a solution
      $\beta$ exists, go to step 3; otherwise, continue with the next
      $\alpha$.
    \end{enumerate}
  \item[3.] Define $U$ as
    \[
    U= \frac{1}{\rt{k}\rf{\ell}}\begin{bmatrix}
      \alpha & -\beta\da \\
      \beta & \alpha\da
    \end{bmatrix}
    \]
    and use the exact synthesis algorithm of
    Proposition~\ref{prop-exact-cliffordV} to find a Clifford+$V$
    circuit for $U$. Output this circuit and stop.
  \end{enumerate}
\end{algorithm}

\begin{remark}
  \label{rem:Pauli-V-approx}
  In analogy with Remark~\ref{rem:Pauli-V-exact}, we note that
  restricting $k$ to be equal to 0 throughout the algorithm yields a
  method for the approximate synthesis of $z$-rotations in the
  Pauli+$V$ basis.
\end{remark}

% --------------------------------------------------------------------
\subsection{Analysis of the algorithm}
\label{ssect:analysis-algo-V}

We now discuss the properties of Algorithm~\ref{algo:CliffordV}. We
are interested in three aspects of the algorithm: its correctness, its
circuit complexity, and its time complexity. We treat each of these
aspects in turn. We note that the results established here also hold
for the restricted algorithm of Remark~\ref{rem:Pauli-V-approx}.

% --------------------------------------------------------------------
\subsubsection{Correctness}
\label{sssect:correctness}

\begin{proposition}[Correctness]
  \label{prop-correctness}
  If Algorithm~\ref{algo:CliffordV} terminates, then it yields a valid
  solution to the approximate synthesis problem, i.e., it yields a
  Clifford+$V$ circuit approximating $\Rz(\theta)$ up to $\epsilon$.
\end{proposition}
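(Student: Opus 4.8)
The plan is to verify that the algorithm, upon termination, produces output satisfying all the requirements of Problem~\ref{pb:approx-synth-CliffordV}, namely a Clifford+$V$ circuit $U$ with $\norm{U-\Rz(\theta)}\leq\epsilon$. The argument is essentially an unwinding of the reduction established in Proposition~\ref{prop:red-synth-zrot-CliffordV}, tracing through what data the algorithm has in hand when it reaches step 3 and invoking the correctness of the sub-algorithms it calls.

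First I would observe that the algorithm can only terminate by reaching the ``Output this circuit and stop'' instruction at the end of step 3. So suppose this happens. At that point, the algorithm holds some $k,\ell\in\N$ with $0\leq k\leq 2$ and some $\alpha\in\Zi$ produced by step 1, which---by the correctness of the algorithm from Proposition~\ref{prop:algo-grid-pb-scaled-Zi}---is a genuine solution to the scaled grid problem over $\Zi$ for $\Repsilon$ and $\rt{k}\rf{\ell}$; that is, $\alpha\in\rt{k}\rf{\ell}\Repsilon$. It also holds $\beta\in\Zi$ found in step 2(c), which by the correctness of the algorithm from Proposition~\ref{prop:diophantineZi} satisfies $\beta\da\beta = n = 2^k5^\ell-\alpha\da\alpha$. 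Since $\alpha\in\rt{k}\rf{\ell}\Repsilon\seq\rt{k}\rf{\ell}\Disk$, we have $\alpha\da\alpha\leq 2^k5^\ell$, so $n\geq 0$ and the Diophantine equation is not a priori obstructed in sign; the algorithm only proceeds to step 3 when an actual $\beta$ is returned, so $\beta\da\beta = n$ holds. Consequently $\alpha\da\alpha + \beta\da\beta = 2^k5^\ell$, which is exactly the unit-norm condition on the columns of the matrix
\[
U = \frac{1}{\rt{k}\rf{\ell}}\begin{bmatrix}\alpha & -\beta\da\\ \beta & \alpha\da\end{bmatrix}
\]
defined in step 3.

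Next I would check that this $U$ lies in $\V$ and hence, by Proposition~\ref{prop-exact-cliffordV}, is exactly representable by a Clifford+$V$ circuit---so the call to the exact synthesis algorithm in step 3 succeeds and returns a valid circuit. Indeed $U$ has the form (\ref{eq:matrix}) with Gaussian integer entries and $0\leq k\leq 2$, and $\det(U) = \frac{1}{2^k5^\ell}(\alpha\alpha\da + \beta\da\beta) = \frac{2^k5^\ell}{2^k5^\ell} = 1$, which is a power of $i$. Hence $U\in\V$, and the exact synthesis algorithm of Proposition~\ref{prop-exact-cliffordV} produces a Clifford+$V$ circuit for it. Finally, I would invoke the key computation inside the proof of Proposition~\ref{prop:red-synth-zrot-CliffordV}: for a matrix $U$ of this shape, $\norm{U-\Rz(\theta)}\leq\epsilon$ is equivalent to $\alpha\in\rt{k}\rf{\ell}\Repsilon$. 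Since we established the latter, we get $\norm{U-\Rz(\theta)}\leq\epsilon$, and therefore the output circuit approximates $\Rz(\theta)$ to within $\epsilon$, completing the proof.

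Honestly, there is no serious obstacle here: the statement is a correctness claim that is true essentially by construction, and the work is entirely in citing the right earlier results (Propositions~\ref{prop:algo-grid-pb-scaled-Zi}, \ref{prop:diophantineZi}, \ref{prop-exact-cliffordV}, and \ref{prop:red-synth-zrot-CliffordV}) and checking the bookkeeping of denominator exponents and the determinant condition. The one point requiring a moment's care is confirming that reaching step 3 really does guarantee all three sub-problems were solved affirmatively---in particular that the control flow (the ``skip'' in step 2(b) and the ``continue'' in step 2(c)) cannot lead to step 3 without a valid $\beta$ in hand---but this is immediate from reading the algorithm. Note that this proposition deliberately says nothing about termination or minimality; those are addressed separately in the circuit- and time-complexity analyses that follow.
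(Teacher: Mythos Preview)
Your proposal is correct and takes essentially the same approach as the paper, which simply says ``By construction, following the reduction described by Proposition~\ref{prop:red-synth-zrot-CliffordV}.'' You have merely unpacked that one-line justification into its constituent checks; the content is identical.
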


\begin{proof}
  By construction, following the reduction described by
  Proposition~\ref{prop:red-synth-zrot-CliffordV}.
\end{proof}

% --------------------------------------------------------------------
\subsubsection{Circuit complexity}
\label{sssect:circ-comp}

In the presence of a factoring oracle, the algorithm has optimal
circuit complexity.

\begin{proposition}[Optimality in the presence of a factoring oracle]
  \label{prop-optimality}
  In the presence of an oracle for integer factoring, the circuit
  returned by Algorithm~\ref{algo:CliffordV} has the smallest
  $V$\!-count of any single-qubit Clifford+$V$ circuit approximating
  $\Rz(\theta)$ up to $\epsilon$.
\end{proposition}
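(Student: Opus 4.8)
The plan is to show that the algorithm, when equipped with a factoring oracle (so that step 2(b) never fails to find a factorization, and consequently step 2(c) is attempted for every candidate $\alpha$), outputs the circuit of minimal $V$\!-count. The key structural fact is Proposition~\ref{prop:red-synth-zrot-CliffordV}: any Clifford+$V$ circuit approximating $\Rz(\theta)$ up to $\epsilon$ corresponds to a unitary $U$ of the form \eqref{eqn:form-sols} with $\alpha\in\rt{k}\rf{\ell}\Repsilon$ and $\beta\da\beta = 2^k5^\ell-\alpha\da\alpha$, and the $V$\!-count of the optimal circuit for such a $U$ equals its least $\sqrt 5$-denominator exponent $\ell$ (Proposition~\ref{prop-exact-cliffordV}). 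So ``minimal $V$\!-count'' is the same as ``minimal $\ell$ for which problems 1--3 of Proposition~\ref{prop:red-synth-zrot-CliffordV} have a common solution.''

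First I would observe that Algorithm~\ref{algo:CliffordV} enumerates the solutions to the scaled grid problem over $\Zi$ for $\Repsilon$ and $\rt{k}\rf{\ell}$ \emph{in order of increasing $\ell$} (by Proposition~\ref{prop:algo-grid-pb-scaled-Zi}). This is the crucial ordering property. With the factoring oracle in place, step 2(b) always succeeds, so for each enumerated $\alpha$ the algorithm genuinely tests whether the Diophantine equation $\beta\da\beta = 2^k5^\ell-\alpha\da\alpha$ is solvable (using the complete algorithm of Proposition~\ref{prop:diophantineZi}, which correctly decides solvability). Therefore the algorithm halts precisely at the first $\alpha$ — in the increasing-$\ell$ enumeration — for which such a $\beta$ exists. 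Call the corresponding exponents $(k_0,\ell_0)$.

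Next I would argue optimality by contradiction: suppose some Clifford+$V$ circuit $W'$ approximates $\Rz(\theta)$ up to $\epsilon$ with $V$\!-count $\ell' < \ell_0$. By Corollary~\ref{cor:CliffordV-in-vset} and Proposition~\ref{prop-exact-cliffordV}, $W'$ represents a unitary $U'\in\V$ whose least $\sqrt 5$-denominator exponent is $\ell'$; writing $U'$ in the form \eqref{eqn:form-sols} (valid since, by Lemma~\ref{lem:det-sol}, under the relevant precision regime the determinant is $1$, or else the easy Clifford case applies and $\ell'=0$), we get $\alpha'\in\Zi$ with some $\sqrt 2$-exponent $k'\in\s{0,1,2}$ such that $\alpha'\in\rt{k'}\rf{\ell'}\Repsilon$ and $2^{k'}5^{\ell'}-\alpha'\da\alpha' = \beta'\da\beta'$ is a sum of two squares in $\Z$. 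Thus $\alpha'$ is a solution to the scaled grid problem at exponent $\ell' < \ell_0$ for which the Diophantine equation is solvable. But the enumeration proceeds in order of increasing $\ell$, and the algorithm would have encountered $\alpha'$ (or some other solvable candidate at an exponent $\leq \ell'$) strictly before reaching $(k_0,\ell_0)$ — contradicting the definition of $(k_0,\ell_0)$ as the first halting point. Hence $\ell_0 \leq \ell'$ for every approximating circuit, so the output has minimal $V$\!-count.

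The main obstacle is handling the interaction between the two denominator exponents $k$ and $\ell$ carefully: the enumeration is ordered by $\ell$ but each $\ell$ may come with several values of $k$, and one must make sure that ``first solvable candidate in the enumeration'' really does have minimal $\ell$ and not merely minimal in some mixed order. This is addressed by the explicit statement of Proposition~\ref{prop:algo-grid-pb-scaled-Zi} that solutions are output in order of increasing $\ell$ regardless of $k$, together with the remark (preceding Proposition~\ref{prop:algo-grid-pb-scaled-Zi}) that within the ordering $\rt{k}\rf{\ell} \leq \rt{k'}\rf{\ell'}$ forces $\ell\leq\ell'$. A secondary (but routine) point is confirming that the $\epsilon \geq c_V$ case of Lemma~\ref{lem:det-sol} is consistent with optimality, since there the optimum is a Clifford operator of $V$\!-count $0$, which the algorithm also finds at $\ell=0$. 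Apart from these bookkeeping matters the proof is essentially an immediate consequence of the reduction and the ordered enumeration, so I expect it to be short.
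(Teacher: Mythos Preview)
Your proposal is correct and follows essentially the same approach as the paper's proof: the ordered enumeration of grid solutions by increasing $\ell$, combined with the factoring oracle guaranteeing that step~2(b) never spuriously fails, means the first success has minimal $\sqrt{5}$-denominator exponent and hence minimal $V$\!-count. The paper's version is simply more terse, omitting the explicit contradiction argument and the bookkeeping about the $k$--$\ell$ ordering and the $\epsilon \geq c_V$ case that you spell out.
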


\begin{proof}
  By construction, step~(1) of the algorithm enumerates all solutions
  $\alpha$ to the scaled grid problem over $\Zi$ for $\Repsilon$ and
  $\rt{k}\rf{\ell}$ in order of increasing $\ell$. Step~2(a) always
  succeeds and, in the presence of the factoring oracle, so does
  step~2(b). When step~2(c) succeeds, the algorithm has found a
  solution of Problem~\ref{pb:approx-synth-CliffordV} of minimal
  $\sqrt{5}$-denominator exponent, which therefore has minimal
  $V$\!-count.
\end{proof}

In the absence of a factoring oracle, the algorithm is still nearly
optimal. Our proof of this near-optimality relies on the following
number-theoretic hypothesis. We do not have a proof of this
hypothesis, but it appears to be valid in practice.

\begin{hypothesis}
  \label{hyp-numbers-V}
  For each number $n$ produced in step~2(a) of
  Algorithm~\ref{algo:CliffordV}, write $n=2^jm$, where $m$ is
  odd. Then $m$ is asymptotically as likely to be a prime congruent to
  1 modulo 4 as a randomly chosen odd number of comparable
  size. Moreover, each $m$ can be modelled as an independent random
  event.
\end{hypothesis}

\begin{definition}
  Let $U'$ and $U''$ be the following two solutions of the approximate
  synthesis problem
  \begin{equation}\label{eqn-u1-u2}
    U' = \begin{bmatrix}\alpha' & -\beta'^{\dagger} \\ 
      \beta' & \alpha'^{\dagger} \end{bmatrix}
    \quad 
    \mbox{and}
    \quad
    U'' = \begin{bmatrix}\alpha'' & -\beta''^{\dagger} \\ 
      \beta'' & \alpha''^{\dagger}\end{bmatrix}.
  \end{equation}
  $U'$ and $U''$ are said to be {\em equivalent solutions} if
  $\alpha'=\alpha''$.
\end{definition}

\begin{proposition}[Near-optimality in the absence of a factoring
  oracle]
  \label{prop-near-optimality}
  Let $\ell$ be the $V$\!-count of the solution of the approximate
  synthesis problem found by Algorithm~\ref{algo:CliffordV} in the
  absence of a factoring oracle. Then
  \begin{enumerate}
  \item The approximate synthesis problem has at most
    $O(\log(1/\epsilon))$ non-equivalent solutions with $V$\!-count
    less than $\ell$.
  \item The expected value of $\ell$ is
    $\ell'''+O(\log(\log(1/\epsilon)))$, where $\ell', \ell'',$ and
    $\ell'''$ are the $V$\!-counts of the optimal, second-to-optimal,
    and third-to-optimal solutions of the approximate synthesis
    problem (up to equivalence).
  \end{enumerate}
\end{proposition}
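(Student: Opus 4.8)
The plan is to analyze the runtime of Algorithm~\ref{algo:CliffordV} by counting how many candidate values $\alpha$ the algorithm examines before it succeeds, and relating this to the $V$-counts of the optimal solutions. First I would establish part~1. By Proposition~\ref{prop:red-synth-zrot-CliffordV}, a solution to Problem~\ref{pb:approx-synth-CliffordV} with $V$-count at most $\ell$ corresponds to a point $\alpha\in\Zi$ lying in $\rt{k}\rf{\ell'}\Repsilon$ for some $\ell'\leq \ell$ and some $k\in\{0,1,2\}$, together with a solvable relative norm equation. Non-equivalent solutions are distinguished by the value of $\alpha$ alone, so it suffices to bound the number of grid points in the scaled regions $\rt{k}\rf{\ell'}\Repsilon$ over all $\ell'\leq\ell$. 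Since $\Repsilon$ is a convex region whose area is $\Theta(\epsilon^3)$ (it is a circular segment of a unit disk of height $\epsilon^2/2$), the scaled region $\rt{k}\rf{\ell'}\Repsilon$ has area $\Theta(2^k5^{\ell'}\epsilon^3)$. Summing the geometric series over $\ell'\leq\ell$, the dominant contribution comes from $\ell'=\ell$, and the grid problem with scale $\rt{k}\rf{\ell}$ has $\Theta(5^\ell\epsilon^3)$ solutions; but $\ell$ being the actual $V$-count achieved means $5^\ell$ is of order $1/\epsilon^2$ up to lower-order factors (this is the standard $\ell \approx \log_{\sqrt5}(1/\epsilon)$ scaling, which follows from the area of $\Repsilon$ being $\Theta(\epsilon^3)$ and the probabilistic density of solvable norm equations). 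Hence the number of candidates with smaller $V$-count is $O(5^\ell\epsilon^3)\cdot\mathrm{polylog} = O(\log(1/\epsilon))$, after absorbing the $\Theta(\epsilon^3\cdot\epsilon^{-2})=\Theta(\epsilon)$ factor correctly — I would be careful here to track that the count of \emph{non-equivalent} solutions below the achieved $V$-count is $O(1/\epsilon)\cdot 5^{\ell-\ell_{\min}}$ type quantities, bounded by a constant times $\log(1/\epsilon)$ using the lower bounds on solution counts from Proposition~\ref{prop:evolution-grid-Zi}.

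Second I would prove part~2, which is where Hypothesis~\ref{hyp-numbers-V} enters. The algorithm examines candidates $\alpha$ in order of increasing $\ell$; for each, it forms $n=2^k5^\ell-\alpha\da\alpha$, factors $n$ (splitting off the power of $2$), and checks whether $\beta\da\beta=n$ is solvable, which by Proposition~\ref{prop:prime-1mod4-Zi} is guaranteed when the odd part $m$ of $n$ is a prime congruent to $1\bmod 4$ (and more generally is governed by the factorization). The optimal solution has $V$-count $\ell'$, the second-to-optimal $\ell''$, the third-to-optimal $\ell'''$; since the grid problem at level $\ell$ has $\Theta(5^{\ell-\ell'''}\cdot(\text{const}))$ solutions once $\ell\geq\ell'''$ (using Proposition~\ref{prop:evolution-grid-Zi}: once there are $\geq 2$ solutions at level $\ell'''$, there are $\geq 5^{j}+1$ at level $\ell'''+2j$), the algorithm at level $\ell$ has a pool of roughly $c\cdot 5^{\ell-\ell'''}$ candidates to try. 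Under Hypothesis~\ref{hyp-numbers-V}, each candidate's associated $m$ is an independent event succeeding (i.e., yielding a solvable norm equation) with probability $\Omega(1/\log(n))=\Omega(1/\log(1/\epsilon))$, since $m$ is of size $O(1/\epsilon^2)$ and the density of primes $\equiv 1\bmod 4$ of that size is $\Theta(1/\log(1/\epsilon))$ by the prime number theorem in arithmetic progressions. A standard geometric-series / coupon-collector argument then shows that the expected smallest level $\ell$ at which some candidate succeeds satisfies: the cumulative number of trials by level $\ell'''+t$ is $\Theta(\sum_{s\leq t}5^s)=\Theta(5^t)$, and this exceeds $\Theta(\log(1/\epsilon))$ — the expected number of trials needed for one success — precisely when $5^t=\Theta(\log(1/\epsilon))$, i.e., $t=O(\log_{5}\log(1/\epsilon))=O(\log(\log(1/\epsilon)))$. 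Thus the expected value of $\ell$ is $\ell'''+O(\log\log(1/\epsilon))$.

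For the clean write-up I would structure it as: (i) a geometric lemma giving $\area(\Repsilon)=\Theta(\epsilon^3)$ and hence, via Proposition~\ref{prop:algo-grid-pb-scaled-Zi} and the area-based solution-count estimates, that the scaled grid problem for $\Repsilon$ and $\rt{k}\rf{\ell}$ has $\Theta(5^\ell\epsilon^3)$ solutions when this quantity is $\geq 1$; (ii) deduce part~1 by summing over levels below the achieved $\ell$ and using that $5^\ell\epsilon^2$ is bounded (since a solution with that $V$-count exists, by correctness and the density argument $5^\ell = O(\epsilon^{-2}\mathrm{polylog})$, which is what makes $O(5^\ell\epsilon^3)=O(\epsilon\cdot\mathrm{polylog})$ collapse appropriately — actually I should recheck: the right bound is that non-equivalent solutions with $V$-count $<\ell$ number $O(5^{\ell}\epsilon^3/\epsilon^2)$?; I will pin down the exponent bookkeeping so it yields exactly $O(\log(1/\epsilon))$, leaning on Proposition~\ref{prop:evolution-grid-Zi}'s exponential lower bound to show levels well below $\ell'''$ contribute only $O(1)$ each and the handful of levels near $\ell$ contribute the $\log$); (iii) prove part~2 by the probabilistic argument above, invoking Hypothesis~\ref{hyp-numbers-V} to treat the $m$'s as independent Bernoulli trials with success probability $\Theta(1/\log(1/\epsilon))$ and computing the expectation via the geometric growth of the candidate pool. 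The main obstacle is part~2: making the coupon-collector/expectation computation rigorous requires carefully handling the dependence of the pool size on the level, the fact that the algorithm processes candidates within a level in some order (so one must argue that encountering a success is governed by the total trial count, not the per-level count), and the precise form of Hypothesis~\ref{hyp-numbers-V} about independence — in particular arguing that the expected overshoot beyond $\ell'''$ is $O(\log\log(1/\epsilon))$ and not merely $O(\log(1/\epsilon))$ hinges on the pool growing by a constant \emph{factor} $5$ (respectively $5$ every two levels) per level, which is exactly what Proposition~\ref{prop:evolution-grid-Zi} provides; the exponent bookkeeping in part~1 is the secondary delicate point.
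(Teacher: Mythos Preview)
Your approach to part~2 is broadly on the right track and matches the paper's in spirit: use Hypothesis~\ref{hyp-numbers-V} to model each candidate's success as an independent Bernoulli trial with probability $\Theta(1/\log(1/\epsilon))$, use Proposition~\ref{prop:evolution-grid-Zi} to show the candidate pool grows geometrically, and conclude the expected overshoot is $O(\log\log(1/\epsilon))$. The paper makes this precise by defining $j_0$ to be the index of the first candidate whose odd part $m_{j_0}$ is a prime $\equiv 1\pmod 4$, proving $E(j_0)=O(\log(1/\epsilon))$, then using the growth bound $\ell_j\leq \ell_3+2(1+\log_5 j)$ together with Jensen's inequality $E(\log j_0)\leq \log E(j_0)$. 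You are missing one structural point here: the reason the bound is phrased in terms of the \emph{third}-to-optimal $\ell'''$ is a parity issue. Proposition~\ref{prop:evolution-grid-Zi} requires two solutions at the \emph{same} level, but the first two candidates $\alpha_1,\alpha_2$ may have $\sqrt5$-denominator exponents of different parities (Remark~\ref{rem:denomexp-parity}); by pigeonhole, two of $\ell_1,\ell_2,\ell_3$ must agree mod~2, and this is what forces $\ell_3$ (hence $\ell'''$) into the statement rather than $\ell''$.

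Your approach to part~1, however, has a genuine gap. You try to bound the number of non-equivalent solutions with $V$-count below $\ell$ by a direct area count of grid points in the scaled $\epsilon$-regions, and then relate $5^\ell\epsilon^3$ back to $\log(1/\epsilon)$. This bookkeeping does not close cleanly (as you yourself flag), and more importantly $\ell$ is a \emph{random} quantity depending on the algorithm's run, so a deterministic lattice-point count cannot by itself give the claimed $O(\log(1/\epsilon))$ bound. The paper's argument is entirely different and much shorter: it reuses the bound $E(j_0)=O(\log(1/\epsilon))$ already established for part~2. Any non-equivalent solution with $V$-count strictly less than $\ell$ has an $\alpha$ whose least $\sqrt5$-denominator exponent is strictly less than $\ell_{j_0}$, so that $\alpha$ must be one of the candidates $\alpha_1,\ldots,\alpha_{j_0-1}$ already enumerated. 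Distinct non-equivalent solutions give distinct $\alpha$'s, hence the number $s$ of such solutions satisfies $s<j_0$, and therefore $E(s)\leq E(j_0)=O(\log(1/\epsilon))$. In other words, part~1 is not proved by counting lattice points at all; it is a pigeonhole corollary of the probabilistic bound on $j_0$, and Hypothesis~\ref{hyp-numbers-V} is essential for it too.
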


\begin{proof}
  If $\epsilon\geq |1-e^{i\pi/4}|$, then by Lemma~\ref{lem:det-sol}
  there is a solution of $V$\!-count 0 and the algorithm easily finds
  it. In this case there is nothing to show, so assume without loss of
  generality that $\epsilon<|1-e^{i\pi/4}|$. Then by
  Lemma~\ref{lem:det-sol}, all solutions are of the form
  (\ref{eqn:form-sols}).
  \begin{enumerate}
  \item Consider the list $\alpha_1,\alpha_2,\ldots$ of candidates
    generated in step~(i) of the algorithm. Let $\ell_1,\ell_2,\ldots$
    be their least $\sqrt 5$-denominator exponent and let
    $n_1,n_2,\ldots$ be the corresponding integers calculated in
    step~(ii.a). Note that $n_j\leq 4\cdot 5^{\ell_j}$ for all
    $j$. Write $n_j=2^{z_j}m_j$ where $m_j$ is odd. By
    Hypothesis~\ref{hyp-numbers-V}, the probability that $m_j$ is a
    prime congruent to 1 modulo 4 is asymptotically no smaller than
    that of a randomly chosen odd integer less than $4\cdot
    5^{\ell_j}$, which, by the well-known prime number theorem, is
    \begin{equation}\label{eqn-pj}
      p_j:=\frac{1}{\ln (4\cdot 5^{\ell_j})} 
      = \frac{1}{\ell_j\ln 5 + \ln 4}.
    \end{equation}
    By the pigeon-hole principle, two of $\ell_1,\ell_2,$ and $\ell_3$
    must be congruent modulo 2. Assume without loss of generality that
    $\ell_2\equiv \ell_3 ~(\mymod 2)$. Then $\alpha_2$ and $\alpha_3$
    are two distinct solutions to the scaled grid problem over $\Zi$
    for $\Repsilon$ and $\rt{k}\rf{\ell}$ with (not necessarily least)
    $\sqrt 5$-denominator exponent $\ell_3$. It follows by
    Proposition~\ref{prop:evolution-grid-Zi} from
    Chapter~\ref{chap:grid-pb} that there are at least $5^r+1$
    distinct candidates of denominator exponent $\ell_3+2r$, for all
    $r\geq 0$. In other words, for all $j$, if $j\leq 5^r+1$, we have
    $\ell_j\leq \ell_3+2r$.  In particular, this holds for
    $r=\floor{1+\log_5 j}$, and therefore,
    \begin{equation}\label{eqn-kj}
      \ell_j\leq \ell_3+2(1+\log_5 j).
    \end{equation}
    Combining {\eqref{eqn-kj}} with {\eqref{eqn-pj}}, we have
    \begin{equation}\label{eqn-pj2}
      p_j \geq \frac{1}{(\ell_3+2(1+\log_5 j))\ln 5 + \ln 4}
      = \frac{1}{(\ell_3+2)\ln 5 + 2\ln j + \ln 4}
    \end{equation}
    Let $j_0$ be the smallest index such that $m_{j_0}$ is a prime
    congruent to 1 modulo 4. By Hypothesis~\ref{hyp-numbers-V}, we can
    treat each $m_j$ as an independent random variable. Therefore,
    \begin{eqnarray}\label{def:proba}
      P(j_0 > j) &=& P(\mbox{$n_1,\ldots,n_j$ are not prime})
      \nonumber\\
      &\leq & (1-p_1)(1-p_2)\cdots(1-p_j) \nonumber\\
      &\leq & (1-p_j)^j \nonumber\\
      &\leq & 
      \bigparen{1-\frac{1}{(\ell_3+2)\ln 5 + 2\ln j + \ln 4}}^j.         
      \nonumber
    \end{eqnarray}
    The expected value of $j_0$ is given by the sum $E(j_0)=
    \sum_{j=0}^{\infty} P(j_0 > j)$\label{def:expected}. It was proved
    in \cite{gridsynth} that this sum can be estimated as follows
    \begin{eqnarray}
      \label{eqn-ej0}
      E(j_0) &=& \sum_{j=0}^{\infty} P(j_0 > j) \nonumber\\
      &\leq & 1+\sum_{j=1}^{\infty} 
      \bigparen{1-\frac{1}{(\ell_3+2)\ln 5 + 2\ln j + \ln 4}}^j
      ~=~ O(\ell_3).
    \end{eqnarray}
          
    Next, we will estimate $\ell_3$. First note that if the $\epsilon$
    region contains a circle of radius greater than $1/\rf{\ell}$,
    then it contains at least 3 solutions to the scaled grid problem
    for $\Repsilon$ with $\sqrt{5}$-denominator exponent $\ell$. The
    width of the $\epsilon$-region $\Repsilon$ is $\epsilon^2/2$ at
    the widest point, and we can inscribe a disk of radius
    $r={\epsilon^2}/{4}$ in it. Hence the scaled grid problem over
    $\Z[i]$ for $\Repsilon$, as in step~1 of the algorithm, has at
    least three solutions with denominator exponent $\ell$, provided
    that
    \[
    r = \frac{\epsilon^2}{4} \geq \frac{1}{\rf \ell},
    \]
    or equivalently, provided that
    \[
    \ell\geq 2 \log_5(2) + 2\log_5 ({1}/{\epsilon}).
    \]
    It follows that
    \begin{equation}\label{eqn-k2}
      \ell_3=O(\log({1}/{\epsilon})),
    \end{equation}
    and therefore, using {\eqref{eqn-ej0}}, also
    \begin{equation}\label{eqn-ej0-eps}
      E(j_0)=O(\log({1}/{\epsilon})).
    \end{equation}
          
    To finish the proof of part~(i), recall that $j_0$ was defined to
    be the smallest index such that $m_{j_0}$ is a prime congruent to
    1 modulo 4. The primality of $m_{j_0}$ ensures that step~(ii.b) of
    the algorithm succeeds for the candidate
    $\alpha_{j_0}$. Furthermore, because $m_{j_0}\equiv 1\mmod{4}$,
    the equation $\beta\da \beta=n$ has a solution by
    Proposition~\ref{prop:prime-1mod4-Zi}. Hence the remaining steps
    of the algorithm also succeed for $\alpha_{j_0}$.

    Now let $s$ be the number of non-equivalent solutions of the
    approximate synthesis problem of $V$\!-count strictly less than
    $\ell$. As noted above, any such solution $U$ is of the form
    {\eqref{eqn:form-sols}}. Then the least denominator exponent of
    $\alpha$ is strictly smaller than $\ell_{j_0}$, so that
    $\alpha=\alpha_j$ for some $j<j_0$. In this way, each of the $s$
    non-equivalent solutions is mapped to a different index
    $j<j_0$. It follows that $s<j_0$, and hence that $E(s)\leq
    E(j_0)=O(\log({1}/{\epsilon}))$, as was to be shown.
  \item Let $U'$ be an optimal solution of the approximate synthesis
    problem, let $U''$ be optimal among the solutions that are not
    equivalent to $U'$ and let $U'''$ be optimal among the solutions
    that are not equivalent to either $U'$ or $U''$. Assume that $U',
    U'',$ and $U'''$ are written as in (\ref{eqn-u1-u2}) with top-left
    entry $\alpha',\alpha'',$ and $\alpha'''$ respectively. Now let
    $\ell'$, $\ell''$, and $\ell'''$ be the least denominator
    exponents of $\alpha'$, $\alpha''$, and $\alpha'''$,
    respectively. Let $\ell_3$ and $j_0$ be as in the proof of
    part~(i). Note that, by definition, $\ell_3\leq \ell'''$.  Let
    $\ell$ be the least denominator exponent of the solution of the
    approximate synthesis problem found by the algorithm. Then
    $\ell\leq \ell_{j_0}$. Using {\eqref{eqn-kj}}, we have
    \[
    \ell \leq \ell_{j_0} \leq \ell_3 + 2(1+\log_5 j_0) \leq \ell''' +
    2(1+\log_5 j_0).
    \]
    This calculation applies to any one run of the algorithm.  Taking
    expected values over many randomized runs, we therefore have
    \begin{equation}\label{eqn-em}
      E(\ell) \leq \ell''' + 2 + 2 E(\log_5 j_0) 
      \leq \ell''' + 2 + 2\log_5 E(j_0).
    \end{equation}
    Note that we have used the law $E(\log j_0)\leq \log(E(j_0))$,
    which holds because $\log$ is a concave function. Combining
    {\eqref{eqn-em}} with {\eqref{eqn-ej0-eps}}, we therefore have the
    desired result:
    \begin{equation}
      E(\ell) = \ell''' + O(\log(\log(1/\epsilon))). \nonumber
    \end{equation}\qedhere
  \end{enumerate}
\end{proof}

% --------------------------------------------------------------------
\subsubsection{Time complexity}
\label{sssect:time-comp}

Finally, we turn to the time complexity of the algorithm. For this
again, we rely on our number theoretic conjecture on the distribution
of primes to estimate how many candidates must be tried before one
that is prime is reached.

\begin{proposition}
  Algorithm~\ref{algo:CliffordV} runs in expected time
  $O(\polylog(1/\epsilon))$. This is true whether or not a
  factorization oracle is used.
\end{proposition}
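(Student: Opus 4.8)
The plan is to bound the expected running time by accounting for the work done in each of the three stages of Algorithm~\ref{algo:CliffordV}, and to show that the number of candidates $\alpha$ examined before termination is polylogarithmic in $1/\epsilon$. First I would handle the trivial case $\epsilon\geq|1-e^{i\pi/4}|$: by Lemma~\ref{lem:det-sol} a solution of $V$-count $0$ exists and is found immediately, so assume $\epsilon<|1-e^{i\pi/4}|$. The key quantity to control is $j_0$, the index of the first candidate $\alpha_{j_0}$ for which the number $n_{j_0}=2^k5^{\ell_{j_0}}-\alpha_{j_0}\da\alpha_{j_0}$ has its odd part $m_{j_0}$ prime and congruent to $1$ modulo $4$; this is exactly the analysis already carried out in the proof of Proposition~\ref{prop-near-optimality}, where it was shown under Hypothesis~\ref{hyp-numbers-V} that $E(j_0)=O(\log(1/\epsilon))$, using the estimate $\ell_3=O(\log(1/\epsilon))$ from \eqref{eqn-k2} together with the prime number theorem.

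Next I would tally the cost per candidate. By Proposition~\ref{prop:algo-grid-pb-scaled-Zi} of Chapter~\ref{chap:grid-pb}, enumerating the solutions of the scaled grid problem over $\Zi$ for $\Repsilon$ costs $O(\log(1/M))$ arithmetic operations overall plus a constant per solution produced; since $\Repsilon$ has a fixed-shape enclosing ellipse, $M$ is bounded below, so this is $O(1)$ per candidate. Step~2(a) is a single subtraction. Step~2(b) attempts to factor $n=O(5^{\ell})$; since $\ell=O(\log(1/\epsilon))$ throughout (again by \eqref{eqn-k2} and \eqref{eqn-kj}, the denominator exponents of the first $O(\log(1/\epsilon))$ candidates are all $O(\log(1/\epsilon))$), the integers $n$ are bounded by a polynomial in $1/\epsilon$, hence have $O(\log(1/\epsilon))$ bits, and probabilistic primality testing plus the attempted factorization run in $\polylog(1/\epsilon)$ time. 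Step~2(c), the solution of $\beta\da\beta=n$, succeeds for $\alpha_{j_0}$ by Proposition~\ref{prop:prime-1mod4-Zi} and runs in probabilistic polynomial time in the size of $n$ by Proposition~\ref{prop:diophantineZi}, i.e.\ $\polylog(1/\epsilon)$. Finally, step~3 invokes the exact synthesis algorithm of Proposition~\ref{prop-exact-cliffordV}, which requires $O(\ell)=O(\log(1/\epsilon))$ arithmetic operations.

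Putting these together: the algorithm examines $j_0$ candidates, each costing $\polylog(1/\epsilon)$ time, and then performs one exact synthesis of cost $O(\log(1/\epsilon))$. Hence the total expected running time is $E(j_0)\cdot\polylog(1/\epsilon)+O(\log(1/\epsilon)) = O(\log(1/\epsilon))\cdot\polylog(1/\epsilon) = O(\polylog(1/\epsilon))$. The argument is identical whether or not a factoring oracle is available, since an oracle only replaces the $\polylog$-time factorization attempt in step~2(b) by an $O(1)$-time oracle call, which does not change the asymptotic class. The main obstacle is purely bookkeeping rather than conceptual: one must make sure that the denominator exponents $\ell_j$ of \emph{all} candidates processed before termination — not merely $\ell_{j_0}$ itself — remain $O(\log(1/\epsilon))$, so that every integer $n_j$ that is factored or tested for primality has polylogarithmically many bits; this follows from \eqref{eqn-kj} together with $j_0=O(\log(1/\epsilon))$ in expectation, but it is the step that ties the per-candidate cost to $1/\epsilon$ and must be stated carefully.
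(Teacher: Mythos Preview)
Your approach is essentially the same as the paper's: bound the expected number $j_0$ of candidates using the analysis from Proposition~\ref{prop-near-optimality}, bound the work per candidate and in the final exact-synthesis step, and multiply. The overall structure and the estimates you invoke are correct.

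There is one inaccuracy worth flagging. You write that ``$\Repsilon$ has a fixed-shape enclosing ellipse, $M$ is bounded below, so this is $O(1)$ per candidate.'' The uprightness $M$ of $\Repsilon$ is \emph{not} bounded below by a constant: as $\epsilon\to 0$ the region becomes a thin sliver near the unit circle, and its uprightness goes to zero. The paper handles this by noting that $\Repsilon$ contains a disk of radius $\epsilon^2/4$, while $\BBox(\Repsilon)\subseteq[-1,1]^2$, giving $M=\Omega(\epsilon^4)$ and hence $\log(1/M)=O(\log(1/\epsilon))$. This does not hurt your final bound, since the $O(\log(1/M))$ overhead from Proposition~\ref{prop:algo-grid-pb-scaled-Zi} is a one-time cost (not per candidate) and $O(\log(1/\epsilon))$ is absorbed into $\polylog(1/\epsilon)$; but the reason you give for it is wrong, and you should replace it with the estimate $M=\Omega(\epsilon^4)$.
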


\begin{proof}
  Let $M$ be the uprightness of the $\epsilon$-region. Let $j_0$ be
  the average number of candidates tried in steps 2(a)--(c) of the
  algorithm, and let $\ell_{j_0}$ be the least denominator exponent of
  the final candidate. Let $n$ be the largest integer that appears in
  step 2(a) of the algorithm.

  By Proposition~\ref{prop:algo-grid-pb-scaled-Zi}, step~1 of the
  algorithm requires $O(\log(1/M))$ arithmetic operations, plus a
  constant number per candidate. For each of the $j_0$ candidates,
  step~2(a) requires $O(1)$ arithmetic operations. Step 2(b) also
  requires $O(1)$ arithmetic operations, either due to the use of a
  factoring oracle, or else, because we can put an arbitrary fixed
  bound on the amount of effort invested in factoring any given
  integer. At minimum, this will succeed when the integer in question
  is prime, which is sufficient for the estimates of
  Proposition~\ref{prop-near-optimality}. Step 2(c) requires
  $O(\polylog(n))$ operations by
  Proposition~\ref{prop:diophantineZi}. Finally, step 3 requires
  $O(\ell_{j_0})$ arithmetic operations by
  Proposition~\ref{prop-exact-cliffordV}. So the total expected number
  of arithmetic operations is
  \begin{equation} \label{eqn-expected} O(\log(1/M)) + j_0\cdot
    O(\polylog(n)) + O(\ell_{j_0}).
  \end{equation}
  Recall that the $\epsilon$-region $\Repsilon$, shown in
  Figure~\ref{fig:Repsilon}, contains a disk of radius $\epsilon^2/4$;
  therefore, $\area(\Repsilon)\geq \frac{\pi}{16}\epsilon^4$. On the
  other hand, the square $[-1,1]\times[-1,1]$ is a (not very tight)
  bounding box for $\Repsilon$. It follows that
  \[
  M = \up(\Repsilon) =
  \frac{\area(\Repsilon)}{\area(\BBox(\Repsilon))} =
  \Omega(\epsilon^4),
  \]
  hence $\log(1/M) = O(\log(1/\epsilon))$. From {\eqref{eqn-ej0-eps}},
  the expected value of $j_0$ is $O(\log({1}/{\epsilon}))$.  Combining
  {\eqref{eqn-kj}} with {\eqref{eqn-k2}}, we therefore have
  \[ \ell_{j_0} \leq \ell_3+2(1+\log_2 j_0) = O(\log({1}/{\epsilon}))
  + O(\log(\log(1/\epsilon))) = O(\log(1/\epsilon)).
  \]
  Now note that for any $i$, $n_i \leq 5^{\ell_i+1}$. This, together
  with the fact that candidates are enumerated in order of increasing
  denominator exponent, we have $n\leq 4^{\ell_{j_0}}$, hence
  \[ \polylog(n) = O(\poly(\ell_{j_0})) = O(\polylog(1/{\epsilon})).
  \]
  Combining all of these estimates with {\eqref{eqn-expected}}, the
  expected number of arithmetic operations for the algorithm is
  $O(\polylog(1/{\epsilon}))$. Moreover, each individual arithmetic
  operation can be performed with precision $O(\log(1/\epsilon))$,
  taking time $O(\polylog(1/{\epsilon}))$. Therefore the total
  expected time complexity of the algorithm is
  $O(\polylog(1/{\epsilon}))$, as desired.
\end{proof}

% ====================================================================
\section{Approximate synthesis of special unitaries}
\label{sec:approx-synth}

The algorithm of the previous section allows us to approximate
$z$-rotation up to arbitrarily small accuracy. This method can be used
to solve the problem of approximate synthesis of arbitrary special
unitaries over the Clifford+$V$ gate set.

\begin{problem}
  \label{pb:approx-synth-CliffordV-gen}
  Given a special unitary $U$ and a precision $\epsilon >0$, construct
  a Clifford+$V$ circuit $U$ whose $V$\!-count is as small as possible
  and such that $\norm{U-\Rz(\theta)}\leq \epsilon$.
\end{problem}

Indeed, an element $U\in \suset(2)$ can always be decomposed as a
product of three rotations using \emph{Euler angles}. Hence
\[
U = \Rz(\theta_1) \Rx(\theta_2) \Rz(\theta_3).
\]
Using the Hadamard gate, the central $x$-rotation can be expressed as
a $z$-rotation. Thus
\[
U = \Rz(\theta_1) H \Rz(\theta_2) H \Rz(\theta_3).
\]
We can therefore use Algorithm~\ref{algo:CliffordV} to find a
Clifford+$V$ circuit approximating each of the $\Rz(\theta_i)$ up to
$\epsilon/3$. Since the Hadamard gate is a Clifford operator, this
yields a Clifford+$V$ approximation of $U$ up to $\epsilon$.

We note that optimality is lost in the process of decomposing an
operator as a product of $z$-rotations. Indeed, if we write a special
unitary $U$ as 
\[
U=\Rz(\theta_1) H \Rz(\theta_2) H \Rz(\theta_3)
\]
and approximate each $z$-rotation using
Algorithm~\ref{algo:CliffordV}, we obtain a circuit whose length
exceeds the optimal one by a factor of 3 in the typical case.

% ---------------------------------------------------------------------
\chapter{Clifford+\texorpdfstring{$T$}{T} approximate synthesis}
\label{chap:synth-T}

In this chapter, we introduce an efficient algorithm to solve the
problem of approximate synthesis of special unitaries over the
Clifford+$T$ gate set. Recall from Chapter~\ref{chap:intro} that the
$T$\label{defTgate} gate is the following matrix
\[
T = \begin{bmatrix}
  1 & 0 \\
  0 & \omega
\end{bmatrix}.
\]
The Clifford+$T$ gate set is obtained by adding the $T$ gate to the
generators $\omega$, $H$, and $S$ of the Clifford group.

The results presented in this chapter are obtained by adapting the
methods of Chapter~\ref{chap:synth-V} to the Clifford+$T$
setting. Like in Chapter~\ref{chap:synth-V}, we first consider the
exact synthesis of Clifford+$T$ operators. This provides a
characterization of Clifford+$T$ circuits which we then use to define
an algorithm for the approximate synthesis of $z$-rotations. Because
of the similarities between this chapter and the previous one, we omit
most proofs in order to avoid redundancy. However, we explain the
differences when they occur.

The approximate synthesis algorithms introduced in this chapter
(Algorithm~\ref{algo:CliffordT} and Algorithm~\ref{alg-phase}) have
been implemented in Haskell. The implementations are freely available
\cite{implementation}.

% ====================================================================
\section{Exact synthesis of Clifford+\texorpdfstring{$T$}{T}
  operators}
\label{sec:optimal-CliffordT-synth}

\begin{problem}[Exact synthesis of Clifford+$T$ operators]
  \label{pb:exact-synth-CliffordT}
  Given a unitary $U\in\uset(2)$, determine whether there exists a
  Clifford+$T$ circuit $W$ such that $U=W$ and, in case such a circuit
  exists, construct one whose $T$\!-count is minimal.
\end{problem}

Problem~\ref{pb:exact-synth-CliffordT} was first solved in
\cite{KMM-approx}. A version of Problem~\ref{pb:exact-synth-CliffordT}
generalized to multi-qubit circuits was solved in
\cite{Giles-Selinger}.

To characterize Clifford+$T$ operators, we consider the following set
of unitaries.

\begin{definition}
  \label{def:subgroup-RT}
  The set $\tset$ consists of unitary matrices of the form
  \begin{equation}
    \label{eq:matrixT}
    U =  \frac{1}{\rt{k}} \begin{bmatrix} 
      \alpha & \gamma \\
      \beta & \delta
    \end{bmatrix}
  \end{equation}
  where $k\in\N$ and $\alpha,\beta,\gamma,\delta\in\Zomega$.
\end{definition}

We note that $\tset$ is the subgroup of $U(2)$ consisting of matrices
with entries in $\Z[1/\sqrt{2}, i]$. This is slightly different than
the situation in the previous chapter, where we had defined $\vset$ to
be a strict subset of the group of unitary matrices over
$\Z[1/\sqrt{5},i]$. We will also use a notion of denominator exponent
for the elements of $\tset$.

\begin{definition}
  \label{def:2-denomexp}
  Let $U\in\tset$ be as in (\ref{eq:matrixT}). The integer $k$ is
  called a \emph{denominator exponent} of $U$. The least $k$ such that
  $U$ can be written as in (\ref{eq:matrixT}) is the \emph{least
    denominator exponent} of $U$. These notions extend naturally to
  vectors and scalars of the form
  \begin{equation}
    \label{eq:vectorT}
    \frac{1}{\rt{k}} \begin{bmatrix} 
      \alpha \\
      \beta 
    \end{bmatrix}
    \quad                           
    \mbox{ and }
    \quad
    \frac{1}{\rt{k}}~ \alpha,                            
  \end{equation}
  where $k\in\N$ and $\alpha, \beta\in\Zomega$.
\end{definition}

In the previous chapter, we used the set $\vset$ to characterize
Clifford+$V$ operators. Similarly, one can prove that Clifford+$T$
operators are exactly the elements of $\tset$.

\begin{proposition}[Kliuchnikov, Maslov, Mosca
  \cite{Kliuchnikov-etal}]
  \label{prop-exact-cliffordT}
  A unitary operator $U\in U(2)$ is exactly representable by a
  Clifford+$T$ circuit if and only if $U\in\tset$. Moreover, there
  exists an efficient algorithm that computes a Clifford+$T$ circuit
  for $U$ with minimal $T$-count.
\end{proposition}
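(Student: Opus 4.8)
The plan is to follow the structure of the proof of Proposition~\ref{prop-exact-cliffordV}, working with the ring $\Zomega$ in place of $\Zi$, with the single denominator exponent of Definition~\ref{def:2-denomexp} in place of the pair $(k,\ell)$, and with the $T$-gate playing the role of the $V$-gates. The left-to-right implication is the routine direction: each generator of the Clifford+$T$ group lies in $\tset$, since $\omega$, $S$, and $T$ have entries in $\Zomega$ (denominator exponent $0$) and $H$ has denominator exponent $1$; and a product of two matrices of the form \eqref{eq:matrixT} is again of that form, with denominator exponents adding, so $\tset$ is closed under matrix multiplication. Hence every Clifford+$T$ circuit evaluates to an element of $\tset$.

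For the converse, the key step is a column-reduction lemma analogous to Lemma~\ref{lem-column-k}: if $u$ is a unit vector of the form \eqref{eq:vectorT} with least denominator exponent $k$, then there is a Clifford+$T$ circuit $W$ of $T$-count $k$ with $Wu=e_1$. I would prove this by induction on $k$. When $k=0$, unitarity forces $\alpha\da\alpha+\beta\da\beta=1$ in $\Zomega$, so one of $\alpha,\beta$ is a unit of $\Zomega$ (a power of $\omega$) and the other vanishes, and a Clifford operator built from a Pauli and a power of $\omega$ carries $u$ to $e_1$. For $k\geq 1$ I would carry out a case distinction on the residues of $\alpha$ and $\beta$ in $\Zomega$ modulo $\sqrt2$: since $\alpha\da\alpha+\beta\da\beta=2^k$ is divisible by $\sqrt2$, these residues are confined to a short list of patterns, and for each surviving pattern one exhibits a Clifford+$T$ operator of $T$-count $1$ — namely $H$, $HT$, or a Pauli conjugate of one of these — that lowers the least denominator exponent of $u$ to $k-1$; the induction hypothesis then supplies the remaining circuit of $T$-count $k-1$. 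Given this lemma, a Clifford+$T$ circuit for an arbitrary $U\in\tset$ is produced exactly as in Proposition~\ref{prop-exact-cliffordV}: apply the lemma to the first column of $U$ to obtain $W'$ with first column of $W'U$ equal to $e_1$; unitarity then forces the second column of $W'U$ to be $\zeta e_2$ for a unit $\zeta$ of $\Zomega$, and left-multiplying by a suitable Clifford (a power of $\omega$, possibly composed with $Z$ or $S$, at the cost of at most one further $T$-gate) makes $W'U=I$, whence $U={W'}\da$.

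The efficiency of this construction is immediate: the column reduction performs one reduction step per unit of $k$, each step costing a constant number of arithmetic operations, for $O(k)$ operations overall, and $k$ is linear in the bit-size of $U$. For the minimality of the $T$-count I would follow the broad pattern of the minimality argument in Proposition~\ref{prop-exact-cliffordV} and of \cite{Kliuchnikov-etal}: one tracks an invariant — the residue of the operator modulo $\sqrt2$, refined by the denominator exponent — under the generators, showing that no Clifford+$T$ circuit with fewer $T$-gates than the constructed one can represent $U$, with the optimal count determined by $k$ together with a finite amount of special-casing for small $k$, just as the minimal $V$-count was read off from the least $\sqrt5$-exponent in Proposition~\ref{prop-exact-cliffordV}.

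The main obstacle is the case analysis underlying the two nontrivial claims. In the reduction lemma one must verify that, among unit vectors over $\Zomega$ with positive least denominator exponent, the residue modulo $\sqrt2$ always falls into one of finitely many orbits under the Pauli-and-$\omega$ action for which a single $H$ or $HT$ (up to Pauli conjugation) provably drops the exponent — this is the Clifford+$T$ counterpart of the residue tables in the proof of Lemma~\ref{lem-column-k}, and enumerating the cases correctly is delicate. The matching obstacle for minimality is to choose the right invariant and to check, case by case on how $H$ and $T$ act on it, that it cannot decrease too quickly; getting both the invariant and the small-$k$ behaviour exactly right is where the real work lies, and for that I would lean on the treatment in \cite{Kliuchnikov-etal}.
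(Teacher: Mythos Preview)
Your overall approach matches the paper's own sketch: it too says the proposition ``can be proved by a technique similar to the one used to establish Proposition~\ref{prop-exact-cliffordV},'' namely column reduction to $e_1$ by well-chosen Clifford+$T$ operators, followed by the same second-column argument. So structurally you are on the right track.

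There is, however, a genuine quantitative error in your reduction lemma. You claim that a unit column with least denominator exponent $k$ can be reduced to $e_1$ by a circuit of $T$-count $k$, via steps of $T$-count $1$ (``$H$, $HT$, or a Pauli conjugate'') each lowering $k$ by one. This is false. Applying $H$ to $(\alpha,\beta)^T/\rt{k}$ gives $(\alpha+\beta,\alpha-\beta)^T/\rt{k+1}$, so to drop the exponent by one you need $\alpha\pm\omega^j\beta$ divisible by $2$, not merely by $\sqrt2$; a residue analysis modulo $\sqrt2$ alone is not fine enough, and a single $T$-gate does not suffice in general. The paper flags exactly this point immediately after the proposition: ``The relation between denominator exponent and minimal $T$-count is slightly more complicated,'' and Proposition~\ref{prop-exact-cliffordT-Tcount} gives $2k-3\leq t\leq 2k+1$. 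The correct picture is that reducing $k$ by one typically costs about two $T$-gates (the relevant residues live in $\Zomega/(2)$, which has $16$ elements, and the Matsumoto--Amano analysis in \cite{ma-remarks} makes this precise). Consequently your minimality argument, which you model on the $V$-case identity ``minimal $V$-count $=$ least $\sqrt5$-exponent,'' cannot go through as stated: there is no such clean identity here, and pinning down the exact $T$-count requires the finer invariant tracked in \cite{ma-remarks} rather than the denominator exponent alone.
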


The above proposition can be proved by a technique similar to the one
used to establish Proposition~\ref{prop-exact-cliffordV}. To this end
one first shows that every vector of the form (\ref{eq:vectorT}) can
be reduced to $e_1=\left[ \begin{smallmatrix} 1 \\ 0 \end{smallmatrix}
\right]$ by applying well-chosen Clifford+$T$ operators. Applying this
method to the first column of an element $U$ of $\tset$ then yields a
circuit for $U$.

Recall that in Proposition~\ref{prop-exact-cliffordV}, the minimal
$V$-count of the operator $U$ was equal to its least
$\sqrt{5}$-denominator exponent. The relation between denominator
exponent and minimal $T$-count is slightly more complicated.

\begin{proposition}
  \label{prop-exact-cliffordT-Tcount}
  Let $U\in\tset$ with least denominator exponent $k$ and minimal
  $T$-count $t$. Then $2k-3 \leq t \leq 2k+1$.
\end{proposition}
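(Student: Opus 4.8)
The plan is to bound the minimal $T$-count $t$ in terms of the least denominator exponent $k$ from both sides. Write $\operatorname{sde}(V)$ for the least denominator exponent of a matrix, vector, or scalar $V$ of the appropriate form, so $\operatorname{sde}(U)=k$. Two observations drawn from Definition~\ref{def:2-denomexp} will be used repeatedly: multiplying by a Clifford operator (an element of $\tset$ with denominator exponent $0$) or by any power of $T$ leaves $\operatorname{sde}$ unchanged --- a decrease would force $\sqrt 2$ to divide every entry, contradicting minimality --- while multiplying by $H$, which contributes one factor $1/\sqrt 2$, changes $\operatorname{sde}$ by at most $1$ in either direction. For $k\le 1$ the claimed inequalities are trivial or vacuous (note $t\ge 0$ always, and for $k=0$ one checks $\tset\cap\{\operatorname{sde}=0\}$ directly to get $t\le 1$), so I focus on $k\ge 2$.

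For the upper bound $t\le 2k+1$, I would run the exact-synthesis procedure underlying Proposition~\ref{prop-exact-cliffordT}, which is the Clifford+$T$ analogue of Lemmas~\ref{lem-column-l}--\ref{lem-column-k}. Writing $U=\frac{1}{\rt k}\begin{bmatrix}\alpha&\gamma\\\beta&\delta\end{bmatrix}$ with $k$ least, consider its first column $u$, a unit vector of the form \eqref{eq:vectorT}. One reduces $u$ to $e_1$ by repeatedly lowering $\operatorname{sde}(u)$: when $\operatorname{sde}(u)=j\ge 1$, the unitarity constraint $\alpha\da\alpha+\beta\da\beta=2^{j}$ restricts the residues of $\alpha,\beta$ modulo $\sqrt 2$ (equivalently, modulo $2$) to a short list of cases, and in each case there is a Clifford $C$ and an exponent $a\in\{0,1,2,3\}$ with $\operatorname{sde}(HT^{a}Cu)=j-1$. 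Such a step uses one $H$ (Clifford, hence free), the Clifford $C$, and $T^{a}$; since $T^{0}=I$ and $T^{2}=S$ are Clifford, $T^{a}$ contributes at most one $T$-gate, and the residue bookkeeping may in the worst case force an extra $T$-gate to realign $u$ before the $H$-reduction applies, so a level costs at most two $T$-gates. Iterating from level $k$ down to level $0$ clears the first column with at most $2k$ $T$-gates, after which unitarity forces the remaining matrix to have the form $\operatorname{diag}(1,\mu)$ with $\mu$ a unit of $\Zomega$, which costs at most one more $T$-gate to clear. Hence $t\le 2k+1$.

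For the lower bound $t\ge 2k-3$, the plan is to show that a $T$-short circuit cannot reach a large denominator exponent. Fix a $T$-optimal circuit for $U$ and put it in an alternating normal form (for instance Matsumoto--Amano), $\hat C_{0}\,\sigma_{1}\sigma_{2}\cdots\sigma_{t}\,\hat C$, where each syllable $\sigma_{i}\in\{HT,SHT\}$ carries exactly one $T$-gate and $\hat C_{0},\hat C$ are Cliffords with $\operatorname{sde}=0$. By the observations above, $\operatorname{sde}$ changes only at the Hadamard inside a syllable, and then by at most $1$. First I would note that the naive bound $\operatorname{sde}(U)\le(\text{number of Hadamards})$ is far too weak; the point is that $H$ can genuinely \emph{raise} $\operatorname{sde}$ by $1$ only when the partial product is already aligned modulo $\sqrt 2$, and a second Hadamard applied without an intervening odd power of $T$ immediately undoes the gain because $H^{2}=I$. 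A careful induction on the normal form, tracking how $HT$ and $SHT$ act on denominator exponents through the residue classes modulo $\sqrt 2$, should then show that $\operatorname{sde}$ grows by at most $1$ for every two consecutive syllables, up to a bounded additive slack from the outer Cliffords. Starting from $\operatorname{sde}(\hat C_{0})=0$ this yields $k\le\lceil t/2\rceil+1$, i.e. $t\ge 2k-3$.

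The routine parts are the $\operatorname{sde}$-invariance under Cliffords and $T$-powers and the per-$H$ change of at most $1$. The hard part will be the exact accounting on both ends: on the upper-bound side, proving that the reduction never needs more than $2k+1$ $T$-gates, which hinges on the classification of residues modulo $\sqrt 2$ (where $\Zomega/\sqrt 2\Zomega$ is a non-reduced local ring) and on the behaviour at the lowest levels; and on the lower-bound side, establishing that $\operatorname{sde}$ grows by at most one per two syllables, which is the delicate induction on the normal form. These are precisely the computations of Kliuchnikov, Maslov, and Mosca in \cite{Kliuchnikov-etal}, whose analysis I would follow rather than redo from scratch.
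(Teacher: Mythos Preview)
The paper itself does not prove this proposition; it simply cites \cite{ma-remarks}. Your overall plan---use the exact-synthesis column reduction for the upper bound and an analysis of the Matsumoto--Amano normal form for the lower bound---is exactly the shape of the argument in that reference, and you are right that both directions ultimately hinge on tracking residues (what \cite{ma-remarks} calls $k$-residues; cf.\ the proof of Lemma~\ref{lem-2k-2}).

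That said, two of the heuristics you offer would not survive as written. For the lower bound, the observation ``$H^2=I$, so a second Hadamard without an intervening odd power of $T$ undoes the gain'' is inapplicable: in the normal form $(T|\epsilon)(HT|SHT)^*\hat C$ there is \emph{always} exactly one $T$ (hence an odd power) between consecutive $H$'s. The mechanism that limits the growth of $\operatorname{sde}$ to about one unit per two syllables is not $H^2=I$ but a parity phenomenon in $\Zomega/(1-\omega)^j$: after $H$ raises $\operatorname{sde}$, the resulting residue class forces the next $H$ to leave $\operatorname{sde}$ unchanged (or drop it), and the intervening $T$ flips the residue so that the pattern repeats. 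This is the induction you allude to but do not carry out, and without it the lower bound does not follow.

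For the upper bound, your own description of a reduction step as $HT^aC$ with $a\in\{0,1,2,3\}$ gives at most \emph{one} $T$-gate per level, not two; the ``extra $T$-gate to realign'' is inserted ad hoc. If each level truly cost $\le 1$ $T$-gate you would get $t\le k+O(1)$, which contradicts the lower bound $t\ge 2k-3$ for $k\ge 5$. The correct reason the cost is $\le 2$ per level is again the residue analysis: for certain residue classes of the column no single $HT^aC$ lowers $\operatorname{sde}$, and one must spend two $H$-syllables (hence two $T$'s) to drop $\operatorname{sde}$ by one. Both bounds therefore rest on the same computation, which is precisely what \cite{ma-remarks} supplies.
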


\begin{proof}
  See, e.g., \cite{ma-remarks}.
\end{proof}

% ====================================================================
\section{Approximate synthesis of \texorpdfstring{$z$}{z}-rotations}
\label{sec:approx-synth-z-T}

As in Chapter~\ref{chap:synth-V}, we consider the problem of
approximate synthesis of $z$-rotations.

\begin{problem}
  \label{pb:approx-synth-CliffordT}
  Given an angle $\theta$ and a precision $\epsilon >0$, construct a
  Clifford+$T$ circuit $U$ whose $T$\!-count is as small as possible
  and such that
  \begin{equation}
    \label{eqn-norm}
    \norm{U-\Rz(\theta)}\leq \epsilon.
  \end{equation}
\end{problem}

An algorithm to solve Problem~\ref{pb:approx-synth-CliffordT} can be
used to solve the problem of approximate synthesis of arbitrary
special unitaries using Euler angles, as in
Section~\ref{sec:approx-synth}.

Our algorithm solving Problem~\ref{pb:approx-synth-CliffordT} relies
on a reduction of the problem to a grid problem, a Diophantine
equation and an exact synthesis problem. This is analogous to the
reduction described in the Clifford+$V$ case by
Proposition~\ref{prop:red-synth-zrot-CliffordV}. In the Clifford+$T$
context, we must first show that enumerating candidate solutions in
order of increasing denominator exponents allows us to also enumerate
candidate solutions in order of minimal $T$-count. This is not
immediate, due to Proposition~\ref{prop-exact-cliffordT-Tcount}.

\begin{lemma}
  \label{lem:det-solT}
  If $\epsilon<|1-e^{i\pi/8}|$, then all solutions to
  Problem~\ref{pb:approx-synth-CliffordT} have the form
  \begin{equation}\label{eqn-u}
    U = \frac{1}{\rt{k}}
    \begin{bmatrix} 
      u & -t\da \\
      t & u\da
    \end{bmatrix}.
  \end{equation}
  If $\epsilon\geq|1-e^{i\pi/8}|$, then there exists a solution of
  $T$-count 0 (i.e., a Clifford operator), and it is also of the form
  {\eqref{eqn-u}}.
\end{lemma}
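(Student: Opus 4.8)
The plan is to mirror the argument for the Clifford$+V$ case given in Lemma~\ref{lem:det-sol}, adapting the constant $c_V=|1-e^{i\pi/4}|$ to the new threshold $|1-e^{i\pi/8}|$, which reflects the finer symmetry of the Clifford$+T$ group (phases are powers of $\omega=e^{i\pi/4}$, so determinants of Clifford$+T$ operators are eighth roots of unity rather than fourth roots). First I would recall, from the exact synthesis characterization in Proposition~\ref{prop-exact-cliffordT}, that every Clifford$+T$ operator lies in $\tset$, hence every $U\in\uset(2)$ exactly synthesizable over Clifford$+T$ has the form $U=\frac{1}{\rt{k}}\begin{bsmallmatrix}\alpha & \gamma\\ \beta & \delta\end{bsmallmatrix}$ with $\alpha,\beta,\gamma,\delta\in\Zomega$. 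Combining this with the standard parameterization of a $2\times2$ unitary, $U=\begin{bsmallmatrix}a & -b\da e^{i\phi}\\ b & a\da e^{i\phi}\end{bsmallmatrix}$, one sees that any such $U$ with $\det U$ a power of $\omega$ can be written as $U=\frac{1}{\rt{k}}\begin{bsmallmatrix}u & -t\da\omega^{m}\\ t & u\da\omega^{m}\end{bsmallmatrix}$ for some $m\in\N$, $u,t\in\Zomega$. The goal is then to show that when $\epsilon<|1-e^{i\pi/8}|$ the determinant $\omega^m$ must equal $1$, so the operator is of the form \eqref{eqn-u}.

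The key estimate proceeds exactly as in Lemma~\ref{lem:det-sol}: suppose $\norm{U-\Rz(\theta)}\leq\epsilon$, and let $e^{i\phi_1},e^{i\phi_2}$ be the eigenvalues of $U\Rz(\theta)\inv$ with $\phi_1,\phi_2\in[-\pi,\pi]$. Then
\[
|1-e^{i\pi/8}| > \epsilon \geq \norm{U-\Rz(\theta)} = \norm{I-U\Rz(\theta)\inv} = \max\s{|1-e^{i\phi_1}|,|1-e^{i\phi_2}|},
\]
so each $|1-e^{i\phi_j}| < |1-e^{i\pi/8}|$, forcing $-\pi/8 < \phi_j < \pi/8$, hence $-\pi/4 < \phi_1+\phi_2 < \pi/4$. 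Therefore $|1-e^{i(\phi_1+\phi_2)}| < |1-e^{i\pi/4}|$. But $e^{i(\phi_1+\phi_2)} = \det(U\Rz(\theta)\inv) = \det U = \omega^m$, so $|1-\omega^m| < |1-\omega|$, which (since $|1-\omega^m|$ takes discrete values and $|1-\omega|$ is the smallest nonzero one) forces $\omega^m = 1$. This establishes that $U$ has the form \eqref{eqn-u} with determinant $1$.

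For the second statement, when $\epsilon\geq|1-e^{i\pi/8}|$, I would observe that the eight Clifford phase operators $\omega^j I$ partition the circle of $z$-rotation angles into eight arcs of width $\pi/4$ in $\theta/2$ (equivalently, the eigenvalue mismatch for $\Rz(\theta)$ against the nearest $\omega^j I$ is at most $|1-e^{i\pi/8}|$). Concretely: if $\theta/2$ lies within $\pi/8$ of $j\pi/4$ for some $j$, then $\norm{\omega^j I - \Rz(\theta)} = |1 - e^{i(\theta/2 - j\pi/4)}| \leq |1-e^{i\pi/8}| \leq \epsilon$. Each $\omega^j I$ is a Clifford operator (hence $T$-count $0$), and it is of the form \eqref{eqn-u} with $k=0$, $u=\omega^j$, $t=0$. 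The main obstacle is essentially bookkeeping — confirming that the relevant determinant value $\omega^m$ for a matrix of the form \eqref{eqn-u} is exactly $1$ (rather than merely a power of $i$ or $\omega$), and handling the discreteness step $|1-\omega^m|<|1-\omega|\Rightarrow\omega^m=1$ carefully; both are routine once the structure is set up, so I do not anticipate serious difficulty, and the argument is a near-verbatim transcription of the Clifford$+V$ proof with $c_V$ replaced by $|1-e^{i\pi/8}|$.
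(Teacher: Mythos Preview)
Your argument for the first statement is correct and is exactly the analogue of Lemma~\ref{lem:det-sol} that the paper intends: eighth roots of unity replace fourth roots, the threshold drops to $|1-e^{i\pi/8}|$, and the discreteness step $|1-\omega^m|<|1-\omega|\Rightarrow\omega^m=1$ is fine.

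There is a genuine slip in the second statement. The matrix $\omega^jI$ is \emph{not} of the form \eqref{eqn-u} unless $j\in\{0,4\}$: in \eqref{eqn-u} the $(2,2)$ entry is $u\da/\rt{k}$, the conjugate of the $(1,1)$ entry, so a scalar matrix $\omega^jI$ fits only when $\omega^j$ is real. Relatedly, your norm formula is only one of the two diagonal contributions; in fact $\norm{\omega^jI-\Rz(\theta)}=\max\bigl(|1-e^{i(\theta/2-j\pi/4)}|,\,|1-e^{i(\theta/2+j\pi/4)}|\bigr)$, and the second term is not controlled by your hypothesis on $\theta/2$. The parameters you wrote down, however, already point to the fix: with $k=0$, $u=\omega^j$, $t=0$, the matrix of the form \eqref{eqn-u} is $\begin{bsmallmatrix}\omega^j & 0\\ 0 & \omega^{-j}\end{bsmallmatrix}$, not $\omega^jI$. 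This operator is Clifford (e.g.\ a product of powers of $\omega$ and $S$), has $T$-count $0$, and both diagonal entries of the difference with $\Rz(\theta)$ have the same absolute value $|1-e^{i(\theta/2+j\pi/4)}|$, which can indeed be made $\leq|1-e^{i\pi/8}|$ by choosing $j$ so that $\theta/2$ lies within $\pi/8$ of $-j\pi/4$. Replace ``$\omega^jI$'' by this diagonal operator throughout and the argument goes through.
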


\begin{proof}
  Analogous to the proof of Lemma~\ref{lem:det-sol}.
\end{proof}

\begin{lemma}
  \label{lem-2k-2}
  Let $U$ be a unitary operator as in {\eqref{eqn-u}} with least
  denominator exponent $k$. Then the $T$-count of $U$ is either $2k-2$
  or $2k$. Moreover, if $k>0$ and $U$ has $T$-count $2k$, then
  $U'=TUT\da$ has $T$-count $2k-2$.  We further note that
  $\norm{\Rz(\theta) - U'} = \norm{\Rz(\theta) - U}$, so for the
  purpose of solving {\eqref{eqn-norm}}, it does not matter whether
  $U$ or $U'$ is used. Hence, without loss of generality, we may
  assume that $U$ as in {\eqref{eqn-u}} always has $T$-count exactly
  $2k-2$ when $k>0$, and $0$ when $k=0$.
\end{lemma}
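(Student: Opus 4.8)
The plan is to prove Lemma~\ref{lem-2k-2} in two stages: first establish the claim about the $T$-count of $U$, then analyze the effect of conjugating by $T$.

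\textbf{Stage 1: The $T$-count is $2k-2$ or $2k$.} Since $U$ has least denominator exponent $k$, Proposition~\ref{prop-exact-cliffordT-Tcount} immediately gives $2k-3 \leq t \leq 2k+1$, so $t \in \{2k-3, 2k-2, 2k-1, 2k, 2k+1\}$. I would eliminate the odd values using a parity argument. The point is that a unitary of the special form {\eqref{eqn-u}}, with a single denominator exponent $k$ and in which the second column is determined by the first via $u \mapsto -t\da$, $t \mapsto u\da$ (i.e.\ determinant $1$ and the ``antisymmetric'' layout), has a $T$-count of the same parity as $2k$. Concretely, each application of an $H$ or $\omega$ gate in the exact synthesis algorithm of Proposition~\ref{prop-exact-cliffordT} changes the denominator exponent, while each $T$ or $T\da$ multiplies one column entry by $\omega$. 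Tracking the residue of the entries of $\rt{k}U$ modulo $\delta = 1+\omega$ (or modulo $2$) through the synthesis, one sees that reaching $U$ from the identity with $j$ applications of $T^{\pm 1}$ forces $j \equiv 2k \pmod 2$, i.e.\ $t$ is even. Combined with $2k-3\le t\le 2k+1$, this yields $t \in \{2k-2, 2k\}$. I expect this parity bookkeeping (or, alternatively, citing the more refined statements in \cite{KMM-approx} or \cite{ma-remarks} about $T$-count parity for the specific form {\eqref{eqn-u}}) to be the main obstacle, since Proposition~\ref{prop-exact-cliffordT-Tcount} as quoted only gives the coarse bound.

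\textbf{Stage 2: Conjugation by $T$ when $t = 2k$ and $k>0$.} Suppose $U$ as in {\eqref{eqn-u}} has $T$-count $2k$. Set $U' = TUT\da$. A direct computation gives
\[
U' = T U T\da = \frac{1}{\rt{k}}
\begin{bmatrix}
  u & -\omega t\da \\
  \omega\inv t & u\da
\end{bmatrix}
= \frac{1}{\rt{k}}
\begin{bmatrix}
  u & -(\omega\inv t)\da \\
  \omega\inv t & u\da
\end{bmatrix},
\]
using $(\omega\inv t)\da = \omega t\da$ since $\omega\da = \omega\inv$. So $U'$ is again of the form {\eqref{eqn-u}} with $u$ unchanged and $t$ replaced by $\omega\inv t \in \Zomega$; in particular $U'\in\tset$. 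The least denominator exponent of $U'$ is again $k$ (it cannot drop below $k$, since then $U = T\da U' T$ would have least denominator exponent $<k$ by the same computation run backwards, contradicting the hypothesis; and it cannot exceed $k$ since $U'$ is visibly writable with exponent $k$). Hence by Stage~1, $U'$ has $T$-count $2k-2$ or $2k$. To see it is exactly $2k-2$, note that $T$-count is subadditive under composition, so from $U' = TUT\da$ we would get $T\text{-count}(U) \leq T\text{-count}(U') + 2$; if $U'$ had $T$-count $2k$ this gives no contradiction, so instead I argue the reverse: $U = T\da U' T$ gives $2k = T\text{-count}(U) \leq T\text{-count}(U') + 2$, hence $T\text{-count}(U') \geq 2k-2$, and combined with $T\text{-count}(U')\in\{2k-2,2k\}$ this alone is not enough --- so one must use that the $T$-count-$2k$ case for the form {\eqref{eqn-u}} corresponds (via the synthesis normal form of \cite{KMM-approx}) to a circuit whose leftmost and rightmost syllables are both single $T$ gates, which conjugation by $T\da$ removes; I would spell this out by invoking the normal-form characterization. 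The identity $\norm{\Rz(\theta) - U'} = \norm{\Rz(\theta) - U}$ follows because $\norm{\Rz(\theta) - TUT\da} = \norm{T\da\Rz(\theta)T - U} = \norm{\Rz(\theta) - U}$, using unitary invariance of the operator norm and the fact that $T$ and $\Rz(\theta)$ are both diagonal, hence commute, so $T\da \Rz(\theta) T = \Rz(\theta)$.

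\textbf{Conclusion.} Combining the two stages: any $U$ of the form {\eqref{eqn-u}} either already has $T$-count $2k-2$ (or $0$ when $k=0$), or has $T$-count $2k$ with $k>0$, in which case $U' = TUT\da$ has $T$-count $2k-2$ and is equally good for solving {\eqref{eqn-norm}}. Thus we may always assume the representative has $T$-count exactly $2k-2$ when $k>0$ and $0$ when $k=0$, as claimed.
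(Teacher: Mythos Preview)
Your approach is broadly correct and, for the first part, arguably cleaner than the paper's. The paper does not attempt any self-contained argument for the $T$-count claims: it simply cites Figure~2 of \cite{ma-remarks}, observing that among the finitely many possible $k$-residues (modulo the right action of $\groupspan{S,X,\omega}$), only those compatible with the form {\eqref{eqn-u}} can occur, and for these the figure records $T$-count $2k-2$ or $2k$, together with the effect of $T(\,\cdot\,)T\da$.

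Your Stage~1 parity argument is valid, but the cleanest way to phrase it is via determinants rather than residues: since $\det U=1$, $\det T=\omega$, and every Clifford generator has determinant a power of $i=\omega^2$, writing any Clifford+$T$ circuit with $m$ occurrences of $T$ (using $T\da=S^3T$) gives $\det=\omega^{m+2n}$, forcing $m$ even. Combined with Proposition~\ref{prop-exact-cliffordT-Tcount} this yields $t\in\{2k-2,2k\}$ without appealing to \cite{ma-remarks} at all, which is a genuine improvement over the paper's citation.

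For Stage~2, you correctly recognise that subadditivity alone cannot separate $2k-2$ from $2k$, and you fall back on the Matsumoto--Amano normal form. That is exactly what the paper does too, except it points to a specific figure rather than describing the mechanism. Your description of the $T$-count-$2k$ case as having ``leftmost and rightmost syllables both single $T$ gates'' is not quite how the normal form is structured (the syllables are of the form $HT$ or $SHT$ with an optional leading $T$), so that sentence would need to be replaced by the actual case inspection; but the conclusion and the reference are the same as the paper's. Your norm argument is identical to the paper's.
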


\begin{proof}
  The claims about the $T$-counts of $U$ and $U'$ follow by inspection
  of Figure~2 of {\cite{ma-remarks}}. Using the terminology of
  Definitions~7.4 and 7.6 of {\cite{ma-remarks}}, this figure shows
  every possible $k$-residue of a Clifford+$T$ operator, modulo a
  right action of the group $\groupspan{S,X,\omega}$. Because $U$ is
  of the form {\eqref{eqn-u}}, only a subset of the $k$-residues is
  actually possible, and the figure shows that for this subset, the
  $T$-count is $2k$ or $2k-2$. Moreover, in each of the possible cases
  where $k>0$ and $U$ has $T$-count $2k$, the figure also shows that
  $U'=TUT\da$ has $T$-count $2k-2$.

  For the final claim, we have $\norm{\Rz(\theta) -
    U}=\norm{T\Rz(\theta)T\da - TUT\da} = \norm{\Rz(\theta) - U'}$
  because $\Rz(\theta)$ and $T$ commute.
\end{proof}

We can now state a reduction for
Problem~\ref{pb:approx-synth-CliffordT} as we did in
Proposition~\ref{prop:red-synth-zrot-CliffordT} for
Problem~\ref{pb:approx-synth-CliffordV}.

\begin{proposition}
  \label{prop:red-synth-zrot-CliffordT}
  Problem~\ref{pb:approx-synth-CliffordT} reduces to a grid problem, a
  Diophantine equation, and an exact synthesis problem, namely:
  \begin{enumerate}
  \item[1.] find $k\in\N$ and $\alpha\in\Zomega$ such that $\alpha \in
    \rt{k}\Repsilon$ and $\alpha\bul\in (-\sqrt{2})^k\Disk$,
  \item[2.] find $\beta\in\Zrt$ such that $ \beta\da\beta = 2^k -
    \alpha\da\alpha$, and
  \item[3.] define the unitary matrix $U$ as
    \[
    U= \frac{1}{\rt{k}}\begin{bmatrix}
      \alpha & -\beta\da \\
      \beta & \alpha\da
    \end{bmatrix}
    \]
    and find a Clifford+$T$ circuit for $U$ or $TUT\da$, whichever has
    the smaller $T$-count.
  \end{enumerate}
  Moreover, the least $k$ for which the above three problems can be
  solved yields an optimal solution to
  Problem~\ref{pb:approx-synth-CliffordT}.
\end{proposition}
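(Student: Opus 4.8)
The plan is to mirror the proof of Proposition~\ref{prop:red-synth-zrot-CliffordV}, making the necessary adjustments for the Clifford+$T$ setting. First I would fix $\epsilon$ and $\theta$ and, using Lemma~\ref{lem:det-solT}, reduce to considering only candidate unitaries $U$ of the form \eqref{eqn-u}. By Proposition~\ref{prop-exact-cliffordT}, such a $U$ is exactly synthesizable over Clifford+$T$ if and only if $U\in\tset$, which means its entries lie in $\Z[1/\sqrt 2, i]$; writing $U$ as in \eqref{eqn-u} with denominator exponent $k$, this forces $u,t\in\Zomega$. The extra twist compared to the $V$ case is that membership in $\tset$ is not just a denominator condition on $\Zi$-entries but also requires $u\da u + t\da t = 2^k$ together with the automorphic shadow condition: applying $(-)\bul$ to the unitarity relation gives $u\bul{}\da u\bul + t\bul{}\da t\bul = (2^k)\bul = 2^k$ as well, but since $(-)\bul$ does not preserve positivity in general, we instead get that $u\bul$ must satisfy $|u\bul|^2 \le 2^k$, i.e. $\alpha\bul\in(-\sqrt 2)^k\Disk$. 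This is precisely the second grid constraint in item~1.

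Next I would carry out the norm computation exactly as in the $V$ case: setting $z = e^{-i\theta/2}$, $\alpha' = \alpha/\rt k$, $\beta' = \beta/\rt k$, and using ${\alpha'}\da\alpha' + {\beta'}\da\beta' = 1$ and $z\da z = 1$, one gets
\[
\norm{U - \Rz(\theta)}^2 = 2 - 2\Realpart(z\da\alpha'),
\]
so $\norm{U-\Rz(\theta)}\le\epsilon$ is equivalent to $\Realpart(z\da\alpha')\ge 1-\epsilon^2/2$, which is exactly the condition $\alpha\in\rt k\Repsilon$ from item~1. The Diophantine equation in item~2 then records that, given $\alpha$, we need a $\beta$ with $\beta\da\beta = 2^k - \alpha\da\alpha$; here $2^k - \alpha\da\alpha$ lies in $\Zrt$ (it equals $\beta'\da\beta'\cdot 2^k$ up to the constraints) and the required $\beta$ lies in $\Zomega$, matching Problem~\ref{pb:diophantine-Zomega}. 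Item~3 is then the exact synthesis step supplied by Proposition~\ref{prop-exact-cliffordT}, with the choice between $U$ and $TUT\da$ dictated by Lemma~\ref{lem-2k-2}.

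For the optimality claim, I would argue as follows. By Lemma~\ref{lem:det-solT} every solution has the form \eqref{eqn-u}, so it corresponds to some $\alpha\in\Zomega$ satisfying both grid constraints and some $\beta\in\Zomega$ solving the norm equation, with some denominator exponent $k$. By Lemma~\ref{lem-2k-2}, after possibly replacing $U$ by $TUT\da$ (which does not change the approximation error), a solution with least denominator exponent $k>0$ has $T$-count exactly $2k-2$, and one with $k=0$ has $T$-count $0$; in either case $T$-count is a strictly increasing function of $k$. Hence enumerating candidates in order of increasing $k$ is the same as enumerating them in order of increasing $T$-count, so the least $k$ for which items~1--3 are all solvable yields a solution of minimal $T$-count, i.e.\ an optimal solution to Problem~\ref{pb:approx-synth-CliffordT}.

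The main obstacle, and the only genuinely new content relative to the $V$-case, is correctly identifying the bullet constraint $\alpha\bul\in(-\sqrt 2)^k\Disk$: one must check that this condition together with $\alpha\in\rt k\Repsilon$ and the solvability of $\beta\da\beta = 2^k-\alpha\da\alpha$ over $\Zomega$ is not merely necessary but also sufficient for $U$ as in item~3 to have entries in $\Zomega/\rt k$ and hence lie in $\tset$. This requires going back to Proposition~\ref{prop:zomega} and the characterization of $\Zomega$, and verifying that the $(-\sqrt 2)^k$ scaling factor (rather than $\rt k$) is the right one, which comes from $\lambda^{-1} = -\lambda\bul$, equivalently $\sqrt 2{}\bul = -\sqrt 2$. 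The rest is a routine transcription of the Clifford+$V$ argument with $\rf\ell$ removed and the single exponent $k$ in place of the pair $(k,\ell)$.
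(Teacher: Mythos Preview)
Your proposal is correct and follows essentially the same approach as the paper, which treats this proposition by analogy with Proposition~\ref{prop:red-synth-zrot-CliffordV} and then explains the one new ingredient, the bullet constraint, in the paragraph following the statement. One small point of framing: you describe the last paragraph's task as checking that the bullet constraint (together with the $\Repsilon$-constraint and the Diophantine condition) is ``not merely necessary but also sufficient for $U$ to have entries in $\Zomega/\rt k$''; in fact membership in $\tset$ is automatic once $\alpha,\beta\in\Zomega$ and $\alpha\da\alpha+\beta\da\beta=2^k$, and the bullet constraint plays no role in that direction. Its actual purpose, as the paper emphasizes, is the \emph{necessary} direction: applying $(-)\bul$ to $\alpha\da\alpha+\beta\da\beta=2^k$ gives $|\alpha\bul|^2+|\beta\bul|^2=2^k$, hence $\alpha\bul\in(-\sqrt 2)^k\Disk$, and adding this redundant-for-sufficiency condition to item~1 turns an otherwise infinite search (since $\Zomega$ is dense in $\Comp$) into a well-posed grid problem over $\Zomega$ in the sense of Problem~\ref{pb:grid-Zomega}.
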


Note that item~1 of Proposition~\ref{prop:red-synth-zrot-CliffordT} is
a grid problem over $\Zomega$. This is in contrast with the
corresponding item of Proposition~\ref{prop:red-synth-zrot-CliffordT},
which was a grid problem over $\Zi$. In both cases, we look for points
in a scaled $\epsilon$-region in order of increasing denominator
exponent. However, in the Clifford+$T$ case, the desired points must
be elements of $\Zomega$. Since $\Zomega$ is dense in $\R^2$, there
are infinitely many elements in $\Repsilon\cap\Zomega$ for any fixed
denominator exponent. To circumvent this issue, we only consider those
elements of $\Repsilon\cap\Zomega$ for which the Diophantine equation
of item~2 can potentially be solved. Since we have
\[
\alpha\da\alpha + \beta\da\beta = 2^k \Longrightarrow \alpha \in
\rt{k}\Disk \mbox{ and } \alpha\bul \in (-\sqrt{2})^k\Disk,
\]
where $\Disk$ is the closed unit disk, the points of interest are
precisely the solutions to the scaled grid problem over $\Zomega$ for
$\Repsilon$ and $\Disk$.

\begin{algorithm}
  \label{algo:CliffordT}
  Let $\theta$ and $\epsilon>0$ be given.
  \begin{enumerate}
  \item[1.] Use the algorithm from
    Proposition~\ref{prop:algo-grid-pb-scaled-Zomega} of
    Chapter~\ref{chap:grid-pb} to enumerate the infinite sequence of
    solutions to the scaled grid problem over $\Zomega$ for
    $\Repsilon$ and $\Disk$ and $k$ in order of increasing $k$.
  \item[2.] For each solution $\alpha$:
    \begin{enumerate}
    \item[(a)] Let $\xi = 2^k - \alpha\da\alpha$ and $n=\xi\bul\xi$.
    \item[(b)] Attempt to find a prime factorization of $n$. If $n\neq
      0$ but no prime factorization is found, skip step 2(c) and
      continue with the next $\alpha$.
    \item[(c)] Use the algorithm from
      Proposition~\ref{prop:diophantineZomega} of
      Chapter~\ref{chap:nb-th} to solve the equation $\beta\da \beta =
      n$. If a solution $\beta$ exists, go to step 3; otherwise,
      continue with the next $\alpha$.
    \end{enumerate}
  \item[3.] Define $U$ as
    \[
    U= \frac{1}{\rt{k}}\begin{bmatrix}
      \alpha & -\beta\da \\
      \beta & \alpha\da
    \end{bmatrix}
    \]
    and use the exact synthesis algorithm of
    Proposition~\ref{prop-exact-cliffordT} to find a Clifford+$V$
    circuit for $U$ or $TUT\da$, whichever has the smallest
    $T$-count. Output this circuit and stop.
  \end{enumerate}
\end{algorithm}

We now state the properties of Algorithm~\ref{algo:CliffordT}. In most
cases it enjoys the same properties as the Clifford+$V$ algorithm.

\begin{proposition}[Correctness]
  \label{prop-correctnessT}
  If Algorithm~\ref{algo:CliffordT} terminates, then it yields a valid
  solution to the approximate synthesis problem, i.e., it yields a
  Clifford+$T$ circuit approximating $\Rz(\theta)$ up to $\epsilon$.
\end{proposition}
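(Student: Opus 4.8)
The statement (Correctness of Algorithm~\ref{algo:CliffordT}) asserts that if the algorithm halts, it outputs a Clifford+$T$ circuit approximating $\Rz(\theta)$ to within $\epsilon$. The plan is to simply trace through the three steps of the algorithm and invoke the reduction already established in Proposition~\ref{prop:red-synth-zrot-CliffordT}, exactly as the Clifford+$V$ correctness proof (Proposition~\ref{prop-correctness}) invokes Proposition~\ref{prop:red-synth-zrot-CliffordV}. This is a bookkeeping proof: no new mathematics is needed, only the verification that the data produced at each step satisfies the hypotheses of the next.

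Concretely, I would argue as follows. Suppose the algorithm terminates; it can only do so by reaching step~3 with some $\alpha\in\Zomega$ and some $\beta\in\Zrt$ in hand. By construction of step~1 (using Proposition~\ref{prop:algo-grid-pb-scaled-Zomega}), the element $\alpha$ is a solution to the scaled grid problem over $\Zomega$ for $\Repsilon$, $\Disk$, and some $k\in\N$; that is, $\alpha\in\rt{k}\Repsilon$ and $\alpha\bul\in(-\sqrt2)^k\Disk$. This is precisely item~1 of Proposition~\ref{prop:red-synth-zrot-CliffordT}. Next, the algorithm only proceeds past step~2(c) when the call to the algorithm of Proposition~\ref{prop:diophantineZomega} returns a genuine solution $\beta$ to $\beta\da\beta = n$, where $n=\xi\bul\xi$ and $\xi = 2^k - \alpha\da\alpha\in\Zrt$; one then checks that $\beta\da\beta=n=\xi\bul\xi$ together with $\xi\in\Zrt$ forces $\beta\da\beta=\xi$ (up to the usual sign/associate considerations handled in Chapter~\ref{chap:nb-th}), so $\beta\da\beta = 2^k - \alpha\da\alpha$, which is item~2. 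Finally, step~3 forms the matrix $U=\frac{1}{\rt k}\begin{bmatrix}\alpha & -\beta\da\\ \beta & \alpha\da\end{bmatrix}$; since $\alpha\da\alpha+\beta\da\beta = 2^k$ this $U$ is unitary, and since its entries lie in $\Zomega[1/\sqrt2]$ it is an element of $\tset$, hence exactly synthesizable by Proposition~\ref{prop-exact-cliffordT}; the same applies to $TUT\da$. Therefore the circuit output in step~3 represents $U$ (or $TUT\da$) exactly.

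It remains to observe that $U$ (hence also $TUT\da$, by the final claim of Lemma~\ref{lem-2k-2}, since $\norm{\Rz(\theta)-TUT\da}=\norm{\Rz(\theta)-U}$) approximates $\Rz(\theta)$ to within $\epsilon$. But this is exactly the equivalence established inside the proof of Proposition~\ref{prop:red-synth-zrot-CliffordV} and reused in the Clifford+$T$ reduction: the condition $\alpha\in\rt k\Repsilon$ is equivalent, via the computation $\norm{U-\Rz(\theta)}^2 = 2 - 2\Realpart(z\da\alpha')$ with $z=e^{-i\theta/2}$ and $\alpha'=\alpha/\rt k$, to $\norm{U-\Rz(\theta)}\le\epsilon$. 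So the output circuit does approximate $\Rz(\theta)$ up to $\epsilon$, as required.

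There is essentially no obstacle here; the only point requiring a word of care is matching up the two Diophantine quantities — the algorithm solves $\beta\da\beta = n$ with $n=\xi\bul\xi$, whereas the reduction asks for $\beta\da\beta = \xi$ with $\xi=2^k-\alpha\da\alpha$ — and confirming that these coincide given the necessary-condition analysis (the automorphism $(-)\bul$ and the sign constraints $\xi\ge 0$, $\xi\bul\ge 0$) from Chapter~\ref{chap:nb-th}, together with the definition of $n$ in step~2(a). Once that identification is noted, the proof is a one-line appeal: "By construction, following the reduction described by Proposition~\ref{prop:red-synth-zrot-CliffordT}," mirroring the proof of Proposition~\ref{prop-correctness}.
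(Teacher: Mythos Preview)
Your approach is correct and matches the paper's: the paper simply says correctness is ``proved like the corresponding proposition in Chapter~\ref{chap:synth-V},'' i.e., by construction via the reduction of Proposition~\ref{prop:red-synth-zrot-CliffordT}, and your expansion of that one-liner is faithful. One small correction: the discrepancy you flag between ``$\beta\da\beta=n$'' in step~2(c) and ``$\beta\da\beta=\xi$'' in the reduction is a typo in the algorithm statement (compare step~2(c) of Algorithm~\ref{alg-phase8}, which writes it correctly as $t\da t=\xi$); Proposition~\ref{prop:diophantineZomega} takes the factorization of $n=\xi\bul\xi$ as \emph{input} in order to solve $\beta\da\beta=\xi$, so there is nothing to ``force'' --- your attempted argument that $\beta\da\beta=\xi\bul\xi$ implies $\beta\da\beta=\xi$ would not actually go through, but it is also unnecessary.
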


\begin{proposition}[Optimality in the presence of a factoring oracle]
  \label{prop-optimalityT}
  In the presence of an oracle for integer factoring, the circuit
  returned by Algorithm~\ref{algo:CliffordT} has the smallest
  $T$\!-count of any single-qubit Clifford+$T$ circuit approximating
  $\Rz(\theta)$ up to $\epsilon$.
\end{proposition}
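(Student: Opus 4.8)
\textbf{Proof proposal for Proposition~\ref{prop-optimalityT}.}
The plan is to mirror the proof of Proposition~\ref{prop-optimality} (optimality in the Clifford+$V$ case), adapting it to account for the extra subtleties introduced by Proposition~\ref{prop-exact-cliffordT-Tcount} and Lemma~\ref{lem-2k-2}. First I would dispose of the easy case: if $\epsilon \geq |1-e^{i\pi/8}|$, then by Lemma~\ref{lem:det-solT} there is a solution of $T$-count $0$, namely a Clifford operator, and Algorithm~\ref{algo:CliffordT} finds it immediately (the smallest $k$ is $k=0$, and step~2 succeeds trivially since $\xi = 1 - \alpha\da\alpha = 0$). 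So assume $\epsilon < |1-e^{i\pi/8}|$, and by Lemma~\ref{lem:det-solT} every solution has the form~\eqref{eqn-u}.

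The core argument is the following. By Proposition~\ref{prop:red-synth-zrot-CliffordT}, the approximate synthesis problem with an optimal $T$-count is equivalent to finding the least $k$ for which problems~1--3 can be solved; and by Lemma~\ref{lem-2k-2}, for a solution of the form~\eqref{eqn-u} with least denominator exponent $k$, the $T$-count is $2k-2$ when $k > 0$ (after possibly replacing $U$ by $TUT\da$) and $0$ when $k = 0$. In particular, $T$-count is a strictly monotone function of $k$ across all solutions of the form~\eqref{eqn-u}. Now step~1 of Algorithm~\ref{algo:CliffordT} enumerates, by Proposition~\ref{prop:algo-grid-pb-scaled-Zomega}, the solutions $\alpha$ to the scaled grid problem over $\Zomega$ for $\Repsilon$ and $\Disk$ \emph{in order of increasing $k$}. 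For each such $\alpha$, step~2(a) always succeeds in computing $\xi$ and $n$; step~2(b), in the presence of a factoring oracle, always succeeds; and step~2(c), using Proposition~\ref{prop:diophantineZomega}, correctly decides whether $\beta\da\beta = n$ has a solution $\beta\in\Zomega$ — and hence, by tracking through the condition $\alpha\da\alpha + \beta\da\beta = 2^k$ together with the grid constraints $\alpha\bul \in (-\sqrt{2})^k\Disk$, whether there is a genuine solution $U$ of the form~\eqref{eqn-u} with denominator exponent $k$. Therefore the algorithm halts at precisely the first (smallest) $k$ for which a solution of the form~\eqref{eqn-u} exists, and by Proposition~\ref{prop:red-synth-zrot-CliffordT} this $k$ is the least for which problems~1--3 are jointly solvable. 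Combined with the monotonicity observed above, the returned circuit has the smallest possible $T$-count.

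I would take some care with one bookkeeping point, which I expect to be the main obstacle: verifying that the $\alpha$ enumerated by the scaled grid problem over $\Zomega$ for $\Repsilon$ and $\Disk$ really are \emph{all} the possible top-left entries of solutions of the form~\eqref{eqn-u}, and not a strict subset. This is exactly why the problem is posed with the auxiliary region $\Disk$ (so that $\alpha\bul$ lands in $(-\sqrt{2})^k\Disk$): the equation $\alpha\da\alpha + \beta\da\beta = 2^k$ with $\beta\da\beta \geq 0$ and $\beta\bul{}\da\beta\bul \geq 0$ forces both $\alpha \in \rt{k}\Disk$ and $\alpha\bul \in (-\sqrt 2)^k \Disk$, so the grid problem over $\Zomega$ for $\Repsilon$ and $\Disk$ captures a superset of the relevant $\alpha$, and step~2 filters out those for which no matching $\beta$ exists. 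One must also confirm that the replacement $U \leadsto TUT\da$ in step~3 does not change which $\theta$ is being approximated nor the approximation error — but this is exactly the content of the final sentence of Lemma~\ref{lem-2k-2}, namely $\norm{\Rz(\theta) - TUT\da} = \norm{\Rz(\theta) - U}$, since $\Rz(\theta)$ and $T$ commute. With these checks in place, the argument is a routine translation of the Clifford+$V$ proof, so I would keep the writeup short and simply cite Proposition~\ref{prop-near-optimality}'s proof structure where the reasoning is identical.
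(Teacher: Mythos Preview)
Your proposal is correct and takes essentially the same approach as the paper, which in fact gives no separate proof at all: it simply states that optimality is ``proved like the corresponding propositions in Chapter~\ref{chap:synth-V}.'' Your write-up is a careful unpacking of exactly that analogy, correctly flagging the two extra ingredients needed in the Clifford+$T$ setting (the monotonicity of $T$-count in $k$ via Lemma~\ref{lem-2k-2}, and the role of the auxiliary region $\Disk$ in ensuring the grid problem enumerates all candidate $\alpha$), so nothing is missing.
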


Correctness and optimality are proved like the corresponding
propositions in Chapter~\ref{chap:synth-V}.

Here also, we rely on a number-theoretic assumption on the
distribution of primes to establish the remaining properties of the
algorithm.

\begin{hypothesis}
  \label{hyp-numbers-T}
  For each $n$ produced in step 2(a) of
  Algorithm~\ref{algo:CliffordT}, write $n=2^jm$, where $m$ is
  odd. Then $m$ is asymptotically as likely to be prime as a randomly
  chosen odd number of comparable size. Moreover, the primality of
  each $m$ can be modelled as an independent random event.
\end{hypothesis}

Note that Hypothesis~\ref{hyp-numbers-T} is slightly different than
Hypothesis~\ref{hyp-numbers-V}. Indeed, Hypothesis~\ref{hyp-numbers-V}
makes an additional assumption on the residue class of the integer
$m$. Here it is not necessary to make such an assumption, since we can
prove that the number $n$ produced in step 2(a) of the algorithm
satisfies $n \geq 0$ and moreover is such that either $n=0$ or
$n\equiv 1 ~(\mymod 8)$. A proof of this fact can be found in Appendix
D of \cite{gridsynth}.

\begin{proposition}[Near-optimality in the absence of a factoring
  oracle]
  \label{prop-near-optimalityT}
  Let $m$ be the $T$\!-count of the solution of the approximate
  synthesis problem found by Algorithm~\ref{algo:CliffordT} in the
  absence of a factoring oracle. Then
  \begin{enumerate}
  \item The approximate synthesis problem has at most
    $O(\log(1/\epsilon))$ non-equivalent solutions with $T$-count less
    than $m$.
  \item The expected value of $m$ is $m''+O(\log(\log(1/\epsilon)))$,
    where $m'$ and $m''$ are the $T$-counts of the optimal and
    second-to-optimal solutions of the approximate synthesis problem
    (up to equivalence).
  \end{enumerate}
\end{proposition}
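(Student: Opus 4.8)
The plan is to mirror the structure of the Clifford+$V$ proof of Proposition~\ref{prop-near-optimality}, substituting the Clifford+$T$ ingredients at each step. The key correspondences are: Hypothesis~\ref{hyp-numbers-T} in place of Hypothesis~\ref{hyp-numbers-V}, Proposition~\ref{prop:diophantineZomega} in place of Proposition~\ref{prop:diophantineZi}, Proposition~\ref{prop:prime-1mod8-Zomega} in place of Proposition~\ref{prop:prime-1mod4-Zi}, Lemma~\ref{lem-exponential-grid} in place of Proposition~\ref{prop:evolution-grid-Zi}, and Proposition~\ref{prop-exact-cliffordT-Tcount} (together with Lemma~\ref{lem-2k-2}) to translate between denominator exponents and $T$-counts. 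First I would dispose of the easy case: if $\epsilon \geq |1-e^{i\pi/8}|$, then by Lemma~\ref{lem:det-solT} there is a solution of $T$-count $0$ and Algorithm~\ref{algo:CliffordT} finds it, so both claims are vacuous; hence assume $\epsilon < |1-e^{i\pi/8}|$ and use Lemma~\ref{lem:det-solT} to conclude all solutions have the form {\eqref{eqn-u}}, with $T$-count $2k-2$ by Lemma~\ref{lem-2k-2} (where $k$ is the least denominator exponent). This last point is the crucial structural difference from the $V$ case: because the $T$-count is an affine function of $k$ with slope $2$, enumerating candidates in order of increasing $k$ is equivalent to enumerating in order of increasing $T$-count, so it suffices to control $k$.

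For part~(i), I would follow the $V$-case argument closely. Let $\alpha_1,\alpha_2,\ldots$ be the candidates generated in step~1, with least denominator exponents $k_1,k_2,\ldots$ and associated integers $n_1,n_2,\ldots$ from step~2(a). The key estimate is a bound of the form $n_j = O((\sqrt2)^{2k_j})$ (which I would verify from $\xi=2^{k}-\alpha\da\alpha$, $n=\xi\bul\xi$, and the fact that $\alpha\in\rt{k}\Disk$, $\alpha\bul\in(-\sqrt2)^k\Disk$), so that by the prime number theorem and Hypothesis~\ref{hyp-numbers-T} the probability $p_j$ that the odd part of $n_j$ is prime is $\Omega(1/k_j)$. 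Since any two of $k_1,k_2,k_3$ need not be congruent mod~$2$ (unlike the $V$ case, where $\sqrt5$ forced a parity constraint) — but here this is not needed because Lemma~\ref{lem-exponential-grid} only requires two distinct solutions at a common denominator exponent, which we get directly from $\alpha_2,\alpha_3$ by taking the larger of their exponents — I would apply Lemma~\ref{lem-exponential-grid} to conclude that the number of candidates of denominator exponent $k_3+2\ell$ grows like $2^\ell$, giving $k_j \leq k_3 + O(\log j)$. The geometric bound $k_3 = O(\log(1/\epsilon))$ follows exactly as before, since $\Repsilon$ still contains a disk of radius $\epsilon^2/4$ and $\Disk$ contains a disk of radius $1$, so Lemma~\ref{lem-2-solutions} (with $r=\epsilon^2/4$, $R=1$) gives at least two solutions once $k = \Omega(\log(1/\epsilon))$, and one more argument yields three. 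Then the same geometric-series estimate as in {\eqref{eqn-ej0}} gives $E(j_0) = O(k_3) = O(\log(1/\epsilon))$, where $j_0$ is the first index with $n_{j_0}$ having prime odd part; by Proposition~\ref{prop:prime-1mod8-Zomega} (using that $n \equiv 1 \pmod 8$ when $n>0$, as noted in the text) the Diophantine equation then solves, so the algorithm halts at index $\leq j_0$, and the number $s$ of non-equivalent solutions of smaller $T$-count is $<j_0$, whence $E(s) = O(\log(1/\epsilon))$.

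For part~(ii), let $U'$ and $U''$ be the optimal and second-to-optimal (up to equivalence) solutions, with top-left entries of least denominator exponents $k'$ and $k''$; note $k_3 \leq k''$ is not quite what is needed — I would instead observe that $k_3$, the denominator exponent of the third distinct candidate in the enumeration, is at most $k''$ is also not immediate, so the cleaner route is to note that in the Clifford+$T$ case the enumeration produces candidates in nondecreasing denominator exponent, hence $\alpha_1$ and $\alpha_2$ are among the candidates realizing $k'$ and $k''$ (or smaller), and by Lemma~\ref{lem-exponential-grid} applied at exponent $\max\{k_1,k_2\}\leq k''$ we get the exponential growth, yielding $k_j \leq k'' + O(\log j)$. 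Combining with $E(j_0) = O(\log(1/\epsilon))$ and Jensen's inequality $E(\log j_0) \leq \log E(j_0)$ (valid by concavity of $\log$) gives $E(k_{j_0}) \leq k'' + O(\log\log(1/\epsilon))$, and translating denominator exponents to $T$-counts via the relation $t = 2k-2$ from Lemma~\ref{lem-2k-2} gives $E(m) = m'' + O(\log\log(1/\epsilon))$. The only genuine subtlety — and the step I expect to be the main obstacle — is bookkeeping the difference between the $V$ and $T$ cases in this last translation: in the $V$ case the optimality statement involved three solutions ($\ell',\ell'',\ell'''$) because of the mod-$2$ parity obstruction on $\sqrt5$-exponents forcing one to step from the third candidate, whereas here no such parity obstruction exists (the $\Zomega$ grid is dense and Lemma~\ref{lem-exponential-grid} needs only two distinct solutions), so the statement correctly involves only $m'$ and $m''$. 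I would take care to verify that $\alpha_1$ and $\alpha_2$ indeed give two genuinely distinct solutions at a common denominator exponent $\leq k''$, so that Lemma~\ref{lem-exponential-grid} applies with the right base exponent; this is where the argument genuinely shortens by one level relative to the $V$ case, and getting the indexing exactly right is the part that requires care rather than calculation.
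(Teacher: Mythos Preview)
Your proposal is correct and matches the paper's approach: the paper does not write out an explicit proof but states that it is analogous to Proposition~\ref{prop-near-optimality}, with the sole difference being that $\sqrt{2}\in\Zomega$ (whereas $\sqrt{5}\notin\Zi$) eliminates the parity obstruction, so two candidates $\alpha_1,\alpha_2$ already share a common denominator exponent $k_2$ and Lemma~\ref{lem-exponential-grid} applies from there---exactly the point you identify. One small cleanup: in part~(i) you can use $\alpha_1,\alpha_2$ and $k_2$ throughout rather than $\alpha_2,\alpha_3$ and $k_3$, which makes the argument uniform with your part~(ii) and is precisely the simplification the paper's remark is pointing to.
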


Note that in Proposition~\ref{prop-near-optimalityT}, we use the
second-to-optimal solution, rather than the third-to-optimal solution
as in Proposition~\ref{prop-near-optimality}. This is due to the fact
that $\sqrt{2}\in\Zomega$ whereas $\sqrt{5}\notin\Zi$. Indeed, if
$\alpha$ and $\alpha'$ are two solutions of least denominator exponent
$k$ and $k'$ with $k\leq k'$, then they are both solutions of
denominator exponent $k'$. But in the case of the Clifford+$V$ gates,
we need to have three solutions to guarantee that two will have the
same denominator exponent.

The last property of the algorithm can be proved just like the
corresponding one from Chapter~\ref{chap:synth-V}.

\begin{proposition}
  Algorithm~\ref{algo:CliffordT} runs in expected time
  $O(\polylog(1/\epsilon))$. This is true whether or not a
  factorization oracle is used.
\end{proposition}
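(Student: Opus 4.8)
The plan is to mirror the time-complexity proof from Chapter~\ref{chap:synth-V} almost verbatim, since Algorithm~\ref{algo:CliffordT} has the same three-phase structure (scaled grid problem, Diophantine equation, exact synthesis) as Algorithm~\ref{algo:CliffordV}. First I would set up the same bookkeeping: let $M$ be the uprightness of the $\epsilon$-region $\Repsilon$, let $j_0$ be the expected number of candidates $\alpha$ tried in steps 2(a)--(c), let $k_{j_0}$ be the least denominator exponent of the successful candidate, and let $n$ be the largest integer fed to step 2(b). Then I would tally the arithmetic operations phase by phase: step~1 costs $O(\log(1/M))$ overall plus $O(1)$ per candidate by Proposition~\ref{prop:algo-grid-pb-scaled-Zomega}; step~2(a) costs $O(1)$ per candidate; step~2(b) costs $O(1)$ per candidate (either a factoring oracle, or a fixed effort bound that at least succeeds on primes, which suffices for the estimates of Proposition~\ref{prop-near-optimalityT}); step~2(c) costs $O(\polylog(n))$ by Proposition~\ref{prop:diophantineZomega}; and step~3 costs $O(k_{j_0})$ by Proposition~\ref{prop-exact-cliffordT} (noting $T$-count and denominator exponent differ only by a constant via Proposition~\ref{prop-exact-cliffordT-Tcount}). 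This gives a total expected count of $O(\log(1/M)) + j_0\cdot O(\polylog(n)) + O(k_{j_0})$.

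Next I would bound each of the three quantities in terms of $1/\epsilon$. As in Chapter~\ref{chap:synth-V}, the $\epsilon$-region contains a disk of radius $\epsilon^2/4$ and sits inside $[-1,1]\times[-1,1]$, so $M=\up(\Repsilon)=\Omega(\epsilon^4)$ and hence $\log(1/M)=O(\log(1/\epsilon))$. For $j_0$ and $k_{j_0}$, I would invoke the analogues of equations \eqref{eqn-kj}, \eqref{eqn-k2}, and \eqref{eqn-ej0-eps} established (implicitly) in the proof of Proposition~\ref{prop-near-optimalityT}: the expected value of $j_0$ is $O(\log(1/\epsilon))$, and $k_{j_0} \leq k_3 + 2(1+\log_2 j_0) = O(\log(1/\epsilon)) + O(\log\log(1/\epsilon)) = O(\log(1/\epsilon))$. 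Here one uses Hypothesis~\ref{hyp-numbers-T} exactly as Hypothesis~\ref{hyp-numbers-V} was used; the residue-class refinement is not needed because step~2(a) already guarantees $n=0$ or $n\equiv 1\mmod 8$ (Appendix~D of \cite{gridsynth}). Since candidates are enumerated in order of increasing denominator exponent and $n_i\leq C\cdot 2^{2k_i}$ for a suitable constant $C$ (from $n=\xi\bul\xi$ with $\xi=2^k-\alpha\da\alpha$ and $\alpha\in\rt{k}\Disk$, $\alpha\bul\in(-\sqrt2)^k\Disk$), we get $n\leq 2^{O(k_{j_0})}$, hence $\polylog(n)=O(\poly(k_{j_0}))=O(\polylog(1/\epsilon))$.

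Combining these estimates yields an expected $O(\polylog(1/\epsilon))$ arithmetic operations. Finally I would note, as in the Clifford+$V$ case, that each arithmetic operation is performed at precision $O(\log(1/\epsilon))$ and so takes time $O(\polylog(1/\epsilon))$, giving a total expected time complexity of $O(\polylog(1/\epsilon))$, independent of whether a factoring oracle is used.

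I do not anticipate a genuine obstacle here: the argument is structurally identical to the Clifford+$V$ proof, and every ingredient (the grid-problem complexity of Proposition~\ref{prop:algo-grid-pb-scaled-Zomega}, the Diophantine solver of Proposition~\ref{prop:diophantineZomega}, the exact-synthesis cost of Proposition~\ref{prop-exact-cliffordT}, and the candidate-count estimates inside Proposition~\ref{prop-near-optimalityT}) has already been assembled in the excerpt. The one point requiring a sentence of care is the passage from $\Zi$ to $\Zomega$: because $\Zomega$ is dense in $\R^2$, one must be explicit that step~1 enumerates only the solutions of the scaled grid problem over $\Zomega$ for $\Repsilon$ \emph{and} $\Disk$ (the extra constraint $\alpha\bul\in(-\sqrt2)^k\Disk$ is what makes the set of candidates of a given denominator exponent finite), and that $n$ is therefore genuinely bounded by $2^{O(k_{j_0})}$. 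Accordingly, in the proof I would simply write: ``Analogous to the proof of Proposition~\ref{prop-near-optimality} and the corresponding time-complexity result in Chapter~\ref{chap:synth-V}, using Proposition~\ref{prop:algo-grid-pb-scaled-Zomega}, Proposition~\ref{prop:diophantineZomega}, Proposition~\ref{prop-exact-cliffordT}, and Hypothesis~\ref{hyp-numbers-T} in place of their Clifford+$V$ counterparts, and noting that the constraint $\alpha\bul\in(-\sqrt2)^k\Disk$ in step~1 keeps the integer $n$ of step~2(a) bounded by $2^{O(k_{j_0})}$.''
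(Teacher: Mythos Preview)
Your proposal is correct and takes exactly the approach the paper takes: the paper simply states that ``the last property of the algorithm can be proved just like the corresponding one from Chapter~\ref{chap:synth-V}'' and gives no further detail. Your plan is a faithful (indeed, more detailed) unpacking of that analogy, invoking Propositions~\ref{prop:algo-grid-pb-scaled-Zomega}, \ref{prop:diophantineZomega}, \ref{prop-exact-cliffordT}, and Hypothesis~\ref{hyp-numbers-T} in place of their Clifford+$V$ counterparts.
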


% ====================================================================
\section{Approximation up to a phase}
\label{sect:up-to-phase}

So far, we have considered the problem of approximate synthesis ``on
the nose'', i.e., the operator $U$ in
Problem~\ref{pb:approx-synth-CliffordT} was literally required to
approximate $\Rz(\theta)$ in the operator norm.  However, it is
well-known that global phases have no observable effect in quantum
mechanics, so in quantum computing, it is also common to consider the
problem of approximate synthesis ``up to a phase''.  This is made
precise in the following definition.

\begin{problem}
  \label{def-approx-up-to-phase}
  Given $\theta$ and some $\epsilon>0$, the {\em approximate synthesis
    problem for $z$-rotations up to a phase} is to find an operator
  $U$ expressible in the single-qubit Clifford+$T$ gate set, and a
  unit scalar $\lambda$, such that
  \begin{equation}\label{eqn-approx-up-to-phase}
    \norm{\Rz(\theta)-\lambda U} \leq \epsilon.
  \end{equation}
  Moreover, it is desirable to find $U$ of smallest possible
  $T$-count. As before, the norm in {\eqref{eqn-approx-up-to-phase}}
  is the operator norm.
\end{problem}

In this section, we will give a version of
Algorithm~\ref{algo:CliffordT} that optimally solves the approximate
synthesis problem up to a phase.  The central insight is that it is in
fact sufficient to restrict $\lambda$ to only two possible phases,
namely $\lambda=1$ and $\lambda=\sqrt{\omega}=e^{i\pi/8}$.

First, note that if $W$ is a unitary $2\times 2$-matrix and $\det
W=1$, then $\tr W$ is real. This is obvious, because $\det W=1$
ensures that the two eigenvalues of $W$ are each other's complex
conjugates.

\begin{lemma}\label{lem-IW}
  Let $W$ be a unitary $2\times 2$-matrix, and assume that $\det W =
  1$ and $\tr W \geq 0$.  Then for all unit scalars $\lambda$, we have
  \[ \norm{I-W} \leq \norm{I-\lambda W}.
  \]
\end{lemma}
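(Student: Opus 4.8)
The plan is to reduce everything to a statement about eigenvalues and then to a one-variable trigonometric inequality. Since $W$ is unitary with $\det W = 1$, its two eigenvalues are $e^{i\phi}$ and $e^{-i\phi}$ for some $\phi \in [-\pi,\pi]$, and the hypothesis $\tr W = 2\cos\phi \geq 0$ means we may take $\phi \in [-\pi/2,\pi/2]$. Now $W$ is normal, so $I - W$ is diagonalized in the same orthonormal eigenbasis, with eigenvalues $1 - e^{i\phi}$ and $1 - e^{-i\phi}$; hence $\norm{I-W} = \max\{|1-e^{i\phi}|,|1-e^{-i\phi}|\} = |1-e^{i\phi}|$. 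Similarly, for a unit scalar $\lambda = e^{i\psi}$, the matrix $I - \lambda W$ has eigenvalues $1 - e^{i(\psi+\phi)}$ and $1 - e^{i(\psi-\phi)}$, so $\norm{I-\lambda W} = \max\{|1-e^{i(\psi+\phi)}|,|1-e^{i(\psi-\phi)}|\}$.

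With these identifications the claim becomes: for all real $\psi$,
\[
|1-e^{i\phi}| \leq \max\{|1-e^{i(\psi+\phi)}|,\ |1-e^{i(\psi-\phi)}|\},
\]
given $\phi \in [-\pi/2,\pi/2]$. I would use the standard identity $|1-e^{i\alpha}| = 2|\sin(\alpha/2)|$, reducing the inequality to
\[
|\sin(\phi/2)| \leq \max\{|\sin((\psi+\phi)/2)|,\ |\sin((\psi-\phi)/2)|\}.
\]
Write $a = \phi/2 \in [-\pi/4,\pi/4]$ and $b = \psi/2$; we must show $|\sin a| \leq \max\{|\sin(b+a)|,|\sin(b-a)|\}$. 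The key observation is that $(b+a)$ and $(b-a)$ differ by $2a$, and one of them is at least as far (in the appropriate circular sense) from a multiple of $\pi$ as $a$ is: more concretely, consider the function $f(x) = |\sin x|$, which on each interval $[k\pi, (k+1)\pi]$ is a "tent" peaking at the midpoint. Since $|a| \leq \pi/4$, I claim at least one of $b+a$, $b-a$ lies in a position where $f$ is at least $f(a) = |\sin a|$. I would argue this by reducing $b$ modulo $\pi$ to $b_0 \in [0,\pi)$ and doing a short case analysis on whether $b_0 \leq \pi/2$ or $b_0 > \pi/2$: in the first case $b_0 + |a|$ is a better candidate (it is still $\leq 3\pi/4 < \pi$ and farther from $0$), handled via the half-angle and the fact that $f$ is increasing on $[0,\pi/2]$ and the symmetry $f(\pi - x) = f(x)$; in the second case $b_0 - |a|$ works symmetrically.

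The main obstacle is the case analysis in this last step: one has to be careful that when $b+a$ or $b-a$ crosses a multiple of $\pi$, $f$ "reflects" rather than continuing to decrease, so a naive monotonicity argument fails, and the reflection symmetry $f(\pi-x)=f(x)$ (equivalently, working with the distance from $b$ to the nearest multiple of $\pi$) must be invoked at exactly the right moment. Once the reduction $b \mapsto b_0 \in [0,\pi)$ is made and the constraint $|a| \leq \pi/4$ is used to keep the shifted points inside a single tent, the inequality $f(a) \leq \max\{f(b_0+a), f(b_0-a)\}$ follows from elementary estimates on $\sin$ over $[0,\pi]$, and translating back through the eigenvalue identifications completes the proof.
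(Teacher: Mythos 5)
Your proposal is correct and follows essentially the same route as the paper: diagonalize $W$, reduce the operator norms to the moduli $|1-e^{i(\psi\pm\phi)}|$ of the eigenvalues, and conclude by an elementary angular estimate using $\tr W\geq 0$ (i.e.\ $|\phi|\leq\pi/2$). The paper simply asserts the final inequalities $|1-e^{i\phi}|\leq|1-e^{i(\psi+\phi)}|$ for $\psi\geq 0$ (and symmetrically for $\psi\leq 0$), which is exactly the trigonometric fact your half-angle/tent-function case analysis verifies in more detail.
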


\begin{proof}
  We may assume without loss of generality that $W$ is diagonal. Since
  $\det W = 1$, we can write
  \[ W = \begin{bmatrix} e^{i\phi} & 0 \\ 0 & e^{-i\phi}
  \end{bmatrix}
  \]
  for some $\phi$. By symmetry, we can assume without loss of
  generality that $0\leq\phi\leq\pi$. Since $\tr W\geq 0$, we have
  $\phi\leq\pi/2$. Now consider a unit scalar $\lambda=e^{i\psi}$,
  where $-\pi\leq\psi\leq\pi$.  Then $\norm{I-\lambda W} =
  \max\s{|1-e^{i(\psi+\phi)}|, |1-e^{i(\psi-\phi)}|}$ and $\norm{I-W}
  = |1-e^{i\phi}|$. If $\psi\geq 0$, then
  $|1-e^{i\phi}|\leq|1-e^{i(\psi+\phi)}|$. Similarly, if $\psi\leq 0$,
  then $|1-e^{i\phi}|\leq|1-e^{i(\psi-\phi)}|$. In either case, we
  have $\norm{I-W} \leq \norm{I-\lambda W}$, as claimed.
\end{proof}

\begin{lemma}\label{lem-discrete-phase}
  Fix $\epsilon$, a unitary operator $R$ with $\det R=1$, and a
  Clifford+$T$ operator $U$. The following are equivalent:
  \begin{enumerate}
  \item There exists a unit scalar $\lambda$ such that
    \[ \norm{R-\lambda U}\leq\epsilon;
    \]
  \item There exists $n\in\Z$ such that
    \[ \norm{R-e^{in\pi/8} U}\leq\epsilon.
    \]
  \end{enumerate}
\end{lemma}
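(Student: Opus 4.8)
The plan is to prove the two directions separately, with the non-trivial content in (1) $\Rightarrow$ (2). The implication (2) $\Rightarrow$ (1) is immediate: if $\norm{R - e^{in\pi/8}U} \leq \epsilon$ then $\lambda = e^{in\pi/8}$ is a unit scalar witnessing (1). So the work is to show that if \emph{some} unit scalar $\lambda$ satisfies $\norm{R - \lambda U} \leq \epsilon$, then one of the finitely many phases $e^{in\pi/8}$ does too.

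First I would reduce to a statement about $\det U$. Since $U$ is a Clifford+$T$ operator, $U \in \tset$, and one checks that $\det U$ is an eighth root of unity, say $\det U = \omega^m = e^{im\pi/4}$ for some $m \in \Z$; this is the arithmetic fact that Clifford and $T$ generators all have determinant a power of $\omega$. Pick $\mu$ a unit scalar with $\mu^2 = \overline{\det U}$, so $\mu$ can be taken of the form $e^{-im\pi/8}$ (a phase of the required kind, since $e^{i n \pi /8}$ with $n = -m$). Then $\det(\mu U) = \mu^2 \det U = 1$. Now given $\lambda$ with $\norm{R - \lambda U} \leq \epsilon$, write $W = R^{-1}(\lambda U)$, or rather work with $W' = R^{-1}(\mu U)$, which is unitary with $\det W' = \det R^{-1} \cdot \det(\mu U) = 1$, hence $\tr W'$ is real. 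The idea is to further adjust $\mu$ by a sign so that the trace is nonnegative: if $\tr(R^{-1}\mu U) < 0$, replace $\mu$ by $-\mu = e^{i\pi}\mu = e^{i(8-m)\pi/8}$, which is still a phase of the allowed form $e^{in\pi/8}$ and still has $(\pm\mu)^2 = \overline{\det U}$, so $\det(R^{-1}(\pm\mu)U) = 1$ while $\tr(R^{-1}(\pm\mu)U) \geq 0$.

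With $W = R^{-1}(\pm\mu U)$ satisfying $\det W = 1$ and $\tr W \geq 0$, I would apply Lemma~\ref{lem-IW}. Observe that $\norm{R - \nu U} = \norm{I - R^{-1}\nu U}$ for any unit scalar $\nu$, since $R^{-1}$ is unitary and the operator norm is unitarily invariant. Writing the given $\lambda$ as $\lambda = \kappa \cdot (\pm\mu)$ for the unit scalar $\kappa = \lambda/(\pm\mu)$, we have $R^{-1}\lambda U = \kappa W$, so $\norm{I - \kappa W} = \norm{R - \lambda U} \leq \epsilon$. Lemma~\ref{lem-IW} then gives $\norm{I - W} \leq \norm{I - \kappa W} \leq \epsilon$, i.e. $\norm{R - (\pm\mu)U} \leq \epsilon$. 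Since $\pm\mu = e^{in\pi/8}$ for an appropriate $n \in \Z$ (namely $n = -m$ or $n = 8-m$), this establishes (2).

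The main obstacle I anticipate is pinning down that $\det U$ is an eighth root of unity and that the square-root phase can be chosen in the form $e^{in\pi/8}$; this is where the ``$\pi/8$'' in the statement comes from, and it requires the characterization $U \in \tset$ (Proposition~\ref{prop-exact-cliffordT}) together with the observation that each generator $\omega, H, S, T$ has determinant in $\langle \omega \rangle$. Everything else is bookkeeping with unitary invariance of the operator norm plus the already-proved Lemma~\ref{lem-IW}; the sign-adjustment to make the trace nonnegative is the only place one has to be slightly careful, and it works precisely because both $\mu$ and $-\mu$ are admissible phases.
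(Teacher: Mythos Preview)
Your proposal is correct and follows essentially the same route as the paper's proof: normalize the determinant of $R^{-1}U$ by a phase $e^{-ik\pi/8}$, flip the sign if necessary to make the trace nonnegative, and then invoke Lemma~\ref{lem-IW} together with unitary invariance of the operator norm. The paper's write-up is slightly more compact (it introduces $V=e^{-ik\pi/8}R^{-1}U$ directly and sets $\lambda'=e^{-in\pi/8}\lambda$), but the logical content and the key steps are identical to yours.
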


\begin{proof}
  It is obvious that (2) implies (1). For the opposite implication,
  first note that, because $U$ is a Clifford+$T$ operator, we have
  $\det U=\omega^k$ for some $k\in\Z$, and therefore $\det (R\inv U) =
  \omega^{k}$. Let $V=e^{-ik\pi/8}R\inv U$, so that $\det V=1$. If
  $\tr V\geq 0$, let $W=V$; otherwise, let $W=-V$. Either way, we have
  $W=e^{in\pi/8}R\inv U$, where $n\in\Z$, and $\det W=1$, $\tr W\geq
  0$. Let $\lambda' = e^{-in\pi/8}\lambda$. By Lemma~\ref{lem-IW}, we
  have
  \[ \begin{array}{crcl}
    & \norm{I-W} &\leq& \norm{I-\lambda' W} \\
    \imp &
    \norm{I-e^{in\pi/8}R\inv U}  &\leq& 
    \norm{I-\lambda' e^{in\pi/8}R\inv U} \\
    \imp &
    \norm{R-e^{in\pi/8} U}  &\leq& 
    \norm{R-\lambda' e^{in\pi/8} U},\\
    \imp &
    \norm{R-e^{in\pi/8} U}  &\leq& 
    \norm{R-\lambda U}, \\
  \end{array}
  \]
  which implies the desired claim.
\end{proof}

\begin{remark}
  A version of Lemma~\ref{lem-discrete-phase} applies to gate
  sets other than Clifford+$T$, as long as the gate set has discrete
  determinants.
\end{remark}

\begin{corollary}\label{cor-only-two-phases}
  In Definition~\ref{def-approx-up-to-phase}, it suffices without loss
  of generality to consider only the two scalars $\lambda=1$ and
  $\lambda=e^{i\pi/8}$.
\end{corollary}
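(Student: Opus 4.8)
The plan is to derive Corollary~\ref{cor-only-two-phases} directly from Lemma~\ref{lem-discrete-phase}, using the fact that $\Rz(\theta)$ has determinant $1$. First I would observe that the hypothesis of Lemma~\ref{lem-discrete-phase} is satisfied with $R=\Rz(\theta)$, since $\det \Rz(\theta)=e^{-i\theta/2}e^{i\theta/2}=1$. Given a solution $(U,\lambda)$ to the approximate synthesis problem up to a phase, i.e. $\norm{\Rz(\theta)-\lambda U}\leq\epsilon$, Lemma~\ref{lem-discrete-phase} immediately yields an integer $n\in\Z$ such that $\norm{\Rz(\theta)-e^{in\pi/8}U}\leq\epsilon$. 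So without loss of generality the phase can be taken to be of the form $e^{in\pi/8}$ for some integer $n$.

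Next I would reduce the set of possible phases from $\s{e^{in\pi/8}\mid n\in\Z}$ down to the two values $1$ and $e^{i\pi/8}$. The key point is that $e^{in\pi/8}$ is, up to sign, a power of $\omega=e^{i\pi/4}$: writing $n=2q+r$ with $r\in\s{0,1}$, we have $e^{in\pi/8}=e^{iq\pi/4}\cdot e^{ir\pi/8}=\omega^q\cdot(e^{i\pi/8})^r$. Now $\omega^q$ is itself a Clifford operator (recall from Chapter~\ref{chap:intro} that $\omega I$ is a generator of the Clifford group), so if $U$ is expressible in the Clifford+$T$ gate set then so is $\omega^q U$, and it has the same $T$-count as $U$ since multiplying by a Clifford gate does not change the $T$-count. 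Hence the solution $(e^{in\pi/8}U,\;\text{no extra phase})$ can be rewritten as $\bigl((e^{i\pi/8})^r\cdot(\omega^q U)\bigr)$ with $\omega^q U$ a Clifford+$T$ operator of the same $T$-count as $U$ and the remaining phase $(e^{i\pi/8})^r$ equal to $1$ (if $r=0$) or $e^{i\pi/8}$ (if $r=1$). Therefore it suffices to consider only $\lambda\in\s{1,e^{i\pi/8}}$, which is exactly the claim.

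I expect the only subtle point — and the ``main obstacle'', though it is mild — to be making the $T$-count bookkeeping precise: one must note explicitly that absorbing $\omega^q$ into $U$ neither spoils Clifford+$T$-expressibility nor changes the $T$-count, so that the optimization over $T$-count in Definition~\ref{def-approx-up-to-phase} is genuinely unaffected by restricting the phase. This follows because $\omega$ is a Clifford gate and $T$-count counts only $T$-gates. Everything else is a routine application of Lemma~\ref{lem-discrete-phase} together with the division-with-remainder identity $e^{in\pi/8}=\omega^{\lfloor n/2\rfloor}(e^{i\pi/8})^{n\bmod 2}$.
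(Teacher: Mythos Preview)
Your proposal is correct and follows essentially the same argument as the paper: invoke Lemma~\ref{lem-discrete-phase} to reduce to phases of the form $e^{in\pi/8}$, then write $e^{in\pi/8}=\omega^k\lambda'$ with $\lambda'\in\s{1,e^{i\pi/8}}$ and absorb the Clifford factor $\omega^k$ into $U$ without changing the $T$-count. Your explicit check that $\det\Rz(\theta)=1$ and your division-with-remainder presentation are minor cosmetic differences only.
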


\begin{proof}
  Suppose $U$ is a Clifford+$T$ operator satisfying
  {\eqref{eqn-approx-up-to-phase}} for some unit scalar $\lambda$. By
  Lemma~\ref{lem-discrete-phase}, there exists a $\lambda$ of the form
  $e^{in\pi/8}$ also satisfying {\eqref{eqn-approx-up-to-phase}}.
  Then we can write $\lambda=\omega^k\lambda'$, where $k\in\Z$ and
  $\lambda'\in \s{1,e^{i\pi/8}}$. Letting $U'=\omega^k U$, we have
  $\lambda'U' = \lambda U$, and therefore
  \[ \norm{\Rz(\theta)-\lambda' U'} \leq \epsilon,
  \]
  as claimed. Moreover, since $\omega=e^{i\pi/4}$ is a Clifford
  operator, $U$ and $U'$ have the same $T$-count.
\end{proof}

To solve the approximate synthesis problem up to a phase, we therefore
need an algorithm for finding optimal solutions of
{\eqref{eqn-approx-up-to-phase}} in case $\lambda=1$ and
$\lambda=e^{i\pi/8}$. For $\lambda=1$, this is of course just
Algorithm~\ref{algo:CliffordT}. So all that remains to do is to find
an algorithm for solving
\begin{equation}\label{eqn-norm8}
  \norm{\Rz(\theta)-e^{i\pi/8}U}\leq \epsilon.
\end{equation}
We use a sequence of steps very similar to those of
Proposition~\ref{prop:red-synth-zrot-CliffordT} to reduce this to a
grid problem and a Diophantine equation.  We first consider the form
of $U$.

\begin{lemma}\label{lem-ell8}
  If $\epsilon<|1-e^{i\pi/8}|$, then all solutions of
  {\eqref{eqn-norm8}} have the form
  \begin{equation}\label{eqn-u8}
    U = \begin{bmatrix} 
      u & -t\da\omega^{-1} \\ 
      t & u\da\omega^{-1} 
    \end{bmatrix}.
  \end{equation}
\end{lemma}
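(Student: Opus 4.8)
The plan is to follow the template of the proofs of Lemma~\ref{lem:det-sol} and Lemma~\ref{lem:det-solT}. Those lemmas handle the ``on the nose'' case $\lambda=1$ by showing that a sufficiently small $\epsilon$ forces the approximant to have determinant $1$; here the only genuinely new ingredient is bookkeeping the extra phase $e^{i\pi/8}=\sqrt\omega$, which shifts the determinant constraint by exactly one power of $\omega$. Concretely, the target is to prove that the hypothesis $\epsilon<|1-e^{i\pi/8}|$ forces $\det U=\omega^{-1}$, after which the form \eqref{eqn-u8} drops out of the standard parametrization of $2\times 2$ unitaries.

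First I would set $W=(e^{i\pi/8}U)\,\Rz(\theta)\inv$, which is unitary, and write its eigenvalues as $e^{i\phi_1},e^{i\phi_2}$ with $\phi_j\in[-\pi,\pi]$. Since $\Rz(\theta)\inv$ is norm-preserving, $\norm{I-W}=\norm{\Rz(\theta)-e^{i\pi/8}U}\leq\epsilon$, and as $W$ is normal this equals $\max\s{|1-e^{i\phi_1}|,|1-e^{i\phi_2}|}$. From $|1-e^{i\phi_j}|\leq\epsilon<|1-e^{i\pi/8}|$ and the monotonicity of $|1-e^{i\phi}|=2|\sin(\phi/2)|$ in $|\phi|$ on $[-\pi,\pi]$, I get $\phi_j\in(-\pi/8,\pi/8)$, hence $\phi_1+\phi_2\in(-\pi/4,\pi/4)$ and therefore $|1-e^{i(\phi_1+\phi_2)}|<|1-e^{i\pi/4}|=|1-\omega|$. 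Now $e^{i(\phi_1+\phi_2)}=\det W=\det(e^{i\pi/8}U)$ because $\det\Rz(\theta)=1$. Since $U$ is a Clifford+$T$ operator, $\det U$ is a power of $\omega$ (each generator $\omega I$, $H$, $S$, $T$ has determinant a power of $\omega$), so $\det(e^{i\pi/8}U)=e^{i\pi/4}\det U=\omega^{m+1}$ for some $m\in\Z$. The only power of $\omega$ strictly within distance $|1-\omega|$ of $1$ is $\omega^0=1$ (the values $|1-\omega^j|$ for $j\not\equiv 0\mmod 8$ are all at least $|1-\omega|=|1-\omega^{-1}|$), so $\omega^{m+1}=1$, i.e., $\det U=\omega^{-1}$.

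With $\det U=\omega^{-1}$ established, the form follows from the parametrization used in the proof of Lemma~\ref{lem:det-sol}: every $U\in\uset(2)$ can be written as $U=\left[\begin{smallmatrix} u & -t\da e^{i\phi} \\ t & u\da e^{i\phi}\end{smallmatrix}\right]$ with $u,t\in\Comp$ and $\phi\in[-\pi,\pi]$, and then $\det U=(|u|^2+|t|^2)e^{i\phi}=e^{i\phi}$. Equating $e^{i\phi}=\det U=\omega^{-1}$ gives exactly \eqref{eqn-u8}.

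I do not expect a real obstacle; the argument is a routine adaptation of Lemma~\ref{lem:det-sol} and Lemma~\ref{lem:det-solT}. The one point needing a little care is the discreteness step: one must use both that $\det U$ ranges over powers of $\omega$ for Clifford+$T$ operators (so that the approximate bound $|1-\det(e^{i\pi/8}U)|<|1-\omega|$ upgrades to an identity) and that among the eighth roots of unity only $1$ is strictly closer to $1$ than $\omega$ is. These are precisely the places where the hypothesis $\epsilon<|1-e^{i\pi/8}|$ and the specific choice of phase $e^{i\pi/8}$ are used.
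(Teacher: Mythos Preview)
Your proposal is correct and follows precisely the approach indicated by the paper, which simply says the proof is ``completely analogous to the proof of Lemma~\ref{lem:det-solT}, using $e^{i\pi/8}U$ in place of $U$.'' You have carried out exactly this substitution, correctly tracking how the extra phase shifts the determinant constraint from $\det U=1$ to $\det U=\omega^{-1}$.
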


\begin{proof}
  This is completely analogous to the proof of
  Lemma~\ref{lem:det-solT}, using $e^{i\pi/8}U$ in place of $U$.
\end{proof}

Recall that $\delta = 1+\omega$, and note that
$\frac{\delta}{|\delta|} = e^{i\pi/8}$. Also note that
$\delta\omega\inv = \delta\da$, and that $\delta\inv =
(\omega-i)/\sqrt 2$. Suppose that $U$ is of the form
{\eqref{eqn-u8}}. Let $u'=\delta u$ and $t'=\delta t$. We have:
\begin{eqnarray}
  \norm{\Rz(\theta) - e^{i\pi/8}U}
  &=& 
  \bignorm{\Rz(\theta) - \frac{\delta}{|\delta|}\begin{bmatrix} u &
      -t\da\omega^{-1} \\ t & u\da\omega^{-1} \end{bmatrix}}
  \nonumber\\ &=&
  \bignorm{\Rz(\theta) - \frac{1}{|\delta|}\begin{bmatrix} \delta u &
      -\delta\da t\da \\ \delta t & \delta\da u\da \end{bmatrix}}
  \nonumber\\ &=&
  \bignorm{\Rz(\theta) - \frac{1}{|\delta|}\begin{bmatrix} u' &
      -{t'}\da \\ t' & {u'}\da \end{bmatrix}}.
  \nonumber
\end{eqnarray}
Using exactly the same argument as in
Proposition~\ref{prop:red-synth-zrot-CliffordT}, it follows that
{\eqref{eqn-norm8}} holds if and only if
$\frac{u'}{|\delta|}\in\Repsilon$, i.e., $u'\in |\delta|\Repsilon$.

As before, in order for $U$ to be unitary, of course it must satisfy
$u\da u+t\da t=1$, and a necessary condition for this is
$u,u\bul\in\Disk$. The latter condition can be equivalently
re-expressed in terms of $u'$ by requiring $u'\in|\delta|\,\Disk$ and
${u'}\bul\in|\delta\bul\!|\,\Disk$. Therefore, finding solutions to
{\eqref{eqn-norm8}} of the form {\eqref{eqn-u8}} reduces to the
two-dimensional grid problem $u'\in |\delta|\Repsilon$ and
${u'}\bul\in|\delta\bul\!|\,\Disk$, together with the usual
Diophantine equation $u\da u+t\da t=1$. The last remaining piece of
the puzzle is to compute the $T$-count of $U$, and in particular, to
ensure that potential solutions are found in order of increasing
$T$-count.

\begin{lemma}\label{lem-2k-1}
  Let $U$ be a Clifford+$T$ operator of the form {\eqref{eqn-u8}}, and
  let $k$ be the least denominator exponent of $u'= \delta u$. Then
  the $T$-count of $U$ is either $2k-1$ or $2k+1$. Moreover, if $k>0$
  and $U$ has $T$-count $2k+1$, then $U'=TUT\da$ has $T$-count
  $2k-1$. 
\end{lemma}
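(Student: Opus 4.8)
The plan is to mirror the structure of Lemma~\ref{lem-2k-2}, transferring everything through the substitution $u'=\delta u$, $t'=\delta t$. First I would observe that if $U$ is of the form \eqref{eqn-u8} then the matrix $U'' = \delta U \cdot \mathrm{diag}(1,\omega)$ — equivalently the matrix $\frac{1}{1}\begin{bmatrix} u' & -{t'}\da \\ t' & {u'}\da\end{bmatrix}$ up to the overall scalar $1/|\delta|$ appearing in the normalization — is a Clifford+$T$ operator of the form \eqref{eqn-u} (on the nose, not up to a phase) with least denominator exponent exactly $k$, the least denominator exponent of $u'$. Indeed $U\in\tset$ by Proposition~\ref{prop-exact-cliffordT}, and multiplying by the Clifford-representable factors $\delta$ (which is $1+\omega$, a unit times a power of $\sqrt 2$ away from nothing problematic — more precisely $\delta\da\delta = 2+\sqrt2$ is a unit times $\sqrt 2$ in $\Zomega$) and the gate $\mathrm{diag}(1,\omega)=T$ keeps us in $\tset$; the key bookkeeping is that left-multiplication by $\delta$ raises the denominator exponent by exactly $1$ and that this is compatible with the form \eqref{eqn-u} having second column determined by the first.

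Next, Lemma~\ref{lem-2k-2} applied to the operator of the form \eqref{eqn-u} built from $u'$ tells us that its $T$-count is $2k-2$ or $2k$. The $T$-count of $U$ itself differs from that by a controlled amount: since $U$ and this auxiliary operator differ by multiplication on one side by $\delta$-related Clifford-level factors and by one factor of $T$ (namely the $\mathrm{diag}(1,\omega^{-1})$ appearing in \eqref{eqn-u8}), the $T$-count shifts by at most $1$. This gives $T$-count of $U$ in $\{2k-3,2k-2,2k-1,2k,2k+1\}$; the parity argument (that operators of the form \eqref{eqn-u8} have odd $T$-count, analogous to the even-parity statement implicit in Lemma~\ref{lem-2k-2} for \eqref{eqn-u}) then pins it down to $2k-1$ or $2k+1$. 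As in the proof of Lemma~\ref{lem-2k-2}, I expect this parity-and-range statement to be most cleanly established by direct inspection of Figure~2 of \cite{ma-remarks}: using the terminology of Definitions~7.4 and 7.6 of \cite{ma-remarks}, that figure enumerates every $k$-residue of a Clifford+$T$ operator modulo the right action of $\groupspan{S,X,\omega}$, and one checks that the $k$-residues compatible with the special form \eqref{eqn-u8} all have $T$-count $2k-1$ or $2k+1$, and that in every case with $k>0$ and $T$-count $2k+1$, conjugating by $T$ drops the residue to one of $T$-count $2k-1$.

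For the final claim about $U'=TUT\da$, the argument is again inspection of the same figure: it shows that conjugation by $T$ sends each $2k+1$-case $k$-residue (compatible with \eqref{eqn-u8}) to a $2k-1$-case residue. One should also record, as in Lemma~\ref{lem-2k-2}, that $\norm{\Rz(\theta)-e^{i\pi/8}U'} = \norm{\Rz(\theta)-e^{i\pi/8}U}$ since $\Rz(\theta)$ and $T$ commute and conjugation by the unitary $T$ preserves the operator norm, so that replacing $U$ by $U'$ costs nothing for solving \eqref{eqn-norm8}. The main obstacle is purely the verification against Figure~2 of \cite{ma-remarks}: one must correctly identify exactly which $k$-residues arise from matrices of the shape \eqref{eqn-u8} — this is the place where the $\omega^{-1}$ twist in the second column (versus the untwisted form \eqref{eqn-u}) changes which residue class we land in, and getting that correspondence right is the crux; everything else is the same denominator-exponent arithmetic already used for Lemma~\ref{lem-2k-2}.
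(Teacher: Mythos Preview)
Your proposal ultimately lands on the right method—inspection of the $k$-residues via Figure~2 of \cite{ma-remarks}—which is exactly what the paper intends (its proof reads in full: ``This can be proved by a tedious but easy induction, analogous to Lemma~\ref{lem-2k-2}''). So the second half of your plan is correct and matches the paper.

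The first half, however, has a genuine gap. Your auxiliary matrix $U'' = \delta U\cdot\mathrm{diag}(1,\omega)$ does \emph{not} equal $\begin{bsmallmatrix} u' & -{t'}\da \\ t' & {u'}\da\end{bsmallmatrix}$: the off-diagonal entries come out as $-\delta t\da$ and $\delta u\da$, not $-(t')\da = -\delta\da t\da$ and $(u')\da = \delta\da u\da$. The correct identity (already in the paper's computation just above the lemma) is $e^{i\pi/8}U = \tfrac{1}{|\delta|}\begin{bsmallmatrix} u' & -{t'}\da \\ t' & {u'}\da\end{bsmallmatrix}$, using $\delta\omega^{-1}=\delta\da$; no factor of $T$ on the right. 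But even with this correction, $e^{i\pi/8}U$ is \emph{not} a Clifford+$T$ operator, since $e^{i\pi/8}\notin\Z[1/\sqrt 2,i]$, so Proposition~\ref{prop-exact-cliffordT} does not apply and you cannot invoke Lemma~\ref{lem-2k-2} on it. Relatedly, the claim that ``left-multiplication by $\delta$ raises the denominator exponent by exactly $1$'' is not right: in $\Zomega$ one has $(\sqrt 2)\sim(\delta)^2$ (since $\delta\da=\omega^{-1}\delta$ and $\delta\delta\da=\sqrt2\,\lambda$ with $\lambda$ a unit), so multiplying by $\delta$ shifts the $\delta$-adic valuation by a half-step relative to $\sqrt 2$, and the effect on the least $\sqrt 2$-denominator exponent depends on the parity of the $\delta$-valuation of the numerator. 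The upshot is that your reduction to Lemma~\ref{lem-2k-2} does not go through, and the range $\{2k-3,\ldots,2k+1\}$ you extract from it is not actually justified. Fortunately you do not need it: the direct residue inspection you describe afterwards is self-contained and is the intended argument.
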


\begin{proof}
  This can be proved by a tedious but easy induction, analogous to
  Lemma~\ref{lem-2k-2}.
\end{proof}

We therefore arrive at the following algorithm for solving
{\eqref{eqn-norm8}}. Here we assume $\epsilon < |1-e^{i\pi/8}|$, so
that Lemma~\ref{lem-ell8} applies.

\begin{algorithm}\label{alg-phase8}
  Given $\theta$ and $\epsilon$, let $A=|\delta|\Repsilon$, and let
  $B=|\delta\bul\!|\,\Disk$.
  \begin{enumerate}
  \item[1.] Use Proposition~\ref{prop:algo-grid-pb-scaled-Zomega} to
    enumerate the infinite sequence of solutions to the scaled grid
    problem over $\Zomega$ for $A$, $B$, and $k$ in order of
    increasing $k$.
  \item[2.] For each such solution $u'$:
    \begin{enumerate}
    \item[(a)] Let $\xi=2^k - {u'}\da u'$, and $n = \xi\bul\xi$.
    \item[(b)] Attempt to find a prime factorization of $n$. If $n\neq
      0$ but no prime factorization is found, skip step 2(c) and
      continue with the next $u'$.
    \item[(c)] Use the algorithm of
      Proposition~\ref{prop:diophantineZomega} to solve the equation
      $t\da t = \xi$. If a solution $t$ exists, go to step 3;
      otherwise, continue with the next $u'$.
    \end{enumerate}
  \item[3.] Define $U$ as in equation {\eqref{eqn-u8}}, let
    $U'=TUT\da$, and use the exact synthesis algorithm of
    {\cite{Kliuchnikov-etal}} to find a Clifford+$T$ circuit
    implementing either $U$ or $U'$, whichever has smaller
    $T$-count. Output this circuit and stop.
  \end{enumerate}
\end{algorithm}

Algorithm~\ref{alg-phase8} is optimal in the presence of a factoring
oracle, and near-optimal in the absence of a factoring oracle, in the
same sense as Algorithm~\ref{algo:CliffordT}. Its expected time
complexity is $O(\polylog(1/\epsilon))$. The proofs are completely
analogous to those of the previous section. We then arrive at the
following composite algorithm for the approximate synthesis problem
for $z$-rotations up to a phase:

\begin{algorithm}[Approximate synthesis up to a phase]
  \label{alg-phase}
  Given $\theta$ and $\epsilon$, run both
  Algorithms~\ref{algo:CliffordT} and {\ref{alg-phase8}}, and return
  whichever circuit has the smaller $T$-count.
\end{algorithm}

\begin{proposition}[Correctness, time complexity, and optimality]
  Algorithm~\ref{alg-phase} yields a valid solution to the approximate
  synthesis problem up to a phase. It runs in expected time
  $O(\polylog(1/\epsilon)$. In the presence of a factoring oracle, the
  algorithm is optimal, i.e., the returned circuit has the smallest
  $T$-count of any single-qubit Clifford+$T$ circuit approximating
  $\Rz(\theta)$ up to $\epsilon$ and up to a phase.  Moreover, in the
  absence of a factoring oracle, the algorithm is near-optimal in the
  following sense. Let $m$ be the $T$-count of the solution
  found. Then:
  \begin{enumerate}
  \item The approximate synthesis problem up to a phase has an
    expected number of at most $O(\log(1/\epsilon))$ non-equivalent
    solutions with $T$-count less than $m$.
  \item The expected value of $m$ is $m''' +
    O(\log(\log(1/\epsilon)))$, where $m'''$ is the $T$-count of the
    third-to-optimal solution (up to equivalence) of the approximate
    synthesis problem up to a phase.
  \end{enumerate}
\end{proposition}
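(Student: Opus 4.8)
The plan is to prove the four claims of the proposition---correctness, time complexity, optimality with a factoring oracle, and near-optimality without one---by reducing each to the corresponding statement already established for Algorithm~\ref{algo:CliffordT} (Propositions~\ref{prop-correctnessT}, \ref{prop-optimalityT}, \ref{prop-near-optimalityT}, and the time-complexity proposition) together with the analogous statements for Algorithm~\ref{alg-phase8}. The key structural input is Corollary~\ref{cor-only-two-phases}, which tells us that an optimal solution of the approximate synthesis problem up to a phase can, without loss of generality, be taken to use phase $\lambda=1$ or $\lambda=e^{i\pi/8}$. Since Algorithm~\ref{alg-phase} runs both sub-algorithms and returns the shorter circuit, its output is at least as good as the better of the two, and this is exactly what lets each property descend from the components.

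Concretely, I would proceed as follows. First, \emph{correctness}: both Algorithms~\ref{algo:CliffordT} and \ref{alg-phase8} return valid circuits when they terminate (the former by Proposition~\ref{prop-correctnessT}, the latter by the reduction preceding it via $u'\in|\delta|\Repsilon$ and the Diophantine equation $t\da t=\xi$, plus Lemma~\ref{lem-ell8}); returning whichever has smaller $T$-count is still a valid circuit, and by Corollary~\ref{cor-only-two-phases} it approximates $\Rz(\theta)$ up to $\epsilon$ and up to a phase. I should handle the edge case $\epsilon\geq |1-e^{i\pi/8}|$ separately, where Lemma~\ref{lem:det-solT} (or its phase analogue) gives a Clifford solution outright. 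Second, \emph{time complexity}: each of Algorithms~\ref{algo:CliffordT} and \ref{alg-phase8} runs in expected time $O(\polylog(1/\epsilon))$, so running both and comparing two $T$-counts still takes $O(\polylog(1/\epsilon))$. Third, \emph{optimality with a factoring oracle}: with the oracle, Algorithm~\ref{algo:CliffordT} finds the optimal solution among those with phase $1$ and Algorithm~\ref{alg-phase8} the optimal one among those with phase $e^{i\pi/8}$; by Corollary~\ref{cor-only-two-phases} one of these is globally optimal, and Algorithm~\ref{alg-phase} returns the smaller, hence the global optimum.

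For the \emph{near-optimality} claims without a factoring oracle, I would combine the two component estimates. Part~(1): each sub-algorithm has at most $O(\log(1/\epsilon))$ expected non-equivalent solutions of smaller $T$-count below the one it finds; any non-equivalent solution of the combined problem of $T$-count less than $m$ corresponds to a non-equivalent solution of smaller $T$-count for at least one of the two sub-problems (since $m$ is the minimum of the two sub-algorithm outputs), so the combined count is $O(\log(1/\epsilon))$ by adding the two bounds. Part~(2): let $m'''$ be the $T$-count of the third-to-optimal solution of the phase problem. By Corollary~\ref{cor-only-two-phases}, among the three smallest solutions (up to equivalence) at least two share the same phase class, so the corresponding sub-algorithm ($\lambda=1$ or $\lambda=e^{i\pi/8}$) sees at least two solutions of $T$-count at most $m'''$, whence its second-to-optimal $T$-count is $\leq m'''$; applying Proposition~\ref{prop-near-optimalityT}(2) (or the $\lambda=e^{i\pi/8}$ analogue), its output has expected $T$-count $m'''+O(\log(\log(1/\epsilon)))$, and Algorithm~\ref{alg-phase} does no worse. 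I expect the main obstacle to be the bookkeeping in Part~(2): carefully tracking the notion of equivalence across the two phase classes and justifying the pigeonhole step that converts ``third-to-optimal overall'' into ``second-to-optimal within one phase class,'' paralleling the $\sqrt2\in\Zomega$ remark following Proposition~\ref{prop-near-optimalityT}. The rest is routine transfer of already-proven estimates.
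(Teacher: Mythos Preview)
Your proposal is correct and follows essentially the same approach as the paper: correctness, time complexity, and oracle-optimality are inherited from the two sub-algorithms together with Corollary~\ref{cor-only-two-phases}, and the near-optimality Part~(2) is handled by exactly the pigeonhole argument you describe (among the three best solutions, two share a phase class, so one sub-algorithm's second-to-optimal bound applies). The paper's proof is terser but makes the same point about why the \emph{third}-to-optimal appears here rather than the second.
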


\begin{proof}
  The correctness and time complexity of Algorithm~\ref{alg-phase}
  follows from that of Algorithms~\ref{algo:CliffordT} and
  {\ref{alg-phase8}}. The optimality results follow from those of
  Algorithms~\ref{algo:CliffordT} and {\ref{alg-phase8}}, keeping in
  mind that Algorithm~\ref{algo:CliffordT} finds an optimal (or
  near-optimal) solution for the phase $\lambda=1$,
  Algorithm~\ref{alg-phase8} finds an optimal (or near-optimal)
  solution for the phase $\lambda=e^{i\pi/8}$, and by
  Corollary~\ref{cor-only-two-phases}, these are the only two phases
  that need to be considered.

  The only subtlety that must be pointed out is that in part (b) of
  the near-optimality, we use the $T$-count of the {\em
    third}-to-optimal solution, rather than the second-to-optimal one
  as in Proposition~\ref{prop-near-optimalityT}. This is because the
  optimal and second-to-optimal solutions may belong to
  Algorithms~\ref{algo:CliffordT} and {\ref{alg-phase8}},
  respectively, so that it may not be until the third-to-optimal
  solution that the near-optimality result of either
  Algorithm~\ref{algo:CliffordT} or Algorithm~\ref{alg-phase8} can be
  invoked.
\end{proof}

\begin{remark}
  Algorithms~\ref{algo:CliffordT} and {\ref{alg-phase8}} share the
  same $\epsilon$-region up to scaling, and therefore the uprightness
  computation only needs to be done once.
\end{remark}

\begin{remark}
  By Lemmas~\ref{lem-2k-2} and {\ref{lem-2k-1}},
  Algorithm~\ref{algo:CliffordT} always produces circuits with even
  $T$-count, and Algorithm~\ref{alg-phase8} always produces circuits
  with odd $T$-count. Instead of running both algorithms to
  completion, it is possible to interleave the two algorithms, so that
  all potential solutions are considered in order of increasing
  $T$-count. This is a slight optimization which does not, however,
  affect the asymptotic time complexity.
\end{remark}

% =====================================================================
% Logical Methods in Quantum Computation

% ---------------------------------------------------------------------
\chapter{The Proto-Quipper language}
\label{chap:pq}

In this chapter, we introduce the syntax and operational semantics of
the Proto-Quipper language.

% ====================================================================
\section{From the quantum lambda calculus to Proto-Quipper}
\label{sect:qlc-to-pq}

Proto-Quipper is based on the quantum lambda calculus. As was
discussed in Section~\ref{sect:qlc}, the execution of programs is
modelled in the quantum lambda calculus by a reduction relation
defined on closures, which are triples $[Q,L,a]$ consisting of a
quantum state $Q$, a list of term variables $L$, and a term $a$. The
quantum state is held in a quantum device capable of performing
certain operations (applying unitaries, measuring qubits,\ldots). The
reduction relation in the quantum lambda calculus is then defined as a
probabilistic rewrite procedure on these closures. Typically, the
reduction will be classical until a redex involving a quantum constant
is reached. At this point, the quantum device will be instructed to
perform the appropriate quantum operation. For example: ``Apply a
Hadamard gate to qubit number 3''.

Our approach in designing the Proto-Quipper language was to start with
a limited (but still expressive) fragment of the Quipper language and
make it completely type-safe. The central aspect of Quipper that we
chose to focus on is Quipper's circuit description abilities: to
generate and act on quantum circuits. Indeed, Quipper provides the
ability to treat circuits as data, and to manipulate them as a
whole. For example, Quipper has operators for reversing circuits,
decomposing them into gate sets, etc. This is in contrast with the
quantum lambda calculus, where one only manipulates qubits and all
quantum operations are immediately carried out on a quantum device,
not stored for symbolic manipulation.

We therefore extend the quantum lambda calculus with the minimal set
of features that makes it Quipper-like. The current version of
Proto-Quipper is designed to:
\begin{itemize}
\item incorporate Quipper's ability to generate and act on quantum
  circuits, and to
\item provide a linear type system to guarantee that the produced
  circuits are physically meaningful (in particular, properties like
  no-cloning are respected).
\end{itemize}

To achieve these goals, we define Proto-Quipper as a typed lambda
calculus, whose type system is similar to that of the quantum lambda
calculus. The main difference between Proto-Quipper and the quantum
lambda calculus is that the reduction relation of Proto-Quipper is
defined on closures $[C,a]$ that consist of a term $a$ and a
\emph{circuit state} $C$. Here, the state $C$ represents the circuit
currently being built. Instead of having a quantum device capable of
performing quantum operations, we assume that we have a \emph{circuit
  constructor} capable or performing certain circuit building
operations (such as appending gates, reversing, etc.). The reduction
is then defined as a rewrite procedure on closures. As in the quantum
lambda calculus, some redexes will affect the state by sending
instructions to the circuit constructor. For example: ``Append a
Hadamard gate to wire number 3''. In the current version of
Proto-Quipper, we make the simplifying assumption that no measurements
are available, so that the reduction relation is non-probabilistic.

% ====================================================================
\section{The syntax of Proto-Quipper}
\label{sec:pq:syntax}

In this section, we present in detail the syntax and type system of
Proto-Quipper.

\begin{definition}
  \label{def:pq-types}
  The \emph{types} of Proto-Quipper are defined by
  \[
  A,B \quad \bnf \quad \qubit \bor 1 \bor \bool \bor A\X B \bor A\loli
  B \bor \bang A \bor \Circ (T,U).
  \]
  Among the types, we single out the subset of \emph{quantum data
    types}
  \[
  T,U \quad \bnf \quad \qubit \bor 1 \bor T \X U.
  \]
\end{definition}

The types 1, $\bool$, $A\X B$, $A\loli B$, and $\bang A$ are inherited
from the quantum lambda calculus and should be interpreted as they
were in Section~\ref{sect:qlc}. The elements of $\qubit$ are
references to a logical qubit within a computation. They can be
thought of as references to quantum bits on some physical device, or
simply as references to quantum wires within the circuit currently
being constructed. Elements of quantum data types describe sets of
circuit endpoints, and consist of tuples of wire identifiers. We can
think of these as describing circuit interfaces. Finally, the type
$\Circ(T,U)$ is the set of all circuits having an input interface of
type $T$ and an output interface of type $U$.

\begin{definition}
  \label{def:pq-terms}
  The \emph{terms} of Proto-Quipper are defined by
  \begin{center}
    \begin{tabular}{rl}
      $a,b,c \quad \bnf$ & $x \bor q \bor (t,C,a) \bor \true 
      \bor \false \bor \p{a,b} \bor * \bor ab \bor \lambda x.a 
      \bor$ \\[0.05in]
      & $\rev \bor \unbox \bor \boxx^T \bor \Ifthenelse{a}{b}{c} \bor 
      \letin{*}{a}{b} \bor $\\[0.05in]
      & $\letin{\p{x,y}}{a}{b}.$
    \end{tabular}
  \end{center}
  where $x$ and $y$ come from a countable set $\vset$ of \emph{term
    variables}, $q$ comes from a countable set $\qset$ of
  \emph{quantum variables}, and $C$ comes from a countable set $\cset$
  of \emph{circuit constants}. Among the terms, we single out the
  subset of \emph{quantum data terms}
  \[
  t,u \quad \bnf \quad q \bor * \bor \p{t,u}.
  \]
  Moreover, we assume that $\cset$ is equipped with two functions
  $\In,\Out\from \cset\to\pset_f(\qset)$ and that $\qset$ is
  well-ordered. Here, $\pset_f(\qset)$ denotes the set of finite
  subsets of $\qset$.
\end{definition}

The meaning of most terms is intended to be the standard one. For
example $\p{a,b}$ is the pair of $a$ and $b$, $\true$ and $\false$ are
the booleans and $\lambda x.a$ is the function which maps $x$ to
$a$. We briefly discuss the meaning of the more unusual terms.
\begin{itemize}
\item A circuit constant $C$ represents a low-level quantum circuit.
  Because it would be complicated, and somewhat besides the point, to
  define a formal language for describing low-level quantum circuits,
  Proto-Quipper assumes that there exists a constant symbol for {\em
    every} possible quantum circuit. Each circuit $C$ is equipped with
  a finite set of {\em inputs} and a finite set of {\em outputs},
  which are subsets of the set of quantum variables
  $\qset$. Proto-Quipper's abstract treatment of quantum circuits is
  further explained in Section~\ref{sect:pq-op-sem}.

\item The term $(t,C,a)$ represents a quantum circuit, regarded as
  Proto-Quipper data. The purpose of the terms $t$ and $a$ is to
  provide structure on the (otherwise unordered) sets of inputs and
  outputs of $C$, so that these inputs and outputs can take the shape
  of Proto-Quipper quantum data.  For example, suppose that $C$ is a
  circuit with inputs $\s{q_1,q_2,q_3}$ and outputs
  $\s{q_4,q_5,q_6}$. Then the term
  \[
  (\p{q_2,\p{q_3,q_1}}, C, \p{\p{q_4,q_6},q_5})
  \]
  represents the circuit $C$, but also specifies what it means to
  apply this circuit to a quantum data term $\p{p,\p{r,s}}$. Namely,
  in this case, the circuit inputs $q_2$, $q_3$, and $q_1$ will be
  applied to qubits $p$, $r$, and $s$, respectively.  Moreover, if the
  output of this circuit is to be matched against the pattern
  $\p{\p{x,y},z}$, then the variables $x$, $y$, and $z$ will be bound,
  respectively, to the quantum bits at endpoints $q_4$, $q_6$, and
  $q_5$.

  Terms of the form $(t,C,a)$ are not intended to be written by the
  user of the programming language; in fact, a Proto-Quipper
  implementation would not provide a concrete syntax for such
  terms. Rather, these terms are internally generated during the {\em
    evaluation} of Proto-Quipper programs. However, the circuits for
  certain basic gates may be made available to the user as pre-defined
  symbols.
\item $\boxx^T$ is a built-in function to turn a circuit-producing
  function (for example, a function of type $T\loli U$) into a circuit
  regarded as data (for example, of type $\Circ(T,U)$).
\item $\unbox$ is a built-in function for turning a circuit regarded
  as data into a circuit-producing function. It is an inverse of
  $\boxx^T$.
\item $\rev$ is a built-in function for reversing a low-level circuit.
\end{itemize}
Note that the term $\boxx^T$ is parameterized by a type $T$. This
\emph{Church-style} typing of the language is the reason why types
were introduced before terms. Also note that in a term like $(t,C,a)$,
$t$ is assumed to be a quantum data term, but $a$ is not. The type
system to be introduced below will guarantee that even though $a$ is
not yet a quantum data term it will eventually reduce to one.

\begin{examples}
  \label{ex:pq-terms}
  Suppose that $H$ is the circuit constant for the Hadamard gate. The
  term $\p{q_1, H, q_2}$ then represents the circuit consisting only
  of the Hadamard gate, regarded as Proto-Quipper data. We can then
  define the circuit producing function $\Hgate=\unbox \p{q_1, H,
    q_2}$. Similarly, if $CNOT$ is the circuit constant for the
  controlled-not gate, then the term
  $\CNOTgate=\unbox\p{\p{q_1,q_2},CNOT, \p{q_3,q_4}}$ is the
  corresponding circuit producing function. In an implementation of
  the Proto-Quipper language, a finite set of such circuit producing
  functions would be provided as basic operations. For example, a
  candidate such gate set would consist of the Clifford+$T$ gate set
  extended with the controlled-not gate: $\Hgate, \Sgate, \Tgate,
  \CNOTgate$. Basic gates can then be combined. For example, the term
  \[
  \lambda x. \Tgate(\Sgate(\Hgate x))
  \]
  is the circuit producing function which applies in sequence the
  $\Hgate$, $\Sgate$, and $\Tgate$ gates. Using the $\boxx^T$
  operator, we can turn this circuit producing function into a circuit
  \[
  \boxx^\qubit (\lambda x. \Tgate(\Sgate(\Hgate x))).
  \]
\end{examples}

As in the quantum lambda calculus, the operational semantics of
Proto-Quipper will be defined according to a call-by-value reduction
strategy. We therefore define what it means, for a term of
Proto-Quipper, to be a value.

\begin{definition}
  \label{def:pq-values}
  The \emph{values} of Proto-Quipper are defined by
  \begin{center}
    \begin{tabular}{rl}
      $v,w \quad \bnf$ & $x \bor q \bor (t,C,u) \bor \true \bor 
      \false \bor \p{v,w} \bor$ \\
      & $* \bor \lambda x.a  \bor \boxx^T \bor \rev \bor \unbox 
      \bor \unbox ~v.$
    \end{tabular}
  \end{center}
\end{definition}

Note that according to Definition~\ref{def:pq-values}, some
applications are values, namely terms of the form $\unbox \, v$. This
is consistent with the meaning of the $\unbox$ constant discussed
above. Indeed, if $\unbox$ turns a circuit into a circuit-generating
function, then a term of the form $\unbox \, v$ should be seen as a
function awaiting an argument, much like a term of the form $\lambda
x.a$, and therefore considered a value.

We now introduce some useful syntactic operations on types and
terms. We start by defining the notion of free variable for
Proto-Quipper terms.

\begin{definition}
  \label{def:pq-fv}
  The set of \emph{free (term) variables} of a term $a$, written
  $\FV(a)$, is defined as
  \begin{itemize}
  \item $\FV(x)=\s{x}$,
  \item $\FV(\p{a,b})=\FV(a)\cup \FV(b)$,
  \item $\FV(ab)=\FV(a)\cup \FV(b)$,
  \item $\FV(\lambda x.a)=\FV(a)\setminus\s{x}$,
  \item $\FV(\Ifthenelse{a}{b}{c}) = \FV(a) \cup \FV(b) \cup \FV(c)$,
  \item $\FV(\letin{*}{a}{b}) = \FV(a)\cup \FV(b)$,
  \item $\FV(\letin{\p{x,y}}{a}{b}) = \FV(a)\cup (\FV(b)\setminus
    \s{x,y})$,
  \item $\FV((t,C,a))= \FV(a)$, and
  \item $\FV(a)=\emptyset$ in all remaining cases.
  \end{itemize}
\end{definition}

The above definition of free variables extends the standard one. Note
that the free variables of a term of the form $(t,C,a)$ are the free
variables of $a$. This is justified since no variables ever appear in
the quantum data term $t$.

The notions of $\alpha$-equivalence, capture-avoiding substitution,
etc., are defined in a straightforward manner.

By analogy with the free term variables of a term, we introduce a
notion of \emph{quantum variable of a term}.

\begin{definition}
  \label{def:pq-fqv}
  The set of \emph{free quantum variables} of a term $a$, written
  $\FQ(a)$, is defined as
  \begin{itemize}
  \item $\FQ(q)=\s{q}$,
  \item $\FQ(\p{a,b})=\FQ(a)\cup \FQ(b)$,
  \item $\FQ(ab)=\FQ(a)\cup \FQ(b)$,
  \item $\FQ(\lambda x.a)=\FQ(a)$,
  \item $\FQ(\Ifthenelse{a}{b}{c}) = \FQ(a) \cup \FQ(b) \cup \FQ(c)$,
  \item $\FQ(\letin{*}{a}{b})= \FQ(a)\cup \FQ(b)$,
  \item $\FQ(\letin{\p{x,y}}{a}{b})= \FQ(a)\cup \FQ(b)$, and
  \item $\FQ(a)=\emptyset$ in all remaining cases.
  \end{itemize}
\end{definition}

Note that $\FQ((t,C,a))=\emptyset$. This reflects the idea that the
quantum variables appearing in $t$ and $a$ are ``bound'' in $(t,C,a)$.

To append circuits, we will need to be able to express the way in
which wires should be connected. For this, we use the notion of a
\emph{binding}.

\begin{definition}
  \label{def:fin-bij}
  A \emph{finite bijection} on a set $X$ is a bijection between two
  finite subsets of $X$. We write $\finbij(X)$ for the set of finite
  bijections on $X$. The domain and codomain of a finite bijection
  $\binding$ are denoted $\dom(\binding)$ and $\cod(\binding)$,
  respectively.
\end{definition}

\begin{definition}
  \label{def:binding}
  A \emph{binding} is a finite bijection on $\qset$. We will usually
  denote bindings by $\binding$.
\end{definition}

\begin{definition}
  \label{def:binding-ap}
  If $a$ is a term, $\binding$ is a binding and
  $\FQ(a)=\s{q_1,\ldots,q_n}\seq\dom(\binding)$, then $\binding(a)$ is
  the following term
  \[
  \binding(a)= a[\binding(q_1)/q_1,\ldots ,\binding(q_n)/q_n].
  \]
\end{definition}

\begin{definition}
  \label{def:bind}
  The partial function $\bind: \qdataterm^2\to \finbij(\qset)$ is
  defined as
  \begin{itemize}
  \item $\bind (*,*)= \emptyset$;
  \item $\bind (q_1,q_2)= \s{(q_1,q_2)}$;
  \item $\bind (\p{t_1,t_2},\p{u_1,u_2})=\bind (t_1,u_1) \cupdot \bind
    (t_2,u_2)$, provided that $\bind (t_1,u_1) \cap \bind (t_2,u_2) =
    \emptyset$;
  \item $\bind (t,u)= \mbox{undefined}$, in all remaining cases.
  \end{itemize}
\end{definition}

\begin{definition}
  \label{def:spec}
  Let $T$ be a quantum data type and $X$ a finite subset of
  $\qset$. An \emph{$X$-specimen} for $T$ is quantum data term written
  $\spec_X(T)$ defined as
  \begin{itemize}
  \item $\spec_X(1)=*$,
  \item $\spec_X(\qubit)=q$ where $q$ is the smallest quantum index of
    $\qset\setminus X$,
  \item $\spec_X(T\X U)=\p{t,u}$ where $t=\spec_X(T)$ and
    $u=\spec_{X\cup \FQ(t)}(U)$.
  \end{itemize}
\end{definition}

Informally, an $X$-specimen for $T$ is a quantum data term $t$ that is
``fresh'' with respect to the quantum variables appearing in $X$. If
$X$ is clear from the context, we simply write $\spec (T)$. Note that
the definition of specimen uses the fact that $\qset$ is well-ordered.

As in the quantum lambda calculus, we use a subtyping relation to deal
with the $\bang$ modality.

\begin{definition}
  \label{def:subtyping}
  The \emph{subtyping relation} $<:$ is the smallest relation on types
  satisfying the rules given in Figure~\ref{subtyping_congruences}.
\end{definition}

% ....................................................................
\begin{figure}
  \[
  \infer[]{\qubit <: \qubit}{} \quad \infer[]{1 <: 1}{} \quad
  \infer[]{\bool <: \bool}{}
  \]
  \[
  \infer[]{ (A_1\X A_2) <: (B_1 \X B_2)}{A_1<:B_1 & A_2<:B_2} \quad
  \infer[]{ (A_1\loli B_1) <: (A_2\loli B_2)}{A_2<:A_1 & B_1<:B_2}
  \]
  \[
  \infer[]{ \Circ(A_1, B_1) <: \Circ(A_2, B_2)}{A_2<:A_1 & B_1<:B_2}
  \]
  \[
  \infer[]{ \bang^nA <: \bang^mB}{ A<:B & (n=0 \imp m=0) }
  \]
  \label{subtyping_congruences}
  \caption{Subtyping rules for Proto-Quipper.}
  \rule{\textwidth}{0.1mm}
\end{figure}
% ....................................................................

Note that the subtyping of $A\loli B$ and $\Circ(A,B)$ is {\em
  contravariant} in the left argument, i.e., $A<:A'$ implies $A'\loli
B<:A\loli B$.

\begin{remark}
  \label{subtyping_shape}
  If $A<:B$ then:
  \begin{enumerate}
  \item if $A\in\s{\qubit, 1, \bool}$, then $A=B$;
  \item if $A=A_1\X A_2$, then $B=B_1\X B_2$, $A_1<:B_1$ and
    $A_2<:B_2$;
  \item if $A=A_1\loli A_2$, then $B=B_1\loli B_2$, $B_1<:A_1$ and
    $A_2<:B_2$;
  \item if $A=\Circ(A_1, A_2)$, then $B=\Circ(B_1,B_2)$, $B_1<:A_1$
    and $A_2<:B_2$;
  \item if $B= \bang B'$, then $A= \bang A'$;\label{subtype_bang}
  \item if $A$ is not of the form $\bang A'$, then $B$ is not of the
    form $\bang B'$.
  \end{enumerate}
\end{remark}

\begin{proposition}
  The subtyping relation is reflexive and transitive.
\end{proposition}

As in the quantum lambda calculus, the following subtyping rule is
derivable
\[
\infer[]{\bang A <: A}{}.
\]

\begin{definition}
  A \emph{typing context} is a finite set $\s{x_1:A_1,\ldots,x_n:A_n}$
  of pairs of a variable and a type, such that no variable occurs more
  than once. A \emph{quantum context} is a finite set of quantum
  variables. The expressions of the form $x:A$ in a typing context are
  called \emph{type declarations}.
\end{definition}

We write $\Gamma$ or $\Delta$ for a typing context and $Q$ for a
quantum context. We also adopt the previous notational conventions
when dealing with typing contexts: $|\Gamma|$, $\Gamma(x_i)$,
$\bang\Gamma$, and $\Gamma <: \Gamma'$. Moreover, we still write
$\Gamma,\Gamma'$ to denote the union of two contexts, which is defined
when $|\Gamma|\cap|\Gamma'|=\emptyset$.

\begin{definition}
  \label{def:pq-constant-types}
  Let $T,U$ be quantum data types. For each of the constants
  $\boxx^T$, $\unbox$, and $\rev$, we introduce a type as follows
  \begin{itemize}
  \item $A_{\boxx^T}(T,U)=\bang (T\loli U)\loli \bang \Circ(T,U)$,
  \item $A_{\unbox}(T,U)=\Circ(T,U)\loli \bang (T\loli U)$, and
  \item $A_{\rev}(T,U)=\Circ(T,U) \loli \bang \Circ(U,T)$.
  \end{itemize}
\end{definition}

\begin{definition}
  A \emph{typing judgment} is an expression of the form:
  \[
  \Gamma ; Q \entails a:A
  \]
  where $\Gamma$ is a typing context, $Q$ is a quantum context, $a$ is
  a term and $A$ is a type. A typing judgment is \emph{valid} if it
  can be inferred from the rules given in Figure~\ref{trules}. In the
  rule $\rul{cst}$, $c$ ranges over the set $\s{\boxx^T, \unbox,
    \rev}$. Each typing rule carries an implicit side condition that
  the judgements appearing in it are well-formed. In particular, a
  rule containing a context of the form $\Gamma_1,\Gamma_2$ may not be
  applied unless $|\Gamma_1|\cap|\Gamma_2|=\emptyset$.
\end{definition}

Note that in the typing judgements of Proto-Quipper, quantum variables
and variables are kept separate. As a result, we do not have to
specify that $q:\qubit$ for every quantum variable $q$ since the
typing rules implicitly enforce this. However, when a future version
of Proto-Quipper will be equipped with the ability to manipulate
quantum \emph{and} classical wires, the type of a wire might have to
be explicitly stated.

% Another reason for this is because in the reduction rule for (t,c,u)
% we need to be able to construct, e.g., t in a context that is empty
% but for the quantum variables appearing in t.

% ....................................................................
\begin{figure}[!ht]
  \[
  \infer[\rul{ax_c}]{\bang \Delta, x:A;\emptyset\entails x:B}{ A<:B }
  \quad \infer[\rul{ax_q}]{\bang \Delta;\s{q}\entails q:\qubit}{ }
  \]
  \[
  \infer[\rul{cst}]{\bang \Delta;\emptyset \entails c:B}{ \bang
    A_{c}(T,U)<:B } \quad \infer[\rul{*_i}]{\bang
    \Delta;\emptyset\entails *:\bang^n 1}{ }
  \]
  \[
  \infer[\rul{\lambda_1}]{\Gamma;Q\entails \lambda x.b:A\loli B}{
    \Gamma,x:A;Q \entails b:B } \quad \infer[\rul{\lambda_2}]{\bang
    \Delta;\emptyset \entails \lambda x.b:~\bang^{n+1}(A\loli B)}{
    \bang \Delta, x:A;\emptyset \entails b:B }
  \]
  \[
  \infer[\rul{app}]{\Gamma_1,\Gamma_2, \bang \Delta;Q_1,Q_2\entails
    ca:B}{ \Gamma_1, \bang \Delta;Q_1\entails c:A\loli B & \Gamma_2,
    \bang \Delta ;Q_2\entails a:A }
  \]
  \[
  \infer[\rul{\X_i}]{\Gamma_1,\Gamma_2, \bang \Delta;Q_1,Q_2\entails
    \p{a,b}:\bang^n(A\X B)}{ \Gamma_1, \bang \Delta;Q_1\entails
    a:\bang^nA & \Gamma_2, \bang \Delta ;Q_2\entails b:\bang^nB }
  \]
  % Note that in fact the \X intro rule could have equivalently been
  % written with a:\bang^nA b:\bang^mB and (a,b):\bang^o(A\X B)
  % where o=min {n,m} Indeed the rules are equivalent via the type
  % isomorphism {!!}A~\bang A and the subtyping relation \bang
  % A<:A. The moral here is that a pair is as reusable as its least
  % reusable component.
  \[
  \infer[\rul{\X_e}]{\Gamma_1,\Gamma_2, \bang \Delta;Q_1,Q_2\entails
    \letin{\p{x,y}}{b}{a}:A}{ \Gamma_1, \bang \Delta;Q_1\entails
    b:\bang^n(B_1\X B_2) & \Gamma_2, \bang \Delta, x:\bang^nB_1,
    y:\bang^nB_2 ;Q_2\entails a:A }
  \]
  \[
  \infer[\rul{*_e}]{\Gamma_1,\Gamma_2, \bang \Delta;Q_1,Q_2\entails
    \letin{*}{b}{a}:A}{ \Gamma_1, \bang \Delta;Q_1\entails b:\bang^n1
    & \Gamma_2, \bang \Delta ;Q_2\entails a:A }
  \]
  \[
  \infer[\rul{\top}]{\bang \Delta;\emptyset\entails \true:\bang^n
    \bool}{ } \quad \infer[\rul{\bot}]{\bang \Delta;\emptyset\entails
    \false:\bang^n \bool}{ }
  \]
  \[
  \infer[\rul{if}]{\Gamma_1,\Gamma_2, \bang \Delta;Q_1,Q_2\entails
    \Ifthenelse{b}{a_1}{a_2}:A}{ \Gamma_1, \bang \Delta;Q_1\entails
    b:\bool & \Gamma_2, \bang \Delta;Q_2 \entails a_1:A & \Gamma_2,
    \bang \Delta;Q_2 \entails a_2:A }
  \]
  \[
  \infer[\rul{circ}]{\bang \Delta;\emptyset \entails
    (t,C,a):\bang^n\Circ(T,U)}{ Q_1\entails t:T & \bang \Delta ;
    Q_2\entails a:U & \In(C)=Q_1 & \Out(C)=Q_2 }
  \]
  \label{trules}
  \caption{Typing rules for Proto-Quipper.}
  \rule{\textwidth}{0.1mm}
\end{figure}
% ....................................................................

As a first illustration of the safety properties of the type system,
note that the $\rul{\X_i}$ rule ensures that $\lambda x. \p{x,x}$
cannot be given the type $\qubit\loli \qubit \X \qubit$.

% ====================================================================
\section{The operational semantics of Proto-Quipper}
\label{sect:pq-op-sem}

As mentioned Section~\ref{sect:qlc-to-pq}, the reduction relation for
Proto-Quipper is defined in the presence of a \emph{circuit
  constructor}. This is a device capable of performing certain basic
circuit building operations. It is not necessary to have a detailed
description of the inner workings of this device. In fact, all that is
required for the definition of Proto-Quipper's operational semantics
is the existence of some primitive operations. We now axiomatize these
operations. Their intuitive meaning will be explained following
Definition~\ref{circuit_constructor}.

\begin{definition}
  \label{circuit_constructor}
  A \emph{circuit constructor} consists of a pair of countable sets
  $\p{Q,S}$ together with the following maps
  \begin{itemize}
  \item $\New\from \pow_f(Q) \to S$,
  \item $\Inn\from S\to \pow_f(Q)$,
  \item $\Outt\from S\to \pow_f(Q)$,
  \item $\Rev\from S \to S$,
  \item $\Append\from S\times S\times \finbij(Q) \to S \times
    \finbij(Q)$
  \end{itemize}
  satisfying the following conditions
  \begin{enumerate}
  \item $\Rev\circ\Rev=1_S$,
  \item $\Inn\circ\Rev= \Outt$ and $\Outt\circ\Rev=
    \Inn$\label{in_out_rev},
  \item $\Inn\circ\New = \Outt\circ\New = 1_{\pset_f(\qset)}$, and
  \item if $\Append (C,D,b)=(C',b')$ and $\dom(b)\seq\Outt(C)$ and
    $\cod(b)=\Inn(D)$, then \label{Append_cond_x}
    \begin{enumerate}
    \item $\Inn(C') = \Inn(C)$,\label{Append_cond_2b}
    \item $\dom(b')=\Outt(D)$ and $\cod(b')\seq\Outt(C')$
      and\label{Append_cond_2}
    \item $\Outt(C')= (\Outt(C)\setminus
      \dom(b))\cupdot\cod(b')$.\label{Append_cond_3}
    \end{enumerate}
  \end{enumerate}
\end{definition}

If $\p{Q,S}$ is a circuit constructor, we call the elements of $S$
\emph{circuit states} and the elements of $Q$ \emph{wire identifiers}.
We now explain the intended meaning of a circuit constructor and its
constituents. An element $C\in S$ is a quantum circuit, such as
\[
C \quad=\quad \mp{.8}{\Qcircuit @C=1em @R=.7em {
    & \lstick{q_1} & \gate{H} & \targ & \rstick{q_3} \qw \\
    & \lstick{q_2} & \qw & \ctrl{-1} & \rstick{q_4.}\qw }}
\]
Each circuit has a finite set of inputs and a finite set of outputs,
given by the functions $\Inn$ and $\Outt$. For example, $\Inn(C) =
\s{q_1,q_2}$ and $\Outt(C)=\s{q_3,q_4}$. For $X\seq Q$, the circuit
$\New(X)$ is the identity circuit with inputs and outputs $X$; for
example,
\[
\New(q_1,q_2,q_3)\quad=\quad~~ \mp{0.5}{\Qcircuit @C=1em @R=1.3em {
    &\lstick{q_1}& \qw & \qw & \qw & \rstick{q_1}\qw \\
    &\lstick{q_2}& \qw & \qw & \qw & \rstick{q_2}\qw \\
    &\lstick{q_3}& \qw & \qw & \qw & \rstick{q_3.}\qw }}
\]
The operator $\Rev$ reverses a circuit, swapping its inputs and
outputs in the process. When $(C',b') = \Append(C,D,b)$, the circuit
$C'$ is obtained by appending the circuit $D$ to the end of the
circuit $C$. The function $b$ is used to specify along which wires to
compose $C$ and $D$ while the function $b'$ updates the wire names
post composition. An illustration of this is given in
Figure~\ref{rep_unencap}.

% ....................................................................
\begin{figure}
  \[
  \mbox{ \Qcircuit @C=.5em @R=1.7em { & & & & & & & & & & & \ustick{b}
      & & & & & &
      & & & & & & \ustick{b'} & & & \\
      &\qw &\ustick{q_1}\qw &\qw &\qw &\multigate{4}{~~~C~~~} &\qw
      &\qw &\ustick{q_1}\qw &\qw &\qw &\qw &\qw &\qw &\ustick{q_1'}\qw
      &\qw &\qw &\multigate{2}{~~~D~~~} &\qw &\qw &\ustick{p_1}\qw
      &\qw &\qw &\qw &\qw &\qw &
      \ustick{p_1'}\qw \\
      &\qw &\ustick{q_2}\qw &\qw &\qw &\ghost{~~~C~~~} &\qw &\qw
      &\ustick{q_2}\qw &\qw &\qw &\qw &\qw &\qw &\ustick{q_2'}\qw &\qw
      &\qw &\ghost{~~~D~~~} &\qw &\qw &\ustick{p_2}\qw &\qw &\qw &\qw
      &\qw &\qw &
      \ustick{p_2'}\qw \\
      &\qw &\ustick{q_3}\qw &\qw &\qw &\ghost{~~~C~~~} &\qw &\qw
      &\ustick{q_3}\qw &\qw &\qw &\qw &\qw &\qw &\ustick{q_3'}\qw &\qw
      &\qw &\ghost{~~~D~~~} &\qw &\qw &\ustick{p_3}\qw &\qw &\qw &\qw
      &\qw &\qw &
      \ustick{p_3'}\qw \\
      &\qw &\ustick{q_4}\qw &\qw &\qw &\ghost{~~~C~~~} &\qw &\qw
      &\ustick{q_4}\qw &\qw &\qw &\qw &\qw &\qw &\qw &\qw &\qw &\qw
      &\qw &\qw &\qw &\qw &\qw &\qw &\qw &\qw &\ustick{q_4}\qw \\
      &\qw &\ustick{q_5}\qw &\qw &\qw &\ghost{~~~C~~~} &\qw &\qw
      &\ustick{q_5}\qw &\qw &\qw &\qw &\qw &\qw &\qw &\qw &\qw &\qw
      &\qw &\qw &\qw &\qw &\qw &\qw &\qw &\qw &\ustick{q_5}\qw
      \gategroup{2}{3}{6}{8}{3.3em}{--}
      \gategroup{2}{16}{4}{20}{3.3em}{--}
      \gategroup{2}{23}{5}{24}{4.5em}{^\}}
      \gategroup{2}{11}{5}{13}{4.5em}{^\}} }}
  \]
  \caption{A representation of $\Append(C,D,b)$.}
  \label{rep_unencap}
  \rule{\textwidth}{0.1mm}
\end{figure}
% ....................................................................

We note that the axiomatization of
Definition~\ref{circuit_constructor} does not mention the concept of a
{\em gate}. Indeed, any gate is a circuit, and thus a member of the
set $S$; conversely, any circuit can be used as a gate. In
Proto-Quipper, we simply assume that certain members of $S$ are
available as pre-defined constants, serving as ``elementary'' gates.
The operation of appending a gate to a circuit is subsumed by the more
general operation of composing circuits.

Proto-Quipper's quantum variables and circuit constants are supposed
to be the syntactic representatives of a circuit constructor's wire
identifiers. This idea is formalized in the following definition.

\begin{definition}
  \label{def:pq-adequate-cc}
  A circuit constructor $\p{Q,S}$ is \emph{adequate} if it can be
  equipped with bijections $\Wire\from \qset \to Q$ and $\Name\from
  \cset \to S$ such that:
  \[\In=\Wire' \circ \Inn \circ \Name
  \quad \mbox{and} \quad \Out=\Wire'\circ\Outt\circ \Name
  \]
  where $\Wire'$ denotes the lifting of $\Wire^{-1}$ from $\qset$ to
  $\pset(\qset)$.
\end{definition}

\begin{remark}
  \label{structure-transfer}
  The existence of the bijections $\Wire$ and $\Name$ has the
  following consequences:
  \begin{itemize}
  \item $\cset$ can be equipped with an involution:
    \[
    (.)^{-1} = \Name\circ \Rev \circ \Name^{-1} \from \cset\to\cset
    \]
    such that $\In(C^{-1})=\Out(C)$ and $\Out(C^{-1})=\In(C)$.
  \item If $t$ and $u$ are quantum data terms such that
    $\bind(t,u)=\binding$, then we can define $b = \Wire\circ
    \binding\circ \Wire^{-1} \in \finbij(Q)$.
  \end{itemize}
  From now on, we always assume an adequate circuit
  constructor. Moreover, we work under the simplifying assumptions
  that $\qset=Q$, $\cset=S$, $\Wire=1_Q$, and $\Name=1_S$. This
  notably implies that $\In = \Inn$ and $\Out = \Outt$.
\end{remark}

We are now in a position to define Proto-Quipper's operational
semantics.

\begin{definition}
  \label{def:pq-closure}
  Let $\p{Q,S}$ be an adequate circuit constructor. A \emph{closure}
  is a pair $[C,a]$ where $C\in S$, $a$ is a term and
  $\FQ(a)\seq\Outt(C)$.
\end{definition}

\begin{definition}
  \label{def:pq-beta-red}
  The \emph{one-step reduction relation}, written $\to$, is defined on
  closures by the rules given in Figure~\ref{fig:pq-red-rules}. The
  \emph{reduction relation}, written $\to^*$, is defined to be the
  reflexive and transitive closure of $\to$.
\end{definition}

% ....................................................................
\begin{figure}
  \[
  \infer[\rul{fun}]{[C,ab]\to[C',a'b]}{ [C,a]\to[C',a'] } \quad
  \infer[\rul{arg}]{[C,vb]\to [C',vb']}{ [C,b]\to [C',b'] }
  \]
  \[
  \infer[\rul{right}]{[C,\p{a,b}]\to [C',\p{a,b'}]}{ [C,b]\to [C',b']
  } \quad \infer[\rul{left}]{[C,\p{a,v}]\to [C',\p{a',v}]}{[C,a]\to
    [C',a']}
  \]
  \[
  \infer[\rul{let*}]{[C,\letin{*}{a}{b}]\to [C', \letin{*}{a'}{b}]}{
    [C,a]\to [C',a'] }
  \]
  \[
  \infer[\rul{let}]{[C,\letin{\p{x,y}}{a}{b}]\to [C',
    \letin{\p{x,y}}{a'}{b}]}{ [C,a]\to [C',a'] }
  \]
  \[
  \infer[\rul{cond}]{[C, \Ifthenelse{a}{b}{c}]\to [C',
    \Ifthenelse{a'}{b}{c}]}{ [C,a]\to [C',a'] }
  \]
  \[
  \infer[\rul{circ}]{[C, (t,D,a)]\to [C, (t,D',a')]}{ [D,a]\to [D',a']
  }
  \]
  ~
  \[
  \infer[\rul{\beta}]{[C,(\lambda x.a)v]\to [C, a[v/x]]}{}
  \]
  \[
  \infer[\rul{unit}]{[C,\letin{*}{*}{a}]\to[C,a]}{}
  \]
  \[
  \infer[\rul{pair}]{[C,\letin{\p{x,y}}{\p{v,w}}{a}]
    \to[C,a[v/x,w/y]]}{}
  \]
  \[
  \infer[\rul{ifF}]{[C,\Ifthenelse{\false}{a}{b}] \to [C,b]}{}
  \]
  \[
  \infer[\rul{ifT}]{[C,\Ifthenelse{\true}{a}{b}] \to [C,a]}{}
  \]
  ~
  \[
  \infer[(\boxx)]{[C,\boxx^T(v)]\to [C,(t,D,vt)]}{ \spec_{\FQ(v)}(T)=t
    & \mathtt{new}(\FQ(t))=D }
  \]
  \[
  \infer[(\unbox)]{[C,(\unbox\,(u,D,u'))v]\to [C',\binding'(u')]}{
    \bind(v,u)=\binding & \mathtt{Append}(C,D,\binding) =
    (C',\binding') & \FQ(u')\seq\dom(\binding') }
  \]
  \[
  \infer[(\rev)]{[C,\rev\,(t,C,t')]\to [C,(t',C^{-1},t)]}{}
  \]
  \label{fig:pq-red-rules}
  \caption{Reduction rules for Proto-Quipper.}
  % \rule{\textwidth}{0.1mm}
\end{figure}
% ....................................................................

The rules are separated in three groups. The first group and second
group contain the \emph{congruence rules} and \emph{classical rules}
respectively.  Except for the rule for $(t,D,a)$, these rules are
standard. They describe a call-by-value reduction strategy. The
circuit generating rule for $\rev(t,C,t')$ rule is straightforward. We
briefly discuss the remaining rules.

The rule for $\boxx^T(v)$ is to be understood as follows. To reduce a
closure of the form $[C, \boxx^T(v)]$, start by generating a specimen
of type $T$. Then apply the function $v$ on the input $t$ in the
context of an empty circuit of the appropriate arity. By the
congruence rule for $(t,D,a)$, this computation will continue until a
value is reached, i.e., a term of the form $(t,D,t')$. Note that while
this computation is taking place, the state $C$ is not
accessible. When a value of the form $(t,D,t')$ is reached, the
construction of $C$ can resume. Note that it was necessary to know the
type $T$ in order to generate the appropriate specimen. This explains
the choice of a Church-style typing of the $\boxx$ operator.

The rule for $(\unbox\,(u,D,u'))v$ will first generate a binding from
$v$ and the input $u$ of $D$. Then, it will compose $C$ and $D$ along
that binding and update the names of the wire identifiers appearing in
$u'$ according to $\binding'$.

The recursive nature of the reduction rules explains why closures are
not required to satisfy $\FQ(a)=\Outt(C)$. The requirement that
$\FQ(a)\seq \Outt(C)$ is justified by the idea that a term should not
affect a wire outside of $C$. But if we also asked for the opposite
inclusion, it would not be possible to define a recursive reduction in
a straightforward way. For example, the reduction of a pair is done
component-wise: to reduce $\p{a,b}$ one first reduces $b$. The
simplest way to express this in terms of closures is to carry the
whole circuit state along. This implies that if both $a$ and $b$
contain wire identifiers, then the equality $\FQ(a)=\Outt(C)$ cannot
be satisfied.

Unlike in the quantum lambda calculus, Proto-Quipper's reduction is
not probabilistic, in the sense that the right member of any reduction
rule is a unique closure. The following proposition establishes that
Proto-Quipper's reduction is moreover \emph{deterministic}.

\begin{proposition}
  \label{determinicity}
  If $[C,a]$ is a closure, then at most one reduction rule applies to
  it.
\end{proposition}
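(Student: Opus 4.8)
The statement to prove is Proposition~\ref{determinicity}: for any Proto-Quipper closure $[C,a]$, at most one reduction rule from Figure~\ref{fig:pq-red-rules} applies. The plan is to proceed by structural induction on the term $a$, showing at each step that the shape of $a$ (together with the shapes of its immediate subterms) uniquely determines which single rule, if any, is applicable. The key structural fact that makes this work is that Proto-Quipper uses a call-by-value strategy: the congruence rules in the first group are all guarded by the requirement that certain subterms already be values (e.g.\ $\rul{arg}$ requires the left component of the application to be a value $v$, and $\rul{left}$ requires the right component of a pair to be a value $v$). So the determinism argument hinges on the dichotomy ``a term is either a value or it reduces, but not both,'' which must be established along the way.

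First I would do a case analysis on the outermost term former of $a$. If $a$ is a variable $x$, a quantum variable $q$, $\true$, $\false$, $*$, a lambda abstraction, $\boxx^T$, $\rev$, or $\unbox$, then $a$ is a value (by Definition~\ref{def:pq-values}) and inspection of Figure~\ref{fig:pq-red-rules} shows no rule has such a term on the left-hand side, so zero rules apply --- vacuously at most one. If $a = \p{a_1,a_2}$, the applicable rules are $\rul{right}$ (when $a_2$ is not a value and reduces) and $\rul{left}$ (when $a_2$ is a value $v$ and $a_1$ reduces); these two are mutually exclusive because $a_2$ cannot simultaneously be a value and reduce --- here I invoke the induction hypothesis applied to $a_2$, together with the value/reducibility dichotomy for $a_2$. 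Moreover within each branch the reducing subterm has a unique reduct by the IH, and the resulting closure is determined. The cases $a = \letin{*}{b}{c}$, $\letin{\p{x,y}}{b}{c}$, $\Ifthenelse{b}{c_1}{c_2}$, and $(t,D,a')$ are analogous: there is one congruence rule (reducing $b$, resp.\ the inner $a'$) and one or more classical rules (e.g.\ $\rul{unit}$, $\rul{pair}$, $\rul{ifT}$, $\rul{ifF}$), and these are disjoint because the congruence rule requires $b$ to reduce whereas the classical rules require $b$ to already be a specific value ($*$, $\p{v,w}$, $\true$, $\false$), and ``$b$ reduces'' excludes ``$b$ is a value'' by the dichotomy.

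The interesting case is application, $a = bc$. The candidate rules are: $\rul{fun}$ (when $b$ reduces), $\rul{arg}$ (when $b$ is a value $v$ and $c$ reduces), $\rul{\beta}$ (when $b = \lambda x.a'$ and $c$ is a value), $(\boxx)$ (when $b = \boxx^T$ and $c$ is a value), $(\unbox)$ (when $b = \unbox\,(u,D,u')$ --- note $\unbox\,w$ is itself a value --- and $c$ is a value), and $(\rev)$ (when $b = \rev$ and $c = (t,C,t')$, a value). I would argue these are pairwise exclusive by first splitting on whether $b$ is a value: if not, only $\rul{fun}$ applies (and gives a unique reduct by the IH). If $b$ is a value, then $\rul{fun}$ is excluded by the dichotomy, and I split further on whether $c$ is a value: if not, only $\rul{arg}$ applies (unique reduct by IH on $c$); if $c$ is a value, then $\rul{arg}$ is excluded, and now exactly which of $\rul{\beta}$, $(\boxx)$, $(\unbox)$, $(\rev)$ applies is determined by the head constructor of the value $b$ --- $\lambda x.a'$ versus $\boxx^T$ versus $\unbox\,(u,D,u')$ versus $\rev$ --- and these are syntactically distinct value forms. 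A small subtlety to address here: when $b = \rev$ and $c$ is a value but \emph{not} of the form $(t,C,t')$ with the circuit constant matching the current state $C$, then $(\rev)$ does not apply and in fact no rule applies --- this is fine, as ``at most one'' permits zero, but I would note it explicitly. I expect the main obstacle to be bookkeeping: making the value/reducibility dichotomy precise (it is essentially a companion lemma, provable by the same structural induction, stating that no term is both a value and a redex-in-context), and carefully enumerating the value forms in Definition~\ref{def:pq-values} --- particularly remembering that $\unbox\,v$ is a value, so an application headed by $\unbox$ behaves like a ``function awaiting an argument'' rather than a redex. No deep argument is needed; the proof is a disciplined inspection of Figures~\ref{fig:pq-red-rules} and the value grammar.
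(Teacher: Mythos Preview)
Your proposal is correct and follows essentially the same approach as the paper, which dispatches the proposition in a single line (``By case distinction on $a$''); you have simply spelled out the case analysis in full detail, including the auxiliary observation that values are irreducible, which is indeed what makes the congruence rules mutually exclusive with the computation rules. The only minor over-engineering is framing it as a structural induction when a flat case distinction on the outermost constructor of $a$, together with the separate (easy) lemma that no value has a reduct, already suffices for the stated claim ``at most one rule applies.''
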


\begin{proof}
  By case distinction on $a$.
\end{proof}

To close this chapter, we illustrate the reduction of Proto-Quipper
with an example. Assume that we are given the basic circuit generating
functions $\Hgate$, $\Sgate$, and $\CNOTgate$ of
Examples~\ref{ex:pq-terms} and let $F$ be the following term
\[
F=\lambda z.(\letin{\p{x,y}}{z}{\CNOTgate\p{\Hgate x, \Sgate y}})
\]
Since $F$ can be given the type $\qubit \x \qubit \loli \qubit \x
\qubit$, we can use $\boxx^{\qubit \x \qubit}$ to turn $F$ into a
Proto-Quipper circuit. Now consider the closure
\begin{equation} 
  \label{eq:ex-red1}
  [-, \boxx^{\qubit \x \qubit}F]
\end{equation}
where $-$ is any circuit state. The $\rul{\boxx}$ rule applies, so
that a specimen of type $\qubit \x \qubit$ is created, say
$\p{q_1,q_2}$, and (\ref{eq:ex-red1}) reduces to
\begin{equation*}
  [-,(\p{q_1,q_2},C,F\p{q_1,q_2})]
\end{equation*} 
where $C=\mathtt{new}(\s{q_1,q_2})$ is the empty circuit on
$\s{q_1,q_2}$. Since $F\p{q_1,q_2}$ is not a value, the $\rul{circ}$
rule applies. This means that we consider the closure
\begin{equation*}
  [C,F\p{q_1,q_2}].
\end{equation*} 
We now repeatedly consider reducts of this closure until a value is
reached. For clarity, we represent circuit states as circuits. In two
classical reductions we reach the closure
\begin{equation*}
  \mp{.8}{\Qcircuit @C=1em @R=1.7em { & \lstick{q_1} & \qw &\qw & 
      \qw & \qw & \rstick{q_1} \qw \\
      & \lstick{q_2} & \qw & \qw & \qw & \qw & \rstick{q_2} \qw }} 
  \quad ~ \quad ~ \quad ~ \quad \CNOTgate\p{\Hgate q_1, \Sgate q_2}.
\end{equation*}
Following Proto-Quipper's reduction strategy, the right argument is
reduced first, yielding
\begin{equation*}
  \mp{.8}{\Qcircuit @C=1em @R=.7em { & \lstick{q_1} & \qw &
      \qw & \qw & \rstick{q_1}\qw \\
      & \lstick{q_2} & \qw & \gate{S} & \qw &\rstick{q_2}\qw }}
  \quad ~ \quad ~ \quad ~ \quad \CNOTgate\p{\Hgate q_1, q_2}.
\end{equation*}
Here we assumed for simplicity that the output wire of the $\Sgate$
was not renamed. Since $q_1$ is a value, we now reduce $\Sgate
q_2$. This yields
\begin{equation*}
  \mp{.8}{\Qcircuit @C=1em @R=.7em { & \lstick{q_1} & \qw &
      \gate{H} & \qw & \rstick{q_1}\qw \\
      & \lstick{q_2} & \qw & \gate{S} & \qw &\rstick{q_2}\qw }}
  \quad ~ \quad ~ \quad ~ \quad \CNOTgate\p{q_1, q_2}.
\end{equation*}
Finally, the $CNOT$ gate is applied
\begin{equation}
  \label{eq:ex-red7}
  \mp{.8}{\Qcircuit @C=1em @R=.7em { & \lstick{q_1} & \gate{H} &
      \targ &  \rstick{q_1} \qw \\
      & \lstick{q_2} & \gate{S} & \ctrl{-1} & \rstick{q_2}\qw }}
  \quad ~ \quad ~ \quad ~ \quad \p{q_1, q_2}.
\end{equation}
Since $\p{q_1,q_2}$ is a value, the execution is finished. The final
circuit is now returned in the form of a term of the language, e.g.,
as
\begin{equation*}
  [-,(\p{q_1,q_2}, D, \p{q_1,q_2})]
\end{equation*}
where $D$ is the constant representing the circuit on the left hand
side of (\ref{eq:ex-red7}).

% ---------------------------------------------------------------------
\chapter{Type-safety of Proto-Quipper}
\label{chap:pq-safe}

In this chapter, we establish that Proto-Quipper is a \emph{type safe}
language. As discussed in Section~\ref{ssect:st-strong-norm}, type
safety is established by proving that the language enjoys the subject
reduction and progress properties.

% ====================================================================
\section{Properties of the type system}
\label{sect:pq-ppties-type-syst}

Before proving the subject reduction and progress, we record some
properties of the type system, including the technical but important
\emph{Substitution Lemma}. Note that the typing rules enforce a
\emph{strict} linearity on variables and quantum variables. In
particular, if a quantum variable appears in the quantum context of a
valid typing judgement for a term $a$, then it must belong to the free
quantum variables of $a$.

\begin{lemma}~
  \label{prop:type_syst}
  \begin{enumerate}
  \item If $\Gamma; Q \entails a:A$ is valid, then
    $Q=\FQ(a)$.\label{q_context}
  \item If $\Gamma,x:B;Q \entails a:A$ is valid, and $x\notin \FV(a)$,
    then $B=\bang B'$ and $\Gamma;Q \entails a:A$ is
    valid.\label{unused_var}
  \item If $\Gamma; Q \entails a:A$ is valid, then $\Gamma, \bang
    \Delta ; Q \entails a:A$ is valid.\label{weakening}
  \item If $\Gamma ; Q \entails a:A$ is valid, $\Delta <: \Gamma$ and
    $A<:B$, then $\Delta ; Q \entails a:B$ is valid.\label{subtype}
  \end{enumerate}
\end{lemma}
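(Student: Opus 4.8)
The plan is to prove each of the four statements separately, by induction on the derivation of the hypothesised typing judgment, proceeding in each case by a case analysis on the last rule applied (Figure~\ref{trules}). Statement~\rref{prop:type_syst}{q_context} is the easiest: in the rules whose conclusion has a rigidly determined quantum context ($\rul{ax_c}$, $\rul{ax_q}$, $\rul{cst}$, $\rul{*_i}$, $\rul{\top}$, $\rul{\bot}$, $\rul{circ}$) that context is read off directly and visibly equals $\FQ$ of the term, while in every remaining rule the quantum context of the conclusion is the union --- disjoint, by the rule's implicit side condition --- of the quantum contexts of the premises; since $\FQ$ distributes over the term formers in precisely the same way (Definition~\ref{def:pq-fqv}) and none of the binders ($\lambda$ and $\texttt{let}$) binds a quantum variable, the induction hypothesis propagates the equality.

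For statement~\rref{prop:type_syst}{unused_var} the key observation is that a variable $x$ which does not occur free in $a$ cannot have been consumed anywhere in the derivation: it can only have entered the context at a leaf in a $\bang$-position, never as the distinguished variable of an $\rul{ax_c}$ instance (which would force $a=x$) nor as a $\lambda$-bound variable (excluded by the variable convention). Thus in the $\rul{ax_c}$ case $x$ lies in the $\bang\Delta$-part, so $B=\bang B'$, and deleting it leaves a valid $\rul{ax_c}$ instance; in every other case $x$ lies in some sub-context of the conclusion, $x\notin\FV(a)$ forces $x$ to be absent from the relevant subterms, the induction hypothesis applies to each premise mentioning $x$, and re-applying the last rule yields $B=\bang B'$ together with the desired weakened judgment. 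Statement~\rref{prop:type_syst}{weakening} is equally mechanical: every rule in Figure~\ref{trules} remains an instance of itself when an arbitrary all-$\bang$ context $\bang\Delta'$ (with fresh variables, as the conventions allow) is adjoined, using the induction hypothesis to adjoin $\bang\Delta'$ to the premises first.

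Statement~\rref{prop:type_syst}{subtype} is where the real work lies. One inducts on the derivation of $\Gamma;Q\entails a:A$ and, at each step, inverts both subtyping hypotheses using Remark~\ref{subtyping_shape}: $A<:B$ is analysed directly, and $\Delta<:\Gamma$ component-wise (the partition of $\Delta$ matching that of $\Gamma$ is forced since context subtyping preserves domains). The crucial structural fact, from Remark~\ref{subtyping_shape}(\ref{subtype_bang}), is that a subtype of an all-$\bang$ context is again an all-$\bang$ context, so that wherever a rule insists its context (or part of it) have the shape $\bang\Delta$, the subtyped context still does. In the axiom-style cases ($\rul{ax_c}$, $\rul{ax_q}$, $\rul{cst}$, $\rul{*_i}$, $\rul{\top}$, $\rul{\bot}$) one merely chains subtypings through transitivity and re-applies the rule; in $\rul{circ}$ one uses that $\Circ$-types and quantum data types are rigid under $<:$; in $\rul{app}$, $\rul{\X_i}$, $\rul{\X_e}$, $\rul{*_e}$, and $\rul{if}$ one splits $\Delta$ along the partition of $\Gamma$, routes the relevant half of the type refinement into the appropriate premise while giving the other premises an identity refinement (using reflexivity of $<:$), and re-applies the rule.

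The one genuinely subtle case --- the main obstacle --- is $\rul{\lambda_2}$. Its conclusion has type $\bang^{n+1}(A_1\loli A_2)$, and the target $B$ may strip away some, or even all, of these leading $\bang$s. Inverting $\bang^{n+1}(A_1\loli A_2)<:B$ via Remark~\ref{subtyping_shape} gives $B=\bang^m(A_1'\loli A_2')$ with $A_1'<:A_1$ and $A_2<:A_2'$; one then applies the induction hypothesis to the premise $\bang\Delta,x:A_1;\emptyset\entails b:A_2$ under the refined context $\bang\Delta^{*},x:A_1'$ (with $\bang\Delta^{*}<:\bang\Delta$ still all-$\bang$) and refined type $A_2'$, and closes by re-applying $\rul{\lambda_2}$ with exponent $m-1$ when $m\geq 1$, or by $\rul{\lambda_1}$ when $m=0$. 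The $\rul{\lambda_1}$ case (refine the bound-variable type by the contravariant part of $A<:B$, then re-apply) and the binary rules are similar but more routine. I expect the bulk of the effort in the whole lemma to be the patient, case-by-case verification in part~\rref{prop:type_syst}{subtype} that each of these re-applications meets the relevant rule's implicit side conditions (context disjointness and all-$\bang$-ness), with the $\rul{\lambda_2}$ case requiring the most care.
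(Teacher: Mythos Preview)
Your proposal is correct and takes exactly the approach the paper takes: the paper's entire proof reads ``By induction on the corresponding typing derivation,'' and your sketch is a faithful, detailed unfolding of that induction, including the right handling of the $\rul{\lambda_2}$/$\rul{\lambda_1}$ switch in part~\rref{prop:type_syst}{subtype} and the rigidity of quantum data types under $<:$ for the $\rul{circ}$ case.
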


\begin{proof}
  By induction on the corresponding typing derivation.
\end{proof}

\begin{lemma}
  \label{specimen}
  If $T$ is a quantum data type and $X$ is a finite subset of $\qset$,
  then $\FQ(\spec_X(T))\entails \spec_X(T):T$ is valid.
\end{lemma}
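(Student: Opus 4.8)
The statement is proved by structural induction on the quantum data type $T$, following exactly the recursive definition of $\spec_X(T)$ in Definition~\ref{def:spec}. The three cases to treat are $T=1$, $T=\qubit$, and $T=T'\X U'$. In each case I need to exhibit a typing derivation of $\FQ(\spec_X(T))\entails \spec_X(T):T$ using the typing rules of Figure~\ref{trules}; the only rules that come into play are $\rul{*_i}$, $\rul{ax_q}$, and $\rul{\X_i}$, since a specimen is always a quantum data term built from $*$, single quantum variables, and pairing.

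First I would handle the base cases. For $T=1$, we have $\spec_X(1)=*$ and $\FQ(*)=\emptyset$, so the goal is $\emptyset;\emptyset\entails *:1$, which is an instance of $\rul{*_i}$ with $n=0$ and empty $\bang\Delta$. For $T=\qubit$, we have $\spec_X(\qubit)=q$ where $q$ is the smallest index in $\qset\setminus X$, and $\FQ(q)=\s{q}$, so the goal is $\emptyset;\s{q}\entails q:\qubit$, which is an instance of $\rul{ax_q}$ with empty $\bang\Delta$.

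The inductive step is the case $T=T'\X U'$. By Definition~\ref{def:spec}, $\spec_X(T'\X U')=\p{t,u}$ with $t=\spec_X(T')$ and $u=\spec_{X\cup\FQ(t)}(U')$. By the induction hypothesis applied to $T'$ (with the set $X$) and to $U'$ (with the set $X\cup\FQ(t)$), the judgments $\FQ(t)\entails t:T'$ and $\FQ(u)\entails u:U'$ are both valid. To apply the $\rul{\X_i}$ rule (with $n=0$) and conclude $\FQ(t),\FQ(u)\entails\p{t,u}:T'\X U'$, I must check that the union $\FQ(t),\FQ(u)$ is well-formed, i.e.\ that $\FQ(t)\cap\FQ(u)=\emptyset$. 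This is the one genuine obstacle, and it is exactly where the choice of a ``fresh'' specimen in the definition matters: I would prove, as a small auxiliary observation by induction on $T$, that $\FQ(\spec_X(T))\cap X=\emptyset$ (the specimen for $T$ avoids $X$). Applying this to $U'$ with the set $X\cup\FQ(t)$ gives $\FQ(u)\cap(X\cup\FQ(t))=\emptyset$, hence $\FQ(u)\cap\FQ(t)=\emptyset$ as required. Finally I note $\FQ(\p{t,u})=\FQ(t)\cup\FQ(u)$, so the conclusion of $\rul{\X_i}$ is precisely $\FQ(\spec_X(T))\entails\spec_X(T):T$, completing the induction.

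The auxiliary claim $\FQ(\spec_X(T))\cap X=\emptyset$ is itself a routine induction: for $T=1$ it is trivial since $\FQ(*)=\emptyset$; for $T=\qubit$, $\spec_X(\qubit)$ is by definition the smallest element of $\qset\setminus X$, so it is not in $X$; for $T=T'\X U'$, $\FQ(t)\cap X=\emptyset$ by the IH for $T'$, and $\FQ(u)\cap(X\cup\FQ(t))=\emptyset$ by the IH for $U'$ applied with the enlarged set $X\cup\FQ(t)$, so in particular $\FQ(u)\cap X=\emptyset$, and thus $\FQ(\p{t,u})\cap X=(\FQ(t)\cup\FQ(u))\cap X=\emptyset$. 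I expect the bookkeeping of which set $X$ is being used at each recursive call to be the only thing requiring care; everything else is a direct reading-off of the typing rules.
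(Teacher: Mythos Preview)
Your proposal is correct and follows essentially the same structural induction on $T$ as the paper's proof, invoking $\rul{*_i}$, $\rul{ax_q}$, and $\rul{\X_i}$ in the three cases. In fact you are more careful than the paper: the paper's proof simply says ``we can therefore conclude by applying the $\rul{\X_i}$ rule'' without verifying that $\FQ(t)\cap\FQ(u)=\emptyset$, whereas you correctly identify this implicit side condition and supply the auxiliary freshness claim $\FQ(\spec_X(T))\cap X=\emptyset$ to discharge it.
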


\begin{proof}
  We prove the Lemma by induction on $T$.
  \begin{itemize}
  \item If $T=1$, then $\spec_X(T)=*$ and we can use the $\rul{*_i}$
    rule.
  \item If $T=\qubit$, then $\spec_X(T)=q$ for some quantum variable
    $q$ and we can use the $\rul{ax_q}$ rule.
  \item If $T=T_1\X T_2$, then $\spec_X(T)=\p{t_1,t_2}$ where
    $t_1=\spec_X(T_1)$ and $u=\spec_{X\cup \FQ(t_1)}(T_2)$. By the
    induction hypothesis, both $\FQ(t_1)\entails t_1:T_1$ and
    $\FQ(t_2)\entails t_2:T_2$ are valid typing judgements. We can
    therefore conclude by applying the $\rul{\X_i}$ rule.\qedhere
  \end{itemize}
\end{proof}

\begin{lemma}
  \label{binding_judgement}
  If $\Gamma;Q \entails a:A$ is valid and $\binding$ is a binding such
  that $FQ(a)\seq\dom(\binding)$ then $\Gamma;\binding (Q) \entails
  \binding(a):A$ is valid.
\end{lemma}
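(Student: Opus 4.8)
The plan is to prove the statement by induction on the derivation of $\Gamma;Q \entails a:A$, with a case analysis on the last typing rule used. Throughout I will use item~(1) of Lemma~\ref{prop:type_syst}, which gives $Q=\FQ(a)$; combined with the hypothesis $\FQ(a)\seq\dom(\binding)$ this guarantees that both $\binding(a)$ (Definition~\ref{def:binding-ap}) and $\binding(Q)$ (the image of $Q$ under the finite bijection $\binding$) are well-defined. I will also record at the outset the two elementary facts that drive every case. First, $\binding(-)$ is a (parallel) substitution of \emph{quantum} variables, and since the only binders of the language ($\lambda x.-$, $\letin{\p{x,y}}{-}{-}$, $\letin{*}{-}{-}$) bind \emph{term} variables, no capture can occur, so $\binding(-)$ commutes with all the term-forming operations; moreover $\binding(-)$ is the identity on any term with no free quantum variables, in particular $\binding((t,C,a))=(t,C,a)$ because $\FQ((t,C,a))=\emptyset$. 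Second, a binding is injective, so it sends disjoint quantum contexts to disjoint quantum contexts, i.e. $\binding(Q_1,Q_2)=\binding(Q_1),\binding(Q_2)$ whenever the left-hand side is defined.

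With these in hand the induction is routine. The cases $\rul{ax_c}$, $\rul{cst}$, $\rul{*_i}$, $\rul{\top}$, $\rul{\bot}$, $\rul{circ}$ all have $Q=\emptyset$ and a subject term with no free quantum variables, so $\binding$ acts as the identity on both sides and there is nothing to do. For $\rul{ax_q}$ one uses only that $\binding(q)\in\qset$, so that $\bang\Delta;\s{\binding(q)}\entails \binding(q):\qubit$ is again an instance of $\rul{ax_q}$. For each of the remaining rules — $\rul{\lambda_1}$, $\rul{\lambda_2}$, $\rul{app}$, $\rul{\X_i}$, $\rul{\X_e}$, $\rul{*_e}$, $\rul{if}$ — one reads off from Definition~\ref{def:pq-fqv} that the free quantum variables of the subject term of each premise are contained in $\FQ(a)\seq\dom(\binding)$, so the induction hypothesis applies to every premise with the \emph{same} binding $\binding$. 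Applying it, using injectivity of $\binding$ to re-establish the disjointness side conditions on the split contexts, and re-applying the same rule yields $\Gamma;\binding(Q)\entails \binding(a):A$ once the commutation identities above are used to recognise the built-up term as $\binding(a)$.

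The only point requiring a little care is the bookkeeping for the binary rules: whenever a rule splits the quantum context as $Q_1,Q_2$ (a disjoint union, by item~(1) of Lemma~\ref{prop:type_syst}), one must check that $\binding(Q_1)$ and $\binding(Q_2)$ remain disjoint, which is precisely injectivity of $\binding$, and that the hypothesis $\FQ(\text{premise subject})\seq\dom(\binding)$ is inherited, which holds by monotonicity of $\FQ$ under subterms for all the constructors on which we recurse (it \emph{fails} for $(t,C,a)$, but that case is trivial and does not recurse). I do not anticipate any genuine difficulty here: the substitution is a parallel renaming of one finite set of quantum names by another, and since quantum names are never bound except inside the closed-up form $(t,C,a)$, no $\alpha$-renaming subtleties intervene. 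If one prefers a less case-heavy argument, one can instead extend $\binding$ to a bijection $\bar\binding\from\qset\to\qset$ and note that such a global renaming of quantum variables is an automorphism of the whole derivation system, whence the claim — but that observation is itself established by exactly the induction sketched above, so nothing is really saved.
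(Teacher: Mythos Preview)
Your proposal is correct and follows exactly the approach the paper takes: induction on the typing derivation of $\Gamma;Q\entails a:A$. The paper's own proof consists of the single line ``By induction on the typing derivation of $\Gamma;Q \entails a:A$,'' so you have simply filled in the details that the paper omits; the key observations you record (that $\binding$ substitutes only quantum variables and hence commutes with all binders, that the $\rul{circ}$ case is trivial because $\FQ((t,C,a))=\emptyset$, and that injectivity of $\binding$ preserves disjointness of split contexts) are precisely what the routine induction needs.
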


\begin{proof}
  By induction on the typing derivation of $\Gamma;Q \entails a:A$.
\end{proof}

\begin{lemma}
  \label{context_value}
  If $v\in\Val$ and $\Gamma;Q \entails v:\bang A$ is valid, then
  $Q=\emptyset$ and $\Gamma=\bang \Delta$ for some $\Delta$.
\end{lemma}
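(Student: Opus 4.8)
The plan is to proceed by induction on the typing derivation of $\Gamma;Q \entails v:\bang A$, using the fact that $v$ is a value to restrict which typing rules can appear at the root of the derivation. First I would observe that the conclusion type is of the form $\bang A$, so by Remark~\ref{subtyping_shape}(\ref{subtype_bang}), the subtyping side condition in whichever rule was applied last must produce a type of $\bang$-shape; combined with the syntactic form of $v$ (which, being a value, is one of $x$, $q$, $(t,C,u)$, $\true$, $\false$, $\p{v_1,v_2}$, $*$, $\lambda x.a$, $\boxx^T$, $\rev$, $\unbox$, or $\unbox\,v'$), this sharply limits the possible last rules.

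The key case analysis: if the last rule is $\rul{ax_q}$, then $v=q$ and the conclusion type is $\qubit$, which is not of the form $\bang A$, so this case is vacuous. If the last rule is $\rul{ax_c}$, then $v=x$, the context is $\bang\Delta, x:B$ with $B<:\bang A$; by Remark~\ref{subtyping_shape}(\ref{subtype_bang}) $B$ is itself of $\bang$-shape, so the context is $\bang\Delta'$ for some $\Delta'$ (absorbing the declaration $x:B$ into the bang-context), and by Lemma~\ref{prop:type_syst}(\ref{q_context}) we have $Q=\FQ(x)=\emptyset$. For the rules $\rul{cst}$, $\rul{*_i}$, $\rul{\lambda_2}$, $\rul{\top}$, $\rul{\bot}$, and $\rul{circ}$, the context is already explicitly of the form $\bang\Delta$ and the quantum context is $\emptyset$ by inspection of the rule, so the claim is immediate. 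The remaining subtlety concerns values whose natural typing rule does \emph{not} force a bang: $\p{v_1,v_2}$ typed by $\rul{\X_i}$, $\lambda x.a$ typed by $\rul{\lambda_1}$, and $\unbox\,v'$ typed by $\rul{app}$. Here the strategy is that if the conclusion type is $\bang A$ with $n\geq 1$ bangs, then the $\rul{\X_i}$ rule already delivers the bang-context structure $\bang\Delta$ and $Q_1,Q_2$ with $Q_i=\FQ(v_i)$; I would need to recurse into the subderivations for $v_1$ and $v_2$ — but actually the $\rul{\X_i}$ rule in Figure~\ref{trules} produces type $\bang^n(A\X B)$ from a context $\Gamma_1,\Gamma_2,\bang\Delta$, so when $n\geq1$ I must argue $\Gamma_1=\Gamma_2=\emptyset$. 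This follows because the subderivations type $v_1:\bang^nA$ and $v_2:\bang^nB$ with $n\geq 1$, so by the induction hypothesis their contexts are bang-contexts and their quantum contexts empty, whence $\Gamma_1,\Gamma_2$ are bang-contexts and can be merged into $\bang\Delta$, and $Q=Q_1,Q_2=\emptyset$. The cases $\rul{\lambda_1}$ and $\rul{app}$ (for $\unbox\,v'$) are handled by the same device: the only way to obtain a $\bang$-type from $\rul{\lambda_1}$'s output $A\loli B$ is vacuously impossible (it is not bang-shaped), and for $\unbox\,v'$ the type $\bang(T\loli U)$ arising from $A_{\unbox}(T,U)$ through $\rul{app}$ forces, via the subtyping on the function part and Remark~\ref{subtyping_shape}, that the relevant subcontexts vanish.

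The main obstacle I anticipate is the bookkeeping around the $\rul{app}$ case for $v=\unbox\,v'$: one must unwind that $\unbox$ has type $\bang A_{\unbox}(T,U) <: \Circ(T,U)\loli\bang(T\loli U)$, so in the application $\unbox\,v'$ the function $\unbox$ is typed in context $\Gamma_1,\bang\Delta;Q_1$ with $\Gamma_1,Q_1$ subject to the constraint coming from $\rul{cst}$ (which forces $\Gamma_1$ to be a bang-context and $Q_1=\emptyset$), and $v'$ is typed in $\Gamma_2,\bang\Delta;Q_2$; then the output type $\bang(T\loli U)$ being of the required bang-shape is automatic, but I must still conclude that the \emph{entire} context $\Gamma_1,\Gamma_2,\bang\Delta$ is a bang-context and $Q_1,Q_2=\emptyset$ — the former needs that $\Gamma_2$ is a bang-context, which is not obvious since $v'$ has type $\Circ(T,U)$, not a bang type. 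Resolving this likely requires a separate small lemma (or direct argument) that a value of type $\Circ(T,U)$ must be $(t,C,u)$ or $\rev\,(\cdots)$ and hence typed in an empty quantum context and a bang-only context; in fact this follows because the only values of circuit type are $(t,C,u)$ (rule $\rul{circ}$, context $\bang\Delta$, quantum context $\emptyset$) or possibly $\rev$ partially applied, and in all such cases Lemma~\ref{prop:type_syst}(\ref{q_context}) plus inspection gives $Q_2=\emptyset$ and $\Gamma_2$ bang-only. I would state and dispatch that auxiliary observation first, then feed it into the main induction.
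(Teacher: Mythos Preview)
Your overall approach---induction on the typing derivation, invoking Remark~\ref{subtyping_shape}(\ref{subtype_bang}) in the $\rul{ax_c}$ case---is exactly the paper's, and you have correctly isolated the $\rul{app}$ case for $v=\unbox\,v'$ as the one needing real work. However, your proposed resolution of that case does not go through.

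First, a minor slip: $\rev\,(\cdots)$ is never a value in Proto-Quipper (only the bare constant $\rev$ is), so that alternative cannot arise. The substantive problem is that your auxiliary claim ``the only values of circuit type are $(t,C,u)$ \ldots'' forgets variables: by Lemma~\ref{form_values} a well-typed value of type $\bang^n\Circ(T,U)$ may always be a term variable, and nothing in the type system forbids a declaration $x:\Circ(T,U)$ with no $\bang$. Concretely, the derivation
\[
\infer[\rul{app}]{x{:}\Circ(T,U);\emptyset\entails \unbox\,x:\bang(T\loli U)}
  {\infer[\rul{cst}]{\emptyset;\emptyset\entails \unbox:\Circ(T,U)\loli\bang(T\loli U)}{}
   &
   \infer[\rul{ax_c}]{x{:}\Circ(T,U);\emptyset\entails x:\Circ(T,U)}{}}
\]
types the value $\unbox\,x$ at a $\bang$-type in a context that is \emph{not} of the form $\bang\Delta$. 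Your auxiliary observation therefore fails at precisely the point where you need it, and the induction cannot be closed along the lines you sketch. (The paper's own one-line proof is silent on this case as well; the lemma as literally stated does not appear to survive this example without some additional hypothesis---e.g., that circuit-typed variables are always declared with a $\bang$---or a restriction of the value grammar so that only $\unbox\,(t,C,u)$, rather than $\unbox\,v$ for arbitrary values $v$, counts as a value.)
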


\begin{proof}
  By induction on the typing derivation of $\Gamma;Q \entails v:\bang
  A$. In the case of $\rul{ax_c}$, use
  Lemma~\rref{subtyping_shape}{subtype_bang}.
\end{proof}

\begin{lemma}
  \label{non_values}
  If a term $a$ is not a value then it is of one of the following
  forms
  \begin{itemize}
  \item $(t,C,a')$ with $a'\notin \Val$,
  \item $\p{a_1,a_2}$ with $a_1\notin \Val$ or $a_2\notin \Val$,
  \item $\Ifthenelse{a_1}{a_2}{a_3}$,
  \item $\letin{*}{a_1}{a_2}$,
  \item $\letin{\p{x,y}}{a_1}{a_2}$, or
  \item $a_1a_2$ with $a_1 \neq \unbox$ or $a_2\notin \Val$.
  \end{itemize}
\end{lemma}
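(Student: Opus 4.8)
The statement is a purely syntactic classification of terms that fail to be values. The natural approach is a straightforward case analysis on the grammar of terms given in Definition~\ref{def:pq-terms}, compared against the grammar of values in Definition~\ref{def:pq-values}. First I would enumerate the term constructors: $x$, $q$, $(t,C,a)$, $\true$, $\false$, $\p{a,b}$, $*$, $ab$, $\lambda x.a$, $\rev$, $\unbox$, $\boxx^T$, $\Ifthenelse{a}{b}{c}$, $\letin{*}{a}{b}$, and $\letin{\p{x,y}}{a}{b}$. For each, I would check whether it is forced to be a value or whether it can fail to be one, and in the latter case record exactly the side condition under which it fails.

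The bulk of the cases are immediate. A variable $x$, a quantum variable $q$, the booleans $\true$ and $\false$, the unit $*$, a lambda abstraction $\lambda x.a$, and the constants $\rev$, $\unbox$, $\boxx^T$ are all values unconditionally by Definition~\ref{def:pq-values}, so they never arise as non-values. The constructors $\Ifthenelse{a_1}{a_2}{a_3}$, $\letin{*}{a_1}{a_2}$, and $\letin{\p{x,y}}{a_1}{a_2}$ are \emph{never} values (they do not appear in the value grammar at all), so any term of one of these three shapes is automatically a non-value and lands in the corresponding clause of the conclusion. That leaves the three genuinely conditional cases: $(t,C,a)$ is a value iff $a$ is a value (more precisely, iff $a$ is a quantum data term $u$, but since quantum data terms are values this is the condition $a \in \Val$ in the stated clause), so a non-value of this shape has $a \notin \Val$; a pair $\p{a_1,a_2}$ is a value iff both components are values, so a non-value pair has $a_1 \notin \Val$ or $a_2 \notin \Val$; and an application $a_1 a_2$ is a value only in the special form $\unbox\,v$ (i.e. $a_1 = \unbox$ and $a_2 \in \Val$), so a non-value application has $a_1 \neq \unbox$ or $a_2 \notin \Val$. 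Collecting these conditional cases together with the three unconditionally-non-value constructors yields exactly the six bullets in the statement.

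I do not anticipate a real obstacle here; the only mild subtlety is being careful about the $(t,C,a)$ clause, where the value grammar writes $(t,C,u)$ with $u$ a quantum data term while the lemma phrases the non-value condition as $a' \notin \Val$. This is consistent because every quantum data term is a value, so $(t,C,a')$ is a value precisely when $a'$ is a value; hence failing to be a value means $a' \notin \Val$, as claimed. Similarly for the application case, one must note that $\boxx^T v$ and $\rev\,v$ are \emph{not} in the value grammar (only $\unbox\,v$ is), so these are genuine non-values falling under "$a_1 \neq \unbox$". The proof is therefore just "by inspection of Definitions~\ref{def:pq-terms} and~\ref{def:pq-values}" together with this case split, and I would write it out in a few lines of case analysis.
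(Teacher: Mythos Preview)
Your approach is exactly the paper's: the proof there is the single line ``By definition of terms and values,'' and your constructor-by-constructor case split is precisely what that line abbreviates.

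One small correction to your discussion of the $(t,C,a')$ case. You argue that ``every quantum data term is a value, so $(t,C,a')$ is a value precisely when $a'$ is a value.'' That inference is backwards: the value grammar admits $(t,C,u)$ only when $u$ is a quantum data term, and while every quantum data term is a value, not every value is a quantum data term. Thus, purely syntactically, $(t,C,\lambda x.x)$ is a non-value whose third component \emph{is} a value, and it does not literally fit the first bullet as written. This is really a minor imprecision in the lemma's statement rather than a flaw in your method; the ``correct'' side condition would be ``$a'$ is not a quantum data term.'' In the lemma's sole application (the Progress proof), the term is well-typed and $a'$ has a quantum data type, so by Corollary~\ref{typed_qd_term} the two conditions coincide and the distinction is harmless. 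Just don't claim the biconditional you wrote, since it is false as a raw syntactic statement.
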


\begin{proof}
  By definition of terms and values.
\end{proof}

\begin{lemma}
  \label{form_values}
  A well-typed value $v$ is either a variable, a quantum variable, a
  constant or one of the following case occurs
  \begin{itemize}
  \item if it is of type $\bang^n\Circ(T,U)$, it is of the form
    $(t,C,u)$ with $t$ and $u$ values,
  \item if it is of type $\bang^n\bool$ it is either $\true$ or
    $\false$,
  \item if it is of type $\bang^n (A\X B)$, it is of the form
    $\p{w,w'}$, with $w$ and $w'$ values and
    $\FQ(w)\cap\FQ(w')=\emptyset$,
  \item if it is of type $\bang^n1$, it is precisely the term $*$, or
  \item if it is of type $\bang^n (A\loli B)$, it is a lambda
    abstraction, a constant, or of the form $\unbox\,(t,C,u)$.
  \end{itemize}
\end{lemma}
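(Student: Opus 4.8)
The plan is to prove Lemma~\ref{form_values} by induction on the typing derivation of the value $v$, using the fact (from Lemma~\ref{non_values} and Definition~\ref{def:pq-values}) that values have a restricted syntactic shape. First I would observe that the statement is really a case analysis: for each syntactic form of value, I need to read off which typing rules could have produced the final judgment, and check that the resulting type is compatible with exactly one of the five listed cases. The base cases — $v$ a variable, quantum variable, or constant — are covered by the escape clause ``either a variable, a quantum variable, a constant or\ldots'', so nothing needs to be shown there.

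The inductive step proceeds by examining the last rule in the derivation $\Gamma;Q\entails v:A$. Since $v$ is a value, by Definition~\ref{def:pq-values} it is of the form $(t,C,u)$, $\true$, $\false$, $\p{v_1,v_2}$, $*$, $\lambda x.a$, $\boxx^T$, $\rev$, $\unbox$, or $\unbox\,v'$. For each of these, only specific typing rules apply: for $(t,C,u)$ the last rule must be $\rul{circ}$ (giving type $\bang^n\Circ(T,U)$), possibly after the type has been weakened via Lemma~\rref{prop:type_syst}{subtype}, and the premises of $\rul{circ}$ force $t$ and $u$ to themselves be well-typed; since $t$ is a quantum data term it is automatically a value, and $u$ is a value by hypothesis. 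For $\true,\false$ the rules $\rul{\top},\rul{\bot}$ give type $\bang^n\bool$. For $\p{v_1,v_2}$ the rule is $\rul{\X_i}$, giving type $\bang^n(A\X B)$, with $v_1,v_2$ values by induction and the disjointness $\FQ(v_1)\cap\FQ(v_2)=\emptyset$ coming from the side condition $|\Gamma_1|\cap|\Gamma_2|=\emptyset$ together with Lemma~\rref{prop:type_syst}{q_context}. For $*$ the rule is $\rul{*_i}$, type $\bang^n1$. For $\lambda x.a$ the rules $\rul{\lambda_1},\rul{\lambda_2}$ give a function type $\bang^n(A\loli B)$. For the constants $\boxx^T,\unbox,\rev$ the rule $\rul{cst}$ applies, and inspecting Definition~\ref{def:pq-constant-types} shows each has a function type (of the form $\bang^n(A\loli B)$ after absorbing the $\bang$'s and subtyping). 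Finally, for $\unbox\,v'$ the last rule is $\rul{app}$ with $\unbox$ on the left, so the type is a function type, and analysing the type $A_{\unbox}(T,U)=\Circ(T,U)\loli\bang(T\loli U)$ forces $v'$ to have type $\Circ(T,U)$, whence by the induction hypothesis $v'$ is of the form $(t,C,u)$.

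The one point requiring care is the use of subtyping: the ``last rule'' may not directly produce the advertised type because Lemma~\rref{prop:type_syst}{subtype} allows the conclusion to be coarsened, and also because e.g.\ $\rul{cst}$ and $\rul{ax_c}$ have a subtyping side condition built in. I would handle this by invoking Remark~\ref{subtyping_shape}: since the relevant ``interesting'' types ($\Circ$, $\bool$, $\X$, $1$, $\loli$) are each preserved in shape under $<:$ (parts (1)--(4) of that remark, plus parts (5)--(6) controlling the leading $\bang$'s), knowing that some such type appears below the final type (or is itself the final type) pins down the shape of $A$ up to the listed case. The main obstacle — though still routine — is simply being exhaustive about which typing rules can have a value as the subject of their conclusion, and threading the subtyping bookkeeping through each case; there is no deep difficulty, and the proof is a straightforward but somewhat tedious inspection, which is presumably why the paper states it as a lemma with the proof sketch ``By induction on the typing derivation, using Remark~\ref{subtyping_shape}.''

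\begin{proof}
  By induction on the typing derivation of $\Gamma;Q\entails v:A$,
  using Lemma~\ref{non_values} to restrict the shape of $v$ and
  Remark~\ref{subtyping_shape} to track the shape of $A$ under
  subtyping. If $v$ is a variable, quantum variable, or constant we
  are done. Otherwise $v$ is of one of the forms $(t,C,u)$, $\true$,
  $\false$, $\p{w,w'}$, $*$, $\lambda x.a$, $\boxx^T$, $\rev$,
  $\unbox$, or $\unbox\,w$, and in each case an inspection of the
  typing rules of Figure~\ref{trules} (together with
  Lemma~\rref{prop:type_syst}{subtype} and
  Remark~\ref{subtyping_shape}) shows that the type $A$ has the
  required form and, where applicable, that the immediate subterms of
  $v$ are again values by the induction hypothesis. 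The disjointness
  condition $\FQ(w)\cap\FQ(w')=\emptyset$ in the pair case follows
  from the side condition $|\Gamma_1|\cap|\Gamma_2|=\emptyset$ in rule
  $\rul{\X_i}$ together with Lemma~\rref{prop:type_syst}{q_context}.
\end{proof}
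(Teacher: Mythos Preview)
Your proposal is correct and follows exactly the paper's approach—induction on the typing derivation of $v$—with considerably more detail than the paper's own one-line proof. One small correction: the disjointness $\FQ(w)\cap\FQ(w')=\emptyset$ in the pair case comes from the implicit disjointness of the \emph{quantum} contexts $Q_1,Q_2$ in rule $\rul{\X_i}$ (the comma in $Q_1,Q_2$ denotes disjoint union by the paper's conventions) together with Lemma~\rref{prop:type_syst}{q_context}, not from the term-variable side condition $|\Gamma_1|\cap|\Gamma_2|=\emptyset$, which concerns a different set of identifiers.
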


\begin{proof}
  By induction on the typing derivation of $v$.
\end{proof}

\begin{corollary}
  \label{typed_qd_term}
  If $T$ is a quantum data type and $v$ is a well-typed value of type
  $T$ then $v$ is a quantum data term.
\end{corollary}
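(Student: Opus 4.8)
The plan is to prove Corollary~\ref{typed_qd_term} by structural induction on the quantum data type $T$, using Lemma~\ref{form_values} to constrain the shape of a well-typed value of type $T$. The key structural fact I would lean on is that the typing rules of Figure~\ref{trules} contain no primitive subsumption rule, so that the last rule of any typing derivation is forced by the outermost syntactic shape of the term being typed.

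First I would dispose of the shapes left unconstrained by Lemma~\ref{form_values}. A constant ($\boxx^T$, $\unbox$, or $\rev$) cannot receive a quantum data type: by Definition~\ref{def:pq-constant-types} its type has the form $\bang A_c(T,U)$ with a $\loli$ at top level, and by Remark~\ref{subtyping_shape} every supertype of such a type again carries a $\loli$ (or a $\Circ$) at top level, whereas a quantum data type is built only from $\qubit$, $1$, and $\X$. A quantum variable $q$ has type $\qubit$ and is already a quantum data term, so that case is immediate. The remaining shape, a term variable, is ruled out in the setting where the corollary is applied — values typed in an empty typing context, as in the progress proof — because the rule $\rul{ax_c}$ necessarily introduces a non-empty typing context.

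The base cases are then short. If $T = 1$, Lemma~\ref{form_values} forces $v = *$, a quantum data term. If $T = \qubit$ — which is not among the type-indexed cases of Lemma~\ref{form_values} — then $v$ must be the one remaining possibility, a quantum variable $q$, again a quantum data term. For the inductive step $T = T_1\X T_2$, Lemma~\ref{form_values} gives $v = \p{w,w'}$ with $w,w'$ values; since the only rule whose conclusion has subject $\p{w,w'}$ is $\rul{\X_i}$, and a quantum data type has no leading $\bang$, the conclusion type must be $T_1\X T_2$ with the premises exhibiting $w$ as a well-typed value of type $T_1$ and $w'$ as one of type $T_2$. The induction hypothesis applied to $w$ and $w'$ makes both quantum data terms, hence so is $\p{w,w'}$.

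I expect the only points requiring any care to be (i) checking that no constant can be assigned a quantum data type, which is the place where the structural analysis of subtyping from Remark~\ref{subtyping_shape} is needed, and (ii) confirming that the product case involves no hidden subsumption step, so that the types $T_1$ and $T_2$ of the components can be read directly off the conclusion of $\rul{\X_i}$. Everything else is a direct appeal to Lemma~\ref{form_values} and the induction hypothesis.
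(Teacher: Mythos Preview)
Your proposal is correct and is precisely the argument the paper intends: the corollary is stated without proof immediately after Lemma~\ref{form_values}, and an induction on the quantum data type $T$ using that lemma is the obvious reading. You are also right to flag the term-variable case as the one requiring care; the paper's bare statement elides this, but your observation that an empty (or purely $\bang$-ed) typing context is needed to exclude term variables is exactly the missing hypothesis.
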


\begin{lemma}
  \label{bind_qd_terms}
  If $T$ is a quantum data type and $v_1$, $v_2$ are well-typed values
  of type $T$, then $\binding=\bind(v_1,v_2)$ is a well-defined
  binding, $\dom(\binding)=\FQ(v_1)$, and $\cod(\binding)=\FQ(v_2)$.
\end{lemma}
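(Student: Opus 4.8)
The statement is Lemma~\ref{bind_qd_terms}: if $T$ is a quantum data type and $v_1,v_2$ are well-typed values of type $T$, then $\binding=\bind(v_1,v_2)$ is a well-defined binding with $\dom(\binding)=\FQ(v_1)$ and $\cod(\binding)=\FQ(v_2)$. The plan is to prove this by induction on the structure of the quantum data type $T$. The crucial preliminary observation is Corollary~\ref{typed_qd_term}: since $v_1$ and $v_2$ are well-typed values of a quantum data type $T$, they are in fact quantum data terms, so $\bind(v_1,v_2)$ falls under one of the first three (non-``undefined'') clauses of Definition~\ref{def:bind}, and an induction on $T$ will line up exactly with the recursive structure of $\bind$.

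First I would handle the base cases. If $T=1$, then by Lemma~\ref{form_values} the only well-typed value of type $1$ is $*$, so $v_1=v_2=*$, hence $\bind(v_1,v_2)=\bind(*,*)=\emptyset$, which is trivially a (finite) bijection on $\qset$ with empty domain and codomain, and $\FQ(*)=\emptyset=\dom(\emptyset)=\cod(\emptyset)$. If $T=\qubit$, then again by Lemma~\ref{form_values} (or simply because $v_1,v_2$ are quantum data terms of type $\qubit$, which must be quantum variables) we have $v_1=q_1$ and $v_2=q_2$ for some $q_1,q_2\in\qset$; then $\bind(q_1,q_2)=\s{(q_1,q_2)}$, a bijection $\s{q_1}\to\s{q_2}$, and $\FQ(q_1)=\s{q_1}=\dom(\binding)$, $\FQ(q_2)=\s{q_2}=\cod(\binding)$.

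For the inductive step, suppose $T=T_1\X T_2$. By Lemma~\ref{form_values}, a well-typed value of type $T_1\X T_2$ is of the form $\p{w,w'}$ with $w,w'$ well-typed values of types $T_1$ and $T_2$ respectively and $\FQ(w)\cap\FQ(w')=\emptyset$; so write $v_1=\p{w_1,w_1'}$ and $v_2=\p{w_2,w_2'}$, where $w_i$ has type $T_1$ and $w_i'$ has type $T_2$. By the induction hypothesis, $\binding_1=\bind(w_1,w_2)$ is a well-defined binding with $\dom(\binding_1)=\FQ(w_1)$, $\cod(\binding_1)=\FQ(w_2)$, and similarly $\binding_2=\bind(w_1',w_2')$ has $\dom(\binding_2)=\FQ(w_1')$, $\cod(\binding_2)=\FQ(w_2')$. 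Since $\FQ(w_1)\cap\FQ(w_1')=\emptyset$ we have $\dom(\binding_1)\cap\dom(\binding_2)=\emptyset$, and since $\FQ(w_2)\cap\FQ(w_2')=\emptyset$ we have $\cod(\binding_1)\cap\cod(\binding_2)=\emptyset$; the former is exactly the side condition $\bind(w_1,w_2)\cap\bind(w_1',w_2')=\emptyset$ appearing in Definition~\ref{def:bind} (reading a finite bijection as its graph), so $\bind(v_1,v_2)=\binding_1\cupdot\binding_2$ is defined, and the disjointness on both sides makes it a bijection between $\dom(\binding_1)\cupdot\dom(\binding_2)=\FQ(w_1)\cup\FQ(w_1')=\FQ(v_1)$ and $\cod(\binding_1)\cupdot\cod(\binding_2)=\FQ(w_2)\cup\FQ(w_2')=\FQ(v_2)$, as required.

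The only mildly delicate point — and the step I expect to need the most care — is the bookkeeping around disjointness of the two sub-bindings in the product case: one must check that the $\cap=\emptyset$ side condition in the third clause of $\bind$ is met (which comes from $\FQ(w_1)\cap\FQ(w_1')=\emptyset$, supplied by Lemma~\ref{form_values}), and \emph{also} that the codomains are disjoint so that $\binding_1\cupdot\binding_2$ is genuinely a bijection rather than a mere relation (which comes from $\FQ(w_2)\cap\FQ(w_2')=\emptyset$, again from Lemma~\ref{form_values} applied to $v_2$). Everything else is routine structural induction matching the clauses of Definitions~\ref{def:bind} and the cases of Lemma~\ref{form_values}.
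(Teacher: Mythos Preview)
Your proposal is correct and follows essentially the same approach as the paper: invoke Corollary~\ref{typed_qd_term} to reduce to quantum data terms, then proceed by induction on $T$ using Lemma~\ref{form_values}. The paper's own proof is only a one-line sketch of exactly this strategy, so your version is simply a fully worked-out instance of it, with the disjointness bookkeeping in the product case made explicit.
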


\begin{proof}
  By Corollary~\ref{typed_qd_term}, we know that $v_1$ and $v_2$ are
  quantum data terms so that the statement of the lemma makes
  sense. The proof then proceeds by induction on $T$, using
  Lemma~\ref{form_values}.
\end{proof}

\begin{lemma}[Substitution]
  \label{substitution}
  If $v\in\Val$ and both $\Gamma',\bang \Delta;Q' \entails v:B$ and
  $\Gamma,\bang \Delta,x:B;Q \entails a:A$ are valid typing
  judgements, then $\Gamma,\Gamma',\bang \Delta;Q,Q' \entails
  a[v/x]:A$ is also valid.
\end{lemma}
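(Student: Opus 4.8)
The plan is to prove the Substitution Lemma by structural induction on the typing derivation of $\Gamma,\bang\Delta,x:B;Q\entails a:A$, following the standard pattern for linear type systems but being careful about the interaction between the linear context, the $\bang$-context $\bang\Delta$, and the quantum context $Q$. The statement is set up so that $v$ may contain both ordinary and quantum variables (its context is $\Gamma',\bang\Delta;Q'$, sharing the $\bang$-part with $a$'s judgement), and we must show the substituted term is typable in the merged context $\Gamma,\Gamma',\bang\Delta;Q,Q'$. The implicit well-formedness side-conditions on typing judgements guarantee that $|\Gamma|$, $|\Gamma'|$, $|\Delta|$ are pairwise disjoint and $Q\cap Q'=\emptyset$, so all the merged contexts appearing below make sense.

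First I would dispose of the base cases. In the axiom rule $\rul{ax_c}$, the term is a single variable $y$; either $y=x$, in which case $a[v/x]=v$ and we need $\Gamma,\Gamma',\bang\Delta;Q,Q'\entails v:A$, which follows from $\Gamma',\bang\Delta;Q'\entails v:B$ with $B<:A$ by weakening (Lemma~\rref{prop:type_syst}{weakening}) together with the subtyping-stability Lemma~\rref{prop:type_syst}{subtype}; note that here $x:B$ is used linearly so $\Gamma=\emptyset$ and $Q=\emptyset$ by Lemma~\rref{prop:type_syst}{q_context}, and the merge is clean. Or $y\neq x$, in which case $x$ does not occur in $a$, so by Lemma~\rref{prop:type_syst}{unused_var} $B=\bang B'$, hence by Lemma~\ref{context_value} $Q'=\emptyset$ and $\Gamma'=\bang\Delta'$; then $a[v/x]=a$, and $\Gamma;Q\entails a:A$ is valid (already part of the premise minus the unused declaration), so weakening by $\bang\Delta'$ gives the result. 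The remaining axiom/constant rules ($\rul{ax_q}$, $\rul{cst}$, $\rul{*_i}$, $\rul{\top}$, $\rul{\bot}$, $\rul{circ}$) have $x$ not free in the term, so the same $\bang B'$ / weakening argument applies, noting in the $\rul{circ}$ case that $\FV((t,C,a'))=\FV(a')$ and $x$ being unused there means $x\notin\FV(a')$.

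For the inductive cases I would go rule by rule. The unary binders $\rul{\lambda_1}$, $\rul{\lambda_2}$ and the elimination rules $\rul{\X_e}$, $\rul{*_e}$ are handled by capture-avoiding substitution pushing inside, applying the IH to the sub-derivation with the enlarged linear context (e.g.\ for $\rul{\lambda_1}$, the premise $\Gamma,x:B,y:A_1;Q\entails b:A_2$ — after renaming so $y\notin\FV(v)\cup\FQ(v)$ — gives by IH $\Gamma,\Gamma',\bang\Delta,y:A_1;Q,Q'\entails b[v/x]:A_2$, then reapply $\rul{\lambda_1}$); for $\rul{\lambda_2}$ one must first observe (Lemma~\ref{context_value}) that the whole context is $\bang$-typed so $x:B$ forces $B=\bang B'$ and $Q=\emptyset$, and if $x\in\FV(b)$ we need $v$ itself to be typable in an all-$\bang$ context, which again follows from $B=\bang B'$ and Lemma~\ref{context_value}. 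The genuinely multi-premise rules $\rul{app}$, $\rul{\X_i}$, $\rul{if}$ are where the only real bookkeeping lies: the shared context in each premise is $\Gamma_i,\bang\Delta;Q_i$, the linear declaration $x:B$ lives in exactly one premise if $x\in\FV$ of the corresponding subterm (by linearity of the type system, and if $x$ is not free anywhere one falls back to the $B=\bang B'$/weakening route), so we apply the IH to that one premise and leave the others untouched, then reassemble with the same rule — the contexts still merge disjointly because $\Gamma'$ is disjoint from every $\Gamma_i$.

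**The main obstacle** I anticipate is getting the context-splitting invariant exactly right in the binary rules when $x$ has a $\bang$-type versus a linear type: if $x:\bang B'$ then $x$ may be used in several premises (it is part of the implicitly-shared $\bang$-part in the informal reading, but here it is written in the linear slot), and one has to either reshuffle $x$ into $\bang\Delta$ at the start of the proof or handle the duplication explicitly. The cleanest fix is a preliminary normalization: by Lemma~\rref{prop:type_syst}{unused_var}, if $x\notin\FV(a)$ then $B=\bang B'$ and we are done via weakening; so we may assume $x\in\FV(a)$, and then argue that whenever $B$ is not of the form $\bang B'$, strict linearity forces $x$ into exactly one premise of each binary rule. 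This case split, threaded through every rule, is the bulk of the work, but each individual case is routine once the invariant is pinned down. I would state it as "by induction on the derivation of $\Gamma,\bang\Delta,x:B;Q\entails a:A$" and present the $\rul{ax_c}$, $\rul{\lambda_2}$, and $\rul{app}$ cases in detail, remarking that the others are analogous.
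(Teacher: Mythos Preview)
Your overall strategy---induction on the typing derivation, with the duplicable/non-duplicable split on $B$ governing the binary rules---is exactly the paper's approach, and most of the bookkeeping you describe is correct. There are, however, two concrete gaps.

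First, you classify $\rul{circ}$ among the base cases where ``$x$ not free in the term''. This is wrong: the conclusion of $\rul{circ}$ has context $\bang\Delta;\emptyset$, so $x:B$ sits in the banged part (forcing $B=\bang B'$), but $x$ can perfectly well occur free in the sub-term $a'$ of $(t,C,a')$, since the premise $\bang\Delta;Q_2\entails a':U$ carries the same $\bang\Delta$. The paper treats $\rul{circ}$ as an inductive case: one uses Lemma~\ref{context_value} on $v$ to get $\Gamma'$ all-banged and $Q'=\emptyset$, then applies the IH to the premise for $a'$ and rebuilds with $\rul{circ}$.

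Second, your ``cleanest fix'' for the binary rules does not close the case analysis. You reduce to $x\in\FV(a)$, and then handle only the sub-case where $B$ is \emph{not} of the form $\bang B'$. But when $B=\bang B'$ and $x\in\FV(a)$, the declaration $x:\bang B'$ lives in the shared $\bang$-context and may appear in \emph{both} premises of $\rul{app}$, $\rul{\X_i}$, etc. The paper handles this directly: since $v$ then has banged type, Lemma~\ref{context_value} gives $\Gamma'$ all-banged and $Q'=\emptyset$, and one applies the induction hypothesis to each premise separately before reassembling. Your earlier aside ``or handle the duplication explicitly'' is precisely this; you just need to actually carry it out rather than replacing it with the incomplete normalization.
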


\begin{proof}
  Let $\pi_1$ and $\pi_2$ be the typing derivations of $\Gamma,\bang
  \Delta,x:B;Q \entails a:A$ and $\Gamma',\bang \Delta;Q' \entails
  v:B$ respectively. We prove the Lemma by induction on $\pi_1$.
  \begin{itemize}
  \item If the last rule of $\pi_1$ is $\rul{ax_c}$ and $a=x$, then
    $\pi_1$ is
    \[
    \infer[\rul{ax_c}]{\bang \Delta, x:B;\emptyset\entails x:A}{ B<:A
    }
    \]
    with $\Gamma=Q=\emptyset$. Then $a[v/x]=v$ and can conclude by
    applying Lemma~\rref{prop:type_syst}{subtype} to $\pi_2$.
  \item If the last rule of $\pi_1$ is $\rul{ax_c}$ and $a=y\neq x$,
    then $\pi_1$ is
    \[
    \infer[\rul{ax_c}]{\bang \Delta, x:\bang B',y:A';\emptyset\entails
      y:A}{ A'<:A }
    \]
    with $B=\bang B'$, $Q=\emptyset$ and $\Gamma =\s{y:A'}$ or
    $\Gamma=\emptyset$ depending on whether or not $A'$ is
    duplicable. Therefore $v$ is a value of type $\bang B'$ and by
    Lemma~\ref{context_value}, we know that
    $\Gamma'=Q'=\emptyset$. Since $a[v/x]=y$ and $x\notin \FV(y)$ we
    can conclude by applying Lemma~\rref{prop:type_syst}{unused_var}
    to $\pi_1$.
  \item If the last rule of $\pi_1$ is one of $\rul{ax_q}$,
    $\rul{cst}$, $\rul{*_i}$, $\rul{\top}$ and $\rul{\bot}$, and $a$
    is the corresponding constant, then $x\notin \FV(a)$ and $x$ must
    be declared of some type $\bang B'$.  We can therefore reason as
    in the previous case.
  \item If the last rule of $\pi_1$ is $\rul{\lambda_1}$ and
    $a=\lambda y.b$, then $\pi_1$ is
    \[
    \infer[\rul{\lambda_1}]{\Gamma,\bang \Delta,x:B;Q\entails \lambda
      y.b:A_1\loli A_2}{ \deduce[]{\Gamma, \bang \Delta,x:B,y:A_1;Q
        \entails b:A_2}{ \vdots } }
    \]
    with $A=A_1\loli A_2$. By the induction hypothesis, $\Gamma,
    \Gamma',\bang \Delta,y:A_1;Q,Q' \entails b[v/x]:A_2$ is valid and
    we can conclude by applying $\rul{\lambda_1}$.
  \item If the last rule of $\pi_1$ is $\rul{\lambda_2}$ and
    $a=\lambda y.b$, then $\pi_1$ is
    \[
    \infer[\rul{\lambda_2}]{\bang \Delta, x:\bang B';\emptyset
      \entails \lambda y.b:~\bang^{n+1}(A_1\loli A_2)}{
      \deduce[]{\bang \Delta, x:\bang B',y:A_1;\emptyset \entails
        b:A_2}{ \vdots } }
    \]
    with $A=\bang^{n+1}(A_1\loli A_2)$ and $B=\bang B'$. Hence $v$ is
    a value of type $\bang B'$ and by Lemma~\ref{context_value}, we
    know that $\Gamma'=Q'=\emptyset$. The induction hypothesis
    therefore implies that $\bang \Delta,y:A_1;\emptyset \entails
    b[v/x]:A_2$ is valid and we can conclude by applying
    $\rul{\lambda_2}$.
  \item If the last rule of $\pi_1$ is $\rul{app}$, and $a=ca'$, then
    $\pi_1$ can be of one of three forms depending on $B$. If $B$ is
    duplicable, then $\pi_1$ is
    \[
    \infer[\rul{app}]{\Gamma_1,\Gamma_2,\bang \Delta,x:\bang B';
      Q_1,Q_2\entails ca':A}{ \deduce[]{\Gamma_1, x:\bang B',\bang
        \Delta;Q_1\entails c:A'\loli A}{ \vdots } &
      \deduce[]{\Gamma_2, x:\bang B',\bang \Delta ;Q_2\entails a':A'
      }{ \vdots } }
    \]
    with $B=\bang B'$. Using Lemma~\ref{context_value} again, we know
    that $\Gamma'=Q'=\emptyset$. The induction hypothesis therefore
    implies that $\Gamma_1,\bang \Delta;Q_1\entails c[v/x]:A'\loli A$
    and $\Gamma_2,\bang \Delta ;Q_2\entails a'[v/x]:A'$ are valid and
    we can conclude by applying $\rul{app}$. If, instead, $B$ is
    non-duplicable, then the declaration $x:B$ can only appear in one
    branch of the derivation. This means that $\pi_1$ is either
    \[
    \infer[\rul{app}]{\Gamma_1,\Gamma_2,\bang
      \Delta,x:B;Q_1,Q_2\entails ca':A}{ \deduce[]{\Gamma_1, x:B,\bang
        \Delta;Q_1\entails c:A'\loli A}{ \vdots } &
      \deduce[]{\Gamma_2, \bang \Delta ;Q_2\entails a':A'}{ \vdots } }
    \]
    or
    \[
    \infer[\rul{app}.]{\Gamma_1,\Gamma_2,\bang
      \Delta,x:B;Q_1,Q_2\entails ca':A}{ \deduce[]{\Gamma_1, \bang
        \Delta;Q_1\entails c:A'\loli A}{ \vdots } &
      \deduce[]{\Gamma_2, x:B,\bang \Delta ;Q_2\entails a':A'}{ \vdots
      } }
    \]
    In the first case, the induction hypothesis implies that
    $\Gamma_1,\Gamma' \bang \Delta;Q_1,Q'\entails c[v/x]:A'\loli A$ is
    valid and we can conclude by $\rul{app}$. The second case is
    treated analogously.
  \item If the last rule of $\pi_1$ is one of $\rul{\X_i}$,
    $\rul{\X_e}$, $\rul{*_e}$ and $\rul{if}$, and $a$ is the
    corresponding term, then we can reason as above by considering in
    turn the case where $B$ is duplicable and the case where $B$ is
    non-duplicable.
  \item If the last rule of $\pi_1$ is $\rul{circ}$, and $a=(t,C,a')$,
    then $\pi_1$ is
    \[
    \infer[\rul{circ}]{\bang \Delta,x:\bang B';\emptyset \entails
      (t,C,a'):\bang^n\Circ(T,U)}{ \deduce[]{Q_1\entails t:T}{ \vdots
      } & \deduce[]{\bang \Delta,x:\bang B' ; Q_2\entails a':U}{
        \vdots } & \Inn(C)=Q_1 & \Outt(C)=Q_2 }
    \]
    with $A=\bang^n\Circ(T,U)$ and $B=\bang B'$ for some types $T$,
    $U$ and $B'$. Using Lemma~\ref{context_value} again, we know that
    $\Gamma'=Q'=\emptyset$. The induction hypothesis therefore implies
    that $\bang \Gamma; Q_2\entails a'[v/x]:U$ is valid and we can
    conclude by applying $\rul{circ}$. \qedhere
  \end{itemize}
\end{proof}

% ====================================================================
\section{Subject reduction}
\label{sect:pq-subj-red}

We now prove that Proto-Quipper enjoys the subject reduction
property. Since the reduction relation is defined on closures but the
typing rules apply to terms, we start by extending the notions of
typing judgement and validity to closures.

\begin{definition}
  \label{def:pq-typed-closure}
  A \emph{typed closure} is an expression of the form:
  \[
  \Gamma;Q\entails [C,a]:A,(Q'|Q'').
  \]
  It is \emph{valid} if $\Inn(C)=Q'$ and $\Outt(C)=Q,Q''$, and
  $\Gamma;Q\entails a:A$ is a valid typing judgement.
\end{definition}

\begin{lemma}
  \label{Inwires}
  If $[C,a]\to[C',a']$ then $\Inn(C)=\Inn(C')$.
\end{lemma}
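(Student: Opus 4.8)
The plan is to prove Lemma~\ref{Inwires} by induction on the derivation of the one-step reduction $[C,a]\to[C',a']$, using the rules of Figure~\ref{fig:pq-red-rules}. The statement to be established is simply that $\Inn$ is invariant under one reduction step, so it suffices to check that each reduction rule preserves the input interface of the circuit state.

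First I would dispose of the congruence rules $\rul{fun}$, $\rul{arg}$, $\rul{left}$, $\rul{right}$, $\rul{let*}$, $\rul{let}$, $\rul{cond}$, and $\rul{circ}$. Each of these has the form ``$[C,a]\to[C',a']$ follows from $[D,b]\to[D',b']$'' where in every case except $\rul{circ}$ we have $C=D$ and $C'=D'$, so the claim is immediate from the induction hypothesis. The rule $\rul{circ}$ is the one genuinely recursive case: from $[D,b]\to[D',b']$ we conclude $[C,(t,D,b)]\to[C,(t,D',b')]$, and here the outer circuit state is literally unchanged ($C$ on both sides), so $\Inn(C)=\Inn(C)$ trivially, and the reduction happening ``inside'' the boxed subterm does not touch $C$ at all.

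Next I would handle the classical rules $\rul{\beta}$, $\rul{unit}$, $\rul{pair}$, $\rul{ifF}$, $\rul{ifT}$, together with $\rul{circ}$-generation rule $\rul{rev}$: in all of these the circuit state on the left and right of $\to$ is the same symbol $C$, so there is nothing to prove. This leaves only the two genuinely state-affecting rules, $(\boxx)$ and $(\unbox)$. For $(\boxx)$, the reduction is $[C,\boxx^T(v)]\to[C,(t,D,vt)]$ where $D=\mathtt{new}(\FQ(t))$; the state $C$ is again unchanged, so $\Inn(C)=\Inn(C)$. The only substantive case is $(\unbox)$: here $[C,(\unbox\,(u,D,u'))v]\to[C',\binding'(u')]$ where $(C',\binding')=\mathtt{Append}(C,D,\binding)$ and $\binding=\bind(v,u)$ with $\dom(\binding)\seq\Outt(C)$ and $\cod(\binding)=\Inn(D)$. (The side conditions $\dom(\binding)\seq\Outt(C)$ and $\cod(\binding)=\Inn(D)$ are exactly the hypotheses of condition~\ref{Append_cond_x} of Definition~\ref{circuit_constructor}, and they are guaranteed to hold by the validity assumptions underpinning the reduction — note that well-formedness of the closure gives $\FQ(v)\seq\Outt(C)$ and typing of $(\unbox\,(u,D,u'))$ via Lemma~\ref{bind_qd_terms} gives $\dom(\binding)=\FQ(v)$ and $\cod(\binding)=\FQ(u)=\Inn(D)$.) Then condition~\ref{Append_cond_2b} of Definition~\ref{circuit_constructor} states precisely $\Inn(C')=\Inn(C)$, which is what we need.

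I do not expect any of this to be difficult: the only ``obstacle'', if one can call it that, is making sure the side conditions of the $\mathtt{Append}$ axiom are in force when invoking the $(\unbox)$ case, but these are built into the reduction rule itself (the rule $(\unbox)$ in Figure~\ref{fig:pq-red-rules} carries $\bind(v,u)=\binding$ and $\mathtt{Append}(C,D,\binding)=(C',\binding')$ as premises, and the remaining set-theoretic side conditions are part of the implicit well-formedness), so condition~\ref{Append_cond_2b} applies directly. The whole proof is therefore a short case analysis citing Definition~\ref{circuit_constructor}\ref{Append_cond_2b} in the one nontrivial case.
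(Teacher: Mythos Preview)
Your proof is correct and follows exactly the paper's approach: induction on the reduction derivation, noting that every rule except $(\unbox)$ either leaves $C$ unchanged or reduces to the induction hypothesis, and citing Definition~\ref{circuit_constructor}\ref{Append_cond_2b} for the $(\unbox)$ case. The paper's proof is a single sentence and does not separately justify the side conditions of the $\mathtt{Append}$ axiom; your parenthetical invoking typing via Lemma~\ref{bind_qd_terms} is extra (and, strictly, the lemma as stated carries no typing hypothesis), but the core argument is identical.
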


\begin{proof}
  By induction on the derivation of $[C,a]\to[C',a']$. In all but the
  $(\unbox)$ case, the result follows either from the induction
  hypothesis or from the fact that $C=C'$. In the $(\unbox)$ case, use
  Definition~\rref{circuit_constructor}{Append_cond_2b}.
\end{proof}

\begin{theorem}[Subject reduction]
  \label{thm-subject-red}
  If $\Gamma;\FQ(a)\entails [C,a]:A,(Q'|Q'')$ is a valid typed closure
  and $[C,a]\to [C',a']$, then $\Gamma;\FQ(a')\entails
  [C',a']:A,(Q'|Q'')$ is a valid typed closure.
\end{theorem}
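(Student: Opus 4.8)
The plan is to prove subject reduction by induction on the derivation of $[C,a]\to[C',a']$, following the structure of the reduction rules in Figure~\ref{fig:pq-red-rules}. The congruence rules (the first group: $\rul{fun}$, $\rul{arg}$, $\rul{right}$, $\rul{left}$, $\rul{let*}$, $\rul{let}$, $\rul{cond}$, $\rul{circ}$) are handled uniformly by the induction hypothesis: in each case the typing derivation of $\Gamma;\FQ(a)\entails a:A$ decomposes via the corresponding typing rule into subderivations, one of which types the subterm being reduced; we apply the induction hypothesis to that subterm (after checking the closure hypotheses transfer, which they do since the ambient circuit is the same except in $\rul{circ}$ where we reduce $[D,a]\to[D',a']$ and must use that $\Inn(D)=\Inn(D')$ by Lemma~\ref{Inwires}), then reassemble. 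Here one must also track that $\FQ$ behaves correctly: reducing a subterm may change its free quantum variables, but Lemma~\rref{prop:type_syst}{q_context} pins $Q=\FQ(a)$ down, so the bookkeeping of $(Q'|Q'')$ works out once we check $\Outt(C')=\FQ(a'),Q''$. The classical rules (second group: $\rul{\beta}$, $\rul{unit}$, $\rul{pair}$, $\rul{ifF}$, $\rul{ifT}$) do not change the circuit state, so validity of the typed closure reduces to a pure subject-reduction statement about terms; $\rul{\beta}$ and $\rul{pair}$ invoke the Substitution Lemma (Lemma~\ref{substitution}), and the others are immediate from inverting the relevant typing rule.

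The substantive work is in the three circuit-generating rules. For $(\boxx)$, we have $[C,\boxx^T(v)]\to[C,(t,D,vt)]$ with $t=\spec_{\FQ(v)}(T)$ and $D=\mathtt{new}(\FQ(t))$. Inverting the typing derivation: the term $\boxx^T(v)$ is typed by $\rul{app}$ from $\rul{cst}$ (giving $\boxx^T : \bang(T\loli U)\loli\bang\Circ(T,U)$, so the output type is $\bang^n\Circ(T,U)$) and a derivation of $\bang\Delta;\emptyset\entails v:\bang(T\loli U)$ — the contexts collapse because $v$ has a $\bang$-type, by Lemma~\ref{context_value}. To type $(t,D,vt)$ by $\rul{circ}$ I need $\FQ(t)\entails t:T$, which is Lemma~\ref{specimen}; I need $\bang\Delta;\FQ(t)\entails vt:U$, which follows by $\rul{app}$ from the typing of $v$ (downcast to $T\loli U$ via subtyping, Lemma~\rref{prop:type_syst}{subtype}) and the typing of $t$; and I need $\In(D)=\Out(D)=\FQ(t)$, which is exactly axiom~\rref{circuit_constructor}{} item~3 for $\mathtt{new}$. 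The circuit state $C$ is unchanged, so the $(Q'|Q'')$ data is preserved trivially. For $(\rev)$, $[C,\rev(t,C_0,t')]\to[C,(t',C_0^{-1},t)]$: invert to get $(t,C_0,t')$ of type $\Circ(T,U)$ with $\In(C_0)=\FQ(t)$, $\Out(C_0)=\FQ(t')$, then rebuild using the involution from Remark~\ref{structure-transfer} (which gives $\In(C_0^{-1})=\Out(C_0)$, $\Out(C_0^{-1})=\In(C_0)$) to type $(t',C_0^{-1},t):\Circ(U,T)$, matching the output type $\bang\Circ(U,T)$ of $\rev$.

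The hard case — and the main obstacle — is $(\unbox)$: $[C,(\unbox\,(u,D,u'))v]\to[C',\binding'(u')]$ where $\binding=\bind(v,u)$ and $(C',\binding')=\mathtt{Append}(C,D,\binding)$. I would proceed as follows. Inverting the typing of $(\unbox\,(u,D,u'))v$: by $\rul{app}$, $\unbox\,(u,D,u')$ has type $T\loli U$ and $v$ has type $T$; by another $\rul{app}$ over $\rul{cst}$, $(u,D,u')$ has type $\Circ(T,U)$, so $\In(D)=\FQ(u)$, $\Out(D)=\FQ(u')$, $\emptyset\entails u:T$, and $\bang\Delta;\FQ(u')\entails u':U$. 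Since $v:T$ is a well-typed value of quantum data type, Corollary~\ref{typed_qd_term} makes it a quantum data term, so by Lemma~\ref{bind_qd_terms} $\binding=\bind(v,u)$ is a well-defined binding with $\dom(\binding)=\FQ(v)$ and $\cod(\binding)=\FQ(u)=\In(D)$. Now $\dom(\binding)=\FQ(v)\seq\FQ((\unbox\,(u,D,u'))v)\seq\Out(C)$ by the closure hypothesis, so the Append preconditions of axiom~\rref{circuit_constructor}{Append_cond_x} are met, and axioms \rref{circuit_constructor}{Append_cond_2b}--\rref{circuit_constructor}{Append_cond_3} tell us $\In(C')=\In(C)$, $\dom(\binding')=\Out(D)=\FQ(u')$, $\cod(\binding')\seq\Out(C')$, and $\Out(C')=(\Out(C)\setminus\dom(\binding))\cupdot\cod(\binding')$. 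To type $\binding'(u')$ I apply Lemma~\ref{binding_judgement} to $\bang\Delta;\FQ(u')\entails u':U$ with the binding $\binding'$ (whose domain contains $\FQ(u')$), obtaining $\bang\Delta;\binding'(\FQ(u'))\entails \binding'(u'):U$; note $\binding'(\FQ(u'))=\cod(\binding')=\FQ(\binding'(u'))$. The delicate bookkeeping is verifying the typed-closure side conditions: I need $\In(C')=Q'$ (immediate from $\In(C')=\In(C)=Q'$) and $\Out(C')=\FQ(\binding'(u')),Q''$. From the original valid closure, $\Out(C)=\FQ((\unbox\,(u,D,u'))v),Q''=(\FQ(v)\cup X),Q''$ where $X$ collects the remaining free quantum variables; since $\dom(\binding)=\FQ(v)$ and these are disjoint from $Q''$ and from $X$ appropriately, the identity $\Out(C')=(\Out(C)\setminus\FQ(v))\cupdot\cod(\binding') = X\cupdot\cod(\binding')\cupdot Q'' = \FQ(\binding'(u'))\cup(\text{other free qvars of }\binding'(u'))\cup Q''$ must be checked to land exactly on $\FQ(\binding'(u')),Q''$. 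This disjoint-union arithmetic — reconciling which wires are "consumed" by the Append against which remain as context — is where I expect to spend the most care, and it is the step most likely to need an auxiliary observation about how $\FQ$ of the redex relates to $\FQ$ of the contractum together with the Append axioms.
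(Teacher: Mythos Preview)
Your proposal is correct and follows the same approach as the paper: induction on the reduction derivation, congruence rules by the induction hypothesis, classical rules via the Substitution Lemma, and the three circuit rules handled individually with the same supporting lemmas (\ref{specimen}, \ref{binding_judgement}, \ref{context_value}, \ref{bind_qd_terms}, and the circuit-constructor axioms).

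One clarification will dissolve your hesitation in the $(\unbox)$ case: recall from Definition~\ref{def:pq-fqv} that $\FQ((u,D,u'))=\emptyset$, so $\FQ\bigl((\unbox\,(u,D,u'))v\bigr)=\FQ(v)$ and your ``$X$'' is empty. With that, the wire bookkeeping is a direct computation: $\Outt(C)=\FQ(v)\cupdot Q''$, $\dom(\binding)=\FQ(v)$, and $\dom(\binding')=\Outt(D)=\FQ(u')$, hence by axiom~\rref{circuit_constructor}{Append_cond_3}
\[
\Outt(C') = (\Outt(C)\setminus\dom(\binding))\cupdot\cod(\binding') = Q''\cupdot\binding'(\FQ(u')) = Q''\cupdot\FQ(\binding'(u')),
\]
which is exactly what you need. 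No auxiliary observation beyond $\FQ((u,D,u'))=\emptyset$ is required.
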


\begin{proof}
  We prove the theorem by induction on the derivation of the reduction
  $[C,a]\to[C',a']$. In each case, we start by reconstructing the
  unique typing derivation $\pi$ of $\Gamma;\FQ(a)\entails a:A$ and we
  use it to prove that $\Gamma;\FQ(a')\entails [C',a']:A,(Q'|Q'')$ is
  valid. By Lemma~\ref{Inwires} we never need to verify that
  $\Inn(C')=Q'$ so that we only need to show:
  \begin{itemize}
  \item $\Outt(C')=\FQ(a'),Q''$ and
  \item $\Gamma;\FQ(a')\entails a':A$ is valid.
  \end{itemize}
  Throughout the proof, we write $\IH(\pi)$ to denote the proof
  obtained by applying the induction hypothesis to $\pi$.

  \begin{description}
  \item[Congruence rules:] These rules are treated uniformly. We
    illustrate the $\rul{fun}$ and $\rul{circ}$ cases.
    \begin{itemize}
    \item $\rul{fun}$: the reduction rule is
      \[
      \infer[]{[C,cb]\to[C',c'b]}{ [C,c]\to[C',c'] }
      \]
      with $a=cb$ and $a'=c'b$. The typing derivation $\pi$ is
      therefore
      \[
      \infer[]{\Gamma_1,\Gamma_2, \bang \Delta;\FQ(c),\FQ(b)\entails
        cb:A}{ \deduce[]{\Gamma_1, \bang \Delta;\FQ(c)\entails
          c:B\loli A}{ \vdots~\pi_1 } & \deduce[]{\Gamma_2, \bang
          \Delta ;\FQ(b)\entails b:B}{ \vdots~\pi_2 } }
      \]
      and $\Gamma_1,\Gamma_2;\FQ(c),\FQ(b)\entails [C, cb],(Q'|Q'')$
      is valid. It follows that
      \[
      \Gamma_1, \bang \Delta; \FQ(c) \entails [C,c]:B\loli
      A,(Q'|\FQ(b),Q'')
      \]
      is valid and, by the induction hypothesis, this implies that
      $\Gamma_1, \bang \Delta;\FQ(c')\entails [C',c']:B\loli
      A,(Q'|\FQ(b),Q'')$ is also valid.  In particular, it follows
      that $\Outt(C')=\FQ(c'),\FQ(b),Q''$. This, together with the
      following typing derivation,
      \[
      \infer[]{\Gamma_1,\Gamma_2, \bang \Delta;\FQ(c'),\FQ(b)\entails
        c'b:A}{ \deduce[]{\Gamma_1, \bang \Delta;\FQ(c')\entails
          c':B\loli A}{ \vdots~\IH(\pi_1) } & \deduce[]{\Gamma_2,
          \bang \Delta ;\FQ(b)\entails b:B}{ \vdots~\pi_2 } }
      \]
      shows that $\Gamma_1,\Gamma_2, \bang
      \Delta;\FQ(c'),\FQ(b)\entails [C',c'b]:A,(Q',Q'')$ is valid.
    \item $\rul{circ}$: the reduction rule is
      \[
      \infer[\rul{circ}]{[C, (t,D,b)]\to [C, (t,D',b')]}{ [D,b]\to
        [D',b'] }
      \]
      with $a=(t,D,b)$ and $a'=(t,D',b')$. The typing derivation $\pi$
      is therefore
      \[
      \infer[]{\bang \Delta;\emptyset\entails
        (t,D,b):\bang^n\Circ(T,U)}{ \deduce[]{\FQ(t)\entails t:T}{
          \vdots~\pi_1 } & \deduce[]{\bang \Delta ; \FQ(b)\entails
          b:U}{ \vdots~\pi_2 } & \deduce[]{\Inn(D)=\FQ(t)}{
          \Outt(D)=\FQ(b) } }
      \]
      and $\bang \Delta ; \emptyset \entails
      [C,(t,D,b)]:\bang^n\Circ(T,U),(Q'|Q'')$ is valid.  Disregarding
      $\pi_1$, it follows from the assumptions in the above rule that
      $\bang \Delta ; \FQ(b)\entails [D,b]:U, (\FQ(t)|\emptyset)$ is
      valid and, by the induction hypothesis, this implies that $\bang
      \Delta,\FQ(b')\entails [D',b']:U,(\FQ(t)|\emptyset)$ is also
      valid.  This, together with the following typing derivation,
      \[
      \infer[.]{\bang \Delta;\emptyset\entails
        (t,D,b'):\bang^n\Circ(T,U)}{ \deduce[]{\FQ(t)\entails t:T}{
          \vdots~\pi_1 } & \deduce[]{\bang \Delta ; \FQ(b')\entails
          b':U}{ \vdots~\IH(\pi_2) } & \deduce[]{\Inn(D')=\FQ(t)}{
          \Outt(D')=\FQ(b') } }
      \]
      shows that $\bang \Delta;\emptyset\entails [C,(t,D',b')]
      :\bang^n\Circ(T,U),(Q'|Q'')$ is valid.
    \end{itemize}
  \item[Classical rules:] These rules are also treated uniformly, we
    illustrate the $\rul{\beta}$ case.
    \begin{itemize}
    \item $\rul{\beta}$: the reduction rule is
      \[
      \infer[]{[C,(\lambda x.b)v]\to [C, b[v/x]]}{}
      \]
      with $a=(\lambda x.b)v$ and $a'=b[v/x]$. The typing derivation
      $\pi$ is therefore
      \[
      \infer[]{\Gamma_1,\Gamma_2,\bang \Delta;\FQ(b),\FQ(v)\entails
        (\lambda x.b)v:A}{ \infer[]{\Gamma_1,\bang
          \Delta;\FQ(b)\entails \lambda x.b:B\loli A}{
          \deduce[]{\Gamma_1,\bang \Delta,x:B;\FQ(b) \entails b:A}{
            \vdots~\pi_1 } } & \deduce[]{\Gamma_2,\bang
          \Delta;\FQ(v)\entails v:B}{ \vdots~\pi_2 } }
      \]
      and $\Gamma_1,\Gamma_2,\bang \Delta;\FQ(b),\FQ(v)\entails
      [C,(\lambda x.b)v]:A,(Q'|Q'')$ is valid. We then know, by
      Lemma~\ref{substitution}, that $\Gamma_1,\Gamma_2,\bang
      \Delta;\FQ(b),\FQ(v)\entails b[v/x]:A$ is a valid typing
      judgement which implies that
      \[
      \Gamma_1,\Gamma_2,\bang \Delta;\FQ(b),\FQ(v)\entails
      [C,b[v/x]]:A,(Q'|Q'')
      \]
      is a valid typed closure.
    \end{itemize}
  \item[Circuit generating rules:] These rules represent the most
    interesting cases. We treat them individually.
    \begin{itemize}
    \item $\rul{\boxx}$: the reduction rule is
      \[
      \infer[]{[C,\boxx^T(v)]\to [C,(t,D,vt)]}{ \spec(T)=t &
        \New(\FQ(t))=D }
      \]
      with $a=\boxx^T(v)$ and $a'=(t,D,vt)$. Since $v$ is a value, we
      know by Lemma~\ref{context_value} that the typing derivation
      $\pi$ is
      \[
      \infer[]{\bang \Delta;\emptyset\entails
        \boxx^T(v):\bang^n\Circ(T,U)}{ \infer[]{\bang \Delta;\emptyset
          \entails \boxx^T:\bang (T\loli U)\loli \bang^n\Circ(T,U)}{ }
        & \deduce[]{\bang \Delta ;\emptyset\entails v:\bang (T\loli
          U)}{ \vdots ~\pi_1 } }
      \]
      and $\bang \Delta;\emptyset\entails
      [C,\boxx^T(v)]:\bang^n\Circ(T,U),(Q'|Q'')$ is valid. There
      exists a typing derivation $\pi_2$ of $\FQ(t)\entails t:T$, by
      Lemma~\ref{specimen}. Applying
      Lemma~\rref{prop:type_syst}{subtype} to $\pi_1$ we get a
      derivation $\pi_1'$ of $\bang \Delta ;\emptyset\entails v:T\loli
      U$. We can therefore construct the following derivation $\tau$:
      \[
      \infer[]{\bang \Delta;\FQ(t)\entails vt : U}{ \deduce[]{\bang
          \Delta ;\emptyset\entails v:T\loli U}{ \vdots ~\pi_1' } &
        \deduce[]{\bang \Delta;\FQ(t)\entails t:T}{ \vdots ~\pi_2' } }
      \]
      where $\pi_2'$ is obtained from $\pi_2$ by
      Lemma~\rref{prop:type_syst}{weakening}. Moreover, since
      $\FQ(vt)=\FQ(t)=\Outt(D)=\Inn(D)$, we have:
      \[
      \infer[.]{\bang \Delta;\emptyset \entails
        (t,D,vt):\bang^n\Circ(T,U)}{ \deduce[]{\FQ(t)\entails t:T}{
          \vdots~\pi_2 } & \deduce[]{\bang \Delta ; \FQ(vt)\entails
          vt:U}{ \vdots~\tau } & \deduce[]{\Inn(D)=\FQ(t) }{
          \Outt(D)=\FQ(vt) } }
      \]
      Hence $\bang \Delta ;\emptyset \entails
      [C,(t,D,vt)]:\bang^n\Circ(T,U) (Q'|Q'')$ is a valid typed
      closure.
    \item $\rul{\unbox}$: the reduction rule is
      \[
      \infer[]{[C,(\unbox\,(u,D,u'))v]\to [C',\binding'(u')]}{
        \bind(v,u)=\binding & \Append(C,D,\binding) = (C',\binding') &
        \FQ(u')\seq\dom(\binding') }
      \]
      with $a=(\unbox\,(u,D,u'))v$ and $a'=\binding'(u')$.  To
      reconstruct the typing derivation $\pi$, first note that we have
      the following derivation $\pi_1$ of $\bang
      \Delta;\emptyset\entails \unbox\,(u,D,u'):T\loli U$
      \begin{footnotesize}
        \[
        \infer[]{\bang \Delta;\emptyset\entails
          \unbox\,(u,D,u'):T\loli U}{ \infer[]{\bang \Delta;\emptyset
            \entails \unbox:\Circ(T,U)\loli (T\loli U)}{ } &
          \infer[]{\bang \Delta ;\emptyset\entails
            (u,D,u'):\Circ(T,U)}{ \deduce[]{\FQ(u)\entails u:T}{
              \vdots ~\pi_1^1 } & \deduce[]{\bang
              \Delta;\FQ(u')\entails u':U}{ \vdots ~\pi_1^2 } } }
        \]
      \end{footnotesize}
      with $\Inn(D)=\FQ(u)$, $\Outt(D)=\FQ(u')$. We can then use
      $\pi_1$ to rebuild $\pi$ as follows:
      \begin{footnotesize}
        \[
        \infer[]{\bang \Delta; \FQ(v)\entails (\unbox\,(u,D,u'))v :U}{
          \infer[]{\bang \Delta;\emptyset\entails
            \unbox\,(u,D,u'):T\loli U}{ \vdots~\pi_1 } &
          \infer[]{\bang \Delta ; \FQ(v)\entails v:T}{ \vdots~\pi_2 }
        }
        \]
      \end{footnotesize}
      and the typed closure
      \[
      \bang \Delta; \FQ(v) \entails [C,(\unbox\,(u,D,u'))v]
      :U,(Q'|Q'')
      \]
      is valid. In the conclusion of $\pi_2$, all the term variables
      are declared of a duplicable type. This follows from Corollary
      \ref{typed_qd_term} and
      Lemma~\rref{prop:type_syst}{unused_var}. By assumption, we know
      that $\FQ(u')\seq\dom(\binding')$. We can therefore apply
      Lemma~\ref{binding_judgement} to $\pi_1^2$ to get a typing
      derivation $\tau$ of
      \[
      \bang \Delta;\FQ(\binding'(u'))\entails \binding(u'):U.
      \]
      Now by Definition~\rref{circuit_constructor}{Append_cond_3} we
      have:
      \[
      \begin{array}{rcl}
        \Outt(C') & = & \binding'(\Outt(D))\cupdot 
        (\Outt(C)\setminus\binding^{-1}(\Inn(D))) \\
        & = & \binding'(\FQ(u')) \cupdot 
        ((Q''\cupdot\FQ(v))\setminus \binding^{-1}(\FQ(u))) \\
        & = & \FQ(\binding'(u')) \cupdot 
        ((Q''\cupdot\FQ(v))\setminus \FQ(v)) \\
        & = & \FQ(\binding'(u'))\cupdot Q''.                   
      \end{array}
      \]
      Hence $\bang \Delta; \FQ(\binding(u'))\entails
      [C',\binding'(u')] :U,(Q'|Q'')$ is valid.
    \item $\rul{\rev}$: the reduction rule is
      \[
      \infer[\rul{\rev}]{[C,\rev\,(t,D,t')]\to [C,(t',D^{-1},t)]}{}
      \]
      with $a=\rev\,(t,D,t')$ and $a'=(t',D^{-1},t)$. The typing
      derivation $\pi$ is therefore
      \begin{footnotesize}
        \[
        \infer[]{\bang \Delta;\emptyset\entails
          \rev\,(t,D,t'):\bang^n\Circ(U,T)}{ \infer[]{\bang
            \Delta;\emptyset \entails \rev:\Circ(T,U)\loli
            \bang^n\Circ(U,T)}{ } & \infer[]{\bang \Delta
            ;\emptyset\entails (t,D,t'):\Circ(T,U)}{
            \deduce[]{\FQ(t)\entails t:T}{ \vdots ~\pi_1 } &
            \deduce[]{\bang \Delta;\FQ(t')\entails t':U}{ \vdots
              ~\pi_2 } } }
        \]
      \end{footnotesize}
      with $\Inn(D)=\FQ(t)$, $\Outt(D)=\FQ(t')$ and
      \[
      \bang \Delta;\emptyset\entails
      \rev\,(t,D,t'):\bang^n\Circ(T,U),(Q'|Q'')
      \]
      is valid. Now note that since $t'$ is a quantum data term, it
      contains no variables. Applying
      Lemma~\rref{prop:type_syst}{unused_var} to $\pi_2$ repeatedly we
      therefore get a derivation $\pi_2'$ of $\FQ(t')\entails
      t':U$. Moreover, by applying
      Lemma~\rref{prop:type_syst}{weakening} to $\pi_1$ we get a
      typing derivation $\pi_1'$ of $\bang \Delta ,\FQ(t)\entails
      t:T$.  Since, by Remark~\ref{structure-transfer}, we have
      $\Outt(D^{-1})=\Inn(D)=t$ and $\Inn(D^{-1})=\Outt(D)=t'$, we can
      construct the following typing derivation:
      \[
      \infer[]{\bang \Delta ;\emptyset\entails
        (t',D^{-1},t):\Circ(U,T)}{ \deduce[]{\FQ(t')\entails t':U}{
          \vdots ~\pi_2' } & \deduce[]{\bang \Delta;\FQ(t)\entails
          t:T}{ \vdots ~\pi_1' } &
        \deduce[]{\Inn(D^{-1})=\FQ(t')}{\Outt(D^{-1})=\FQ(t') } }
      \]
      Hence $\bang \Delta;\emptyset\entails
      (t',D,t):\bang^n\Circ(T,U),(Q'|Q'')$ is valid. \qedhere
    \end{itemize}
  \end{description}
\end{proof}

\begin{corollary}
  If $\Gamma;\FQ(a)\entails [C,a]:A,(Q'|Q'')$ is a valid typed closure
  and $[C,a]\to^* [C',a']$, then $\Gamma;\FQ(a')\entails
  [C',a']:A,(Q'|Q'')$ is also a valid typed closure.
\end{corollary}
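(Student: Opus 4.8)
The plan is to induct on the number of one-step reductions witnessing $[C,a]\to^*[C',a']$. Recall from Definition~\ref{def:pq-beta-red} that $\to^*$ is the reflexive and transitive closure of $\to$, so any derivation of $[C,a]\to^*[C',a']$ decomposes as a finite chain
\[
[C,a]=[C_0,a_0]\to[C_1,a_1]\to\cdots\to[C_n,a_n]=[C',a'].
\]
For $n=0$ there is nothing to prove, since then $[C',a']=[C,a]$ and the hypothesis is verbatim the conclusion. For the inductive step, assume the statement for chains of length $n$ and suppose we are given a chain of length $n+1$, which we split as $[C,a]\to^*[C_n,a_n]\to[C_{n+1},a_{n+1}]=[C',a']$. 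Applying the induction hypothesis to the first $n$ steps gives that $\Gamma;\FQ(a_n)\entails [C_n,a_n]:A,(Q'|Q'')$ is a valid typed closure. Now apply Theorem~\ref{thm-subject-red} (subject reduction) to the single step $[C_n,a_n]\to[C_{n+1},a_{n+1}]$: it yields that $\Gamma;\FQ(a_{n+1})\entails [C_{n+1},a_{n+1}]:A,(Q'|Q'')$ is valid, which is exactly what was to be shown.

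There is essentially no obstacle here, since this is a routine iteration of the already-established single-step result. The one point worth flagging is the bookkeeping of quantum variables and wires: Definition~\ref{def:pq-typed-closure} requires the quantum context of the closure to be precisely $\FQ$ of the term together with the conditions $\Inn(C)=Q'$ and $\Outt(C)=Q,Q''$. All of this is respected at every stage because Theorem~\ref{thm-subject-red} is phrased in those terms — it automatically updates $\FQ(a)$ to $\FQ(a')$ and preserves $(Q'|Q'')$ — and because $\Inn$ is invariant along reductions by Lemma~\ref{Inwires}. Hence the invariants propagate unchanged through the chain, and the induction goes through without further effort.
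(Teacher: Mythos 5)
Your proof is correct and is essentially the paper's argument: the paper also proves this corollary by induction on the length of the reduction sequence, invoking Theorem~\ref{thm-subject-red} for each single step. Your additional remark about the bookkeeping of $\FQ$, $(Q'|Q'')$, and $\Inn$ is accurate but not needed beyond what the theorem's statement already provides.
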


\begin{proof}
  By induction on the length of the reduction sequence. The base case
  is provided by Theorem~\ref{thm-subject-red}.
\end{proof}

The above formulation of Subject Reduction explains why a typed
closure contains information about the input and output wires of the
circuit state. Indeed, Subject Reduction now guarantees (1) that the
input wires of a circuit remain unchanged through reduction and (2)
that a term can only affect wires whose identifiers are among its
quantum variables.

% ====================================================================
\section{Progress}
\label{sect:pq-progress}

We now prove that Proto-Quipper enjoys the progress property.

\begin{theorem}[Progress]
  If $\FQ(a)\entails [C,a]:A,(Q'|Q'')$ is a valid typed closure then
  either $a\in\Val$ or there exists a closure $[C',a']$ such that
  $[C,a]\to [C',a']$.
\end{theorem}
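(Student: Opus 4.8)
The plan is to prove Progress by structural induction on the term $a$, guided by the fact that a well-typed closure $\FQ(a) \entails [C,a] : A, (Q'|Q'')$ must have a valid typing derivation $\pi$ of $\FQ(a) \entails a : A$. The case split proceeds as follows: if $a$ is already a value, there is nothing to prove. Otherwise, by Lemma~\ref{non_values}, $a$ has one of the six listed non-value shapes, and I would handle each shape in turn, using the shape of the last rule of $\pi$ together with the induction hypothesis.

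For the compound terms that are eliminators or constructors with a non-value immediate subterm --- namely $(t,D,a')$ with $a'\notin\Val$, $\p{a_1,a_2}$ with some $a_i\notin\Val$, $\Ifthenelse{a_1}{a_2}{a_3}$, $\letin{*}{a_1}{a_2}$, $\letin{\p{x,y}}{a_1}{a_2}$, and $a_1 a_2$ with $a_2\notin\Val$ or with $a_1\notin\Val$ --- I would first apply the induction hypothesis to the relevant subterm (using inversion of $\pi$ to extract a valid typed closure for that subterm, with the appropriate split of the context and of the output wires into $Q'$ and $Q''$, exactly as in the Subject Reduction proof). If that subterm reduces, one of the congruence rules $\rul{fun}$, $\rul{arg}$, $\rul{left}$, $\rul{right}$, $\rul{let*}$, $\rul{let}$, $\rul{cond}$, $\rul{circ}$ applies. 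If that subterm is already a value, I am in the situation where the eliminator is facing a value of a known type, and here I would invoke Lemma~\ref{form_values} to pin down the shape of that value and show that one of the classical or circuit-generating rules applies: a value of type $\bang^n\bool$ is $\true$ or $\false$ (rules $\rul{ifT}$/$\rul{ifF}$); a value of type $\bang^n 1$ is $*$ (rule $\rul{unit}$); a value of type $\bang^n(B_1\X B_2)$ is a pair (rule $\rul{pair}$); and a value of function type facing an argument value is, by Lemma~\ref{form_values}, either a $\lambda$-abstraction (rule $\rul{\beta}$), or a constant $\boxx^T$, $\unbox$, or $\rev$ (rules $(\boxx)$, $(\unbox)$, $(\rev)$ — noting $\unbox\,v$ is itself a value so $\unbox$ applied to a further value is the one that fires), or of the form $\unbox\,(u,D,u')$ (rule $(\unbox)$). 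In the application case I also need to rule out $a_1 = \unbox$ with $a_2 \in\Val$, since by Definition~\ref{def:pq-values} that is already a value and hence excluded by Lemma~\ref{non_values}.

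The main obstacle will be verifying that the side conditions of the circuit-generating reduction rules are actually satisfiable in each relevant case, so that the rule genuinely applies. For $(\boxx)$ I need that a specimen $\spec_{\FQ(v)}(T)=t$ exists and that $\New(\FQ(t))$ is defined — the former is immediate from Definition~\ref{def:spec} and the well-ordering of $\qset$, the latter from Definition~\ref{circuit_constructor}. For $(\unbox)$ applied to $(\unbox\,(u,D,u'))v$ the delicate points are: $\bind(v,u)$ must be a well-defined binding, and $\Append(C,D,\binding)$ must be applicable with $\FQ(u')\seq\dom(\binding')$. Here I would use Corollary~\ref{typed_qd_term} to see that $v$ (of quantum data type $T$, by inversion of the typing of $\unbox$ via Definition~\ref{def:pq-constant-types}) and $u$ are both quantum data terms, then Lemma~\ref{bind_qd_terms} to conclude $\binding=\bind(v,u)$ is well-defined with $\dom(\binding)=\FQ(v)$ and $\cod(\binding)=\FQ(u)=\Inn(D)$; from validity of the typed closure, $\dom(\binding)=\FQ(v)\seq\Outt(C)$, so the hypotheses of Definition~\ref{circuit_constructor}(\ref{Append_cond_x}) hold, and clause~(\ref{Append_cond_2}) gives $\dom(b')=\Outt(D)=\FQ(u')$, hence $\FQ(u')\seq\dom(\binding')$. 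For $(\rev)$ the rule has no side conditions once $a$ has the form $\rev\,(t,D,t')$, but I must check that a well-typed value of type $\Circ(T,U)$ being fed to $\rev$ really is of the form $(t,D,t')$ with $t,t'$ quantum data terms — again Lemma~\ref{form_values} and Corollary~\ref{typed_qd_term}. Assembling these observations, every non-value well-typed closure admits a reduction, completing the induction.
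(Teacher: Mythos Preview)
Your proposal is correct and follows essentially the same approach as the paper's proof: the same six-case split via Lemma~\ref{non_values}, the same use of Lemma~\ref{form_values} to pin down value shapes, and the same chain of lemmas (Corollary~\ref{typed_qd_term}, Lemma~\ref{bind_qd_terms}, and Definition~\ref{circuit_constructor}(\ref{Append_cond_x})) to discharge the side conditions of the $(\unbox)$ rule. Your phrasing of the induction as ``structural on $a$'' rather than ``on the typing derivation $\pi$'' is harmless here since the derivation is syntax-directed, and you are if anything slightly more explicit than the paper in noting that the $(\boxx)$ side conditions are met.
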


First, note that the Progress property is stated for a typed closure
whose typing context is empty. This is because the property is not
expected to hold if we allow for a non-empty typing context. Indeed,
it is easy to see that there are well-typed, non-closed closures such
as $[C,xy]$, which are neither values nor reduce. We now prove the
theorem.

\begin{proof}
  We prove the theorem by induction on the typing derivation $\pi$ of
  $\FQ(a)\entails a:A$. If $a$ is a value then there is nothing to
  prove. If $a$ is not a value, then by Lemma~\ref{non_values} there
  are 6 cases to consider.  In each case we show that $[C,a]$ is
  reducible in the sense that there exists a closure $[C,b]$ such that
  $[C,a]\to[C,b]$
  \begin{enumerate}
  \item If $a=(t,D,a')$ with $a'\notin \Val$, then the typing
    derivation $\pi$ is:
    \[
    \infer[.]{\emptyset\entails (t,D,a'):\Circ(T,U)}{
      \deduce[]{\FQ(t)\entails t:T}{ \vdots ~\pi_1 } &
      \deduce[]{\FQ(a')\entails a':U}{ \vdots ~\pi_2 } &
      \deduce[]{\Inn(D)=\FQ(t)}{ \Outt(D)=\FQ(a') } }
    \]
    The typed closure
    \[
    \begin{array}{rcl}
      \FQ(a') & \entails & [D,a']:U,(\FQ(t)|\emptyset)
    \end{array}
    \]
    is therefore valid. Since $a'$ is not a value, the induction
    hypothesis implies that there exists $a''$ such that $[D,a']\to
    [D',a'']$ and $[C,(t,D,a')]$ therefore reduces to $[C,(t,D',a'')]$
    by the $\rul{circ}$ reduction rule.
  \item If $a=\p{a_1,a_2}$ with $a_1\notin \Val$ or $a_2\notin \Val$,
    then the typing derivation $\pi$ is:
    \[
    \infer[.]{\FQ(a_1),\FQ(a_2)\entails \p{a_1,a_2}:\bang^n(A_1\X
      A_2)}{ \deduce[]{\FQ(a_1)\entails a_1:\bang^nA_1}{ \vdots~\pi_1
      } & \deduce[]{\FQ(a_2)\entails a_2:\bang^nA_2}{ \vdots~\pi_2 } }
    \]
    The typed closures
    \[
    \begin{array}{rcl}
      \FQ(a_1) & \entails & [C,a_1]:\bang^nA_1,(Q'|\FQ(a_2),Q'')\\
      \FQ(a_2) & \entails & [C,a_2]:\bang^nA_1, (Q'|\FQ(a_1),Q'')
    \end{array}
    \]
    are therefore both valid. Now if $a_2\notin\Val$, then by the
    induction hypothesis $[C,a_2]\to[C',a_2']$. Hence
    $[C,\p{a_1,a_2}]$ reduces to $[C',\p{a_1,a_2'}]$ by the
    $\rul{right}$ reduction rule. If on the other hand $a_2\in\Val$,
    then it must be the case that $a_1\notin\Val$ and we can conclude
    by reasoning analogously that $[C,\p{a_1,a_2}]$ reduces to some
    $[C',\p{a_1',a_2}]$ by the $\rul{left}$ reduction rule.
  \item If $a=\Ifthenelse{a_1}{a_2}{a_3}$, then the typing derivation
    $\pi$ is:
    \[
    \infer[.]{\FQ(a_1),Q\entails \Ifthenelse{a_1}{a_2}{a_3}:A}{
      \deduce[]{\FQ(a_1)\entails a_1:\bool }{ \vdots~\pi_1 } &
      \deduce[]{Q \entails a_2:A}{ \vdots~\pi_2 } & \deduce[]{Q
        \entails a_3:A}{ \vdots~\pi_3 } }
    \]
    The typed closure
    \[
    \begin{array}{rcl}
      \FQ(a_1) & \entails & [C,a_1]:\bool,(Q'|\FQ(a_2),\FQ(a_3),Q'')
    \end{array}
    \]
    is therefore valid. Now if $a_1\notin\Val$, then by the induction
    hypothesis $[C,a_1]\to[C',a_1']$ and thus
    $[C,\Ifthenelse{a_1}{a_2}{a_3}]$ reduces to
    $[C',\Ifthenelse{a_1'}{a_2}{a_3}]$ by the $\rul{cond}$ reduction
    rule.  If on the other hand $a_1\in\Val$, then either $a_1=\true$
    or $a_1=\false$, by Lemma~\ref{form_values}. Thus
    $[C,\Ifthenelse{a_1}{a_2}{a_3}]$ reduces either to $[C,a_2]$ by
    the $\rul{ifT}$ reduction rule, or to $[C,a_3]$ by the $\rul{ifF}$
    reduction rule.
  \item If $a=(\letin{*}{a_1}{a_2})$, then we can reason as above to
    show that if $a_1$ is not a value, then the $\rul{let*}$
    congruence rule applies, and that if $a_1$ is a value then
    Lemma~\ref{form_values} guarantees that the $\rul{*}$ rule
    applies.
  \item If $a=(\letin{\p{x,y}}{a_1}{a_2})$, then we can reason as
    above to show that if $a_1$ is not a value, then the $\rul{let}$
    congruence rule applies and that if $a_1$ is a value then
    Lemma~\ref{form_values} guarantees that the $\rul{pair}$ rule
    applies.
  \item If $a=a_1a_2$ then the typing derivation $\pi$ is:
    \[
    \infer[.]{\FQ(a_1),\FQ(a_2)\entails a_1a_2:A}{
      \deduce[]{\FQ(a_1)\entails a_1:B\loli A}{ \vdots~\pi_1 } &
      \deduce[]{\FQ(a_2)\entails a_2:B}{ \vdots~\pi_2 } }
    \]
    The typed closures
    \begin{eqnarray}
      \FQ(a_1) & \entails & [C,a_1]:B\loli A,(Q'|\FQ(a_2),Q'')\\
      \FQ(a_2) & \entails & [C,a_2]:B, (Q'|\FQ(a_1),Q'')\label{eqn-fqa2}
    \end{eqnarray}
    are therefore valid. There are three cases to treat.
    \begin{itemize}
    \item If $a_1\notin\val$, then $[C,a_1a_2]\to[C',a_1'a_2]$ by the
      induction hypothesis and the $\rul{fun}$ rule.
    \item If $a_1\in\val$ and $a_2\notin\Val$, then
      $[C,a_1a_2]\to[C',a_1a_2']$ by the induction hypothesis and the
      $\rul{arg}$ rule.
    \item If $a_1,a_2\in\val$ then by Lemma~\ref{form_values}, $a_1$
      is either an abstraction, a constant, or of the form $\unbox\,
      (t,C,u)$. If $a_1$ is a lambda abstraction, then $[C, a_1a_2]$
      reduces by the $\rul{\beta}$ rule. If $a_1$ is a constant, then
      it cannot be $\unbox$, since $a_1a_2$ would then be a value. If
      $a_1=\rev$, then $a_2$ is a value of type $\Circ(T,U)$. Hence
      $a_2$ is of the form $(t,C,u)$ by Lemma~\ref{form_values}, so
      that $[C, a_1a_2]$ reduces by the $\rul{\rev}$ rule. If
      $a_1=\boxx^T$, then $[C, a_1a_2]$ reduces by the $\rul{\boxx}$
      rule.

      Remains to treat the case of $a_1=\unbox\,(u,D,t)$. For the
      $\rul{\unbox}$ rule to apply, we need to show that
      $\bind(a_2,u)$ is well-defined, and that
      $\FQ(t)\seq\dom(\binding')$, where
      $\Append(C,D,\binding)=(C',\binding')$. The typing derivation
      $\pi_1$ is the following:
      \begin{footnotesize}
        \[
        \infer[]{\bang \Delta;\emptyset\entails \unbox\,(u,D,t):T\loli
          U}{ \infer[]{\bang \Delta;\emptyset \entails
            \unbox:\Circ(T,U)\loli (T\loli U)}{ } & \infer[]{\bang
            \Delta ;\emptyset\entails (u,D,t):\Circ(T,U)}{
            \deduce[]{\FQ(u)\entails u:T}{ \vdots ~\pi_1^1 } &
            \deduce[]{\bang \Delta;\FQ(t)\entails t:U}{ \vdots
              ~\pi_1^2 } } }
        \]
      \end{footnotesize}
      with $\Inn(D)=\FQ(u)$, $\Outt(D)=\FQ(t)$, $A=U$ and $B=T$ for
      some quantum data types $T,U$. It follows that $a_2$ and $u$ are
      two values of type $T$, so that, by Lemma~\ref{bind_qd_terms},
      $\binding=\bind(a_2,u)$ is defined with
      $\dom(\binding)=\FQ(a_2)$ and $\cod(\binding)=\FQ(u)$. Moreover,
      $\FQ(a_2)\seq\Outt(C)$ by the validity of the typed closure
      {\eqref{eqn-fqa2}}, and $\FQ(u)=\Inn(D)$ as noted
      above. Therefore, $\dom(\binding)\seq\Outt(C)$ and
      $\cod(\binding)=\Inn(D)$ hold, as required by
      Definition~\rref{circuit_constructor}{Append_cond_2}. By
      Definition~\rref{circuit_constructor}{Append_cond_2}, we
      conclude that $\dom(\binding')=\Outt(D) =\FQ(t)$, so that the
      $(\unbox)$ rule in fact applies. \qedhere
    \end{itemize}
  \end{enumerate}
\end{proof}

% =====================================================================
\chapter{Conclusion}
\label{chap:conc}

In this thesis, we applied tools from algebraic number theory and
mathematical logic to problems in the theory of quantum
computation. We described algorithms to solve the problem of
approximate synthesis of special unitaries over the Clifford+$V$ and
Clifford+$T$ gate sets. We also defined a typed lambda calculus for
quantum computation called Proto-Quipper which serves as a
mathematical foundation for the Quipper quantum programming
language. In conclusion, we briefly describe some avenues for future
research.

% ====================================================================
\section{Approximate synthesis}
\label{sect:conc-synth}

The synthesis methods described in chapters~\ref{chap:synth-V} and
\ref{chap:synth-T} belong to a very recent family of number-theoretic
algorithms. Many generalization of these methods can be considered.
\begin{itemize}
\item The algorithms described in chapters~\ref{chap:synth-V} and
  \ref{chap:synth-T} are only optimal for $z$-rotations. While Euler
  angle decompositions can be used to extend these methods to
  arbitrary special unitaries, optimality is lost in the process. A
  first potential generalization of the methods of
  chapters~\ref{chap:synth-V} and \ref{chap:synth-T} is to define
  optimal number-theoretic synthesis algorithms for arbitrary special
  unitaries.
\item A second restriction of the decomposition methods presented in
  chapters~\ref{chap:synth-V} and \ref{chap:synth-T} is that they are
  only defined for specific gate sets, namely the Clifford+$V$ and
  Clifford+$T$ gate sets. Another future generalization of this work
  is to extend the number-theoretic synthesis methods to different
  gate sets. This line of enquiry has already been pursued in the
  recent literature with encouraging results. In particular, it is
  known that asymptotically optimal number-theoretic decomposition
  methods can be defined for certain gate sets based on anyonic
  braidings (see \cite{KBS2013} and \cite{BCKW2015}). Further, exact
  synthesis methods have recently been devised for a relatively
  general family of gate sets (\cite{FGKM15} and \cite{KY15}). While
  these exact synthesis methods have not yet been extended to
  approximate synthesis algorithms, we expect that, at least in some
  cases, this extension should carry through with relative ease.
\item A further possible generalization is to consider unitary groups
  in higher dimensions. Higher-dimensional versions of the Clifford
  gates have previously been studied in the literature
  \cite{Gottesman98b}. Moreover, higher-dimensional analogues of the
  $T$ gate were recently introduced \cite{HowardVala}. Together, these
  define a higher-dimensional Clifford+$T$ gate set which stands as a
  natural candidate for an adaptation of the methods of
  chapters~\ref{chap:synth-V} and \ref{chap:synth-T}.
\end{itemize}
We note that the decomposition algorithm of Fowler \cite{Fowler04} as
well as the Solovay-Kitaev algorithm \cite{Dawson-Nielsen} are both
very general. Indeed, both algorithms allow the synthesis of arbitrary
special unitaries over any gate set and in any dimension. Since the
algorithms presented in chapters~\ref{chap:synth-V} and
\ref{chap:synth-T} rely on specific properties of the rings of
algebraic integers associated with the Clifford+$V$ and Clifford+$T$
gate set, there is no reason for these methods to generalize to
arbitrary gate sets. However, it might be possible to identify a
general class of gate sets to which these methods apply.

Another interesting avenue of future research lies in a modification
of the statement of the synthesis problem itself, by allowing a
broader notion of circuit. As a first such generalization, one can
introduce \emph{ancillary} qubits in the approximating
circuit. Suppose a special unitary $U$ and a precision $\epsilon>0$
are given. Instead of searching for $W\in\uset(2)$ such that
$\norm{U-W}<\epsilon$ one can look for $W\in\uset(2^{n+1})$ such that
for any state $\ket{\phi}$ we have
\[
\norm{U(\ket{\phi})\ket{0\ldots 0} - W(\ket{\phi}\ket{0\ldots
    0})}<\epsilon.
\]
In other words, unitaries acting on more than one qubit can be
considered, provided that they return the additional qubits nearly
unchanged. The advantage of such a generalized notion of circuit is
that it opens the door to a certain form of parallelization. In
particular, even if the number of non-Clifford gates in the
approximating circuit remains unchanged, applying them in parallel,
rather than sequentially, may represent a gain.

Another generalization of the synthesis problem is to allow the
approximating circuits to use measurements or other adaptive
methods. It is known that using such methods can decrease the gate
count below the information-theoretic lower bound. These methods are
relatively new, but very promising results have already been achieved
(\cite{WiebeKli2013}, \cite{PaetSvor2013}, \cite{BRS2014}, and
\cite{WiebeRoet2014}).

% ====================================================================
\section{Proto-Quipper}
\label{sect:conc-pq}

As already mentioned in Chapter~\ref{chap:intro}, the rationale behind
the design of Proto-Quipper was to start with the simplest language
possible, establish type-safety, and then extend the language in small
steps with the goal of eventually adding most of Quipper's features to
Proto-Quipper in a type-safe way. This defines a natural set of
problems for future work. Many such extensions are conceivable, but we
only describe a few here.

\begin{itemize}
\item In the current version of Proto-Quipper, all circuits are
  reversible. This follows from the definition of the $(\rev)$ and
  $\rul{circ}$ typing rules and will have to be modified to
  accommodate non-reversible gates such as measurements. In such a
  setting, the type system should ensure that circuits are reversed
  only if it is meaningful to do so. In particular, if a circuit
  contains a measurement, then it should not be possible to reverse
  it.
\item A circuit generating function that inputs a list of qubits does
  not define just one circuit, but rather a family of circuits
  parameterized by the length $n$ of the list.  To box such a
  function, a particular value of $n$ has to be given. In Quipper, we
  refer to $n$ as the ``shape'' of the argument of the
  function. Operations such as boxing and reversing often require
  shape information. An alternative solution would be to equip
  Proto-Quipper with a {\em dependent type system}. This would allow
  shape information to be stored at the type level.
\item In contrast to the quantum lambda calculus, the reduction in
  Proto-Quipper is non-probabilistic. Of course, the hypothetical
  quantum device running the circuit produced by Proto-Quipper would
  have to perform probabilistic operations, but the circuit generation
  itself does not have to. Even if the language is extended with
  measurement gates, it is still possible to generate the circuits
  deterministically. This is justified by the ``principle of deferred
  measurement" which states that any quantum circuit is equivalent to
  one where all measurements are performed as the very last operations
  (see, e.g., \cite{Nielsen-Chuang} p.186). We therefore do not need
  to rely on the result of a measurement to construct circuits and, in
  theory, no computational power is lost by making this assumption. In
  practice, however, this delaying of measurement may significantly
  increase the size of the circuit. Thus in terms of computational
  resources it is sometimes advantageous to permit circuit generating
  functions that access previous measurement results. Several existing
  quantum algorithms rely on such interactive circuit building. In
  Quipper, this capability is captured by the notion of \emph{dynamic
    lifting}. Adding such a feature to Proto-Quipper would make the
  reduction relation probabilistic. It is an interesting research
  problem how such an extension can be carried out in a type-safe way.
\end{itemize}

% =====================================================================
\bibliographystyle{plain} \bibliography{thesis}

\begin{thebibliography}{10}

\bibitem{CACM}
D.S. Alexander, Neil~J. Ross, P.~Selinger, J.M. Smith, and B.~Valiron.
\newblock Programming the quantum future.
\newblock {\em Communications of the ACM}, 2015.
\newblock To appear.

\bibitem{BF}
A.~Ambainis, A.~M. Childs, B.W. Reichardt, R.~{\v S}palek, and S.~Zhang.
\newblock Any {AND}-{OR} formula of size $n$ can be evaluated in time
  $n^{\frac12+o(1)}$ on a quantum computer.
\newblock {\em SIAM J. Comput.}, 39:2513–--2530, 2010.

\bibitem{Arrighi-Dowek}
Pablo Arrighi and Gilles Dowek.
\newblock Linear-algebraic lambda-calculus: higher-order, encodings, and
  confluence.
\newblock In {\em Proceedings of the 19th international conference on Rewriting
  Techniques and Applications}, volume 5117 of {\em Lecture Notes in Computer
  Science}, page 17–31, 2008.

\bibitem{Baren}
H.P. Barendregt.
\newblock {\em The Lambda Calculus : Its Syntax and Semantics.}, volume 103 of
  {\em Studies in Logic and the Foundations of Mathematics}.
\newblock MIT Press, 1981.

\bibitem{BBG2014}
Andreas Blass, Alex Bocharov, and Yuri Gurevich.
\newblock Optimal ancilla-free {Pauli}+{$V$} circuits for axial rotations.
\newblock Available from \arxiv{1412.1033}, December 2014.

\bibitem{BCKW2015}
Alex Bocharov, Xingshan Cui, Vadym Kliuchnikov, and Zhenghan Wang.
\newblock Efficient topological compilation for weakly-integral anyon model.
\newblock Available from \arxiv{1504.03383}, April 2015.

\bibitem{BGS2013}
Alex Bocharov, Yuri Gurevich, and Krysta~M. Svore.
\newblock Efficient decomposition of single-qubit gates into {$V$} basis
  circuits.
\newblock {\em Phys. Rev. A}, 88:012313 (13 pages), 2013.
\newblock Also available from \arxiv{1303.1411}.

\bibitem{BRS2014}
Alex Bocharov, Martin Roetteler, and Krysta~M. Svore.
\newblock Efficient synthesis of universal repeat-until-success circuits.
\newblock Available from \arxiv{1404.5320}, April 2014.

\bibitem{PQ}
H.~Chataing, N.~J. Ross, and P.~Selinger.
\newblock Report on {Proto-Quipper} 0.2.
\newblock Unpublished report delivered to IARPA in the context of the QCS
  project, 2013.

\bibitem{BWT}
A.~M. Childs, R.~Cleve, E.~Deotto, E.~Farhi, S.~Gutmann, and D.~A. Spielman.
\newblock Exponential algorithmic speedup by a quantum walk.
\newblock In {\em Proceedings of the Thirty-Fifth Annual ACM Symposium on
  Theory of Computing}, pages 59--68, 2003.

\bibitem{Church-Rosser}
A.~Church and J.~B. Rosser.
\newblock Some properties of conversion.
\newblock {\em Transactions of the American Mathematical Society}, 39:472--482,
  1936.

\bibitem{Claessen-2001}
Koen Claessen.
\newblock {\em Embedded Languages for Describing and Verifying Hardware}.
\newblock PhD thesis, Chalmers University of Technology and G{\"o}teborg
  University, 2001.

\bibitem{Cohen2000}
Henri Cohen.
\newblock {\em Advanced topics in computational number theory}.
\newblock Graduate texts in mathematics. Springer, New York, N.Y., Berlin,
  Heidelberg,, 2000.

\bibitem{Dawson-Nielsen}
Christopher~M. Dawson and Michael~A. Nielsen.
\newblock The {Solovay-Kitaev} algorithm.
\newblock {\em Quantum Information and Computation}, 6(1):81--95, January 2006.
\newblock Also available from \arxiv{quant-ph/0505030}.

\bibitem{Deutsch85}
David Deutsch.
\newblock Quantum theory, the {Church-Turing} principle and the universal
  quantum computer.
\newblock {\em Proceedings of the Royal Society of London, Series A,
  Mathematical and Physical Sciences}, 400:97--117, 1985.

\bibitem{Deutsch89}
David Deutsch.
\newblock Quantum computational networks.
\newblock {\em Proceedings of the Royal Society of London, Series A,
  Mathematical and Physical Sciences}, 425(1868):73--90, 1989.

\bibitem{WrFe94}
M.~{Felleisen} and A.~K. {Wright}.
\newblock A syntactic approach to type soundness.
\newblock {\em Information and Computation}, pages 38--94, November 1994.

\bibitem{Feynman82}
Richard~P. Feynman.
\newblock Simulating physics with computers.
\newblock {\em International Journal of Theoretical Physics}, 21(6-7):467--488,
  June 1982.

\bibitem{FGKM15}
Simon Forest, David Gosset, Vadym Kliuchnikov, and David McKinnon.
\newblock Exact synthesis of single-qubit unitaries over {Clifford}-cyclotomic
  gate sets.
\newblock Available from \arxiv{1501.4150}, January 2015.

\bibitem{Fowler04}
Austin~G. Fowler.
\newblock Constructing arbitrary {Steane} code single logical qubit
  fault-tolerant gates.
\newblock {\em Quantum Information and Computation}, 11(9--10):867--873, 2011.
\newblock Also available from \arxiv{quant-ph/0411206}.

\bibitem{Giles-Selinger}
Brett Giles and Peter Selinger.
\newblock Exact synthesis of multiqubit {Clifford+$T$} circuits.
\newblock {\em Physical Review A}, 87:032332, 2013.
\newblock Also available from \arxiv{1212.0506}.

\bibitem{ma-remarks}
Brett Giles and Peter Selinger.
\newblock Remarks on {Matsumoto} and {Amano}'s normal form for single-qubit
  {Clifford+$T$} operators.
\newblock Available from \arxiv{1312.6584}, December 2013.

\bibitem{Gir87}
J.-Y. {Girard}.
\newblock Linear logic.
\newblock {\em Theoretical Computer Science}, 50(1):1--102, 1987.

\bibitem{proofs-and-types}
Jean-Yves Girard, Yves Lafont, and Paul Taylor.
\newblock {\em Proofs and Types}.
\newblock Number~7 in Cambridge Tracts in Theoretical Computer Science.
  Cambridge University Press, 1989.

\bibitem{Gottesman98b}
Daniel Gottesman.
\newblock Fault-tolerant quantum computation with higher-di\-men\-sion\-al
  systems.
\newblock In {\em Quantum Computing and Quantum Communications, Proceedings of
  the 1st NASA International Conference on Quantum Computing and Quantum
  Communications (QCQC)}, pages 302--313, New York, NY, USA, 1998.
  Springer-Verlag.

\bibitem{Gottesman98}
Daniel Gottesman.
\newblock The {Heisenberg} representation of quantum computers.
\newblock In {\em International Conference on Group Theoretic Methods in
  Physics}, page 9807006, 1998.

\bibitem{rc}
Alexander Green, Peter~LeFanu Lumsdaine, Neil~J. Ross, Peter Selinger, and
  Beno{\^i}t Valiron.
\newblock An introduction to quantum programming in {Quipper}.
\newblock In {\em Proceedings of the 5th International Conference on Reversible
  Computation}, volume 7948 of {\em Lecture Notes in Computer Science}, pages
  110--124, 2013.
\newblock Preprint available from \arxiv{1304.5485}.

\bibitem{pldi}
Alexander Green, Peter~LeFanu Lumsdaine, Neil~J. Ross, Peter Selinger, and
  Beno{\^i}t Valiron.
\newblock Quipper: a scalable quantum programming language.
\newblock In {\em Proceedings of the 34th ACM SIGPLAN Conference on Programming
  Language Design and Implementation}, PLDI '13, pages 333--342, 2013.
\newblock Preprint available from \arxiv{1304.3390}.

\bibitem{Grover96}
Lov~K. Grover.
\newblock A fast quantum mechanical algorithm for database search.
\newblock In {\em Proceedings of the Twenty-eighth Annual ACM Symposium on
  Theory of Computing}, STOC '96, pages 212--219, New York, NY, USA, 1996. ACM.

\bibitem{CN}
Sean Hallgren.
\newblock Polynomial-time quantum algorithms for {P}ell's equation and the
  principal ideal problem.
\newblock {\em J. ACM}, 54(1):4:1--4:19, March 2007.

\bibitem{LS}
Aram~W. Harrow, Avinatan Hassidim, and Seth Lloyd.
\newblock Quantum algorithm for linear systems of equations.
\newblock {\em Phys. Rev. Lett.}, 103(15):150502, 2009.

\bibitem{HRC2002}
A.W. Harrow, B.~Recht, and I.L. Chuang.
\newblock Efficient discrete approximations of quantum gates.
\newblock {\em Journal of Mathematical Physics}, 43, 2002.
\newblock Also available from \arxiv{quant-ph/0111031}.

\bibitem{HowardVala}
Mark Howard and Jiri Vala.
\newblock Qudit versions of the qubit $\ensuremath{\pi}/8$ gate.
\newblock {\em Phys. Rev. A}, 86:022316, Aug 2012.

\bibitem{Haskell}
Graham Hutton.
\newblock {\em Programming in {Haskell}}.
\newblock Cambridge University Press, January 2007.

\bibitem{BAA}
{IARPA Quantum Computer Science Program}.
\newblock {Broad Agency Announcement IARPA-BAA-10-02}.
\newblock Available from
  \url{https://www.fbo.gov/notices/637e87ac1274d030ce2ab69339ccf93c}, April
  2010.

\bibitem{KLM2007}
Phillip Kaye, Raymond Laflamme, and Michele Mosca.
\newblock {\em An Introduction to Quantum Computing}.
\newblock Oxford University Press, Inc., New York, NY, USA, 2007.

\bibitem{Kitaev97b}
A.~Yu. Kitaev.
\newblock Quantum computations: algorithms and error correction.
\newblock {\em Russian Mathematical Surveys}, 52(6):1191–1249, 1997.

\bibitem{KSV2002}
A.~Yu. Kitaev, A.~H. Shen, and M.~N. Vyalyi.
\newblock {\em Classical and Quantum Computation}.
\newblock Graduate Studies in Mathematics 47. American Mathematical Society,
  2002.

\bibitem{KBS2013}
Vadym Kliuchnikov, Alex Bocharov, and Krysta~M. Svore.
\newblock Asymptotically optimal topological quantum compiling.
\newblock Available from \arxiv{1310.4150}, October 2013.

\bibitem{KMM-practical}
Vadym Kliuchnikov, Dmitri Maslov, and Michele Mosca.
\newblock Practical approximation of single-qubit unitaries by single-qubit
  quantum {Clifford} and {$T$} circuits.
\newblock Also available from \arxiv{1212.6964}, December 2012.

\bibitem{KMM-approx}
Vadym Kliuchnikov, Dmitri Maslov, and Michele Mosca.
\newblock Asymptotically optimal approximation of single qubit unitaries by
  {Clifford} and {$T$} circuits using a constant number of ancillary qubits.
\newblock {\em Phys. Rev. Lett.}, 110:190502 (5 pages), 2013.
\newblock Also available from \arxiv{1212.0822v2}.

\bibitem{Kliuchnikov-etal}
Vadym Kliuchnikov, Dmitri Maslov, and Michele Mosca.
\newblock Fast and efficient exact synthesis of single qubit unitaries
  generated by {Clifford} and {$T$} gates.
\newblock {\em Quantum Information and Computation}, 13(7--8):607--630, 2013.
\newblock Also available from \arxiv{1206.5236v4}.

\bibitem{KY15}
Vadym Kliuchnikov and Jon Yard.
\newblock A framework for exact synthesis.
\newblock Available from \arxiv{1504.04350}, April 2015.

\bibitem{Knill96}
E.~Knill.
\newblock Conventions for quantum pseudocode, 1996.

\bibitem{Lenstra1983}
H.W. Lenstra, Jr.
\newblock Integer programming with a fixed number of variables.
\newblock {\em Math. Oper. Res.}, 8:538 -- 548, 1983.

\bibitem{LPS1986}
A.~Lubotzky, R.~Phillips, and P.~Sarnak.
\newblock {Hecke} operators and distributing points on the sphere {I}.
\newblock {\em Communications on Pure and Applied Mathematics}, 39:S149--S186,
  1986.

\bibitem{LPS1987}
A.~Lubotzky, R.~Phillips, and P.~Sarnak.
\newblock {Hecke} operators and distributing points on ${S}^2$ {II}.
\newblock {\em Communications on Pure and Applied Mathematics}, 40:401--420,
  1987.

\bibitem{TF}
Fr\'ed\'eric Magniez, Miklos Santha, and Mario Szegedy.
\newblock Quantum algorithms for the triangle problem.
\newblock quant-ph/0310134, 2003.

\bibitem{Wla90}
Wladyslaw Narkiewicz.
\newblock {\em Elementary and analytic theory of algebraic numbers}.
\newblock Springer-Verlag Warszawa, Berlin, 1990.

\bibitem{Nielsen-Chuang}
Michael~A. Nielsen and Isaac~L. Chuang.
\newblock {\em Quantum Computation and Quantum Information}.
\newblock Cambridge University Press, 2002.

\bibitem{PaetSvor2013}
Adam Paetznick and Krysta~M. Svore.
\newblock Repeat-until-success: Non-deterministic decomposition of single-qubit
  unitaries.
\newblock Available from \arxiv{1311.1074}, November 2013.

\bibitem{Raussendorf2002}
Robert Raussendorf and Hans~J. Briegel.
\newblock Computational model underlying the one-way quantum computer.
\newblock {\em Quantum Info. Comput.}, 2(6):443--486, October 2002.

\bibitem{SV}
Oded Regev.
\newblock Quantum computation and lattice problems.
\newblock {\em SIAM J. Comput.}, 33(3):738--760, 2004.

\bibitem{vsynth}
Neil~J. Ross.
\newblock Optimal ancilla-free {Clifford}+{$V$} approximation of
  {$z$}-rotations.
\newblock {\em Quantum Information and Computation}, 15(11--12):932--950, 2015.
\newblock Preprint available from \arxiv{1409.4355}.

\bibitem{QAPL}
Neil~J. Ross, P.~Selinger, J.M. Smith, and B.~Valiron.
\newblock Quipper: Concrete resource estimation in quantum algorithms.
\newblock Extended abstract for QAPL 2014. Available from \arxiv{1412.0625},
  2014.

\bibitem{gridsynth}
Neil~J. Ross and Peter Selinger.
\newblock Optimal ancilla-free {Clifford}+{$T$} approximation of
  {$z$}-rotations.
\newblock Available from \arxiv{1403.2975}, March 2014.

\bibitem{implementation}
Neil~J. Ross and Peter Selinger.
\newblock Exact and approximate synthesis of quantum circuits, version 0.3.0.1.
\newblock Software implementation available from
  \url{http://www.mathstat.dal.ca/~selinger/newsynth/}, 2015.

\bibitem{SeVa09}
P.~{Selinger} and B.~{Valiron}.
\newblock Quantum lambda calculus.
\newblock In S.~{Gay} and I.~{Mackie}, editors, {\em Semantic Techniques in
  Quantum Computation}, pages 135--172. Cambridge University Press, 2009.

\bibitem{Selinger-TQPL}
Peter Selinger.
\newblock Towards a quantum programming language.
\newblock {\em Mathematical. Structures in Comp. Sci.}, 14(4):527--586, August
  2004.

\bibitem{Selinger-newsynth}
Peter Selinger.
\newblock Efficient {Clifford}+{$T$} approximation of single-qubit operators.
\newblock {\em Quantum Information and Computation}, 15(1--2):159--180, 2015.
\newblock Preprint available from \arxiv{1212.6253}.

\bibitem{SeVa06}
Peter Selinger and Beno{\^i}t Valiron.
\newblock A lambda calculus for quantum computation with classical control.
\newblock {\em Mathematical Structures in Computer Science}, 16(3):527--552,
  2006.

\bibitem{Shor}
Peter~W. Shor.
\newblock Algorithms for quantum computation: discrete logarithms and
  factoring.
\newblock In {\em Proceedings of the 35th Annual Symposium on Foundations of
  Computer Science}, pages 124--134, 1994.
\newblock Also available from \arxiv{quant-ph/9508027}.

\bibitem{Tonder}
Andr{\'e}~van Tonder.
\newblock A lambda calculus for quantum computation.
\newblock {\em SIAM J. Comput.}, 33(5):1109--1135, May 2004.

\bibitem{valiron04}
Beno{\^\i}t Valiron.
\newblock A functional programming language for quantum computation with
  classical control.
\newblock Master's thesis, University of Ottawa, September 2004.

\bibitem{GSE}
James~D. Whitfield, Jacob Biamonte, and Al{\'a}n Aspuru-Guzik.
\newblock Simulation of electronic structure {Hamiltonians} using quantum
  computers.
\newblock {\em Molecular Physics}, 109(5):735--750, 2011.

\bibitem{WiebeKli2013}
Nathan Wiebe and Vadym Kliuchnikov.
\newblock Floating point representations in quantum circuit synthesis.
\newblock Available from \arxiv{1305.5528}, May 2013.

\bibitem{WiebeRoet2014}
Nathan Wiebe and Martin Roetteler.
\newblock Quantum arithmetic and numerical analysis using repeat-until-success
  circuits.
\newblock Available from \arxiv{1406.2040}, June 2014.

\end{thebibliography}

\end{document}